\theoremstyle{plain}
\newtheorem{thm}{Theorem}
\newtheorem{prop}[thm]{Proposition}
\newtheorem{lemma}[thm]{Lemma}
\newtheorem{cor}[thm]{Corollary}
\theoremstyle{definition}
\newtheorem{definition}[thm]{Definition}
\newtheorem{notation}[thm]{Notation}
\newtheorem{assump}[thm]{Assumptions}
\newtheorem{remark}[thm]{Remark}
\newtheorem{example}[thm]{Example}
\newcommand{\tn}[1]{\ensuremath{\mathbb{T}^{#1}}}
\newcommand{\hn}[1]{\ensuremath{\mathbb{H}^{#1}}}
\newcommand{\rn}[1]{\ensuremath{\mathbb{R}^{#1}}}
\newcommand{\sn}[1]{\ensuremath{\mathbb{S}^{#1}}}
\newcommand{\nn}[1]{\ensuremath{\mathbb{N}^{#1}}}
\newcommand{\bM}{\bar{M}}
\newcommand{\bR}{\bar{R}}
\newcommand{\bN}{{\bar{N}}}
\newcommand{\bth}{\bar{\theta}}
\newcommand{\g}{\gamma}
\newcommand{\G}{\Gamma}
\newcommand{\de}{\delta}
\newcommand{\dep}[1]{\delta_{#1}\phi}
\newcommand{\deo}[2]{\delta^{#1}_{#2}\omega}
\newcommand{\dee}[2]{\delta_{#1}^{#2}e}
\newcommand{\dek}[1]{\delta_{#1}k}
\newcommand{\vdep}{\vec{\delta}\phi}
\newcommand{\dega}[1]{\delta_{#1}\gamma}
\newcommand{\bge}{\bar{g}}
\newcommand{\fbar}{\bar{f}}
\renewcommand{\d}{\partial}
\newcommand{\rogen}{\mathrm{gen}}
\newcommand{\roLRS}{\mathrm{LRS}}
\newcommand{\iso}{\mathrm{iso}}
\newcommand{\roper}{\mathrm{per}}
\newcommand{\mrI}{\mathrm{I}}
\newcommand{\mrVIII}{\mathrm{VIII}}
\newcommand{\mrIX}{\mathrm{IX}}
\newcommand{\Symn}{\mathrm{Sym}_{n}(\mathbb{R})}
\newcommand{\md}{{\mathrm{d}}}
\newcommand{\rosc}{{\mathrm{sc}}}
\newcommand{\ropar}{{\mathrm{par}}}
\newcommand{\mbg}{{\mathrm{bg}}}
\newcommand{\mfC}{\mathfrak{C}}
\newcommand{\mfD}{\mathfrak{D}}
\newcommand{\mfN}{\mathfrak{N}}
\newcommand{\mfQ}{\mathfrak{Q}}
\newcommand{\mfP}{\mathfrak{P}}
\newcommand{\mfS}{\mathfrak{S}}
\newcommand{\mfK}{\mathfrak{K}}
\newcommand{\mfp}{\mathfrak{p}}
\newcommand{\mfl}{\mathfrak{l}}
\newcommand{\mfg}{\mathfrak{g}}
\newcommand{\mfh}{\mathfrak{h}}
\newcommand{\mfE}{\mathfrak{E}}
\newcommand{\mfI}{\mathfrak{I}}
\newcommand{\mfT}{\mathfrak{T}}
\newcommand{\mfs}{\mathfrak{s}}
\newcommand{\tr}{\mathrm{tr}}
\newcommand{\rodiv}{\mathrm{div}}
\newcommand{\rotot}{\mathrm{tot}}
\newcommand{\roint}{\mathrm{int}}
\newcommand{\roRic}{\mathrm{Ric}}
\newcommand{\roScal}{\mathrm{Scal}}
\newcommand{\roRiem}{\mathrm{Riem}}
\newcommand{\roHess}{\mathrm{Hess}}
\newcommand{\rograd}{\mathrm{grad}}
\newcommand{\refer}{\mathrm{ref}}
\newcommand{\roc}{\mathrm{c}}
\newcommand{\rob}{\mathrm{b}}
\newcommand{\roq}{\mathrm{q}}
\newcommand{\msA}{\mathscr{A}}
\newcommand{\msG}{\mathscr{G}}
\newcommand{\msY}{\mathscr{Y}}
\newcommand{\ldr}[1]{\langle #1\rangle}
\newcommand{\sfN}{\mathsf{N}}
\newcommand{\bna}{\overline{\nabla}}
\newcommand{\mB}{\mathcal{B}}
\newcommand{\mN}{\mathcal{N}}
\newcommand{\mD}{\mathcal{D}}
\newcommand{\mS}{\mathcal{S}}
\newcommand{\mK}{\mathcal{K}}
\newcommand{\mC}{\mathcal{C}}
\newcommand{\mI}{\mathcal{I}}
\newcommand{\mJ}{\mathcal{J}}
\newcommand{\bremI}{\breve{\mathcal{I}}}
\newcommand{\mH}{\mathcal{H}}
\newcommand{\bmK}{\bar{\mathcal{K}}}
\newcommand{\bmH}{\bar{\mathcal{H}}}
\newcommand{\ml}{\mathcal{L}}
\newcommand{\ve}{\varepsilon}
\renewcommand{\b}{\beta}
\renewcommand{\o}{\omega}
\newcommand{\bfI}{\mathbf{I}}
\newcommand{\bfJ}{\mathbf{J}}
\newcommand{\uI}{{\underline{I}}}
\newcommand{\uJ}{{\underline{J}}}
\newcommand{\uK}{{\underline{K}}}
\newcommand{\mbD}{\mathbb{D}}
\newcommand{\mbL}{\mathbb{L}}
\newcommand{\mbH}{\mathbb{H}}
\newcommand{\Q}{\mathcal{Q}}
\newcommand{\Z}{\mathcal{Z}}
\newcommand{\M}{\mathcal{M}}
\newcommand{\ar}{r}
\newcommand{\ear}{\vec{e}}
\newcommand{\cear}{\vec{\ce}}
\newcommand{\abs}[1]{|#1|}
\newcommand{\norm}[1]{\left\|#1\right\|}
\newcommand{\snorm}[1]{\|#1\|}
\renewcommand{\S}{\Sigma}
\newcommand{\s}{\sigma}
\newcommand{\sigmap}{{\sigma_p}}
\newcommand{\n}{\nabla}
\newcommand{\h}{h}
\renewcommand{\k}{k}
\newcommand{\bq}{{\bar p}}
\newcommand{\bh}{\bar \h}
\newcommand{\bk}{\bar \k}
\newcommand{\tk}{\tilde{\k}}
\newcommand{\brek}{\breve{\k}}
\newcommand{\breN}{\breve{N}}
\newcommand{\bga}{{\bar \gamma}}
\newcommand{\tga}{{\tilde{\gamma}}}
\newcommand{\brega}{{\breve{\gamma}}}
\newcommand{\bp}{{\bar \phi}}
\newcommand{\tp}{{\tilde{\phi}}}
\newcommand{\brep}{{\breve{\phi}}}
\newcommand{\bP}{{\bar \Phi}}
\newcommand{\be}{{\bar e}}
\newcommand{\te}{{\tilde{e}}}
\newcommand{\bree}{{\breve{e}}}
\newcommand{\bo}{{\bar \o}}
\newcommand{\breo}{{\breve{\o}}}
\newcommand{\tom}{{\tilde{\o}}}
\newcommand{\ch}{\check \h}
\newcommand{\ck}{\check \k}
\newcommand{\cg}{{\check g}}
\newcommand{\cga}{{\check \gamma}}
\newcommand{\cp}{{\check \phi}}
\newcommand{\ce}{{\check e}}
\newcommand{\co}{{\check \o}}
\newcommand{\cn}{{\check \n}}
\newcommand{\chk}{\check{k}}
\newcommand{\che}{\check{e}}
\newcommand{\chom}{\check{\omega}}
\newcommand{\chphi}{\check{\phi}}
\newcommand{\chga}{\check{\g}}
\newcommand{\mrK}{\mathring{\mK}}
\newcommand{\mrH}{\mathring{\mH}}
\newcommand{\mrPhi}{\mathring{\Phi}}
\newcommand{\mrp}{\mathring{p}}
\newcommand{\q}{p}
\newcommand{\ep}{\ear \phi}
\newcommand{\chep}{\cear \chphi}
\newcommand{\dtp}{e_0(\phi)}
\newcommand{\chdtp}{\d_t \chphi}
\newcommand{\en}{\ear N}
\newcommand{\wg}{{\widetilde \gamma}}
\newcommand{\wzeta}{{\widetilde \zeta}}
\DeclareMathOperator{\Id}{Id}
\begin{document}

\author{Hans Oude Groeniger}
\address{Department of Mathematics\\
  KTH Royal Institute of Technology\\
  100 44 Stockholm, Sweden}
\email[H.~Oude Groeniger]{joog@kth.se}
\author{Oliver Petersen}
\address{Department of Mathematics\\
  Stockholm University\\
  106 91 Stockholm, Sweden}
\email[O.~Petersen]{oliver.petersen@math.su.se}
\author{Hans Ringstr\"{o}m}
\address{Department of Mathematics\\
  KTH Royal Institute of Technology\\
  100 44 Stockholm, Sweden}
\email[H.~Ringstr\"{o}m]{hansr@kth.se}


\title{Formation of quiescent big bang singularities}
\begin{abstract}  
  Hawking's singularity theorem says that cosmological solutions satisfying the strong energy condition and corresponding to initial data with positive
  mean curvature have a past singularity; any past timelike curve emanating from the initial hypersurface has length at
  most equal to the inverse of the mean curvature. However, the nature of the singularity remains unclear. We therefore ask the following question: If
  the initial hypersurface has sufficiently large mean curvature, does the curvature necessarily blow up towards the singularity?

  In case the eigenvalues of the expansion-normalized Weingarten map are everywhere distinct and satisfy a certain algebraic condition
  (which in 3+1 dimensions is equivalent to them being positive), we prove that
  this is indeed the case in the CMC Einstein-non-linear scalar field setting. More specifically, we associate a set of geometric expansion-normalized
  quantities to any initial data set with positive mean curvature. These quantities are expected to converge, in the quiescent setting, in the direction
  of crushing big bang singularities; i.e.~as the mean curvature diverges. Our main result says that if the mean curvature is large enough,
  relative to an appropriate Sobolev norm of these geometric quantities, and if the algebraic condition on the eigenvalues is satisfied,
  then a quiescent (as opposed to oscillatory) big bang singularity with curvature blow-up necessarily forms. This provides a stable regime of big bang
  formation without requiring proximity to any particular class of background solutions.

  An important recent result by Fournodavlos, Rodnianski and Speck demonstrates stable big bang formation for all the spatially flat and spatially
  homogeneous solutions to the Einstein-scalar field equations satisfying the algebraic condition. As an application of our analysis, we obtain
  analogous stability results for
  any solution with induced data at a quiescent big bang singularity, in the sense introduced by the third author. In particular, we conclude stable
  big bang formation of large classes of spatially locally homogeneous solutions, of which the result by Fournodavlos, Rodnianski and Speck is a special
  case. Finally, since we here consider the Einstein-non-linear scalar field setting, we are also, combining the results of this article with an
  analysis of Bianchi class A solutions, able to prove both future and past global non-linear stability of a large class of spatially locally
  homogeneous solutions.
\end{abstract}

\maketitle

\tableofcontents

\section{Introduction}

Singularities are a natural feature of solutions to Einstein's equations in general relativity. 
In the cosmological setting, this follows from Hawking's singularity theorem. 
However, the only conclusion provided by the theorem is the existence of incomplete timelike geodesics. In particular, the nature of the
singularity remains unclear. Is the Kretschmann scalar (the Riemann curvature tensor contracted with itself) unbounded along incomplete geodesics?
What is the causal structure close to the singularity? These and related questions remain unanswered. In a series of articles, see e.g.
\cite{bkl70,bkl82}, Belinski\v{\i}, Khalatnikov and Lifschitz (BKL) suggested a scenario, based on heuristic arguments, concerning the nature of
generic big bang singularities. The scenario has since been refined by many authors; see, e.g., \cite{dhn,dn,hur,hul}. However, depending on the
matter model and the dimension, the idea is that the dynamics are spatially local and either oscillatory
and chaotic (the essential features of the local dynamics being modelled by a specific one-dimensional chaotic dynamical system) or quiescent. The
oscillatory setting is quite complicated to analyze and, to date, mathematical results have only been obtained in the spatially homogeneous setting;
see, e.g., \cite{weaver,cbu,BianchiIXattr,beguin,lhwg,lrt,bad}. In the quiescent setting, there is, however, a rich literature; see, e.g.,
\cite{cim,kar,iak,ren,aarendall,iam,damouretal,sta,RinAsVel,RinSCC,ABIF,klinger,aeta,rasql,rasq,specks3,rsh,GIJ,fal}. Many of the results
concern symmetric settings, often in the absence of a
smallness assumption; it is worth noting that the presence of symmetries can lead to quiescent behaviour even though the expected generic behaviour for
the matter model under consideration is oscillatory. Focusing on results in the absence of symmetries, Andersson and Rendall demonstrated that it is,
in the real analytic setting, possible to specify the asymptotics near the big bang for solutions to the Einstein-scalar field equations (as well as to
the Einstein-stiff fluid equations) in $3+1$-dimensions; see \cite{aarendall}. This result was later generalised to higher dimensions and other
matter models in \cite{damouretal}. An interesting, related, result in the smooth $3+1$-dimensional vacuum setting and in the absence of symmetries is
due to Fournodavlos and Luk; see \cite{fal}. In \cite{andres}, the results of \cite{aarendall} and \cite{fal} are simultaneously generalized in the
smooth, non-degenerate setting. Moreover, \cite{andres} rests upon a geometric notion of initial data on the singularity, see \cite{RinQC}, and yields,
given initial data, a maximal globally hyperbolic development which is unique up to isometry. The important goal of localizing \cite{fal} in space is
achieved in \cite{aaf} by Athanasiou and Fournodavlos.

Until recently, there were no proofs of stable big bang formation. However, this changed with the work of Rodnianski and Speck; see \cite{rasql,rasq,rsh}.
Initially, the authors proved stable big bang formation for initial data close to those of the spatially flat, spatially homogeneous and isotropic
solutions to the Einstein-scalar field and the Einstein-stiff fluid equations (note also the work of Beyer and Oliynyk, \cite{bat,baotwo}, which yields
stable big bang formation using a local gauge, as opposed to the non-local gauge used in \cite{rasql,rasq,rsh}; the results of Speck \cite{specks3}
in the case of $\sn{3}$-spatial geometry; and the work of Fajman and Urban \cite{fau} in the case of closed hyperbolic spatial geometry, which yields
future and past global non-linear stability). They then generalized this result to cover situations with moderate anisotropies and higher dimensions; see
\cite{rsh}. However, for the purposes of the present article, the most relevant previous result is \cite{GIJ}, due to Fournodavlos, Rodnianski and Speck
(note also the recent work of Beyer, Oliynyk and Zheng, \cite{boz}, localizing \cite{GIJ}). In this paper, the authors prove stability, in the
direction of the big bang singularity, of all the spatially homogeneous and spatially flat solutions to the Einstein-scalar field equations
satisfying the algebraic condition on the eigenvalues of the expansion-normalized Weingarten map mentioned in the abstract; see (\ref{eq: the condition})
below.

Many of the ideas introduced in \cite{GIJ} are of central importance in our arguments. However, as opposed to the previous results, i.e.
\cite{rasql,rasq,rsh,GIJ,bat,baotwo,boz,fau}, our primary goal here is not to prove past global non-linear stability of specific solutions, but rather to
identify conditions on initial data leading to quiescent big bang formation. When formulating such conditions, it is natural to single out the mean
curvature $\theta$ associated with the initial data. Due to Hawking's theorem, we know that $1/\theta$ is a measure of the distance to the singularity.
Moreover, singularity formation is signalled by $\theta$ diverging to $\infty$. However, it is also important to isolate quantities that are
complementary to $\theta$. Here we identify the expansion-normalized quantities $\mfN:=(\mK,\mH,\Phi_0,\Phi_1)$; see
Definitions~\ref{def: exp nor Weingarten map}--\ref{def:Phizodef} below. What can be said about $\mfN$? In the quiescent setting, $\mfN$ is
expected to converge; see, e.g., \cite{RinQC}. On the other hand, due to constructions of solutions with prescribed data on the singularity
(see, e.g., \cite{aarendall,damouretal,andres} for results in the case of Gaussian foliations), it is not expected to be possible to impose universal bounds
on $\mfN$; i.e., for any choice of Sobolev norm $\|\cdot\|$ for $\mfN$, and for any choice of constant $C$, we expect that there are solutions such that
$\|\mfN\|\geq C$ in the limit in the direction of the singularity. To summarize: $\theta$ diverges to infinity, and there is no universal bound on $\mfN$.
However, since, for a specific solution, $\mfN$ remains bounded as
$\theta$ tends to infinity, it is meaningful to first impose an upper bound, say $\zeta_0$, on a suitable Sobolev norm, say $\|\cdot\|$, of
$\mfN$, and then to try to determine a lower bound on $\theta$, say $\zeta_1$, depending on $\zeta_0$, such that if $\|\mfN\|\leq\zeta_0$ and
$\theta\geq\zeta_1$, then a singularity with curvature blow up forms. Unfortunately, conditions of this type are not sufficient. Due to arguments
going back to \cite{dhs}, there is one algebraic condition on the eigenvalues, say $p_I$, of the expansion-normalized Weingarten map $\mK$
that has to be satisfied, namely
\begin{equation}\label{eq: the condition}
  p_I+p_J-p_K<1-\s_p
\end{equation}
for all $I$, $J$, $K$ such that $I\neq J$ and some $\s_p>0$. 
A natural formulation one could hope for is then the following. Fix $\s_p>0$ and
$\zeta_0$. Consider $\mfN$'s that satisfy (\ref{eq: the condition}) and $\|\mfN\|\leq\zeta_0$. Prove that there is a $\zeta_1$, depending on
$\zeta_0$ and $\s_p$, such that if (\ref{eq: the condition}), $\|\mfN\|\leq\zeta_0$ and $\theta>\zeta_1$ are satisfied, then the corresponding
solution has a singularity exhibiting curvature blow up. For reasons mentioned above, conditions of this type can be expected to apply to generic
quiescent solutions. We are not able to prove precisely this statement, but almost. In the Einstein-scalar field setting, we only need to include one
additional condition: a strictly positive lower bound on $|\bq_I-\bq_J|$ for $I\neq J$. In the Einstein-non-linear scalar field setting, we also need
to impose conditions on the potential. 

Results of the above type are of interest in their own right. However, they can also be used to derive several corollaries. As a first corollary,
solutions corresponding to initial data on the singularity, in the sense of \cite{RinQC}, satisfying the condition on data on the singularity analogous
to (\ref{eq: the condition}) exhibit stable big bang formation. This corollary can, in its turn, be used to deduce past global non-linear stability
of large classes of spatially locally homogeneous solutions. In fact, since we allow a potential, we are able to prove past and future
global non-linear stability. 

The outline of the remainder of this section is as follows. We begin, in Subsection~\ref{ssection:enlsfeq}, by introducing the Einstein-non-linear
scalar field equations and imposing conditions on the potentials we consider. Next, as is
clear from the above, expansion-normalized quantities are of crucial importance, both in the statements of the results and in the arguments. We therefore
devote Subsection~\ref{ssection:expnormquan} to a discussion of appropriate ways of defining expansion-normalized versions of the initial data. The main
result is then stated in Subsection~\ref{ssection:mainresult}. In preparation for the corollaries of the main result, we introduce the notion of robust
initial data on the singularity in the Einstein-non-linear scalar field setting in Subsection~\ref{ssection:idotsing}. Once this has been done, we are in a
position to demonstrate that solutions arising from robust initial data on the singularity exhibit stable big bang formation. As a final corollary, we
prove that the results can be used to demonstrate stable quiescent singularity formation of large classes of spatially locally homogeneous solutions.
This includes global non-linear stability for recollapsing solutions. It also includes global non-linear stability for solutions with a big bang and an
expanding direction. This is the topic of Subsection~\ref{ssection:stablebbfshs}. In the main result, we make a
non-degeneracy assumption concerning the eigenvalues of the expansion-normalized Weingarten map. However, it is also possible to deduce conclusions in
the degenerate case, see Subsection~\ref{ssection: the degenerate case}.
Finally, in Subsections~\ref{ssection:ProofStrategy} and \ref{ssec:Outline}, we give an outline of the article and describe the strategy of the proof. 



\subsection{The Einstein-non-linear scalar field equations}\label{ssection:enlsfeq}

We are interested in finding solutions $(M,g,\phi)$ to the Einstein-non-linear scalar field equations with a cosmological constant $\Lambda$ and a potential $V\in C^{\infty}(\rn{})$. 
Here $(M,g)$ is a Lorentz manifold of dimension $n + 1 \geq 3$ and $\phi\in C^{\infty}(M)$. 
The equations are given by 
\begin{subequations}
  \begin{align}
    \roRic_g - \tfrac12 \roScal_g g +\Lambda g &=  T,\label{eq:EFE}\\
    \Box_{g}\phi-V'\circ\phi &= 0,\label{eq:scfieldeq}    
  \end{align}
\end{subequations}
where $\roRic_{g}$ and $\roScal_{g}$ are the Ricci tensor and scalar curvature of $(M,g)$; $\Box_{g}$ denotes the wave operator associated with $g$; and
the stress energy tensor $T$ is given by
\begin{equation}\label{eq:setsf}
    T
        = \md \phi \otimes \md \phi-\left[\tfrac{1}{2}| \md \phi|_{g}^{2}+V\circ\phi\right]g.
\end{equation}
Note that by adding a constant to $V$, we can eliminate the cosmological constant. In other words, there is no loss of generality in assuming $\Lambda=0$,
and we do so in what follows. We restrict the analysis to the following class of potentials (see Remark~\ref{remark:spsV} below for a justification):

\begin{definition}\label{def:Vadm}
  Fix $\s_{V} \in (0, 1)$. If $V\in C^{\infty}(\rn{})$ is non-negative and has the property that for each $0\leq k\in \mathbb{Z}$, there is a constant
  $c_{k}>0$ such that
  \begin{equation}
    \textstyle{\sum}_{l \leq k}|V^{(l)}(x)|
        \leq c_{k} e^{2(1-\s_{V})|x|} \label{eq: V assumption}
  \end{equation}
  for all $x\in\rn{}$, then $V$ is said to be a \textit{$\s_V$-admissible potential}. 
\end{definition}
\begin{remark}
  The requirement of non-negativity can be dropped. However, it is then necessary to impose stronger conditions on the initial data;
  see Remark~\ref{remark:negative potential} below. 
\end{remark}
Note that (\ref{eq:EFE}) can be written
\begin{equation}\label{eq:roRicpot}
	\roRic_g
		= \md \phi \otimes \md \phi + \tfrac2{n-1} (V \circ \phi) g.
\end{equation}
Given any spacelike hypersurface $\S \subset M$ with future pointing unit normal vector field $\nu$, (\ref{eq:EFE}) implies
\emph{constraint equations} on the induced first and second fundamental forms $h$, $k$ as well as $\phi_0 := \phi|_\S$ and $\phi_1 := \nu(\phi)|_\S$.
The constraint equations are given by
\begin{align}
    \roScal_h - \abs k^2_{h} + (\tr_h k)^2 
        &= \phi_1^2 + |\md \phi_0|_{h}^{2} + 2 V \circ \phi_0, \label{eq: Hamiltonian constraint} \\
    \mathrm{div}_hk - \md \tr_h k
        &= \phi_1 \md \phi_0. \label{eq: momentum constraint}
\end{align}
Equation \eqref{eq: Hamiltonian constraint} is called the \emph{Hamiltonian constraint equation} and \eqref{eq: momentum constraint} is called the
\emph{momentum constraint equation}.


\subsection{The expansion-normalized quantities}\label{ssection:expnormquan}
As already mentioned, when approaching a quiescent big bang singularity, certain quantities are expected to stay bounded, once the expansion of the
spacetime has been accounted for. The purpose of this section is to introduce these \emph{expansion-normalized} quantities.
As above, let $(M,g)$ be a Lorentz manifold of dimension $n + 1 \geq 3$ and $\phi\in C^{\infty}(M)$. 
Let $\S \subset M$ denote a spacelike hypersurface and $h$ and $k$ the induced first and second fundamental forms on $\S$, respectively.
Our measure of the expansion of $\S$ is the mean curvature:

\begin{definition}\label{def:mc}
The \textit{mean curvature} is defined as $\theta:= \tr_h(k)$.
\end{definition}

\begin{definition} \label{def: exp nor Weingarten map}
Assume that $\theta > 0$.
The \emph{expansion-normalized Weingarten map} is the endomorphism
\[
    \mK(X) := \tfrac{k(X, \cdot)^\sharp}{\theta}
\]
for any $X \in T\S$, where $\sharp : T^*M \to TM$ is the unique map such that $\o = h(\o^\sharp, \cdot)$ for all $\o \in T^*M$.
\end{definition}

Note that $\mK$ is symmetric with respect to $h$ and therefore diagonalizable:

\begin{definition}
The eigenvalues $p_1, \hdots, p_n$ of $\mK$ are said to be the \textit{eigenvalues associated with} $\mK$. 
\end{definition}

Note that $\sum_{j} p_j = 1$.
If $\theta > 0$, then $\theta ^{\mK}$ is a well defined smooth endomorphism on $T\S$, defined by
\[
    \theta^\mK(X)
        := e^{\ln(\theta) \mK}(X)
        = \textstyle{\sum}_{m = 0}^\infty \tfrac{\left( \ln(\theta) \mK\right)^m}{m!}(X),
\]
for any $X \in T\S$.

\begin{definition} \label{def: exp nor first ff}
Assume that $\theta > 0$.
The \emph{expansion-normalized first fundamental form} is the covariant 2-tensor field $\mH$ given by 
\[
    \mH(X, Y) 
        := h\left(\theta ^{\mK}(X), \theta ^{\mK}(Y)\right)
\]
for all $X,Y \in T_p\S$ and all $p \in \S$.
\end{definition}

Since we consider the Einstein-non-linear scalar field equations, we also need expansion-normalized quantities related to the
scalar field.

\begin{definition}\label{def:Phizodef}
Assume that $\theta > 0$.
The \emph{expansion-normalized normal derivative of the scalar field} is given by
\[
    \Phi_1 := \theta^{-1} \nu(\phi),
\]
where $\nu$ is the future unit normal vector field along $\S$.
The \emph{expansion-normalized induced scalar field} is given, along $\S$, by
\begin{equation}\label{eq:Phiz def}
    \Phi_0 
        := \phi + \Phi_1 \ln(\theta).
\end{equation}
\end{definition}
In order to illustrate the advantage of these definitions, we consider spatially homogeneous and spatially flat solutions to the Einstein-non-linear scalar field
equations as a special case:
\begin{example} \label{example: spatially hom flat}
  The \textit{spatially homogeneous and spatially flat model solutions} to the Einstein-scalar field equations, with vanishing potential, on the manifold $M=(0, \infty) \times \tn{n}$, are given by
  \[
  g=-\md t^2 + \textstyle{\sum}_{I} t^{2p_I}\md x^I \otimes \md x^I, 
  \qquad \phi = a \ln(t) + b,
  \]
  where $p_1, \hdots, p_n$, $a$, $b \in \rn{}$ are constants such that
  \[
  \textstyle{\sum}_{I} p_I
  = \sum_{I} p_I^2 + a^2
  = 1.
  \]
  Here $\tn{n}$ denotes the $n$-dimensional torus. For this solution, 
  \begin{align*}
    \theta & = \tfrac1t,\ \ \ \mK=\textstyle{\sum}_{I} p_I \d_{x^I} \otimes \md x^I,\ \ \
    \mH = \textstyle{\sum}_{I} \md x^I \otimes \md x^I, \\
    \Phi_1
        &= a, \ \ \ \Phi_0= b.
  \end{align*}  
  Note that $\mK$, $\mH$, $\Phi_1$ and $\Phi_0$ are independent of $t$. In particular, these quantities are smooth up to $t = 0$.
\end{example}

\subsection{Main result: Formation of quiescent big bang singularities}\label{ssection:mainresult}
Before stating the main result, it is convenient to introduce the notion of expansion-normalized initial data. 
\begin{definition}\label{def:idexpnormid}
Let $\mfI:=(\S,\bh,\bk,\bp_{0},\bp_{1})$ be initial data for the Einstein-non-linear scalar field equations with potential $V\in C^{\infty}(\rn{})$.
In other words, $\mfI$ satisfies the \textit{constraint equations}:
\begin{subequations}\label{seq:con}
\begin{align}
    \roScal_{\bh}-|\bk|_{\bh}^{2}+(\tr_{\bh}\bk)^{2} 
        = & \bp_{1}^{2}+|\md\bp_{0}|_{\bh}^{2}+2V\circ\bp_{0},\label{eq:hamcon}\\
    \rodiv_{\bh}\bk-\md(\tr_{\bh}\bk) 
        = & \bp_{1}\md \bp_{0}.\label{eq:momcon}
\end{align}
\end{subequations}
If $\bth:=\tr_{\bh}\bk$ is constant, $\mfI$ are said to be \textit{constant mean curvature (CMC)} initial data. If $\bth>0$ on $\S$, define the
associated expansion-normalized Weingarten map $\bmK$, expansion-normalized first fundamental form $\bmH$,
expansion-normalized normal derivative of the scalar field $\bP_{1}$ and expansion-normalized induced scalar field $\bP_{0}$ by appealing to
Definitions~\ref{def: exp nor Weingarten map}--\ref{def:Phizodef}. Then $(\S, \bmH,\bmK,\bP_{0},\bP_{1})$ are said to be the
\textit{expansion-normalized initial data associated to $\mfI$}. 
\end{definition}

The conditions in our main theorem below are formulated in terms of expansion-normalized initial data associated to CMC initial data
with $\bth>0$. If $\bq_1, \hdots, \bq_n$ are the eigenvalues associated with $\bmK$, then $\textstyle{\sum}_{j} \bq_j= 1$. However, in addition, the
following condition is expected to be essential in order to obtain a quiescent singularity (see Remarks~\ref{remark:spone}--\ref{remark:spsV} below):

\begin{definition} \label{def: alpha admissible}
  Let $\mfI$ be CMC initial data with $\bth>0$ as in Definition~\ref{def:idexpnormid}. Let $\sigmap \in (0, 1)$. Then $\mfI$ and the
  eigenvalues $\bq_1, \hdots, \bq_n$ are said to be $\sigmap$-\textit{admissible} if
  \begin{equation}
    \bq_I + \bq_J - \bq_K
    < 1 - \sigmap, \label{cond: alternating q}
  \end{equation}
  for all $I, J, K$, such that $I \neq J$.
\end{definition}

\begin{remark}\label{remark:frame intro}
  In the main theorem, we are interested in initial data on a closed manifold $\S$ such that there is a $(1,1)$-tensor field $\mK$ on $\S$ with
  distinct eigenvalues. Then a finite covering space of $\S$ is parallelizable; this follows
  by an argument similar to the proof of \cite[Lemma~A.1, p.~223]{RinWave}. Since we might as well consider the induced initial data on this covering
  space and the corresponding development, we assume $\S$ to be parallelizable. Throughout the paper, we also fix a reference Riemannian metric $h_\refer$
  on $\S$ and a smooth global orthonormal frame $(E_{i})_{i=1}^{n}$ on $(\S,h_{\refer})$. 
\end{remark}

The main result of this article is the following:

\begin{thm}[The main theorem] \label{thm: big bang formation} 
Fix admissibility thresholds $\s_V$, $\sigmap \in (0,1)$ and let
\begin{equation} \label{eq: sigma condition}
    \s
        := \min \left( \tfrac{\s_V}3, \tfrac \sigmap 5 \right).
\end{equation}
Fix $3\leq n\in\nn{}$ and regularity degrees $k_0$, $k_1 \in \nn{}$, such that
\begin{subequations}\label{seq: k_0 k_1 inequality}
  \begin{align}
    k_0 \geq & \left\lceil \tfrac {n+1}2 \right\rceil,\label{eq: k_0 inequality}\\
    k_1 \geq & \tfrac{(2n + 3)(1+2\s)}\s \left( k_0 + 3 + \left\lceil \tfrac {n+1}2 \right\rceil \right).\label{eq: k_1 inequality}
  \end{align}
\end{subequations}
Let $(\S, h_\refer)$ be a closed Riemannian manifold of dimension $n$ with smooth global orthonormal frame $(E_{i})_{i=1}^{n}$, and let
$V\in C^{\infty}(\rn{})$ be a $\s_V$-admissible potential. For any $\zeta_0 > 0$, there is then a $\zeta_1 > 0$ such that:

If $\mfI$ are $\sigmap$-admissible CMC initial data on $\S$ for the Einstein-non-linear scalar field equations with potential $V$, such that the
associated expansion-normalized initial data $(\S, \bmH,\bmK,\bP_{0},\bP_{1})$ satisfy
\begin{equation} \label{eq: Sobolev bound assumption}
    \snorm{\bmH^{-1}}_{C^{0}(\S)}+\snorm{\bmH}_{H^{k_1+2}(\S)} + \snorm{\bmK}_{H^{k_1+2}(\S)} + \snorm{\bP_0}_{H^{k_1+2}(\S)} + \snorm{\bP_1}_{H^{k_1+2}(\S)}
        < \zeta_0;
\end{equation}
$|\bq_I-\bq_J|>\zeta_{0}^{-1}$ for $I\neq J$; and the mean curvature satisfies $\bth > \zeta_1$, then the maximal globally hyperbolic development of
$\mfI$, say $(M, g, \phi)$, with associated embedding $\iota: \S \hookrightarrow M$, has a past crushing big bang singularity in the following sense:

\noindent \textbf{CMC foliation:} 
There is a diffeomorphism $\Psi$ from $(0, t_0] \times \S$ to $J^-\left(\iota(\S)\right)$, i.e.~the causal past of the Cauchy hypersurface
$\iota(\S)$, such that $\Psi (\{t_0 \} \times \S) = \iota(\S)$ and the hypersurfaces $\Psi(\S_t) \subset M$, where $\S_t := \{t\} \times \S$, are
spacelike Cauchy hypersurfaces with constant mean curvature $\theta = \frac1t$, for each $t \in (0, t_0]$.

\noindent \textbf{Asymptotic data:}
There are unique everywhere distinct functions $\mrp_1, \hdots, \mrp_n \in C^{k_0+1}(\S)$ and functions $\mrPhi_0$, $\mrPhi_1 \in C^{k_0+1}(\S)$, satisfying 
\begin{equation} \label{eq: Kasner conditions}
    \textstyle{\sum}_{I} \mrp_I 
        = \textstyle{\sum}_{I} \mrp_I^2 + \mrPhi_1^2 = 1,
\end{equation}
i.e., the generalized Kasner conditions, and, for all $I, J, K$ with $I \neq J$,
\begin{equation} \label{eq: quiescence conclusion}
    \mrp_I + \mrp_J - \mrp_K < 1.
\end{equation}
There is also a constant $C > 0$ such that, for all $t \in (0, t_0]$,
\begin{subequations}\label{seq:pIPhizPhioneasintro}
  \begin{align}
    \| p_I(t, \cdot) - \mrp_I \|_{C^{k_{0}+1}(\S)}  &\leq C t^{\s},\label{seq:pIasintro}\\
    \| \Phi_0(t, \cdot) - \mrPhi_0 \|_{C^{k_{0}+1}(\S)}+ \| \Phi_1(t, \cdot) - \mrPhi_1 \|_{C^{k_{0}+1}(\S)} &\leq C t^{\s},\label{seq:PhizPhioneasintro}
  \end{align}
\end{subequations}
where $\q_1, \hdots, \q_n$ are the eigenvalues of $\mK$, and $\mK$, $\Phi_1$ and $\Phi_0$ are the expansion-normalized quantities induced on
$\S_t$ by appealing to Definitions~\ref{def: exp nor Weingarten map} and \ref{def:Phizodef}.

\noindent \textbf{Curvature blow-up:}
There is a constant $C > 0$ such that $\mathfrak{R}_g := \roRic_{g, \mu \nu} \roRic_{g}^{\mu \nu}$ and
$\mathfrak{K}_g := \roRiem_{g,\mu \nu \xi \rho} \roRiem_g^{\mu \nu \xi \rho}$ satisfy
\begin{subequations}\label{seq:Kretschmann Ricci asymptotics intro}
  \begin{align}
    \left \| t^4 \mathfrak{K}_g(t, \cdot) - 
    4 \left[ \textstyle{\sum}_{I} \mrp_I^2 (1 - \mrp_I^2)
      + \textstyle{\sum}_{I < J} \mrp_I^2 \mrp_J^2 \right]
    \right \|_{C^{k_{0}+1}(\S)}\leq C t^{2\s}, \\
    \| t^4 \mathfrak{R}_g(t, \cdot) - \mrPhi_1^4
    \|_{C^{k_{0}+1}(\S)}\leq C t^{2\s}
  \end{align}
\end{subequations}
for all $t \in (0, t_0]$, so that $(M, g)$ is $C^2$ past inextendible. Moreover, every past directed causal geodesic in $M$ is incomplete
and $\mathfrak{K}_g$ blows up along every past inextendible causal curve. 
\end{thm}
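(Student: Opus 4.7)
The natural starting point is to set up a CMC foliation past of $\iota(\S)$, parametrized so that $\theta = 1/t$, and to rewrite the Einstein-non-linear scalar field system as a first-order evolution system for the expansion-normalized unknowns $\mfN = (\mK,\mH,\Phi_0,\Phi_1)$, together with the lapse $N$ (which is determined elliptically by the CMC condition). Because the eigenvalues of $\bmK$ are assumed pairwise distinct, one can choose a local frame $(e_I)$ in which $\mK$ is diagonal and perform an orthonormal-frame reduction of the geometry following the general approach used in \cite{GIJ}; this lets one track each $p_I$ and the corresponding connection-coefficient quantities $\g_{IJ}{}^K$ directly. The key observation is that once $t$-weights $t^{\alpha_{IJK}}$ and $t^{\b_I}$ are pulled out of these frame quantities in accordance with the Kasner exponents, the resulting renormalized variables are expected to be bounded uniformly as $t \to 0^+$ precisely when the admissibility condition (\ref{cond: alternating q}) holds, because the exponent of the worst off-diagonal structure coefficient is $p_I + p_J - p_K$.

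The plan is then a continuation-and-bootstrap argument on the time interval $(t_*, t_0]$. First, short-time past existence in the CMC gauge and the fact that $\theta = 1/t$ at $t = t_0 = 1/\bth$ follow from standard theory. Next, I would set up a bootstrap assumption of the form: the renormalized Sobolev energy of $\mfN$, together with the frame coefficients, is bounded by $2\zeta_0$ (say) in $H^{k_1}$, and the spread $|p_I - p_J|$ stays above $\tfrac12 \zeta_0^{-1}$; this second piece is crucial, since it permits the projection onto eigenspaces in the frame construction to remain smooth. Under these assumptions, the evolution equations have the schematic form $\d_t U + (\text{diag linear})U = t^{\s}(\text{nonlinearity})$, and energy estimates, using the cancellation structure that comes from the admissibility condition together with (\ref{eq: V assumption}) controlling the potential contribution (the factor $2(1-\s_V)$ in the growth bound is precisely what ensures $V\circ \phi$ enters as $O(t^{2\s_V})$ after expansion normalization), yield improvements of the bootstrap provided the initial time $t_0$ is small enough, i.e.~provided $\bth = 1/t_0$ is large enough compared to $\zeta_0$. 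This is how the threshold $\zeta_1 = \zeta_1(\zeta_0)$ is produced. The bootstrap is then closed in the standard way, and differentiation in the frame indices gives higher regularity via commuting $E_i$ derivatives through the linearized system.

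From the closed bootstrap one integrates the evolution equations for each renormalized unknown: the right-hand sides are integrable in $t$ with rate $t^{\s-1}$, so Cauchy-in-$t$ convergence yields the limiting fields $\mrp_I$, $\mrPhi_0$, $\mrPhi_1$, and the difference estimates (\ref{seq:pIasintro})--(\ref{seq:PhizPhioneasintro}) come out with the explicit rate $t^{\s}$. The Kasner-type algebraic relations (\ref{eq: Kasner conditions}) are obtained as $t \to 0^+$ limits of the Hamiltonian constraint (\ref{eq:hamcon}) after expansion normalization, using that $\roScal_{\bh}$, $|\md \bp_0|_{\bh}^2$ and $V\circ\bp_0$ are all subleading in the renormalized variables; the strict inequality (\ref{eq: quiescence conclusion}) passes from the bootstrap by taking limits. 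For the curvature blow-up, I would expand $\roRic_{g,\mu\nu}\roRic_g^{\mu\nu}$ and $\roRiem_{g,\mu\nu\xi\rho}\roRiem_g^{\mu\nu\xi\rho}$ in the adapted frame, separate the leading $t^{-4}$-diagonal Gauss--Codazzi terms from the subleading off-diagonal pieces (which carry extra positive powers of $t$ by (\ref{cond: alternating q})), and insert the asymptotic data to get (\ref{seq:Kretschmann Ricci asymptotics intro}). Finally, $C^2$ past inextendibility and geodesic incompleteness follow from the Kretschmann blow-up along with the fact that $t\theta \to 1$ makes $t$ comparable to proper time along normal curves; incompleteness of general causal geodesics follows from Hawking's theorem applied on approaching CMC slices.

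The main obstacle I expect is the simultaneous control of three coupled difficulties in the energy estimate: preserving the eigenvalue separation $|p_I - p_J|$ along the entire bootstrap interval (without which the diagonalizing frame degenerates and the top-order terms lose their cancellation), handling the non-linear coupling with the potential $V$ at high Sobolev regularity (which is why the huge factor $(2n+3)(1+2\s)/\s$ appears in (\ref{eq: k_1 inequality}): one loses derivatives each time one commutes through the potential and its nonlinear composition with $\phi = \Phi_0 - \Phi_1 \ln\theta$, and must redistribute them), and squeezing a genuinely positive integrated rate $t^{\s}$ out of the off-diagonal terms using only (\ref{cond: alternating q}) with margin $\s_p$ and the potential condition with margin $\s_V$, which is exactly why $\s$ in (\ref{eq: sigma condition}) is forced to be the minimum of $\s_V/3$ and $\s_p/5$ rather than something larger.
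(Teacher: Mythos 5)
Your overall plan matches the paper's: CMC foliation with $\theta = 1/t$, orthonormal-frame reduction following \cite{GIJ}, a bootstrap with $t$-weighted norms, energy estimates exploiting the admissibility margin, integration to obtain limits, and Gauss--Codazzi expansion of the curvature invariants to get blow-up. The paper organizes the bootstrap around a \emph{scaffold}, an explicit approximate solution built from the initial data ($\ce_I = (t_0/t)^{\bq_I}\be_I$ etc.), and controls the \emph{deviation} from the scaffold; your pulling-out of $t$-weights captures essentially the same renormalization idea, so that is not a real discrepancy.

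There is, however, one place where you diverge from the paper in a way that would likely cause trouble. You propose to work in a frame in which $\mK$ stays diagonal for all $t$ and to track the eigenvalues $p_I$ directly, using the separation $|p_I - p_J| > \zeta_0^{-1}/2$ to keep the eigenspace projections smooth. The paper deliberately does \emph{not} do this. It uses a Fermi--Walker propagated frame ($\n_{e_0}e_I = e_I(\ln N)\, e_0$, giving $e_0(e_I^i) = -k_{IJ}e_J^i$), which is only diagonalizing at the initial time; the eigenvalues $p_I(t)$ are extracted a posteriori from the component matrix $\mK_I{}^J = t k_{IJ}$ via Lemma~\ref{lemma:nondegeneracy}. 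The reason is that the evolution of a genuinely diagonalizing eigenframe introduces terms of size $1/(p_I - p_J)$, which can be as large as $\zeta_0$ (an arbitrary, possibly huge, constant), and these denominators reappear with a vengeance when you commute spatial derivatives through the frame to do the $H^{k_1}$ energy estimate. The Fermi--Walker equations are polynomial in the unknowns, with no such denominators, and this is what makes the bootstrap close uniformly in $\zeta_0$ (the dependence on $\zeta_0$ enters only through $\rho_0$ and $\zeta_1$, not through a degenerating coefficient). If you insist on a diagonalizing frame you would have to quantify the loss from these denominators very carefully and it is not clear the argument survives. A related minor point: what you call ``connection-coefficient quantities $\g_{IJ}{}^K$'' are in the paper the \emph{structure} coefficients $\g_{IJK} = \o^K([e_I,e_J])$, chosen in place of the connection coefficients of \cite{GIJ} precisely to simplify the evolution equations in the spatially inhomogeneous setting. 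Everything else in your sketch (the role of the Hamiltonian constraint in passing to the limit, the $t$-weights on $V\circ\phi$ coming from $2(1-\s_V)$, the origin of $\s = \min(\s_V/3,\s_p/5)$) is in line with the paper's argument.
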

\begin{remark}
  It is also possible to obtain conclusions when the $\bq_I$'s coincide, see Subsection~\ref{ssection: the degenerate case} below.
\end{remark}
\begin{remark}
  Due to Remark~\ref{remark:frame intro}, there is no restriction in assuming $(\S,h_\refer)$ to have a smooth global orthonormal frame. 
\end{remark}
\begin{remark}\label{remark:Constant dependence Main Theorem}
  The constant $\zeta_1$ and the constants $C$ appearing in (\ref{seq:pIPhizPhioneasintro}) and (\ref{seq:Kretschmann Ricci asymptotics intro}) only
  depend on $\zeta_0$, $\s_p$, $\s_V$, $k_0$, $k_1$, $c_{k_{1}+2}$, $(\S,h_{\refer})$ and $(E_i)_{i=1}^n$.
\end{remark}
\begin{remark}\label{remark:negative potential}
  The non-negativity requirement in Definition~\ref{def:Vadm} is only used to prove (\ref{eq:abs Phi1 initial bound}). If one is prepared
  to impose the condition (\ref{eq:abs Phi1 initial bound}) on initial data, with $\rho_0$ replaced by $\zeta_0$, it is not necessary to
  assume $V$ to be non-negative. 
\end{remark}
\begin{remark}
  The Sobolev norms appearing in the statement of the theorem are defined in Appendix~\ref{sec:SobolevInequalities} below.
\end{remark}
\begin{remark}\label{remark:FRaR non-deg}
  The conclusions of the theorem can be improved substantially. In fact, given one of the solutions constructed in the theorem, there are
  smooth robust non-degenerate quiescent initial data on the singularity for the Einstein-non-linear scalar field equations with a potential $V$,
  see Definition~\ref{def:ndsfidonbbs}, such that the solution is a local crushing CMC development corresponding to the initial data on the
  singularity; see Definition~\ref{def:developmentsfLambdacrushing}. For a justification of this statement, see
  \cite[Theorem~11, pp.~5-6]{FGaR}.
\end{remark}

The argument is based on the Fournodavlos--Rodnianski--Speck (FRS) equations introduced in Subsection~\ref{ssection: the FRS equations} below.
For a more detailed statement, including the asymptotics of the FRS variables, we refer to
Theorem~\ref{thm:GlobalExistenceFermiWalker} and Theorem~\ref{thm:Asymptotics} below.

\begin{remark}\label{remark:spone}
Let us consider Condition \eqref{cond: alternating q} in $3 + 1$ dimensional spacetimes.
We order the eigenvalues so that $\bq_1 > \bq_2 > \bq_3$.
Note that \eqref{cond: alternating q} is then equivalent to $\bq_1 < 1 - \sigmap$ and $\bq_1 + \bq_2 - \bq_3 < 1 - \sigmap$. 
Since the eigenvalues sum up to $1$, it follows $\bq_1 + \bq_2 - \bq_3 = 1 - 2 \bq_3$ and the second inequality is therefore equivalent to $\bq_3 > \frac \sigmap2$.
We therefore conclude that, in the $3 + 1$ dimensional case, \eqref{cond: alternating q} is equivalent to requiring, for $I = 1, 2, 3$,
\begin{equation}\label{eigenvcondthreedim}
    \tfrac \sigmap2 
        < \bq_I
        < 1 - \sigmap.
\end{equation}
\end{remark}
\begin{remark}\label{remark:sptwo}
  Condition \eqref{cond: alternating q} is the same as \cite[(1.8), p.~835]{GIJ}. Note, however, that here the $\bq_{I}$ are the eigenvalues
  of the expansion-normalized Weingarten map of the initial data (as opposed to the eigenvalues of the expansion-normalized Weingarten map
  of the background solution, cf. \cite{GIJ}). Moreover, the $\bq_{I}$ are functions in the present paper, not constants. 
  Finally, we are not assuming the variation of these functions to be small.
\end{remark}
\begin{remark}\label{remark:spsV}
  To the best of our knowledge, \eqref{cond: alternating q} originates with \cite{dhs} in the higher dimensional vacuum setting. The
  consistency of this condition in the Einstein-scalar field setting is illustrated by \cite{damouretal,GIJ}. On a heuristic level, the
  necessity is demonstrated in \cite{RinGeo}. In order to justify the condition on the potential introduced in Definition~\ref{def:Vadm},
  it is natural to consider initial data $(h,k,\phi_0,\phi_1)$, induced on a CMC hypersurface, and the associated Hamiltonian constraint; cf. 
  (\ref{eq:hamcon}). We are here interested in non-linear scalar fields, but we restrict
  our attention to potentials that give a subdominant contribution in the direction of the singularity. More specifically, dividing
  the Hamiltonian constraint by $\theta^{2}$, we expect, in the limit, $\roScal_{h}/\theta^{2}$ to be small, $|\md\phi_{0}|_{h}^{2}/\theta^{2}$
  to be small and $2V\circ\phi_{0}/\theta^{2}$ to be small. Using the notation introduced in Definitions~\ref{def: exp nor Weingarten map}
  and \ref{def:Phizodef}, it is then natural to expect the following approximate equality to hold in the asymptotic regime:
  \begin{equation}\label{eq:approxlimhamcon}
    1\approx \tr\mK^{2}+\Phi_{1}^{2};
  \end{equation}
  see also Definition~\ref{def:ndsfidonbbs} below, in which we introduce the notion of initial data on the singularity. In order for this
  argument to be consistent,
  \[
    V(\phi)/\theta^{2}=V(-\Phi_{1}\ln\theta+\Phi_{0})/\theta^{2}
  \]
  has to tend to zero in the direction of the singularity. Since we should here think of $\Phi_{0}$ and $\Phi_{1}$ as being essentially time
  independent, and since $\Phi_{1}^{2}$ is essentially bounded from above by $1-1/n$ due to
  (\ref{eq:approxlimhamcon}) (note that $\tr\mK^{2}\geq 1/n$ since $\tr\mK=1$), we want a bound of the form
  $|V(x)|\leq Ce^{\alpha|x|}$, where $\alpha<2/(1-1/n)^{1/2}$. For convenience, we here assume $\alpha\leq 2$, since we do not impose an upper
  bound on $n$. Moreover, it is convenient to have a margin independent of $n$, which is why we introduce $\s_{V}$. Finally, we also need to
  control derivatives of $V$. This is what leads to Definition~\ref{def:Vadm}. It should, however, be noted that if $n=3$, the natural assumption
  is that $|V(x)|\leq Ce^{\alpha|x|}$, where $\alpha<\sqrt{6}$. This is the condition we impose in many of the results in the spatially locally
  homogeneous setting; see \cite{RinModel}.
\end{remark}
\begin{remark}
  Due to the above remarks, the significance of $\s_{V}$ and $\s_p$ (and therefore $\s$) is clear. Given these constants, $k_{0}$ is the
  number of derivatives we control, asymptotically,  in $C^{0}$, and $k_{1}$ is the number of derivatives we need to control initially, in $L^{2}$.
  Note, in particular, that $k_{1}$ is inversely proportional to $\sigma$. Finally, $\zeta_{0}$
  is an arbitrary constant, quantifying the bound on the expansion-normalized initial data, see (\ref{eq: Sobolev bound assumption}). Moreover,
  $\zeta_{0}^{-1}$ is a lower bound on the distance between the initial eigenvalues. In particular, it is clear that $\zeta_{0}$ should be thought of
  as being large. Due to the above, it is natural to ask: where is the smallness condition? Here the smallness condition comes in the form of
  the proximity to the singularity, measured by the size of the mean curvature. More specifically: $\zeta_{1}$, which should be thought of as
  being large, is a lower bound on the mean curvature of the initial data. 
\end{remark}

\subsection{Application to solutions with induced data on the singularity}\label{ssection:idotsing}
One implication of Theorem~\ref{thm: big bang formation} is the existence of data $(\mrp_1, \hdots, \mrp_n, \mrPhi_0, \mrPhi_0)$ to which the
corresponding expansion-normalized quantities converge as $t \downarrow 0$. In fact, these data are naturally thought of as a subset of the data that
one would like to prescribe when solving the Einstein-scalar field equations with initial data at the singularity. In the case of the Einstein-scalar
field equations, the relevant notion of initial data on the singularity is introduced in \cite[Definition~1.10, p.~13]{RinQC}. This definition extends
verbatim to the Einstein-non-linear scalar field setting. However, we are here only interested in initial data on the singularity such that 
solutions with the corresponding asymptotics can be expected to exhibit stable big bang formation. This leads us to the following definition:

\begin{definition}\label{def:ndsfidonbbs}
  Let $3\leq n\in\nn{}$, $\s_V\in (0,1)$, $V\in C^{\infty}(\rn{})$ be a $\s_{V}$-admissible potential, $(\S,\mrH)$ be a smooth $n$-dimensional
  Riemannian manifold, $\mrK$ be a smooth $(1,1)$-tensor field on $\S$ and $\mrPhi_0$ and $\mrPhi_1$ be smooth functions on $\S$. 
  Then $(\S,\mrH,\mrK,\mrPhi_0,\mrPhi_1)$ are
  \textit{robust non-degenerate quiescent initial data on the singularity for the Einstein-non-linear scalar field equations with a potential $V$} if:
  \begin{enumerate}[(i)]
  \item $\tr\mrK=1$ and $\mrK$ is symmetric with respect to $\mrH$.
  \item $\tr\mrK^2 + \mrPhi_1^2 = 1$ and $\mathrm{div}_{\mrH} \mrK = \mrPhi_1 \md\mrPhi_0$.
  \item The eigenvalues $\mrp_1, \hdots, \mrp_n$ of $\mrK$ are everywhere distinct and satisfy
  \begin{equation} \label{eq: quiescence sing initial data}
    \mrp_I + \mrp_J - \mrp_K
        < 1,
  \end{equation}
  for all $I, J, K = 1, \hdots, n$, such that $I \neq J$.
  \end{enumerate}
\end{definition}
\begin{remark}
  If we let $\mrp_1, \hdots, \mrp_n$ denote the eigenvalues of $\mrK$, given by Definition~\ref{def:ndsfidonbbs}, then Conditions (i) and (iii) in
  Definition~\ref{def:ndsfidonbbs} are the first part of \eqref{eq: Kasner conditions} and \eqref{eq: quiescence conclusion}. Moreover, the first part
  of Condition (ii) is the second part of \eqref{eq: Kasner conditions}.
\end{remark}
\begin{remark}\label{remark:dataonsingKasnerhigherdim}
  Consider Example~\ref{example: spatially hom flat}. Assume that $n\geq 3$, that the $p_{I}$ are distinct and that $p_{I}+p_{J}-p_{K}<1$ for all
  $I, J, K = 1, \hdots, n$, such that $I \neq J$. Then, using the notation of Example~\ref{example: spatially hom flat}, $(\tn{n},\mH,\mK,\Phi_{0},\Phi_{1})$
  are robust non-degenerate quiescent initial data on the singularity for the Einstein-non-linear scalar field equations with a vanishing potential. 
\end{remark}

Next, we clarify what is meant by a CMC development corresponding to robust initial data on the singularity. The definition is very similar to
\cite[Definition~1.17, p.~14]{RinQC}:
\begin{definition}\label{def:developmentsfLambdacrushing}
  Let $3\leq n\in\nn{}$, $\s_V\in (0,1)$, $V\in C^{\infty}(\rn{})$ be a $\s_{V}$-admissible potential and $(\S,\mrH,\mrK,\mrPhi_0,\mrPhi_1)$ be
  robust non-degenerate quiescent initial data on the singularity for the Einstein-non-linear scalar field equations with a potential $V$; see
  Definition~\ref{def:ndsfidonbbs}. A \textit{local crushing CMC development}, corresponding to the initial data, is then a smooth time oriented
  Lorentz manifold $(M,g)$ and a $\phi\in C^{\infty}(M)$, solving the Einstein-non-linear scalar field equations with potential $V$, such that the
  following holds. There is a $0 < t_{+}\in\rn{}$ and a diffeomorphism $\Psi$ from $(0,t_{+}) \times \S$ to an open subset of $(M,g)$ such that the
  hypersurfaces $\Psi\left( \S_t \right) \subset M$, where $\S_t := \{t\} \times \S$, are spacelike Cauchy hypersurfaces with constant mean curvature
  $\theta = \frac1t$ for $t \in (0, t_+)$. Let $\mK$, $\mH$, $\Phi_0$ and
  $\Phi_1$ be the expansion-normalized quantities induced on $\S_t$, see Definitions~\ref{def: exp nor Weingarten map}--\ref{def:Phizodef}.
  Then the following correspondence between the solution and the asymptotic data is required to hold. There is a $\de>0$ and, for every
  $l\in\nn{}$, a constant $C_{l} > 0$ such that, for some $0 < t_1 < t_{+}$ and $t\in (0,t_1]$,
  \begin{equation}\label{eq:mKchhconvratesfN}
    \begin{split}
      \| \mH(t,\cdot)-\mrH \|_{H^{l}(\S)} +\|\mK(t,\cdot)-\mrK\|_{H^{l}(\S)} + \textstyle{\sum}_{i=0}^{1}\|\Phi_i(t, \cdot)-\mrPhi_i \|_{H^{l}(\S)}
      & \leq C_{l}t^\de.
    \end{split}
  \end{equation}  
\end{definition}

Given robust initial data on the singularity in the sense of Definition~\ref{def:ndsfidonbbs}, the expectation is that there should be an associated
development in the sense of Definition~\ref{def:developmentsfLambdacrushing} (which is unique under appropriate conditions). This remains to be
demonstrated, but there are related results for Gaussian foliations (as opposed to CMC foliations) in the real analytic setting (see
\cite{aarendall,damouretal} and the reformulations of these results given in \cite[Subsection~1.5, pp.~15--18]{RinQC}); in the spatially
homogeneous setting (see \cite{RinSHID,RinModel}); and in the smooth $3$-dimensional setting (see \cite{fal,andres}). 

There are several reasons for introducing the notion of initial data on the singularity. The most optimistic hope is that, given such data, one
can prove that there is a corresponding development, and that solutions exhibiting quiescent asymptotics in the direction of the
singularity induce initial data on the singularity. If one is able to prove statements of this nature, it is clear that data on the singularity can
be used to parametrize quiescent solutions. In some spatially homogeneous settings, e.g., this program can be carried out to completion; see
\cite{RinSHID,RinModel}. In fact, in \cite{RinModel}, there are arguments demonstrating that the Einstein flow (i.e., the map which, in an appropriate
foliation (CMC, Gaussian etc.), maps the initial data on one leaf to the initial data on another leaf) defines a global diffeomorphism between
isometry classes of developments and isometry classes of data on the singularity. However, more generally, the question is open. There might also be
complications arising due to features such as spikes. Since this is not the main topic of the present article, we refer the interested reader to
\cite{RinQC} for further discussions.

A more modest question than that of trying to parametrize quiescent solutions by means of initial data on the singularity is the following: Given a
locally crushing CMC development, corresponding to robust initial data on the singularity in the sense of
Definition~\ref{def:ndsfidonbbs}, is it stable under perturbations? In other words, does perturbing regular initial data on $\Psi(\S_t)$, using the
notation of Definition~\ref{def:developmentsfLambdacrushing}, give rise to a maximal globally hyperbolic development with a crushing singularity such
that the curvature blows up in the direction of the singularity? More specifically, does the maximal globally hyperbolic development give rise to
initial data on the singularity and constitute an associated locally crushing CMC development? Here we are able
to answer the first question, but not the final one. Note, however, that in order to obtain a positive answer to the first question, we expect
\eqref{eq: quiescence sing initial data} to be necessary. 

\begin{thm}\label{thm:stabilityofdevelopments}
  Let $\s_V \in (0, 1)$, $(\S, h_\refer)$ be a closed Riemannian manifold of dimension $n \geq 3$ and let
  $V\in C^{\infty}(\rn{})$ be a $\s_V$-admissible potential. Let $(\S,\mrH,\mrK,\mrPhi_0,\mrPhi_1)$
  be robust non-degenerate quiescent initial data on the singularity for the Einstein-non-linear scalar field equations with potential $V$. Assume that
  there is an associated locally crushing CMC development, say $(M,g,\phi)$, and let $t_+>0$ and $\S_t$ be as in
  Definition~\ref{def:developmentsfLambdacrushing}. Assume $t_0\in (0,t_+)$ to be such that
  \begin{equation}\label{eq:goodlapsecoeff}
    |k(t, \cdot)|_{h(t, \cdot)}^{2}+|\phi_1(t, \cdot)|^{2}-2V\circ\phi_0(t, \cdot)/(n-1)>0
  \end{equation}
  for all $t\leq t_{0}$, where $(\S, h, k, \phi_0, \phi_1)$ are the induced initial data on the Cauchy hypersurfaces $\S_t$ by $(M, g, \phi)$. 
  Then there is a $\sigmap \in (0, 1)$, depending only on $\mrK$, such that if $\s$, $k_0$ and $k_1$ are chosen as in
  Theorem~\ref{thm: big bang formation}, then there is an $\ve> 0$ such that if  $\mfI := (\S,\bh,\bk,\bp_{0}, \bp_{1})$ is a solution to the
  constraint equations \eqref{seq:con} with constant mean curvature $1/t_0$ and
  \begin{equation}\label{eq:smallnessdataonsing}
    \begin{split}
      \snorm{\bh - h(t_0, \cdot)}_{H^{k_1 + 3}(\S)} + \snorm{\bk - k(t_0, \cdot)}_{H^{k_1 + 3}(\S)}
       + \textstyle{\sum}_{i=0}^{1}\snorm{\bp_i -  \phi_i(t_0, \cdot)}_{H^{k_1 + 3}(\S)} & < \ve,
    \end{split}
  \end{equation}
  then the maximal globally hyperbolic development of $\mfI$ has a crushing big bang singularity in the sense of Theorem~\ref{thm: big bang formation}.
\end{thm}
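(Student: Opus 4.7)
The plan is to reduce to Theorem~\ref{thm: big bang formation} by CMC Cauchy stability. I would identify a pilot time $t_*\in(0,t_0]$ at which the background is already deep in its quiescent regime, use Cauchy stability to evolve $\mfI$ backwards from $t_0$ to a CMC hypersurface of the perturbed development of mean curvature $1/t_*$, show the resulting induced data remains close to the background's data on $\Psi(\S_{t_*})$, and apply Theorem~\ref{thm: big bang formation} at that slice. The choice of $\sigmap$ is extracted from $\mrK$: since $\S$ is closed and the eigenvalues $\mrp_I$ are everywhere distinct with $\mrp_I+\mrp_J-\mrp_K<1$, compactness yields $\eta>0$ and $\sigmap\in(0,1)$, depending only on $\mrK$, such that $|\mrp_I-\mrp_J|>2\eta$ and $\mrp_I+\mrp_J-\mrp_K<1-2\sigmap$ pointwise on $\S$.

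With this $\sigmap$ and the given $\s_V$, fix $\s,k_0,k_1$ as in Theorem~\ref{thm: big bang formation}. By (\ref{eq:mKchhconvratesfN}) applied at a sufficiently large $l$, together with Sobolev embedding, the induced expansion-normalized quantities on $\Psi(\S_\tau)$ converge in $C^{k_1+2}$ to $(\mrH,\mrK,\mrPhi_0,\mrPhi_1)$ as $\tau\to 0$; in particular, for all $\tau$ smaller than some $\tau_0\leq t_0$ they are uniformly bounded in $H^{k_1+2}$, $\mH(\tau,\cdot)^{-1}$ is uniformly bounded in $C^0$, the eigenvalues are separated by at least $\tfrac32\eta$, and the slice is $\tfrac32\sigmap$-admissible. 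I would pick $\zeta_0$ exceeding all these bounds and $\eta^{-1}$; this determines $\zeta_1=\zeta_1(\zeta_0)$ via Theorem~\ref{thm: big bang formation}. Then shrink $t_*\in(0,\tau_0]$ so that, in addition, $1/t_*>2\zeta_1$.

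The core technical step is constructing a CMC foliation of the perturbed development down to mean curvature $1/t_*$. Standard Cauchy stability for the Einstein-non-linear scalar field equations in a wave gauge, applied on the compact region $\overline{\Psi([t_*,t_0]\times\S)}$, shows that for $\ve$ small enough the maximal globally hyperbolic development of $\mfI$ contains a region diffeomorphic to this one with solution fields $H^{k_1+3}$-close to $(g,\phi)$. On this region the CMC lapse equation
\[
-\Delta_{h(\tau)}N+\bigl(|k|_{h(\tau)}^{2}+|\phi_1|^{2}-\tfrac{2}{n-1}V\circ\phi_0\bigr)N=1
\]
is uniformly elliptic with strictly positive zeroth-order coefficient precisely by hypothesis (\ref{eq:goodlapsecoeff}); an implicit function theorem argument parametrising CMC hypersurfaces as graphs over $\Psi(\S_\tau)$ perturbs the background CMC time function to one on the perturbed development covering $[1/t_0,1/t_*]$, whose induced data on the slice of mean curvature $1/t_*$ tends to $(h(t_*,\cdot),k(t_*,\cdot),\phi_0(t_*,\cdot),\phi_1(t_*,\cdot))$ in $H^{k_1+2}$ as $\ve\to 0$.

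Choosing $\ve$ smaller still, this induced data satisfies $\sigmap$-admissibility, the Sobolev bound (\ref{eq: Sobolev bound assumption}) with constant $\zeta_0$, eigenvalue separation $>\zeta_0^{-1}$, and $\bth=1/t_*>\zeta_1$. Theorem~\ref{thm: big bang formation} applied to this data delivers the crushing big bang singularity together with the asymptotics (\ref{seq:pIPhizPhioneasintro}) and curvature blow-up (\ref{seq:Kretschmann Ricci asymptotics intro}); by uniqueness of MGHDs, the resulting past development embeds in the MGHD of $\mfI$, completing the proof. The main obstacle will be the CMC foliation extension: one must propagate $H^{k_1+3}$-closeness over a potentially long CMC time range while ensuring that the perturbed CMC hypersurfaces remain graphical and spacelike and that the induced data on them is controlled in the norms required by Theorem~\ref{thm: big bang formation}. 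The role of (\ref{eq:goodlapsecoeff}) is precisely to keep the CMC lapse operator uniformly invertible along the background foliation, which underpins the implicit function theorem argument.
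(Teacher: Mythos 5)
Your proposal follows the same overall strategy as the paper: fix $\sigmap$ from compactness of $\S$ and the strict inequalities for $\mrp_I$; use \eqref{eq:mKchhconvratesfN} plus Sobolev embedding to show the background's induced expansion-normalized data on $\S_{t_*}$ satisfies the hypotheses of Theorem~\ref{thm: big bang formation} for $t_*$ small; then transport the smallness assumption from $t_0$ to $t_*$ by Cauchy stability and apply the main theorem. The one genuine difference is the implementation of the Cauchy stability step: you propose wave-gauge Cauchy stability followed by an implicit-function-theorem construction of CMC graphs (keyed on the positivity of the zeroth-order coefficient of the CMC lapse operator supplied by \eqref{eq:goodlapsecoeff}), whereas the paper appeals directly to Lemma~\ref{lemma:CauchystabEinstein}, a pre-packaged Cauchy stability result already formulated in the CMC/FRS gauge, so the two steps are merged into one. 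Your route is entirely workable but requires the IFT argument to be carried out, including propagating the graphical and spacelike character of the perturbed CMC slices over the whole interval $[t_*,t_0]$, which the paper's bespoke lemma handles internally. One point worth making explicit in your writeup, which the paper emphasizes and you elide: the step from the continuity statement (valid in $H^{k_1+2}$ for small $t_0$) to a general $t_0$ loses one derivative, because the Cauchy stability argument is based on a second-order evolution system for the second fundamental form and passing back to first-order form costs a derivative of the spatial metric; this is precisely why \eqref{eq:smallnessdataonsing} is posed in $H^{k_1+3}$ rather than $H^{k_1+2}$.
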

\begin{remark}\label{remark:stabplusCauchystab}  
  Note that (\ref{eq:goodlapsecoeff}) is satisfied for all $t\in (0,t_{+})$ in case $V\equiv 0$, since the first term on the left hand side of
  (\ref{eq:goodlapsecoeff}) is bounded from below by $1/(nt^{2})$. Moreover, in general, the left hand side tends to infinity as $t\downarrow 0$;
  see Subsection~\ref{ssection:ProofStrategy} below.
\end{remark}
\begin{remark}
  Combining Remark~\ref{remark:dataonsingKasnerhigherdim}, Theorem~\ref{thm:stabilityofdevelopments} and Remark~\ref{remark:stabplusCauchystab}
  yields the conclusion that if the $p_{I}$ are distinct and if $p_{I}+p_{J}-p_{K}<1$ for all $I, J, K$, such that $I \neq J$, then
  the solutions in Example~\ref{example: spatially hom flat} exhibit stable big bang formation. Moreover, any starting time $t_{0}\in (0,\infty)$
  can be used in the statement of stability. In particular, in the non-degenerate setting, the stability statement in \cite[Theorem~6.1, pp.~905--908]{GIJ}
  follows as a corollary. For the degenerate case, see Example~\ref{example: result of fellows} below. 
\end{remark}
\begin{proof}
  Let $(\S, \mH, \mK, \Phi_0, \Phi_1)$ be the expansion-normalized initial data induced on the Cauchy hypersurface $\S_t$ in the associated locally
  crushing CMC development $(M, g, \phi)$. By compactness of $\S$, there is a $\mathring \sigma_p \in (0, 1)$, such that
  $\mrp_I + \mrp_J - \mrp_K < 1 - \mathring \sigma_p$.
  By choosing $l = \lceil n/2 + 1 \rceil$, the bound \eqref{eq:mKchhconvratesfN} and Sobolev embedding imply that
  \[
    \abs{\mK(t, \cdot) - \mrK}_{h_{\refer}}\leq C t^\de
  \]
  for all $t \in (0, t_+)$.
  Therefore, for a small enough $t_1 \in (0, t_+)$, there is a $\s_p \in (0, \mathring \s_p]$, such that
  \[
    \q_I(t, \cdot) + \q_J(t, \cdot) - \q_K(t, \cdot)
    < 1 - \sigma_p
  \]
  for all $t \in (0, t_1]$, where $\q_1, \hdots, \q_n$ are the eigenvalues associated with $\mK$.

  Let $\s$ be as in \eqref{eq: sigma condition} and choose regularity degrees $k_0$ and $k_1$ as in (\ref{seq: k_0 k_1 inequality}). By the
  smoothness of the initial data on the singularity and the compactness of $\S$, 
  \begin{equation}\label{eq:mrHetcbd}
    \snorm{\mrH^{-1}}_{C^{0}(\S)}+\snorm{\mrH}_{H^{k_1+2}(\S)} + \snorm{\mrK}_{H^{k_1+2}(\S)} +
    \textstyle{\sum}_{i=0}^{1}\snorm{\mrPhi_i}_{H^{k_1+2}(\S)} < \mathring \zeta_0
  \end{equation}
  for some $\mathring \zeta_0 > 0$. Moreover, $|\mrp_I-\mrp_J|>\mathring \zeta_{0}^{-1}$ for $I\neq J$. Combining (\ref{eq:mrHetcbd}) and
  \eqref{eq:mKchhconvratesfN}, with $l = k_1 + 2$, there is a $t_2 \in (0, t_+)$ and a constant $C$ such that 
  \begin{align*}
    &\snorm{\mH^{-1}}_{C^{0}(\S)}+\snorm{\mH}_{H^{k_1+2}(\S_t)} + \snorm{\mK}_{H^{k_1+2}(\S_t)} + \textstyle{\sum}_{i=0}^{1}\snorm{\Phi_i}_{H^{k_1+2}(\S_t)}    
    \leq \mathring \zeta_0 + Ct^\de
  \end{align*}
  for all $t \in (0, t_2]$. Similarly, by \eqref{eq:mKchhconvratesfN}, with $l$ larger than $n/2$, and Sobolev embedding, there is a
  $t_3\in (0, t_+)$ and $\zeta_0>0$ such that, for $t \in (0, t_3]$, $|p_I(t\cdot)-p_J(\cdot,t)|>\zeta_{0}^{-1}$ for $I\neq J$ and
  \[
  \snorm{\mH^{-1}(\cdot,t)}_{C^{0}(\S)}+\snorm{\mH(t, \cdot)}_{H^{k_1+2}(\S)} + \snorm{\mK(t, \cdot)}_{H^{k_1+2}(\S)}
  + \textstyle{\sum}_{i=0}^{1}\snorm{\Phi_i(t, \cdot)}_{H^{k_1+2}(\S)} < \zeta_0
  \]
  for all $t \in (0,t_4]$, where $t_4:=\min\{t_1, t_2,t_3\}$.
    
  Given $\zeta_0$, $\s_p$, $\s_V$, $k_0$, $k_1$, $(\S, h_\refer)$ and $(E_i)_{i=1}^n$ as above, let $\zeta_1 > 0$ be the constant provided by
  Theorem~\ref{thm: big bang formation}.
  By the definition of a locally crushing CMC development, the mean curvature $\theta(t)$ of the hypersurface $\S_t$ satisfies $\theta(t)> \zeta_1$ for all
  $t \in (0, \zeta_1^{-1})$. In conclusion, defining $\tau := \min\{t_4,\zeta_1^{-1}\}$, the induced CMC initial data
  $(\S, h(t, \cdot), k(t, \cdot), \phi_0(t, \cdot), \phi_1(t, \cdot))$ satisfies the assumptions in Theorem~\ref{thm: big bang formation}, with the same
  constants $\zeta_0, \s, k_0, k_1$, for any $t \in (0, \tau)$.
  Let now $t_0 \in (0, \tau)$. Since the conditions in Theorem~\ref{thm: big bang formation} are open and the map
  \begin{align*}
    \left( H^{k_1+2}\left(\S_{t_0}\right) \right)^4
    &\to \left( H^{k_1+2}\left(\S_{t_0}\right) \right)^4;\ \ \
    (\S, \bh, \bk, \bp_0, \bp_1)
    \mapsto (\S, \bmH, \bmK, \bP_0, \bP_1)
  \end{align*}
  is continuous at $(\S, h(t_0, \cdot), k(t_0, \cdot), \phi_0(t_0, \cdot), \phi_1(t_0, \cdot))$, the statement follows for $t_0\in (0,\tau)$ and
  $k_1$ in (\ref{eq:smallnessdataonsing}) replaced by $k_1-1$. However, combining this result with Cauchy stability, see
  Lemma~\ref{lemma:CauchystabEinstein}, yields the desired statement. Note that the loss of one derivative is due to the fact that the Cauchy
  stability argument is based on a second order system for the second fundamental form. Translating the conclusions to first
  order form entails a loss of one derivative of the spatial metric, and therefore the loss of one derivative in the smallness condition. 
\end{proof}

\subsection{The degenerate case}\label{ssection: the degenerate case}
It is also possible to obtain stability in degenerate cases; i.e~when the eigenvalues of the expansion-normalized Weingarten map are not all distinct.
More specifically, we prove past global non-linear stability of the following class of background solutions. 

\begin{definition}\label{def:quiescent model solution}
  Let $\S$ be a closed $n$-dimensional manifold and assume that it has a global frame $(E_{i})_{i=1}^{n}$ with dual co-frame $(\eta^{i})_{i=1}^{n}$.
  Define $h_\refer$ by demanding that $(E_{i})_{i=1}^{n}$ be an orthonormal frame. 
  Fix admissibility thresholds $\s_p$, $\s_V\in (0,1)$. Assume that there is an open interval $\mI$; a $\s_V$-admissible potential $V\in C^{\infty}(\rn{})$;
  $\phi\in C^{\infty}(\mI,\rn{})$; and $a_{i}\in C^{\infty}(\mI,(0,\infty))$, $i=1,\dots,n$, such that if 
  \begin{equation}\label{eq:g form quie mod sol}
    g:=-d\tau\otimes d\tau+\textstyle{\sum}_{i}a_{i}^{2}\eta^{i}\otimes\eta^{i}
  \end{equation}
  and $M:=\mI\times\S$, then $(M,g,\phi)$ is a solution to the Einstein-non-linear scalar field equations with potential $V$. Assume that
  $\mI=(\tau_{-},\tau_{+})$ and that $\theta(\tau)\rightarrow\infty$ as $\tau\downarrow \tau_{-}$, where $\theta(\tau)$ is the mean curvature
  of $\S_\tau$. Let $\tau_a\in\mI$ be such that $\theta(\tau)\geq 1$ for $\tau\leq\tau_a$, and let $\mK(\tau)$, $\mH(\tau)$, $\Phi_0(\tau)$,
  $\Phi_1(\tau)$ denote the expansion-normalized quantities induced on $\S_\tau$ by the solution for $\tau\leq\tau_a$; see
  Definitions~\ref{def: exp nor Weingarten map}--\ref{def:Phizodef}. Assume, in addition that there are real constants $C>0$ and $\delta>0$
  such that 
  \begin{equation}\label{eq: geometric degenerate convergence}
    \|\mK(\tau)-\mrK\|_{C^{0}(\S)}+\|\mH(\tau)-\mrH\|_{C^{0}(\S)}+\textstyle{\sum}_{i=0}^{1}|\Phi_i(\tau)-\mrPhi_i|\leq C[\theta(\tau)]^{-\delta}  
  \end{equation}
  for all $\tau\leq\tau_{a}$, where $\mrK$ (with eigenvalues $\mrp_i$); $\mrH$; and $\mrPhi_i$, $i=0,1$, are a $(1,1)$-tensor field, a Riemannian
  metric and two constants respectively. Assume, finally, that there is a $\s_p>0$ such that $\mrp_{i}+\mrp_{j}-\mrp_{k}<1-2\s_p$ for all $i\neq j$.
  Then $(M,g,\phi)$ is said to be a \textit{quiescent model solution}. 
\end{definition}
\begin{remark}
  The hypersurfaces $\S_\tau$ in a quiescent model solution have constant mean curvature. However, the time coordinate $\tau$ is proper time, not
  $1/\theta$. Changing time coordinate to $1/\theta$ would introduce a lapse function which would typically be different from $1$. 
\end{remark}    
\begin{remark}\label{remark: effective formulation degenerate case}
  Assumption (\ref{eq: geometric degenerate convergence}) can be reformulated to 
  \begin{equation}\label{eq:mKietcconvergence}
    |p_i(\tau)-\mrp_{i}|+|\hat{a}_{i}(\tau)-\mathring{\alpha}_{i}|
    +|\Phi_1(\tau)-\mrPhi_1|
    +|\phi(\tau)+\mrPhi_{1}\ln \theta(\tau)-\mrPhi_0|\leq C[\theta(\tau)]^{-\delta}  
  \end{equation}
  for all $\tau\leq\tau_{a}$ and $0<\mathring{\alpha}_i\in\rn{}$, $i=1,\dots,n$, where $p_i:=\tfrac{1}{\theta a_{i}}\d_{\tau}a_i$,
  $\hat{a}_i:=\theta^{\mrp_{i}}a_{i}$ and $\Phi_1=\tfrac{1}{\theta}\phi_\tau$. 
\end{remark}
\begin{proof}[Proof of Remark~\ref{remark: effective formulation degenerate case}]
  Assume (\ref{eq: geometric degenerate convergence}) to hold and denote the components of $\mK$, $\mH$ etc. with respect to
  $(E_{i})_{i=1}^{n}$ and $(\eta^{i})_{i=1}^{n}$ by $\mK^{i}_{\phantom{i}j}$, $\mH_{ij}$ etc. Then $\mK^{i}_{\phantom{i}j}=0$ for $i\neq j$ and
  $\mH_{ij}=0$ for $i\neq j$. Similar statements must thus hold for their limits. Moreover, since $\mK^{i}_{\phantom{i}j}$ and $\mH_{ij}$
  are independent of the spatial variables, the same must be true of their limits. To summarize,
  \[
  \mrK=\textstyle{\sum}_i\mrp_{i}E_i\otimes\eta^i,\ \ \
  \mrH=\textstyle{\sum}_i\mathring{\alpha}_{i}^{2}\eta^i\otimes\eta^i,\ \ \
  \mK=\textstyle{\sum}_ip_iE_i\otimes\eta^i,\ \ \
  \mH=\textstyle{\sum}_i\theta^{2p_i}a_{i}^{2}\eta^i\otimes\eta^i,
  \]
  for some constants $\mathring{\alpha}_i>0$ and $\mrp_i$. The first term on the left hand side of (\ref{eq:mKietcconvergence}) is thus bounded by the
  right hand side. Next, consider
  \[
  |\hat{a}_{i}(\tau)-\mathring{\alpha}_{i}|=|(\theta^{\mrp_{i}}a_{i})(\tau)-\mathring{\alpha}_{i}|
  \leq \mathring{\alpha}_{i}^{-1}[|(\theta^{2\mrp_{i}}a_{i}^{2})(\tau)-(\theta^{2p_{i}}a_{i}^{2})(\tau)|+|(\theta^{2p_{i}}a_{i}^{2})(\tau)-\mathring{\alpha}_{i}^2|].
  \]
  By assumption, the second term in the parenthesis on the far right hand side decays as desired. Since $p_i-\mrp_i$ has the desired decay, the
  same is true of the first term in the parenthesis (except for a logarithmic factor, which leads to a deterioration of the constant $\delta$).
  Thus the second term on the left hand side of (\ref{eq:mKietcconvergence}) satisfies the desired bound (up to a deterioration of $\delta$).
  The third term on the left hand side of (\ref{eq:mKietcconvergence}) satisfies the desired bound by assumption. To estimate the last term, note
  that
  \[
  |\phi(\tau)+\mrPhi_{1}\ln \theta(\tau)-\mrPhi_0|=|\Phi_0(\tau)-\Phi_1(\tau)\ln\theta(\tau)+\mrPhi_{1}\ln \theta(\tau)-\mrPhi_0|.
  \]
  Due to the assumptions, we obtain, up to a logarithm, the desired conclusion. The proof of the converse statement is similar and left to the reader.
\end{proof}

For quiescent model solutions, the following past global non-linear stability result holds.

\begin{thm}\label{thm:degenerate case}
  Fix a quiescent model solution as in Definition~\ref{def:quiescent model solution}. With terminology as in
  Definition~\ref{def:quiescent model solution}, let $\s$, $k_0$ and $k_1$ as in Theorem~\ref{thm: big bang formation} be given. Let
  $\tau_0\in\mI$ be such that if $h(\tau)$ and $k(\tau)$ are the first and second fundamental forms induced on $\S_\tau$ by the quiescent
  model solution, then 
  \begin{equation}\label{eq:condition for Cauchy stability}
    |k(\tau)|_{h(\tau)}^{2}+|\phi_{\tau}(\tau)|^{2}-2V\circ\phi(\tau)/(n-1)>0
  \end{equation}
  for all $\tau\leq\tau_0$. Then there is an  $\varepsilon>0$ such that if $\mfI := (\S,\bh,\bk,\bp_{0}, \bp_{1})$ is a solution to the constraint
  equations \eqref{seq:con} with constant mean curvature $\tr_{h(\tau_0)}k(\tau_0)$ and (\ref{eq:smallnessdataonsing}) holds with $t_0$ replaced by
  $\tau_0$, $\phi_0$ replaced by $\phi$ and $\phi_1$ replaced by $\phi_\tau$, then the maximal globally hyperbolic
  development of $\mfI$ has a crushing big bang singularity in the sense of Theorem~\ref{thm: big bang formation}, with the following modifications:
  the $\mrp_I$ are only $C^{0}$ and in the estimates (\ref{seq:pIasintro}) and (\ref{seq:Kretschmann Ricci asymptotics intro}), the $C^{k_0+1}$-norm
  has to be replaced by the $C^0$-norm. 
\end{thm}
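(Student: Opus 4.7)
The plan is to follow the same strategy as in the proof of Theorem~\ref{thm:stabilityofdevelopments}: use Cauchy stability to propagate the perturbation $\mfI$ backwards in time from $\S_{\tau_0}$ to a hypersurface close to the singularity, where the expansion-normalized geometry is close to $(\mrK,\mrH,\mrPhi_0,\mrPhi_1)$, and then invoke a variant of Theorem~\ref{thm: big bang formation}. The obstruction is that Theorem~\ref{thm: big bang formation} requires the eigenvalues of $\bmK$ to be strictly separated by $|\bq_I-\bq_J|>\zeta_0^{-1}$, whereas $\mrK$ of a quiescent model solution may have coincident eigenvalues. So the essential work is to reopen the proof of the main theorem and verify that it goes through using the fixed frame $(E_i)_{i=1}^n$ of the model solution instead of a Fermi--Walker frame adapted to the eigenvectors of $\bmK$.

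First, I would apply Cauchy stability, as encapsulated in Lemma~\ref{lemma:CauchystabEinstein}, to evolve $\mfI$ backwards from $\tau_0$ down to some time $\tau_*<\tau_0$ chosen arbitrarily close to $\tau_-$; this is permitted because \eqref{eq:condition for Cauchy stability} keeps the lapse uniformly controlled on $[\tau_*,\tau_0]$ for the background, and for $\varepsilon$ small enough (depending on $\tau_*$) the perturbed solution remains $H^{k_1+3}$-close to the background on this closed interval. In particular, the CMC data induced on $\S_{\tau_*}$ by the perturbed solution is close to that induced by the background, which by \eqref{eq: geometric degenerate convergence} is close to $(\mrK,\mrH,\mrPhi_0,\mrPhi_1)$ when expressed in the frame $(E_i)$.

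Second, I would invoke a modified version of Theorem~\ref{thm: big bang formation} in which the non-degeneracy of the $\bq_I$ is replaced by closeness of $\bmK$ (in Sobolev norm and in the frame $(E_i)$) to a fixed diagonal reference tensor $\mrK$ whose eigenvalues satisfy the admissibility inequality \eqref{cond: alternating q} with some $\sigma_p$ that depends only on $\mrK$. Inspecting where non-degeneracy enters the proof of Theorem~\ref{thm: big bang formation}, I expect it to be used solely in (a) selecting a smooth orthonormal eigenframe of $\bmK$ for setting up the FRS variables, and (b) producing $C^{k_0+1}$ asymptotic eigenvalues. Using $(E_i)$ throughout the FRS argument bypasses (a): the FRS variables are set up in this fixed frame, and the off-diagonal components of $\mK$ are controlled directly via the bootstrap as small perturbations of the background's vanishing off-diagonal entries. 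The admissibility condition enters the energy estimates only through the linear combinations $\mrp_I+\mrp_J-\mrp_K$ appearing in the FRS coefficients, and we only need the inequality for the limit values $\mrp_I$, which is part of Definition~\ref{def:quiescent model solution}.

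Third, the weakening $C^{k_0+1}\to C^0$ in \eqref{seq:pIasintro} and \eqref{seq:Kretschmann Ricci asymptotics intro} then reflects that the eigenvalues of the limiting symmetric endomorphism $\mrK$ are merely continuous functions on $\S$ when distinctness fails, whereas the tensorial limits $\mrK,\mrH,\mrPhi_0,\mrPhi_1$ themselves are controlled in higher Sobolev norm by the same argument as in Theorem~\ref{thm: big bang formation}. The CMC foliation, past inextendibility, geodesic incompleteness, and curvature blow-up assertions are then inherited from Theorem~\ref{thm: big bang formation} verbatim. The main obstacle, and where I would spend most of the effort, is the second step: tracking every use of $|\bq_I-\bq_J|>\zeta_0^{-1}$ in the proof of the main theorem and checking that it can be replaced by the weaker input that $\bmK$ is Sobolev-close to $\mrK$ in the fixed frame $(E_i)$.
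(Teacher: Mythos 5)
Your proposal captures the paper's strategy (Cauchy stability together with a scaffold built from the background quiescent model solution) and is correct in outline, but you overestimate the required work: Theorem~\ref{thm:GlobalExistenceFermiWalker} is already stated with the scaffold data $(\be_I^i,\bq_I,\bp_0,\bp_1)$ and the actual initial data $(\hat{e}_I^i,\hat{k}_{IJ},\hat{\phi}_0,\hat{\phi}_1)$ as \emph{separate} inputs, requires only the FRS expansion-normalized bounds and not the non-degenerate version (compare Definition~\ref{def:DecomposedData}), and hence applies to the degenerate case without any reopening. The piece your proposal omits is Lemma~\ref{lemma: from qms to frs data}, which shows that the quiescent model solution itself furnishes the required diagonal FRS data; the only genuine modification is then to Theorem~\ref{thm:Asymptotics}, which is exactly where your $C^{k_0+1}\to C^0$ weakening enters, and your phrase ``using the fixed frame $(E_i)_{i=1}^n$ instead of a Fermi--Walker frame'' conflates the scaffold frame with the solution's (always Fermi--Walker propagated) frame---but these are matters of bookkeeping and wording rather than gaps.
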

\begin{remark}
  If $x\in \S$ is such that the $\mrp_I(x)$ are distinct, then conclusions similar to those mentioned in Remark~\ref{remark:FRaR non-deg} hold in
  a neighbourhood of $x$; see \cite[Remark~15, p.~6]{FGaR}. 
\end{remark}
\begin{proof}
  The proof is to be found in Section~\ref{sec: finishing the proof}.
\end{proof}

\begin{example}\label{example: result of fellows}
  Assuming that there is a $\s_p>0$ such that $p_I+p_J-p_K<1-2\s_p$ for $I\neq J$, the solutions described in
  Example~\ref{example: spatially hom flat} are quiescent model solutions. In particular, these solutions are past globally non-linearly stable
  starting at any initial hypersurface, since the condition (\ref{eq:condition for Cauchy stability}) is always satisfied in this case. In
  particular the conclusions in \cite[Theorem~1.6, p.~838]{GIJ} follow (with the exception of the statements concerning polarized
  $\mathrm{U}(1)$-symmetric perturbations). 
\end{example}


\subsection{Perturbing spatially locally homogeneous solutions}\label{ssection:stablebbfshs}
Next, we turn to the question of stability of spatially locally homogeneous solutions. Since we specify solutions via initial data,
it is convenient to recall \cite[Definition~1.1, p.~7]{RinModel} and \cite[Remark~1.2, p.~7]{RinModel} in detail.
\begin{definition}[Definition~1.1, \cite{RinModel}]\label{def:Bianchiid}
  \textit{Bianchi class A initial data for the Einstein non-linear scalar field equations}, with potential $V\in C^{\infty}(\rn{})$, consist of the
  following: a connected $3$-dimensional unimodular Lie group $G$; a left invariant metric $\bh$ on $G$; a left invariant symmetric covariant
  $2$-tensor field $\bk$ on $G$; and two constants $\bp_{0}$ and $\bp_{1}$ satisfying
  \begin{subequations}\label{seq:constraintsBA}
    \begin{align}
      \roScal_{\bh}-|\bk|_{\bh}^{2}+(\tr_{\bh}\bk)^{2} = & \bp_{1}^{2}+2V(\bp_{0}),\\
      \md\tr_{\bh}\bk-\rodiv_{\bh}\bk = & 0.
    \end{align}
  \end{subequations}
  The data are said to be \textit{trivial} if $\bh$ is flat, $3\bk=(\tr_{\bh}\bk)\bh$, $\bp_1=0$ and $V'(\bp_0)=0$. 
\end{definition}
\begin{remark}[Remark~1.2, \cite{RinModel}]\label{remark:unimodular}
  In order to define the notion of unimodularity,
  let $G$ be a Lie group and $\mfg$ the associated Lie algebra.
  Given $X\in \mfg$, define
  $\mathrm{ad}_{X}:\mfg\rightarrow\mfg$ by $\mathrm{ad}_{X}(Y)=[X,Y]$.
  Let $\eta_{G}\in \mfg^{*}$ be defined by $\eta_{G}(X)=\tr\ \mathrm{ad}_{X}$.
  Then $G$ is \textit{unimodular} if $\eta_{G}=0$ and \textit{non-unimodular} if $\eta_{G}\neq 0$. An alternate characterisation is that
  $G$ is unimodular if and only if $\rodiv_{h}X=0$ for every left invariant metric $h$ on $G$ and every left invariant vector field $X$ on $G$. 
\end{remark}
\begin{remark}
  Bianchi class A initial data can be divided into Bianchi types I, II, VI${}_{0}$, VII${}_{0}$, VIII and IX, corresponding to a
  classification of the Lie algebra of $G$; see \cite[Definition~1.5, p.~7]{RinModel} and \cite[Table~1.1, p.~8]{RinModel}. Next, initial data
  as in Definition~\ref{def:Bianchiid} can, beyond a Bianchi type, say $\mfT$,
  have a symmetry type, here denoted $\mfs$. Which symmetry types are allowed depends on the Bianchi type. However, initial data can be
  \textit{isotropic}, written iso, \cite[Definition~1.6, p.~8]{RinModel}; \textit{locally rotationally symmetric}, written LRS,
  \cite[Definition~1.8, p.~8]{RinModel}; \textit{permutation symmetric}, written per, \cite[Definition~1.11, p.~8]{RinModel}; and \textit{generic},
  written gen, meaning they are neither isotropic, locally rotationally symmetric or permutation symmetric. 
\end{remark}
\begin{remark}
  Simply connected initial data of Bianchi type VII${}_0$ which are either isotropic or LRS are isometric to initial data of Bianchi type I; see
  \cite[Lemma~A.6, p.~148]{RinModel}. For this reason, we exclude isotropic and LRS Bianchi type VII${}_0$ initial data in what follows.
\end{remark}
Due to the above remarks, it is convenient to introduce the following notation. 
\begin{definition}
  The set of non-trivial Bianchi class A initial data for the Einstein non-linear scalar field equations with potential $V$, which are neither
  isotropic nor LRS Bianchi type VII${}_0$, is denoted $\mB[V]$. The elements of $\mB[V]$ which are of Bianchi type $\mfT$ and symmetry type $\mfs$
  are denoted $\mB_{\mfT}^\mfs[V]$. The sets ${}^{\rosc}\mB[V]$ (and ${}^{\rosc}\mB_{\mfT}^\mfs[V]$) consist of the initial data in
  $\mB[V]$ ($\mB_{\mfT}^\mfs[V]$) such that the corresponding Lie group is simply connected. 
\end{definition}
\begin{remark}
  Given $V\in C^{\infty}(\rn{})$ and $\mfI\in\mB[V]$, there is a unique (up to translation of the time interval) associated so-called
  \textit{Bianchi class A non-linear scalar field development}, denoted $\mD[V](\mfI)$; see \cite[Definition~1.28, p.~11]{RinModel},
  \cite[Proposition~1.31, p.~11]{RinModel}
  and \cite[Definition~1.34, p.~11]{RinModel}. It is of interest to note that if $V\geq 0$, the only obstruction to global existence is that
  the mean curvature might blow up in finite time; see \cite[Remark~1.32, p.~11]{RinModel}. 
\end{remark}
\begin{remark}\label{remark:trivial id}
  Trivial initial data give rise to developments that do not have a big bang singularity; see \cite[Remark~1.13, p.~9]{RinModel}. They also cause
  problems when endowing the set of isometry classes of initial data with a smooth structure. For these reasons, we exclude trivial initial data. 
\end{remark}
Next, there is the following notion of initial data on the singularity in the Bianchi class A setting.
\begin{definition}[Definition~1.17, \cite{RinModel}]\label{def:ndvacidonbbssh}
  Let $G$ be a connected $3$-dimensional unimodular Lie group, $\mrH$ be a left invariant Riemannian metric on $G$, $\mrK$ be a left invariant
  $(1,1)$-tensor field on $G$ and $(\mrPhi_{0},\mrPhi_{1})\in\rn{2}$. Then $(G,\mrH,\mrK,\mrPhi_{0},\mrPhi_{1})$ are
  \textit{quiescent Bianchi class A initial data on the singularity for the Einstein non-linear scalar field equations}
  if
  \begin{enumerate}
  \item $\tr\mrK=1$ and $\mrK$ is symmetric with respect to $\mrH$.
  \item $\tr\mrK^{2}+\mrPhi_{1}^{2}=1$ and $\mathrm{div}_{\mrH}\mrK=0$.
  \item In case all the eigenvalues of $\mrK$ are $<1$ and there is one eigenvalue, say $\mrp_A$, satisfying $\mrp_A\leq 0$, then the vector subspace of
    $\mfg$, say $\mfh$, perpendicular to the eigenspace of $\mrp_A$ is a subalgebra of $\mfg$.
  \item If $1$ is an eigenvalue of $\mrK$, there is an orthonormal basis $\{e_{i}\}$ of $\mfg$ with respect to $\mrH$ such that
    $\mrK e_{1}=e_{1}$ and such that if $\Psi_t$ is defined by
    \[
      \Psi_{t}e_{1}=e_{1},\ \ \
      \Psi_{t}e_{2}=\cos(t)e_{2}+\sin(t)e_{3},\ \ \
      \Psi_{t}e_{3}=-\sin(t)e_{2}+\cos(t)e_{3},
    \]
    then $\Psi_t$ is a Lie algebra isomorphism for all $t$.
  \end{enumerate}  
\end{definition}
\begin{remark}
  While Definition~\ref{def:ndvacidonbbssh} is more restrictive than Definition~\ref{def:ndsfidonbbs} in that we only allow homogeneous initial data, it
  is more general in the sense that the eigenvalues of $\mrK$ need not be distinct; the condition (\ref{eq: quiescence sing initial data}) need not hold;
  the manifold need not be compact; and Definition~\ref{def:ndvacidonbbssh} even includes Cauchy horizons. 
\end{remark}
\begin{remark}
  It is possible to associate a Bianchi and symmetry type to quiescent Bianchi class A initial data on the singularity for the Einstein non-linear
  scalar field equations; see \cite[Definition~1.21, p.~10]{RinModel}.
\end{remark}
\begin{remark}
  Isotropic and LRS Bianchi type VII${}_0$ initial data on the singularity are of Bianchi type I; see \cite[Lemma~A.7, p.~149]{RinModel}. For
  this reason, we do not consider such data in what follows. 
\end{remark}
Below, we use the following terminology; cf. \cite[Definition~1.25, p.~10]{RinModel}.
\begin{definition}
  The set of quiescent Bianchi class A initial data on the singularity for the Einstein non-linear scalar field equations which are neither of
  isotropic nor of LRS Bianchi type VII${}_0$ is denoted $\mS$. The corresponding set of simply connected initial data on the singularity is denoted
  ${}^{\rosc}\mS$. Given a Bianchi class A type $\mfT$ and a symmetry type $\mfs$, the elements of $\mS$ (${}^{\rosc}\mS$) which are of Bianchi type
  $\mfT$ and symmetry type $\mfs$ are denoted $\mS_\mfT^\mfs$ (${}^{\rosc}\mS_\mfT^\mfs$). Finally, ${}^{\rosc}\mfS_\mfT^\mfs$ denotes the isometry classes
  of elements in ${}^{\rosc}\mS_\mfT^\mfs$; cf. \cite[Definition~1.20, p.~10]{RinModel}. 
\end{definition}

\subsubsection{Stability of developments corresponding to data on the singularity}
Given data as in Definition~\ref{def:ndvacidonbbssh}, there is, under suitable assumptions, a unique corresponding development inducing the given
data. Combining this result with Theorem~\ref{thm:degenerate case} yields the following conclusion. 

\begin{cor}\label{cor:shpastglnlstab}
  Fix an admissibility threshold $\s_{V}\in (0,1)$ and let $V$ be a $\s_{V}$-admissible potential. Let $\mfI=(G,\mrH,\mrK,\mrPhi_1,\mrPhi_0)\in {}^{\rosc}\mS$.
  Assume that the eigenvalues of $\mrK$ are all strictly positive. Then there is a unique associated Bianchi class A non-linear scalar field development,
  say $(M,g,\phi)$, inducing $\mfI$ on the singularity; see \cite[Definition~1.38, p.~12]{RinModel}, \cite[Theorem~1.45, p.~13]{RinModel} and \cite[Remark~1.46, p.~13]{RinModel}. In particular
  $M=(0,t_{+})\times G$, where the mean curvature of $\{t\}\times G$, say $\theta(t)$, satisfies $\theta(t)\rightarrow\infty$ as $t\downarrow 0$.
  Let $\Gamma$ be a co-compact
  subgroup of $G$ and let $\Sigma$ be the quotient of $G$ by $\Gamma$. Taking the quotient of $(M,g,\phi)$ by $\{\mathrm{Id}\}\times\Gamma$ induces a
  solution to the Einstein-non-linear scalar field equations, say $(M_{\roq},g_{\roq},\phi_{\roq})$, with $M_{\roq}=(0,t_{+})\times\Sigma$. Finally, there
  is a $\sigmap\in (0,1)$, depending only on $\mrK$, such that if $\s$, $k_{0}$ and $k_{1}$ are chosen as in the statement of
  Theorem~\ref{thm: big bang formation}, then, for $t_{0}$ small enough that (\ref{eq:goodlapsecoeff}) is satisfied for $t\leq t_{0}$, the
  following holds: There is an $\ve>0$ such that if $\mfI_{0}$ are the initial data induced on $\Sigma_{0}:=\{t_{0}\}\times\Sigma$ by
  $(M_{\roq},g_{\roq},\phi_{\roq})$, then CMC initial data for the Einstein-non-linear scalar field equations with mean curvature $1/t_{0}$ and closer to
  $\mfI_{0}$ than $\ve$ in the $H^{k_{1}+3}$-norm (in the sense that an analogue of (\ref{eq:smallnessdataonsing}) holds) give rise to maximal globally
  hyperbolic developments with the properties stated in Theorem~\ref{thm:degenerate case}.
\end{cor}
\begin{remark}\label{remark:exofcocompsubgroups}
  If $G$ is a unimodular $3$-dimensional Lie group, there are co-compact subgroups $\Gamma$ of $G$; see \cite{rav}.
\end{remark}
\begin{proof}
  Due to the proof of \cite[Theorem~1.45, p.~13]{RinModel}, it is clear that the development $(M_{\roq},g_{\roq},\phi_{\roq})$ is a quiescent model solution
  in the sense of Definition~\ref{def:quiescent model solution}; in order to obtain this conclusion, we used the fact that $\Gamma$ is a subgroup of
  $G$, so that the form (\ref{eq:g form quie mod sol}) which holds for $(M,g)$ (due to the proof of \cite[Theorem~1.45, p.~13]{RinModel})
  decends to the quotient $(M_\roq,g_\roq)$. The desired conclusion therefore follows from Theorem~\ref{thm:degenerate case}.
\end{proof}

\subsubsection{Asymptotics in the direction of the singularity}

Next, it is of interest to start with initial data in $\mB[V]$ and to analyze the asymptotics in the direction of the singularity. In order to illustrate
why this is not straightforward, note that the Bianchi type IX setting includes not only solutions that induce data on the singularity, but also de Sitter
space, which is expanding both to the future and to the past; vacuum solutions which exhibit chaotic dynamcs in the direction of the singularity etc. In
order to exclude solutions similar to de Sitter space or the Einstein static universe (in particular, in order to restrict our attention to solutions that
actually have a big bang singularity), we introduce a notion of \textit{pseudo positive} initial data; see \cite[Definition~1.53, p.~15]{RinModel} (since
the definition is somewhat technical, we refrain from repeating the details here). The set of pseudo positive elements of $\mB_{\mathrm{IX}}^\mfs[V]$ is denoted
$\mB_{\mathrm{IX},\mathrm{pp}}^\mfs[V]$. We also need the following terminology.
\begin{definition}[Definition~1.44, \cite{RinModel}]\label{def:mfP a inf}
  Let $\alpha_V\in [0,\infty)$ and $k\in\nn{}_0$. Then the set of $V\in C^{\infty}(\rn{})$ such that there is a constant $c_k<\infty$ with the property that
  \begin{equation}\label{eq:V k-derivatives exp estimate}
    \textstyle{\sum}_{l=0}^k|V^{(l)}(s)|\leq c_ke^{\sqrt{6}\alpha_V|s|}
  \end{equation}
  for all $s\in\rn{}$ is denoted $\mfP_{\alpha_V}^k$. Moreover, $\mfP_{\alpha_V}^{\infty}:=\cap_{l=0}^{\infty}\mfP_{\alpha_V}^{l}$.
\end{definition}
Combining \cite[Proposition~1.80, p.~20]{RinModel} and \cite[Proposition~1.82, p.~20]{RinModel} then yields the following conclusion.

\begin{prop}\label{prop:dichotomy}
  Let $\mfT$ be a Bianchi class A type, $\mfs\in\{\iso,\roLRS,\roper,\rogen\}$ and $V\in \mfP_{\alpha_V}^1$ be non-negative. Assume that $\alpha_{V}\in (0,1)$ in
  the case of anisotropic Bianchi type I and non-LRS Bianchi type II; and that $\alpha_{V}\in (0,1/3)$ otherwise. Let $\mfI\in \mB_{\mfT}^\mfs[V]$, assume
  that $\tr_{\bge}\bk\geq 0$; that $(\mfT,\mfs)\neq (\mrI,\iso)$; and that $\mfI\in \mB_{\mrIX,\mathrm{pp}}[V]$ in case $\mfT=\mrIX$. Let $(M,g,\phi)=\mD[V](\mfI)$.
  Then the associated existence interval is of the form $(0,t_+)$ and $\theta(t)\rightarrow\infty$ as $t\downarrow 0$. Moreover, there are two
  possibilities. Either there is a $t_{0}>0$ and a $C\in\rn{}$ such that $|\theta(t)\phi_{t}(t)|\leq C$ for all $t\leq t_{0}$; or $\phi_{t}(t)/\theta(t)$
  converges to a non-zero limit as $t\downarrow 0$. 
\end{prop}
\begin{remark}
  It would be desirable to prove the result for $\alpha_V\in (0,1)$. However, the method of proof imposes a, conjecturally artificial, restriction on
  $\alpha_V$. 
\end{remark}
This result naturally leads to the following terminology.
\begin{definition}[Definition~1.83, \cite{RinModel}]\label{def:matter and vacuum dominated}
  A Bianchi class A non-linear scalar field development as in Proposition~\ref{prop:dichotomy} is said to be \textit{matter dominated} if
  $\phi_{t}(t)/\theta(t)$ converges to a non-zero limit as $t\downarrow 0$, and is said to be \textit{vacuum dominated} otherwise. 
\end{definition}
With this terminology at our disposal, we can formulate the main result concerning the asymptotics in the direction of the singularity.
\begin{thm}[Theorem~1.85, \cite{RinModel}]\label{thm:dev inducing data on the sing}
  Let $\mfT$ be a Bianchi class A type, $\mfs\in\{\iso,\roLRS,\roper,\rogen\}$ and $V\in \mfP_{\alpha_V}^1$ be non-negative, where $\alpha_V\in (0,1)$ in case
  of Bianchi type I and non-LRS Bianchi type II; and $\alpha_V\in (0,1/3)$ otherwise. Assume $(\mfT,\mfs)\neq (\mrI,\iso)$ and let
  $\mfI\in\mB_{\mfT}^\mfs[V]$ with $\tr_{\bge}\bk\geq 0$. In case
  $\mfT=\mrIX$ assume, in addition, that $\mfI\in\mB_{\mrIX,\mathrm{pp}}^\mfs[V]$. Then the development $\mD[V](\mfI)$ induces initial data on the singularity
  unless it is vacuum dominated, $\mfs=\rogen$ and $\mfT\in\{\mrVIII,\mrIX\}$. Finally, if $\mfs=\rogen$, $\mfT\in\{\mrVIII,\mrIX\}$ and
  $\mD[V](\mfI)$ is vacuum dominated, then the expansion normalised Weingarten map $\mK$ does not converge. In fact, the $\alpha$-limit set of the
  eigenvalues of $\mK$ contains two distinct points on the Kasner circle and the line connecting them. Moreover, $\roScal_{\bge}/\theta^2$ does not converge
  to zero. 
\end{thm}
This result can be substantially improved to guarantee that the Einstein flow generates a diffeomorphism between isometry classes of developments and
isometry classes of data on the singularity. More specifically, we have the following informal reformulation of
\cite[Corollary~1.88, p.~21]{RinModel}. 
\begin{cor}[Corollary~1.88, \cite{RinModel}]\label{cor:nBIX diffeo}
  Let $\mfT$ be a Bianchi class A type, $\mfs\in\{\iso,\roLRS,\roper,\rogen\}$ and $V\in \mfP_{\alpha_V}^\infty$ be non-negative, where $\alpha_V\in (0,1)$
  in case
  of Bianchi type I and non-LRS Bianchi type II; and $\alpha_V\in (0,1/3)$ otherwise. Assume that $(\mfT,\mfs)\neq(\mrI,\iso)$ and $\mfT\neq\mrIX$.
  Then, if $(\mfT,\mfs)\neq (\mrVIII,\rogen)$, the Einstein flow generates a diffeomorphism between isometry classes of developments $\mD[V](\mfI)$,
  for $\mfI\in {}^{\rosc}\mB_\mfT^\mfs[V]$, and ${}^{\rosc}\mfS_\mfT^\mfs$. Similarly, the Einstein flow generates a
  diffeomorphism between isometry classes of matter dominated developments $\mD[V](\mfI)$, for $\mfI\in {}^{\rosc}\mB_{\mrVIII}^{\rogen}[V]$, and
  ${}^{\rosc}\mfS_{\mrVIII}^{\rogen}$. 
\end{cor}
The statement in the case of Bianchi type IX is slightly different.
\begin{cor}[Corollary~1.92, \cite{RinModel}]\label{cor:BIX diffeo}
  Let $\mfs\in\{\iso,\roLRS,\rogen\}$ and $V\in \mfP_{\alpha_V}^\infty$ be non-negative, where $\alpha_V\in (0,1/3)$. Then, if $\mfs\neq\rogen$, the Einstein
  flow generates a diffeomorphism between isometry classes of developments $\mD[V](\mfI)$, for $\mfI\in {}^{\rosc}\mB_{\mrIX,\mathrm{pp}}^\mfs[V]$, and
  ${}^{\rosc}\mfS_\mfT^\mfs$. Similarly, the Einstein flow generates a diffeomorphism between isometry classes of matter dominated developments
  $\mD[V](\mfI)$, for $\mfI\in {}^{\rosc}\mB_{\mrIX,\mathrm{pp}}^{\rogen}[V]$, and  ${}^{\rosc}\mfS_{\mrIX}^{\rogen}$. 
\end{cor}
Due to the above observations, we have the following conclusions.

\subsubsection{Bianchi types VIII and IX}\label{sssection:BVIII a IX}
If $\mfI\in {}^{\rosc}\mB_{\mrIX,\mathrm{pp}}^{\rogen}[V]$ or $\mfI\in {}^{\rosc}\mB_{\mrVIII}^{\rogen}[V]$, then, under the assumptions of
Theorem~\ref{thm:dev inducing data on the sing}, the corresponding development either induces data on the singularity or it is vacuum dominated and
exhibits oscillations in the direction of the singularity. In the former case, appropriate quotients of the development exhibit stable big bang formation
due to Corollary~\ref{cor:shpastglnlstab}. In the case of LRS Bianchi type VIII or IX, the vacuum dominated developments correspond to a positive
codimension submanifold of the set of isometry classes of developments; see Corollaries~\ref{cor:nBIX diffeo} and \ref{cor:BIX diffeo}. Appropriate
quotients of the matter dominated ones induce data on the singularity such that Corollary~\ref{cor:shpastglnlstab} applies. In the isotropic setting,
$\mfI\in {}^{\rosc}\mB_{\mrIX,\mathrm{pp}}^{\iso}[V]$ always give rise to developments such that stability in the direction of the singularity follows from
Corollary~\ref{cor:shpastglnlstab}. In the Einstein-scalar field setting (i.e., when the potential vanishes), non-vacuum solutions are always matter
dominated. In particular, Corollary~\ref{cor:shpastglnlstab} then always applies. 

\subsubsection{Bianchi types VII${}_0$, VI${}_0$, II and anisotropic Bianchi type I} For these Bianchi types, combined with corresponding symmetry
types, we have a diffeomorphism between isometry classes of developments and isometry classes of data on the singularity; see
Corollary~\ref{cor:nBIX diffeo}. However, the data on the singularity need not be such that the corresponding $\mrK$ has positive eigenvalues. In fact,
there is, for each of these Bianchi types, an open subset such that this condition is violated. On the other hand, there is also an open set such that
it is fulfilled. In the set of isometry classes of developments there is thus an open subset defined by the condition that
Corollary~\ref{cor:shpastglnlstab} applies to appropriate quotients of the corresponding developments.

\subsubsection{Isotropic Bianchi type I}\label{sssection:iso B I}
One complication that arises in the case of isotropic Bianchi type I is that if $s_0\in\rn{}$ is such that
$V'(s_0)=0$, then initial data with $\bp_1=0$ and $\bp_0=s_0$ are trivial. This means that the corresponding development does not have a crushing
singularity; see Remark~\ref{remark:trivial id} and \cite[Remark~1.13, p.~9]{RinModel}. In addition, developments could be such that $\phi_t$
converges to zero and $\phi$ converges to $s_0$. If $V'$ has infinitely many distinct zeros at infinitely many different values of $V$, then
there are infinitely many different solutions without a crushing singularity. This is a rather exotic situation we wish to avoid. For this reason,
we, in the context of isotropic Bianchi type I solutions, introduce additional conditions on the potential.
\begin{definition}[Definition~1.62, \cite{RinModel}]\label{def:mfPdef}
  Let $V\in C^{\infty}(\rn{})$. If $V(s)$ converges to a finite number as $s\rightarrow \pm\infty$, denote the limit by $v_{\infty,\pm}$. If $V(s)$ does
  not converge to a finite number as $s\rightarrow \pm\infty$, let $v_{\infty,\pm}:=0$. Define
  \[
  v_{\max}(V):=\sup(\{v_{\infty,+},v_{\infty,-}\}\cup\{V(s_0)\, |\, s_0\in\rn{},\, V'(s_0)=0\}).
  \]
  Let $\mfP_{\ropar}$
  denote the set of $V\in C^{\infty}(\rn{})$ such that $V(s)\geq 0$ for all $s\in\rn{}$; $V'$ is bounded on every interval on which $V$
  is bounded; $V'(s)$ tends to a limit (finite or infinite) as $s\rightarrow \infty$ and as $s\rightarrow -\infty$; and $v_{\max}(V)<\infty$.
\end{definition}
\begin{remark}[Remark~1.63, \cite{RinModel}]
  The set $\mfP_{\ropar}$ includes, e.g., the following three classes of potentials: non-negative polynomials; non-negative smooth functions such that
  $V>0$ outside a compact set and such that $V'/V$ converges to a non-zero limit as $s\rightarrow\pm\infty$; bounded non-negative smooth functions such
  that $V'(s)\rightarrow 0$ as $s\rightarrow\pm\infty$. 
\end{remark}
In the isotropic Bianchi type I setting, it can be calculated that ${}^{\rosc}\mfS_{\mrI}^{\iso}$ is diffeomorphic to two copies of $\rn{}$; see
\cite[Section~2.3]{RinModel}. However, if we let ${}^{\rosc}\mfD_{\mrI,\roc}^{\iso}[V]$ denote isometry classes of developments with a crushing singularity
arising from isotropic and simply connected Bianchi type I initial data with a potential $V$, then it can be calculated that, depending on the potential
$V$, ${}^{\rosc}\mfD_{\mrI,\roc}^{\iso}[V]$ has one of three possible topologies: two disjoint copies of $\rn{}$; $\rn{}$; and $\sn{1}$. This statement
is justified in the paragraph above the statement of \cite[Theorem~1.98, p.~23]{RinModel}. For this reason,
it is, in general, not possible for the Einstein flow to generate a diffeomorphism between ${}^{\rosc}\mfD_{\mrI,\roc}^{\iso}[V]$ and 
${}^{\rosc}\mfS_{\mrI}^{\iso}$. If $V$ is bounded, these sets have the same topology, and we can hope for a diffeomorphism. In fact, this is what happens,
as is illustrated by the following reformulation of \cite[Theorem~1.98, p.~23]{RinModel}.
\begin{thm}[Theorem~1.98, \cite{RinModel}]
  Assume $V\in C^{\infty}(\rn{})$ to be bounded and to be such that $V\in\mfP_{\ropar}\cap\mfP_{\alpha_V}^\infty$ for some $\alpha_V\in (0,1)$. Then the Einstein
  flow generates a diffeomorphism from ${}^{\rosc}\mfD_{\mrI,\roc}^{\iso}[V]$ to ${}^{\rosc}\mfS_{\mrI}^{\iso}$. 
\end{thm}
\begin{remark}
  Due to the isotropy, the eigenvalues of $\mrK$ all equal $1/3$. This means that Corollary~\ref{cor:shpastglnlstab} applies to appropriate quotients of
  the developments discussed in the theorem. 
\end{remark}
If $V$ is unbounded in one direction and bounded in the other, then ${}^{\rosc}\mfD_{\mrI,\roc}^{\iso}[V]$ is diffeomorphic to one copy of $\rn{}$. The
simplest way to go from one copy of $\rn{}$ to two copies of $\rn{}$ is to remove one point. Naively, one could then hope that there is one unique
solution that does not induce data on the singularity, but that all others do. Similarly, if $V$ is unbounded in both directions, 
${}^{\rosc}\mfD_{\mrI,\roc}^{\iso}[V]$ is diffeomorphic to $\sn{1}$. Removing two points from this set thus yields a set
diffeomorphic to ${}^{\rosc}\mfS_{\mrI}^{\iso}$. Again, one would thus naively expect that there are precisely two solutions that do not induce data on
the singularity. Making slightly stronger assumptions concerning the potential, the above expectations turn out to be justified, as is seen from the
following slight reformulation of \cite[Theorem~101, p.~24]{RinModel}.
\begin{thm}[Theorem~1.101, \cite{RinModel}]\label{thm:asympt as exp pot intro}
  Assume $0\leq V\in C^{\infty}(\rn{})$ and that there are constants $C_V$ and $M$ such that $V(s)>0$ and 
  \begin{equation}\label{eq:ln V biss est intro}
    \left|\left(\ln V\right)''(s)\right|\leq C_V\ldr{s}^{-2}
  \end{equation}
  for all $|s|\geq M$. This means that $(\ln V)'(s)$ converges to limits as $s\rightarrow\pm \infty$. Call the limits $\lambda_\pm$ and assume
  that $-\sqrt{6}<\lambda_-<0$ and that $0<\lambda_+<\sqrt{6}$. Let $\theta\in C^{\infty}(J,(0,\infty))$ and $\phi\in C^{\infty}(J,\rn{})$ be the
  mean curvature and the scalar field of a development corresponding to non-trivial, isotropic Bianchi type I initial data, where $J=(t_-,t_+)$
  is the maximal existence interval. Assuming that $\theta$ is unbounded, there are the following, mutually exclusive, cases:
  \begin{enumerate}[(i)]
  \item The solution is such that
    \begin{equation}\label{eq:phi exc limit intro}
      \lim_{t\rightarrow t_-}\left[3\phi_t(t)/\theta(t)+(\ln V)'[\phi(t)]\right]=0
    \end{equation}
    holds and $\phi(t)\rightarrow\infty$ as $t\rightarrow t_-$. Up to time translation, there is exactly one such solution, and its image is a smooth
    submanifold of the state space. 
  \item The solution is such that (\ref{eq:phi exc limit intro}) holds and $\phi(t)\rightarrow-\infty$ as $t\rightarrow t_-$. Up to
    time translation, there is exactly one such solution, and its image is a smooth submanifold of the state space. 
  \item The solution has a crushing singularity and induces data on the singularity. 
  \end{enumerate}
  Moreover, assuming, in addition, $V\in\mfP_{\ropar}\cap\mfP_{\alpha_V}^\infty$ for some $\alpha_V\in (0,1)$
  and removing the two unique solutions mentioned in (i) and (ii) from the set of isometry classes ${}^{\rosc}\mfD_{\mrI,\roc}^{\iso}[V]$ yields
  a set which is diffeomorphic to ${}^{\rosc}\mfS_\mrI^{\iso}$ via the Einstein flow. 
\end{thm}
\begin{remark}
  Similar conclusions hold if $V$ is bounded in one direction and unbounded in one direction; see \cite[Remarks~1.104--1.106, p.~24]{RinModel}.
\end{remark}
\begin{remark}
  Removing the developments corresponding to (i) and (ii), all other developments are such that Corollary~\ref{cor:shpastglnlstab} applies to appropriate quotients.
\end{remark}

\subsubsection{The hyperbolic setting}\label{sssection:hyp setting}
In the hyperbolic setting, we are interested in the following class of initial data.
\begin{definition}[Definition~1.107, \cite{RinModel}]\label{def:id k minus one}
  \textit{Locally homogeneous and isotropic negative curvature initial data for the Einstein non-linear scalar field equations}, 
  with potential $V\in C^{\infty}(\rn{})$, consist of the following: a complete hyperbolic $3$-manifold $(\bM,\bge)$; a covariant
  $2$-tensor field $\bk$ on $\bM$ which is a non-negative constant multiple of $\bge$; and two constants $\bp_{0}$ and $\bp_{1}$ satisfying:
  \begin{equation}\label{eq:ham con id k minus one}
    \roScal_{\bge}-|\bk|_{\bge}^{2}+(\tr_{\bge}\bk)^{2} = \bp_{1}^{2}+2V(\bp_{0}).
  \end{equation}
  The data are said to be \textit{trivial} if $\bp_1=0$ and $V'(\bp_0)=0$. Let $\mN[V]$
  denote the set of all locally homogeneous and isotropic negative curvature initial data for the Einstein non-linear scalar field
  equations with potential $V$. 
\end{definition}
\begin{remark}
  If $V$ is non-negative and $\mfI\in\mN[V]$, then, due to \cite[Remark~1.112, p.~26]{RinModel}, there is a unique spatially locally homogeneous
  and isotropic non-linear scalar field development of $\mfI$ with a crushing singularity (this terminology is introduced in
  \cite[Definition~1.111, p.~26]{RinModel}) and an existence interval $J$ which can be assumed to equal $(0,\infty)$. We denote this development
  by $\mD[V](\mfI)$. 
\end{remark}
Trivial data lead to solutions to Einstein's vacuum equations with a cosmological constant $\Lambda=V(\bp_0)$. If $V(\bp_0)=0$, the solution is
the Milne model, and if $V(\bp_0)>0$, the solution is a generalization of the Milne model with a positive cosmological constant; see
\cite[Remark~1.108, p.~25]{RinModel}. In fact, if $V\geq 0$ and $\phi_\infty\in\rn{}$, there are unique smooth functions
$a:(0,\infty)\rightarrow (0,\infty)$ and $\phi:(0,\infty)\rightarrow \rn{}$ such that if $(\bM,\bge_-)$ is a complete hyperbolic $3$-manifold
with scalar curvature $-6$, $M=\bM\times (0,\infty)$ and $g$ is defined by $g=-dt\otimes dt+a^2(t)\bge_-$, then $(M,g,\phi)$ is a solution
to the Einstein-non-linear scalar field equations with $\phi(t)\rightarrow\phi_\infty$ and $\theta(t)\rightarrow\infty$ as $t\downarrow 0$;
see \cite[Proposition~1.109, p.~25]{RinModel}. This solution asymptotes to a solution to the Einstein vacuum equations with a cosmological
constant $\Lambda:=V(\phi_\infty)$. Note that in the case of the Einstein-scalar field equations (i.e., when the potential vanishes), then
all of these solutions are the Milne model (since the value of the scalar field is irrelevant if the potential is a constant). The solutions
obtained in \cite[Proposition~1.109, p.~25]{RinModel} do not induce data on the singularity; see \cite[Remark~1.114, p.~26]{RinModel}.

On the other hand, there is a natural notion of initial data on the singularity in this setting.
\begin{definition}[Definition~1.110, p.~26]\label{def:ndvacidonbbssh k minus one}
  Let $(\bM,\mrH)$ be a complete $3$-dimensional hyperbolic manifold, $\mrK$ be the $(1,1)$-tensor field $\mrK=\Id/3$ on $\bM$ and
  $(\mrPhi_{0},\mrPhi_{1})\in\rn{2}$. Then $(\bM,\mrH,\mrK,\mrPhi_{0},\mrPhi_{1})$ are \textit{locally homogeneous and isotropic negative curvature initial
    data on the singularity for the Einstein non-linear scalar field equations}
  if $\mrPhi_1^2=2/3$. The set of such data is denoted $\mS_-^{\iso}$.
\end{definition}
Given initial data on the singularity, there is, again, a corresponding development.
\begin{prop}[Proposition~1.115, \cite{RinModel}]\label{prop:dos ind dev k minus one}
  Let $V\in \mfP_{\alpha_V}^2$ for some $\alpha_V\in (0,1)$ and $\mfI_\infty\in \mS_-^{\iso}$; see Definition~\ref{def:ndvacidonbbssh k minus one}. Then there
  is a unique (up to time translation) development in the sense of \cite[Definition~1.111, p.~26]{RinModel} which induces the data $\mfI_\infty$ on the
  singularity in the sense of \cite[Definition~1.113, p.~26]{RinModel}.
\end{prop}
Making slightly stronger assumptions on the potential, it can be verified that there are only two outcomes possible, given initial data; either
the development induces initial data on the singularity or it asymptotes to a Milne solution (or a generalization thereof with a positive cosmological
constant); see \cite[Proposition~1.117, p.~27]{RinModel}. It only remains to determine the relative frequency of the two outcomes. To this end, it
is convenient to fix a complete hyperbolic $3$-manifold $(\bM,\bge_-)$ with scalar curvature $-6$. If $(\bM,\bge,\bk,\bp_0,\bp_1)\in\mN[V]$ are such
that $\bge=\alpha^2\bge_-$ and $\bk=\alpha\beta\bge_-$, then (\ref{eq:ham con id k minus one}) reads
\begin{equation}\label{eq:ham con al be k minus one}
  6\b^2=\bp_1^2+\tfrac{6}{\alpha^2}+2V(\bp_0);
\end{equation}
see \cite[(1.34), p.~27]{RinModel}. As in \cite[(1.35), p.~27]{RinModel}, we therefore introduce 
\begin{equation}\label{eq:sfN minus V def}
  \sfN_-[V]:=\{(\b,\bp_0,\bp_1)\in [0,\infty)\times\rn{2}\ |\ 6\b^2-\bp_1^2-2V(\bp_0)>0\}.
\end{equation}
Since $\alpha>0$ is uniquely determined by (\ref{eq:ham con al be k minus one}), given $(\b,\bp_0,\bp_1)\in \sfN_-[V]$, $\sfN_-[V]$ parametrises
initial data, given $(\bM,\bge_-)$. Finally, we recall \cite[Proposition~1.118, p.~27]{RinModel}. 
\begin{prop}[Proposition~1.118, \cite{RinModel}]\label{prop:generic k minus one intro}
  Let $0\leq V\in C^\infty(\rn{})$ and $(\bM,\bge_-)$ be a complete hyperbolic $3$-manifold with scalar curvature $-6$. Then the subset of $\sfN_-[V]$
  that gives rise to developments with the property that $\phi(t)$ converges to a finite number is contained in a countable union of codimension one
  submanifolds. In this sense, the outcome represented by \cite[Proposition~1.109, p.~25]{RinModel} corresponds to a set of initial data which
  is both Baire and Lebesgue non-generic. 
\end{prop}
\begin{remark}
  Combining this result with the assumption that $0\leq V\in\mfP_{\alpha_V}^1$, where $\alpha_V\in (0,1/3)$, and \cite[Proposition~1.117, p.~27]{RinModel}
  yields the conclusion that generic solutions induce data on the singularity. This means that Theorem~\ref{thm:degenerate case} applies to closed
  quotients of generic solutions. 
\end{remark}

\subsubsection{Comparison with previous results}

The first stability result in the direction of the singularity, \cite{rasq}, concerns perturbations of isotropic Bianchi type I solutions to the
Einstein-scalar field and the Einstein-stiff fluid equations. The scalar field part of this result follows as a consequence of our work;
see Example~\ref{example: result of fellows}. However, the discussion in Subsubsection~\ref{sssection:iso B I} provides a generalization
to the isotropic Bianchi type I setting with non-trivial potentials. The stability results in \cite{rsh} are special cases of the results in
\cite{GIJ}, and, excluding the results concerning $\mathrm{U}(1)$-symmetric solutions, the stability statements contained in \cite{GIJ} are
special cases of the conclusions obtained here; see Example~\ref{example: result of fellows}. The article \cite{specks3} concerns isotropic
Bianchi type IX solutions. However, the stability results of the present article yield past and future global non-linear stability of all Bianchi
type IX solutions to the Einstein-scalar field equations, assuming the scalar field matter is non-trivial. In fact, we also obtain results more
generally for non-trivial potentials; see Subsubsection~\ref{sssection:BVIII a IX} for details. Finally, our discussion concerning the hyperbolic
setting, see Subsubsection~\ref{sssection:hyp setting}, contains \cite[Theorem~1.1, p.~1616]{fau} as a special case. In fact, we also treat large
classes of potentials. 

\subsubsection{Past and future global non-linear stability}
Combining the past global non-linear stability results of the previous subsubsections with future global non-linear stability results such
as those contained in \cite[Proposition~1.119, p.~28]{RinModel}, \cite[Corollary~1.120, p.~28]{RinModel},
\cite[Proposition~1.124, p.~29]{RinModel} (combined with \cite[Theorem~4, pp.~134--135]{RinInv}) and \cite[Theorem~3, p.~162]{RinPL} yields large
classes of spacetimes which are both past and future globally non-linearly stable. We refrain from writing down the details.

\subsection{Strategy of the proof} \label{ssection:ProofStrategy}

The proof of the main theorem can roughly speaking be divided into three steps.
First, we construct an approximate solution using the assumptions
concerning the expansion-normalized quantities.
We refer to it as the \textit{scaffold}.
Second, we use a bootstrap argument to control the deviation
between the actual solution and the scaffold.
The outcome of the bootstrap argument is past global existence and rough bounds on the solution.
Third, we derive more detailed information concerning the asymptotics.

\textit{The scaffold.} In our setting, we expect the algebraic condition (\ref{cond: alternating q}), the
initial bound on the expansion-normalized quantities and the requirement of sufficiently large initial
mean curvature to imply that the expansion-normalized quantities $(\mH, \mK, \Phi_0, \Phi_1)$ converge to a limit. Moreover,
the smaller the initial time $t_0 = \bth^{-1}$, the smaller one expects the deviation of $(\mH, \mK, \Phi_0, \Phi_1)$
from the initial quantities $(\bmH, \bmK, \bP_0, \bP_1)$ to be.
In Example~\ref{example: spatially hom flat},
the quantities $(\mH, \mK, \Phi_0, \Phi_1)$ are even independent of time. 
By analogy with Example~\ref{example: spatially hom flat},
we therefore construct a spacetime with a CMC foliation,
vanishing shift and lapse equal to $1$, and a scalar function, 
such that the induced expansion-normalized quantities are
equal to $(\bmH, \bmK, \bP_0, \bP_1)$ for all time.
We call this \emph{the scaffold}. 
Note, of course, that the scaffold typically does not satisfy the Einstein-non-linear scalar field equations.
However, we expect it to remain close to the solution with the same initial data, if the initial time $t_0 = \bth^{-1}$ is sufficiently small.
One could therefore view the scaffold as the zeroth order approximation of the actual solution.
Our method to construct the scaffold in the proof of the main theorem, Theorem \ref{thm: big bang formation}, is based on the assumption of
non-degeneracy, i.e., $|\bq_I-\bq_J|>\zeta_0^{-1}$ for $I\neq J$.
The construction comes with natural estimates for the scaffold in terms of $\zeta_0$.
However, if one is interested in proving stability of a specific background solution, that background solution can be used to construct the scaffold, and then it is not necessary to assume non-degeneracy;
see Subsection~\ref{ssection: the degenerate case}.

\textit{The bootstrap argument.}
The rough setup for the bootstrap argument is very similar to the corresponding argument in \cite{GIJ}.
We use a gauge with constant mean curvature (CMC) and vanishing shift vector field, with the time coordinate $t = \theta^{-1}$.
We also use a Fermi-Walker propagated frame.
Moreover, we insist on detailed control of a low number of derivatives in $C^0$, represented by a quantity, say $\mbL$, and rough control over a high number of derivatives in $L^2$, represented by, say, $\mbH$.
We also borrow derivatives from $\mbH$ to control some extra derivatives in $C^0$.
Finally, we separately derive estimates for $\mbL$ and $\mbH$ in order to close the bootstrap,
and the estimates for $\mbL$ are, roughly speaking, of ODE-nature,
while in the estimates for $\mbH$,
we crucially need to use the fact that the momentum constraint is satisfied.
There are, however, significant differences in the analysis in our setting and the analysis of \cite{GIJ}.
In \cite{GIJ}, the authors control the distance to a spatially homogeneous solution.
In our case, the scaffold is typically neither a solution, nor is it spatially homogeneous.
At some points, this represents a significant difficulty and requires a refined analysis.
For example, in the proof of the energy estimates, it is necessary to commute spatial derivatives with an operator of the form $-t^{-1}(\bq_I+\bq_J-\bq_K)-\d_t$.
In the case of \cite{GIJ}, this commutator vanishes.
In our case, the spatial derivatives of the $\bq_I$ can be arbitrarily large.
This causes complications, which necessitate a different definition of the energy $\mbL$, with
weights depending on the number of derivatives and $C^{k}$-bounds on the $\bq_I$. 
In addition, in the energy estimates for $\mbH$, we require tailored bounds, extracting the terms in which the eigenvalues $\bq_I$ are not differentiated. 
We derive the necessary estimates in the appendix.
In fact, when we appeal to the momentum constraint,
we require one more derivative for the definition of the scaffold than we recover from the energy estimate, due to the spatial inhomogeneity of the scaffold. 
Moreover, in this paper, we use the structure coefficients (denoted by $\gamma_{IJK}$) as variables instead of the connection coefficients (which are denoted by $\gamma_{IJK}$ in \cite{GIJ}) in the bootstrap argument and energy estimates.
Another consequence of the spatial inhomogeneity of the scaffold is that extra terms appear in the evolution equations for the structure coefficients.
However, by discriminating between the frame components and the structure coefficients in the bootstrap assumptions, using different powers of $t$, the required estimates for the structure coefficients actually simplify.
Next, in \cite{GIJ}, the authors consider the torus and use the
associated vector fields $\d_i$ as a basis for their arguments.
In our case, this is of course not possible.
Our substitute is the global frame $(E_i)_{i=1}^{n}$.
However, the elements of this frame do not commute, and we need appropriate associated estimates.
This, in principle, increases the length of the arguments.
However, in Section~\ref{sec:MainEstimates} below, we develop a scheme for bounding terms
that appear in the energy estimates.
This scheme allows us to estimate most of the terms of interest by simply inspecting
the number of factors of different types.
Next, as in \cite{GIJ}, the norms with the high number of derivatives are weighted by a factor $t^A$. 
By rewriting the lapse equation using the Hamiltonian constraint (in particular for the control of the high number of derivatives in $L^2$ of the lapse) we may in fact fix the parameter $A$ explicitly, independently of $k_0$, but with dependence only on $\s$ and $n$.
Finally, in \cite{GIJ}, there is a smallness assumption: the initial data should be
close enough to those of the background.
In our case, we do not have such a smallness assumption.
Instead, for a fixed bound of the form
(\ref{eq: Sobolev bound assumption}), we insist that the mean curvature is large enough.
The reason for making this type of assumption is
that it allows us to identify a convergent regime
without any reference to a background solution.

\textit{Deriving the asymptotics.}
As already emphasized, in the quiescent setting, it is natural to expect the expansion-normalized
quantities and the mean curvature to decouple: the expansion-normalized quantities
should converge as the mean curvature tends to infinity.
An extreme example of this is of course Example~\ref{example: spatially hom flat},
in which case the expansion-normalized quantities are independent of time.
We use the assumption of non-degeneracy in order to obtain $C^{k}$-control of the eigenvalues of the expansion-normalized
Weingarten map and the expansion normalized scalar field quantities $\Phi_1$ and $\Phi_0$.
Except for the arguments in the case of the regularity of the eigenvalues, the proof is in this case quite similar to that of
\cite{GIJ}, with the difference that we here also obtain information on the asymptotics of the expansion-normalized quantity $\Phi_0$.
In the degenerate setting, see Subsection~\ref{ssection: the degenerate case},
the argument for the asymptotics is similar to that of \cite{GIJ}.

Next, we highlight two ingredients entering the proof of Theorem~\ref{thm: big bang formation} in further detail:
The approximate satisfaction of the asymptotic Hamiltonian constraint and the double role of the inverse of the mean curvature
- as a measure of distance to the singularity, and as a threshold for an asymptotic regime.

\textit{Asymptotic Hamiltonian constraint.} As explained above, 
we expect the approximate equality \eqref{eq:approxlimhamcon} to hold in the asymptotic regime. 
We refer to it as the asymptotic Hamiltonian constraint.
To illustrate how our assumptions force us to be in the asymptotic regime,
note that for initial mean curvature $\bth >0$, the Hamiltonian constraint may be written as
\begin{equation} \label{eq:HamiltonianConstraintOutline}
    \tr \bmK^2 + \bP_1^2 
    + 2 \bth^{-2} V\circ\bp_0
    = 1 + \bth^{-2} \roScal_{\bh}-\bth^{-2}|\md \bp_0|_{\bh}^2.
\end{equation}
However, the last two terms on the right hand side decay as a negative power of $\bth$. Indeed, if $\be_I$ is an eigenvector field of
$\bmK$ corresponding to $\bq_I$ such that $|\be_I|_{\bh}=1$, then $\hat{e}_I:=\bth^{-\bq_I}\be_I$ (no summation) satisfies a $C^k$-bound
independent of the mean curvature due to the assumptions; see the proof of Proposition~\ref{prop: initial data estimate} for the details.
Moreover, $|\hat{e}_I|_{\bmH}=1$. Since
\begin{equation}\label{eq:exp norm sc field}
  \bth^{-2}|\md \bp_0|_{\bh}^2=\bth^{-2}\textstyle{\sum}_I\be_I(\bp_0)\be_I(\bp_0)=\textstyle{\sum}_I\bth^{2\bq_I-2}\hat{e}_I(\bp_0)\hat{e}_I(\bp_0)
\end{equation}
and since $\bp_0$ grows at worst linearly with $\ln\bth$, see (\ref{eq:Phiz def}), it is clear that the far right hand side of
(\ref{eq:exp norm sc field}) decays (up to a polynomial in $\ln\bth$) as $\bth^{2\bq_I-2}$; putting $J=K$ in (\ref{cond: alternating q})
yields $\bq_I<1-\s_p$. Similarly, the structure coefficient $\bga_{IJK} := \bh([\be_I, \be_J], \be_K)$ can, up to a polynomial in
$\ln\bth$, be bounded by $\bth^{\bq_I + \bq_J - \bq_K}$. By a standard expression of the scalar curvature in terms of the structure coefficients,
combined with (\ref{cond: alternating q}), it follows that (up to a polynomial in $\ln\bth$)
\begin{equation}
    \bth^{-2}|\md \bp_0|_{\bh}^2+\bth^{-2} |\roScal_{\bh}| \lesssim \bth^{-2\s_p}.
\end{equation}
Due to this estimate and the non-negativity of $V$, (\ref{eq:HamiltonianConstraintOutline}) yields the conclusion that
$|\bP_1|$ is essentially bounded from above by $1$. Thus, due to (\ref{eq:Phiz def}), 
$|\bp_0|$ is essentially bounded by $|\ln\bth|+|\bP_0|$. Combining this estimate with (\ref{eq: V assumption}) yields, roughly speaking,
an estimate of the form $\bth^{-2}|V\circ\bp_0|\lesssim \bth^{-2\s_V}$. Summarizing, there are constants $C$ and $k$, independent of $\bth$,
such that 
\begin{equation} \label{eq:BoundHamiltonianConstraintOutline}
  \big|\textstyle{\sum}_I\bq_I^{2}+\bP_1^2-1\big|
  	=\big|\bmK_{IJ} \bmK^{IJ}+\bP_1^2-1\big|\leq C\ldr{\ln(\bth)}^{k} \bth^{-2\min\{\s_p,\s_V\}}.
\end{equation}
In this sense, the asymptotic Hamiltonian constraint is implicitly encoded in the assumptions. 

Next, we use the mean curvature to define the time coordinate: $t = \theta^{-1}$. Moreover, we let $t_0:= \bth^{-1}$. The equation for the
initial lapse $\bN$ (which in fact is equivalent to the Raychaudhuri equation for the normal vector field to the CMC foliation) reads
\[
	(t_0^2 \Delta - 1 ) (\bN-1) + \big( 1 - \bmK_{IJ}\bmK^{IJ} - \bP_1^2 + t_0^2 \tfrac{2}{n-1}V\circ\bp \big) \bN = 0.
\]
Hence \eqref{eq:BoundHamiltonianConstraintOutline} and the bounds on the potential force $\bN \approx 1$ if $t_0 = \bth^{-1}$ is small enough.
It is therefore reasonable to think of $\bth$ as a measure of the initial distance to the singularity.

\textit{The dual role of the mean curvature.}
The proof of past global existence relies on controlling the deviation between the solution and the scaffold.
In the course of the argument, it is essential to isolate the dependence of the constants
that appear in the estimates on the mean curvature.
This allows us to, at the end, insist on a lower bound on the mean curvature in order to close the bootstrap argument
on which the past global existence proof is based.
This perspective makes it possible to consider solutions that are not necessarily close to symmetric background
solutions; note, e.g., that we do not impose any limitations on the spatial derivatives of the eigenvalues of the
initial expansion-normalized Weingarten map (and if the initial data would be close to initial data admitting a
Killing vector field, then the derivative of the eigenvalues along the
Killing vector field (of the background) would be small). 


To summarize, the lower bound on $\bth$, in the form of $\zeta_1$
in the statement of Theorem~\ref{thm: big bang formation}, works as a threshold for the asymptotic regime.
Moreover, the lower bound depends only on the reference geometry, the chosen admissibility thresholds,
the chosen regularity thresholds, and the chosen upper bound $\zeta_0$ in \eqref{eq: Sobolev bound assumption}.
Initial data satisfying these conditions as well as $\bth>\zeta_1$
are then forced to produce a quiescent solution with controlled behaviour to the past.
On the other hand, the inverse of the mean curvature features as a time-coordinate,
telling us in proper time when the big bang singularity will occur.

The logical structure of the proof of Theorem~\ref{thm: big bang formation} is the following:
\begin{enumerate}
\item By Proposition~\ref{prop: initial data estimate},
the assumptions in Theorem~\ref{thm: big bang formation}
    imply initial estimates for the Fermi-Walker quantities.
Moreover, if $\bth$ is large enough
we may deduce bounds for the initial lapse as shown in Proposition~\ref{prop: initial lapse estimate}.
As explained above, the initial bounds for the lapse follow, in essence,
    from the approximate satisfaction of the asymptotic Hamiltonian constraint.
\item Theorem~\ref{thm:GlobalExistenceFermiWalker} can then be applied to conclude global existence and an energy estimate in the CMC gauge with a Fermi-Walker propagated frame.
This is based on a bootstrap argument, where the solution is compared to the scaffold.
As explained above, if $\bth$ is large enough, the estimates for the Fermi-Walker quantities
may be bootstrapped, thus allowing us to conclude past global existence as well as control
relative to the scaffold.
\item By Theorem~\ref{thm:Asymptotics}, again for large enough $\bth$, the energy estimate in Theorem~\ref{thm:GlobalExistenceFermiWalker} suffices to conclude the desired asymptotics.
The demand for $\bth$ here is a result of the need for the algebraic conditions
    to hold all the way up to the initial singularity. 
\item By Proposition~\ref{prop:FieldEquationsFermiWalker}, we get a corresponding solution to the Einstein-non-linear scalar field equations which inherits the properties 
shown for the solution in the chosen gauge and frame.
\end{enumerate}

\subsection{Outline} \label{ssec:Outline}
In Section~\ref{section: Cauchy problem FW Gauge}, we introduce the gauge and formulate the equations we use: \textit{the FRS equations}. In
Proposition~\ref{prop:FieldEquationsFermiWalker}, we show that a solution to these equations yields a solution the Einstein-non-linear scalar
field equations. We define Sobolev norms in Subsection~\ref{sec:sobolev}, and, in Subsection~\ref{ssection: Initial bounds FW}, a notion of
initial data for the FRS variables: \textit{diagonal FRS initial data}.
Subsequently, we define expansion-normalized bounds for diagonal FRS initial data.
Then, in Proposition~\ref{prop: initial data estimate}, we show that
the expansion-normalized bounds assumed in Theorem~\ref{thm: big bang formation} 
imply similar bounds for the diagonal FRS initial data.
In Proposition~\ref{prop: initial lapse estimate} these bounds,
as well as a bound on the initial curvature, are used to deduce an estimate for the initial lapse,
which is required later to initiate the bootstrap argument.

Next, in Section~\ref{section: past global existence}, the past global existence theorem,
Theorem~\ref{thm:GlobalExistenceFermiWalker} is formulated and,
towards the end of the section, also proven,
assuming the bootstrap improvement theorem, Theorem~\ref{thm:BootstrapImprovement}.
We also recall local existence and Cauchy stability of solutions, formulated and demonstrated in \cite{OPR23}.
The rest of the section is dedicated to the formulation
    and setup for the proof of Theorem~\ref{thm:BootstrapImprovement};
in particular, the scaffold is introduced, as well as
the deviation quantities which measure the difference of the solution to the scaffold.
Lastly, a-priori estimates, i.e. estimates not based on the evolution equations,
which are important for the main estimates of Section~\ref{sec:MainEstimates}, are deduced.

Section~\ref{sec:MainEstimates} contains estimates for the lapse, the deviation quantities
and the time derivatives of the deviation quantities. These estimates are required for the
energy estimates and are in part based on the evolution equations.
The section begins with an algorithm, which we refer to as the scheme,
which is used to conveniently deduce estimates which suffice
for many of the results of Section~\ref{sec:MainEstimates}.
Continuing, Section~\ref{sec:EnergyEstimates} contains the two energy estimates required
for the proof of Theorem~\ref{thm:BootstrapImprovement},
which itself is then proven at the end of the section.
Subsequently, in Section~\ref{sec:asymptotics} we encounter Theorem~\ref{thm:Asymptotics},
in which asymptotics for $\mK, \Phi_1$ and $\Phi_0$ are obtained
as well as curvature blow-up.
This completes all the relevant ingredients required for the proof
of Theorem~\ref{thm: big bang formation}, which is the content
    of Section~\ref{sec: finishing the proof}.

Finally, there are two appendices. In Appendix~\ref{sec:SobolevInequalities}, the Sobolev inequalities 
that we require throughout Sections~\ref{section: Cauchy problem FW Gauge}--\ref{sec:asymptotics} are proven,
while Appendix~\ref{sec:SobolevInequalities} contains results concerning the regularity of eigenvalues
which are required in the proof of Theorem~\ref{thm:Asymptotics}.



\subsection{Notation in the paper}\label{ssection:notation}
Throughout this paper, we use round brackets around capital Latin indices for symmetrization, e.g. $k_{(IJ)} = (k_{IJ} + k_{JI})/2$,
and square brackets for anti-symmetrization, e.g. $\g_{[IJ]K} =(\g_{IJK} - \g_{JIK})/2$, etc. Moreover, we use the Einstein summation
convention; i.e., we sum over repeated upstairs and downstairs indices. In the case of capital Latin indices, we also sum over
repeated downstairs and repeated upstairs indices. However, underlined indices are not summed over. Moreover, throughout the paper,
we do not use Einstein's summation convention for expressions where one of the indices appear in an exponent, e.g.\ for expressions
of the form $t^{a_I}b_I$.

Throughout the paper, we use multiindices for frames. The corresponding notation, and the notation for Sobolev norms we use here
is introduced in Appendix~\ref{sec:SobolevInequalities} below. 

\subsection{Acknowledgements}
This research was funded by the Swedish Research Council (Vetenskapsr\aa det), dnr. 2017-03863, 2021-04269 and 2022-03053.

\section{The Cauchy problem for the FRS equations}\label{section: Cauchy problem FW Gauge}

As in \cite{GIJ}, the solutions to the Einstein-non-linear scalar field equations we construct have CMC foliations with a vanishing shift vector field
(i.e., $\d_{t}$ is perpendicular to the leaves of the foliation). In particular, we look for globally hyperbolic solutions of the form
\begin{equation} \label{eq: form of the metric}
    g = -N^2 \md t^2 + h
\end{equation}
on $M= (a, b) \times \S$, where $a<b$, $\S$ is closed, $N>0$ and $h$ is a family of Riemannian metrics on $\S$. Moreover, we express the metric $g$ in
terms of a Fermi-Walker transported frame.

\subsection{The reference frame}\label{ssection: The reference frame}
Let $(\S,h_{\refer})$ be a closed Riemannian manifold with a smooth global orthonormal frame $(E_{i})_{i=1}^{n}$; cf. Remark~\ref{remark:frame intro}. 
There is a canonical way of extending this frame to $M$ by requiring that
\begin{equation} \label{eq: Lie transport}
    [\d_t, E_i]
        = 0
\end{equation}
and that $E_1, \hdots, E_n$ are tangent to the leaves $\S_t := \{t\} \times \S \subset M$. We let $(\eta^i)_{i=1}^{n}$ denote the co-frame dual to
$(E_i)_{i=1}^{n}$. The reference metric, reference frame and the reference co-frame will be fixed throughout the paper.

\subsection{The FRS equations}\label{ssection: the FRS equations}

The equations we solve follow in a straightforward manner from the Einstein-non-linear scalar field equations, assuming a CMC foliation,
a vanishing shift vector field, and using a Fermi-Walker transported frame. The idea of using this combination of gauge conditions and
frame to prove stable big bang formation goes back to the work of Fournodavlos, Rodnianski and Speck, see \cite{GIJ}. We therefore refer
to the resulting system of equations as \textit{the FRS equations}. The purpose of the present section is to show that if we have a solution
to the FRS equations, then we can construct a solution to the Einstein-non-linear scalar field equations. In what follows, we use the conventions
introduced in Subsection~\ref{ssection:notation}. 

\begin{prop} \label{prop:FieldEquationsFermiWalker}
  Let $(\S, h_\refer)$ be a Riemannian manifold with a frame $(E_i)_{i=1}^{n}$ and co-frame $(\eta_i)_{i=1}^{n}$ as above. Let $t_0 > 0$,
  let $V \in C^\infty(\rn{})$ and let 
  \[
  \be_I^i, \bo^I_i, \bk_{IJ}, \bga_{IJK}, \bp_0, \bp_1 : \S \to \rn{},
  \]
  for all $i, I, J, K$, be smooth functions. Define the vector fields and one-forms
  \begin{align*}
    \be_I
    := \be_I^i E_i, \quad
    \bo^I
    := \bo^I_i \eta^i.
  \end{align*}
Assume that 
\begin{itemize}
    \item $(\be_I)_{I=1}^{n}$ is a smooth frame of the tangent space of $\S$,
    \item $(\bo^I)_{I=1}^{n}$ is the dual frame of $(\be_I)_{I=1}^{n}$,
    \item $\bk_{IJ} = \bk_{JI}$ and $\sum_{I} \bk_{II} = \frac1{t_0}$, 
    \item $\bga_{IJK} = \bo^K([\be_I, \be_J])$,
\end{itemize}
for all $I,J, K $. Assume, moreover, that there are smooth functions
\[
    e_I^i, \o^I_i, k_{IJ}, \gamma_{IJK}, \phi, N:
        (a, b) \times \S \to \rn{},
\]
with $N > 0$ and $(a,b)\subseteq (0,\infty)$, for all $i, I, J, K $, solving
\begin{itemize}
    \item the evolution equations for the frame and dual frame:
    \begin{align}
    e_0(e_I^i) 
        &= - k_{IJ}e_J^i, \label{eq: transport frame} \\
    e_0(\o^I_i) 
        &= k_{IJ} \o^J_i, \label{eq: transport co-frame}
\end{align}
    \item the evolution equations for $\gamma$ and $k$:
    \begin{align}
    \begin{split}
    e_{0}\g_{IJK} 
        = & - 2 N^{-1} e_{[I}(Nk_{J]K}) 
        - k_{IL} \g_{LJK} - k_{JL} \g_{ILK} + k_{KL} \g_{IJL},
  \end{split} \label{eq:concoef} \\
  \begin{split}
    e_{0}k_{IJ} 
    	= & N^{-1}e_{(I}e_{J)} (N) - N^{-1} e_{K} (N\g_{K(IJ)})
    	- e_{(I}(\g_{J)KK}) - t^{-1}k_{IJ} \\
    	& +\g_{KLL}\g_{K(IJ)} + \g_{I(KL)}\g_{J(KL)} - \tfrac14 \g_{KLI} \g_{KLJ}\\
    	& + e_I(\phi)e_J(\phi)+\tfrac2{n-1}(V\circ\phi) \delta_{IJ},
  \end{split} \label{eq:sff}
\end{align}
    \item the evolution equations for the derivatives of the scalar field:
    \begin{align}
    e_0 \left( e_I \phi \right)
        &= N^{-1} e_I \left( N e_0 \phi \right) - k_{IJ} e_J(\phi), \label{eq:spatial scalar field derivative} \\
    e_{0}\left( e_{0}(\phi) \right)
        &= e_{I}\left( e_{I}(\phi) \right) -t^{-1}e_{0}(\phi)+N^{-1}e_{I}(N)e_{I}(\phi) -\g_{JII}e_{J}(\phi)-V'\circ\phi, \label{eq:scalarfield}
    \end{align}
    \item the Hamiltonian constraint equation:
    \begin{align}
    \begin{split}
    2e_I(\g_{IJJ})
  	&- \tfrac14\g_{IJK}(\g_{IJK} + 2 \g_{IKJ}) -\g_{IJJ}\g_{IKK} - k_{IJ}k_{IJ} + t^{-2} \\
  	&= (e_{0}\phi)^{2} + e_{I}(\phi)e_{I}(\phi) + 2V \circ \phi,
    \end{split} \label{eq:hamconstraint}
    \end{align}
    \item the momentum constraint equation:
    \begin{equation} \label{eq:MC}
    e_I k_{IJ}
  	=\g_{LII}k_{LJ} + \g_{IJL}k_{IL} + e_{0}(\phi) e_{J}(\phi),
    \end{equation}
    \item the lapse equation:
    \begin{equation}
    \begin{split}
    e_{I}e_{I}(N) = & t^{-2}(N-1)+\g_{JII}e_{J}(N) - \left( e_{I}(\phi)e_{I}(\phi) + \tfrac{2n}{n-1} V \circ \phi \right)N \\
        & + \left( 2e_I(\g_{IJJ}) - \tfrac14\g_{IJK}(\g_{IJK} + 2 \g_{IKJ}) - \g_{IJJ}\g_{IKK} \right) N,
    \end{split}\label{eq:LapseEquation}
    \end{equation}
\end{itemize}
on $M:=(a,b) \times \S$, with $t_0 \in (a,b)$, where $e_0:= N^{-1}\d_t$, $e_I:= e_I^i E_i$, subject to the initial condition
\[
    \left( e_I^i, \o^I_i, k_{IJ}, \g_{IJK}, \phi, e_0 \phi \right)|_{t = t_0} 
        = \left( \be_I^i, \bo^I_i, \bk_{IJ}, \bga_{IJK}, \bp_0, \bp_1 \right),
\]
for all $i,I, J, K $. Define $\o^I := \o^I_i \eta^i$ and $h := \o^I \otimes \o^I$. Then the spacetime metric $g$, defined by
(\ref{eq: form of the metric}), and the scalar field $\phi$, satisfy the Einstein-non-linear scalar field equations with a potential $V$, i.e.\
\begin{align}
	\roRic_g
		&= \md \phi \otimes \md \phi + \tfrac2{n-1} (V \circ \phi) g, \label{eq: field equations conclusion}\\
	\Box_{g}\phi
		&= V'\circ \phi, \label{eq: scalar equation conclusion}
\end{align}
and the hypersurfaces $\S_t := \{t\} \times \S$ are CMC Cauchy hypersurfaces of mean curvature $\frac1t$. Moreover,
\begin{itemize}
    \item $(e_I)_{I=1}^{n}$ is a smooth frame on $\S_{t}$ for each $t \in (a,b)$,
    \item $(\o^I)_{I=1}^{n}$ is the dual frame of $(e_I)_{I=1}^{n}$, 
    \item $k:= k_{IJ}\o^I \otimes \o^J$ is the second fundamental form,
    \item $(e_I)_{I=1}^{n}$ is a Fermi-Walker propagated frame, i.e.\
    \begin{equation} \label{eq: Fermi-Walker equation}
        \n_{e_0}e_I
            = e_I \ln(N)e_0,
    \end{equation}
    \item $\g_{IJK} = \o^K([e_I, e_J])$,
\end{itemize}
for all $I, J, K $. Finally, the initial data induced on $\S_{t_{0}}$ by $(M,g,\phi)$ is $(\S,\bh,\bk,\bp_0, \bp_1)$, where
$\bh:=\bo^{I}\otimes \bo^{I}$ and $\bk:=\bk_{IJ}\bo^{I}\otimes\bo^{J}$. 
\end{prop}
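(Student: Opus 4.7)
The proof is essentially a systematic translation of the FRS variables into standard geometric quantities on the Lorentz manifold $(M,g)$, followed by an identification of the FRS equations with the $3+1$ decomposition of the Einstein-non-linear scalar field system in CMC gauge with a Fermi-Walker propagated frame. I would organize this into three blocks, each of which reduces to a uniqueness-of-ODE argument or to a standard computation with the Gauss-Codazzi-Mainardi equations.

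\textbf{Block one: kinematic preliminaries.} I first verify that $(e_I)_{I=1}^n$ and $(\o^I)_{I=1}^n$ remain a frame and its dual coframe on every slice $\S_t$. Setting $M^J_{\phantom{J}I} := \o^J(e_I)$ and differentiating with $e_0$ via \eqref{eq: transport frame}-\eqref{eq: transport co-frame} yields a linear ODE for $M^J_{\phantom{J}I}$, whose unique solution with initial value $\delta^J_I$ is the constant solution $\delta^J_I$ (using the symmetry $k_{IJ}=k_{JI}$). Consequently $h = \o^I\otimes\o^I$ is a smooth Riemannian metric on $\S_t$, the metric $g$ in \eqref{eq: form of the metric} is Lorentzian, the $\S_t$ are spacelike, and $e_0 = N^{-1}\d_t$ is the future unit normal. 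Differentiating $h$ along $e_0$ using \eqref{eq: transport co-frame} and the symmetry of $k_{IJ}$ gives $\tfrac12\mathcal L_{e_0}h = k_{IJ}\o^I\otimes\o^J$, so $k_{IJ}\o^I\otimes\o^J$ is indeed the second fundamental form of $\S_t$. The commutator $[e_0,e_I]$ is computed directly from $e_0 = N^{-1}\d_t$, \eqref{eq: transport frame} and $[\d_t,E_i]=0$, giving $[e_0,e_I] = -k_{IJ}e_J + N^{-1}e_I(N)\,e_0$; combined with $\n_{e_I}e_0 = k_{IJ}e_J$ (itself a consequence of the previous computation and $g(e_0,e_0)=-1$), this yields the Fermi-Walker equation \eqref{eq: Fermi-Walker equation}. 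Finally, setting $\tilde\g_{IJK} := \o^K([e_I,e_J])$, a computation of $e_0(\tilde\g_{IJK})$ using the just-established commutator formula and \eqref{eq: transport frame}--\eqref{eq: transport co-frame} reproduces the right-hand side of \eqref{eq:concoef} with $\g$ replaced by $\tilde\g$; ODE uniqueness and the initial condition $\tilde\g|_{t_0} = \bga$ give $\g_{IJK} = \tilde\g_{IJK}$ throughout $M$.

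\textbf{Block two: the scalar field equation and CMC preservation.} For the scalar field equation, I expand $\Box_g\phi$ in the orthonormal frame $(e_0,e_I)$; using $\n_{e_I}e_0 = k_{IJ}e_J$ (so that the trace of the second fundamental form appears as $1/t$ once CMC is established) and the Fermi-Walker property, one finds $\Box_g\phi = -e_0e_0(\phi) + e_Ie_I(\phi) - t^{-1}e_0(\phi) + N^{-1}e_I(N)e_I(\phi) - \g_{JII}e_J(\phi)$, which upon combination with \eqref{eq:scalarfield} yields \eqref{eq: scalar equation conclusion}. Propagation of the CMC condition $k_{II}=1/t$ is the subtlest consistency check: I would take the trace of \eqref{eq:sff}, substitute the Hamiltonian constraint \eqref{eq:hamconstraint} to eliminate the scalar-curvature combination $2e_I(\g_{IJJ}) - \tfrac14\g_{IJK}(\g_{IJK}+2\g_{IKJ}) - \g_{IJJ}\g_{IKK}$, and compare the resulting expression with the lapse equation \eqref{eq:LapseEquation}; the two match precisely under the standing assumption $k_{II}=1/t$, yielding $e_0(k_{II}) = -1/(Nt^2) = e_0(1/t)$. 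Since equality holds at $t=t_0$ by assumption on $\bk$, the CMC condition propagates.

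\textbf{Block three: the Einstein equations.} With $k_{II}=1/t$ established and \eqref{eq: scalar equation conclusion} in hand, the remaining task is to verify the three independent frame components of $\roRic_g = \md\phi\otimes\md\phi + \tfrac{2}{n-1}(V\circ\phi)g$. Using the standard Gauss-Codazzi-Mainardi decomposition of $\roRic_g$ on a spacelike foliation with lapse $N$ and zero shift, and the standard expression for the intrinsic Ricci $r_{IJ}$ of $(\S_t,h)$ in terms of $\g_{IJK}$ and its frame derivatives, the tangential component $\roRic_g(e_I,e_J)$ becomes a sum whose time-derivative part is exactly $-\mathcal L_{e_0}k_{IJ}$; substituting \eqref{eq:sff} collapses the expression to $e_I(\phi)e_J(\phi) + \tfrac{2}{n-1}(V\circ\phi)\delta_{IJ}$. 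The mixed component $\roRic_g(e_0,e_I)$, via Codazzi, reduces to $e_J(k_{IJ}) - e_I(k_{JJ}) + (\text{structure-coefficient terms})$, and \eqref{eq:MC} then gives precisely $e_0(\phi)e_I(\phi)$. The purely normal component $\roRic_g(e_0,e_0)$ is obtained from the trace of the Gauss equation together with the lapse equation \eqref{eq:LapseEquation} (equivalently, from taking a trace of the $k$-evolution equation and adding the Hamiltonian constraint \eqref{eq:hamconstraint}); the result is $(e_0\phi)^2 - \tfrac{2}{n-1}V\circ\phi$, completing \eqref{eq: field equations conclusion}. The final assertion about the induced data on $\S_{t_0}$ is then immediate from the initial conditions.

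\textbf{Main obstacle.} The core difficulty is not conceptual but organizational: the consistent book-keeping required to match the FRS ODE system against the $3+1$ Einstein equations in a Fermi-Walker frame, and in particular the verification that the algebraic scalar-curvature identity $\roScal_h = 2e_I(\g_{IJJ}) - \tfrac14\g_{IJK}(\g_{IJK}+2\g_{IKJ}) - \g_{IJJ}\g_{IKK}$ links the trace of \eqref{eq:sff}, the Hamiltonian constraint \eqref{eq:hamconstraint}, and the lapse equation \eqref{eq:LapseEquation} into a self-consistent system. The sign conventions for $k$ and the precise distribution of second derivatives of $N$ between Mainardi and lapse terms are where errors typically creep in, so those identities require the most care.
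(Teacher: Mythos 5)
Your proposal follows essentially the same strategy as the paper's, but there is a gap at the start of Block one. You invoke the symmetry $k_{IJ}=k_{JI}$ as a given in order to conclude that $M^J_{\phantom{J}I}=\delta^J_I$ is the unique solution of the relevant ODE, but only the initial data $\bk_{IJ}=\bk_{JI}$ are symmetric by hypothesis; the symmetry of $k_{IJ}$ on all of $M$ must first be propagated. This is essential: without it the transport equation for $\o^J(e_I)-\delta^J_I$ carries the inhomogeneity $N(k_{JI}-k_{IJ})$ and the duality argument does not close. The fix is quick --- by \eqref{eq:sff} the combination $e_0 k_{IJ}+t^{-1}k_{IJ}$ is manifestly symmetric in $I,J$, so the skew-symmetric part of $k_{IJ}$ solves a homogeneous first-order transport equation with vanishing initial data and therefore vanishes --- but it must be carried out before the duality step.

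Your CMC-propagation argument in Block two is also worth tightening. Routing through the Hamiltonian constraint and then ``comparing the resulting expression with the lapse equation \eqref{eq:LapseEquation}'' is unnecessary and, as phrased (``the two match precisely under the standing assumption $k_{II}=1/t$''), reads as circular. The paper's route is more direct: taking the trace of \eqref{eq:sff} and substituting \eqref{eq:LapseEquation} (the scalar-curvature combination cancels identically, with no need for \eqref{eq:hamconstraint}) gives $\d_t k_{II}+\tfrac{N}{t}k_{II}=\tfrac{N-1}{t^2}$, a linear ODE whose unique solution with initial datum $1/t_0$ is $1/t$, without presupposing $k_{II}=1/t$ anywhere. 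The rest of your argument --- the scalar wave equation, and the Gauss--Codazzi decomposition together with the constraint equations to verify the Einstein equations component by component --- matches the paper's in substance. You also omit the final claim that the slices $\S_t$ are Cauchy hypersurfaces, which the paper handles by a standard argument.
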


\begin{remark}
  Our sign convention concerning the second fundamental form is opposite to that of \cite{GIJ}. Moreover, $\gamma_{IJK}$ here denotes the structure
  coefficients (with the last index lowered), for the frame $(e_I)_{I=1}^{n}$, for indices $I,J,K = 1, \hdots, n$. This should be contrasted with
  \cite{GIJ} in which $\gamma_{IJK}$ denotes the connection coefficients; see \cite[(2.18), p.~860]{GIJ}. 
\end{remark}

The following lemma will be useful when proving Proposition~\ref{prop:FieldEquationsFermiWalker}:

\begin{lemma} \label{lemma: spatial curvatures}
  Let $(\S,h)$ be a Riemannian manifold with a smooth global orthonormal frame $(e_I)_{I=1}^{n}$. Then the Ricci and scalar curvature are given by
  \begin{align}
    \begin{split}
    \roRic_h(e_I, e_J) 
    =& e_K (\g_{K(IJ)}) + e_{(I} (\g_{J)KK}) - \g_{I(KL)} \g_{J(KL)}\\
    & + \tfrac14 \g_{KLI} \g_{KLJ} - \g_{KLL} \g_{K(IJ)},
    \end{split}\label{eq: SpatialRicciCurvature} \\
    \roScal_h 
    &= 2 e_I (\g_{IJJ}) - \tfrac14 \g_{IJK} (\g_{IJK} + 2 \g_{IKJ}) - \g_{IJJ} \g_{IKK},\label{eq:SpatialScalarCurvature}
  \end{align}
  where $\gamma_{IJK} = h([e_I, e_J], e_K)$.
\end{lemma}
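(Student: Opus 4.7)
The plan is to compute $\roRic_h$ and $\roScal_h$ directly in the frame $(e_I)$ by first expressing the Levi-Civita connection in terms of the structure coefficients $\gamma_{IJK}$, then plugging into the standard curvature formula. Define the connection coefficients $\omega_{IJK} := h(\nabla_{e_I} e_J, e_K)$. Since $(e_I)$ is orthonormal, $h(e_J,e_K) = \delta_{JK}$ is constant, so all terms $X h(Y,Z)$ in Koszul's formula vanish, yielding
\begin{equation*}
  \omega_{IJK} = \tfrac{1}{2}\bigl(\gamma_{IJK} - \gamma_{JKI} - \gamma_{IKJ}\bigr).
\end{equation*}
This is antisymmetric in $(J,K)$ (which is the statement of metric compatibility) and, together with $\gamma_{IJK} = -\gamma_{JIK}$, will be the only algebraic identities used throughout.

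Next I would compute the Riemann tensor. By definition,
\begin{equation*}
  R(e_I, e_J) e_K
    = \nabla_{e_I}\nabla_{e_J} e_K - \nabla_{e_J}\nabla_{e_I} e_K - \nabla_{[e_I, e_J]} e_K,
\end{equation*}
and so
\begin{equation*}
  h(R(e_I, e_J)e_K, e_L)
    = e_I(\omega_{JKL}) - e_J(\omega_{IKL}) + \omega_{JKM}\omega_{IML} - \omega_{IKM}\omega_{JML} - \gamma_{IJM}\omega_{MKL}.
\end{equation*}
The Ricci tensor is then obtained by the contraction $\roRic_h(e_I, e_J) = h(R(e_K, e_I)e_J, e_K)$. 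Substituting the expression for $\omega$ in terms of $\gamma$, every derivative term and every quadratic term unfolds into a finite sum of the pieces appearing on the right hand side of \eqref{eq: SpatialRicciCurvature}. The derivative terms combine to give $e_K(\gamma_{K(IJ)}) + e_{(I}(\gamma_{J)KK})$, using the antisymmetry $\gamma_{IJK} = -\gamma_{JIK}$ and the implicit symmetrization in $(I,J)$ that Ricci enjoys; the quadratic terms collapse (after a short rearrangement exploiting the same antisymmetry) to $-\gamma_{I(KL)}\gamma_{J(KL)} + \tfrac{1}{4}\gamma_{KLI}\gamma_{KLJ} - \gamma_{KLL}\gamma_{K(IJ)}$.

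Finally, $\roScal_h = \delta^{IJ}\roRic_h(e_I, e_J)$ is obtained by setting $I = J$ and summing in \eqref{eq: SpatialRicciCurvature}. The term $e_K(\gamma_{K(II)}) = e_K(\gamma_{KII})$ combined with $e_I(\gamma_{IKK})$ gives the $2e_I(\gamma_{IJJ})$ contribution; the three quadratic terms recombine, after expanding the symmetrizations and using the antisymmetry of $\gamma$ in its first two slots, into $-\tfrac{1}{4}\gamma_{IJK}(\gamma_{IJK} + 2\gamma_{IKJ}) - \gamma_{IJJ}\gamma_{IKK}$, which is precisely \eqref{eq:SpatialScalarCurvature}. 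The only real obstacle is careful bookkeeping: one has to verify that the terms which are \emph{not} manifestly symmetric in $(I,J)$ (there are several, arising from $e_J(\omega_{IKK})$ and asymmetric quadratic contractions) cancel in pairs or combine into symmetrized expressions. Everything else is a mechanical expansion.
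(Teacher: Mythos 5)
Your proposal is correct and is essentially the same argument the paper uses: both proceed by expressing the connection coefficients via the Koszul formula (your $\omega_{IJK}$ is the paper's $\Gamma_{IJK}$; the two expressions agree after rewriting $-\gamma_{IKJ}=\gamma_{KIJ}$), then compute the curvature from the commutator of covariant derivatives, and finally substitute and simplify using the antisymmetries $\gamma_{(IJ)K}=0$ and $\Gamma_{I(JK)}=0$ together with the symmetry of Ricci. The only cosmetic difference is that you write down the full Riemann tensor $h(R(e_A,e_B)e_C,e_D)$ and then contract to Ricci, whereas the paper contracts the Ricci definition directly, observing on the way that two of the four quadratic terms cancel; this buys nothing but also costs nothing.
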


\begin{proof}[Proof of Lemma~\ref{lemma: spatial curvatures}]
Define first $\G_{IJK} := h \left( \nabla^h_{e_I} e_J, e_K \right)$.
The Ricci curvature is given by
\begin{equation}
\begin{split} \label{eq: First spatial Ricci}
    \roRic_h(e_I, e_J) 
        &= h(\n^h_{e_K} \n^h_{e_I}e_J, e_K) - h(\n^h_{\n^h_{e_K} e_I}e_J, e_K) \\
        &\qquad - h(\n^h_{e_I} \n^h_{e_K}e_J, e_K) + h(\n^h_{\n^h_{e_I} e_K}e_J, e_K) \\
        &= e_K (\G_{IJK}) - \Gamma_{IJL} \Gamma_{KKL} - \Gamma_{KIL} \Gamma_{LJK} \\
        &\qquad - e_I (\G_{KJK}) + \G_{KJL}\G_{IKL} + \Gamma_{IKL}\Gamma_{LJK} \\
        &= e_K (\G_{IJK}) - \Gamma_{IJL} \Gamma_{KKL}  - \Gamma_{KIL} \Gamma_{LJK} - e_I (\G_{KJK}),
\end{split}
\end{equation}
where we in the last line used that
\[
    \G_{KJL}\G_{IKL} + \Gamma_{IKL}\Gamma_{LJK}
        = -\G_{KJL}\G_{ILK} + \Gamma_{IKL}\Gamma_{LJK} = 0.
\]
The Koszul formula and the fact that $\g_{IJK}$ is skew-symmetric in $I$ and $J$ give the following three ways of writing $\Gamma_{IJK}$:
\begin{equation} \label{eq: Gamma as gamma}
\begin{split}
    \Gamma_{IJK}
        &= \tfrac12 \left( \gamma_{IJK} + \gamma_{KIJ} - \gamma_{JKI} \right) 
        = \tfrac12 \gamma_{KIJ} - \gamma_{J(KI)}= \tfrac12 \gamma_{IJK} + \gamma_{K(IJ)}.
\end{split}
\end{equation}
Hence, the third term in \eqref{eq: First spatial Ricci} is given by
\begin{align*}
    - \G_{KIL} \G_{LJK}
        &= - \tfrac{1}{4} \big(\g_{LKI} - 2 \g_{I(KL)} \big) \big(\g_{KLJ} - 2 \g_{J(KL)} \big) \\
        &= \tfrac{1}{4} \g_{KLI} \g_{KLJ} - \g_{I(KL)} \g_{J(KL)}.
\end{align*}

Using \eqref{eq: Gamma as gamma} again, and the special cases $\G_{KJK} = \g_{KJK} = - \gamma_{JKK}$ and $\G_{LLK} = \g_{KLL}$, Equation \eqref{eq: First spatial Ricci} becomes
\begin{align} \label{eq: Ricci h main computation}
\begin{split}
\roRic_h(e_I, e_J) 
    = & \tfrac12 e_K (\g_{IJK}) + e_K (\g_{K(IJ)}) + e_I (\g_{JKK}) \\
    & - \g_{I(KL)} \g_{J(KL)} + \tfrac14 \g_{KLI} \g_{KLJ}
     - \tfrac12\g_{KLL}\left( \g_{IJK} + 2 \g_{K(IJ)} \right).
\end{split}
\end{align}
Since the left-hand side is symmetric in $I$ and $J$, each term on the right-hand side should also be symmetrized in $I$ and $J$.
Since $\g_{(IJ)K} = 0$, \eqref{eq: Ricci h main computation} simplifies to \eqref{eq: SpatialRicciCurvature}, as claimed.

Finally, by taking the trace of \eqref{eq: SpatialRicciCurvature},
we obtain \eqref{eq:SpatialScalarCurvature}.
\end{proof}

\begin{proof}[Proof of Proposition~\ref{prop:FieldEquationsFermiWalker}]
  We begin by showing that $k_{IJ} = k_{JI}$. Due to \eqref{eq:sff}, $e_0 k_{IJ} + t^{-1}k_{IJ}$ is symmetric in
  $I$ and $J$. Since, in addition, $k_{IJ}|_{t_0}= \bk_{IJ}= \bk_{JI}= k_{JI}|_{t_0}$, we conclude that the skew-symmetric part of $k_{IJ}$
  vanishes by uniqueness for first order transport equations. Combining this observation with \eqref{eq: transport frame} and
  \eqref{eq: transport co-frame} yields
  \begin{align*}
    \d_t \left( \o^I_i e_J^i -\delta^{I}_{J}\right)
    &= \o^I_i \left( - N k_{JM} e_M^i \right) + N k_{IM} \o_i^M e_J^i \\
    &= - N k_{MJ} \left(\o_i^I e_M^i-\delta^{I}_{M}\right) + N k_{IM} \left(\o^M_i e^i_J-\delta^{M}_{J}\right).
  \end{align*}
  This is a homogeneous system of first order ODE's for $\o^I_i e_J^i -\delta^{I}_{J}$ with vanishing initial data. Thus
  $\o^I_i e_J^i =\delta^{I}_{J}$ on $M$. This shows that $e_I^i$ are the elements of an invertible matrix, which implies that
  $(e_I)_{I=1}^{n}$ are a frame on $\S_t$ for all $t \in (a, b)$, since $(E_i)_{i=1}^{n}$ are. Moreover, this also shows
  that the matrix $\o_i^I$ is the inverse of $e_I^i$, which implies that $(\o^I)_{I=1}^{n}$ is the dual frame of $(e_I)_{I=1}^{n}$.

  Next, we check that $k = k_{IJ} \o^I \otimes \o^J$ equals the second fundamental form associated to the metric $g$, say $\tilde k$.
  Recalling that $[\d_t, E_i] = 0$, we use \eqref{eq: transport frame} to compute
  \begin{equation}\label{eq:mldtoKeI}
    \begin{split}
      \ml_{\d_t} \o^K (e_I) = & -\o^{K}([\d_{t},e_{I}])=\o^{K}(Nk_{IJ}e_{J})=Nk_{IK}.
    \end{split}
  \end{equation}
  Combining this observation with $k_{IJ} = k_{JI}$ yields
  \begin{equation*}
    \begin{split}
      \tilde k(e_I, e_J)
      &= \tfrac1{2N} \ml_{\d_t} \left( \o^K \otimes \o^K \right)(e_I, e_J) \\
      &= \tfrac1{2N} \left( \ml_{\d_t} \o^K (e_I) \de_{KJ} + \ml_{\d_t} \o^K(e_J) \de_{IK} \right) =k_{IJ}
    \end{split}
   \end{equation*}
  as claimed.

  Let us now show that $e_1,\dots,e_n$ satisfy \eqref{eq: Fermi-Walker equation}.
  Due to \eqref{eq: transport frame}, it can be computed that
  \begin{equation}\label{eq:ezeIcomm}
    [e_{0},e_{I}]=-k_{IK}e_{K}+e_{I}(\ln N)e_{0}.
  \end{equation}
  Thus
  \begin{equation}\label{eq:nablaezeIeJ}
    g(\nabla_{e_{0}}e_{I},e_{J})=g([e_{0},e_{I}],e_{J})+k_{IJ}=-g(k_{IK}e_{K},e_{J})+k_{IJ}=0.
  \end{equation}
  Next,
  \begin{equation}\label{eq:nablaezeIez}
    \begin{split}
      g(\nabla_{e_{0}}e_{I},e_{0})=g([e_{0},e_{I}],e_{0})+g(\nabla_{e_{I}}e_{0},e_{0})=-e_{I}(\ln N),
    \end{split}
  \end{equation}
  where we use the fact that $e_{0}$ is a unit vector field and (\ref{eq:ezeIcomm}) in the second step. Combining
  (\ref{eq:nablaezeIeJ}) and (\ref{eq:nablaezeIez}) yields (\ref{eq: Fermi-Walker equation}).  

  We now prove that $\gamma_{IJK} = \o^K([e_I, e_J])$. Let $\wg_{IJK} := \o^K([e_I, e_J])$. The Jacobi identity gives
  \begin{equation}\label{eq:Jacobi identity}
    0= g \left( [e_0, [e_I, e_J]] + [e_I, [e_J, e_0]] + [e_J, [e_0, e_I]], e_K \right).
  \end{equation}
  On the other hand, (\ref{eq:ezeIcomm}) yields
  \begin{equation*}
    \begin{split}
      g([e_{I},[e_{J},e_{0}]],e_{K}) = & g([e_{I},k_{JL}e_{L}-e_{J}(\ln N)e_{0}],e_{K})\\
      = & g(e_{I}(k_{JL})e_{L}+k_{JL}\wg_{ILM}e_{M}-e_{J}(\ln N)[e_{I},e_{0}],e_{K}).
    \end{split}
  \end{equation*}
  Appealing to (\ref{eq:ezeIcomm}) again yields
  \begin{equation}\label{eq:terms two and three}
    g([e_{I},[e_{J},e_{0}]],e_{K})=e_{I}(k_{JK})+k_{JL}\wg_{ILK}-e_{J}(\ln N)k_{IK}.
  \end{equation}
  A similar argument, again appealing to (\ref{eq:ezeIcomm}), yields
  \begin{equation}\label{eq:term one}
    g( [e_0, [e_I, e_J]],e_{K})=e_{0}(\wg_{IJK})-\wg_{IJM}k_{MK}.
  \end{equation}  
  Combining (\ref{eq:Jacobi identity}), (\ref{eq:terms two and three}) and (\ref{eq:term one}) yields
\[
    e_{0}\wg_{IJK} 
        = - 2 N^{-1} e_{[I}(Nk_{J]K}) - k_{IL} \wg_{LJK} - k_{JL} \wg_{ILK} + k_{KL} \wg_{IJL}.
\]
Hence, $\wg_{IJK}$ is a solution to \eqref{eq:concoef} with initial data
\[
    \wg_{IJK}|_{t_0}
        =  \o^K([e_I, e_J])|_{t_0}
        = \bo^K([\be_I, \be_J])
        = \bga_{IJK}
        = \gamma_{IJK}|_{t_0}.
\]
Again by uniqueness of solutions to transport equations, we conclude that 
\[
    \g_{IJK} = \wg_{IJK} = \o^K([e_I, e_J])
\]
as claimed.

Next we show that the mean curvature is given by $\frac1t$. Taking the trace over Equation \eqref{eq:sff} and applying Equation
\eqref{eq:LapseEquation} yields
\begin{equation} \label{eq: trace k ODE}
    \d_t k_{II} + \tfrac N t k_{II}
        = \tfrac {N - 1}{t^2}.
\end{equation}
The unique solution to \eqref{eq: trace k ODE} with initial condition $\k_{II}|_{t_0} = \bk_{II} = \frac1{t_0}$ is given by $\k_{II} = \frac1t$, proving the claim.

We now turn to proving that $g$ satisfies the Einstein-non-linear scalar field equations with potential $V$. For this we use the notation
\begin{align*}
    G_g
        &= \roRic_g - \tfrac12 \roScal_g g, \\
    T_{g, \phi}
        &= \md \phi \otimes \md \phi - \left[\tfrac{1}{2}| \md \phi|_{g}^{2} + V\circ\phi\right]g.
\end{align*}
Combining \cite[(13.5), p.~149]{RinCauchy} with Lemma~\ref{lemma: spatial curvatures} and Equation \eqref{eq:hamconstraint} yields
\begin{equation}\label{eq:GgmTgzz}
  \begin{split}
    \left( G_g - T_{g, \phi} \right)(e_0, e_0)
    = & \tfrac12\left( \roScal_h - k_{IJ}k_{IJ} + t^{-2} 
    - (e_0\phi)^2 - e_I \phi e_I \phi - 2 V \circ \phi \right)\\
    = & \tfrac12 \left( 2 e_I (\g_{IJJ}) - \tfrac14 \g_{IJK} (\g_{IJK} + 2 \g_{IKJ}) - \g_{IJJ} \g_{IKK} \right) \\
    & + \tfrac12 \left( - k_{IJ}k_{IJ} + t^{-2}
    - (e_0\phi)^2 - e_I \phi e_I \phi - 2 V \circ \phi \right)=0.
  \end{split}  
\end{equation}
Due to \cite[(13.6), p.~149]{RinCauchy} and the fact that $\tr_hk$ is constant on $\S_t$, 
\begin{align*}
    \left( G_g - T_{g, \phi} \right)(e_0, e_I)        
        & = \rodiv_h (k)(e_I) - e_0\phi e_I\phi \\
        & = e_J k_{JI} - \Gamma_{JJK} k_{KI} - \Gamma_{JIK}k_{JK}  - e_0\phi e_I \phi \\
        & = e_J k_{JI} - \gamma_{KJJ} k_{KI} + \gamma_{IKJ} k_{KJ} - e_0\phi e_I \phi=0,
\end{align*}
where we appealed to (\ref{eq: Gamma as gamma}), the symmetry of $k_{IJ}$ and the antisymmetry of $\g_{IJK}$ in the first two indices
in the third step; and to (\ref{eq:MC}) in the last step. 

For the remaining components of the Einstein equations, note that (\ref{eq: Fermi-Walker equation}) yields
\begin{align*}
    g(\n_{e_0}e_0, e_I) &= - g(e_0, \n_{e_0}e_I) = e_I \ln(N).
\end{align*}
Since $g(\n_{e_0}e_0, e_0)= 0$, it follows that
\begin{equation} \label{eq: nabla e0 e0}
    \n_{e_0}e_0
        = \rograd_h \left( \ln(N) \right).
\end{equation}
Combining this with (\ref{eq: Fermi-Walker equation}), 
\begin{equation}\label{eq:ezkIJfstep}
  \begin{split}
    e_0 k_{IJ}
    &= e_0g\left(\n_{e_I} e_0, e_J\right) \\
    &= g \left( \n_{e_0} \n_{e_I} e_0, e_J \right) + (e_J\ln(N)) g\left( \n_{e_I}e_0, e_0 \right) \\
    &= \roRiem_g(e_0, e_I, e_0, e_J) + g\left(\n_{e_I} \n_{e_0} e_0, e_J \right) + g\left( \n_{[e_0, e_I]} e_0, e_J \right).
  \end{split}  
\end{equation}
Due to \eqref{eq: nabla e0 e0},
\begin{equation*}
  \begin{split}
    g\left(\n_{e_I} \n_{e_0} e_0, e_J \right) = & \roHess(\ln(N))(e_I, e_J)\\
    = & N^{-1}\roHess(N)(e_I, e_J)-e_{I}(\ln N)e_{J}(\ln N).
  \end{split}
\end{equation*}
Due to \eqref{eq:ezeIcomm} and \eqref{eq: nabla e0 e0}, 
\[
g\left( \n_{[e_0, e_I]} e_0, e_J \right)=-k_{IL}k_{LJ}+e_{I}(\ln N)e_{J}(\ln N). 
\]
Combining the last two equalities yields
\[
g\left(\n_{e_I} \n_{e_0} e_0, e_J \right) + g\left( \n_{[e_0, e_I]} e_0, e_J \right)
= N^{-1} e_{(I} e_{J)} N - \gamma_{K(IJ)} e_K\ln(N) - k_{IL}k_{LJ},
\]
where we appealed to (\ref{eq: Gamma as gamma}) and the symmetry of the Hessian. 
Next,
\begin{align*}
     \roRiem_g(e_0, e_I, e_0, e_J)
        = & \roRic_g(e_I, e_J) + \roRiem_g(e_K, e_I, e_K, e_J) \\
        = & \big( \roRic_g - \md \phi \otimes \md \phi - \tfrac2{n-1} (V \circ \phi) g\big) (e_I, e_J) \\
        & + e_I(\phi) e_J(\phi) + \tfrac2{n-1}( V \circ \phi ) \de_{IJ}\\
        & +\roRiem_h(e_K, e_I, e_K, e_J) - k_{LL}k_{IJ} + k_{IL}k_{LJ},
\end{align*}
where we appealed to the Gau\ss\ equations, \cite[Theorem~5, p.~100]{oneill}, in the last step. Combining the last two equalities with
(\ref{eq:ezkIJfstep}) and $k_{LL}=1/t$ yields
\begin{align*}
    e_0 k_{IJ}
    = & \big( \roRic_g - \md \phi \otimes \md \phi - \tfrac2{n-1} (V \circ \phi) g\big) (e_I, e_J) - t^{-1} k_{IJ} + N^{-1} e_{(I} e_{J)} N\\
    & - N^{-1} \gamma_{K(IJ)} e_K N  - \roRic_h(e_I, e_J) + e_I(\phi) e_J(\phi) + \tfrac2{n-1}\left( V \circ \phi \right) \de_{IJ}.
\end{align*}
Equation \eqref{eq:sff}, together with Lemma~\ref{lemma: spatial curvatures}, therefore implies that
\begin{equation}\label{eq: contracted I J field equation}
     \big( \roRic_g - \md \phi \otimes \md \phi - \tfrac2{n-1} (V \circ \phi) g\big) (e_I, e_J)
        = 0.
\end{equation}
Combining (\ref{eq:GgmTgzz}) and this with the fact that $\left(G_g - T_{g, \phi}\right)(e_0, e_0) = 0$, proven above, we conclude that
\[
    \roScal_g - \abs{\md \phi}_g^2 - \tfrac{2(n+1)}{n-1}V\circ\phi= 0.
\]
This, combined with \eqref{eq: contracted I J field equation}, shows that
\[
    \left( G_g - T_{g, \phi} \right)(e_I, e_J)
        = 0.
\]
We thus have shown that $G_g - T_{g, \phi} = 0$, which is equivalent to \eqref{eq: field equations conclusion}.

Finally, using \eqref{eq: nabla e0 e0}, the scalar wave equation in a Fermi-Walker frame becomes
\begin{align*}
    \Box_{g} \phi
        &= - e_0 e_0 \phi + \left(\n_{e_0}e_0\right) \phi + e_I e_I \phi - \left( \n_{e_I}e_I \right) \phi \\
        &= - e_0 e_0 \phi + e_I\ln(N) e_I \phi + e_I e_I \phi - \Gamma_{IIJ}e_J\phi - k_{II} e_0 \phi \\
        &= - e_0 e_0 \phi + e_I\ln(N) e_I \phi + e_I e_I \phi - \gamma_{JII}e_J\phi - t^{-1} e_0 \phi.
\end{align*}
Equation \eqref{eq:scalarfield} therefore implies Equation \eqref{eq: scalar equation conclusion}. The only thing
that remains to be demonstrated is that the leaves $\S_t$ are Cauchy hypersurfaces. However, this follows from
an argument similar to that given in the proof of \cite[Proposition~1, p.~152]{RinInv}. This concludes the proof.
\end{proof}

\subsection{Sobolev norms} \label{sec:sobolev}

Since the Einstein-non-linear scalar field equations are expressed in terms of the Fermi-Walker frame as in
Proposition~\ref{prop:FieldEquationsFermiWalker}, and the Fermi-Walker frame is expressed in terms of the reference
frame, all involved equations are systems of \emph{scalar} equations. The basic Sobolev norms are therefore
the ones introduced in Appendix~\ref{sec:SobolevInequalities} below, see (\ref{seq:Sobolev norms}). 
The functions $e_I^i, \o_i^I, \gamma_{IJK}, k_{IJ}$ appear in Proposition~\ref{prop:FieldEquationsFermiWalker} as
components, depending on indices $i, I, J, K = 1, \hdots, n$. In such cases, we abbreviate by using the notation
\begin{subequations}\label{seq:norms of indexed quantities}
  \begin{align}
    \norm{e}_{C^{s}(\S)}
    := & \textstyle{\sum}_{I,i}\|e^{i}_{I}\|_{C^{s}(\S)},\\
    \norm{\g}_{H^s(\S)}
    := & \big(\textstyle{\sum}_{I,J,K}\|\g_{IJK}\|_{H^s(\S)}^{2}\big)^{\frac12},
  \end{align}
\end{subequations}
and similarly for other norms. We also use the canonical identification
\[
    \S_t 
        := \{ t \} \times \S
	\cong \S,
\]
in order to define the analogous Sobolev norms on $\S_t$. Consequently, the Sobolev norms on $\S_t$ only depend on the
reference frame $(E_i)_{i=1}^{n}$ and not on the induced geometry on $\S_t$.

\subsection{Initial bounds on the FRS variables}\label{ssection: Initial bounds FW}

The assumptions in Theorem~\ref{thm: big bang formation} are formulated in terms of the expansion-normalized initial data
$(\S, \bmH, \bmK, \bP_0, \bP_1)$. In this section, we show that these assumptions imply the following bounds for the FRS
initial data:

\begin{definition} \label{def:DecomposedData}
  Let $\s_p$, $\s_V$, $\s$, $k_0$, $k_1$, $(\S, h_\refer)$, $(E_{i})_{i=1}^{n}$ and $V$ be as in Theorem~\ref{thm: big bang formation}.
  \emph{Diagonal FRS initial data} is a collection of smooth
  \begin{equation}\label{eq:diagonal Fermi Walker ID}
    \be_I^i, \bq_I, \bp_0, \bp_1: \S \to \rn{},
  \end{equation}
  for $i,I = 1,\dots,n$, such that $(\be_I)_{I=1}^n$ is a global frame on $\S$, where $\be_I := \be_I^i E_i$, and
  such that $\sum_{I=1} \bq_I = 1$. Given any $\rho > 0$, they are said to satisfy the \emph{FRS expansion-normalized bounds
  of regularity $k_1$ for $\rho$ at $t_0$} if  
  \begin{equation}\label{eq: bar pi diagonal conditions}
    \bq_I + \bq_J - \bq_K< 1 - \s_p
  \end{equation}
  for any $I \neq J$ and
  \begin{subequations}\label{eq:FermiWalkerBounds}
    \begin{align}
      \|t_0^{\bq_I}\be_I^i\|_{H^{k_{1}+2}(\S_{t_0})}
      + \|t_0^{-\bq_I}\bo^{I}_i\|_{H^{k_{1}+2}(\S_{t_0})}
      & \leq \rho, \label{eq: e bar o bar assumption}\\
      \ldr{\ln(t_0)}^{-1}\|t_0^{\bq_I + \bq_J - \bq_K }\bga_{IJK}\|_{H^{k_1+1}(\S_{t_0})} 
      + \norm{\bq_I}_{H^{k_1 + 2}(\S_{t_0})}
      &\leq \rho, \label{eq: gamma bar p bar assumption} \\
      %
      \norm{\bp_0 - t_0 \ln(t_0) \bp_1}_{H^{k_1 + 2}} + t_0 \|\bp_1\|_{H^{k_{1}+2}(\S_{t_0})} 
      &\leq \rho\label{eq: scalar field bar assumption}
    \end{align}
  \end{subequations}  
  for all $i,I, J, K $, where we define, for all $I, J, K $,
  \begin{itemize}
  \item the co-frame $(\bo^I)_{I=1}^n$ as the dual frame to $(\be_I)_{I=1}^n$,
    with $\bo^I_i = \bo^I(E_i)$;
  \item the structure coefficients $\bga_{IJK} := \bo^K([\be_I, \be_J])$.
  \end{itemize}
  If, in addition to the above, 
  \begin{equation}\label{eq: bar pi diagonal non-deg conditions}
    \abs{\bq_I - \bq_J}> \rho^{-1},
  \end{equation}
  then the data (\ref{eq:diagonal Fermi Walker ID}) are said to satisfy the \emph{non-degenerate FRS expansion-normalized bounds
  of regularity $k_1$ for $\rho$ at $t_0$}.
\end{definition}

We now show that these bounds follow from the assumptions in Theorem~\ref{thm: big bang formation}.
In the statement, and in what follows, it is convenient to recall the conventions concerning the
Einstein summation convention introduced in Subsection~\ref{ssection:notation}.

\begin{prop} \label{prop: initial data estimate}
  Let $\s_p$, $\s_V$, $\s$, $k_0$, $k_1$, $(\S, h_\refer)$, $(E_i)_{i=1}^{n}$ and $V$ be as in Theorem~\ref{thm: big bang formation}. Let, moreover,
  $\zeta_0 > 0$ and
  assume that $(\S, \bh, \bk, \bp_0, \bp_1)$ are a $\s_p$-admissible solution to the constraint equations \eqref{seq:con} with constant mean curvature
  $1/t_{0}\in (0,\infty)$, such that the corresponding expansion-normalized quantities $(\bmH, \bmK, \bP_0, \bP_1)$
  (see~Definitions~\ref{def: exp nor Weingarten map}-\ref{def:Phizodef}) satisfy (\ref{eq: Sobolev bound assumption})
  and $|\bq_I-\bq_J|>\zeta_{0}^{-1}$ for $I\neq J$. Then there is a $\rho_0 > 0$, depending
  only on $\zeta_0$, $\s$, $k_0$, $k_1$, $(\S,h_{\refer})$ and $(E_i)_{i = 1}^n$, and unique (up to a choice of sign for each $\be_{I}:=\be_{I}^{i}E_{i}$)
  $\be_I^i$, $\bq_I\in C^{\infty}(\S,\rn{})$, $i, I = 1, \hdots, n$, such that $(\be_{I})_{I=1}^{n}$ is an orthonormal frame of $(\S,\bh)$,
  \begin{align*}
    \bh
    &= \bo^I \otimes \bo^I,\ \ \ \bk =  \tfrac{\bq_I}{t_0} \bo^I \otimes \bo^I,
  \end{align*}
  where $(\bo^{I})_{I=1}^{n}$ is the dual frame to $(\be_{I})_{I=1}^{n}$, and such that $(\be_I^i, \bq_I, \bp_0, \bp_1)$, for all $i, I $, are
  smooth diagonal FRS initial data satisfying the non-degenerate FRS expansion-normalized bounds of regularity $k_1$ for $\rho_0$ at $t_0$.
\end{prop}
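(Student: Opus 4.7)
I plan to verify the three bounds in (\ref{eq:FermiWalkerBounds}) separately, with a spectral decomposition of $\bmK$ as the engine.

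The scalar-field bounds (\ref{eq: scalar field bar assumption}) are immediate from the hypotheses. Since $\bth = 1/t_0$, Definition~\ref{def:Phizodef} gives $\bP_1 = t_0\bp_1$ and $\bP_0 = \bp_0 - t_0\ln(t_0)\bp_1$, so the two summands in (\ref{eq: scalar field bar assumption}) coincide with $\|\bP_1\|_{H^{k_1+2}}$ and $\|\bP_0\|_{H^{k_1+2}}$, both bounded by $\zeta_0$ via (\ref{eq: Sobolev bound assumption}). The admissibility condition (\ref{eq: bar pi diagonal conditions}) is just (\ref{cond: alternating q}), and (\ref{eq: bar pi diagonal non-deg conditions}) is automatic once $\rho_0 \geq \zeta_0$.

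To construct the eigenframe, I use Sobolev embedding (valid since $k_1+2 > n/2$) to obtain uniform pointwise bounds on $\bmH$, $\bmH^{-1}$ and $\bmK$. Together with the hypothesized eigenvalue separation and the $\bh$-symmetry of $\bmK$, this forces the spectrum of $\bmK$ to be pointwise simple with a gap comparable to $\zeta_0^{-1}$. The rank-one projector onto the $\bq_I$-eigenspace can then be realized as the Riesz integral
\[
    \bar P_I = \frac{1}{2\pi i}\oint_{\Gamma_I}(\lambda\Id - \bmK)^{-1}\,d\lambda,
\]
with $\Gamma_I$ a loop of radius comparable to $\zeta_0^{-1}$ around $\bq_I$. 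Invoking the eigenvalue-regularity results of the second appendix together with standard Moser-type Sobolev algebra, I expect to obtain $H^{k_1+2}$-bounds on $\bq_I$ and on the $\bmH$-unit section $\tilde e_I$ of the range of $\bar P_I$ (unique up to a global sign by rank-oneness), depending only on the parameters allowed in the statement. Setting $\be_I := t_0^{-\bq_I}\tilde e_I$ and $\bo^I := t_0^{\bq_I}\tilde\omega^I$ then produces a $\bh$-orthonormal frame diagonalizing $\bk$, and the identities $t_0^{\bq_I}\be_I = \tilde e_I$, $t_0^{-\bq_I}\bo^I = \tilde\omega^I$ make (\ref{eq: e bar o bar assumption}) immediate.

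For the structure coefficients, a direct commutator computation yields
\[
    [\be_I,\be_J] = t_0^{-\bq_I-\bq_J}\bigl(\ln(t_0)\tilde e_J(\bq_I)\tilde e_I - \ln(t_0)\tilde e_I(\bq_J)\tilde e_J + [\tilde e_I,\tilde e_J]\bigr),
\]
which, contracted with $\bo^K = t_0^{\bq_K}\tilde\omega^K$, gives
\[
    t_0^{\bq_I+\bq_J-\bq_K}\bga_{IJK} = \ln(t_0)\bigl(\tilde e_J(\bq_I)\delta_{IK} - \tilde e_I(\bq_J)\delta_{JK}\bigr) + \tilde\omega^K([\tilde e_I,\tilde e_J]).
\]
The $\ldr{\ln(t_0)}$ factor in (\ref{eq: gamma bar p bar assumption}) is there precisely to absorb the explicit $\ln(t_0)$ on the right, which arises because $\bq_I$ is a non-constant function on $\S$. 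Sobolev algebra then bounds the right-hand side in $H^{k_1+1}$ by $C\ldr{\ln(t_0)}$, completing (\ref{eq: gamma bar p bar assumption}) upon setting $\rho_0$ to the maximum of all constants appearing. The main obstacle I anticipate lies in the spectral step: obtaining \emph{uniform} $H^{k_1+2}$-control of $\bq_I$ and $\tilde e_I$ from only $H^{k_1+2}$-control of $\bmK$ together with a pointwise spectral gap. Differentiating the characteristic polynomial would produce combinatorially unwieldy terms at high order, so I plan to rely on the Riesz-projector formula together with the eigenvalue-regularity results of the second appendix, which are designed for precisely this purpose.
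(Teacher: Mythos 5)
Your proposal is correct at the level of strategy and arrives at the same conclusions, but the spectral step is handled differently from the paper. The paper never introduces Riesz projectors: instead it differentiates the pair of scalar equations $\bmK_i^j v^i_I - \bq_I v^j_I = 0$ and $\bmH_{ik}v^i_I v^k_I - 1 = 0$ (with $v_I = t_0^{\bq_I}\be_I$) directly, obtaining a linear system $A_I\,E_l(v_I,\bq_I) = -(\ldots)$, and then inverts $A_I$ explicitly via the factorization $A_I = TB_I T^{-1}$, where the non-degeneracy $|\bq_I - \bq_J| > \zeta_0^{-1}$ gives a uniform bound on $B_I^{-1}$ and $\|\bmH^{-1}\|_{C^0} \leq \zeta_0$ gives $\|v_I\|_{C^0} \leq \zeta_0^{1/2}$. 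Iterating this one-step-higher-regularity estimate with Moser bounds (Lemma~\ref{lemma: products in L2}) yields the $H^{k_1+2}$ control, with an explicit induction $H^l \Rightarrow H^{l+1}$. Your Riesz-projector route is a viable alternative, and the commutator identity you use for $\bga_{IJK}$ as well as the scalar-field reduction match the paper exactly. But two cautionary remarks: first, Appendix B's Lemma~\ref{lemma:nondegeneracy} is a $C^\ell$-regularity statement for a \emph{time-parametrized} family (used in Theorem~\ref{thm:Asymptotics}), not a Sobolev-norm eigenvalue-regularity result for fixed-time data, so you cannot quote it directly here; you would have to reprove the needed $H^{k_1+2}$ control from the contour formula, differentiating $\bar P_I(x) = \tfrac{1}{2\pi i}\oint_{\Gamma_I(x)}(\lambda - \bmK)^{-1}\,d\lambda$ repeatedly and applying Moser estimates to the resulting products of resolvents and derivatives of $\bmK$ (holomorphy in $\lambda$ lets the $x$-dependent contour drop out of the derivative, so this is fine, but it must be said). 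Second, extracting a \emph{global} $\bmH$-unit section from the rank-one bundle $\operatorname{ran}\bar P_I$ requires an argument (the real eigenline bundle need not be orientable a priori); the paper's phrasing "there exist smooth vector fields $\be_I$" glosses over this too, but your proposal makes the issue slightly more visible by invoking the projector explicitly. In exchange for this extra care, the projector approach is arguably more conceptual; the paper's approach is more elementary and makes the constant-tracking transparent.
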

We use the following notational convention throughout the rest of the paper:
\begin{notation} \label{not:constant}
  Let $\s_p$, $\s_V$, $\s$, $k_0$, $k_1$, $(\S, h_\refer)$, $(E_{i})_{i=1}^{n}$ and $V$ be as in Theorem~\ref{thm: big bang formation} and
  let $\rho_{0}>0$. Then a \textit{standard constant} is a strictly positive constant that only depends on $\rho_0$, $\s_p$, $\s_V$, $k_0$, $k_1$,
  $c_{k_{1}+2}$ (see Definition~\ref{def:Vadm}), $(\S, h_\refer)$ and the
  orthonormal frame $(E_i)_{i=1}^{n}$. Constants called $C$ are from now on tacitly assumed to be standard constants, and their values might
  change from line to line.
\end{notation}
\begin{remark}
  The constant $\rho_{0}$ should be thought of as arising from $\zeta_{0}$ (appearing in the statement of Theorem~\ref{thm: big bang formation}) as
  described in Proposition~\ref{prop: initial data estimate}.
\end{remark}
To prove Proposition~\ref{prop: initial data estimate}, we first need the following lemma:

\begin{lemma} \label{le: first bound on $q_I$}
If $\bq_1, \hdots, \bq_n$ are $\sigmap$-admissible eigenvalues (see~Definition~\ref{def: alpha admissible}), then
\begin{align} 
    \bq_{I}+\bq_{J}-\bq_{K}
        &< 1 - 5 \s, \label{eq:bqboundsa} \\
    |\bq_{I}|
        &< 1 - 5 \s\label{eq:bqboundsb}
\end{align}
for all $I, J, K = 1, \hdots n$ with $I \neq J$.
\end{lemma}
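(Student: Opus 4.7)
The plan is to reduce both inequalities to the stronger form with $\sigmap$ in place of $5\s$, using the relation $5\s \leq \sigmap$ that follows from the definition $\s = \min(\s_V/3, \sigmap/5)$ in \eqref{eq: sigma condition}. Then $1-\sigmap \leq 1-5\s$, so \eqref{eq:bqboundsa} is immediate from the admissibility assumption \eqref{cond: alternating q}, and \eqref{eq:bqboundsb} will follow once one establishes $|\bq_I| < 1-\sigmap$.

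The upper estimate $\bq_I < 1-\sigmap$ is a direct specialization of admissibility: for any $J \neq I$, setting the ``$K$'' slot of \eqref{cond: alternating q} equal to $I$ yields $\bq_J + \bq_I - \bq_I = \bq_J < 1-\sigmap$, which covers every index since $n \geq 3$. The lower estimate is handled by a summation argument. Fix an index $I$ and sum the admissibility inequality $\bq_J+\bq_K-\bq_I < 1-\sigmap$ over all ordered pairs $(J,K)$ with $J \neq K$ and both distinct from $I$ (which is possible thanks to $n \geq 3$). Using the trace constraint $\sum_J \bq_J = 1$, the sums over $J$ and $K$ each collapse, yielding
\[
2(n-2)(1-\bq_I) - (n-1)(n-2)\bq_I < (n-1)(n-2)(1-\sigmap).
\]
After dividing by $n-2 > 0$ and rearranging, one obtains $\bq_I > [2-(n-1)(1-\sigmap)]/(n+1)$, and the elementary inequality $2 > -2(1-\sigmap)$ shows that this lower bound is strictly greater than $-(1-\sigmap)$.

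There is no real obstacle; the only point requiring mild care is the combinatorial averaging, which uses $n \geq 3$ to provide enough pairs $(J,K)$ avoiding $I$. The content of the lemma is that admissibility together with $\sum_I \bq_I = 1$ already forces the two-sided pointwise bound $|\bq_I| < 1-\sigmap$, from which the weaker estimates $\bq_I + \bq_J - \bq_K < 1 - 5\s$ and $|\bq_I| < 1 - 5\s$ demanded by the lemma follow automatically by $1 - \sigmap \leq 1 - 5\s$.
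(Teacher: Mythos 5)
Your proof is correct. For \eqref{eq:bqboundsa} and the upper bound in \eqref{eq:bqboundsb}, you and the paper do essentially the same thing (invoke $5\sigma\leq\sigmap$ and collapse the last two indices in \eqref{cond: alternating q}), but for the lower bound the two routes genuinely differ. The paper argues by contradiction plus pigeonhole: assuming $-\bq_I\geq 1-5\sigma$ forces $\sum_{J\neq I}\bq_J\geq 2-5\sigma$, which (since each $\bq_J<1-5\sigma$) forces at least two of the remaining eigenvalues to be strictly positive, say $\bq_{J_1},\bq_{J_2}$, and then \eqref{cond: alternating q} applied to $(J_1,J_2,I)$ gives $-\bq_I<\bq_{J_1}+\bq_{J_2}-\bq_I<1-5\sigma$, a contradiction. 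Your argument instead averages the admissibility inequality over all $(n-1)(n-2)$ ordered pairs $(J,K)$ avoiding $I$ and feeds in the trace constraint; this is a direct (non-contradiction) argument, it requires $n\geq 3$ in the same place the paper's does, and it actually produces a sharper quantitative lower bound $\bq_I>[2-(n-1)(1-\sigmap)]/(n+1)$. For $n=3$ this reduces to $\bq_I>\sigmap/2$, precisely the optimal bound stated in Remark~\ref{remark:spone}, so your averaging argument in fact recovers the exact characterization there, whereas the paper's pigeonhole step only establishes the weaker two-sided bound $|\bq_I|<1-5\sigma$ that the lemma needs. Both proofs are valid; yours buys an explicit, and in low dimensions optimal, lower bound at the cost of a slightly more involved combinatorial count, while the paper's is shorter and extracts only what the later estimates require.
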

\begin{proof}[Proof of Lemma~\ref{le: first bound on $q_I$}]
  Recalling that $\s \leq \frac \sigmap5$, the first assertion is immediate from \eqref{cond: alternating q}. By choosing $J = K$ in
  \eqref{eq:bqboundsa}, it follows that $\bq_I < 1-5\s$, proving the first inequality in \eqref{eq:bqboundsb}. Assume now, to reach a
  contradiction, that
  \begin{equation} \label{eq: contradiction assumption}
    - \bq_I \geq 1 - 5\s.
  \end{equation}
  Since $\sum_{I} \bq_I = 1$, we conclude that
  \[
  \textstyle{\sum}_{J = 1, J \neq I}^n \bq_J = 1 - \bq_I \geq 2 - 5\s.
  \]
  Since $\bq_J < 1 - 5\s$, this inequality is impossible if $n = 2$.
  If $n \geq 3$, it follows that there are at least two positive terms, say $\bq_{J_1} \text{ and }\bq_{J_2}$.
  By choosing $I = J_1$, $J = J_2$ and $K = I$ in \eqref{eq:bqboundsa}, it follows that
  \[
  - \bq_I 
  < \bq_{J_1} + \bq_{J_2} - \bq_I
  < 1 - 5 \s,
  \]
  which is a contradiction to \eqref{eq: contradiction assumption}. This proves \eqref{eq:bqboundsb}.
\end{proof}

\begin{proof}[Proof of Proposition~\ref{prop: initial data estimate}]
  In this proof, we let $C > 0$ denote a constant that is allowed to depend on $\rho_0$, $\s$, $k_0$, $k_1$, $(\S, h_\refer)$ and
  $(E_i)_{i=1}^n$. The value of $C$ might change from line to line. By the non-degeneracy assumption, we know that $\bh$ and $\bk$ are simultaneously
  diagonalizable, i.e.~there exist smooth functions $\bq_1, \hdots, \bq_n$ and vector fields $\be_1, \hdots, \be_n$ on $\S$, such that
  \[
    \bh = \bo^I \otimes \bo^I, \quad \bk = \tfrac{\bq_I}{t_0} \bo^I \otimes \bo^I,
  \]
  where $\bo^1, \hdots, \bo^n$ are the one-forms dual to $\be_1, \hdots, \be_n$. Moreover, the $\bq_I$ are uniquely determined, and the
  $\be_I$ are uniquely determined up to a sign. A standard implicit function theorem
  argument shows that $\bq_I$ and $\be_I$ are smooth. The estimate \eqref{eq: bar pi diagonal conditions} is immediate by the assumption that the
  eigenvalues $\bq_1, \hdots, \bq_n$ of $\bmK$ are $\s_p$-admissible. Next, \eqref{eq: bar pi diagonal non-deg conditions} is
  immediate if $\rho_0 > \zeta_0$. It therefore remains to prove \eqref{eq:FermiWalkerBounds}.
  By construction of $\bmH$ and $\bmK$, $\bq_I$ and $v_{I}:=t_0^{\bq_I} \be_I$ are distinct eigenvalues and eigenvectors of $\bmK$ and the
  $v_{I}$ are unit vector fields with respect to $\bmH$. In order to deduce bounds on these objects, let us first write
  \begin{align*}
    \bmK_i^j
    &=  \eta^i(\bmK(E_i)), \ \ \
    \bmH_{ij}
    = \bmH(E_i, E_j),\ \ \
    v_I^i
    = t_0^{\bq_I} \be_I^i
  \end{align*}
and hence
\begin{align*}
    \bmK_i^j v^i_I
        &= \bq_I v^j_I, \ \ \
    \bmH_{ik}v_I^i v_J^k
        = \de_{IJ},
\end{align*}
for $j,I, J = 1, \hdots, n$ (no summation on $I$ in the first equality).
Differentiating these equalities for a fixed $I = J$ yields
\begin{equation*}
\begin{split}
    \begin{pmatrix}
        0 \\
        \vdots \\
        0 \\
        0
    \end{pmatrix}
        &= E_l \begin{pmatrix}
        \bmK_i^1 v_I^i - \bq_I v_I^1 \\
        \vdots \\
        \bmK_i^nv_I^i - \bq_I v_I^n \\
        \bmH_{ik}v_I^i v_I^k
    \end{pmatrix} \\
        &= 
        \begin{pmatrix}
            (E_l \bmK_i^1)v^i_I \\
            \vdots \\
            (E_l \bmK_i^n)v^i_I \\
            (E_l \bmH_{ik})v^i_Iv^k_I
        \end{pmatrix}
        +
        \underbrace{\begin{pmatrix}
            \bmK_1^1 - \bq_I & \hdots & \bmK_n^1 & - v_I^1 \\
            \vdots & \ddots & \vdots & \vdots \\
            \bmK_1^n & \hdots & \bmK_n^n - \bq_I & - v_I^n \\
            2 \bmH_{1k} v_I^k & \hdots &  2 \bmH_{nk} v_I^k & 0
        \end{pmatrix}}_{=: A_I}
        E_l
        \begin{pmatrix}
            v_I^1 \\
            \vdots \\
            v_I^n \\
            \bq_I
        \end{pmatrix}.
\end{split}
\end{equation*}
We decompose $A_I$ by noting that
\[
    A_I 
        =   \underbrace{
        \begin{pmatrix}
			v_1^1 & \hdots & v_n^1 & 0 \\
			\vdots & \ddots & \vdots & \vdots \\
			v_1^n & \hdots & v_n^n & 0 \\
			0 & \hdots & 0 & 1
		\end{pmatrix}}_{=: T}
            \underbrace{
            \begin{pmatrix}
			\bq_1 - \bq_I & \hdots & 0 & 0 \\
			\vdots & \ddots & \vdots & -1  \\
			0 & \hdots & \bq_n - \bq_I & 0 \\
			0 & 2 & 0 & 0
		\end{pmatrix}}_{=: B_I}
            \underbrace{
            \begin{pmatrix}
			v^i_1 \bmH_{i1} & \hdots & v_1^i\bmH_{in} & 0 \\
			\vdots & \ddots & \vdots & \vdots \\
			v_n^i \bmH_{i1} & \hdots & v_n^i \bmH_{in} & 0 \\
			0 & \hdots & 0 & 1
		\end{pmatrix}}_{= T^{-1}},
\]
where the entries ``$2$'' and ``$-1$'' in $B_I$ are placed in column $I$ and row $I$, respectively.
It follows, in particular, that $A_I$ is invertible and we conclude that
\begin{equation} \label{eq: matrix equality for initial bounds}
	E_l
        \begin{pmatrix}
            v_I^1 \\
            \vdots \\
            v_I^n \\
            \bq_I
        \end{pmatrix}
    = - A_I^{-1}
    \begin{pmatrix}
        (E_l \bmK_i^1)v^i_I \\
        \vdots \\
        (E_l \bmK_i^n)v^i_I \\
        (E_l \bmH_{ik})v^i_Iv^k_I
    \end{pmatrix}
    = - T B_I^{-1} T^{-1} \begin{pmatrix}
        (E_l \bmK_i^1)v^i_I \\
        \vdots \\
        (E_l \bmK_i^n)v^i_I \\
        (E_l \bmH_{ik})v^i_Iv^k_I
    \end{pmatrix}.
\end{equation}
By the assumption of non-degeneracy of the eigenvalues with margin $\zeta_0^{-1}$, we conclude that
\[
	\left| B_I^{-1}
	\right|
	\leq
	\max \left(\zeta_0, 1 \right),
\]
where $\abs{\cdot}$ denotes the pointwise operator norm on $\rn{n+1}$.
We therefore know that $B_I$ only depends on $\bq_I$ and that its inverse is bounded uniformly.
Moreover, the entries in the matrices $T$ and $T^{-1}$ are products of $\bmH_{ij}$ and $v_j^i$.
Due to Lemma~\ref{le: first bound on $q_I$},
\[
\norm{\bq_I}_{C^{0}(\S_{t_0})}
\leq 1 - 5\s
\leq 1.
\]
Due to the assumed bound on $\bmH^{-1}$, see (\ref{eq: Sobolev bound assumption}), we know that
\begin{equation}\label{eq:bmHinvbd}
  \big(\textstyle{\sum}_{i,j}\bmH^{ij}\bmH^{ij}\big)^{1/2}\leq \zeta_{0},
\end{equation}
where the indices are with respect to the frame $(E_{i})_{i=1}^{n}$ and dual frame $(\eta^{i})_{i=1}^{n}$. If the eigenvalues of
the symmetric matrix with components $\bmH_{ij}$ are $\lambda_{i}>0$, then (\ref{eq:bmHinvbd}) says that $|\mu|\leq \zeta_{0}$,
where $\mu=(\lambda_{1}^{-1},\dots,\lambda_{n}^{-1})$. In particular, $\lambda_{\min}\geq\zeta_0^{-1}$, so that
\[
\bmH\geq \zeta_{0}^{-1}h_{\refer}.
\]
Due to this estimate, 
\begin{align*}
    \snorm{t_0^{\bq_I}\be_I^i}_{C^{0}(\S)}^2
        &\leq \snorm{\textstyle{\sum}_{i} \left(
 t_0^{\bq_I}\be_I^i \right)^2}_{C^{0}(\S)} \\
        &= \snorm{h_\refer(t_0^{\bq_I}\be_I, t_0^{\bq_I}\be_I)}_{C^{0}(\S)} 
        \leq \zeta_0 \snorm{\bmH(t_0^{\bq_I}\be_I, t_0^{\bq_I}\be_I)}_{C^{0}(\S)}= \zeta_0.
\end{align*}
Thus $\|v_{I}^{i}\|_{C^{0}(\S)}\leq \zeta_0^{1/2}$. Therefore, by taking iteratively more derivatives of
\eqref{eq: matrix equality for initial bounds}, Lemma~\ref{lemma: products in L2} yields
\begin{equation}\label{eq: iteration equation}
\begin{split}
    & \textstyle{\sum}_{I}\snorm{\bq_I}_{H^{l+1}(\S)} + \textstyle{\sum}_{i,I}\snorm{t_0^{\bq_I}\be_I^i}_{H^{l+1}(\S)} \\
    \leq & C \textstyle{\sum}_{I}\snorm{\bq_I}_{H^l(\S)} + C \textstyle{\sum}_{i,I}\snorm{t_0^{\bq_I}\be_I}_{H^l(\S)}
    + C \snorm{\bmK}_{H^{l + 1}(\S)} + C \snorm{\bmH}_{H^{l + 1}(\S)}
\end{split}
\end{equation}
for all $l \in \nn{}$, where $C > 0$ only depends on $\zeta_0$, $l$, $(\S,h_{\refer})$ and $(E_{i})_{i=1}^{n}$; note
that we here appeal to \eqref{eq: Sobolev bound assumption} and Sobolev embedding in order to estimate $\bmK_{i}^{j}$
and $\bmH_{ij}$ in $C^{1}$. Iteratively applying \eqref{eq: iteration equation} and \eqref{eq: Sobolev bound assumption} therefore implies the
second part of \eqref{eq: gamma bar p bar assumption} and the first part of \eqref{eq: e bar o bar assumption}. Note that
\begin{align*}
    t_0^{-\bq_I}\bo^I_i 
        &= \bmH(E_i, t_0^{\bq_I}\be_I) 
        = \bmH_{ij} t_0^{\bq_I} \be_I^j.
\end{align*}
From this, Lemma~\ref{lemma: products in L2}, \eqref{eq: Sobolev bound assumption} and the first part of \eqref{eq: e bar o bar assumption},
we conclude the second part of \eqref{eq: e bar o bar assumption}.

In order to prove the first part of \eqref{eq: gamma bar p bar assumption}, we first compute
\begin{equation}\label{eq:bgIJK right frame}
  \begin{split}
    \bga_{IJK} = &\bo^K \left( \left[ \be_I, \be_J \right] \right)
    = \bo^K \left( \left[ t_0^{-\bq_I} t_0^{\bq_I} \be_I, t_0^{-\bq_J} t_0^{\bq_J} \be_J \right] \right) \\
    = & t_0^{\bq_K}t_0^{-\bq_K}\bo^K \left( t_0^{-\bq_I} \left( t_0^{\bq_I} \be_It_0^{-\bq_J} \right) t_0^{\bq_J} \be_J
      - t_0^{-\bq_J} \left( t_0^{\bq_J} \be_Jt_0^{-\bq_I} \right) t_0^{\bq_I} \be_I \right.\\
    & \left. \phantom{t_0^{\bq_K}t_0^{-\bq_K}\bo^K x} + t_0^{-\bq_I - \bq_J} [t_0^{\bq_I}\be_I, t_0^{\bq_J}\be_J]\right) \\
    = & t_0^{- \bq_I - \bq_J + \bq_K} \left( - t_0^{\bq_I}\be_I(\bq_J) \ln(t_0) \de_{JK} + t_0^{\bq_J}\be_J(\bq_I) \ln(t_0) \de_{IK} \right.\\
    & \left. \phantom{t_0^{- \bq_I - \bq_J + \bq_K}x}+ t_0^{-\bq_K} \bo^K([t_0^{\bq_I}\be_I, t_0^{\bq_J}\be_J]) \right).
  \end{split}
\end{equation}
By Lemma~\ref{lemma: products in L2}, \eqref{eq: e bar o bar assumption} and the second part of \eqref{eq: gamma bar p bar assumption},
we conclude that
\begin{align*}
    \norm{t_0^{\bq_I + \bq_J - \bq_K}\bga_{IJK}}_{H^{k_1+1}(\S)}
        &\leq C \ldr{\ln(t_0)},
\end{align*}
proving the first part of \eqref{eq: gamma bar p bar assumption}. Since $t_0 \bp_1 = \bP_1$, and
\[
    \bp - t_0 \bp_1 \ln(t_0)
        = \bp + \bP_1 \ln(\bth)
        = \bP_0,
\]
\eqref{eq: scalar field bar assumption} is immediate from \eqref{eq: Sobolev bound assumption}.
This finishes the proof.
\end{proof}

\subsection{Diagonal FRS initial data arising from background solutions}

Next, we prove that quiescent model solutions, see Definition~\ref{def:quiescent model solution}, give rise to diagonal FRS
initial data satisfying the bounds of Definition~\ref{def:DecomposedData}.

\begin{lemma}\label{lemma: from qms to frs data}
  Let $(M,g,\phi)$ be a quiescent model solution. Using the terminology of Definition~\ref{def:quiescent model solution},
  let $\s$, $k_0$ and $k_1$ be defined as in Theorem~\ref{thm: big bang formation}. Then there is a $\tau_{1}\leq\tau_{a}$, $\tau_1\in\mI$,
  and a $0<\rho\in\rn{}$ such that for any $\tau_0\leq\tau_{1}$, 
  \begin{equation}\label{eq:Fermi Walker ID SH case}
    \be_{I}^{i}:=\tfrac{1}{a_{\uI}(\tau_{0})}\delta_{\uI}^{i},\ \ \
    \bq_I:=\tfrac{1}{(\theta a_\uI)(\tau_0)}a_{\uI}'(\tau_0),\ \ \
    \bp_0:=\phi(\tau_{0}),\ \ \
    \bp_1:=(\d_{\tau}\phi)(\tau_{0})
  \end{equation}
  constitute diagonal FRS initial data satisfying the FRS expansion-normalized bounds of regularity $k_1$ for $\rho$ at
  $t_{0}:=1/\theta(\tau_{0})$, where the $\s_p$ appearing in (\ref{eq: bar pi diagonal conditions}) is the same as the $\s_p$ appearing in
  Definition~\ref{def:quiescent model solution}.
\end{lemma}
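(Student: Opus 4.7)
The plan is to exploit the fact that the quiescent model solution is a warped product with $a_i = a_i(\tau)$, so that the FRS data $\be_I^i,\bq_I,\bp_0,\bp_1$ defined by (\ref{eq:Fermi Walker ID SH case}) are all \emph{constant on $\S$}; the only $\S$-dependence entering the bounds comes through the fixed structure functions $c^K_{IJ}$ of the reference frame, defined by $[E_I,E_J]=c^K_{IJ}E_K$. Consequently every $H^{k_1+2}(\S)$-norm in (\ref{eq:FermiWalkerBounds}) reduces, up to a constant depending only on $(\S,h_\refer)$, $(E_i)_{i=1}^n$ and $k_1$, either to a pointwise value or to an a priori $H^{k_1+1}$-bound on the fixed smooth functions $c^K_{IJ}$, which exists since $\S$ is closed.

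As the first step, I would invoke Remark~\ref{remark: effective formulation degenerate case} to rewrite the geometric convergence (\ref{eq: geometric degenerate convergence}) as
\[
|\bq_I(\tau_0)-\mrp_I|+|\hat a_I(\tau_0)-\mathring\alpha_I|+|\Phi_1(\tau_0)-\mrPhi_1|+|\phi(\tau_0)+\mrPhi_1\ln\theta(\tau_0)-\mrPhi_0|\leq C\theta(\tau_0)^{-\delta},
\]
with $\hat a_I=\theta^{\mrp_I}a_I$ and $\mathring\alpha_I>0$. Since Definition~\ref{def:quiescent model solution} provides $\mrp_I+\mrp_J-\mrp_K<1-2\s_p$ for $I\neq J$, fixing $\tau_1\leq\tau_a$ small enough forces $\bq_I+\bq_J-\bq_K<1-\s_p$ pointwise on $\S$ for all $\tau_0\leq\tau_1$, verifying (\ref{eq: bar pi diagonal conditions}).

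Next, I would verify (\ref{eq:FermiWalkerBounds}) by direct computation at $t_0=1/\theta(\tau_0)$. One has $t_0^{\bq_I}\be_I^i=(\theta^{\bq_I}a_I)^{-1}(\tau_0)\,\delta^i_I$ (no summation) and $\theta^{\bq_I}a_I=\theta^{\bq_I-\mrp_I}\hat a_I$, where the first factor stays uniformly bounded away from $0$ and $\infty$ because $|\bq_I-\mrp_I|\ln\theta\leq C\theta^{-\delta}\ln\theta\to 0$, and the second converges to $\mathring\alpha_I>0$. The same argument controls $t_0^{-\bq_I}\bo^I_i=(\theta^{\bq_I}a_I)(\tau_0)\,\delta^I_i$, which, after multiplication by $\mathrm{vol}(\S)^{1/2}$ to pass from pointwise values to Sobolev norms of constant functions, yields (\ref{eq: e bar o bar assumption}). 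For the structure coefficients, constancy of $a_I(\tau_0)$ on $\S$ gives $\bga_{IJK}=a_K c^K_{IJ}/(a_Ia_J)$ and hence
\[
t_0^{\bq_I+\bq_J-\bq_K}\bga_{IJK}=\frac{(\theta^{\bq_K}a_K)(\tau_0)}{(\theta^{\bq_I}a_I)(\tau_0)(\theta^{\bq_J}a_J)(\tau_0)}\,c^K_{IJ},
\]
whose $H^{k_1+1}(\S)$-norm is bounded by a uniform constant times $\|c^K_{IJ}\|_{H^{k_1+1}(\S)}$. Since $\bq_I$ is spatially constant, $\|\bq_I\|_{H^{k_1+2}(\S)}\leq C|\bq_I|\leq C$, completing (\ref{eq: gamma bar p bar assumption}). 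Finally, $t_0\bp_1=\Phi_1(\tau_0)$ and $\bp_0-t_0\ln(t_0)\bp_1=\phi(\tau_0)+\Phi_1(\tau_0)\ln\theta(\tau_0)=\Phi_0(\tau_0)$, both spatially constant and pointwise bounded by the first step, so (\ref{eq: scalar field bar assumption}) holds as well.

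The only mild technical point is the absorption of the factor $\theta^{\bq_I-\mrp_I}$ into a uniform constant; this is what fixes the choice of $\tau_1$ but is routine, since the $\ln\theta$ growth is dominated by the power-law rate $\theta^{-\delta}$ from (\ref{eq:mKietcconvergence}). No bootstrap, commutator estimate, or constraint manipulation is required here, because no nontrivial spatial derivative acts on anything beyond the fixed reference structure functions $c^K_{IJ}$; all other objects are constant along $\S$.
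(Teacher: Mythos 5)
Your argument is correct and follows essentially the same route as the paper's proof: the verification that $\sum\bq_I=1$ and that (\ref{eq: bar pi diagonal conditions}) holds for small $\tau_0$, the factorization $\theta^{\bq_I}a_I=\theta^{\bq_I-\mrp_I}\hat a_I$ with control of both factors via (\ref{eq:mKietcconvergence}), and the treatment of $\bga_{IJK}$, $\bp_1$, $\bp_0$ via the fact that all scaffold data except the reference structure functions are spatially constant. Your explicit front-loading of the constancy observation is a slightly cleaner exposition of what the paper does implicitly, but there is no substantive difference.
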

\begin{proof}
  Clearly, $(\be_{I})_{I=1}^{n}$ is a global frame and $\sum_{I}\bq_I=1$ by the definition of $\theta$. Assuming
  $\tau_{0}$ to be close enough to $\tau_-$, we can assume $\bq_I+\bq_J-\bq_K<1-\s_p$ in case $I\neq J$. Next, consider
  \begin{equation}\label{eq:frame sh case}
    t_{0}^{\bq_I}\be_{I}^{i}=[\theta(\tau_{0})]^{-\bq_\uI}[a_{\uI}(\tau_{0})]^{-1}\delta_{\uI}^{i}
    =[\theta(\tau_{0})]^{\mrp_\uI-\bq_\uI}[(\theta(\tau_{0}))^{\mrp_\uI}a_{\uI}(\tau_{0})]^{-1}\delta_{\uI}^{i}.
  \end{equation}
  Note that the third factor on the far right hand side is constant and the second factor converges to $\alpha_I^{-1}$ as $\tau_0\downarrow\tau_-$.
  Concerning the first factor, note that
  \[
  \ln[\theta(\tau_{0})]^{\mrp_I-\bq_I}=(\mrp_I-\bq_I)\ln[\theta(\tau_{0})].
  \]
  Due to (\ref{eq:mKietcconvergence}), the right hand side converges to zero as $\tau_{0}\downarrow\tau_-$. Thus the first factor on the far
  right hand side of (\ref{eq:frame sh case}) converges to $1$ as $\tau_{0}\downarrow\tau_-$. By this and a similar argument for $\bo^{I}_{i}$,
  it is clear that (\ref{eq: e bar o bar assumption}) holds for some right hand side independent of $\tau_0\leq\tau_a$. Since $\bq_I$ is independent
  of the spatial variables and converges to $\mrp_I$, it is clear that there is a uniform bound on $\bq_I$ in $H^{k_1+2}$ for $\tau_{0}\leq\tau_a$.
  Since
  \[
  \bga_{IJK}=\tfrac{a_{\uK}(\tau_0)}{a_\uI(\tau_{0})a_\uJ(\tau_0)}\eta^\uK([E_\uI,E_\uJ]),
  \]
  an argument similar to the one concerning $t_{0}^{\bq_I}\be_{I}^{i}$ yields the conclusion that $t_{0}^{\bq_I+\bq_J-\bq_K}\bga_{IJK}$ is uniformly bounded
  in $H^{k_1+2}$ for $\tau_{0}\leq\tau_a$. Since $t_{0}\bp_1$ converges as $\tau_0\downarrow\tau_-$ and is independent of the spatial variables, it
  is uniformly bounded in $H^{k_1+2}$ for $\tau_{0}\leq\tau_a$. Finally, consider
  \begin{equation*}
    \begin{split}
      \bp_0-t_0\ln(t_0)\bp_1 = & \phi(\tau_0)+[\theta(\tau_0)]^{-1}\ln[\theta(\tau_0)](\d_{\tau}\phi)(\tau_0)\\
      = & \Phi_0+[\phi(\tau_0)+\Phi_1\ln[\theta(\tau_0)]-\Phi_0]
      +\ln[\theta(\tau_0)]\big(\tfrac{1}{\theta(\tau_0)}(\d_{\tau}\phi)(\tau_0)-\Phi_1\big).
    \end{split}
  \end{equation*}
  Due to (\ref{eq:mKietcconvergence}), the left hand side converges to $\Phi_0$ as $\tau_0\downarrow\tau_-$. Thus
  $\bp_0-t_0\ln(t_0)\bp_1$ is uniformly bounded in $H^{k_1+2}$ for $\tau_{0}\leq\tau_a$. The lemma follows. 
\end{proof}

\subsection{Initial bounds on the lapse}

Note that Definition~\ref{def:DecomposedData} does not require any bounds for the lapse. The reason is that if $t_0$ is small enough, then
the lapse function is uniquely determined by (\ref{eq:LapseEquation}) and automatically close to $1$.

\begin{remark} \label{rmk: alternative lapse}
  Assuming the Hamiltonian constraint (\ref{eq:hamconstraint}) to hold, (\ref{eq:LapseEquation}) can be reformulated to
  the \emph{alternative lapse equation}:
  \begin{equation} \label{eq: alternative lapse}
    \begin{split}
    e_{I}e_{I}(N) = & t^{-2}(N-1)+\g_{JII}e_{J}(N) \\
    & - \big( t^{-2} - k_{IJ} k_{IJ} - e_0(\phi) e_0(\phi) + \tfrac{2}{n-1}V\circ\phi \big)  N.
  \end{split}  
\end{equation}
\end{remark}
We use this new form to derive an estimate for the initial lapse at $t_0$.

\begin{prop} \label{prop: initial lapse estimate}
  Let $\s_p$, $\s_V$, $\s$, $k_0$, $k_1$, $(\S, h_\refer)$, $(E_{i})_{i=1}^{n}$ and $V$ be as in Theorem~\ref{thm: big bang formation} and let
  $\rho_{0}>0$. Then there are standard constants $\tau_{1}<1$ and $C$ such that the following
  holds: If $t_0 < \tau_1$ and $\be_I^i$, $\bq_I$, $\bp_0$, $\bp_1\in C^{\infty}(\S,\rn{})$ are diagonal FRS initial data, solving
  (\ref{eq:hamconstraint}) and (\ref{eq:MC}), and satisfying the FRS expansion-normalized bounds of regularity $k_1$ for $\rho_0$ at $t_0$
  (see~Definition~\ref{def:DecomposedData}), then there is a unique $\bN\in C^{\infty}(\S,\rn{})$ satisfying the alternative lapse equation,
  \eqref{eq: alternative lapse}, at $t_0$. Moreover, 
  \begin{align*}
    &t_0^2 \textstyle{\sum}_{I}\snorm{\be_I (\bN-1)}_{H^{k_1}(\S)}^2 + \snorm{\bN-1}_{H^{k_1}(\S)}^2 \\
    \leq & C t_0^{9\s/2}\left( t_0 \textstyle{\sum}_{I}\snorm{\be_I (\bN-1)}_{H^{k_1}(\S)}
    + \snorm{\bN - 1}_{H^{k_1}(\S)} + 1 \right) \snorm{\bN-1}_{H^{k_1}(\S)}.
  \end{align*}
\end{prop}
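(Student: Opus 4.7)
The plan is to set $v := \bN - 1$ and recast \eqref{eq: alternative lapse} as a standard elliptic equation. Since $(\be_I)_{I=1}^n$ is $\bh$-orthonormal, the Koszul formula gives $\bar{\Gamma}_{IIK} = \bga_{KII}$, so the principal-order combination $\be_I\be_I - \bga_{JII}\be_J$ coincides with $\Delta_{\bh}$. Substituting the Hamiltonian constraint \eqref{eq:hamconstraint} together with \eqref{eq:SpatialScalarCurvature} into the coefficient of $N$ in \eqref{eq: alternative lapse} recasts the equation as
\begin{equation*}
    L v := -t_0^2 \Delta_{\bh} v + (1 - t_0^2 F) v = t_0^2 F,
    \qquad F := |\md \bp_0|_{\bh}^2 - \roScal_{\bh} + \tfrac{2n}{n-1} V\circ\bp_0.
\end{equation*}
The main analytic task is then to show $\|t_0^2 F\|_{H^{k_1}(\S)} \leq C t_0^{9\s/2}$; once this is in hand, existence, uniqueness and the quantitative estimate follow from standard elliptic arguments.

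For the bound on $t_0^2 F$, I would introduce the rescaled frame $v_I := t_0^{\bq_I}\be_I$ (so that $\be_I = t_0^{-\bq_I} v_I$) and rescaled structure coefficients $\hat\bga_{IJK} := t_0^{\bq_I + \bq_J - \bq_K} \bga_{IJK}$; both are Sobolev-bounded by $\rho_0 \ldr{\ln t_0}$ thanks to Definition~\ref{def:DecomposedData}. Because $\bq_I < 1 - 5\s$ by Lemma~\ref{le: first bound on $q_I$}, and because $\bq_I + \bq_J - \bq_K < 1 - 5\s$ when $I \neq J$ (with $\bga_{IJK} = 0$ when $I = J$), pulling out the $t_0^{-\bq_I}$-factors from $|\md\bp_0|_{\bh}^2$ and from $\roScal_{\bh}$ (expanded via \eqref{eq:SpatialScalarCurvature}) leaves a positive power $t_0^{10\s}$ times products bounded in $H^{k_1}$ by Lemma~\ref{lemma: products in L2}. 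The potential term is handled by first extracting, from the Hamiltonian constraint and the bounds on $\roScal_{\bh}$ and $|\md\bp_0|_{\bh}^2$ as in \eqref{eq:BoundHamiltonianConstraintOutline}, a pointwise estimate $\bP_1^2 \leq 1 + C t_0^{2\min(\s_p,\s_V)}\ldr{\ln t_0}^{C}$, which forces $|\bp_0| \leq (1 + \epsilon(t_0))|\ln t_0| + C$ in $C^0$ with $\epsilon(t_0)\to 0$ as $t_0\downarrow 0$. Definition~\ref{def:Vadm} combined with the chain rule in Sobolev spaces then gives $\|t_0^2 V\circ\bp_0\|_{H^{k_1}} \leq C t_0^{2\s_V - C\epsilon(t_0)}$. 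Since $\s \leq \s_V/3$ (so $9\s/2 \leq 3\s_V/2 < 2\s_V$) and $10\s > 9\s/2$, the slack absorbs both the polynomial-in-$\ldr{\ln t_0}$ losses generated by differentiating $t_0^{-\bq_I}$ and the $t_0^{-C\epsilon(t_0)}$ penalty, yielding $\|t_0^2 F\|_{H^{k_1}} \leq C t_0^{9\s/2}$.

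For existence and uniqueness, Sobolev embedding ($k_1 \geq \lceil(n+1)/2\rceil > n/2$) turns this bound into $\|t_0^2 F\|_{C^0} \leq 1/2$ for $t_0$ below some standard threshold $\tau_1$. Then $L$ is a uniformly positive, self-adjoint elliptic operator, and Lax--Milgram on $H^1(\S)$ produces a unique weak solution $v$, promoted to a smooth $\bN = 1 + v$ by elliptic regularity. For the quantitative bound, I would apply $\be^\bfI$ to $Lv = t_0^2 F$ for each multi-index $|\bfI| \leq k_1$, test against $\be^\bfI v$ with the volume form $\mathrm{vol}_{h_\refer}$, integrate by parts, and sum. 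The principal part reproduces the coercive quantity $c\bigl(t_0^2 \textstyle\sum_I \|\be_I v\|_{H^{k_1}}^2 + \|v\|_{H^{k_1}}^2\bigr)$ on the left; on the right, the direct contribution is $\lesssim \|t_0^2 F\|_{H^{k_1}}\|v\|_{H^{k_1}}$, producing the ``$+1$'' factor, while commutators between $\be^\bfI$ and the coefficients of $L$, together with the $(1 - t_0^2 F)v$ term, contribute $\lesssim t_0^{9\s/2}\bigl(t_0 \textstyle\sum_I \|\be_I v\|_{H^{k_1}} + \|v\|_{H^{k_1}}\bigr)\|v\|_{H^{k_1}}$ via Lemma~\ref{lemma: products in L2}. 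Combining yields the claimed inequality.

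The hard part will be the $H^{k_1}$-bookkeeping for $F$: every differentiation of a rescaled quantity such as $t_0^{-\bq_I}$ generates an extra factor $\ln(t_0)\cdot\be_I(\bq_I)$, so polynomial-in-$\ldr{\ln t_0}$ losses accumulate at every order. The scheme closes only because the raw scaling gains ($t_0^{10\s}$ in the geometric terms, $t_0^{2\s_V}\geq t_0^{6\s}$ in the potential term) strictly exceed $9\s/2$. The subtlest step is upgrading the integral Hamiltonian constraint to the pointwise control on $\bp_1$ needed to feed $V\circ\bp_0$ through the chain rule in $H^{k_1}$; this is precisely where the approximate asymptotic Hamiltonian constraint anticipated in Remark~\ref{remark:spsV} becomes more than a heuristic.
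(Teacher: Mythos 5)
Your reformulation of the lapse equation is sound: with $v = \bN - 1$ and $F := |\md\bp_0|_{\bh}^2 - \roScal_{\bh} + \tfrac{2n}{n-1}V\circ\bp_0$, the identity $-t_0^2\Delta_{\bh}v + (1-t_0^2 F)v = t_0^2 F$ does follow, and in fact substituting the Hamiltonian constraint shows $t_0^2 F = 1 - \sum_I\bq_I^2 - t_0^2\bp_1^2 + \tfrac{2}{n-1}t_0^2 V\circ\bp_0$, which is precisely the coefficient appearing in the paper's form (\ref{eq:Lapse function N bar}) of the equation at $t_0$. Your route to bounding this (pulling $t_0^{-\bq_I}$ out of $\roScal_{\bh}$ and $|\md\bp_0|_{\bh}^2$, using the pointwise Hamiltonian constraint to bound $|t_0\bp_1|$ and thence $|\bp_0|$ to control the potential) is exactly the content of the paper's Lemma~\ref{lemma:EstimateHamiltonianConstraint}, which in fact delivers the stronger rate $t_0^{6\s}$. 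So up to that point your proposal is essentially the same as the paper's.

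The gap is in the energy estimate, where your choice of commutation frame does not match what the statement requires. You propose to apply $\be^\bfI$ (iterated $\be$-frame derivatives) and test against $\be^\bfI v$, claiming the principal part reproduces $c\big(t_0^2\sum_I\|\be_I v\|_{H^{k_1}(\S)}^2 + \|v\|_{H^{k_1}(\S)}^2\big)$. But the $H^{k_1}$-norm in the statement is the one from (\ref{eq:psiHkdef}), which uses the \emph{reference} frame $E_\bfI$, not $\be^\bfI$. These are not interchangeable at the relevant scaling: $E_i = \bo^I_i\be_I$ with $\|t_0^{-\bq_I}\bo^I_i\|$ bounded, so $\bo^I_i$ is of size $t_0^{\bq_I}$ — and $\bq_I$ can be as negative as $-1+5\s$. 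Each conversion between an $E$-derivative and a $\be$-derivative can therefore cost a factor $t_0^{\pm(1-5\s)}$, so after $k_1$ derivatives the two Sobolev quantities are \emph{not} uniformly comparable as $t_0\downarrow 0$. Your $\be^\bfI$-based estimate therefore does not establish the claimed inequality in the paper's norm. The paper circumvents this precisely by commuting the reference frame $E_\bfI$ through the equation and keeping only a \emph{single} $\be_I$ acting on $\bN-1$: that one $\be_I$ is compensated by the $t_0$-weight in the conclusion, and the commutators $[E_\bfI,\be_I]$ hit $\be_I^i = t_0^{-\bq_I}\big(t_0^{\bq_I}\be_I^i\big)$, so the split $t_0^2 = t_0^{1-\bq_I}\cdot t_0^{\bq_I}\cdot t_0$ both absorbs the large factor and supplies the gain $t_0^{1-\bq_I}\lesssim t_0^{5\s}$. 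The divergence term $\rodiv_{h_\refer}(\be_I)$ arising from the single integration by parts is handled the same way. Your sketch does not see this structure: commuting $k_1$ $\be$-derivatives through $\Delta_{\bh} = \be_I\be_I - \bga_{JII}\be_J$ produces commutators $[\be_I,\be_J] = \bga_{IJK}\be_K$ with $\bga_{IJK}$ of size $t_0^{-(1-5\s)}$, and there is no $t_0$-weight left to protect them. You already use the rescaled frame $v_I = t_0^{\bq_I}\be_I$ when bounding $F$; the energy estimate needs a comparable discipline, which here is supplied by working with $E_\bfI$ rather than $\be^\bfI$.
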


In order to prove this, the following lemma is crucial.

\begin{lemma} \label{lemma:EstimateHamiltonianConstraint}
  Let $\s_p$, $\s_V$, $\s$, $k_0$, $k_1$, $(\S, h_\refer)$, $(E_{i})_{i=1}^{n}$ and $V$ be as in Theorem~\ref{thm: big bang formation} and let
  $\rho_{0}>0$. Then there are standard constants $\tau_{1}<1$ and $C$ such that the following
  holds: If $t_0 < \tau_1$ and $\be_I^i$, $\bq_I$, $\bp_0$, $\bp_1\in C^{\infty}(\S,\rn{})$ are diagonal FRS initial data, solving
  (\ref{eq:hamconstraint}) and (\ref{eq:MC}), and satisfying the FRS expansion-normalized bounds of regularity $k_1$ for $\rho_0$ at $t_0$
  (see~Definition~\ref{def:DecomposedData}), then
  \begin{subequations}\label{seq:preLapseEstimate}
    \begin{align}
      t_0 \abs{\bp_1}
      &< 1 - \tfrac1{2n},\label{eq:absbphioneest} \\
      t_0^2 \norm{V \circ \bp_0}_{H^{k_1}(\S)}
      &\leq C t_0^{6 \s}, \label{eq:Vnormpreest}\\
      \norm{1 - \textstyle{\sum}_{I} \bq_I^2 - t_0^2\bp_1^{2}}_{H^{k_1}(\S)}
      &\leq  C t_0^{6\s}.\label{eq:deviationfromKasner}
    \end{align}
  \end{subequations}
\end{lemma}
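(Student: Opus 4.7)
The plan is to rearrange the Hamiltonian constraint \eqref{eq:hamcon} and bound each of the resulting ``bad'' terms in $H^{k_1}(\S)$ by $Ct_0^{6\s}$ (up to logarithms absorbed for small $t_0$). In the diagonalizing frame, $\bk = (\bq_I/t_0)\bo^I \otimes \bo^I$ (no sum on $I$), so that $\abs{\bk}_{\bh}^2 = \sum_I \bq_I^2/t_0^2$ and $(\tr_{\bh}\bk)^2 = 1/t_0^2$. Multiplying \eqref{eq:hamcon} by $t_0^2$ and rearranging yields
\begin{equation*}
1 - \textstyle{\sum}_I \bq_I^2 - t_0^2 \bp_1^2 = - t_0^2 \roScal_{\bh} + t_0^2 \abs{\md \bp_0}_{\bh}^2 + 2 t_0^2 V \circ \bp_0.
\end{equation*}
Once the right-hand side is controlled in $H^{k_1}$ by $Ct_0^{6\s}\ldr{\ln(t_0)}^M$, Cauchy--Schwarz together with $\sum_I \bq_I = 1$ gives $\sum_I \bq_I^2 \geq 1/n$, and the non-negativity of $V$ yields pointwise $t_0^2 \bp_1^2 \leq 1 - 1/n + Ct_0^{6\s}\ldr{\ln(t_0)}^M$; for $t_0$ below a standard threshold, this is strictly smaller than $(1-1/(2n))^2 = 1 - 1/n + 1/(4n^2)$, giving \eqref{eq:absbphioneest}. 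The estimate \eqref{eq:deviationfromKasner} then follows directly from the rearranged constraint, once \eqref{eq:Vnormpreest} is in place.

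For the curvature and gradient terms, I would substitute the expression for $\roScal_{\bh}$ from Lemma~\ref{lemma: spatial curvatures} and rewrite every $\bga_{IJK}$ and $\be_I$ in terms of the FRS-bounded quantities $t_0^{\bq_I+\bq_J-\bq_K}\bga_{IJK}$ and $t_0^{\bq_I}\be_I$ of Definition~\ref{def:DecomposedData}. Antisymmetry makes $\bga_{IJK}$ vanish when $I=J$, so only terms with $I \neq J$ survive, and Lemma~\ref{le: first bound on $q_I$} supplies both $\bq_I + \bq_J - \bq_K < 1 - 5\s$ and $\bq_I < 1 - 5\s$. Every product in the expansion of $t_0^2\roScal_{\bh}$ and every summand $t_0^2(\be_I \bp_0)^2$ therefore carries a prefactor of the form $t_0^{2 - 2(\bq_I + \bq_J - \bq_K)}$ or $t_0^{2 - 2\bq_I}$, each bounded pointwise by $t_0^{10\s}$. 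The spatial dependence of these exponents produces additional factors $\ln(t_0)\cdot(t_0^{\bq_I}\be_I)(\bq_J)$ whenever the rescaled frame hits a weight $t_0^{\pm\bq_J}$, but these are harmless since $\bq_J \in H^{k_1+2}$. Combining these pointwise decay factors with the Moser-type product estimates in Appendix~\ref{sec:SobolevInequalities}, and using $\snorm{\bp_0}_{H^{k_1+2}} \leq C\ldr{\ln(t_0)}$ (obtained from \eqref{eq: scalar field bar assumption} via $\bp_0 = \bP_0 + t_0\ln(t_0)\bp_1$), yields
\begin{equation*}
\snorm{t_0^2\roScal_{\bh}}_{H^{k_1}} + \snorm{t_0^2\abs{\md\bp_0}_{\bh}^2}_{H^{k_1}} \leq Ct_0^{10\s}\ldr{\ln(t_0)}^M.
\end{equation*}

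For the potential, the bound \eqref{eq:absbphioneest} (already obtained via the preceding step) combined with the Sobolev embedding of $\bP_0 \in H^{k_1+2}$ gives $\snorm{\bp_0}_{C^0} \leq C + (1-1/(2n))\abs{\ln(t_0)}$. A Moser composition estimate for the smooth function $V$, applied on the interval $[-\snorm{\bp_0}_{C^0}, \snorm{\bp_0}_{C^0}]$, combined with \eqref{eq: V assumption}, then yields $\snorm{V \circ \bp_0}_{H^{k_1}} \leq C\,e^{2(1-\s_V)\snorm{\bp_0}_{C^0}}\ldr{\ln(t_0)}^{M'} \leq Ct_0^{-2(1-\s_V)}\ldr{\ln(t_0)}^{M'}$; multiplying by $t_0^2$ and invoking $\s \leq \s_V/3$ from \eqref{eq: sigma condition} gives \eqref{eq:Vnormpreest}. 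The main obstacle is precisely this last step: the $C^0$-norm of $\bp_0$ grows logarithmically in $1/t_0$, so the exponential growth permitted to $V$ only marginally fails to overwhelm the prefactor $t_0^2$, and the threshold $\s \leq \s_V/3$ is precisely what is required to close the estimate with room to spare for the logarithms.
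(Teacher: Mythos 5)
Your proof takes a genuinely different route from the paper's, and it is slightly more streamlined: the paper performs a two-pass argument, first establishing a rough bound $\abs{\bP_1} < 1/(1-\s_V) - \rho_0^{-1}$ from (\ref{eq: initial data canonical}) (this is exactly (\ref{eq:abs Phi1 initial bound}), where $V\geq 0$ is used), then using that to bound $\bp_0$, hence $V\circ\bp_0$, and only afterwards invoking $\sum_I \bq_I^2 \geq 1/n$ to upgrade to $\abs{\bP_1} < 1 - 1/(2n)$. You instead observe that once the curvature and gradient terms are bounded pointwise, the non-negativity of $V$ lets you drop the potential term from the inequality outright, so $t_0^2 \bp_1^2 \leq 1 - 1/n + \abs{t_0^2\roScal_{\bh}}$ gives (\ref{eq:absbphioneest}) immediately in a single pass. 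This is correct and skips one iteration. One caveat is that your first paragraph's wording (``once the right-hand side is controlled'') reads as if the full right-hand side — including the potential — must be bounded before (\ref{eq:absbphioneest}) can be deduced, which would be circular; the intended argument works, but you should state explicitly that only the scalar-curvature and gradient parts are needed for (\ref{eq:absbphioneest}), with $V\geq 0$ and the favourable sign of $-t_0^2\abs{\md\bp_0}^2_{\bh}$ handling the rest.

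There is, however, a genuine numerical gap in your potential estimate. From $\snorm{\bp_0}_{C^0} \leq C + (1-\tfrac1{2n})\abs{\ln t_0}$ and (\ref{eq: V assumption}), the correct bound is
\begin{equation*}
\snorm{V\circ\bp_0}_{H^{k_1}(\S)} \leq C \, t_0^{-2(1-\s_V)(1-1/(2n))} \, \ldr{\ln t_0}^{M'},
\end{equation*}
not $Ct_0^{-2(1-\s_V)}\ldr{\ln t_0}^{M'}$ as you wrote. You have discarded the factor $1-1/(2n)$ that you correctly obtained for $\snorm{\bp_0}_{C^0}$, and that loss is fatal: multiplying your looser bound by $t_0^2$ yields $Ct_0^{2\s_V}\ldr{\ln t_0}^{M'}$, and if $\s = \s_V/3$ exactly (which is allowed by (\ref{eq: sigma condition})) this is $Ct_0^{6\s}\ldr{\ln t_0}^{M'}$, with no margin left to absorb the logarithms. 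Your closing remark that ``$\s \leq \s_V/3$ is precisely what is required to close the estimate with room to spare'' misidentifies the source of the room: the spare room comes from the $1-1/(2n)$ factor, which contributes the extra strictly positive exponent $(1-\s_V)/n$, since
\begin{equation*}
2 - 2(1-\s_V)\bigl(1-\tfrac1{2n}\bigr) = 2\s_V + \tfrac{1-\s_V}{n} \geq 6\s + \tfrac{1-\s_V}{n},
\end{equation*}
and it is the term $(1-\s_V)/n > 0$ that absorbs $\ldr{\ln t_0}^{M'}$ for $t_0$ below a standard threshold. Keeping the sharp exponent fixes the argument.
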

\begin{proof}
The Hamiltonian constraint, \eqref{eq:hamconstraint}, at $t_0$ is
\begin{align*}
    & 2\be_{I}(\bga_{IJJ})
  - \tfrac{1}{4}\bga_{IJK}(\bga_{IJK} + 2 \bga_{IKJ}) -\bga_{IJJ}\bga_{IKK}
  - \bk_{IJ}\bk_{IJ}+{t_0}^{-2} \\
  = & \bp_1^{2} + \be_{I}(\bp_0)\be_{I}(\bp_0) + 2V \circ \bp_0.
\end{align*}
Multiplying this by $t_0^2$, using that $\bk_{IJ} = \tfrac{\bq_\uI \de_{\uI J}}{t_0}$ and rearranging yields
\begin{equation}\label{eq:essential terms Ham con}
  \begin{split}
    & \bP_{1}^{2}+\textstyle{\sum}_{I}\bq_I^2 +2t_0^2V \circ \bp_0-1\\
    = & -t_{0}^{2}\be_{I}(\bp_0)\be_{I}(\bp_0)+t_{0}^{2}[2\be_{I}(\bga_{IJJ})- \tfrac{1}{4}\bga_{IJK}(\bga_{IJK} + 2 \bga_{IKJ}) -\bga_{IJJ}\bga_{IKK}];
  \end{split}
\end{equation}
note that $\bP_{1}=t_{0}\bp_{1}$. Next, appealing to Lemma~\ref{lemma: products in L2}, Sobolev embedding
and \eqref{eq:FermiWalkerBounds} yields
\begin{equation}\label{eq:beIbpzsq}
  \begin{split}
    \snorm{t_0^2 \be_{I}(\bp_0)\be_{I}(\bp_0)}_{H^{k_1}(\S)}
    &= \snorm{t_0^{2(1 - \bq_I)}t_0^{\bq_I}\be_{I}(\bp_0)t_0^{\bq_I}\be_{I}(\bp_0)}_{H^{k_1}(\S)} \\
    &\leq C\snorm{t_0^{1 - \bq_I}}_{H^{k_1}(\S)}^2 \snorm{t_0^{\bq_I} \be_I}_{H^{k_1}(\S)}^2 \snorm{\bp_0}_{H^{k_1+1}(\S)}^2 \\
    &\leq C \snorm{t_0^{1 - \bq_I}}_{C^{0}(\S)}^2  \ldr{\ln(t_0)}^{2k_1+2} \left(\snorm{\bq_I}_{H^{k_1}(\S_t)} + 1 \right)^2 \\
    &\leq C t_0^{10\s} \ldr{\ln(t_0)}^{2k_1+2} \leq C t_0^{6\s},
  \end{split}
\end{equation}
where we used that $\abs{\bq_I} < 1 - 5 \s$, by Lemma~\ref{le: first bound on $q_I$}, and that $t_0^{2\s} \ldr{\ln t_0}^{k_1+1}$
is bounded for $t_0 \in [0,1]$ by a constant depending only on $\s$ and $k_1$. Since Lemma~\ref{le: first bound on $q_I$} implies
that $\bq_I<1-5\s$ and $\bq_I + \bq_J - \bq_K < 1 - 5 \s$ if $I \neq J$, we similarly get
\begin{equation}\label{eq:curvatureestimateKasnerHam}
  \norm{- 2 t_0^2 \be_I(\bga_{IJJ}) + \tfrac{1}{4} t_0^2 \bga_{IJK}(\bga_{IJK} + 2 \bga_{IKJ}) + t_0^2 \bga_{IJJ}\bga_{IKK}}_{H^{k_1 + 1}(\S)}
  \leq C t_0^{6\s};
\end{equation}
recall (\ref{eq:bgIJK right frame}). Due to (\ref{eq:essential terms Ham con}), (\ref{eq:beIbpzsq}) and (\ref{eq:curvatureestimateKasnerHam}),
\begin{equation} \label{eq: initial data canonical}
  \|\bP_{1}^{2}+\textstyle{\sum}_{I}\bq_I^2 +2t_0^2V \circ \bp_0-1\|_{H^{k_{1}}(\S)}\leq Ct_{0}^{6}.
\end{equation}
Combining this estimate with Sobolev embedding and the non-negativity of the potential yields
\begin{equation}\label{eq:abs Phi1 initial bound}
  \tfrac1{1 - \s_V} - |\bP_1|\geq \rho_{0}^{-1},
\end{equation}
assuming the standard constant $\tau_{1}$ to be small enough. 

Next we estimate the term involving the potential. Combining \eqref{eq:abs Phi1 initial bound}, \eqref{eq: scalar field bar assumption} and
Sobolev embedding yields, recalling Notation~\ref{not:constant},
\[
    \abs{\bp_0}
        \leq  -\ln(t_0) t_0 \bp_1 + C\rho_0 
        \leq -\big( \tfrac1{1 - \s_V} - \rho_0^{-1} \big) \ln(t_0) + C\rho_0.
\]
Since $V$ is a $\s_V$-admissible potential (see~Definition~\ref{def:Vadm}), it follows that
\begin{equation} \label{eq: V bar phi 0 H k 1 bound}
    \textstyle{\sum}_{l \leq k_1 + 2}\snorm{V^{(l)}\circ\bp_0}_{C^{0}(\S_{t_0})}
        \leq C e^{2(1-\s_{V})|\bp_0|}
        \leq C t_0^{- 2(1 - (1-\s_V)\rho_0^{-1})}.
\end{equation}
Note, by \eqref{eq: scalar field bar assumption}, that 
\begin{equation} \label{eq: bar phi 0 H k 1 plus 2 bound}
    \snorm{\bp_0}_{H^{k_1 + 2}(\S)}
        \leq \abs{\ln(t_0)} t_0 \snorm{\bp_1}_{H^{k_1 + 2}(\S)} + \rho_0 
        \leq C \ldr{\ln(t_0)}.
\end{equation}
Combining \eqref{eq: V bar phi 0 H k 1 bound}, \eqref{eq: bar phi 0 H k 1 plus 2 bound} and Lemma~\ref{lemma: products in L2} yields
\[
  t_0^2 \norm{V \circ \bp_0}_{H^{k_1}(\S)}
  \leq C \ldr{\ln(t_0)}^{k_1} t_0^{2(1-\s_V)\rho_0^{-1}}.
\]      
Combining this estimate with \eqref{eq: initial data canonical}, 
\begin{equation}\label{eq:Rough Hamiltonian Constraint}
  \norm{1 - \textstyle{\sum}_{I} \bq_I^2 - t_0^2\bp_1^{2}}_{H^{k_1}(\S)}
  \leq C \ldr{\ln(t_0)}^{k_{1}} t_0^{\min\{2(1 - \s_V)\rho_0^{-1}, 6\s\}}.
\end{equation}
We now improve this estimate as follows.
First note that if $\bq:=(\bq_{1},\dots,\bq_{n})$ and $\xi_{0}=(1,\dots,1)$, then
$1=\bq\cdot\xi_{0}\leq |\bq|\cdot|\xi_{0}|=\sqrt{n}|\bq|$, so that $|\bq|^{2}\geq 1/n$. Thus
\[
    1 + \left| 1 - |\bq|^{2} - \bP_1^2 \right|
        \geq |\bq|^{2} + \bP_1^2
        \geq \tfrac1n + t_0^2 \bp_1^2.
\]
Combining this estimate with (\ref{eq:Rough Hamiltonian Constraint}) and Sobolev embedding yields
\begin{align*}
    t_0^2 \bp_1^2
        &\leq 1 - \tfrac1n + C \ldr{\ln t_0}^{k_1} t_0^{\min(2(1 - \s_V)\rho_0^{-1}, 6 \s)} 
        < 1 - \tfrac1n + \tfrac 1{4n^2}= \left( 1 - \tfrac1{2n} \right)^2,
\end{align*}
if $\tau_1$ is a small enough standard constant. Thus (\ref{eq:absbphioneest}) holds.
Combining (\ref{eq:absbphioneest}) with \eqref{eq: scalar field bar assumption} implies the improved bound
\[
    \abs{\bp_0}
        \leq - \ln(t_0) \left( 1 - \tfrac1{2n} \right) + C\rho_0.
\]
We thus get the following improvement over \eqref{eq: V bar phi 0 H k 1 bound},
\[
    \textstyle{\sum}_{l \leq k_1 + 2}\snorm{V^{(l)}\circ\bp_0}_{C^{0}(\S_{t_0})}
        \leq C e^{2(1-\s_{V})|\bp_0|}
        \leq C t_0^{- 2(1-\s_V)\left( 1 - 1/(2n) \right)}.
\]
Combining this with \eqref{eq: scalar field bar assumption} and Lemma~\ref{lemma: products in L2} proves
\begin{equation*}
  \begin{split}
    t_0^2 \norm{V \circ \bp_0}_{H^{k_1}(\S)}
    \leq & C \ldr{\ln(t_0)}^{k_{1}}t_0^{2- 2(1-\s_V)\left( 1 - 1/(2n) \right)}
    \leq  C \ldr{\ln(t_0)}^{k_{1}} t_0^{6 \s+(1-\s_V)/n} \leq C t_0^{6 \s},
  \end{split}
\end{equation*}
where we used $\s \leq \frac{\s_V}3$. Thus (\ref{eq:Vnormpreest}) holds.
Combining (\ref{eq:Vnormpreest}) and (\ref{eq: initial data canonical})
yields (\ref{eq:deviationfromKasner}).
\end{proof}

We may now prove Proposition~\ref{prop: initial lapse estimate}.

\begin{proof}[Proof of Proposition~\ref{prop: initial lapse estimate}]
At $t_0$, Equation \eqref{eq: alternative lapse} can be written
\begin{equation}\label{eq:Lapse function N bar}
  \begin{split}
    \be_{I}\be_{I}(\bN - 1) = & t_0^{-2}(\bN-1) +\bga_{KII}\be_{K}(\bN - 1) \\
    & - t_0^{-2} \big( 1 - \textstyle{\sum}_{I} \bq_I^2 - t_0^2 \bp_1^2 + \tfrac{2}{n-1} t_0^2 V\circ \bp_0 \big) \bN.
  \end{split}
\end{equation}
Since all coefficients are smooth and since Lemma~\ref{lemma:EstimateHamiltonianConstraint} yields
\[
    \big| 1 - \textstyle{\sum}_{I} \bq_I^2 - t_0^2 \bp_1^2 + \tfrac{2}{n-1} t_0^2 V\circ \bp_0 \big|
        \leq C t_0^{6\s}, 
\]
standard theory for Laplace type operators ensures the existence of a unique solution $\bN$ if $t_0$ is small enough.
It thus remains to prove the estimate for $\bN$. Recall the conventions concerning multiindices introduced in
Appendix~\ref{sec:SobolevInequalities}. Applying $E_{\bfI}$ to (\ref{eq:Lapse function N bar}) and multiplying
the result with $E_{\bfI}(\bN-1)$ yields
\begin{align*}
    &\be_I E_\bfI \be_I (\bN - 1) E_{\bfI}(\bN-1) + [E_\bfI, \be_I] \be_{I}(\bN - 1) E_{\bfI}(\bN-1) -t_0^{-2}(E_{\bfI}(\bN-1))^2 \\
         = & E_\bfI \big( \bga_{KII}\be_{K}(\bN - 1) 
    - t_0^{-2} \big( 1 - \textstyle{\sum}_{I} \bq_I^2 - t_0^2 \bp_1^2 + \tfrac{2}{n-1}t_0^2V\circ \bp_0 \big) \bN \big) E_{\bfI}(\bN-1).
\end{align*}
Integrating this expressing over $\S$ and summing over all $\abs \bfI \leq k_1$, we integrate the first term by parts and obtain the equality
\begin{align*}
  & - \textstyle{\sum}_{\abs \bfI \leq k_1}\int_\S \rodiv_{h_\refer}(\be_I) E_\bfI \be_I (\bN-1) E_\bfI (\bN-1)\mu_{h_{\refer}}\\
    & - \textstyle{\sum}_{\abs \bfI \leq k_1}\int_\S E_\bfI \be_I(\bN-1) [\be_I, E_\bfI] (\bN-1)  \mu_{h_{\refer}} - \snorm{\be_I (\bN-1)}_{H^{k_1}(\S)}^2\\
  &  + \textstyle{\sum}_{\abs \bfI \leq k_1} \int_\S [E_\bfI, \be_I] \be_{I}(\bN - 1) E_{\bfI}(\bN-1) \mu_{h_{\refer}}
    - t_0^{-2} \norm{\bN-1}_{H^{k_1}(\S)}^2 \\
  = & \big\langle \bga_{KII}\be_{K}(\bN - 1) - t_0^{-2} \big( 1 - \textstyle{\sum}_{I} \bq_I^2 - t_0^2 \bp_1^2
    + \tfrac{2}{n-1}t_0^2V\circ \bp_0 \big) \bN, \bN - 1\big\rangle_{H^{k_1}(\S)}.
\end{align*}
Multiplying this equation with $t_0^2$ and applying the Cauchy-Schwarz inequality, we obtain
\begin{equation}\label{eq:sec step prel elliptic estimate}
  \begin{split}
    &t_0^2 \snorm{\be_I (\bN-1)}_{H^{k_1}(\S)}^2 + \snorm{\bN-1}_{H^{k_1}(\S)}^2 \\
    \leq & C t_0^2 \snorm{\rodiv_{h_\refer}(\be_I)}_{C^{0}(\S)} \snorm{\be_I (\bN-1)}_{H^{k_1}(\S)} \snorm{\bN-1}_{H^{k_1}(\S)} \\
    & + C t_0^2 \snorm{\be_I (\bN-1)}_{H^{k_1}(\S)} \textstyle{\sum}_{\abs \bfI \leq k_1}\snorm{[\be_I, E_\bfI](\bN-1)}_{L^2(\S)} \\
    & + C t_0^2 \textstyle{\sum}_{\abs \bfI \leq k_1}\snorm{[E_\bfI, \be_I]\be_I(\bN-1)}_{L^2(\S)} \snorm{\bN-1}_{H^{k_1}(\S)} \\
    & + C t_0^2 \snorm{\bga_{KII} \be_K(\bN-1)}_{H^{k_1}(\S)} \snorm{\bN-1}_{H^{k_1}(\S)} \\
    & + C \big\|\big( 1 - \textstyle{\sum}_{I} \bq_I^2 - t_0^2 \bp_1^2 + \tfrac{2}{n-1}t_0^2 V\circ \bp_0 \big) \bN\big\|_{H^{k_1}(\S)}
    \snorm{\bN-1}_{H^{k_1}(\S)}.
  \end{split}
\end{equation}
The last term is estimated using Lemma~\ref{lemma: products in L2}, Sobolev embedding and Lemma~\ref{lemma:EstimateHamiltonianConstraint}:
\begin{align*}
  & \big\|\big( 1 - \textstyle{\sum}_{I} \bq_I^2 - t_0^2 \bp_1^2 + \tfrac{2}{n-1}t_0^2V\circ \bp_0 \big) \bN\big\|_{H^{k_1}(\S)} \\
  \leq & C \big\|1 - \textstyle{\sum}_{I} \bq_I^2 - t_0^2 \bp_1^2 + \tfrac{2}{n-1}t_0^2V\circ \bp_0\big\|_{H^{k_1}(\S)}
         \snorm{\bN}_{H^{k_1}(\S)}\\
        \leq &  C t_0^{6\s}\snorm{\bN}_{H^{k_1}(\S)} \leq  C t_0^{6\s}(\snorm{\bN - 1}_{H^{k_1}(\S)} + 1).
\end{align*}
The other terms are estimated by using the fact that $\abs{\bq_I} < 1 - 5 \s$ and $\bq_I + \bq_J - \bq_K < 1 - 5 \s$, for $I \neq J$,
by Lemma~\ref{le: first bound on $q_I$}. To illustrate this, we show how the third term on the right hand side of
(\ref{eq:sec step prel elliptic estimate}) is estimated.
By Lemma~\ref{lemma: products in L2}, Sobolev embedding, \eqref{eq: e bar o bar assumption} and \eqref{eq: gamma bar p bar assumption},
\begin{align*}
  & t_0^2 \textstyle{\sum}_{\abs \bfI \leq k_1}\snorm{[E_\bfI, \be_I]\be_I\left(\bN-1\right)}_{L^2(\S)} \\
  = & \textstyle{\sum}_{\abs \bfI \leq k_1}\snorm{[E_\bfI, t_0^{1-\bq_I} t_0^{\bq_I}\be_I]t_0 \be_I\left(\bN-1\right)}_{L^2(\S)} \\
  \leq & C \snorm{t_0^{1-\bq_I}}_{H^{k_1}(\S)} \snorm{t_0^{\bq_I}\be_I}_{H^{k_1}(\S)} t_0 \snorm{\be_I (N-1)}_{H^{k_1}(\S)} \\
  \leq & C \snorm{t_0^{1 - \bq_I}}_{C^{0}(\S)} \left( \ldr{\ln(t_0)}^{k_1} \snorm{\bq_I}_{H^{k_1}(\S_t)} + 1 \right)
         t_{0}\snorm{\be_I (N-1)}_{H^{k_1}(\S)} \\
  \leq & C t_0^{5\s} \ldr{\ln(t_0)}^{k_1}t_{0}\snorm{\be_I (N-1)}_{H^{k_1}(\S)}  \leq C t_0^{9\s/2}t_{0}\snorm{\be_I (N-1)}_{H^{k_1}(\S)},
\end{align*}
from which we conclude that
\begin{align*}
    &C t_0^2 \textstyle{\sum}_{\abs \bfI \leq k_1}\snorm{[E_\bfI, \be_I]\be_I\left(\bN-1\right)}_{L^2(\S)} \snorm{\bN-1}_{H^{k_1}(\S)} \\
  \leq & C t_0^{9\s/2}t_{0}\snorm{\be_I (N-1)}_{H^{k_1}(\S)} \snorm{N-1}_{H^{k_1}(\S)}.         
\end{align*}
With similar estimates for the other terms, proven analogously, we conclude the assertion.
\end{proof}

\section{Past global existence of solutions to the FRS equations}\label{section: past global existence}

We now turn to the proof of past global existence of solutions to the FRS equations.

\subsection{The scaffold} 
\label{subsec:scaffold}

The first step is to construct an \emph{approximate} solution to the FRS equations, (\ref{eq: transport frame})--(\ref{eq:LapseEquation}).

\begin{definition}\label{def:scaffold variables}
  Given functions $\be_I^i$, $\bq_I$, $\bp_0$, $\bp_1\in C^{\infty}(\S,\rn{})$, that form diagonal FRS initial data
  (see~Definition~\ref{def:DecomposedData}), and a $t_0\in (0,\infty)$, the \textit{scaffold variables} on $(0, \infty) \times \S$ are
  defined as the vector fields
  \begin{equation}
    \ce_0
    := \d_t, \quad
    \ce_I
    := ( \tfrac {t_0}t )^{\bq_I} \be_I, \label{eq:chechomdefa}
  \end{equation}
  with dual frame
\begin{equation}
	\co^0
		:= \md t, \quad 
	\co^I 
		:= ( \tfrac t{t_0} )^{\bq_I} \bo^I, \label{eq:chechomdefb}
\end{equation}
together with the eigenvalues $\bq_1, \hdots, \bq_n$ and the scaffold scalar field
\begin{equation} \label{eq: scaffold scalar field}
    \cp
        := t_0\bp_1 \ln( \tfrac t {t_0} ) + \bp_0.
\end{equation}
We also defined the scaffold second fundamental form
\[
    \ck_{IJ}
        := \tfrac{\bq_I}{t} \de_{IJ} \quad \text{(no summation)}
\]
and scaffold structure coefficients
\[
    \cga_{IJK}
        := \co^K([\ce_I, \ce_J]),
\]
for all $I, J, K $.
\end{definition}
\begin{remark}
Note that $\ce_0, \ce_1, \hdots, \ce_n$ is an orthonormal frame with respect to the metric
\[
    \cg
        := - \md t^2 + \textstyle{\sum}_{I} ( \tfrac t{t_0} )^{2 \bq_I} \bo^I \otimes \bo^I,
\]
and that the components of the first and second fundamental forms of $\cg$ at $t_0$ are $\de_{IJ}$ and $\ck_{IJ}|_{t_{0}}$, for all $I, J $.
\end{remark}

In Example~\ref{example: spatially hom flat}, a family of spatially homogeneous and spatially flat model solutions were discussed.
In that case, the scaffold actually coincides with the solution to the Einstein-non-linear scalar field equations, choosing $t_0 = 1$.
This will in general not be the case, of course.
However, the proof of past global existence is based on bounding the deviation between the scaffold variables and the variables of the
actual solution.

\begin{definition} \label{def: scaffold dynamical variables}
We define the \emph{scaffold dynamical variables} to be
\[
	\ce_I, \co^I, \cga_{I J K}, \ck_{IJ}, \ce_I \cp \text{ and } \d_t \cp,
\]
for all $I, J, K $.
\end{definition}

\subsection{The dynamical variables and the deviation quantities}


Inspired by \cite{GIJ} and the structure of the equations of interest, (\ref{eq: transport frame})--(\ref{eq:LapseEquation}), we make the following
definition.

\begin{definition}\label{def: dynamical variables}
We define the \emph{dynamical variables} to be
the functions 
\[
    e_I, \o^I, \gamma_{I J K}, k_{IJ}, e_I \phi \text{ and } e_0 \phi.
\]
\end{definition}

\begin{remark}
Note that $\gamma_{IJK}$ in this paper denotes the structure coefficients, as opposed to the connection coefficients in \cite{GIJ}.
Here, the structure coefficients are dynamical variables in place of the connection coefficients in \cite{GIJ}.
\end{remark}

Given the scaffold, the proof of global existence will rely on careful bounds on how much the actual solution deviates from the scaffold.
Inspired by \cite{GIJ} and the structure of (\ref{eq: transport frame})--(\ref{eq:LapseEquation}), we choose the following quantities
measuring the deviation from the scaffold.

\begin{definition} \label{def: deviation quantities}
  Fix $\s_p$, $\s_V$, $\s$, $k_0$, $k_1$, $(\S, h_\refer)$, $(E_{i})_{i=1}^{n}$ and $V$ as in Theorem~\ref{thm: big bang formation}.
  Let $A := 2(n+1)(1+2\s)$ and
\[
    e_I^i, \o^I_i, k_{IJ}, \gamma_{IJK}, \phi, N:
        (a, b) \times \S \to \rn{}
\]
be a solution to (\ref{eq: transport frame})--(\ref{eq:LapseEquation}) with $N > 0$ as in
Proposition~\ref{prop:FieldEquationsFermiWalker}. Assume the solution to arise from diagonal FRS initial data at $t_{0}\in (a,b)$,
and define the scaffold variables as in Definition~\ref{def:scaffold variables}. Define
\begin{subequations}\label{eq:deltavariables}
  \begin{align}
    \dep{} := & \phi-\cp,\ \ \
                \dep{I} := e_{I}\phi-\che_{I}\cp,\ \ \
                \dep{0} := e_{0}\phi-\d_{t}\cp,\\
    \deo{}{} := & \omega-\chom,\ \ \
                  \deo{I}{} := \omega^{I}-\chom^{I},\ \ \
                  \deo{I}{i} := \omega^{I}_{i}-\chom^{I}_{i},\\
    \dee{}{} := & e-\che,\ \ \
                  \dee{I}{} := e_{I}-\che_{I},\ \ \
                  \dee{I}{i} := e_{I}^{i}-\che_{I}^{i},\\    
    \dek{} := & k-\chk,\ \ \
                \dek{IJ} := k_{IJ}-\chk_{IJ},\\
    \dega{} := & \g-\chga,\ \ \
                 \dega{IJK} := \g_{IJK}-\chga_{IJK}.
  \end{align}
\end{subequations}
Define, moreover,
\begin{equation}\label{eq:vectorvariables}
  \vdep:=(\dep{1},\dots,\dep{n}),\ \ \
  \ear N:=(e_{1}N,\dots,e_{n}N).
\end{equation}
Define the \emph{deviation quantities} (cf. \cite[Definition~3.1, p.~33]{GIJ}) by
\begin{subequations}\label{eq:mbLdef}
  \begin{align}
    \mbL_{(N)}(t) 
        &:= t^{-\s}\|N-1\|_{C^{k_{0}+1}(\S_t)} + t^{1-4\s}\|\ear N\|_{C^{k_{0}}(\S_t)},\label{eq:mbLNdef}\\
    \mbL_{(e,\omega)}(t) 
        & := t^{1-3\s}\|\dee{}{}\|_{C^{k_{0}+1}(\S_t)}
            +t^{1-3\s}\|\deo{}{}\|_{C^{k_{0}+1}(\S_t)},\label{eq:mbLeomdef}\\
    \mbL_{(\g, k)}(t) 
        & := t^{1-2\s}\|\dega{}\|_{C^{k_{0}}(\S_t)} + t\|\dek{}\|_{C^{k_{0}+1}(\S_t)},\label{eq:mbLgkdef}\\
    \mbL_{(\phi)}(t) 
        & := t^{1-2\s}\|\vdep\|_{C^{k_{0}}(\S_t)} + t\|\dep{0}\|_{C^{k_{0}+1}(\S_t)},
    \label{eq:mbLphikz}
  \end{align}
\end{subequations}
and
\begin{subequations}\label{eq:mbHdef}
  \begin{align}
    \mbH_{(N)}(t) 
        & := t^{A+1} \big( t^{-2} \|N-1\|_{H^{k_{1}}(\S_t)}^2
            + \|\ear N\|_{H^{k_{1}}(\S_t)}^2 \big)^{1/2}, \label{eq:mbHNdef}\\
    \mbH_{(e,\omega)}(t) 
        & := t^{A+1-2\s} \big(\|\dee{}{}\|_{H^{k_{1}}(\S_t)}^2
            +\|\deo{}{}\|_{H^{k_{1}}(\S_t)}^2\big)^{1/2},\label{eq:mbHeomdef}\\
    \mbH_{(\g, k)}(t) 
        & := t^{A+1} \big( \tfrac{1}{2} \|\dega{}\|_{H^{k_{1}}(\S_t)}^2
            + \|\dek{}\|_{H^{k_{1}}(\S_t)}^2 \big)^{1/2},\label{eq:mbHgkdef}\\
    \mbH_{(\phi)}(t) 
        & := t^{A+1} \big( \|\vdep\|_{H^{k_{1}}(\S_t)}^2
            + \|\dep{0}\|_{H^{k_{1}}(\S_t)}^2 \big)^{1/2}. \label{eq: H phi bootstrap}
  \end{align}
\end{subequations}
As in \cite{GIJ}, 
\begin{equation}\label{eq:mbDdefinition}
    \mbD(t)
        :=\mbL_{(e,\omega)}(t) + \mbL_{(\gamma,k)}(t) + \mbL_{(\phi)}(t) + \mbH_{(e,\omega)}(t) + \mbH_{(\gamma,k)}(t) + \mbH_{(\phi)}(t).
\end{equation}
\end{definition}
\begin{remark}
  In (\ref{eq:mbLdef}) and (\ref{eq:mbHdef}) we use conventions similar to (\ref{seq:norms of indexed quantities}).
\end{remark}
\begin{remark}
Following \cite{GIJ}, we work with energy estimates weighted by a factor $t^A$, which motivates the extra weight multiplying the higher Sobolev norms.
However, we fix the weight throughout to be $A := 2(n+1)(1+2\s)$.
This is the reason that we may choose the higher regularity $k_1$ explicitly, with a lower bound only depending on $n$, $k_0$ and $\s$.
\end{remark}

\subsection{The global existence statement}

Theorem~\ref{thm: big bang formation} is proven in Section~\ref{sec: finishing the proof} and is a consequence of the following theorem:

\begin{thm}[Global existence, FRS equations] \label{thm:GlobalExistenceFermiWalker}
  Let $\s_p$, $\s_V$, $\s$, $k_0$, $k_1$, $(\S, h_\refer)$, $(E_{i})_{i=1}^{n}$ and $V$ be as in Theorem~\ref{thm: big bang formation}.
  For every $\rho_0 > 0$, there are standard constants $\tau_1<1$ and $\mathcal{C}$, such
  that the following holds: If $t_0 < \tau_1$; if there are smooth functions
  \begin{equation}\label{eq:beiIetcindata}
    \be_I^i, \bq_I, \bp_0, \bp_1 : \S \to \rn{}
  \end{equation}
  that form diagonal FRS initial data, satisfying the FRS expansion-normalized bounds of regularity $k_1$ for
  $\rho_0$ at $t_0$ (see~Definition~\ref{def:DecomposedData}) as well as (\ref{eq:hamconstraint}) and (\ref{eq:MC}); and if there are smooth
  initial data
  \begin{equation}\label{eq: hat initial data}
    \hat{e}_{I}^{i},\hat{k}_{IJ},\hat{\phi}_0,\hat{\phi}_1:\S\to\rn{}
  \end{equation}
  to (\ref{eq: transport frame})--(\ref{eq:LapseEquation}) satisfying $\hat{k}_{II}=1/t_0$ and $\mbD(t_0)\leq t_0^\s$, then there is a unique smooth solution 
  \[
    (e_I^i, \o_i^I, k_{IJ}, \gamma_{IJK}, \phi, N) : (0, t_+) \times \S \to \rn{},
  \]
  to (\ref{eq: transport frame})--(\ref{eq:LapseEquation}), with $t_0\in (0,t_{+})$, satisfying the initial condition
  \begin{equation}\label{eq:hatted initial data}
    \left( e_I^i, \o^I_i, k_{IJ}, \g_{IJK}, \phi, e_0 \phi \right)|_{t = t_0}
    = \big( \hat{e}_I^i, \hat{\o}^I_i, \hat{k}_{IJ}, \hat{\gamma}_{IJK}, \hat{\phi}_0, \hat{\phi}_1 \big).
  \end{equation}
  Moreover, this solution satisfies the following bound for all $t \in (0,t_0]$:
  \begin{equation}\label{eq:AposterioriBoundsGlobalExistence}
    \mbD(t) + \mbL_{(N)} + \mbH_{(N)} \leq \mathcal{C} t_0^{\s}.
  \end{equation}
\end{thm}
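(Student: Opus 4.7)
The plan is to combine standard local well-posedness for the FRS system with a continuous-induction argument on the deviation quantity $\mbD$, closed at the final step by Theorem~\ref{thm:BootstrapImprovement} (the bootstrap improvement) together with an elliptic estimate for the lapse on each CMC leaf. First I would invoke local existence and uniqueness for the FRS Cauchy problem in a neighbourhood of $t_0$: since the system is equivalent to the Einstein--non-linear scalar field equations in CMC gauge with a Fermi--Walker propagated frame, the required well-posedness may be quoted from \cite{OPR23}, yielding a unique maximal smooth solution on an interval $(t_-,t_+) \ni t_0$ with $t_-\geq 0$. At $t_0$ the lapse $N|_{t_0}$ is uniquely determined by the alternative lapse equation \eqref{eq: alternative lapse}, so Proposition~\ref{prop: initial lapse estimate}, together with the hypothesis $\mbD(t_0)\leq t_0^\s$, yields an initial bound $\mbD(t_0)+\mbL_{(N)}(t_0)+\mbH_{(N)}(t_0) \leq C_0 t_0^\s$ for some standard constant $C_0$.

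Next I would fix a standard constant $\mathcal{C}\geq 2C_0$ (its final value to be determined by Theorem~\ref{thm:BootstrapImprovement}) and consider the bootstrap set
\[
\mathcal{B} := \bigl\{\, t_* \in (t_-, t_0] \ : \ \mbD(t)+\mbL_{(N)}(t)+\mbH_{(N)}(t) \leq \mathcal{C}\, t_0^\s \text{ for all } t \in [t_*,t_0] \,\bigr\}.
\]
The set $\mathcal{B}$ is non-empty (since it contains $t_0$) and closed in $(t_-,t_0]$ by continuity of the deviation quantities in $t$. To show $\mathcal{B}$ is open, I take $t_* \in \mathcal{B}$ and apply Theorem~\ref{thm:BootstrapImprovement} to the solution on $[t_*,t_0]$, obtaining a strict improvement $\mbD(t) \leq \tfrac12 \mathcal{C}\, t_0^\s$. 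For the lapse quantities I would solve \eqref{eq: alternative lapse} elliptically slice-by-slice, repeating verbatim the argument of Proposition~\ref{prop: initial lapse estimate} but now with the zero-order coefficient controlled through the improved bounds on the other dynamical variables via the Hamiltonian constraint; this delivers $\mbL_{(N)}(t)+\mbH_{(N)}(t) \leq \tfrac12 \mathcal{C}\, t_0^\s$. Combined with local existence at $t_*$ this exhibits an open neighbourhood of $t_*$ inside $\mathcal{B}$; hence $\mathcal{B}=(t_-,t_0]$ and \eqref{eq:AposterioriBoundsGlobalExistence} holds on the whole interval.

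It remains to show $t_-=0$. The a posteriori bound \eqref{eq:AposterioriBoundsGlobalExistence}, combined with the explicitly smooth scaffold from Definition~\ref{def:scaffold variables} (whose variables are smooth on $(0,\infty)\times \S$), yields uniform $C^{k_0+1}$- and $H^{k_1}$-control of $e_I^i,\omega_i^I,k_{IJ},\gamma_{IJK},\phi,e_0\phi$ and a uniform positive lower bound on $N$ on any compact subinterval $[t_-+\varepsilon,t_0]$; Sobolev embedding makes the $C^{k_0+1}$-part consistent with the $H^{k_1}$-part thanks to \eqref{eq: k_1 inequality}. The standard continuation criterion for the FRS system then prevents breakdown at any positive time, forcing $t_-=0$. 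Uniqueness of $(e_I^i,\omega_i^I,k_{IJ},\gamma_{IJK},\phi,N)$ is inherited from the local theory, and every constant encountered depends only on the data listed in Notation~\ref{not:constant}, so $\mathcal{C}$ is indeed standard.

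The genuine obstacle lies in Theorem~\ref{thm:BootstrapImprovement}, which I am assuming here and which has to absorb all the technical difficulty: energy estimates for $\mbH_{(\bullet)}$ exploiting the momentum constraint, $C^{0}$-type estimates for $\mbL_{(\bullet)}$, and careful control of commutators of spatial derivatives with $-t^{-1}(\bq_I+\bq_J-\bq_K)-\d_t$ whose spatial gradients can be arbitrarily large (a feature absent from \cite{GIJ} because the background there is spatially homogeneous). Within the proof of Theorem~\ref{thm:GlobalExistenceFermiWalker} itself, the only delicate ingredient is the slice-wise elliptic lapse estimate, which crucially needs the approximate asymptotic Hamiltonian constraint, as quantified in \eqref{eq:deviationfromKasner}, to extract the smallness $t_0^{6\s}$ in the zero-order coefficient of \eqref{eq: alternative lapse} uniformly throughout the bootstrap region.
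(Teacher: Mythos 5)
Your overall architecture matches the paper's: invoke local existence via \cite{OPR23}, initialise the bootstrap using Proposition~\ref{prop: initial lapse estimate} and the hypothesis $\mbD(t_0)\leq t_0^\s$, run a continuity argument closed by Theorem~\ref{thm:BootstrapImprovement}, and then push $t_-$ down to $0$ by the continuation criterion. Two points deserve attention.

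First, a minor but telling redundancy: you propose to re-derive the lapse bounds $\mbL_{(N)},\mbH_{(N)}$ by a slice-wise elliptic estimate after invoking Theorem~\ref{thm:BootstrapImprovement}, but that theorem's conclusion \eqref{eq: global existenece estimate} already contains $\mbL_{(N)}+\mbH_{(N)}\leq \mathcal{C}t_0^\s$. Nothing extra is needed. The only place a separate elliptic argument is required is at the initial slice $t_0$ (Proposition~\ref{prop: initial lapse estimate}), which you do use. Also, the paper formulates the bootstrap set with the \emph{fixed} standard constant $r_\rob$ (so that Assumption~\ref{ass:Bootstrap} holds verbatim) and then shows the improved quantity $\mathcal C t_0^\s$ is $\leq r_\rob/2$; your variant with $\mathcal C t_0^\s$ in the bootstrap set works, but you should state explicitly that you additionally shrink $t_0$ so that $\mathcal C t_0^\s \leq r_\rob$, otherwise the hypotheses of Theorem~\ref{thm:BootstrapImprovement} are not satisfied inside your set $\mathcal B$.

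Second, and more seriously, there is a gap in your continuation step. The continuation criterion of Lemma~\ref{le:LocalExistenceFermiWalker} is phrased in terms of $\mfC$, which contains $\|1/\zeta\|_{C^0}$ with
\[
\zeta = k_{IJ}k_{IJ} + (e_0\phi)^2 - \tfrac{2}{n-1}V\circ\phi .
\]
Uniform $C^{k_0+1}$- and $H^{k_1}$-bounds on the dynamical variables and a positive lower bound on $N$ do \emph{not} by themselves give a positive lower bound on $\zeta$: the individual terms are bounded, but their combination could a priori vanish. In the paper's argument one uses the Hamiltonian constraint \eqref{eq:hamconstraint} to rewrite $t^{-2}-\zeta$ as $-\roScal_h + e_I(\phi)e_I(\phi)+2V\circ\phi$, estimate it by $Ct^{-2+4\s}$ via Lemma~\ref{lemma:estimatingthedynamicalvariables}, and conclude $\zeta \geq 1/(2t^2)$ for $t_0$ small. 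Without this step the contradiction (``$\mfC$ unbounded on a compact interval'') is not actually reached, and $t_-=0$ does not follow. You should insert this Hamiltonian-constraint argument to close your proof.
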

\begin{remark}
  When we say that the functions (\ref{eq:beiIetcindata}) solve (\ref{eq:hamconstraint}) and (\ref{eq:MC}), it is understood that the $\be_I^i$
  define a frame via $\be_I:=\be_I^iE_i$; that $(\bo^I)_{I=1}^{n}$ is the dual frame of $(\be_I)_{I=1}^{n}$; that $\bga_{IJK}:=\bo^{K}([\be_I,\be_J])$;
  and that $\bk_{IJ}:=\tfrac{\bq_\uI}{t_{0}}\de_{\uI J}$. Moreover, $e_I$, $\gamma_{IJK}$, $k_{IJ}$, $e_0\phi$, $\phi$ and $t$ appearing in
  (\ref{eq:hamconstraint}) and (\ref{eq:MC}) should be replaced by $\be_I$, $\bga_{IJK}$, $\bk_{IJ}$, $\bp_1$, $\bp_0$ and $t_0$ respectively.
  The statement that the functions in (\ref{eq: hat initial data}) satisfy (\ref{eq:hamconstraint}) and (\ref{eq:MC}) should be interpreted similarly.
  Moreover, in (\ref{eq:hatted initial data}), $\hat{e}_I:=\hat{e}_I^iE_i$; $(\hat{\o}^{I})_{I=1}^n$ is the frame dual to $(\hat{e}_I)_{I=1}^{n}$;
  and $\hat{\gamma}_{IJK}:=\hat{\o}^{K}([\hat{e}_I,\hat{e}_J])$.
\end{remark}
\begin{remark}
  When we say that (\ref{eq: hat initial data}) satisfy $\mbD(t_0)\leq t_0^\s$, we take it for granted that $t_{0}$ and the functions appearing in
  (\ref{eq:beiIetcindata}) are used to define the scaffold and that the following replacements are made in (\ref{eq:deltavariables}):
  $\phi$ and $e_0\phi$ are replaced by $\hat{\phi}_0$ and $\hat{\phi}_1$ respectively; $e$ and $\o$ are replaced by $\hat{e}$ and
  $\hat{\o}$ respectively; and $k$ and $\gamma$ are replaced by $\hat{k}$ and $\hat{\gamma}$ respectively.  
\end{remark}
\begin{remark}
  In the proof of the main theorem, (\ref{eq:beiIetcindata}) and (\ref{eq: hat initial data}) coincide
  (with $\bk_{IJ}:=\tfrac{\bq_\uI}{t_{0}}\de_{\uI J}$). However, in the proof of Theorem~\ref{thm:degenerate case}, they are different. 
\end{remark}
The proof is to be found at the end of the present section.

\subsection{The bootstrap improvement statement}

The strategy to prove Theorem~\ref{thm:GlobalExistenceFermiWalker} is via a bootstrap argument.

\begin{definition}[The bootstrap inequality]\label{def:BootstrapInequality}
  Let $\s_p$, $\s_V$, $\s$, $k_0$, $k_1$, $(\S, h_\refer)$, $(E_{i})_{i=1}^{n}$ and $V$ be as in Theorem~\ref{thm: big bang formation}.
  Let, moreover, smooth diagonal FRS initial data as in (\ref{eq:beiIetcindata}) be given, satisfying the FRS
  expansion-normalized bounds of regularity $k_1$ for $\rho_0$ at $t_0$, as well as (\ref{eq:hamconstraint}) and (\ref{eq:MC}).
  Finally, let $\ar \in \left(0, \tfrac{1}{6n}\right]$ and $0 < t_{\rob} < t_0 $ be given. Then a solution to
  (\ref{eq: transport frame})--(\ref{eq:LapseEquation}), consisting of smooth functions 
  \begin{equation}
    (N, e_I^i, \o^I_i, \g_{IJK}, k_{IJ}, \phi) 
    : \S \times [t_{\rob}, t_0] \to \rn{},
  \end{equation}
  is said to satisfy the \emph{bootstrap inequality for $\ar$ on $[t_{\rob}, t_0]$}, if the following inequality holds for all
  $t \in [t_{\rob}, t_0]$:
  \begin{equation} \label{eq:BootstrapInequality}
    \mbD(t)+\mbL_{(N)}(t)+\mbH_{(N)}(t) \leq \ar.
  \end{equation}
\end{definition}
\begin{remark}\label{remark:origin of the scaffold}
  In (\ref{eq:BootstrapInequality}), it is understood that the diagonal FRS data as in (\ref{eq:beiIetcindata}) together
  with $t_{0}$ are used to construct the scaffold; see Definition~\ref{def:scaffold variables}. 
\end{remark}

The key step in the proof of past global existence is the following statement, which tells us
that if the bootstrap inequality holds and $\ar$ and $t_0$ are small enough, then a strictly better inequality holds on the same interval.

\begin{thm}[Bootstrap improvement]\label{thm:BootstrapImprovement}
  Let $\s_p$, $\s_V$, $\s$, $k_0$, $k_1$, $(\S, h_{\refer})$, $(E_{i})_{i=1}^{n}$ and $V$ be as in Theorem~\ref{thm: big bang formation}
  and let $\rho_0 >0$. Then there are standard constants $\mathcal{C}$, $\ar_\rob$ and $\tau_\rob<1$,
  such that the following holds: if smooth diagonal FRS initial data as in (\ref{eq:beiIetcindata}) are given, satisfying the FRS
  expansion-normalized bounds of regularity $k_1$ for $\rho_0$ at $t_0$, as well as (\ref{eq:hamconstraint}) and (\ref{eq:MC});
  if $[t_{\rob},t_0] \subset \left(0,\tau_\rob\right]$; if there is a smooth solution to
  (\ref{eq: transport frame})--(\ref{eq:LapseEquation}), consisting of
  \[
  N, e_I^i, \o^I_i, \g_{IJK}, k_{IJ}, \phi : \S \times [t_{\rob}, t_0] \to \rn{},
  \]
  satisfying the bootstrap inequality for $\ar_\rob$ on $[t_{\rob},t_0]$ (see~Definition~\ref{def:BootstrapInequality}); and if
  $\mbD(t_{0})\leq t_{0}^{\s}$, then, for all $t \in [t_{\rob}, t_0]$,
  \begin{equation} \label{eq: global existenece estimate}
    \mbD(t) + \mbL_{(N)}(t) + \mbH_{(N)}(t) \leq \mathcal{C} t_0^{\s}.
  \end{equation}
\end{thm}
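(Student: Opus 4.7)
The plan is to decouple the lapse estimate from the transport/energy estimates for the remaining deviation quantities, then close everything via a Gronwall argument. As a preliminary, I would record uniform bounds on the scaffold variables $\ce_I, \chom^I, \chga_{IJK}, \ck_{IJ}, \cp$ obtained from \eqref{eq: e bar o bar assumption}--\eqref{eq: scalar field bar assumption}, Lemma~\ref{le: first bound on $q_I$} and Sobolev embedding; e.g.\ $\|\ce_I\|_{C^{k_0+1}(\S_t)}$ is controlled by $C(t/t_0)^{\bq_I}\langle \ln(t/t_0)\rangle^{k_0+1}$, with analogous bounds for the rest. Combined with the bootstrap inequality \eqref{eq:BootstrapInequality} these transfer to bounds on the dynamical variables $e_I, \o^I, \g_{IJK}, k_{IJ}, \phi$ that only degenerate in a controlled polynomial-logarithmic way in $t/t_0$, and in particular make every arising coefficient estimable in terms of standard constants.

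For the lapse I would use the alternative form \eqref{eq: alternative lapse}. Running the same mechanism as in Lemma~\ref{lemma:EstimateHamiltonianConstraint}, but now at arbitrary $t \in [t_{\rob}, t_0]$ and using the bootstrap inequality in place of the initial bounds, shows that the coefficient of $N$ in the linear-elliptic equation satisfied by $N-1$ is $O(t^{6\s})$ (this is the asymptotic Hamiltonian constraint, implicitly encoded in the assumptions). Standard elliptic estimates for the operator $t^2 e_Ie_I - 1$ then yield $\mbL_{(N)}(t) + \mbH_{(N)}(t) \leq C t^\s(\mbD(t) + 1)$, reducing the problem to bounding $\mbD$. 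Next, for each deviation quantity $\delta u \in \{\dee{I}{i}, \deo{I}{i}, \dega{IJK}, \dek{IJ}, \vdep, \dep{0}\}$, subtracting the scaffold identities from \eqref{eq: transport frame}--\eqref{eq:scalarfield} gives a system of the schematic form $\d_t \delta u = t^{-1}\mathcal{A}\,\delta u + \mathcal{F}$, where $\mathcal{A}$ is essentially diagonal with eigenvalues built from the $\bq_I$ (up to lapse corrections controlled by Step 1) and $\mathcal{F}$ gathers the scaffold error (the scaffold is not an exact solution) together with quadratic nonlinearities in $\delta u$. The $t$-weights in \eqref{eq:mbLdef}--\eqref{eq:mbHdef} are designed precisely so that $\mathcal{A}$ contributes a favourable sign. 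For $\mbL$ this produces, after integrating from $t_0$ downward,
\[
\mbL_{(e,\o)} + \mbL_{(\g,k)} + \mbL_{(\phi)} \leq C t_0^\s + C\!\int_t^{t_0}\! s^{-1+\eta}\bigl(\mbD(s)+\mbH_{(\cdot)}(s)\bigr)\,ds
\]
for some $\eta > 0$, where derivatives beyond $k_0$ are borrowed from $\mbH$ via Sobolev embedding. For $\mbH$ I would apply $E_{\bfI}$ with $|\bfI| \leq k_1$, pair with $t^{2A+c}E_{\bfI}\delta u$ and integrate by parts; the weight $A = 2(n+1)(1+2\s)$ is tuned so that the resulting differential inequality for $\mbH^2$ can be closed, and the rewriting of \eqref{eq:LapseEquation} via the Hamiltonian constraint fixes $A$ independently of $k_0$.

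Closing the bootstrap then amounts to combining the above into an inequality of the form $\mbD(t) \leq Ct_0^\s + C\ar_\rob\!\int_t^{t_0}\! s^{-1}\mbD\,ds + C\!\int_t^{t_0}\! s^{-1+\eta}\mbD\,ds$, which yields \eqref{eq: global existenece estimate} after choosing $\ar_\rob$ a small standard constant (to absorb the $s^{-1}$-integral into a Gronwall factor) and then $\tau_\rob$ small. The \emph{main obstacle} is the top-order energy estimate: because $\bq_I + \bq_J - \bq_K$ is spatially varying here, the commutator $[E_{\bfI}, t^{-1}(\bq_I + \bq_J - \bq_K)]$ is nonzero—unlike the analogous commutator in \cite{GIJ}—and must be absorbed into lower-regularity $\mbH$-norms via the $H^{k_1+2}$ control of $\bq_I$ from Proposition~\ref{prop: initial data estimate}; this is what forces the tailored $\|\bq_I\|_{C^k}$-dependent weights in the definition of $\mbL$. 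Simultaneously, derivative loss at top order in any term of the form $e_I k_{IJ}$ has to be traded via the momentum constraint \eqref{eq:MC}, which eats one derivative of the scaffold and is precisely why $k_1$ must be taken so much larger than $k_0$. The combinatorial bookkeeping needed to check that every arising nonlinearity fits into the scheme would be organised by the classification-by-factor-type algorithm of Section~\ref{sec:MainEstimates}.
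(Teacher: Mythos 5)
Your overall strategy (scaffold comparison, separate lapse estimate, transport structure with a favourable sign from the $\bq$-eigenvalue weights, Gronwall closing, momentum constraint to avoid top-order derivative loss, and the commutator $[E_\bfI,\bq_I+\bq_J-\bq_K]$ as the main obstacle) is the right one, and several of the observations match the paper's proof. However, there is a genuine error in the claimed high-order lapse estimate that is not a minor imprecision but would, if true, make the hardest part of the argument disappear.

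You assert that ``standard elliptic estimates for the operator $t^2 e_Ie_I - 1$ then yield $\mbL_{(N)}(t) + \mbH_{(N)}(t) \leq C t^\s(\mbD(t) + 1)$.'' The lower-order half ($\mbL_{(N)}$) is fine, but the high-order claim is false. The source in the alternative lapse equation splits into a piece $t^{-2}\big(1 - \sum_I\bq_I^2 - \bP_1^2 + \tfrac{2}{n-1}t^2V\circ\phi\big)N$, which does decay like $t^{6\s-2}$ by the asymptotic Hamiltonian constraint, plus cross terms of the form $t^{-2}\chk_{IJ}\,\dek{IJ}$ and $t^{-2}\chdtp\,\dep{0}$ and quadratic terms. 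The cross terms do \emph{not} decay: they contribute to $\|t^{-2} - k^2 - (e_0\phi)^2\|_{H^{k_1}}$ at size $\approx t^{-1}\big(\|\dek{}\|_{H^{k_1}}^2 + \|\dep{0}\|_{H^{k_1}}^2\big)^{1/2}$, with a coefficient that is precisely $\big(\sum_I\bq_I^2 + \bP_1^2\big)^{1/2} + O(\ar)+O(\s)$, which by the asymptotic Hamiltonian constraint is $\approx 1$, not $\approx t^\s$. Consequently the correct top-order lapse estimate has the schematic form $\mbH_{(N)} \leq (2+\s+C\ar)\big(\mbH_{(\g,k)}^2 + \mbH_{(\phi)}^2\big)^{1/2} + Ct^{2\s}(\mbD+t^\s)$, with an \emph{explicit}, non-decaying constant in front of the deviation quantities. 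Getting this constant — and not just $O(1)$ — is the whole point, because the lapse-gradient pairings $\langle e_{[I}(N)\chk_{J]K},\dega{IJK}\rangle$ and analogues contribute a term $\propto t^{-1}\mbH_{(N)}\,\mbH$ to the energy identity, which is only borderline integrable; the paper can absorb it only because the sum of all such $O(1)$ constants ($(2+\s)^2$ from the lapse term, $4n(1+\s)$ from the momentum-constraint terms, $2$ from the trace terms, and $-2A$ from commuting $t^{2(A+1)}$ with $\d_t$) adds up to something strictly negative for the chosen $A = 2(n+1)(1+2\s)$. Your own closing remark — that ``the rewriting of \eqref{eq:LapseEquation} via the Hamiltonian constraint fixes $A$'' — is inconsistent with the decay you claimed for $\mbH_{(N)}$: if $\mbH_{(N)}$ really decayed in $t$, there would be nothing for $A$ to absorb.

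A lesser point: the $\delta_m$-weights involving $\|\bq_I\|_{C^k}$ appear inside the proof of the \emph{lower}-order energy estimate (the quantity $\Q$ and the counter $B_{k_0}$), not in the definition of $\mbL$ as you write. The definition of $\mbL$ in Definition~\ref{def: deviation quantities} uses only $t$-power weights; the $\delta_m$ are an auxiliary device to dominate the commutator $[E_\bfI,\bq_{IJK}]$, which has to be handled pointwise because those commutators can be large in $L^2$ when the eigenvalue functions are spatially inhomogeneous.
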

\begin{remark}
  An observation similar to Remark~\ref{remark:origin of the scaffold} is equally relevant here.
\end{remark}

The proof of Theorem~\ref{thm:BootstrapImprovement} is presented at the end of Section~\ref{sec:EnergyEstimates}, see in particular
Subsection~\ref{sec:proofs}. In the rest of this section and in Sections~\ref{sec:MainEstimates} and \ref{sec:EnergyEstimates}, it will
be of use to have the standing assumption that we have a solution satisfying a bootstrap inequality.

\begin{assump}[The bootstrap assumption]\label{ass:Bootstrap}
  Let $\s_p$, $\s_V$, $\s$, $k_0$, $k_1$, $(\S, h_{\refer})$, $(E_{i})_{i=1}^{n}$ and $V$ be as in Theorem~\ref{thm: big bang formation}
  and let $\rho_{0}>0$. Let, moreover, smooth diagonal FRS initial data as in (\ref{eq:beiIetcindata}) be given, satisfying
  the FRS expansion-normalized bounds of regularity $k_1$ for $\rho_0$ at $t_0$, as well as (\ref{eq:hamconstraint}) and (\ref{eq:MC}).
  Assume that there are $0 < t_\rob < t_0$, $r \in \left(0, \tfrac{1}{6n}\right]$ and smooth functions
  \[
  N, e_I^i, \o^I_i, \g_{IJK}, k_{IJ}, \phi : \S \times [t_{\rob}, t_0] \to \rn{},
  \]
  which solve (\ref{eq: transport frame})--(\ref{eq:LapseEquation}) on $[t_{\rob},t_0]$; satisfy the bootstrap inequality for $r$ on $[t_{\rob},t_0]$
  (see~Definition~\ref{def:BootstrapInequality}); and are such that $\mbD(t_0)\leq t_0^{\s}$. 
\end{assump}

As a preparation to prove the bootstrap improvement, Theorem~\ref{thm:BootstrapImprovement},
we need a priori estimates on the scaffold dynamical variables.

\subsection{Estimates on the scaffold}
The scaffold dynamical variables satisfy the following estimates:

\begin{lemma}[A priori estimates on the scaffold dynamical variables] \label{lemma:estimatingthebackground}
  Let $\s_p$, $\s_V$, $\s$, $k_0$, $k_1$, $(\S, h_\refer)$, $(E_{i})_{i=1}^{n}$ and $V$ be as in Theorem~\ref{thm: big bang formation}
  and let $\rho_{0}>0$. Then there is a standard constant $C$ such that following holds: If $t_0 \leq 1$ and
  if there are $\be_I^i$, $\bq_I$, $\bp_0$, $\bp_1\in C^{\infty}(\S,\rn{})$ that form diagonal FRS initial data satisfying the FRS
  expansion-normalized bounds of regularity $k_1$ for $\rho_0$ at $t_0$, then the corresponding scaffold
  dynamical variables satisfy the following estimates for all $t\in (0,t_{0}]$:
  \begin{subequations}\label{eq:chchomchgachphiestkz}
    \begin{align}
      t^{1-4\s}\|\che\|_{C^{k_{0}+2}(\S_t)}
      + t^{1-4\s}\|\chom\|_{C^{k_{0}+2}(\S_t)}
      &\leq  C \label{eq:chechomtqkzest}\\
      t^{1-4\s}\|\chga\|_{C^{k_0+2}(\S_t)}
      +
      t \|\ck\|_{C^{k_0+2}(\S_t)}
      &\leq  C,\label{eq:chgaestkz}\\
      t^{1-3\s}\|\cear \chphi\|_{C^{k_0+2}(\S_t)}+t\|\d_{t}\chphi\|_{C^{k_{0}+2}(\S_t)}
      & \leq  C,\label{eq:chphiestkz}
    \end{align}
  \end{subequations}
  and
\begin{subequations}\label{eq: higher estimates scaffold}
    \begin{align}
      t^{1-4\s}\|\che\|_{H^{k_{1}+2}(\S_t)}
      +t^{1-4\s}\|\chom\|_{H^{k_{1}+2}(\S_t)} 
      & \leq  C,\label{eq: check e omega high}\\
      t^{1-4\s}\|\chga\|_{H^{k_1+1}(\S_t)} 
      +
      t \norm{\ck}_{H^{k_1 + 2}(\S_t)}
      &\leq  C,\label{eq: check gamma and k high}\\
      t^{1-3\s}\|\cear \chphi\|_{H^{k_{1}+1}(\S_t)} + t \|\d_{t}\chphi\|_{H^{k_{1}+2}(\S_t)} 
      &\leq  C.\label{eq: check scalar fields high}
    \end{align}
  \end{subequations}  
\end{lemma}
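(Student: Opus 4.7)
The strategy is to factor each scaffold quantity into a spatial factor---controlled by the FRS expansion-normalized bounds of Definition~\ref{def:DecomposedData}---times a factor of the form $t^{\pm\alpha}$ or $(t_0/t)^{\alpha}$ with $\alpha$ a combination of the $\bq_I$, and to estimate the latter via Moser-type chain-rule arguments. The uniform bounds $|\bq_I|<1-5\s$ and $\bq_I+\bq_J-\bq_K<1-5\s$ (for $I\neq J$) from Lemma~\ref{le: first bound on $q_I$} ensure that, for $t\in(0,1]$, each exponent $\alpha$ that appears satisfies $t^{-\alpha}\leq t^{-(1-5\s)}$. The prefactors $t^{1-4\s}$, $t^{1-3\s}$, or $t$ on the left-hand sides then absorb these singularities, leaving residual factors of the form $t^\s\langle\ln t\rangle^K$ that are uniformly bounded on $(0,1]$.

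Specifically, I would first write $\che_I^i = t^{-\bq_I}\cdot(t_0^{\bq_I}\be_I^i)$, where the second factor is controlled in $H^{k_1+2}$ by (\ref{eq: e bar o bar assumption}). Setting $F(x):=e^{-x\ln t}$, so that $t^{-\bq_I}=F(\bq_I)$, one has $\sup_{|x|\leq 1-5\s}|F^{(k)}(x)|\leq\langle\ln t\rangle^k t^{-(1-5\s)}$ for $t\leq 1$. A Moser chain-rule estimate, combined with Sobolev embedding and $\|\bq_I\|_{H^{k_1+2}}\leq\rho_0$, then gives $\|t^{-\bq_I}\|_{H^{k_1+2}(\S_t)}\leq Ct^{-(1-5\s)}\langle\ln t\rangle^{k_1+2}$, and a Moser product inequality yields $\|\che_I^i\|_{H^{k_1+2}(\S_t)}\leq Ct^{-(1-5\s)}\langle\ln t\rangle^{k_1+2}$; multiplication by $t^{1-4\s}$ produces a uniform bound. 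The argument for $\chom^I_i=t^{\bq_I}(t_0^{-\bq_I}\bo^I_i)$ is analogous.

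For the structure coefficients, a direct computation from the definitions yields the identity
\[
    \chga_{IJK} = \left(\tfrac{t_0}{t}\right)^{\bq_I+\bq_J-\bq_K}\!\bigl[\,\bga_{IJK} + \ln(t_0/t)\,\be_I(\bq_J)\,\de_{JK} - \ln(t_0/t)\,\be_J(\bq_I)\,\de_{IK}\bigr].
\]
The $\bga_{IJK}$ piece vanishes when $I=J$, so the admissibility bound $\bq_I+\bq_J-\bq_K<1-5\s$ applies whenever it is nonzero; the $\de_{JK}$ and $\de_{IK}$ pieces reduce the exponent to $\bq_I$ or $\bq_J$, each bounded in absolute value by $1-5\s$. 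Combining the Moser machinery with (\ref{eq: gamma bar p bar assumption}) (which provides $\|t_0^{\bq_I+\bq_J-\bq_K}\bga_{IJK}\|_{H^{k_1+1}}\leq\rho_0\langle\ln t_0\rangle$) yields $\|\chga_{IJK}\|_{H^{k_1+1}(\S_t)}\leq Ct^{-(1-5\s)}\langle\ln t\rangle^{k_1+2}$, from which (\ref{eq: check gamma and k high}) follows. For the scalar field, the decomposition
\[
    (t_0^{\bq_I}\be_I)(\chphi) = t_0\ln t\,(t_0^{\bq_I}\be_I)(\bp_1) + (t_0^{\bq_I}\be_I)(\bp_0 - t_0\ln t_0\,\bp_1)
\]
combined with (\ref{eq: scalar field bar assumption}) controls $\|(t_0^{\bq_I}\be_I)(\chphi)\|_{H^{k_1+1}}$; composing with $t^{-\bq_I}$ and multiplying by $t^{1-3\s}$ gives the first half of (\ref{eq: check scalar fields high}). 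The bounds $t\|\ck\|_{H^{k_1+2}}\leq\|\bq_I\|_{H^{k_1+2}}\leq\rho_0$ and $t\|\d_t\chphi\|_{H^{k_1+2}}=\|t_0\bp_1\|_{H^{k_1+2}}\leq\rho_0$ are immediate, and the $C^{k_0+2}$ estimates (\ref{eq:chchomchgachphiestkz}) then follow from the $H^{k_1+1}$ and $H^{k_1+2}$ estimates via Sobolev embedding, which is applicable since (\ref{eq: k_1 inequality}) ensures $k_1+1>k_0+2+n/2$.

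The main obstacle is purely bookkeeping: tracking the interleaving powers of $t$ and $\langle\ln t\rangle$ produced by the chain-rule estimates, and organizing $\chga_{IJK}$ into sub-cases so that the correct admissibility bound applies in each. All required Moser-type product and chain-rule inequalities are provided by Appendix~\ref{sec:SobolevInequalities}, so no deeper conceptual difficulty is expected.
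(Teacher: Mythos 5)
Your proposal is correct and follows essentially the same approach as the paper's proof: factor each scaffold quantity into a spatial factor controlled by the FRS expansion-normalized bounds times a $t$-power whose exponent is a combination of the $\bq_I$'s bounded uniformly away from $1$ by Lemma~\ref{le: first bound on $q_I$}, estimate the $t$-power in $H^{k_1+\cdot}$ via Moser/chain-rule machinery (with the residual $t^\s\langle\ln t\rangle^K$ absorbed), and then deduce (\ref{eq:chchomchgachphiestkz}) from (\ref{eq: higher estimates scaffold}) by Sobolev embedding. The only cosmetic difference is that you write the structure-coefficient formula in the equivalent rearranged form $\chga_{IJK}=(t_0/t)^{\bq_I+\bq_J-\bq_K}[\bga_{IJK}+\ln(t_0/t)\,\be_I(\bq_J)\de_{JK}-\ln(t_0/t)\,\be_J(\bq_I)\de_{IK}]$ (so that (\ref{eq: gamma bar p bar assumption}) applies directly to the $\bga_{IJK}$ factor), while the paper's (\ref{eq:formula for chgaIJK}) retains the commutator $\bo^K([t_0^{\bq_I}\be_I,t_0^{\bq_J}\be_J])$; both exploit the same vanishing-when-$I=J$ observation.
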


\begin{proof}
The estimates \eqref{eq:chchomchgachphiestkz} follow by \eqref{eq: higher estimates scaffold} and Sobolev embedding. Since
\[
	\che^{i}_{I}
		= ( \tfrac{t_0}{t} )^{\bq_I} \bar e_{I}^{i}, \quad
	\chom^{I}_{i} 
		= ( \tfrac{t}{t_0} )^{\bq_I} \bar \omega_i^{I},
\]
Lemma~\ref{lemma: products in L2}, Sobolev embedding, \eqref{eq: e bar o bar assumption}, \eqref{eq: gamma bar p bar assumption} and
Lemma~\ref{le: first bound on $q_I$} imply that
\begin{align*}
    \snorm{\ce_I^i}_{H^{k_1 +2}(\S_t)}
        &\leq C \snorm{t^{-\bq_I}}_{H^{k_1 + 2}(\S_t)} \snorm{t_0^{\bq_I} \be_I^i}_{H^{k_1 + 2}(\S_t)} \\
        &\leq C t^{-\snorm{\bq_I}_{C^{0}(\S_t)}} \ldr{\ln(t)}^{k_1 + 2} \snorm{\bq_I}_{H^{k_1 + 2}(\S_t)} \\
        &\leq C t^{-1 + 5\s} \ldr{\ln(t)}^{k_1 + 2}\leq C t^{-1 + 4\s},
\end{align*}
since $t^{\s}\ldr{\ln t}^{k_1+2}$ is bounded for $t \in [0,1]$ by a constant depending only on $\s$ and $k_1$.
This, together with the analogous estimate for $\snorm{\chom^I_i}_{H^{k_1 +2}(\S_t)}$, proven the same way, yields \eqref{eq: check e omega high}.
Since $\ck_{IJ} = \tfrac{\bq_\uI}t \de_{\uI J}$, the second part of \eqref{eq: check gamma and k high} is immediate from
\eqref{eq: gamma bar p bar assumption}. In order to prove the first part of \eqref{eq: check gamma and k high}, compute
\begin{equation}\label{eq:formula for chgaIJK}
  \begin{split}
    \cga_{IJK}
    = & \chom^K \left( \left[ t^{-\bq_I} t_0^{\bq_I} \be_I, t^{-\bq_J} t_0^{\bq_J} \be_J \right] \right) \\
    = & t^{\bq_K}t_0^{-\bq_K}\bo^K \left( t^{-\bq_I} \left( t_0^{\bq_I} \be_It^{-\bq_J} \right) t_0^{\bq_J} \be_J
    - t^{-\bq_J} \left( t_0^{\bq_J} \be_Jt^{-\bq_I} \right) t_0^{\bq_I} \be_I\right.\\
    & \left. \phantom{t^{\bq_K}t_0^{-\bq_K}\bo^Ki} + t^{-\bq_I - \bq_J} [t_0^{\bq_I}\be_I, t_0^{\bq_J}\be_J]\right) \\
    = & t^{- \bq_I - \bq_J + \bq_K} \left( - t_0^{\bq_I}\be_I(\bq_J) \ln(t) \de_{JK} + t_0^{\bq_J}\be_J(\bq_I) \ln(t) \de_{IK}\right.\\  
    &  \left.\phantom{t^{- \bq_I - \bq_J + \bq_K}i}+ t_0^{-\bq_K} \bo^K([t_0^{\bq_I}\be_I, t_0^{\bq_J}\be_J]) \right).
  \end{split}
\end{equation}
Lemma~\ref{lemma: products in L2}, Sobolev embedding, \eqref{eq: e bar o bar assumption}, \eqref{eq: gamma bar p bar assumption}
and Lemma~\ref{le: first bound on $q_I$} then yield
\begin{align*}
    \snorm{\cga_{IJK}}_{H^{k_1 + 1}(\S_t)} &\leq C t^{-1 + 5 \s} \ldr{\ln(t)}^{k_1+2}\leq C t^{-1 + 4\s}.
\end{align*}
Recalling (\ref{eq: scaffold scalar field}),  $t \d_t \cp = t_0 \bp_1$, so that the second part of \eqref{eq: check scalar fields high} is
immediate from \eqref{eq: scalar field bar assumption}. By Lemma~\ref{lemma: products in L2}, Sobolev embedding, \eqref{eq: scalar field bar assumption},
(\ref{eq: scaffold scalar field}) and \eqref{eq: check e omega high},
\begin{align*}
    \norm{\ce_I \cp}_{H^{k_1 + 1}(\S_t)}
        &\leq \norm{\ce_I}_{H^{k_1 + 1}(\S_t)} \norm{\cp}_{H^{k_1 + 2}(\S_t)} \\
        &\leq C t^{-1 + 4 \s} \left( \abs{\ln(t)}t_0 \norm{\bp_1}_{H^{k_1 + 2}(\S_t)} + \norm{\bp_0 - t_0 \ln(t_0) \bp_1}_{H^{k_1 + 2}(\S_t)} \right) \\
        &\leq C t^{-1 + 4 \s} \ldr{\ln(t)}\leq C t^{-1 + 3 \s},
\end{align*}
proving the first part of \eqref{eq: check scalar fields high}, which finishes the proof.
\end{proof}


%

\subsection{Estimating the dynamical variables in terms of the deviation quantities}

In the proof of our main estimates, it is useful to have direct control of the dynamical variables in terms of the deviation quantities
introduced in Definition~\ref{def: deviation quantities}.

\begin{lemma}
  [A priori estimates for the dynamical variables] \label{lemma:estimatingthedynamicalvariables}
  Let $\s_p$, $\s_V$, $\s$, $k_0$, $k_1$, $(\S, h_\refer)$, $(E_{i})_{i=1}^{n}$ and $V$ be as in Theorem~\ref{thm: big bang formation}
  and let $\rho_{0}>0$. Then there are standard constants $C$ and $\tau_1<1$ such that if
  Assumption~\ref{ass:Bootstrap} is satisfied for some $\ar \in (0, \frac{1}{6n}]$ and $[t_{\rob}, t_0] \subseteq (0, \tau_{1}]$, then, using the
  notation introduced in \eqref{eq:mbLdef}, the following holds on $[t_{\rob},t_{0}]$:
\begin{subequations}\label{eq:estdynvarmfltot}
\begin{align}
      t^{1-3\s}\|e\|_{C^{k_{0}+1}(\S_t)}
      + t^{1-3\s}\|\omega\|_{C^{k_{0}+1}(\S_t)} 
      &\leq \mbL_{(e,\omega)}(t) + Ct^{\s},\label{eq:tqeomegaWkzest}\\
      t^{1-2\s}\|\g\|_{C^{k_{0}}(\S_t)}  
      &\leq \mbL_{(\g, k)}(t) + Ct^{2\s},\label{eq:tqgaWinfkz}\\
	  t \|k\|_{C^{k_{0}+1}(\S_t)} 
	  	&\leq \mbL_{(\g, k)}(t) + C, \label{eq: low k bound}\\
      t^{1-2\s}\|\ear\phi\|_{C^{k_{0}}(\S_t)} 
      &\leq \mbL_{(\phi)}(t) + Ct^{\s},
      \label{eq:tqpsigWinfphiest}\\
      t \|e_0 \phi\|_{C^{k_{0}+1}(\S_t)} 
      &\leq \mbL_{(\phi)}(t) + C,
      \label{eq: low norm e 0 phi}\\
      t^{2}\|V\circ\phi\|_{C^{k_{0}+1}(\S_t)}
            +t^{2}\|V'\circ\phi\|_{C^{k_{0}+1}(\S_t)}
      &\leq C t^{5\s},
      \label{eq:VcircphiWinfkzest}\\
      t^{1-3\s}\|\rodiv_{h_{\refer}}e_{I}\|_{C^{0}(\S_t)}
      &\leq C \left( \mbL_{(e,\omega)}(t)+t^{\s} \right). \label{eq:diveIest}
\end{align}
\end{subequations}
Given the notation introduced in \eqref{eq:mbHdef}, the following holds on $[t_{\rob},t_{0}]$:
\begin{subequations}\label{eq:estdynvarmfhtot}
\begin{align}
      t^{A+1-2\s}\|e\|_{H^{k_{1}}(\S_t)}
        +t^{A+1-2\s}\|\omega\|_{H^{k_{1}}(\S_t)} 
      &\leq \sqrt{2} \mbH_{(e,\omega)}(t)
        +C t^{A+2\s},
      \label{eq:tqeomegaHkoest}\\
      t^{A+1}\|\g\|_{H^{k_{1}}(\S_t)} 
      &\leq \sqrt 2 \mbH_{(\g, k)}(t)
        +Ct^{A+4\s},\label{eq:AprioriEstimateSC}\\
    t^{A + 1}
        \norm{k}_{H^{k_1}(\S_t)}
        &\leq \mbH_{(\g, k)}(t) + C t^A, \label{eq: k high estimate} \\
    t^{A+1}\|\ear\phi\|_{H^{k_{1}}(\S_t)} 
      &\leq \mbH_{(\phi)}(t) + Ct^{A+3\s}, \label{eq:tqpsigHkophiest} \\
    t^{A + 1}\norm{e_0\phi}_{H^{k_1}(\S_t)}
        &\leq \mbH_{(\phi)}(t) + Ct^A, \label{eq: high norm e 0 phi} \\
    t^{A+2}\|V\circ\phi\|_{H^{k_{1}}(\S_t)}+t^{A+2}\|V'\circ\phi\|_{H^{k_{1}}(\S_t)} 
      &\leq C t^{5\s}. \label{eq:VcircphiHkoest}
    \end{align}
  \end{subequations}
\end{lemma}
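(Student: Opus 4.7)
The overarching strategy is the triangle inequality: for each dynamical variable $X \in \{e, \omega, \gamma, k, \ear\phi, e_0\phi\}$ one writes $X = \check X + \delta X$, bounds $\check X$ using Lemma~\ref{lemma:estimatingthebackground} and absorbs $\delta X$ into the corresponding deviation quantity via Definition~\ref{def: deviation quantities}. For instance, for (\ref{eq:tqeomegaWkzest}) we have
\[
t^{1-3\s}\|e\|_{C^{k_0+1}(\S_t)} \leq t^{1-3\s}\|\che\|_{C^{k_0+1}(\S_t)} + t^{1-3\s}\|\dee{}{}\|_{C^{k_0+1}(\S_t)} \leq t^\s \cdot t^{1-4\s}\|\che\|_{C^{k_0+1}(\S_t)} + \mbL_{(e,\omega)}(t),
\]
and the first term is bounded by $Ct^\s$ using (\ref{eq:chechomtqkzest}). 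Estimates (\ref{eq:tqgaWinfkz}), (\ref{eq: low k bound}), (\ref{eq:tqpsigWinfphiest}) and (\ref{eq: low norm e 0 phi}) follow analogously by matching powers of $t$ between the scaffold bounds (\ref{eq:chchomchgachphiestkz}) and the weights in $\mbL_{(\g,k)}, \mbL_{(\phi)}$. For the high-regularity estimates (\ref{eq:tqeomegaHkoest})--(\ref{eq: high norm e 0 phi}) the same mechanism applies, comparing with (\ref{eq: higher estimates scaffold}); the $\sqrt 2$ factors in (\ref{eq:tqeomegaHkoest}) and (\ref{eq:AprioriEstimateSC}) arise because $\mbH_{(e,\omega)}$ and $\mbH_{(\g,k)}$ combine two $L^2$-norms under a square root.

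The estimate (\ref{eq:VcircphiWinfkzest}) on $V\circ\phi$ is the main non-trivial step, because $\phi$ itself is not directly controlled by $\mbD$ (only its derivatives $e_I\phi$ and $e_0\phi$ are). I plan to recover $\phi$ by integrating the transport equation $\d_t\phi = Ne_0\phi$ from $t_0$ down to $t$, using $\phi(t_0,\cdot) = \bp_0$, and writing $\phi = \cp + (\phi - \cp)$. The scaffold value satisfies $|\cp(t,\cdot)| \leq t_0|\bp_1||\ln(t/t_0)| + |\bp_0|$, and by Lemma~\ref{lemma:EstimateHamiltonianConstraint} we have $t_0|\bp_1|<1-\tfrac{1}{2n}$; combined with (\ref{eq: scalar field bar assumption}) this gives $|\cp(t,\cdot)| \leq (1-\tfrac{1}{2n})|\ln t| + C$. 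The time integral $\int_t^{t_0} Ne_0\phi\,ds$ is controlled via the bootstrap on $\|N-1\|$ and $\|e_0\phi\|$ (i.e.~(\ref{eq: low norm e 0 phi}) which is already established) by $\int_t^{t_0} Cs^{-1}ds \leq C|\ln t|$, but crucially with a coefficient that can be chosen arbitrarily small by shrinking $\ar$ and $\tau_1$. This yields a pointwise bound $|\phi(t,\cdot)| \leq (1-\tfrac{1}{2n}+\epsilon)|\ln t| + C$ for $\epsilon$ as small as desired. Admissibility of $V$ then gives
\[
t^2|V^{(l)}\circ\phi| \leq c_l t^2 e^{2(1-\s_V)|\phi|} \leq C t^{2-2(1-\s_V)(1-1/(2n)+\epsilon)} \leq C t^{5\s+(1-\s_V)/n - 2(1-\s_V)\epsilon},
\]
which is $\leq Ct^{5\s}$ since $\s \leq \s_V/3$ and $\epsilon$ is small. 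Higher $C^{k_0+1}$ derivatives of $V\circ\phi$ are handled by the Faà di Bruno formula, using the bounds on $e_I\phi$ and $e_0\phi$ from (\ref{eq:tqpsigWinfphiest})--(\ref{eq: low norm e 0 phi}) combined with Lemma~\ref{lemma: products in L2}. An analogous argument, using Lemma~\ref{lemma: products in L2} for products in $H^{k_1}$ together with the estimates (\ref{eq:tqpsigHkophiest})--(\ref{eq: high norm e 0 phi}), gives (\ref{eq:VcircphiHkoest}).

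Finally, (\ref{eq:diveIest}) follows by writing $\rodiv_{h_\refer}(e_I) = \rodiv_{h_\refer}(e_I^j E_j) = E_j(e_I^j) + e_I^j\rodiv_{h_\refer}(E_j)$. The first term is controlled by $\|e_I\|_{C^{1}(\S_t)}$, hence by the already-proven (\ref{eq:tqeomegaWkzest}), while the second term is a product of $e_I^j$ (controlled similarly) with the smooth function $\rodiv_{h_\refer}(E_j)$, which is a bounded reference-geometric quantity. The main obstacle in the proof is genuinely the bound on $V\circ\phi$: without an a priori pointwise bound on $\phi$ itself, the admissibility hypothesis (\ref{eq: V assumption}) is useless, and obtaining that bound requires the delicate observation, already exploited in Lemma~\ref{lemma:EstimateHamiltonianConstraint}, that the asymptotic Hamiltonian constraint forces $t_0|\bp_1|$ strictly below $1$, so that the logarithmic growth of $\cp$ competes favourably with the exponential weight in the admissibility condition.
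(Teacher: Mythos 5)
The overall strategy (triangle inequality against the scaffold, using Lemma~\ref{lemma:estimatingthebackground}) is correct and is the paper's approach for the first several estimates, and your treatment of \eqref{eq:diveIest} matches the paper. The gap is in your argument for \eqref{eq:VcircphiWinfkzest}, specifically how you bound $\phi - \cp$.

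You invoke \eqref{eq: low norm e 0 phi} to control $\|e_0\phi\|$, giving $\|e_0\phi\|_{C^{k_0+1}(\S_s)} \leq s^{-1}(\mbL_{(\phi)}(s) + C)$, and then claim the resulting coefficient of $|\ln t|$ can be made ``arbitrarily small by shrinking $\ar$ and $\tau_1$.'' This is false. The constant $C$ in \eqref{eq: low norm e 0 phi} is a \emph{standard} constant coming from the scaffold estimate $s\|\d_s\chphi\|_{C^{k_0+1}(\S_s)} \leq C$ in \eqref{eq:chphiestkz}; it is neither small nor controlled by $\ar$, and shrinking $\ar$ (or $\tau_1$) does nothing to it. Following your step literally, you would get $|\phi(t)| \leq |\bp_0| + C(\ar)(1+\ar + C)|\ln t|$ with an uncontrolled coefficient, and then $t^2 e^{2(1-\s_V)|\phi|}$ need not be bounded as $t\to 0$. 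Your subsequent exponent bookkeeping assumes a coefficient strictly less than $1$, which you have not established.

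The fix, which is what the paper does, is to never put $\d_t\cp$ into a bound on its own: write $\d_t(\phi - \cp) = N\dep{0} + (N-1)\d_t\cp$ and integrate this from $t$ to $t_0$. Here $|\dep{0}| \leq s^{-1}\mbL_{(\phi)}(s) \leq s^{-1}\ar$ and $|N-1| \leq \ar$, while in $C^0$ you have the \emph{sharp} bound $|\d_s\cp| = |t_0\bp_1|/s \leq (1 - \tfrac{1}{2n})/s$ from \eqref{eq:absbphioneest}; the scaffold term is thus multiplied by the small factor $N-1$. This yields $\|\phi - \cp\|_{C^0(\S_t)} \leq \ar(\ar + 2)(\ln t_0 - \ln t)$, and since $\ar \leq \tfrac{1}{6n}$ one checks $\ar(\ar+2) \leq \tfrac{1}{2n}$. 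Combined with $\|\cp\|_{C^0(\S_t)} \leq (1 - \tfrac{1}{2n})(-\ln t) + C$, this gives the clean bound $\|\phi\|_{C^0(\S_t)} \leq -\ln t + C$ with coefficient exactly $1$, so $t^2\|V^{(k)}\circ\phi\|_{C^0} \leq Ct^{2\s_V} \leq Ct^{6\s}$. The essential point you miss is the distinction between \eqref{eq: low norm e 0 phi} (a $C^{k_0+1}$ estimate with a large standard constant) and the $C^0$ estimate $|\d_s\cp| \leq (1-\tfrac{1}{2n})/s$, which is precisely what makes the logarithms cancel; you do quote the latter for $\cp$ itself but not for the integrand. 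For the higher-regularity estimate \eqref{eq:VcircphiHkoest}, the paper bounds $\|\phi\|_{H^{k_1}(\S_t)}$ directly by integrating $\d_t(\phi - \cp)$ in $H^{k_1}$ and then applies Lemma~\ref{lemma: products in L2} together with the already-obtained $C^0$ bound on $V^{(k)}\circ\phi$; your Fa\`a di Bruno route would also need that $H^{k_1}$ bound on $\phi$, so it is not a shortcut.
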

\begin{proof}
  The estimates \eqref{eq:tqeomegaWkzest}--\eqref{eq: low norm e 0 phi} are immediate consequences of the estimates on the scaffold quantities,
  \eqref{eq:chchomchgachphiestkz}, and the definition of the deviation quantities, \eqref{eq:mbLdef}. Similarly, the estimates
  \eqref{eq:tqeomegaHkoest}--\eqref{eq: high norm e 0 phi} are immediate consequences of the estimates on the scaffold quantities,
  \eqref{eq: higher estimates scaffold}, and the definition of the deviation quantities, \eqref{eq:mbHdef}.
  It remains to prove \eqref{eq:VcircphiWinfkzest}, \eqref{eq:diveIest} and \eqref{eq:VcircphiHkoest}.
  To derive \eqref{eq:VcircphiWinfkzest}, we first need to control $\phi$. 
  Note, to this end, that for any $t \in [t_{\rob}, t_0]$ and $m \leq k_0 + 1$,
  \begin{equation}\label{eq: scalar field inequality}
    \begin{split}
      \|\phi-\cp\|_{C^{m}(\S_t)}
      = & \big\|\textstyle{\int}_{t}^{t_{0}}\d_{t}(\phi-\chphi)(\cdot, s) \md s\big\|_{C^{m}(\S_t)}  \\
      \leq & \textstyle{\int}_{t}^{t_{0}}s^{-1} \norm{s N\dep{0} + s ( N - 1 ) \d_t \chphi}_{C^{m}(\S_s)} \md s.
    \end{split} 
  \end{equation}
  Combining this estimate for $m = 0$ with \eqref{eq:BootstrapInequality} and Lemma~\ref{lemma:EstimateHamiltonianConstraint} yields
  \begin{equation*}
    \begin{split}
      &\snorm{\phi-\cp}_{C^{0}(\S_t)} \\
      \leq & \sup_{s \in [t, t_0]} \big( ( \snorm{N-1}_{C^{0}(\S_s)} + 1 ) \mbL_{(\phi)}(s)
        + \norm{N - 1}_{C^{0}(\S_s)} t_0 \norm{\bp_1}_{C^{0}(\S_s)} \big) \textstyle{\int}_t^{t_{0}}s^{-1} \md s \\
      \leq & \ar \left( \ar+2 \right) \left( \ln t_{0} - \ln t \right)     
      \leq  - \tfrac1{6n} \left( \tfrac1{6n} + 2 \right) \ln(t) \leq - \tfrac1{2n} \ln(t),
    \end{split}    
  \end{equation*}  
where we used that $\ln(t_0) < 0$.
On the other hand, by (\ref{eq: scaffold scalar field}), Lemma~\ref{lemma:EstimateHamiltonianConstraint} and \eqref{eq: scalar field bar assumption},
\begin{align*}
  \snorm{\cp}_{C^{0}(\S_t)}
  &\leq t_0\snorm{\bp_1}_{C^{0}(\S_t)}(-\ln(t)) + \snorm{\bp_0 - t_0 \bp_1 \ln(t_0)}_{C^{0}(\S_t)} \\
  &\leq - \left( 1 - \tfrac1{2n} \right) \ln(t) + C.
\end{align*}
Combining the last two estimates yields
\begin{equation}\label{eq: L infty phi} 
    \|\phi\|_{C^{0}(\S_t)}
	\leq - \ln t + C
\end{equation}
for all $ t \in [t_{\rob},t_{0}]$.
In particular, this means that
\begin{equation}\label{eq:tsqVkcircphiest}
	t^{2}\textstyle{\sum}_{k\leq k_{0}+2}\|V^{(k)}\circ\phi\|_{C^{0}(\S_t)}
		\leq C t^{2\s_{V}}
            \leq Ct^{6\s}
\end{equation}
on $[t_{\rob},t_{0}]$, where we appealed to \eqref{eq: V assumption} and \eqref{eq: sigma condition}.
Next, \eqref{eq: scalar field inequality} with $m = k_0 + 1$ yields
\begin{align*}
  \snorm{\phi-\cp}_{C^{k_{0}+1}(\S_t)} 
  \leq & C\sup_{s \in [t, t_0]} ( ( \snorm{N-1}_{C^{k_{0}+1}(\S_s)} + 1 ) \mbL_{(\phi)}(s)\\
  & \phantom{C\sup_{s \in [t, t_0]}i} +\snorm{N - 1}_{C^{k_{0}+1}(\S_s)} t_{0}\snorm{\bp_1}_{C^{k_{0}+1} (\S_s)}) \textstyle{\int}_t^{t_{0}}s^{-1} \md s \\
  \leq & C \left( \ln t_0 - \ln t \right)\leq C \ldr{\ln(t)},
\end{align*}
since $\ln(t_0) < 0$.
On the other hand, Sobolev embedding and \eqref{eq: scalar field bar assumption} imply that
\begin{equation} \label{eq: Sobolev bounds on check phi}
\begin{split}
    \snorm{\cp}_{C^{k_{0}+1} (\S_t)}
        &\leq C\snorm{\cp}_{H^{k_1 + 2} (\S_t)} \\
        &\leq - C\ln(t) \snorm{t_0 \bp_0}_{H^{k_1 + 2} (\S_t)} + C\snorm{\bp_0 - t_0 \bp_0 \ln(t_0)}_{H^{k_1 + 2} (\S_t)} 
        \leq C \ldr{\ln(t)}.
\end{split}
\end{equation}
To conclude,
\begin{equation}\label{eq:ldrlntphiWkzest}
	\|\phi\|_{C^{k_{0}+1}(\S_t)}
		\leq C\ldr{\ln t}.
\end{equation}
Combining \eqref{eq:tsqVkcircphiest} and \eqref{eq:ldrlntphiWkzest} yields \eqref{eq:VcircphiWinfkzest}. 
  
In order to estimate the divergence of $e_I$ with respect to $h_{\refer}$, note that
\[
    \rodiv_{h_{\refer}}e_I
        = \rodiv_{h_{\refer}}(e_I^i E_i)
        = E_i(e_I^i) + e_I^i \rodiv_{h_\refer}(E_i).
\]
Combining this with \eqref{eq:tqeomegaWkzest} yields \eqref{eq:diveIest}. 

In order to prove (\ref{eq:VcircphiHkoest}), we first need to estimate $\phi$ in $H^{k_{1}}$. However,
\begin{equation*}
  \begin{split}
     \|\phi - \cp \|_{H^{k_{1}}(\S_t)}
    = & \big\| \textstyle{\int}_{t}^{t_{0}} \d_{t}( \phi - \cp )(\cdot, s) \md s \big\|_{H^{k_1}(\S_t)} \\
    \leq & \textstyle{\int}_{t}^{t_{0}}s^{-1} \snorm{s N\dep{0} + s (N-1) \d_t \chphi}_{H^{k_1}(\S_s)} \md s
  \end{split}
\end{equation*}
On the other hand, Lemma~\ref{lemma: products in L2} and \eqref{eq:BootstrapInequality} yield
\begin{equation*}
  \begin{split}
    \snorm{s N\dep{0}}_{H^{k_1}(\S_s)} \leq & C\|N\|_{C^{0}(\S_s)}s\|\dep{0}\|_{H^{k_1}(\S_s)}+Cs\|\dep{0}\|_{C^{0}(\S_s)}\|N\|_{H^{k_1}(\S_s)}
    \leq Cs^{-A}.
  \end{split}
\end{equation*}
Similarly, appealing to \eqref{eq:BootstrapInequality}, \eqref{eq:chphiestkz} \eqref{eq: check scalar fields high} and
Lemma~\ref{lemma: products in L2} yields
\begin{equation*}
  \begin{split}
    \snorm{s (N-1)\d_t \chphi}_{H^{k_1}(\S_s)} \leq & C\|N-1\|_{C^{0}(\S_s)}s\|\d_t \chphi\|_{H^{k_1}(\S_s)}\\
    & +Cs\|\d_t \chphi\|_{C^{0}(\S_s)}\|N-1\|_{H^{k_1}(\S_s)} \leq Cs^{-A}.
  \end{split}
\end{equation*}
Combining the last three estimates yields
\[
  \|\phi - \cp \|_{H^{k_{1}}(\S_t)}\leq Ct^{-A}.
\]
On the other hand, by (\ref{eq: scalar field bar assumption}) and (\ref{eq: scaffold scalar field}),
$\snorm{\cp}_{H^{k_1}(\S_t)} \leq C \ldr{\ln(t)}$. Thus
\begin{equation}\label{eq: only phi high estimates}
	\|\phi\|_{H^{k_1}(\S_t)}\leq C t^{-A}. 
\end{equation}
Combining \eqref{eq: L infty phi}, \eqref{eq:tsqVkcircphiest} and \eqref{eq: only phi high estimates} yields \eqref{eq:VcircphiHkoest}.
\end{proof}

\subsection{Trading decay for derivatives} \label{subsec: trading decay derivatives}
It is sometimes of interest to trade decay for control of a larger number of derivatives.
Similarly to \cite[Lemma~4.2, p.~34]{GIJ}, we need the following estimate:

\begin{lemma} \label{lemma:extraderivatives}
  Let $\s_p$, $\s_V$, $\s$, $k_0$, $k_1$, $(\S, h_\refer)$, $(E_{i})_{i=1}^{n}$ and $V$ be as in Theorem~\ref{thm: big bang formation}
  and let $\rho_{0}>0$. If Assumption~\ref{ass:Bootstrap} is satisfied for some $\ar \in (0, \frac{1}{6n}]$ and
  $[t_{\rob}, t_0] \subseteq (0, 1]$, then, using the notation introduced in Definition~\ref{def: deviation quantities}, there is a standard constant
  $C$ such that 
  \begin{subequations} \label{eq: extra derivatives psi chpsi}
    \begin{align} 
    t^{1-3\s} \|\dee{}{}\|_{C^{k_{0}+2}(\S_t)} 
    + t^{1-3\s} \|\deo{}{}\|_{C^{k_{0}+2}(\S_t)} 
        &\leq Ct^{-\s}\mbD(t), \label{eq:extraderivativeseche}\\
    t^{1-2\s} \|\dega{}\|_{C^{k_{0}+2}(\S_t)} + t\|\dek{}\|_{C^{k_{0}+2}(\S_t)} 
        &\leq  Ct^{-\s}\mbD(t), \label{eq:extraderivativesgammachgamma}\\
    t^{1-2\s} \|\vdep\|_{C^{k_{0}+2}(\S_t)}
    + t\|\dep{0}\|_{C^{k_{0}+2}(\S_t)}
        &\leq  Ct^{-\s}\mbD(t), \label{eq:extraderivativesphichphi} \\
      \begin{split}
    t^{-\s} \|N-1\|_{C^{k_{0}+3}(\S_t)} & \\
        + t^{1-4\s}\|\ear N\|_{C^{k_{0}+2}(\S_t)}
        &\leq  Ct^{-\s} \big( \mbL_{(N)}+\mbH_{(N)} \big)(t),
      \end{split}\label{eq:extraderivativesNchN}
    \end{align}
  \end{subequations}
for all $t\in [t_{\rob}, t_0]$. Moreover, for all $t\in [t_{\rob}, t_0]$,
  \begin{subequations} \label{eq: extra derivatives dynamical}
    \begin{align}
    t^{1-3\s} \norm{e}_{C^{k_{0}+2}(\S_t)} 
    + t^{1-3\s}\norm{\o}_{C^{k_{0}+2}(\S_t)}
        \leq &~ C\left( t^{-\s}\mbD(t) + t^{\s} \right), \label{eq:extraderivativese} \\
    t^{1-2\s} \norm{\gamma}_{C^{k_{0}+2}(\S_t)}
        \leq &~ C\left( t^{-\s}\mbD(t) + t^{2\s} \right), \label{eq:extraderivativesgamma} \\
    t \norm{k}_{C^{k_{0}+2}(\S_t)}
        \leq &~ C\left( t^{-\s}\mbD(t) + 1 \right), \label{eq:extraderivativesk} \\
    t^{1-2\s} \norm{\ep}_{C^{k_0+2}(\S_t)}
        \leq &~ C\left( t^{-\s}\mbD(t) + t^{\s} \right), \label{eq:extraderivativeseIphi} \\
    t \norm{e_0(\phi)}_{C^{k_0 + 2}(\S_t)}
        \leq &~ C\left( t^{-\s}\mbD(t) + 1 \right). \label{eq:extraderivativesezerophi}
    \end{align}
  \end{subequations}
  \end{lemma}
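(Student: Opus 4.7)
The backbone is a standard Gagliardo--Nirenberg interpolation: trade the low $C^{k_0+1}$-control furnished by $\mbL$ against the high $H^{k_1}$-control furnished by $\mbH$, and then pass back to $C^{k_0+2}$ by Sobolev embedding. First I would handle \eqref{eq: extra derivatives psi chpsi}, since \eqref{eq: extra derivatives dynamical} follows from the triangle inequality together with the $C^{k_0+2}$-scaffold estimates \eqref{eq:chchomchgachphiestkz} from Lemma~\ref{lemma:estimatingthebackground}.

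Fix one deviation quantity, say $\dee{}{}$. Since $\S$ is compact relative to $h_\refer$, Sobolev embedding on $(\S,h_\refer)$ gives, in one direction, $\|f\|_{H^{k_0+1}(\S_t)} \leq C\|f\|_{C^{k_0+1}(\S_t)}$ and, in the other, $\|f\|_{C^{k_0+2}(\S_t)} \leq C\|f\|_{H^s(\S_t)}$ whenever $s \geq k_0+2+\lceil(n+1)/2\rceil$. From the bootstrap and the definitions \eqref{eq:mbLeomdef}, \eqref{eq:mbHeomdef}, we then have the two endpoint bounds
\begin{equation*}
\|\dee{}{}\|_{H^{k_0+1}(\S_t)} \leq C t^{-1+3\s}\mbL_{(e,\omega)}(t), \qquad \|\dee{}{}\|_{H^{k_1}(\S_t)} \leq t^{-A-1+2\s}\mbH_{(e,\omega)}(t).
\end{equation*}
Interpolation (the Sobolev version of Gagliardo--Nirenberg, available from the appendix) yields, for any integer $s \in [k_0+1,k_1]$ and $\theta := (s-k_0-1)/(k_1-k_0-1)$,
\begin{equation*}
\|\dee{}{}\|_{H^s(\S_t)} \leq C \|\dee{}{}\|_{H^{k_0+1}(\S_t)}^{1-\theta}\|\dee{}{}\|_{H^{k_1}(\S_t)}^{\theta}.
\end{equation*}
Picking $s = k_0+2+\lceil(n+1)/2\rceil$ and composing the three inequalities gives
\begin{equation*}
\|\dee{}{}\|_{C^{k_0+2}(\S_t)} \leq C \, t^{-(1-3\s)(1-\theta)-(A+1-2\s)\theta}\, \mbL_{(e,\omega)}^{1-\theta}\mbH_{(e,\omega)}^{\theta}.
\end{equation*}

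The key check is that the resulting $t$-exponent is at most $-(1-2\s)$, i.e.\ $\theta(A+\s) \leq \s$, equivalently $\theta \leq \s/(A+\s)$. Since $A + \s \leq (2n+3)(1+2\s)$ and $s-k_0-1 \leq 1 + \lceil(n+1)/2\rceil$, the hypothesis \eqref{eq: k_1 inequality} on $k_1$ is precisely what forces $\theta$ to be this small. Once the exponent check is in place, using $\mbD \leq \mathfrak r \leq 1$ and $\theta \in (0,1)$ to bound $\mbL_{(e,\omega)}^{1-\theta}\mbH_{(e,\omega)}^{\theta} \leq \mbD(t)$, and multiplying by $t^{1-3\s}$, produces the claimed bound for $\dee{}{}$. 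The same scheme, with the appropriate weights read off from \eqref{eq:mbLdef}--\eqref{eq:mbHdef}, handles $\deo{}{}, \dega{}, \dek{}, \vdep$ and $\dep{0}$; in each case the weight check reduces to the identical inequality $\theta \leq \s/(A+\s)$, which is why the lower bound \eqref{eq: k_1 inequality} is taken uniform across all deviation quantities. Finally, for the dynamical variables in \eqref{eq: extra derivatives dynamical}, the triangle inequalities $\|e\|_{C^{k_0+2}} \leq \|\dee{}{}\|_{C^{k_0+2}} + \|\che\|_{C^{k_0+2}}$, etc., combined with \eqref{eq:chchomchgachphiestkz}, immediately yield the stated bounds.

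The only point that requires genuine care is \eqref{eq:extraderivativesNchN}: $\mbH_{(N)}$ couples $t^{-2}\|N-1\|_{H^{k_1}}^2$ with $\|\ear N\|_{H^{k_1}}^2$, so $N-1$ and $\ear N$ must be interpolated separately and the two resulting weights then matched against the $t^{-\s}\|N-1\|_{C^{k_0+3}}$ and $t^{1-4\s}\|\ear N\|_{C^{k_0+2}}$ summands on the left. This is the main obstacle I anticipate, but it is a matter of careful bookkeeping rather than a conceptual difficulty: the extra $t^{-2}$ weight on $\|N-1\|_{H^{k_1}}$ combines with the extra $t^{-\s}$ weight on $\|N-1\|_{C^{k_0+3}}$ (compared with the other deviation quantities) in exactly the way required to reduce once again to $\theta \leq \s/(A+\s)$.
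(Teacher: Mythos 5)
Your approach is in essence the same as the paper's: interpolate the $H^{k_0+2+\kappa_0}$-norm between a low norm and the $H^{k_1}$-norm, then embed into $C^{k_0+2}$, and check that the resulting $t$-exponent loss is at most $t^{-\s}$ because $k_1$ is large. The paper formalizes the trade in the single display \eqref{eq: trading der main}, with a free weight parameter $B$ chosen per variable to match the $\mbH$-weight, and then verifies the exponent requirement. So the idea is right.

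There are two points worth flagging. First, you appeal to an interpolation inequality of the form $\|\psi\|_{H^s} \leq C\|\psi\|_{H^{k_0+1}}^{1-\theta}\|\psi\|_{H^{k_1}}^{\theta}$ with $\theta = (s-k_0-1)/(k_1-k_0-1)$, claiming it is ``available from the appendix.'' It isn't: the paper's Lemma~\ref{le: interpolation} only has $L^2$ as the low endpoint, and the exponent you want is strictly smaller than what one gets by replacing $\|\cdot\|_{L^2}$ by $\|\cdot\|_{H^{k_0+1}}$ in that estimate (which would instead give $\theta' = s/k_1$). The multiplicative constant in Lemma~\ref{le: interpolation} prevents the usual convexity argument from upgrading the $L^2$-endpoint version to the general two-endpoint version without further work. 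The paper avoids this entirely by interpolating from $L^2$ and then using the trivial bound $\|\cdot\|_{L^2(\S_t)} \leq C\|\cdot\|_{C^0(\S_t)}$, so that the low endpoint is still controlled by the $\mbL$-quantities — this is the route you should take if you want the argument to be self-contained relative to the paper's appendix.

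Second, your claim that ``in each case the weight check reduces to the identical inequality $\theta \leq \s/(A+\s)$'' is not accurate. The weight parameter $B$, determined by the difference between the $\mbL$- and $\mbH$-weights, is $A+\s$ for $\dee{}{}$, $\deo{}{}$ and $N-1$, but $A+2\s$ for $\dega{}$ and $\vdep$, $A$ for $\dek{}$ and $\dep{0}$, and $A+4\s$ for $\ear N$; moreover for $N-1$ the relevant $C^s$-index is $k_0+3$ rather than $k_0+2$. The requirement is $k_1 \geq B(s+\kappa_0)/\s$ for the worst case, and the paper's hypothesis \eqref{eq: k_1 inequality} is calibrated to dominate $(A+4\s)(k_0+2+\kappa_0)/\s$ and $(A+\s)(k_0+3+\kappa_0)/\s$. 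In your setup you would need to rerun the exponent check variable by variable. None of this changes the conclusion, but it is more than the single ``identical inequality'' you describe, and it is exactly the kind of bookkeeping the paper's parametrized form \eqref{eq: trading der main} is designed to organize.
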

\begin{proof}
  Let $\kappa_{0}$ denote the smallest integer strictly larger than $n/2$.
  For a smooth function $\Psi$;  positive numbers $\b, B$; and a positive integer $s$, Sobolev embedding and interpolation, see
  Lemma~\ref{le: interpolation}, yields
  \begin{equation} \label{eq: trading der main}
    \begin{split}
      t^\b \snorm{\Psi}_{C^{s}(\S_t)}
      \leq & C t^\b \snorm{\Psi}_{H^{s + \kappa_0}(\S_t)} \\
      \leq & C t^\b \snorm{\Psi}_{L^2(\S_t)}^{1-\frac{s + \kappa_0}{k_1}} \snorm{\Psi}_{H^{k_1}(\S_t)}^{\frac{s + \kappa_0}{k_1}} \\
      \leq & C t^{-B\frac{s + \kappa_0}{k_1}} \left( t^\b \snorm{\Psi}_{L^2(\S_t)}\right)^{1-\frac{s + \kappa_0}{k_1}}
      \left( t^{\b+B} \snorm{\Psi}_{H^{k_1}(\S_t)}\right)^{\frac{s + \kappa_0}{k_1}}.
    \end{split}
  \end{equation}
  Applying this estimate with $\Psi = \dee{I}{i}$, $\b = 1 - 3 \s$, $s = k_0 + 2$, $B = A + \s$, and recalling \eqref{eq:mbLeomdef} and
  \eqref{eq:mbHeomdef} gives the first part of \eqref{eq:extraderivativeseche}, since 
  \[
  k_1
  \geq \tfrac{(A + \s)(k_0 + 2 + \kappa_0)}{\s},
  \]
  which implies that $t^{\frac{-B(s + \kappa_0)}{k_1}} \leq t^{-\s}$. Applying \eqref{eq: trading der main} with $\Psi = \deo{I}{i}$, $\b = 1 - 3 \s$,
  $s = k_0 + 2$, $B = A + \s$; $\Psi = \dega{IJK}$, $\b = 1 - 2 \s$, $s = k_0 + 2$, $B = A + 2\s$; $\Psi = \dek{IJ}$,  $\b = 1$, $s = k_0 + 2$,
  $B = A$; $\Psi = \dep{I}$, $\b = 1 - 2 \s$, $s = k_0 + 2$, $B = A + 2\s$; $\Psi = \dep{0}$, $\b = 1$, $s = k_0 + 2$, $B = A$; $\Psi = N-1$,
  $\b = -\s$, $s = k_0 + 3$, $B = A + \s$; and $\Psi = e_I (N)$, $\b = 1-4\s$, $s = k_0 + 2$, $B = A + 4\s$
respectively, similarly yields the remaining estimates in \eqref{eq: extra derivatives psi chpsi}, recalling  \eqref{eq:mbLdef} and \eqref{eq:mbHdef}.
The estimates \eqref{eq: extra derivatives dynamical} are now immediate consequences of \eqref{eq: extra derivatives psi chpsi} and
\eqref{eq:chchomchgachphiestkz}.
\end{proof}

\subsection{Local existence and Cauchy stability}
In the next subsection, we prove that the global existence theorem
for solutions to the FRS equations, i.e. Theorem~\ref{thm:GlobalExistenceFermiWalker},
follows from Theorem~\ref{thm:BootstrapImprovement} and local existence. In
the present subsection, we state the results we need in this paper concerning
local existence and Cauchy stability. The results essentially follow immediately
from the theory developed in \cite{OPR23}. In order to formulate
the continuation criterion associated with the local existence result, it
is convenient to introduce
\begin{equation}\label{eq:zetadef}
  \zeta:=k_{IJ}k_{IJ}+(e_{0}\phi)^{2}-2V\circ\phi/(n-1)
\end{equation}
and, assuming $\zeta(t,\cdot)>0$, 
\begin{equation}\label{eq:mfCdef}
  \begin{split}
    \mfC(t) := & \|e \|_{C^{3}(\S_t)} + \|\o \|_{C^{3}(\S_t)} + \|N\|_{C^{3}(\S_t)} +\|k  \|_{C^{2}(\S_t)} \\
    & + \| \phi \|_{C^{3}(\S_t)} + \|\dtp \|_{C^{2}(\S_t)}+\|1/N\|_{C^{0}(\S_t)}+\|1/\zeta\|_{C^{0}(\S_t)}.
  \end{split}
\end{equation}

\begin{lemma}[Local existence, FRS equations]\label{le:LocalExistenceFermiWalker}
  Let $(\S,h_{\refer})$ be a closed, connected and oriented Riemannian manifold with a smooth global orthonormal
  frame $(E_{i})_{i=1}^{n}$ (with dual frame $(\eta^{i})_{i=1}^{n}$). Let $V \in C^\infty(\rn{})$,
  \begin{equation} \label{eq: initial functions local existence}
    \be_I^i, \bo^I_i, \bk_{IJ}, \bga_{IJK}, \bp_0, \bp_1 : \S \to \rn{}
  \end{equation}
  be smooth functions and define $\be_I:= \be_I^i E_i$ and $\bo^I:= \bo^I_i \eta^i$. Assume that
  \begin{itemize}
  \item $(\be_I)_{I=1}^{n}$ is a smooth frame of $\S$,
  \item $(\bo^I)_{I=1}^{n}$ is the dual frame of $(\be_I)_{I=1}^{n}$,
  \item $\bk_{IJ} = \bk_{JI}$ and $\bk_{II}$ is a strictly positive real number, say $1/t_0$,
  \item $\bga_{IJK} = \bo^K([\be_I, \be_J])$,
  \end{itemize}
  that the functions \eqref{eq: initial functions local existence} satisfy \eqref{eq:hamconstraint} and \eqref{eq:MC} (with bars added to
  $\gamma$, $e$ and $k$; $t=t_0$; $e_0\phi$ replaced by $\bp_1$; and $\phi$ replaced by $\bp_1$) and that
  \[
    \bk_{IJ}\bk_{IJ}+\bp_{1}^{2}-2V\circ\bp_{0}/(n-1)>0.
  \]
  Then there exists an open interval $\mI=(t_{-},t_{+})\subseteq (0,\infty)$, with $t_{0}\in\mI$, and a unique smooth solution 
  \begin{equation}\label{eq:maximal solution}
    (e_I^i, \o_i^I, k_{IJ}, \gamma_{IJK}, \phi, N) : \mI\times \S \to \rn{},
  \end{equation}
  to (\ref{eq: transport frame})--(\ref{eq:LapseEquation}) satisfying the initial condition
  \begin{equation*}
    \left( e_I^i, \o^I_i, k_{IJ}, \g_{IJK}, \phi, e_0 \phi \right)|_{t = t_0}
    = \left( \be_I^i, \bo^I_i, \bk_{IJ}, \bga_{IJK}, \bp_0, \bp_1 \right)
  \end{equation*}
  and such that if $e_I := e_I^i E_i$ and $\o^I := \o^I_i \eta^i$, then the following holds for $t\in\mI$:
  \begin{itemize}
  \item $\zeta>0$ and $N>0$,    
  \item $ e_1, \hdots, e_n$ is a smooth frame of the tangent space of $\S_{t}$,
  \item $\o^1, \hdots, \o^n$ is the dual frame of $e_1, \hdots, e_n$,
  \item $k_{IJ} = k_{JI}$ and $k_{II} = \frac1{t}$, 
  \item $\g_{IJK} = \o^K([e_I, e_J])$.
  \end{itemize}
  Finally, either $t_{-}=0$ or $\limsup_{t\downarrow t_{-}}\mfC(t)=\infty$. Similarly, either
  $t_{+}=\infty$ or $\limsup_{t\uparrow t_{+}}\mfC(t)=\infty$.
\end{lemma}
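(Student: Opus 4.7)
The plan is to reduce the local existence problem for the FRS system to the known CMC local existence theory for the Einstein--non-linear scalar field equations developed in \cite{OPR23}, and then to recover the frame variables by solving ODEs along $\d_t$. First, from the data \eqref{eq: initial functions local existence} I would build the geometric CMC initial data set $(\S,\bh,\bk,\bp_0,\bp_1)$ by declaring $\bh := \bo^I\otimes\bo^I$ and $\bk := \bk_{IJ}\bo^I\otimes\bo^J$; by construction $\bh$ is a Riemannian metric (since $(\be_I)$ is a frame with dual co-frame $(\bo^I)$), $\bk$ is symmetric with $\tr_{\bh}\bk = 1/t_0$, and the constraints \eqref{eq:hamconstraint}, \eqref{eq:MC} translate into the usual Hamiltonian and momentum constraints \eqref{eq: Hamiltonian constraint}, \eqref{eq: momentum constraint} for $(\bh,\bk,\bp_0,\bp_1)$. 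Applying the CMC local existence theorem from \cite{OPR23}, one obtains, for an open interval $\mI \ni t_0$, a smooth Lorentz metric $g = -N^2\md t^2 + h$ on $\mI\times\S$ with $N > 0$, a smooth scalar field $\phi$, and CMC slices $\S_t$ of mean curvature $1/t$, such that $(g,\phi)$ solves the Einstein--non-linear scalar field equations and induces the prescribed data at $t_0$.

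Second, with $(g,\phi,N)$ in hand, I would recover the remaining FRS variables by solving ODEs along $\d_t$. The frame $(e_I)$ is obtained as the unique solution of the linear transport system \eqref{eq: transport frame} with $e_I|_{t_0} = \be_I$; this is an ODE in $t$ with smooth coefficients (the right-hand side involves $k_{IJ}$ obtained from $g$ via Definition \ref{def: exp nor Weingarten map}-style contractions after one sets $k_{IJ} := k(e_I,e_J)$). As in the first part of the proof of Proposition \ref{prop:FieldEquationsFermiWalker}, the dual co-frame $(\o^I)$ automatically satisfies \eqref{eq: transport co-frame} and the duality $\o^I(e_J) = \de^I_J$ is preserved, so $(\o^I)_{I=1}^n$ remains a smooth co-frame on each $\S_t$. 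Defining $\g_{IJK} := \o^K([e_I,e_J])$ gives the structure coefficients, which then satisfy \eqref{eq:concoef} as shown in the proof of Proposition \ref{prop:FieldEquationsFermiWalker}. Symmetry $k_{IJ} = k_{JI}$ follows from the fact that $k$ is a symmetric tensor, and $k_{II} = \tr_h k = 1/t$ is the CMC condition. The remaining evolution equations \eqref{eq:sff}, \eqref{eq:spatial scalar field derivative}, \eqref{eq:scalarfield} and the lapse equation \eqref{eq:LapseEquation} then hold automatically from the Einstein equations, the wave equation for $\phi$, the CMC condition and the computations carried out in the proof of Proposition \ref{prop:FieldEquationsFermiWalker} (read backwards). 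Uniqueness of the full FRS solution reduces to uniqueness of the geometric CMC development together with ODE uniqueness for the frame transport.

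Third, I would take $\mI = (t_-,t_+)$ to be the maximal interval on which this construction produces a smooth solution. The continuation criterion is the main non-routine part of the argument and I expect it to be the principal obstacle. In \cite{OPR23} the continuation criterion for the geometric problem is formulated in terms of a geometric scale-invariant quantity controlling $(h,k,N,\phi,\d_t\phi)$ in appropriate norms together with the positivity of the lapse and a coefficient that appears in the lapse equation; the role of $1/\zeta$ in \eqref{eq:mfCdef}, with $\zeta$ defined by \eqref{eq:zetadef}, is precisely to enforce invertibility in the elliptic equation for $N$ (compare Remark \ref{rmk: alternative lapse}, where the coefficient of $N$ in the alternative lapse equation \eqref{eq: alternative lapse} is proportional to $\zeta - 2V\circ\phi/(n-1) + k_{IJ}k_{IJ}/(n-1) - \dots$, so $\zeta > 0$ guarantees well-posedness of the elliptic problem). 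I would then show that finiteness of $\limsup_{t\to t_\pm}\mfC(t)$ implies uniform $C^3$-bounds on $(e,\o,k,N,\phi,\d_t\phi)$; translate these bounds into $C^s$-bounds on the geometric variables $(h,k,N,\phi,\d_t\phi)$ using the bounds on $e$, $\o$ and $1/N$; invoke the continuation criterion of \cite{OPR23} to extend the geometric solution past $t_\pm$; and finally extend the frame variables by ODE continuation. Conversely, if the solution extends, all $\mfC$-quantities remain finite by construction. Thus the dichotomy $t_\pm \in \{0,\infty\}$ versus $\limsup \mfC = \infty$ holds as stated.
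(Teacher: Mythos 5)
Your proposal is correct in outline and appeals to the same source \cite{OPR23} for the analytic core, but the route you take through it differs from the paper's. The paper applies \cite[Theorem~8]{OPR23} \emph{directly} to the framed first-order system: it sets $\xi_I := \be_I$, $\rho^I := \bo^I$, $a_I^J = 0$ in the notation of that theorem and obtains, in one step, a unique solution $(e_I^i,\o^I_i,k_{IJ},\phi,N)$ to the system \cite[(10)]{OPR23}, together with the interval and continuation criterion. It then verifies that \cite[(10)]{OPR23}, with that particular choice of $a_I^J$, is equivalent to the FRS equations \eqref{eq: transport frame}--\eqref{eq:LapseEquation}: this is a finite amount of bookkeeping (the frame evolution equations \eqref{eq: transport frame}, \eqref{eq: transport co-frame} are literally \cite[(10a)--(10b)]{OPR23} with $f_I^J=-Nk_{IJ}$; \eqref{eq:concoef} follows from the Jacobi identity; \eqref{eq:sff} and \eqref{eq:LapseEquation} follow from \cite[(10c)--(10g)]{OPR23} using the expressions \eqref{eq: SpatialRicciCurvature}--\eqref{eq:SpatialScalarCurvature}). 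You instead propose to first extract a \emph{geometric} CMC development $(g,\phi,N)$ and then reconstruct the frame by solving the transport ODE \eqref{eq: transport frame}. This is a legitimate alternative and, as you note, the duality of $(e_I)$ and $(\o^I)$ is preserved and the structure-coefficient equation \eqref{eq:concoef} then follows from the Jacobi identity as in the proof of Proposition~\ref{prop:FieldEquationsFermiWalker}. What the paper's route buys is that it avoids having to argue the converse of Proposition~\ref{prop:FieldEquationsFermiWalker} (your ``read backwards'' step), since the equivalence with the equations of \cite{OPR23} is checked symbol by symbol; your route is conceptually cleaner if one already trusts the geometry, but requires you to actually establish that a Fermi--Walker propagated frame on a CMC development with vanishing shift yields a solution of \eqref{eq: transport frame}--\eqref{eq:LapseEquation}, which is not a formal consequence of Proposition~\ref{prop:FieldEquationsFermiWalker} as stated.

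One concrete gap to flag in your continuation argument: the criterion $\mC$ from \cite[Theorem~8]{OPR23} controls not only spatial derivatives of $(e,\o,k,N,\phi,e_0\phi)$ but also the time derivatives $\|\d_t k\|_{C^1}$, $\|\d_t\phi\|_{C^2}$ and $\|\d_t e_0\phi\|_{C^1}$. These are not contained in $\mfC$ (see \eqref{eq:mfCdef}), and their boundedness has to be derived from the evolution equations \eqref{eq:sff}, \eqref{eq:scalarfield} using the quantities that \emph{are} in $\mfC$. Your sketch mentions translating $C^3$-bounds on $(e,\o,k,N,\phi,\d_t\phi)$ into geometric bounds, but does not address why a bound on $\mfC$ suffices to bound the time derivatives that appear in the continuation criterion of \cite{OPR23}. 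Without this step the dichotomy ``$t_-=0$ or $\limsup\mfC=\infty$'' does not follow. The paper handles exactly this by combining the assumed bound on $\mfC$ with \eqref{eq:sff}, \eqref{eq:scalarfield} to control the missing time derivatives.
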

\begin{proof}
  The statement follows by adapting \cite[Theorem~10, pp.~6--7]{OPR23} to the present setting. Note, first of all, that
  the initial data (\ref{eq: initial functions local existence}) give rise to
  \[
    \bge:=\o^{I}\otimes\o^{I},\ \ \
    \bk:=\bk_{IJ}\o^{I}\otimes\o^{J}.
  \]
  Due to \eqref{eq:hamconstraint}, \eqref{eq:MC}, (\ref{eq:SpatialScalarCurvature}) and the fact that
  $\tr_{\bge}\bk=\bk_{II}=1/t_{0}$, it is clear that $(\S,\bge,\bk,\bp_{0},\bp_{1})$ are CMC initial data for the
  Einstein-non-linear scalar field equations with potential $V$. Moreover, $\zeta>0$, where $\zeta$ is
  introduced in (\ref{eq:zetadef}), and $\tr_{\bge}\bk=1/t_{0}\in (0,\infty)$. Since $\bM=\S$ is closed, connected,
  oriented and parallelizable, all the conditions of \cite[Theorem~10]{OPR23} are met. Next, let $\xi_{I}:=\be_{I}$,
  $\rho^{I}:=\bo^{I}$ and fix $a_{I}^{J}=0$, using the notation of \cite[Theorem~10]{OPR23}. Then
  $a_{I}^{J}$ satisfies the conditions of \cite[Theorem~10]{OPR23}. Next, $e_{I}^{i}|_{t_{0}}$, $\o^{I}_{i}|_{t_{0}}$,
  $\phi|_{t_{0}}$, $(\d_{t}\phi)|_{t_{0}}$, $N|_{t_{0}}$ and $k_{IJ}|_{t_{0}}$ are specified as in the statement
  of \cite[Theorem~10]{OPR23}. Due to \cite[Theorem~10]{OPR23}, there is then an open interval $\mI\subseteq (0,\infty)$,
  containing $t_0$, and a unique solution to \cite[(13), p.~5]{OPR23} on $M:=\mI\times \S$, corresponding to these
  initial data, such that $N>0$ and $\zeta>0$; $k_{II}=1/t$; and $k_{IJ}=k_{JI}$. In \cite[(13)]{OPR23}, $\bk_{IJ}$ is
  used to denote what we here call $k_{IJ}$, and in what follows we tacitly reformulate \cite[(13)]{OPR23} by
  replacing $\bk_{IJ}$ with $k_{IJ}$. Moreover, if $\mI=(t_{-},t_{+})$, then either $t_{-}=0$ or $\mC(t)$ tends to
  infinity as $t\downarrow t_{-}$, where $\mC$ is defined in \cite[(20), p.~6]{OPR23}.
  There is a similar statement concerning $t_{+}$. What remains to be done is to verify that the solution to
  \cite[(13)]{OPR23} yields a solution to (\ref{eq: transport frame})--(\ref{eq:LapseEquation}); that the
  corresponding solution to (\ref{eq: transport frame})--(\ref{eq:LapseEquation}) is unique; and that
  the continuation criterion from \cite[Theorem~10]{OPR23} yields the continuation criterion of the present
  lemma.

  Note, to begin with, that since $a_{I}^{J}=0$, $f_{I}^{J}=-Nk_{IJ}$, see \cite[(22), p.~6]{OPR23}. This means that
  \cite[(13a), p.~5]{OPR23} and \cite[(13b), p.~5]{OPR23} are equivalent to (\ref{eq: transport frame}) and
  (\ref{eq: transport co-frame}) respectively; note that $e_{0}=N^{-1}\d_{t}$. If we define $\bga_{IJK} = \bo^K([\be_I, \be_J])$,
  then (\ref{eq:concoef}) follows from the Jacobi identity; see \cite[Lemma~38, p.~31]{OPR23}. Next, note that
  \cite[(13c)]{OPR23}, with our choice of $f_{I}^{J}$, reads
  \begin{equation}\label{eq:dtbkIJLE}
    \begin{split}
      \d_{t}k_{IJ} = & e_{(I}e_{J)}(N)-\g_{K(IJ)}e_{K}(N)-Nt^{-1}k_{IJ}+Ne_{I}(\phi)e_{J}(\phi)\\
      &+\tfrac{2N}{n-1}(V\circ\phi)\de_{IJ}-N\bR_{IJ},
    \end{split}
  \end{equation}
  where we appealed to (\ref{eq: Gamma as gamma}) and the symmetry of $\bna_{I}\bna_{J}N$; here $\bna$ and $\bR$
  denote the Levi-Civita connection and the Ricci tensor associated with metric induced on the leaves of the
  foliation. Combining (\ref{eq:dtbkIJLE}) with (\ref{eq: SpatialRicciCurvature}) yields (\ref{eq:sff}).
  Next, (\ref{eq:spatial scalar field derivative}) is a consequence of (\ref{eq: transport frame}) and
  (\ref{eq:scalarfield}) is a consequence of \cite[(13e)]{OPR23}; cf. \cite[(166), p.~36]{OPR23} and the
  adjacent text. Due to \cite[Theorem~10]{OPR23}, the
  solution to \cite[(13)]{OPR23} is a solution to the Einstein-non-linear scalar field equations. This
  means that the constraint equations are satisfied. In particular \eqref{eq:hamconstraint} and \eqref{eq:MC}
  hold (keeping in mind that the constant-$t$ hypersurfaces have constant mean curvature $1/t$). Finally,
  (\ref{eq:LapseEquation}) follows by combining \cite[(13d)]{OPR23}, \cite[(13g)]{OPR23} and
  (\ref{eq:SpatialScalarCurvature}).

  To conclude, we obtain a solution to (\ref{eq: transport frame})--(\ref{eq:LapseEquation}) on $M$, inducing the
  correct initial data. Moreover, we have the above continuation criterion. What remains is to demonstrate uniqueness
  of the solution and the continuation criterion. In order to prove uniqueness, it is sufficient to note that, by
  arguments similar to the above, solutions to (\ref{eq: transport frame})--(\ref{eq:LapseEquation}) yield
  solutions to \cite[(13)]{OPR23}, and that solutions to \cite[(13)]{OPR23} are unique due to \cite[Theorem~10]{OPR23}.
  In order to prove the continuation criterion, assume that $\mfC$ is bounded on $(t_{-},t_{0}]$. We then want to prove
  that $\mC$ is bounded on $(t_{-},t_{0}]$. To be able to prove this, we need to control 
  \begin{equation}\label{eq:norms to be controlled}
    \|\d_{t}k\|_{C^{1}(\S_{t})},\ \ \
    \|\d_{t}\phi\|_{C^{2}(\S_{t})},\ \ \
    \|\d_{t}e_{0}\phi\|_{C^{1}(\S_{t})}.    
  \end{equation}
  However, combining the assumed bound on $\mfC$ with (\ref{eq:sff}) yields a bound on the first norm in (\ref{eq:norms to be controlled}).  
  Similarly, the assumed bound on $\mfC$ immediately yields a bound on the second norm in (\ref{eq:norms to be controlled}). Finally,
  combining the assumed bound on $\mfC$ with (\ref{eq:scalarfield}) yields a bound on the third norm in (\ref{eq:norms to be controlled}).
  To conclude, if $\mfC$ is bounded on $(t_{-},t_{0}]$, then the same is true of $\mC$. The argument to the future is the same. The lemma follows. 
\end{proof}

Next, we state the Cauchy stability result we need. It follows from \cite[Theorem~13, p.~7]{OPR23}.

\begin{lemma}[Cauchy stability, FRS equations]\label{lemma:CauchystabEinstein}
  Given assumptions, conclusions and notation as in the statement of Lemma~\ref{le:LocalExistenceFermiWalker}, let (\ref{eq:maximal solution})
  denote the solution obtained in the conclusions. Given $t_{1}\in\mI$, $m>n/2+1$ and $\epsilon>0$, there is then a $\delta>0$ such that the following
  holds. Let 
  \begin{equation} \label{eq:perturbed initial functions local existence}
    \te_I^i, \tom^I_i, \tk_{IJ}, \tga_{IJK}, \tp_0, \tp_1 : \S \to \rn{}
  \end{equation}
  be smooth functions, satisfying the conditions on initial data described in Lemma~\ref{le:LocalExistenceFermiWalker} (including $\tk_{II}=1/t_{0}$),
  and define $\te_I:= \te_I^i E_i$ and $\tom^I:= \tom^I_i \eta^i$. Let $\bremI$ be the existence interval and 
  \begin{equation}\label{eq:maximal perturbed solution}
    (\bree_I^i, \breo_i^I, \brek_{IJ}, \brega_{IJK}, \brep, \breN) : \bremI\times \S \to \rn{},
  \end{equation}
  be the solution obtained by appealing Lemma~\ref{le:LocalExistenceFermiWalker} to the initial data given by
  (\ref{eq:perturbed initial functions local existence}). Let, moreover, $\bree_{0}:=\breN^{-1}\d_{t}$. Then, if
  \begin{subequations}\label{seq:Cauchycriterion}
    \begin{align}
      \textstyle{\sum}_{i,I}\|\bree_{I}^{i}-e_{I}^{i}\|_{H^{m+1}(\bM_{t_{0}})}+\textstyle{\sum}_{i,I}\|\breo_{i}^{I}-\omega_{i}^{I}\|_{H^{m+1}(\bM_{t_{0}})} & < \delta,\\
      \textstyle{\sum}_{I,J}\|k_{IJ}-\brek_{IJ}\|_{H^{m+1}(\bM_{t_{0}})}+\textstyle{\sum}_{I,J}\|\d_{t}k_{IJ}-\d_{t}\brek_{IJ}\|_{H^{m}(\bM_{t_{0}})} & <\delta,\\
      \begin{split}
      \|\phi-\brep\|_{H^{m+2}(\bM_{t_{0}})}+\|\d_{t}\phi-\d_{t}\brep\|_{H^{m+1}(\bM_{t_{0}})}& \\
      +\|e_{0}\phi-\bree_{0}\brep\|_{H^{m+1}(\bM_{t_{0}})}+\|\d_{t}e_{0}\phi-\d_{t}\bree_{0}\brep\|_{H^{m}(\bM_{t_{0}})} & < \delta,
      \end{split}      
    \end{align}
  \end{subequations}  
  the interval $\bremI$ contains $t_{1}$ and (\ref{seq:Cauchycriterion}) holds with $t_{0}$ replaced by $t_{1}$ and $\delta$ replaced by $\epsilon$.  
\end{lemma}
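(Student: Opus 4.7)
The plan is to reduce the statement to the Cauchy stability component implicit in \cite[Theorem~8]{OPR23} and transfer it to the FRS formulation via the correspondence between solutions to the FRS equations and to the system \cite[(10)]{OPR23} that was established in the proof of Lemma~\ref{le:LocalExistenceFermiWalker}. First I would recall that \cite[Theorem~8]{OPR23} provides, alongside local existence, continuous dependence of the solution on initial data in suitable Sobolev norms: if the perturbed initial data for \cite[(10)]{OPR23} are sufficiently close to the background initial data, the maximal interval of existence of the perturbation contains any preassigned compact sub-interval of the background interval $\mI$, and on that sub-interval the perturbed solution is as close as desired to the background solution in Sobolev norms.

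Next I would verify that the closeness assumption (\ref{seq:Cauchycriterion}) on the FRS initial data at $t_0$ implies closeness in the Sobolev norms required as input for \cite[Theorem~8]{OPR23}. The initial data for \cite[(10)]{OPR23} consist of the spatial values of $e_I^i$, $\omega^I_i$, $k_{IJ}$, $\phi$ and $\partial_t \phi$, together with $N$ and appropriate time derivatives. The bounds on $e, \omega, k, \phi, \partial_t \phi$ in (\ref{seq:Cauchycriterion}) directly control the first group. For the lapse, the (alternative) lapse equation, cf.~(\ref{eq:LapseEquation}) and Remark~\ref{rmk: alternative lapse}, determines $N$ elliptically in terms of $e, \omega, \gamma, k, \phi, e_0\phi$ via an invertible operator of the form $-\Delta + (\mathrm{positive})$, so standard elliptic regularity on $(\S,h_\refer)$ produces the necessary closeness of $N$ (and of $\partial_t N$, obtained by differentiating the lapse equation in time and again solving elliptically) from the data in (\ref{seq:Cauchycriterion}), with at most a constant loss in derivative count which is absorbed by the freedom in choosing $m$.

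With the OPR23 initial-data closeness established, applying \cite[Theorem~8]{OPR23} yields the solution (\ref{eq:maximal perturbed solution}) on an interval $\bremI$ containing $t_1$ (for $\delta$ small enough), together with Sobolev closeness at time $t_1$ on the variables of the OPR23 system. Translating back through the correspondence from Lemma~\ref{le:LocalExistenceFermiWalker}, one recovers closeness of the FRS variables $\bree_I^i, \breo^I_i, \brek_{IJ}, \brep, \bree_0\brep$ to their unperturbed counterparts at $t_1$. The time-derivative norms appearing in (\ref{seq:Cauchycriterion}) at $t_1$ are then obtained by expressing $\partial_t k$, $\partial_t \phi$, $\partial_t e_0 \phi$ through the evolution equations (\ref{eq:sff}), $\partial_t \phi = N e_0 \phi$ and (\ref{eq:scalarfield}) respectively, and invoking Sobolev algebra (Lemma~\ref{lemma: products in L2}) together with the spatial closeness of $N$, $e$, $\omega$, $\gamma$, $k$, $\phi$, $e_0 \phi$ already obtained.

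The main obstacle is the careful bookkeeping of derivative counts when passing through the elliptic lapse equation and when converting between the strictly first-order form of (\ref{seq:Cauchycriterion}) and the formulation used in \cite{OPR23}; one must ensure that the regularity $m > n/2 + 1$ is high enough so that $H^{m}$ is an algebra, so that Sobolev embeddings yield the pointwise positivity of $N$ and of $\zeta$ (needed for the continuation criterion of Lemma~\ref{le:LocalExistenceFermiWalker} to preclude blow-up on the relevant interval), and so that elliptic regularity can be applied with the modest loss allowed by the choice of $m+1$ and $m+2$ in (\ref{seq:Cauchycriterion}). Once this bookkeeping is performed, the statement follows.
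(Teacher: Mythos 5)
Your proposal is correct and takes essentially the same route as the paper: the paper's proof is a one-liner that simply observes the statement is an immediate consequence of the Cauchy stability theorem in the cited reference (there it is Theorem~9, not Theorem~8, of \cite{OPR23} — Theorem~8 is the local existence result), given the FRS/\cite{OPR23} correspondence already established in the proof of Lemma~\ref{le:LocalExistenceFermiWalker}. The conditions in (\ref{seq:Cauchycriterion}) are already formulated to match the hypotheses of that Cauchy stability theorem, so the elliptic-lapse translation step and derivative bookkeeping you describe are unnecessary; in particular the remark about derivative loss ``absorbed by the freedom in choosing $m$'' should be discarded, since $m$ is fixed in the statement and elliptic regularity for the lapse in any case gains rather than loses derivatives.
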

\begin{remark}
  The conditions appearing in (\ref{seq:Cauchycriterion}) are unfortunate in that they involve quantities that are not immediately expressible in
  terms of initial data. However, since (\ref{eq:maximal solution}) and (\ref{eq:maximal perturbed solution}) are both solutions to
  (\ref{eq: transport frame})--(\ref{eq:LapseEquation}), all the quantities appearing in (\ref{seq:Cauchycriterion}) can indirectly be expressed in
  terms of the initial quantities. 
\end{remark}
\begin{proof}
  Given the relations between \cite[(13)]{OPR23} and (\ref{eq: transport frame})--(\ref{eq:LapseEquation}), described in the proof of
  Lemma~\ref{le:LocalExistenceFermiWalker}, the statement is an immediate consequence of \cite[Theorem~13, p.~7]{OPR23}. 
\end{proof}

\subsection{Global existence as a consequence of bootrap improvement}

Sections~\ref{sec:MainEstimates} and \ref{sec:EnergyEstimates} are devoted
to proving the bootstrap improvement, Theorem~\ref{thm:BootstrapImprovement}.
In the present subsection, we prove that the global existence theorem
for solutions to the FRS equations, i.e. Theorem~\ref{thm:GlobalExistenceFermiWalker},
follows from Theorem~\ref{thm:BootstrapImprovement} and the local existence
result of the previous subsection.

\begin{proof}[Proof of Theorem~\ref{thm:GlobalExistenceFermiWalker}, assuming Theorem~\ref{thm:BootstrapImprovement}]
  Given the constants $\sigma$, $k_0$, $k_1$, $\rho_0$, let $\mathcal C, \tau_{\rob}, r_{\rob}$ be the constants provided by
  Theorem~\ref{thm:BootstrapImprovement}. Due to Proposition~\ref{prop: initial lapse estimate}, there is a $0<\tau_{1}\leq\tau_{\rob}$,
  depending only on $\sigma$, $k_0$, $k_1$, $\rho_0$ and $(E_{i})_{i=1}^{n}$, such that if $t_{0}<\tau_{1}$, then 
  \begin{equation*}
    \begin{split}
      &t_0^2 \textstyle{\sum}_{I}\snorm{\be_I (\bN-1)}_{H^{k_1}(\S)}^2 + \snorm{\bN-1}_{H^{k_1}(\S)}^2 \\
      \leq & C t_0^{9\s/2}\big( t_0^2 \textstyle{\sum}_{I}\snorm{\be_I (\bN-1)}_{H^{k_1}(\S)}^2
      + \snorm{\bN - 1}_{H^{k_1}(\S)}^2\big)
      +\tfrac{1}{2}\snorm{\bN-1}_{H^{k_1}(\S)}^{2}+C t_{0}^{9\sigma},
    \end{split}
  \end{equation*}
  where we appealed to Young's inequality. By decreasing $\tau_1>0$, with the same dependence, 
  \[
  t_{0}\textstyle{\sum}_{I}\snorm{\be_I (\bN-1)}_{H^{k_1}(\S)}+\snorm{\bN-1}_{H^{k_1}(\S)}\leq Ct_{0}^{9\sigma/2}
  \]
  for $t_0 < \tau_1$. Combining this estimate with Sobolev embedding, we conclude that by decreasing $\tau_{1}>0$ again, with the same dependence, 
  \[
    \mbL_{(N)}(t_0) + \mbH_{(N)}(t_0)
    \leq \tfrac{r_\rob}{2}
  \]
  for $t_0 < \tau_1$. Since $\mbD(t_0) = 0$, we may thus conclude that 
  \begin{equation}\label{eq:Improved bootstrap initially}
    \mbD(t_0) + \mbL_{(N)}(t_0) + \mbH_{(N)}(t_0)
    \leq \tfrac{r_\rob}2.
  \end{equation}
  On the other hand, by decreasing $\tau_1>0$ further, if necessary, with the same dependence, we can also make sure that
  \begin{equation}\label{eq:bootstrap improvement condition on tz}
    \mathcal C t_0^\s\leq\tfrac{r_\rob}{2}.
  \end{equation}
  We first show that the bootstrap assumption is satisfied to the past of $t_{0}$ in the existence interval, say $\mI=(t_{-},t_{+})$. Let
  \[
  \msA:=\{t\in (t_{-},t_{0}]:\mbD(s) + \mbL_{(N)}(s) + \mbH_{(N)}(s)\leq r_\rob\ \forall s\in [t,t_{0}]\}.
  \]
  Due to (\ref{eq:Improved bootstrap initially}), we know that there is a $t<t_{0}$ such that $t\in\msA$. Thus $\msA$ is non-empty.
  By definition, $\msA$ is connected and closed. It remains to be demonstrated that $\msA$ is open. Let, to this end,
  $t_{1}\in \msA$. Then the conditions of Theorem~\ref{thm:BootstrapImprovement} are satisfied with $t_\rob$ replaced by $t_{1}$.
  This means that (\ref{eq: global existenece estimate}) is satisfied for all $t\in [t_{1},t_{0}]$. Combining this
  estimate with our restrictions on $t_{0}$, guaranteeing (\ref{eq:bootstrap improvement condition on tz}), it is clear that
  \[
  \mbD(t_1) + \mbL_{(N)}(t_1) + \mbH_{(N)}(t_1)\leq \tfrac{r_\rob}{2}.
  \]
  Due to the smoothness of the solution, it is thus clear that there is a $t<t_{1}$ such that $t\in\msA$. This means that $\msA$ is
  open. To conclude $\msA=(t_{-},t_{0}]$. What remains is to prove that $t_{-}=0$. According to Lemma~\ref{le:LocalExistenceFermiWalker},
  either $t_{-}=0$ or $\mfC$, defined in (\ref{eq:mfCdef}), is unbounded on $(t_{-},t_{0}]=\msA$. Assume, to this end, that $t_{-}>0$.
  Then $\mfC$ is unbounded on $\msA$. To obtain a contradiction, our next goal is to prove that $\mfC$ is bounded on $\msA$. We start
  by proving that $\zeta$, introduced in (\ref{eq:zetadef}), is bounded from below by a strictly positive constant on $\msA$. Note, to
  this end, that the Hamiltonian constraint \eqref{eq:hamconstraint} and Lemma~\ref{lemma:estimatingthedynamicalvariables} imply that
  \begin{equation*}
    \begin{split}
      \left| 1 - t^2\zeta\right| 
      \leq & t^2 | - 2e_I(\g_{IJJ})
      + \tfrac14\g_{IJK}(\g_{IJK} + 2 \g_{IKJ}) +\g_{IJJ}\g_{IKK} + e_I(\phi) e_I(\phi) + \tfrac{2n}{n-1} V \circ \phi | \\
      \leq & C t^{4\s}
    \end{split}
  \end{equation*}  
  for all $t \in \msA$. Hence, if $\tau_1>0$ is small enough, with the same dependence as before, then $1-t^{2}\zeta\leq 1/2$,
  so that $\zeta\geq 1/(2t^{2})$. In particular, there is a uniform positive lower bound on $\zeta$ on $\msA$. Next, we need a uniform
  positive lower bound on $N$. However, by the definition of $\msA$,
  \[
  \|N-1\|_{C^{0}(\S_t)}\leq r_{\rob}t^{\s}.
  \]
  Since we can assume $r_\rob\leq 1/2$ and $t\leq 1$, it is clear that $N\geq 1/2$ on $\msA$. Since $t_{-}>0$, a bound on the remaining
  norms in $\mfC$ follows from Lemma~\ref{lemma:estimatingthedynamicalvariables} and the definition of $\msA$. This leads to a contradiction,
  and we conclude that $t_{-}=0$. This finishes the proof.
\end{proof}

\section{The main estimates}\label{sec:MainEstimates}

\subsection{Scheme for systematic estimates}
\label{sec:scheme}
Many estimates derived in this section share qualitative features. Here, we therefore begin by presenting a way to quickly compute suitable
upper bounds which suffice for most terms. In practice, there are two schemes, the lower-order one, which we demonstrate first, 
and the higher-order one, which makes use of the lower-order one. We note that neither the lower-order nor the higher-order scheme 
makes use of the improved estimates of the lapse that are the subject of Subsection~\ref{sec:Lapse}. 
\subsubsection{Scheme for estimating lower-order terms}

For the lower-order estimates,
typically we have a sum of terms consisting of two to four factors,
with up to $k_0+1$ derivatives falling on the entire product.
As described in Appendix~\ref{sec:SobolevInequalities}, in particular 
Corollary~\ref{cor:SobolevAlgebra}, $C^{k}(\S)$ is up to a multiplicative
constant a Banach algebra, allowing us to estimate each factor individually.
To illustrate the idea, we use the example of the (implicit) sum
$(N-1) k_{IM} e_M^i$ measured in the $C^{k_{0}+1}$-norm 
and with a given time-dependent factor, i.e.
\begin{equation*}
t^{1-3\s} \| (N-1) k_{IM} e_M^i \|_{C^{k_{0}+1}(\S_t)}.
\end{equation*}
By Corollary~\ref{cor:SobolevAlgebra},
\begin{equation*}
  \begin{split}
    & t^{1-3\s} \| (N-1) k_{IM} e_{M}^i \|_{C^{k_{0}+1}(\S_t)}\\
    \leq & C t^{1-3\s} \|N-1 \|_{C^{k_{0}+1}(\S_t)} 
    \textstyle{\sum}_{M} ( \|k_{IM} \|_{C^{k_{0}+1}(\S_t)}
    \|e_M^i \|_{C^{k_{0}+1}(\S_t)} ) \\
    \leq & C t^{1-3\s} \|N-1 \|_{C^{k_{0}+1}(\S_t)}
    \|k \|_{C^{k_{0}+1}(\S_t)} 
    \|e \|_{C^{k_{0}+1}(\S_t)};
  \end{split}
\end{equation*}
in the present section, we take it for granted that all constants $C$ are standard constants (see Notation~\ref{not:constant}). 
At this stage, we may use the a-priori estimates in Lemmata~\ref{lemma:estimatingthebackground},
\ref{lemma:estimatingthedynamicalvariables} and \ref{lemma:extraderivatives},
as well as the bootstrap inequality from Assumption~\ref{ass:Bootstrap},
which will be assumed to hold for any estimate shown in Section~\ref{sec:MainEstimates}.
Lemma~\ref{lemma:extraderivatives}, based on interpolation estimates,
is used to bound terms that have one or two more derivatives
than can be estimated using a-priori estimates of 
Lemmata~\ref{lemma:estimatingthebackground} and \ref{lemma:estimatingthedynamicalvariables}.
As several terms have the same qualitative
(i.e. the same up to a multiplicative constant) a-priori estimates,
we only need to count the number of factors which have the same qualitative a-priori estimate.

We group the factors that may appear in a term according
to their qualitative estimates as follows:
We define the integer-valued, non-negative parameters
$l_j, j \in \{1,\dots,11\}$,
by counting the number of factors in each group,
where for simplicity we ignore any indices of the factors, e.g. the factor $e_I^i$
is represented by $e$, and the factor $e_I(N)$ by $\en$.
In particular:
\begin{itemize}
\item denote by $l_1$
the number of times the factor $N-1$ appears,
by $l_{2}$ the number of times the factor $N$ appears,
and by $l_3$ the number of times a factor of the form $ \ear(N)$
appears;
\item denote by $l_{4}$ 
the number of times a factor of the form 
$\che$ or $\chom$ appears,
and by $l_5$  the number of times a factor of the form
$e$, $\dee{}{}$, $\o$ or $\deo{}{}$ appears;
\item denote by $l_{6}$
the number of times a factor of the form 
$\chga$ or $\chep$ appears,
and by $l_7$ the number of times a factor of the form $\g$, $\dega{}$, $\ep$ or $\vdep$ appears;
\item denote by $l_{8}$ the number of times a factor of the form $\dek{}$ or $\dep{0}$ appears,
by $l_{9}$ the number of times a factor of the form $\chk$ or $\chdtp$ appears,
and by $l_{10}$ the number of times a factor of the form $k$ or $\dtp$
appears;
\item denote by $l_{11}$ the number of times a factor of the form $V \circ \phi$ or $V' \circ \phi$ appears. 
\end{itemize}
Moreover, denote by $l_{\roint}$ the number of factors that require the use of the interpolation estimates from Lemma~\ref{lemma:extraderivatives},
due to having more spatial derivatives than the a-priori estimates allow for.
Note that the upshot of the interpolation estimates is an additional factor $C t^{-\s}$
as well as the appearance of the higher-order deviation quantities in the upper bound
in exchange for up to two more derivatives. However, the appearance of the higher-order
deviation quantities is not material for the quality of the estimates presented here,
as both the lower-order and higher-order deviation quantities are bounded, either by $\mbD$
in case of the dynamical variables or by 1 in case of terms involving the lapse,
for the purposes of the scheme.

We note that for the terms with counters $l_3$ and $l_7$,
we can handle up to $k_0$ derivatives without Lemma~\ref{lemma:extraderivatives},
while for the remainder we can handle up to $k_0+1$ derivatives.

We have summarized the definition of $l_{\roint}$ and $l_1,\dots,l_{11}$
in Table~\ref{tab:CountingLowerOrderScheme1}.
Observe that the estimates for $l_1,\dots,l_{11}$ are worse the higher the index goes.
In particular, if we prove an estimate for a given tuple of non-zero $l_j$,
say $l_{j_1}, \dots, l_{j_p}$, then the same estimate or a better one holds
if we replace one count of any $l_{j_i}$ by any other $l_j$ for which ${j_i} \geq j$.
Also observe that we here have actually deteriorated the estimate
for factors of the form $\dek{}$ and $\dep{0}$ (corresponding to counter $l_8$)
in order to fit well with this monotonocity. For the purposes of the scheme, this deterioration is not material.
On the other hand, the estimates corresponding to the counters $l_3$ and $l_4$ are the same,
and the same goes for $l_9$ and $l_{10}$;
the reason we differentiate them is for the higher-order scheme.
\begin{remark} \label{rmk:LapseDoesNotDeterioriteEstimate}
The presence of one of the factors $l_1$ or $l_2$
does not deteriorate an estimate and in fact leads to either the same or a better
estimate than if the counter were not present.  
\end{remark}

\begin{table}[ht]
    \centering
    \begin{tabular}{|c|c|c|} \hline
    Counter & Factors & Contribution per count \\[1pt] \hline \hline
    $l_1$       & $N-1$     & $t^{\s}$ \\[1pt] \hline
    $l_2$       & $N$       & $1$ \\[1pt] \hline
    $l_3$       & $\ear(N)$ & $t^{-1+4\s}$ \\[1pt] \hline
    $l_4$       & $\che, \chom$ & $t^{-1+4\s}$ \\[1pt] \hline
    $l_5$       & $e, \dee{}{}, 
                \o, \deo{}{}$ 
                & $t^{-1+3\s}(\mbD + t^{\s})$ \\[1pt] \hline
    $l_6$       & $\chga , \chep$ & $t^{-1+3\s}$ \\[1pt] \hline
    $l_7$       & $\g, \dega{},
                \ep, \vdep$ 
                & $t^{-1+2\s}(\mbD + t^{\s})$ \\[1pt] \hline
    $l_8$       & $\dek{},\dep{0}$       
                & $t^{-1} (\mbD + t^{\s})$ \\[1pt] \hline
    $l_9$       & $\chk, \chdtp$
                & $t^{-1}$ \\[1pt] \hline 
    $l_{10}$    & $\k, \dtp$
                & $t^{-1}$ \\[1pt] \hline    
    $l_{11}$    & $V \circ \phi, V' \circ \phi$ 
                & $t^{-2 + 5 \s}$ \\[1pt] \hline \hline
    $l_{\roint}$  & Each application of Lemma~\ref{lemma:extraderivatives}
                & $t^{-\s}$ \\[1pt] \hline 
    \end{tabular}
    \caption{
    This table offers an overview of the definition
    of the counters $l_1,\dots, l_{11}$ and $l_{\roint}$,
    as well as their contribution per count to the estimate.
    Note that the contributions are different than the a-priori estimates,
    as some estimates have been deteriorated for convenience
    and multiplicative constants are not important for the desired estimates.}
    \label{tab:CountingLowerOrderScheme1}
\end{table}

Then, the term measured in the $C^{k}$-norm, $k = k_0,k_0+1$ may be estimated by
\begin{align*}
&C \big( t^{\s} \big)^{l_1}
\cdot (1)^{l_2}
\cdot \big( t^{-1+4\s} \big)^{l_3 + l_4}
\cdot \big(t^{-1+3\s} (\mbD + t^{\s}) \big)^{l_5}
\cdot \big(t^{-1+3\s} \big)^{l_6} \\
&\cdot \big(t^{-1+2\s} (\mbD + t^{\s}) \big)^{l_7}
\cdot \big(t^{-1} (\mbD + t^{\s}) \big)^{l_8}
\cdot \big(t^{-1} \big)^{l_9 + l_{10}}
\cdot \big(t^{-2+5\s} \big)^{l_{11}}
\cdot \big(t^{-\s}\big)^{l_{\roint}},
\end{align*}
which may be written as
\begin{equation} \label{eq:LowerOrderSchemeUpperBound}
 C t^{-m_1 + m_{\s} \s} 
    \big(\mbD + t^{\s} \big)^{m_D},
\end{equation}
where we have defined the parameters $m_1$, $m_{\s}$ and $m_D$ by
\begin{align*}
m_1 &:= l_3 + l_4 + l_5 + l_6 + l_7 + l_8 + l_9 + l_{10} + 2 l_{11} \\
m_{\s} &:= l_1 + 4 l_3 + 4 l_4 + 3 l_5 + 3 l_6 + 2 l_7 + 5 l_{11} - l_{\roint}, \\
m_D &:=l_5+ l_7 + l_8.
\end{align*}
We have summarized the definition of $m_1$, $m_{\s}$ and $m_D$
in Table~\ref{tab:CountingLowerOrderScheme2}.
\begin{table}[ht]
    \centering
    \begin{tabular}{|c|c|c|c|c|c|c|c|c|c|c|c|c|c|} \hline
    Counter & $l_1$ & $l_2$ & $l_3$ & $l_4$ & $l_5$ & $l_6$
    & $l_7$ & $l_8$ & $l_9$ & $l_{10}$ & $l_{11}$ & $l_{\roint}$
            & Contribution per count \\[1pt] \hline \hline
    $m_1$   &  &  & 1 & 1 & 1 & 1 & 1 & 1 & 1 & 1 & 2 & &$t^{-1}$ \\[1pt] \hline 
    $m_\s$  & 1 &  & 4 & 4 & 3 & 3 & 2 &  &  & & 5 & -1 & $t^{\s}$ \\[1pt] \hline 
    $m_D$   &  &  &  &  & 1 & & 1 & 1 &  & & & & $\mbD + t^{\s}$ \\[1pt] \hline 
    \end{tabular}
    \caption{
    This table offers an overview of the definition of the counters
    $m_1$, $m_{\s}$ and $m_{D}$ in terms of $l_1,..,l_{11}$ and $l_i$.}
    \label{tab:CountingLowerOrderScheme2}
\end{table}
In our example case from before we have $l_1 = 1$, $l_5 = 1$, $l_{10} = 1$
and all other parameters zero, which means that $m_1 = 2$, $m_\s = 4$, $m_D = 1$,
and thus we establish the following upper bound: 
\begin{align*}
t^{1-3\s} &\| (N-1) k_{IM} e_M^i \|_{C^{k_{0}+1}(\S_t)}\leq t^{1-3\s} \cdot C t^{-2 + 4\s}(\mbD + t^{\s})= C t^{-1+\s} (\mbD + t^{\s}).
\end{align*}

\subsubsection{Scheme for estimating higher-order terms}
Most of the higher-order estimates are established using the Moser estimates from Lemma~\ref{lemma: products in L2}.
To illustrate how to proceed systematically, we estimate the same term  as above, but in the $H^{k_1}$-norm and with a different
time-dependent factor, namely
\begin{equation*}
t^{A+1-2\s} \| (N-1) k_{IM} e_M^i \|_{H^{k_1}(\S_t)}.
\end{equation*}
The term is again a sum of products with factors $(N-1), k_{IM}$ and $e_M^i$.

In this case, because of the triangle inequality
and the Moser-type inequalities of Lemma~\ref{lemma: products in L2},
we can estimate the example term by
\begin{equation*}
  \begin{split}
     t^{A+1-2\s} \| (N-1) k_{IM} e_M^i \|_{H^{k_1}(\S_t)}
    \leq & t^{A+1-2\s} \textstyle{\sum}_{M}
    \| (N-1) k_{IM} e_{M}^i \|_{H^{k_1}(\S_t)}\\
    \leq & Ct^{A+1-2\s} 
    \big( \|N-1 \|_{H^{k_1}(\S_t)}
    \|k \|_{C^{0}(\S_t)} 
    \|e \|_{C^{0}(\S_t)} \\
    & \phantom{Ct^{A+1-2\s}\big(}+ \|N-1 \|_{C^{0}(\S_t)} 
    \|k \|_{H^{k_1}(\S_t)}
    \|e \|_{C^{0}(\S_t)} \\
    & \phantom{Ct^{A+1-2\s}\big(} + \|N-1 \|_{C^{0}(\S_t)}
    \|k \|_{C^{0}(\S_t)} 
    \|e \|_{H^{k_1}(\S_t)} \big).
  \end{split}
\end{equation*}
Observe that the number of different terms inside the parenthesis on the far right-hand side
of the inequality above equals the number of different factors in the original term.
For each term on the right-hand side, we estimate it
similarly to the one in the lower-order scheme, using the a-priori estimates from
Lemmata~\ref{lemma:estimatingthebackground} and \ref{lemma:estimatingthedynamicalvariables},
as well as the bootstrap inequality, 
as Assumption~\ref{ass:Bootstrap} will be assumed to hold for any estimate 
appearing in this section. On the other hand, we do not need to appeal to
Lemma~\ref{lemma:extraderivatives}.
In this scheme, however, we need to take care of the fact that one factor appears
measured in the $H^{k_1}$-norm instead of some $C^{k}$-norm.
This can be done as follows:
Let $j_1, \dots, j_p$ denote those $j \in \{1,\dots,11\}$
such that $l_j \geq 1$.
Then, take the estimate (\ref{eq:LowerOrderSchemeUpperBound})
from the scheme for the lower-order terms, with $l_{\roint} = 0$,
and multiply it by $t^{-\max \{ S(j_1), \dots , S(j_p)\}}$, 
where 
\begin{equation}
S(j):= \begin{cases} 
0 & \text{if}~ j = 4,6,9, \\
A & \text{if}~ j = 2,8,10,11, \\
A+\s & \text{if}~ j = 1,5, \\
A + 2\s & \text{if}~ j = 7, \\
A + 4\s & \text{if}~ j = 3.
\end{cases}
\end{equation}
Observe that $t^{-S(j)}$ bounds the quotient between the a-priori higher-order and the a-priori lower-order bounds,
up to a multiplicative constant.
For example, we know from Lemma~\ref{lemma:estimatingthedynamicalvariables}
that on the one hand,
\begin{equation*}
\|e\|_{C^{k_{0}+1}(\S_t)} \leq C t^{-1+3\s} (\mbD + t^{\s}),
\end{equation*}
while on the other hand 
\begin{equation*}
\|e\|_{H^{k_1}(\S_t)} \leq C t^{-A-1+2\s} (\mbD + t^{A+2\s}).
\end{equation*}    
Hence, in order to replace an instance of the lower-order norm
by an instance of the higher-order norm,
we include a multiplicative factor $t^{-A-\s}$ in the estimate, as
\begin{equation*} 
\frac{Ct^{-A-1+2\s}(\mbD + t^{A+2\s})}{Ct^{-1+3\s}(\mbD + t^{\s})}
\leq Ct^{-A-\s}.
\end{equation*}

In Table~\ref{tab:CountingHigherOrderScheme},
we present the information regarding the expression 
    $t^{- S(j)}$ slightly differently,
using the factors present in the to-be-estimated term,
instead of the counters. 
Moreover, note that $1 \leq t^{-A} \leq t^{-A-\s} \leq t^{-A-2\s} \leq t^{-A-4\s}$
as $ t \leq 1$ and $A,\s > 0$.
Hence one only needs to check the contribution of the lowest entry of the table
of the factors that are present.

\begin{table}[ht]
    \centering
    \begin{tabular}{|c|c|} \hline
    Factor & Contribution if factor is present \\[1pt] \hline \hline
    $\che, \chom, \chga, \chep, \chk, \chdtp$   &  $1$ \\[1pt] \hline
    $N, k,\dek{},\dtp, \dep{0}, 
    V \circ \phi, V'\circ \phi$                 &  $t^{-A}$ \\[1pt] \hline 
    $N-1, e, \dee{}{}, \o, \deo{}{}$          &  $t^{-A-\s}$ \\[1pt] \hline 
    $\g, \dega{}, \ep, \vdep$          &  $t^{-A-2\s}$ \\[1pt] \hline 
    $\ear(N)$                                   &  $t^{-A-4\s}$ \\[1pt] \hline 
    \end{tabular}
    \caption{
    This table offers an overview of the multiplicative factor 
    that we need to include for the scheme for higher-order estimates,
    depending on which factors are present.
    Note that one only needs to take into account the lowest entries
    that are relevant for the term that is to be estimated.}
    \label{tab:CountingHigherOrderScheme}
\end{table}

Applying the scheme to our example,
we note that as $l_1, l_5, l_{10}$ are the only non-zero counters, 
we have $\max_{i} S(j_i) = A+\s$.
Hence we end up with the estimate
\begin{equation*}
  \begin{split}
    & t^{A+1-2\s} \| (N-1) k_{IM} e_M^i \|_{H^{k_1}(\S_t)} \\
    \leq & Ct^{A+1-2\s} \cdot t^{-A-\s} \cdot C t^{-2+4\s}(\mbD + t^{\s}) 
    \leq C t^{-1+\s} (\mbD + t^{\s}).
  \end{split}
\end{equation*}

\subsection{The curvature and the energy-momentum tensor}
As a more detailed example of how to use the scheme developed in the previous
subsection, and as the corresponding estimates are needed later on in this article, 
we present here some estimates for the spatial Riemann curvature tensor,
for the spacetime Riemann curvature tensor with one time-like entry,
and for certain terms appearing in the energy momentum tensor.

\begin{lemma} \label{lemma:SpatialRiemann}
  Let $\s_p$, $\s_V$, $\s$, $k_0$, $k_1$, $(\S, h_{\refer})$, $(E_{i})_{i=1}^{n}$ and $V$ be as in Theorem~\ref{thm: big bang formation}
  and let $\rho_0 >0$. Assume that there are $t_{0}\leq 1$, $t_\rob<t_0$ and $\ar \in (0, \tfrac{1}{6n}]$ such that
  Assumption~\ref{ass:Bootstrap} is satisfied for this choice of parameters. Then the following 
  estimates hold for any $I,J,K,L$ and $t \in [t_{\rob},t_0]$:
  \begin{align}
    \label{eq:EstimateSpatialRiemann}
    t^2 \|\roRiem_{h}(e_I, e_J, e_K, e_L) \|_{C^{k_{0}+1}(\S_t)}
    &\leq Ct^{2\s}(\mbD + t^{\s})^2, \\
    \label{eq:EstimateNormalRiemann}
    t^2 \|\roRiem_{g}(e_0, e_I, e_J, e_K) \|_{C^{k_{0}+1}(\S_t)}
    &\leq Ct^{\s}(\mbD + t^{\s}), \\ 
    \label{eq:EstimateEnergyMomentumSF}
    t^2 \| e_I(\phi) e_J (\phi) \|_{C^{k_{0}+1}(\S_t)}
    &\leq Ct^{2\s} (\mbD + t^{\s})^2.
  \end{align}  
\end{lemma}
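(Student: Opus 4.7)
The plan is to apply the lower-order scheme of Subsection~\ref{sec:scheme} term by term after expressing each curvature quantity in terms of the dynamical variables $e$, $\gamma$, $k$ and $\ep$. The a-priori estimates from Lemmata~\ref{lemma:estimatingthebackground}, \ref{lemma:estimatingthedynamicalvariables} and \ref{lemma:extraderivatives}, together with Corollary~\ref{cor:SobolevAlgebra}, then do the rest of the work.

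For the spatial curvature bound \eqref{eq:EstimateSpatialRiemann}, I would first carry out a computation analogous to the proof of Lemma~\ref{lemma: spatial curvatures} to expand $\roRiem_h(e_I, e_J, e_K, e_L)$ as a linear combination of terms of the form $e_M(\gamma_{NPQ})$ and of quadratic products $\gamma\cdot\gamma$. For each $\gamma\cdot\gamma$ term, the scheme applied with $l_7 = 2$ and one use of Lemma~\ref{lemma:extraderivatives} per factor (since the a-priori $l_7$-estimate only yields $C^{k_0}(\S_t)$-control) gives the bound $C t^{-2+2\s}(\mbD+t^\s)^2$. Each $e_M(\gamma_{NPQ})$ term, estimated with $e$ in $C^{k_0+1}(\S_t)$ (directly available from $\mbL_{(e,\omega)}$) and $\gamma$ in $C^{k_0+2}(\S_t)$ (via Lemma~\ref{lemma:extraderivatives}), produces the tighter bound $C t^{-2+4\s}(\mbD+t^\s)^2$. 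Multiplication by $t^2$, together with $t^{4\s} \leq t^{2\s}$ for $t \leq 1$, yields the claim.

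The argument for \eqref{eq:EstimateNormalRiemann} is analogous, with the Codazzi--Mainardi identity used to express $\roRiem_g(e_0, e_I, e_J, e_K)$ as a linear combination of $e_J(k_{IK})$, $e_K(k_{IJ})$ and products $\Gamma_{\bullet\bullet\bullet} k_{\bullet\bullet}$; the latter are, by Equation~\eqref{eq: Gamma as gamma}, linear combinations of $\gamma \cdot k$. The scheme gives a $C^{k_0+1}(\S_t)$-bound $C t^{-2+2\s}(\mbD+t^\s)^2$ for the $e_J(k_{IK})$-type terms (using $e$ in $C^{k_0+1}(\S_t)$ and $k$ in $C^{k_0+2}(\S_t)$ via Lemma~\ref{lemma:extraderivatives}) and $C t^{-2+\s}(\mbD+t^\s)$ for the $\gamma\cdot k$-type terms (using $\gamma$ in $C^{k_0+1}(\S_t)$ via Lemma~\ref{lemma:extraderivatives} and $k$ in $C^{k_0+1}(\S_t)$ directly). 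After multiplication by $t^2$ the latter contribution dominates and, using that $\mbD$ is bounded so that $(\mbD+t^\s)^2 \leq C(\mbD+t^\s)$, the asserted bound follows.

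Finally, \eqref{eq:EstimateEnergyMomentumSF} is immediate: Corollary~\ref{cor:SobolevAlgebra} bounds the product $e_I(\phi) e_J(\phi)$ in $C^{k_0+1}(\S_t)$ by the product of the $C^{k_0+1}(\S_t)$-norms of its factors, each of which is estimated by $C t^{-1+\s}(\mbD+t^\s)$ via one application of Lemma~\ref{lemma:extraderivatives}; multiplication by $t^2$ finishes the argument. The main bookkeeping obstacle throughout is to track which factors require an appeal to Lemma~\ref{lemma:extraderivatives}, namely those whose best a-priori bound sits at the $C^{k_0}(\S_t)$-level (such as $\gamma$ and $\ep$ here), but once the Codazzi and Riemann expressions have been expanded this is routine.
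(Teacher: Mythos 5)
Your approach is essentially the paper's: expand the curvature quantities via Lemma~\ref{lemma: spatial curvatures} and the Codazzi identity into terms of the form $e(\gamma)$, $\gamma\cdot\gamma$, $e(k)$, $\gamma\cdot k$ and $\ep\cdot\ep$, then bound each term using the lower-order scheme with the appropriate counters and the interpolation Lemma~\ref{lemma:extraderivatives}. The decompositions, the choice of counters, and the conclusion all match.

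One small slip: for the $e(k)$-type terms in \eqref{eq:EstimateNormalRiemann} you write the scheme bound as $C t^{-2+2\s}(\mbD + t^\s)^2$, but with counters $l_5 = 1$, $l_{10} = 1$, $l_{\roint} = 1$ the scheme gives $m_D = 1$ (only the $e$ factor, not $k$, contributes to $m_D$), hence $C t^{-2+2\s}(\mbD + t^\s)$, not the square. Since $(\mbD + t^\s)^2$ is the \emph{stronger} claim, you haven't actually established it — but as you immediately weaken it back to $C(\mbD + t^\s)$ before combining, the final estimate \eqref{eq:EstimateNormalRiemann} is unaffected. The rest of the bookkeeping is correct.
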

\begin{proof}
  Note that 
  \begin{align*}
    \roRiem_{h}(e_I, e_J, e_K, e_L) 
    = & e_I(\G_{JKL}) - e_J(\G_{IKL})\\
    & + \G_{JKM} \G_{IML} - \G_{IKM} \G_{JML} - \g_{IJM} \G_{MKL}.
  \end{align*}
  As the connection coefficients satisfy (\ref{eq: Gamma as gamma}),  
  it thus suffices to estimate terms of the form $e(\gamma)$ and $\g \cdot \g$ in $C^{k_{0}+1}$.
  However, for the former type we have $l_5 = 1$, $l_7 = 1$ and $l_{\roint} = 1$.
  Thus $m_1 = 2$, $m_{\s} = 4$, $m_D = 2$ and
  \begin{equation} \label{eq:CurvatureFirstbound}
    t^2 \|e_I (\g_{JKL}) \|_{C^{k_{0}+1}(\S_t)} \leq C t^{4\s} (\mbD(t) + t^{\s})^2.
  \end{equation}
  For terms of the latter type, $l_7 = 2$, $l_{\roint} =2$, $m_1 = 2$, $m_{\s} = 2$ and $m_D = 2$. Thus
  \begin{equation} \label{eq:CurvatureSecondBound}
    t^2 \|\g_{JKM} \g_{IML} \|_{C^{k_{0}+1}(\S_t)} \leq C t^{2\s} (\mbD(t) + t^{\s})^2.
  \end{equation} 
  Next, by the Gauss-Codazzi equation,
  \begin{equation*}
    \begin{split}
      \roRiem_g(e_0, e_I, e_J,e_K) &= (\nabla^h_{e_J} k)_{IK}) - (\nabla^h_{e_K} k)_{IJ} \\
      &= e_J(k_{KI}) - e_K(k_{IJ}) - k_{KM} \G_{JIM} + k_{JM} \G_{KIM} - k_{IM} \g_{JKM}.
    \end{split}
  \end{equation*} 
  Hence it suffices to estimate terms of the form $e(k)$ and $k \cdot \gamma$. However,  
  \begin{equation} \label{eq:CurvatureThirdBound}
    t^2 \| e_J (k_{IK}) \|_{C^{k_{0}+1}(\S)} \leq Ct^{2\s}(\mbD + t^{\s})
  \end{equation}
  as, in this case, $l_5 = 1$, $l_{10} = 1$, $l_{\roint} = 1$, $m_{1} =2$, $m_{\s} = 2$ and $m_{D} =1$.
  On the other hand, e.g.,
  \begin{equation} \label{eq:CurvatureFourthBound}
    t^{2}\| k_{KM} \g_{IJM} \|_{C^{k_{0}+1}(\S)} \leq Ct^{\s}(\mbD + t^{\s})
  \end{equation}
  as in this case $l_7 =1$, $l_{10} =1$, $l_{\roint} = 1$, $m_1 = 2$, $m_{\s} = 1$ and $m_D = 1$. Next, 
  \begin{equation} \label{eq:CurvatureLastBound}
    t^2 \| e_I(\phi) e_J(\phi) \|_{C^{k_{0}+1}(\S_t)} 
    \leq C t^{2\s} (\mbD + t^{\s})^2,
  \end{equation}
  for any $I,J \in \{1,\dots,n\}$, as $l_7 = 2$, $l_{\roint} = 2$, $m_1 =2$, $m_{\s} = 2$ and $m_D = 2$.  
\end{proof}

The following corollary of the above lemma will also be of use:
\begin{lemma} \label{lemma:LowerOrderHamiltonianConstraint}
  Let $\s_p$, $\s_V$, $\s$, $k_0$, $k_1$, $(\S, h_{\refer})$, $(E_{i})_{i=1}^{n}$ and $V$ be as in Theorem~\ref{thm: big bang formation}
  and let $\rho_0 >0$. Assume that there are $t_{0}\leq 1$, $t_\rob<t_0$ and $\ar \in (0, \tfrac{1}{6n}]$ such that
  Assumption~\ref{ass:Bootstrap} is satisfied for this choice of parameters. Then the following estimate holds
  for all $t \in [t_{\rob},t_0]$:
  \begin{equation}\label{eq:almost asymptotic Hcon}
    t^2 \|t^{-2} - k_{IJ} k_{IJ} - \dtp \dtp \|_{C^{k_{0}+1}(\S_t)}
    \leq Ct^{2\s}(\mbD + t^{\s})^2.
  \end{equation}  
\end{lemma}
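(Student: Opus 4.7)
The plan is to use the Hamiltonian constraint \eqref{eq:hamconstraint} as an algebraic identity to rewrite the quantity in question, and then estimate each remaining term via the scheme of Subsection~\ref{sec:scheme} and the estimates already established in Lemma~\ref{lemma:SpatialRiemann}. Recalling that $\dtp = e_0(\phi)$, rearranging \eqref{eq:hamconstraint} gives
\[
t^{-2} - k_{IJ}k_{IJ} - \dtp\,\dtp
= -2e_I(\g_{IJJ}) + \tfrac14\g_{IJK}(\g_{IJK} + 2\g_{IKJ}) + \g_{IJJ}\g_{IKK} + e_I(\phi)\,e_I(\phi) + 2V\circ\phi,
\]
so that, after multiplying by $t^2$, the conclusion reduces to controlling each of these terms in $C^{k_0+1}(\S_t)$.

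First, I would bound the curvature terms. The term $t^2 \|e_I(\g_{IJJ})\|_{C^{k_0+1}(\S_t)}$ is precisely what was estimated in \eqref{eq:CurvatureFirstbound} inside the proof of Lemma~\ref{lemma:SpatialRiemann}, giving a bound of $Ct^{4\s}(\mbD + t^{\s})^2$. Likewise, each of the quadratic structure-coefficient terms $\g_{IJK}\g_{IJK}$, $\g_{IJK}\g_{IKJ}$, and $\g_{IJJ}\g_{IKK}$ is estimated as in \eqref{eq:CurvatureSecondBound}, yielding bounds of the form $Ct^{2\s}(\mbD + t^{\s})^2$. For the matter contribution, the scalar-field quadratic $t^2 \|e_I(\phi)e_I(\phi)\|_{C^{k_0+1}(\S_t)}$ has already been bounded by $Ct^{2\s}(\mbD + t^{\s})^2$ in \eqref{eq:EstimateEnergyMomentumSF}.

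It remains only to control the potential term $2t^2 \|V\circ\phi\|_{C^{k_0+1}(\S_t)}$. This is bounded by $Ct^{5\s}$ thanks to \eqref{eq:VcircphiWinfkzest}, and since $t\leq 1$ and $\mbD + t^{\s}\leq 2$ under the bootstrap assumption, one has $t^{5\s}\leq t^{2\s}(\mbD + t^{\s})^2$ up to a standard constant, so this term also fits within the target bound.

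Putting the four pieces together by the triangle inequality, every term on the right-hand side is dominated by $Ct^{2\s}(\mbD + t^{\s})^2$, which is exactly the desired estimate \eqref{eq:almost asymptotic Hcon}. There is no genuine obstacle here; the only thing to be careful about is bookkeeping the slowest-decaying contribution, which comes from the quadratic structure-coefficient terms and the scalar-field gradient terms (both at rate $t^{2\s}$), and to verify that the potential contribution $t^{5\s}$ is indeed subdominant, which it is since $\s>0$.
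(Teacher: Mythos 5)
Your proof is correct and takes essentially the same approach as the paper: both invoke the Hamiltonian constraint to rewrite $t^{-2}-k_{IJ}k_{IJ}-\dtp\dtp$ (the paper as $-\roScal_h + e_I(\phi)e_I(\phi) + 2V\circ\phi$, you with $\roScal_h$ expanded in structure coefficients via \eqref{eq:SpatialScalarCurvature}, which is the same thing), and then bound the curvature and scalar-field terms via Lemma~\ref{lemma:SpatialRiemann} and the potential via \eqref{eq:VcircphiWinfkzest}. Only stylistic difference: you cite the intermediate displayed estimates inside Lemma~\ref{lemma:SpatialRiemann}'s proof rather than its stated conclusions, but the content is identical.
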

\begin{proof}
  Recall that by the Hamiltonian constraint, i.e. \eqref{eq:hamconstraint},
  \begin{equation} \label{eq:LowerOrderHamiltonianConstraint}
    t^{-2} - k_{IJ} k_{IJ} - \dtp \dtp 
    = - \roScal_h + e_I(\phi) e_I(\phi) + 2 V \circ \phi.
  \end{equation}
  Hence it suffices to estimate the right-hand side in $C^{k_{0}+1}$.
  However, a sufficient estimate for the potential follows directly from Lemma~\ref{lemma:estimatingthedynamicalvariables}
  while for the other terms we obtain a sufficient estimate
  from Lemma~\ref{lemma:SpatialRiemann} above.
\end{proof}

\subsection{The lapse} \label{sec:Lapse}
We require improved estimates for the lapse in the lower- and higher-order norms.
They follow from the elliptic nature of the lapse equation \eqref{eq:LapseEquation}:
\begin{equation} \label{eq:LapseRepeated}
  \begin{split}
    e_{I}e_{I}(N)-t^{-2}(N-1) &= \g_{JII}e_{J}(N) 
    + N \big \{ \roScal_h  - e_{I}(\phi)e_{I}(\phi) 
    - \tfrac{2n}{n-1}V\circ\phi \big\}.
  \end{split}
\end{equation}
We commute this equation with $E_{\bfI}$ in order to control spatial derivatives of $N$.

\subsubsection{Improving the lower-order estimates for the lapse}
We begin with the improved lower-order estimates.
The goal here is to control $\mbL_{(N)}$ in terms of the dynamical variables,
so as to be able to close the bootstrap argument.
Moreover, any additional time-decay makes estimates for the dynamical variables
stronger as well, and is required down the line.
The proof of such estimates relies on a simple version of the maximum principle.
However, we need to estimate the rest terms arising from commutators.

\begin{lemma} \label{lemma:EstLapseLowJunk}
  Let $\s_p$, $\s_V$, $\s$, $k_0$, $k_1$, $(\S, h_{\refer})$, $(E_{i})_{i=1}^{n}$ and $V$ be as in Theorem~\ref{thm: big bang formation}
  and let $\rho_0 >0$. Assume that there are $t_{0}\leq 1$, $t_\rob<t_0$ and $\ar \in (0, \tfrac{1}{6n}]$ such that
  Assumption~\ref{ass:Bootstrap} is satisfied for this choice of parameters. Then, for any $t \in [t_{\rob},t_{0}]$,
  \begin{equation}\label{eq:EstLapseLowJunk}
    t^{2} \|e_I e_I (N) - t^{-2} (N-1) \|_{C^{k_{0}+1}(\S_t)}
    \leq C t^{2\s}(\mbD(t)+t^{\s})^2.
  \end{equation}
\end{lemma}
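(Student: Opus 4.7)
The plan is to start from the lapse equation in the form \eqref{eq:LapseRepeated}, namely
\[
e_I e_I(N) - t^{-2}(N-1) = \g_{JII} e_J(N) + N\bigl\{\roScal_h - e_I(\phi)e_I(\phi) - \tfrac{2n}{n-1}V\circ\phi\bigr\},
\]
and to estimate each of the four terms on the right-hand side in $C^{k_0+1}(\S_t)$, multiplied by $t^2$, separately. Three of the four terms are essentially already done by Lemma~\ref{lemma:SpatialRiemann}, so the only genuine work is the commutator-type term $\g_{JII} e_J(N)$; for that one I would run the lower-order scheme from Subsection~\ref{sec:scheme}.

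First, for the term $\g_{JII}e_J(N)$: its factors fall under the counters $l_7 = 1$ (the $\g$) and $l_3 = 1$ (the $\ear(N)$); since $l_3$ can only absorb up to $k_0$ derivatives by Lemma~\ref{lemma:extraderivatives}, one application of the interpolation lemma is needed, so $l_{\roint}=1$. This gives $m_1 = 2$, $m_\sigma = 5$, $m_D = 1$, and the scheme outputs
\[
t^2\,\|\g_{JII} e_J(N)\|_{C^{k_0+1}(\S_t)} \leq C t^{5\sigma}(\mbD(t) + t^\sigma).
\]
Because $\mbD(t) + t^\sigma \leq 1$ and $t^\sigma \leq \mbD(t) + t^\sigma$, one has $t^{2\sigma}(\mbD + t^\sigma) \leq C(\mbD + t^\sigma)^2$, hence $Ct^{5\sigma}(\mbD + t^\sigma) \leq Ct^{2\sigma}(\mbD + t^\sigma)^2$, which is what we need.

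Second, for $N\,\roScal_h$ and $N\, e_I(\phi)e_I(\phi)$: writing $\roScal_h$ as the appropriate trace of $\roRiem_h(e_I,e_J,e_K,e_L)$ in the orthonormal frame $(e_I)$ and applying \eqref{eq:EstimateSpatialRiemann} and \eqref{eq:EstimateEnergyMomentumSF} from Lemma~\ref{lemma:SpatialRiemann} gives $t^2\|\roScal_h\|_{C^{k_0+1}(\S_t)} + t^2\|e_I(\phi)e_I(\phi)\|_{C^{k_0+1}(\S_t)} \leq C t^{2\sigma}(\mbD + t^\sigma)^2$. Multiplying by $N$ only costs a factor $\|N\|_{C^{k_0+1}(\S_t)} \leq \|N-1\|_{C^{k_0+1}(\S_t)} + 1 \leq C$ by Corollary~\ref{cor:SobolevAlgebra} and the bootstrap inequality. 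For the potential term $N\,V\circ\phi$, estimate \eqref{eq:VcircphiWinfkzest} already gives $t^2\|V\circ\phi\|_{C^{k_0+1}(\S_t)} \leq C t^{5\sigma}$, and, using $(\mbD + t^\sigma)^2 \geq t^{2\sigma}$ together with $t \leq 1$, this is in turn bounded by $C t^{2\sigma}(\mbD + t^\sigma)^2$; multiplying by $N$ again only costs an $O(1)$ factor.

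Summing the four contributions yields \eqref{eq:EstLapseLowJunk}. There is no real obstacle in this argument — the point is simply that the lapse equation lets us trade the term $e_I e_I (N) - t^{-2}(N-1)$, which we have no direct control of, for products of quantities whose scheme-counters happen to produce at least $t^{2\sigma}(\mbD + t^\sigma)^2$; the mildly delicate step is the gain $t^{3\sigma}$ in $t^{5\sigma}(\mbD+t^\sigma) \leq C t^{2\sigma}(\mbD+t^\sigma)^2$, which uses only $\mbD + t^\sigma \lesssim 1$ and $t \leq 1$ and is effectively why the factor $(\mbD+t^\sigma)^2$ (rather than $(\mbD+t^\sigma)^1$) can be extracted.
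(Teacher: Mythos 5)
Your proof follows exactly the same decomposition as the paper's: you estimate the four terms of the right-hand side of \eqref{eq:LapseRepeated} in $C^{k_0+1}$ using the scheme, the a-priori Lemma~\ref{lemma:SpatialRiemann} for the curvature and kinetic scalar-field terms, and \eqref{eq:VcircphiWinfkzest} for the potential. The overall structure is sound and the conclusion is reached.

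There is, however, a slip in the scheme accounting for the term $\g_{JII}\,e_J(N)$. You set $l_{\roint}=1$ on the grounds that the $\ear(N)$ factor (counter $l_3$) only comes with a-priori control up to $C^{k_0}$ via $\mbL_{(N)}$. That is true, but the $\g$ factor (counter $l_7$) is in exactly the same position: the a-priori estimate \eqref{eq:tqgaWinfkz} only controls $\g$ in $C^{k_0}$, so when measuring the product in $C^{k_0+1}$ (which, after the Banach-algebra inequality, can put $k_0+1$ derivatives on either factor) you need one interpolation for each of the two factors, i.e.\ $l_{\roint}=2$. The correct output of the scheme is therefore $m_\sigma=4$, giving
\[
t^2\,\|\g_{JII}e_J(N)\|_{C^{k_0+1}(\S_t)}\leq C\,t^{4\sigma}(\mbD(t)+t^{\sigma}),
\]
which is what the paper records — and which is weaker than your claimed $t^{5\sigma}(\mbD+t^\sigma)$. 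Your intermediate bound is thus not actually justified by the scheme as written. Fortunately this does not sink the proof: exactly the same absorption argument you use ($t^{2\sigma}\leq t^{\sigma}\leq\mbD+t^{\sigma}$, $t\leq 1$) gives $Ct^{4\sigma}(\mbD+t^{\sigma})\leq Ct^{2\sigma}(\mbD+t^{\sigma})^2$, so the conclusion \eqref{eq:EstLapseLowJunk} survives. You should correct $l_{\roint}$ and the resulting exponent, but the logical route and the final estimate are the right ones.
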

\begin{proof}
  We proceed by estimating all the terms on the right-hand side of (\ref{eq:LapseRepeated}), making use of the scheme of Subsection~\ref{sec:scheme}.
  First, 
  \begin{equation} \label{eq:EstimateLapseJunkFirstBound}
    t^2 \| \g_{JII} e_J(N) \|_{C^{k_{0}+1}(\S_t)} 
    \leq C t^{4\s} (\mbD + t^{\s}),
  \end{equation}
  as $l_3 = 1$, $l_7 = 1$, $l_{\roint} = 2$, $m_1 = 2$, $m_{\s} = 4$ and $m_D =1$. Next, combining the fact that $N$ is bounded in $C^{k_0+1}$
  with (\ref{eq:EstimateSpatialRiemann}) and (\ref{eq:EstimateEnergyMomentumSF}) yields
  \begin{equation*}
    t^2 \| N \roScal_h \|_{C^{k_{0}+1}(\S_t)}+t^2 \| N e_I(\phi) e_I(\phi) \|_{C^{k_{0}+1}(\S_t)}  \leq C t^{2\s} (\mbD + t^{\s})^2.
  \end{equation*}
  Finally,
  \begin{equation} \label{eq:EstimateLapseJunkLastBound}
    t^2 \|N V \circ \phi \|_{C^{k_{0}+1}(\S_t)} \leq Ct^{5\s},
  \end{equation}
  as $l_2 = 1$, $l_{11} = 1$, $m_1 = 2$, $m_{\s} = 5$ and $m_D = 0$. The lemma follows. 
\end{proof}

We make use of this lemma in order to establish an improvement on the bootstrap assumptions for the lower-order norms:

\begin{lemma}\label{lemma:LowerOrderLapseEstimates}
  Let $\s_p$, $\s_V$, $\s$, $k_0$, $k_1$, $(\S, h_{\refer})$, $(E_{i})_{i=1}^{n}$ and $V$ be as in Theorem~\ref{thm: big bang formation}
  and let $\rho_0 >0$. Assume that there are $t_{0}\leq 1$, $t_\rob<t_0$ and $\ar \in (0, \tfrac{1}{6n}]$ such that
  Assumption~\ref{ass:Bootstrap} is satisfied for this choice of parameters. Then, for any $t \in [t_{\rob},t_{0}]$,
  \begin{equation}\label{eq:LapseControlLow}
    \mbL_{(N)} \leq C t^{\s} \left(\mbD(t)+t^{\s} \right)^{2}.
  \end{equation}  
\end{lemma}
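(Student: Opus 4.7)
The plan is to exploit the elliptic structure of the lapse equation, combined with the bound on the ``junk'' terms already established in Lemma~\ref{lemma:EstLapseLowJunk}. Recall that \eqref{eq:LapseRepeated} can be written as
\[
    e_I e_I(N) - t^{-2}(N-1) = F,
\]
where $F$ denotes the right-hand side of \eqref{eq:LapseRepeated}. By Lemma~\ref{lemma:EstLapseLowJunk}, we have $t^2 \|F\|_{C^{k_{0}+1}(\S_t)} \leq C t^{2\s}(\mbD+t^{\s})^2$. The principal symbol of $e_I e_I$ is $\sum_I e_I^i e_I^j = h^{ij}$, the induced inverse metric expressed in the reference frame $(E_i)$, which is positive definite. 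The zeroth-order coefficient $-t^{-2}$ is strictly negative, so the maximum principle is available.

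First I would derive the $C^0$-bound on $N-1$: at a spatial maximum of $N-1$ the first derivatives $E_i(N-1)$ vanish while $E_iE_j(N-1)$ is negative semi-definite, hence $e_I e_I(N-1) = h^{ij}E_iE_j(N-1) \leq 0$ at the maximum. The equation then forces $(N-1)_{\max} \leq t^2 F_{\max}$, and symmetrically at a minimum, so $\|N-1\|_{C^0(\S_t)} \leq t^2\|F\|_{C^0(\S_t)}$. To upgrade this to the $C^{k_0+1}$-bound, I would commute with $E_\bfI$, obtaining
\[
    e_I e_I(E_\bfI(N-1)) - t^{-2} E_\bfI(N-1) = E_\bfI F + [e_I e_I, E_\bfI](N-1),
\]
and apply the maximum principle iteratively on $|\bfI| \leq k_0+1$. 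The commutator terms are bounded using the scheme of Subsection~\ref{sec:scheme}, with the coefficient estimates from Lemma~\ref{lemma:estimatingthedynamicalvariables} and Lemma~\ref{lemma:extraderivatives} (to absorb derivatives of the frame). The outcome is
\[
    \|N-1\|_{C^{k_{0}+1}(\S_t)} \leq C t^{2\s}(\mbD + t^{\s})^2.
\]

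For the second piece of $\mbL_{(N)}$ I would then use that $\vec{e}(N) = \vec{e}(N-1)$, so by the product rule and the scaffold-based bound $\|e\|_{C^{k_{0}}(\S_t)} \leq C t^{-1+3\s}$ from Lemma~\ref{lemma:estimatingthedynamicalvariables},
\[
    \|\vec{e}(N)\|_{C^{k_{0}}(\S_t)} \leq C\|e\|_{C^{k_{0}}(\S_t)} \|N-1\|_{C^{k_{0}+1}(\S_t)} \leq C t^{-1+5\s}(\mbD+t^{\s})^2.
\]
Combining with the previous bound yields $\mbL_{(N)}(t) = t^{-\s}\|N-1\|_{C^{k_{0}+1}(\S_t)} + t^{1-4\s}\|\vec{e}(N)\|_{C^{k_{0}}(\S_t)} \leq C t^{\s}(\mbD + t^{\s})^2$, which is the claim.

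The main obstacle is controlling the top-order part of the commutator $[e_I e_I, E_\bfI](N-1)$ when $|\bfI| = k_0+1$, since after the two second-order derivatives cancel there still remain terms involving $|\bfI|+1$ derivatives of $N-1$. The cleanest way to handle this is to pair the iterated maximum principle with a standard elliptic regularity argument on the closed manifold $\S$ (recasting the principal part as $h^{ij}E_iE_j$ plus first-order terms), carefully tracking the $t$-dependence of the ellipticity constant via Lemma~\ref{lemma:estimatingthedynamicalvariables}; every extra $t^{-1+3\s}$ factor from a frame component is compensated by the factor $t^2$ coming from the zeroth-order coefficient in the equation, which keeps the overall time weights consistent with the claimed bound.
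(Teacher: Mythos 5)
Your plan matches the paper's argument essentially step for step: commute $E_{\bfI}$ through the lapse equation, apply a pointwise maximum-principle argument for $E_{\bfI}(N-1)$ (using that at an extremum $e_I e_I E_{\bfI}(N-1)$ has the sign of the trace of the Hessian against $h^{-1}$), bound the right-hand side by Lemma~\ref{lemma:EstLapseLowJunk}, bound the commutator by the scheme together with Lemma~\ref{lemma:extraderivatives}, and finish $\mbL_{(N)}$ via the product rule and the a~priori bound on $\|e\|_{C^{k_0}}$.

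The one place you go astray is the closing paragraph. The worry about $|\bfI|+1$ derivatives falling on $N-1$ in the commutator $[e_I e_I, E_{\bfI}](N-1)$ is real, but it is already resolved by the device you invoke earlier: Lemma~\ref{lemma:extraderivatives}, specifically (\ref{eq:extraderivativesNchN}), controls $N-1$ in $C^{k_0+3}$ (and $e$ in $C^{k_0+2}$) at the cost of factors $t^{-\s}$, and the scheme with $l_1=1$, $l_5=2$, $l_{\roint}=2$ then gives $t^2\|[e_Ie_I,E_{\bfI}](N-1)\|_{C^0}\leq Ct^{5\s}(\mbD+t^{\s})^2$, which is even smaller than the $Ct^{2\s}(\mbD+t^\s)^2$ coming from Lemma~\ref{lemma:EstLapseLowJunk}. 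There is therefore no need to invoke a separate elliptic-regularity argument, nor to track an ellipticity constant for $h^{ij}E_iE_j$; doing so would only complicate the bookkeeping, and the statement that ``every extra $t^{-1+3\s}$ factor is compensated by $t^2$'' is not how the weights actually balance (the compensation comes from the product of \emph{two} frame factors and the interpolation penalty, not from a single factor against $t^2$). Drop the last paragraph, keep the interpolation route you already sketched, and the proof is the paper's.
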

\begin{proof}
  We start by estimating $\|N-1||_{C^{k_0+1}(\S_t)}$. The idea of the proof is to commute $E_{\bfI}$ through the left hand
  side of \eqref{eq:LapseRepeated}:
  \begin{equation} \label{eq:LapseCommuted}
    e_I e_I E_{\bfI}(N-1) - t^{-2} E_\bfI (N-1) 
    = E_{\bfI} \big(e_I e_I (N-1) - t^{-2} (N-1)\big) + [e_I e_I, E_\bfI] (N-1).
  \end{equation}
  for $|\bfI| \leq k_0+1$. On the other hand, the following maximum principle argument holds: Fix $t \in [t_{\rob}, t_0]$.
  If the function $E_\bfI(N-1)$ attains its maximum at some point, say, $p_{\max} \in \S_t$,
  then the function $e_I e_I E_{\bfI}(N-1)$ is non-positive at $p_{\max}$.
  (Note that it is assumed that $N$ is a smooth function.)
  This holds as, for each $M \in \{1,\dots,n\}$, we have $e_M E_{\bfI}(N-1)|_{p_{\max}} = 0$,
  and so the sum $e_I e_I E_{\bfI}(N-1)|_{p_{\max}}$ is the sum of the eigenvalues
  of the Hessian of $E_{\bfI}(N-1)$ at $p_{\max}$, which must all be non-positive.
  We conclude that for any $p \in \S_t$
  \begin{equation*}
    \begin{split}
      t^{-2} E_\bfI(N-1)|_p
      &\leq t^{-2}E_\bfI(N-1)|_{p_{\max}} \\
      &\leq \left(-e_I e_I E_{\bfI}(N-1) + t^{-2}E_\bfI(N-1) \right)|_{p_{\max}} \\
      &\leq \norm{-e_I e_I E_{\bfI}(N-1) + t^{-2}E_\bfI(N-1)}_{C^{0}(\S_t)}
    \end{split}
  \end{equation*}  
  A similar argument applies with $E_\bfI(N-1)$ replaced by $-E_\bfI(N-1)$
  and considering some point $p_{\min}$ where the minimum value of that is attained.
  Summing up,
  \begin{equation*}
    \norm{t^{-2} E_\bfI(N-1)}_{C^{0}(\S_t)}
    \leq \norm{e_I e_I E_{\bfI}(N-1) - t^{-2}E_\bfI(N-1)}_{C^{0}(\S_t)}.
  \end{equation*}
  Thus, by \eqref{eq:LapseCommuted}, we obtain the estimate
  \begin{equation}\label{eq:N m one prel est low}
    \begin{split}
      \| N-1\|_{C^{k_{0}+1}(\S_t)}
      \leq & t^{2}\big( \| e_I e_I (N-1) - t^{-2} (N-1) \|_{C^{k_{0}+1}(\S_t)} \\
      & \phantom{t^{2}\Big(} + \textstyle{\sum}_{|\bfI|\leq k_0+1}\|[e_I e_I, E_\bfI] (N-1)\|_{C^{0}(\S_t)} \big).
    \end{split}
  \end{equation}  
  Lemma~\ref{lemma:EstLapseLowJunk} applies to the first term.
  For the term involving the commutator, note that $[e_I e_I, E_{\bfI}]$
  is a differential operator of order $|\bfI| +1 \leq k_0 + 2$,
  with at most $|\bfI|+1$ derivatives acting on the frame components $e_I^i$,.
  \begin{equation}\label{eq:commutator estimates low N m one}
    \begin{split}
      & t^2 \textstyle{\sum}_{|\bfI|\leq k_0+1} \|[e_I e_I, E_\bfI] (N-1)\|_{C^{0}(\S_t)} \\
      \leq & C t^2 \|e\|_{C^{k_{0}+1}(\S_t)}\|e\|_{C^{k_0+2}(\S_t)}
      \|N-1\|_{C^{k_0+2}(\S_t)} \leq C t^{5\s} (\mbD + t^{\s})^2,
    \end{split}
  \end{equation}
  as we have a term with $l_1 = 1$, $l_5 = 2$, $l_{\roint} = 2$, $m_1 = 2$, $m_{\s} = 5$ and $m_D = 2$.
  Combining (\ref{eq:EstLapseLowJunk}), (\ref{eq:N m one prel est low}) and (\ref{eq:commutator estimates low N m one})
  yields
  \begin{equation*}
     t^{-\s}\|N-1 \|_{C^{k_{0}+1}(\S_t)} 
     \leq C t^{\s} \left(\mbD + t^{\s} \right)^2.
  \end{equation*}
  Combining this estimate with (\ref{eq:tqeomegaWkzest}) yields
  \begin{equation*}
    \begin{split}
      t^{1-4\s} \| \ear N \|_{C^{k_{0}}(\S_t)}
      & \leq Ct^{1-4\s} \norm{e}_{C^{k_{0}}(\S_t)}
      \| N-1 \|_{C^{k_{0}+1}(\S_t)}\\
      & \leq C t^{-\s} \left(\mbL_{(e,\omega)}(t) + t^\s \right)
      \| N-1 \|_{C^{k_{0}+1}(\S_t)} \leq C t^{\s} (\mbD + t^{\s})^2.
    \end{split}
  \end{equation*}
  This concludes the proof.
\end{proof}

\subsubsection{Improving the higher-order estimates for the lapse}
Recall from Remark~\ref{rmk: alternative lapse} that the lapse equation can be reformulated to (\ref{eq: alternative lapse}),
the \emph{alternative lapse equation}. By estimating the right-hand side of (\ref{eq: alternative lapse}), in particular
$t^2 - k_{IJ} k_{IJ} - \dtp \dtp$, we obtain sharper estimates for its left-hand side than shown in \cite{GIJ}.
This is the purpose of the following lemma.

\begin{lemma} \label{lemma:AuxiliaryLapseEstimate}
  Let $\s_p$, $\s_V$, $\s$, $k_0$, $k_1$, $(\S, h_{\refer})$, $(E_{i})_{i=1}^{n}$ and $V$ be as in Theorem~\ref{thm: big bang formation}
  and let $\rho_0 >0$. There is a standard constant $\tau_H <1$ such that if $t_{0}\leq \tau_H$, $t_\rob<t_0$, $\ar \in (0, \tfrac{1}{6n}]$
  and Assumption~\ref{ass:Bootstrap} is satisfied for this choice of parameters, then
  \begin{equation}\label{eq:AuxiliaryLapseEstimate}
    \begin{split}
      & t^{A+1} \| t^{-2} - k_{IJ} k_{IJ} - e_0(\phi) e_0(\phi)  \|_{H^{k_1}(\S_t)} \\
      \leq & \big(C\ar + 2 + \s \big) t^{A} \big(\| \dek{} \|_{H^{k_1}(\S_t)}^2
      + \| \dep{0}\|_{H^{k_1}(\S_t)}^2\big)^{1/2}+ C t^{-1+\s} (\mbD + t^{\s})
    \end{split}
  \end{equation}
  for any $t \in [t_{\rob}, t_0]$. Moreover, if $t_0 \leq \tau_{H}$, 
  \begin{equation} \label{eq:BoundEigenvalues}
    \left\|\textstyle{\sum}_{I} \bq_I^2 + \bP_1^2 \right\|_{C^{0}(\S)}^{1/2}
    \leq 1 + \s /4.
  \end{equation}
\end{lemma}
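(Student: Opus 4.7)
The plan is to use the algebraic identity obtained by writing $k_{IJ}=\chk_{IJ}+\dek{IJ}$ and $e_0\phi=\chdtp+\dep{0}$, together with the explicit forms $\chk_{IJ}=(\bq_{\uI}/t)\de_{\uI J}$ and $\chdtp=\bP_1/t$ (with $\bP_1=t_0\bp_1$):
\begin{equation*}
  t^{-2}-k_{IJ}k_{IJ}-(e_0\phi)^2 = \tfrac{1}{t^2}\big(1-\textstyle{\sum}_I\bq_I^2-\bP_1^2\big) - 2\big(\chk_{IJ}\dek{IJ}+\chdtp\dep{0}\big) - \dek{IJ}\dek{IJ} - \dep{0}^2.
\end{equation*}
I would first establish \eqref{eq:BoundEigenvalues}: Lemma~\ref{lemma:EstimateHamiltonianConstraint} combined with Sobolev embedding (which applies because $k_1>n/2$) gives the pointwise bound $\sum_I\bq_I^2+\bP_1^2\leq 1+Ct_0^{6\s}$, so choosing $\tau_H$ small enough that $Ct_0^{6\s}\leq\s/2$ for $t_0\leq\tau_H$ forces $(\sum_I\bq_I^2+\bP_1^2)^{1/2}\leq 1+\s/4$.

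For \eqref{eq:AuxiliaryLapseEstimate}, I would take the $H^{k_1}$-norm of the identity and treat each group of terms separately. The first group is bounded by Lemma~\ref{lemma:EstimateHamiltonianConstraint} by $Ct^{-2}t_0^{6\s}$, and multiplication by $t^{A+1}$ gives $Ct^{A-1}\leq Ct^{-1+\s}(\mbD+t^\s)$ since $A\geq 2\s$. For the linear terms, I decompose each $E_{\bfI}$ of $2(\chk_{IJ}\dek{IJ}+\chdtp\dep{0})$ into a principal piece $\tfrac{2}{t}\big(\sum_I \bq_I E_{\bfI}\dek{II}+\bP_1 E_{\bfI}\dep{0}\big)$, where every derivative lands on a deviation, plus a remainder carrying at least one derivative on $\bq_I$ or $\bp_1$. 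The principal piece is bounded pointwise by the Cauchy-Schwarz inequality applied to the vectors $(\bq_1,\dots,\bq_n,\bP_1)$ and $(E_{\bfI}\dek{11},\dots,E_{\bfI}\dek{nn},E_{\bfI}\dep{0})$ in $\rn{n+1}$, combined with \eqref{eq:BoundEigenvalues}, by $\tfrac{2(1+\s/4)}{t}\big(\sum_I(E_{\bfI}\dek{II})^2+(E_{\bfI}\dep{0})^2\big)^{1/2}$; squaring, integrating over $\S$, summing over $|\bfI|\leq k_1$, and using Minkowski yields a contribution to $t^{A+1}\|\cdot\|_{H^{k_1}}$ of at most $(2+\s/2)\,t^A\big(\|\dek{}\|_{H^{k_1}}^2+\|\dep{0}\|_{H^{k_1}}^2\big)^{1/2}$. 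The remainder is bounded by Moser (Lemma~\ref{lemma: products in L2}) together with the FRS bounds \eqref{eq:FermiWalkerBounds} on $\|\bq_I\|_{H^{k_1}}$ and $\|t_0\bp_1\|_{H^{k_1}}$, plus the bootstrap bounds $\|\dek{}\|_{C^0},\|\dep{0}\|_{C^0}\leq C\mbD/t$; the total is at most $C\mbD/t^2$, which after multiplication by $t^{A+1}$ falls into $Ct^{-1+\s}(\mbD+t^\s)$ since $A\geq 1$.

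Finally, for the quadratic terms, Moser gives $\|\dek{IJ}\dek{IJ}\|_{H^{k_1}}\leq C\|\dek{}\|_{C^0}\|\dek{}\|_{H^{k_1}}$, and combining $t\|\dek{}\|_{C^0}\leq \mbL_{(\g,k)}\leq\mbD\leq\ar$ with $t^{A+1}\|\dek{}\|_{H^{k_1}}\leq t\cdot t^A\big(\|\dek{}\|_{H^{k_1}}^2+\|\dep{0}\|_{H^{k_1}}^2\big)^{1/2}$ yields $t^{A+1}\|\dek{IJ}\dek{IJ}\|_{H^{k_1}}\leq C\ar\, t^A\big(\|\dek{}\|_{H^{k_1}}^2+\|\dep{0}\|_{H^{k_1}}^2\big)^{1/2}$; the analogous estimate for $\dep{0}^2$ works the same way, and these produce the $C\ar$ contribution to the leading coefficient. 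Assembling the four pieces gives the coefficient $2+\s/2+C\ar\leq 2+\s+C\ar$ in front of $t^A(\|\dek{}\|^2+\|\dep{0}\|^2)^{1/2}$, plus a remainder $\leq Ct^{-1+\s}(\mbD+t^\s)$. The main obstacle is to produce the \emph{sharp} coefficient $2+\s$ rather than a generic Moser constant; this requires the pointwise Cauchy-Schwarz argument together with the eigenvalue bound \eqref{eq:BoundEigenvalues}, which in turn encodes the approximate Hamiltonian constraint at $t_0$ from Lemma~\ref{lemma:EstimateHamiltonianConstraint}.
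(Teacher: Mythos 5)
Your decomposition, the derivation of \eqref{eq:BoundEigenvalues} from Lemma~\ref{lemma:EstimateHamiltonianConstraint} plus Sobolev embedding, the treatment of the first and quadratic groups, and the pointwise Cauchy--Schwarz argument on the principal piece all match the paper's approach (the paper packages the principal-piece-plus-commutator split into Lemma~\ref{lemma:noGNS}, which you essentially re-derive). There is, however, a genuine gap in your remainder estimate.

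You claim that the remainder (the commutator $[E_{\bfI},\bq_I]\dek{II}$, i.e.\ the part of $E_{\bfI}(\bq_I\dek{II})$ with at least one derivative on $\bq_I$) is bounded by $C\mbD/t^2$ using Moser (Lemma~\ref{lemma: products in L2}), the FRS bound $\|\bq_I\|_{H^{k_1}}\leq\rho_0$, and the $C^0$ bootstrap bound $\|\dek{}\|_{C^0}\leq C\mbD/t$. But Moser does not let you put all derivatives on $\bq_I$: a term $E_{\bfI_1}\bq_I\,E_{\bfI_2}\dek{II}$ with $|\bfI_1|=1$, $|\bfI_2|=k_1-1$ contributes, via Lemma~\ref{lemma: products in L2}, a piece of the form $\|\bq_I\|_{C^0}\|\dek{}\|_{H^{k_1}}$ (and even a sharper commutator-Moser yields $\|\bq_I\|_{C^1}\|\dek{}\|_{H^{k_1-1}}$). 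After multiplying by $2/t$ and $t^{A+1}$ this is of size $Ct^{-1}\mbD$, and interpolation $\|\dek{}\|_{H^{k_1-1}}\leq C\|\dek{}\|_{L^2}^{1/k_1}\|\dek{}\|_{H^{k_1}}^{1-1/k_1}$ only improves this to $C\mbD\, t^{-1+A/k_1}$, where $A/k_1<\s$ by (\ref{eq: k_1 inequality}), so it is still larger than the target $Ct^{-1+\s}(\mbD+t^\s)$.

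The paper's Lemma~\ref{lemma:noGNS} resolves exactly this: after interpolation it applies Young's inequality to split the high-derivative piece as $\lambda\|\dek{}\|_{H^{k_1}}^2 + C\lambda^{-(k_1-1)}\|\dek{}\|_{L^2}^2$, and then the $\lambda\|\dek{}\|_{H^{k_1}}^2$ term is not treated as a remainder but is \emph{absorbed into the leading coefficient} (that is the $\eta$ in \eqref{eq:noGNSSum}, chosen as $\s/4$). This is why the lemma's coefficient is $2+\s$ rather than the cleaner $2+\s/2$ you get from the principal piece alone: the extra $\s/2$ is the price of absorbing the worst part of the commutator. Your outline, as written, does not perform this absorption and hence does not actually close the estimate; it needs to be replaced by a direct appeal to Lemma~\ref{lemma:noGNS} (or its aux inequality \eqref{eq:noGNSAuxInequality}) with $\eta=\s/4$, exactly as the paper does.
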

\begin{remark}
  In the statement of the lemma, and in the proof below, we use the notation $\bP_{1}:=t_{0}\bp_{1}$.
\end{remark}
\begin{proof}
We start with the observation that
\begin{equation*}
  \begin{split}
    & t^{-2} - k_{IJ} k_{IJ} - e_0(\phi) e_0 (\phi)\\
    = & t^{-2} \big( 1 - \textstyle{\sum}_{I} \bq_I^2 - \bP_1^2 \big)
    - \dek{IJ}\cdot\dek{IJ}-(\dep{0})^{2}- 2 \chk_{IJ}\cdot\dek{IJ} - 2 \d_t \chphi \cdot \dep{0}.
  \end{split}  
\end{equation*}
Multiplying both sides by $t^{A+1}$, 
we can estimate the first term on the right-hand side in $H^{k_1}$
by $C t^{A-1}$, which in its turn is bounded by $Ct^{-1+4\s}$;
note that $1 - \sum_{I} \bq_I^2 - \bP_1^2$ is bounded in $H^{k_1}$
as a consequence of (\ref{eq:deviationfromKasner}),
assuming that $\tau_H \leq \tau_1$, where $\tau_{1}$ is the constant
appearing in the statement of Lemma~\ref{lemma:EstimateHamiltonianConstraint}.

On the other hand, from the triangle inequality, Lemma~\ref{lemma: products in L2}
and Assumption~\ref{ass:Bootstrap}, 
\begin{equation*}
  \begin{split}
    t^{A+1}\|\dek{IJ}\cdot\dek{IJ}\|_{H^{k_1}(\S_t)} \leq & C t^{A+1} \textstyle{\sum}_{I,J} \| \dek{IJ} \|_{H^{k_1}(\S_t)}
    \| \dek{IJ} \|_{C^{0}(\S_t)}\\
    \leq & C \ar t^{A}\|\dek{}\|_{H^{k_{1}}(\S_t)}.
  \end{split}
\end{equation*}
The term $t^{A+1}\|(\dep{0})^{2}\|_{H^{k_1}(\S_t)}$ can similarly be bounded by $C \ar t^{A}\|\dep{0}\|_{H^{k_{1}}(\S_t)}$.
Next, we make use of the first statement in Lemma~\ref{lemma:noGNS} with $\eta = \s/4$ and  
\begin{itemize}
    \item $\varphi_I = \bq_I$ for $I \in \{1,..,n\}$ and $\varphi_{n+1} = \bP_1$,
    \item $\psi_I = \dek{II}$ (no summation over $I$) for $I \in \{1,\dots,n\}$ and $\psi_{n+1}=\dep{0}$.
\end{itemize} 
This yields
\begin{equation*}
  \begin{split}
    & t^{A+1} \|\chk_{IJ}\cdot\dek{IJ}
    + \d_t \chphi\cdot \dep{0} \|_{H^{k_1}(\S_t)} \\[2pt]
    = & t^{A} \big\| \textstyle{\sum}_{I} \bq_I\cdot \dek{II}
    + \bP_1\cdot \dep{0} \big\|_{H^{k_1}(\S_t)} \\
    \leq & t^{A} \Big( \s/4 +  \big\| \textstyle{\sum}_{I} 
    \bq_I^2 + \bP_1^2 \big \|_{C^{0}(\S)}^{1/2} \Big)
    \big(\| \dek{} \|_{H^{k_1}(\S_t)}^2 
    + \|\dep{0}\|_{H^{k_1}(\S_t)}^2 \big)^{1/2}\\
    & + C \langle \s^{-1} \rangle^{k_1 -1} t^{A} \left \langle \textstyle{\sum}_{I} 
    \|\bq_I\|_{H^{k_1}(\S)}
    + \|\bP_1\|_{H^{k_1}(\S)} \right \rangle^{k_1} 
     \big(\| \dek{} \|_{C^{k_{0}}(\S_t)}
    + \| \dep{0} \|_{C^{k_{0}}(\S_t)} \big) \\
    \leq & \Big( 1 + \s/2 \Big) t^{A}
    \Big(\| \dek{} \|_{H^{k_1}(\S_t)}^2 
    + \| \dep{0} \|_{H^{k_1}(\S_t)}^2\Big)^{1/2}
    + C t^{-1 + 4\s}.
  \end{split}
\end{equation*}
In the last step, we appealed to (\ref{eq: gamma bar p bar assumption}), (\ref{eq: scalar field bar assumption}),
Lemma~\ref{lemma:EstimateHamiltonianConstraint} and Assumption~\ref{ass:Bootstrap}; note that by
(\ref{eq:deviationfromKasner}) and Sobolev embedding, there is a standard constant $\tau_{H}<1$,
such that if $t_0 \leq \tau_{H}$, then (\ref{eq:BoundEigenvalues}) holds. This concludes the proof.
\end{proof}

The next lemma yields the main estimate for the higher-order norms of the lapse.
\begin{lemma} \label{lemma:HigherOrderLapseEstimate}
  Let $\s_p$, $\s_V$, $\s$, $k_0$, $k_1$, $(\S, h_{\refer})$, $(E_{i})_{i=1}^{n}$ and $V$ be as in Theorem~\ref{thm: big bang formation}
  and let $\rho_0 >0$. Assume that there are $t_{0}\leq \tau_{H}$ (see Lemma~\ref{lemma:AuxiliaryLapseEstimate}), $t_\rob<t_0$ and
  $\ar \in (0, \tfrac{1}{6n}]$ such that Assumption~\ref{ass:Bootstrap} is satisfied for this choice of parameters.
  Then, for any $t \in [t_{\rob},t_{0}]$,
  \begin{equation}\label{eq:EstLapseHigh}
    \begin{split}
      \mbH_{(N)} &\leq \big(C \ar + 2 + \s \big)
      t^{A+1} \big(\| \dek{} \|_{H^{k_1}(\S_t)}^2
      + \| \dep{0} \|_{H^{k_1}(\S_t)}^2\big)^{1/2} + C t^{2\s} (\mbD + t^{\s}).
    \end{split}
  \end{equation}  
\end{lemma}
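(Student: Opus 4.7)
\textbf{Proof plan for Lemma~\ref{lemma:HigherOrderLapseEstimate}.} The natural approach is to run an elliptic $H^{k_1}$-energy estimate for the alternative lapse equation (\ref{eq: alternative lapse}) and then invoke Lemma~\ref{lemma:AuxiliaryLapseEstimate} to absorb the only term that carries the sharp constant. Concretely, for each multiindex $\bfI$ with $|\bfI|\leq k_1$, I would apply $E_{\bfI}$ to (\ref{eq: alternative lapse}), multiply by $E_{\bfI}(N-1)$ and integrate over $\S$. The left-hand side then contains, after integration by parts of the $e_Ie_IE_{\bfI}(N-1)$ term (picking up two commutator pieces and a $\rodiv_{h_\refer}e_I$ correction as in the proof of Proposition~\ref{prop: initial lapse estimate}), the coercive combination
\[
  \|\ear N\|_{H^{k_1}(\S_t)}^2 + t^{-2}\|N-1\|_{H^{k_1}(\S_t)}^2,
\]
and summing over $|\bfI|\leq k_1$ and multiplying through by $t^{2(A+1)}$ produces exactly $\mbH_{(N)}^2$ on the left.

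On the right-hand side, four types of terms appear: (i) the commutator junk from moving $E_{\bfI}$ past $e_Ie_I$ together with the $\rodiv_{h_\refer}e_I$ correction; (ii) the transport piece $\g_{JII}e_J(N)$; (iii) the potential piece $\tfrac{2n}{n-1}V\circ\phi\cdot N$; and (iv) the key quadratic piece $(t^{-2}-k_{IJ}k_{IJ}-e_0(\phi)e_0(\phi))N$. For (i)--(iii), I would apply Cauchy--Schwarz and then estimate the resulting $H^{k_1}$-norms using the higher-order scheme of Subsection~\ref{sec:scheme} in exactly the same way that Lemma~\ref{lemma:EstLapseLowJunk} handles the lower-order analogues, picking up a contribution $\leq C t^{-1+2\s}(\mbD+t^\s)\cdot t^A\|N-1\|_{H^{k_1}(\S_t)}$; the powers of $t$ here come out the same as for $\mbL_{(N)}$ up to a shift by $A$ and a factor of $t$, as in the higher-order scheme (Table~\ref{tab:CountingHigherOrderScheme}).

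For (iv), the crucial step, I would first bound $\|(\cdots)N\|_{H^{k_1}(\S_t)}$ using Lemma~\ref{lemma: products in L2}, writing $N=1+(N-1)$ and using the bootstrap bound on $\|N-1\|_{H^{k_1}(\S_t)}$ (so the $(N-1)$-contribution is absorbed into $Cr$ below plus scheme-type junk), and then applying Lemma~\ref{lemma:AuxiliaryLapseEstimate} directly to $\|t^{-2}-k_{IJ}k_{IJ}-e_0(\phi)e_0(\phi)\|_{H^{k_1}(\S_t)}$. After Cauchy--Schwarz against $E_{\bfI}(N-1)$, multiplying by $t^{2A}$ and summing, this contributes
\[
  \bigl(C\ar+2+\s\bigr)\, t^{A+1}\bigl(\|\dek{}\|_{H^{k_1}(\S_t)}^2+\|\dep{0}\|_{H^{k_1}(\S_t)}^2\bigr)^{1/2}\cdot t^A\|N-1\|_{H^{k_1}(\S_t)} + \text{scheme junk}.
\]

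Combining (i)--(iv), we end up with an inequality of the shape
\[
  \mbH_{(N)}^2 \leq \Bigl[\bigl(C\ar+2+\s\bigr)\,t^{A+1}\bigl(\|\dek{}\|_{H^{k_1}(\S_t)}^2+\|\dep{0}\|_{H^{k_1}(\S_t)}^2\bigr)^{1/2} + C t^{2\s}(\mbD+t^\s)\Bigr]\cdot t^A\|N-1\|_{H^{k_1}(\S_t)}.
\]
Since $t^A\|N-1\|_{H^{k_1}(\S_t)}\leq \mbH_{(N)}$ by definition, dividing gives the claimed estimate (\ref{eq:EstLapseHigh}). The main obstacle is step (iv): any loss of sharpness in the constant in Lemma~\ref{lemma:AuxiliaryLapseEstimate}---for instance, from a too-crude treatment of $\chk_{IJ}\cdot \dek{IJ}+\d_t\chphi\cdot\dep{0}$, which is exactly the $\bq_I$- and $\bP_1$-weighted combination governed by (\ref{eq:BoundEigenvalues}) and the generalised Kasner identity---would forbid closing the bootstrap in Theorem~\ref{thm:BootstrapImprovement}; this is why the Gagliardo--Nirenberg type bound (Lemma~\ref{lemma:noGNS}) with the small $\eta=\s/4$ margin is built into Lemma~\ref{lemma:AuxiliaryLapseEstimate} rather than a crude Moser estimate.
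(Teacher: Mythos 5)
Your proposal matches the paper's own proof essentially step for step: both commute $E_{\bfI}$ through the alternative lapse equation, integrate by parts to extract $\mbH_{(N)}^2$, handle the commutator/divergence, transport, and potential terms via the scheme, and isolate the sharp constant by writing $N=1+(N-1)$ and invoking Lemma~\ref{lemma:AuxiliaryLapseEstimate} for the quadratic term $t^{-2}-k_{IJ}k_{IJ}-e_0(\phi)^2$, then divide by $\mbH_{(N)}$. Your closing remark about why the $\eta=\s/4$ refinement in Lemma~\ref{lemma:noGNS} (rather than a crude Moser bound) is needed to preserve the constant $2+\s$ is exactly the reason the paper structures the argument this way.
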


\begin{proof}
  Applying $E_{\bfI}$ to (\ref{eq: alternative lapse}) and multiplying the result with $E_{\bfI}(N-1)$ yields
  \begin{equation*}
    \begin{split}
      & e_I E_\bfI e_I (N-1) E_{\bfI} (N-1) + [E_\bfI, e_I] e_I (N-1) E_{\bfI}(N-1)
      - t^{-2} \big(E_{\bfI}(N-1)\big)^2 \\
      = &  E_{\bfI} \Big( \g_{JII} e_{J} (N) - N \big\{ t^{-2} - k_{IJ} k_{IJ} - \dtp \dtp + \tfrac{2}{n-1} V \circ \phi \big\} \Big)
        E_{\bfI}(N-1).
    \end{split}
  \end{equation*}  
  Integrating this expression over $\S_t$, partially integrating the first term, using (\ref{eq:Xfint}),
  multiplying by $t^{2(A+1)}$ and summing over $|\bfI| \leq k_1$ yields
  \begin{equation*}
    \begin{split}
       -\mbH_{(N)}^2 (t)
      = & t^{2(A+1)} \langle \g_{JII} e_J (N)
      - N\big \{t^{-2} - k_{IJ} k_{IJ} - \dtp \dtp + \tfrac{2}{n-1} V \circ \phi \big\},
      N-1 \rangle_{H^{k_1}(\S_t)} \\
      & - t^{2(A+1)} \textstyle{\sum}_{|\bfI| \leq k_1} \textstyle{\int}_{\S_t}
      \big([E_\bfI, e_I] e_I (N-1) E_{\bfI}(N-1) 
      + E_{\bfI} e_I (N-1) [E_{\bfI}, e_I] (N-1) \\
      &\phantom{- t^{2(A+1)} \textstyle{\sum}_{|\bfI| \leq k_1} \textstyle{\int}_{\S_t}\big(}
      - E_{\bfI} e_I (N-1) E_\bfI(N-1) \rodiv_{h_\refer} (e_I)
      \big)\mu_{h_{\refer}}.
    \end{split}
  \end{equation*}
  Considering the first term on the right hand side, it suffices, using the Cauchy-Schwarz inequality for the $H^{k_1}$-inner product,
  to establish an upper bound for 
  \begin{equation*}
    t^{A+2} \| \g_{JII} e_{J} (N) - N\big \{t^{-2} - k_{IJ} k_{IJ} - \dtp \dtp + \tfrac{2}{n-1} V \circ \phi \big\} \|_{H^{k_1}(\S_t)}.
  \end{equation*}
  Appealing to Lemmata~\ref{lemma:LowerOrderHamiltonianConstraint},
  \ref{lemma:LowerOrderLapseEstimates} and \ref{lemma:AuxiliaryLapseEstimate} and Moser estimates,
  \begin{equation*}
    \begin{split}
      t^{A+2} &\|  N \big( t^{-2} - k_{IJ} k_{IJ} - \dtp \dtp \big) \|_{H^{k_1}(\S_t)} \\
      &\leq t^{A+2} \big(1+ C \|N-1\|_{C^{0}(\S_t} \big)
      \| t^{-2} - k_{IJ} k_{IJ} - \dtp \dtp \|_{H^{k_1}(\S_t)} \\
      &\quad + C t^{A+2} \| N-1 \|_{H^{k_1}(\S_t)}
      \| t^{-2} - k_{IJ} k_{IJ} - \dtp \dtp \|_{C^{0}(\S_t)} \\
      &\leq \big(C \ar + 2 + \s \big) t^{A+1}
      \big(\| \dek{} \|_{H^{k_1}(\S_t)}^2 
      + \| \dep{0} \|_{H^{k_1}(\S_t)}^2\big)^{1/2} + C t^{\s} (\mbD + t^{\s})
    \end{split}
  \end{equation*}
  for $t_0 \leq \tau_{H}$. The term involving the potential we can simply bound as
  \begin{align*}
    t^{A+2} \tfrac{2n}{n-1} \|N V \circ \phi \|_{H^{k_1}(\S_t)} 
    &\leq Ct^{5\s}
  \end{align*}
  as $l_2 = 1$, $l_{11} =1$, $m_{1}=2, m_{\s} =5, m_D = 0$ and $\max_{i} S(j_i) = A$. Next, 
  \begin{align*}
    t^{A+2} \| \g_{JII} e_J (N) \|_{H^{k_1}(\S_t)} 
    &\leq C t^{2\s} (\mbD + t^{\s}),
  \end{align*}
  as $l_3 = 1$, $l_7 = 1$, $m_{1} = 2$, $m_{\s} = 6$, $m_D = 2$ and $\max_{i} S(j_i) = S(3) = A+4\s$. Next,
  \begin{equation*}
    \begin{split}
      & t^{A+2} \| [E_\bfI, e_I] e_I (N-1) \|_{L^{2}(\S_t)} \\
      \leq & C t^{A+2} \big( \| e \|_{H^{k_1}(\S_t)} \| \ear(N) \|_{C^{1}(\S_t)}
      + \| e \|_{C^{1}(\S_t)}  \| \ear(N) \|_{H^{k_1}(\S_t)} \big)\leq C t^{3\s} (\mbD + t^{\s}).
    \end{split}
  \end{equation*}  
  Here we use the fact that $[E_{\bfI},e_I]$ is a differential operator of order $\max\{|\bfI|,1\}$,
  with at most $|\bfI|$ derivatives falling on the frame components $e^i_I$. Along with the Moser-type estimates
  from Lemma~\ref{lemma: products in L2}, Assumption~\ref{ass:Bootstrap} and Lemma~\ref{lemma:estimatingthedynamicalvariables},
  this yields the desired estimate. We may use this to bound
  \begin{equation*}
    \begin{split}
      & t^{2(A+1)}\big| \textstyle{\sum}_{|\bfI|\leq k_1} \textstyle{\int}_{\S_t}
      [E_{\bfI},e_I] e_I(N-1) E_{\bfI}(N-1)  \mu_{h_{\refer}} \big| \\
    \leq & t^{A+2} \big(\textstyle{\sum}_{|\bfI|\leq k_1}\|[E_{\bfI},e_I] e_I(N-1)\|_{L^2(\S_t)}^2\big)^{1/2} t^{A} \|N-1\|_{H^{k_1}(\S_t)} \\
    \leq & Ct^{3\s}(\mbD + t^{\s}) \mbH_{(N)}.
    \end{split}
  \end{equation*}
  Similarly,
  \begin{equation*}
    \begin{split}
      & t^{A+1} \| [E_\bfI, e_I] (N-1) \|_{L^{2}(\S_t)} \\
      \leq & C t^{A+1} \big( \| e \|_{H^{k_1}(\S_t)} \| N-1 \|_{C^{1}(\S_t)}
      + \| e \|_{C^{1}(\S_t)}  \| N-1 \|_{H^{k_1}(\S_t)} \big) 
      \leq C t^{3\s} (\mbD + t^{\s}),
    \end{split}
  \end{equation*}  
  so that 
  \begin{equation*}
    \begin{split}
      & t^{2(A+1)} \big| \textstyle{\sum}_{|\bfI|\leq k_1} \textstyle{\int}_{\S_t}E_{\bfI} e_I(N-1) [E_{\bfI}, e_I] (N-1)  \mu_{h_{\refer}} \big| \\
      \leq & t^{A+1} \big(\textstyle{\sum}_{|\bfI|\leq k_1}\sum_{I} \|[E_{\bfI},e_I] (N-1)\|_{L^2(\S_t)}^2\big)^{1/2}t^{A+1} \| \ear(N) \|_{H^{k_1}(\S_t)}\\
      \leq & Ct^{3\s}(\mbD + t^{\s}) \mbH_{(N)}.
    \end{split}
  \end{equation*}
  Appealing to Lemma~\ref{lemma:estimatingthedynamicalvariables} yields
  \begin{equation*}
    \begin{split}
      & t^{2(A+1)} \big| \textstyle{\sum}_{|\bfI|\leq k_1} \textstyle{\int}_{\S_t}   E_{\bfI} e_I (N-1) E_{\bfI}(N-1) \rodiv_{h_\refer}(e_I)  \mu_{h_{\refer}} \big| \\
      \leq & t^{2(A+1)} \textstyle{\sum}_{I} \|\rodiv_{h_{\refer}}(e_I)\|_{C^{0}(\S_t)}\|N-1\|_{H^{k_1}(\S_t)} \|\en \|_{H^{k_1}(\S_t)} 
      \leq Ct^{3\s}(\mbD + t^{\s}) \mbH_{(N)}^2.
    \end{split}
  \end{equation*}
  Combining the above estimates yields
  \begin{equation*}
    \begin{split}
      \mbH_{(N)}(t)^2
      \leq &  \big(C \ar + 2 + \s \big)t^{A+1}\big(\| \dek{} \|_{H^{k_1}(\S_t)}^2+ \| \dep{0} \|_{H^{k_1}(\S_t)}^2\big)^{1/2} \mbH_{(N)}(t) \\
      & + C t^{2\s} (\mbD + t^{\s})\mbH_{(N)}(t) ,
    \end{split}
  \end{equation*}  
  since $t^{A} \| N-1 \|_{H^{k_1}(\S_t)}\leq \mbH_{(N)}(t)$. The lemma follows. 
\end{proof}

\subsection{The components of the frame and co-frame}
For the frame component $e_I^i$ the relevant differential operator to consider is 
$\partial_t + t^{-1} \bq_I$. Indeed, the evolution equations for the components of the frame
and co-frame can be written
\begin{subequations}\label{seq:EveECLow}
  \begin{align}
    \big( -\tfrac{\bq_{\uI}}{t} - \d_t \big) \big(\dee{\uI}{i}\big)
    = & (N-1) k_{IM} e^{i}_{M} + \dek{IM}\cdot \che^{i}_M + \dek{IM}\cdot\dee{M}{i},\label{eq:EveFCeLow}\\
    \big(\tfrac{\bq_{\uI}}{t} - \d_t \big)
    \big(\deo{\uI}{i}\big) = & - (N-1) k_{IM} \o_{i}^{M} - \dek{IM}\cdot\deo{M}{i}- \dek{IM}\cdot\deo{M}{i};\label{eq:EveFComLow}
  \end{align}
\end{subequations}
recall Subsection~\ref{ssection:notation}: we do not sum over underlined indices.
\subsubsection{Lower-order estimates for the frame coefficients}
We continue by establishing lower-order
estimates for the time derivative of the components of the frame and of the co-frame.
Here we require pointwise estimates as opposed to estimates in the $C^{k_{0}+1}$-norm,
due to the nature of the argument for the lower-order energy estimate.
\begin{lemma}\label{lemma:EstFCLow}
  Let $\s_p$, $\s_V$, $\s$, $k_0$, $k_1$, $(\S, h_{\refer})$, $(E_{i})_{i=1}^{n}$ and $V$ be as in Theorem~\ref{thm: big bang formation}
  and let $\rho_0 >0$. Assume that there are $t_{0}\leq 1$, $t_\rob<t_0$ and $\ar \in (0, \tfrac{1}{6n}]$ such that
  Assumption~\ref{ass:Bootstrap} is satisfied for this choice of parameters. Then
  for any $i,I$, $|\bfI| \leq k_0+1$, $t \in [t_{\rob}, t_0]$ and $x \in \S$,
  \begin{subequations}
    \begin{align}
      & t^{1-3\s}
      \big| E_{\bfI} \big(\big(-\tfrac{ \bq_{\uI}}{t} - \d_t \big)
      (\dee{\uI}{i}) \big) (t,x) \big|\\
      \leq & C\ar t^{-1 +(1- 3\s)} \textstyle{\sum}_{M}  \sum_{|\bfJ| \leq |\bfI|} 
      \left|E_{\bfJ} (\dee{M}{i}) (t,x) \right|
      + C t^{-1+\s} \left( \mbD(t) + t^{\s} \right), \nonumber \\[2pt] 
      & t^{1-3\s}
      \big| E_{\bfI} \big(\big(\tfrac{\bq_{\uI}}{t} - \d_t \big)
      (\deo{\uI}{i}) \big) (t,x) \big| \\
      \leq & C\ar t^{-1 +(1- 3\s)} \textstyle{\sum}_{M}  \textstyle{\sum}_{|\bfJ| \leq |\bfI|}
      \big|E_{\bfJ} (\deo{M}{i}) (t,x) \big|+ C t^{-1+\s} \left( \mbD(t) + t^{\s} \right). \nonumber
    \end{align}
\end{subequations}
\end{lemma}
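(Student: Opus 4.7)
The plan is to apply $E_{\bfI}$ to the right-hand sides of the evolution equations (\ref{eq:EveFCeLow}) and (\ref{eq:EveFComLow}) and bound each resulting term pointwise at $(t,x)$. Most terms will fit directly into the second summand $Ct^{-1+\sigma}(\mbD+t^{\sigma})$ via the scheme of Subsection~\ref{sec:scheme}. The only terms requiring separate treatment are the self-couplings $\dek{IM}\cdot\dee{M}{i}$ in the frame equation and $\dek{IM}\cdot\deo{M}{i}$ in the co-frame equation; these produce the first summand on the right-hand side when the derivatives $E_{\bfI}$ are kept on the deviation $\dee{M}{i}$ (respectively $\deo{M}{i}$) and $\dek{IM}$ is estimated in $C^{0}$.

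For (\ref{eq:EveFCeLow}), the term $(N-1)k_{IM}e^{i}_{M}$ has counters $l_{1}=l_{10}=l_{5}=1$, yielding $m_{1}=2$, $m_{\sigma}=4$, $m_{D}=1$ and hence a $C^{k_{0}+1}(\S_{t})$-bound by $Ct^{-2+4\sigma}(\mbD+t^{\sigma})$; multiplying by $t^{1-3\sigma}$ gives $Ct^{-1+\sigma}(\mbD+t^{\sigma})$. The same conclusion holds for $\dek{IM}\cdot\che^{i}_{M}$ with counters $l_{8}=l_{4}=1$. For the self-coupling $\dek{IM}\cdot\dee{M}{i}$, I will expand $E_{\bfI}(\dek{IM}\cdot\dee{M}{i})$ via the Leibniz rule as a sum over $\bfJ_{1}+\bfJ_{2}=\bfI$ of $c_{\bfJ_{1},\bfJ_{2}}E_{\bfJ_{1}}(\dek{IM})\cdot E_{\bfJ_{2}}(\dee{M}{i})$. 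I will bound each factor $E_{\bfJ_{1}}(\dek{IM})(t,x)$ using $\|\dek{}\|_{C^{k_{0}+1}(\S_{t})}\leq \mbL_{(\gamma,k)}(t)/t \leq \ar/t$, while keeping $E_{\bfJ_{2}}(\dee{M}{i})(t,x)$ pointwise. Summing and multiplying by $t^{1-3\sigma}$ produces exactly $C\ar t^{-3\sigma}\sum_{M}\sum_{|\bfJ|\leq|\bfI|}|E_{\bfJ}(\dee{M}{i})(t,x)|$, which is the first summand of the claimed estimate since $-1+(1-3\sigma)=-3\sigma$. The co-frame equation (\ref{eq:EveFComLow}) is treated in the same manner, with $\deo{M}{i}$ in place of $\dee{M}{i}$ and the a priori bound $\|\deo{}{}\|_{C^{k_{0}+1}(\S_{t})}\leq t^{-1+3\sigma}\mbD$ replacing the corresponding bound for the frame.

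The main (and essentially only) technical subtlety is the handling of the self-coupling term. A naive application of the scheme to $\dek{IM}\cdot\dee{M}{i}$ would use $l_{5}=l_{8}=1$ and give $Ct^{-2+3\sigma}(\mbD+t^{\sigma})^{2}$, so that after multiplication by $t^{1-3\sigma}$ one obtains $Ct^{-1}(\mbD+t^{\sigma})^{2}$; this is stronger in $\mbD$ but loses the critical factor $t^{\sigma}$ needed in the desired estimate. Isolating the dangerous product and carrying the $\dee$- (respectively $\deo$-) derivatives pointwise preserves the $t^{\sigma}$ margin at the cost of introducing the $C\ar t^{-3\sigma}$-weighted sum of pointwise derivatives on the right-hand side. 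This is precisely the role of the first summand in the statement, which will later be absorbed in the Grönwall-type arguments used to close the low-order deviation bounds.
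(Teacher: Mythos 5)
Your proposal is correct and follows essentially the same route as the paper: the first two terms $(N-1)k_{IM}e^i_M$ and $\dek{IM}\cdot\che^i_M$ are handled in $C^{k_0+1}$ via the scheme with the same counters, and the self-coupling $\dek{IM}\cdot\dee{M}{i}$ is treated pointwise by distributing derivatives (your ``Leibniz expansion'' is what the paper packages as Lemma~\ref{lemma:EstimateCommutators}, which handles the non-commuting frame) and estimating the $\dek{}$ factor by $\ar/t$ from the bootstrap. Your concluding observation about why the naive scheme application to the self-coupling loses the $t^\sigma$ margin is accurate, though not part of the paper's proof.
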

\begin{proof}
  Since the proofs of the two estimates are very similar, we here only bound the three terms on the right-hand side of
  \eqref{eq:EveFCeLow}. It suffices to bound the first two terms on the right-hand side of (\ref{eq:EveFCeLow}) in $C^{k_{0}+1}$.
  However, 
  \begin{align} \label{eq:LowerOrderFrameFirstBound}
    t^{1-3\s}\|(N-1)k_{IM}e^{i}_M\|_{C^{k_{0}+1}(\S_t)}
    &\leq Ct^{-1+\s}(\mbD + t^{\s}),
  \end{align}
  as $l_1=1$, $l_5=1$, $l_{10}=1$, $m_1 = 2$, $m_{\s} = 4$ and $m_D =1$. Next
  \begin{align} \label{eq:LowerOrderFrameSecondBound}
    t^{1-3\s}\|\dek{IM}\cdot\che^{i}_M\|_{C^{k_{0}+1}(\S_t)}
    &\leq Ct^{-1+\s}(\mbD + t^{\s}),
  \end{align}
  as $l_4 =1$, $l_8=1$, $m_1 = 2$, $m_{\s} = 4$ and $m_D =1$. For the last term, we have to be more careful and obtain pointwise estimates.
  However, due to \eqref{eq:BootstrapInequality}, we may simply estimate that $|E_{\bfJ}(\dek{IM})| \leq t^{-1} \mbL_{(\g,k)} \leq t^{-1} \ar$
  for any $|\bfJ| \leq k_0+1$. In particular, for $|\bfI|\leq k_0+1$, 
  \begin{align*}
    t^{1-3\s}
    \big|E_{\bfI}\big(\dek{IM}\cdot\dee{M}{i}\big) (t,x) \big|
    &\leq C\ar t^{-1+(1-3\s)} \textstyle{\sum}_{M}  \textstyle{\sum}_{|\bfJ| \leq |\bfI|}
    \big|E_{\bfJ}\big(\dee{M}{i}\big) (t,x) \big|,
  \end{align*}
  where we appealed to Lemma~\ref{lemma:EstimateCommutators} in the Appendix. This concludes the proof.
\end{proof}

\subsubsection{Higher-order estimates for the frame coefficients}

Let us continue with the higher-order estimates.
\begin{lemma}\label{lemma:EstFChigh}
  Let $\s_p$, $\s_V$, $\s$, $k_0$, $k_1$, $(\S, h_{\refer})$, $(E_{i})_{i=1}^{n}$ and $V$ be as in Theorem~\ref{thm: big bang formation}
  and let $\rho_0 >0$. Assume that there are $t_{0}\leq 1$, $t_\rob<t_0$ and $\ar \in (0, \tfrac{1}{6n}]$ such that
  Assumption~\ref{ass:Bootstrap} is satisfied for this choice of parameters. Then, for any $i,I$ and $t \in [t_{\rob}, t_0]$,
  \begin{subequations}
    \begin{align} 
      t^{A+1-2\s} \big \|\big(\tfrac{\bq_{\uI}}{t} + \d_t\big)
      \big(\dee{\uI}{i}\big) \big\|_{H^{k_1}(\S_t)}
      & \leq C \ar  t^{-1} \mbH_{(e,\o)} (t)
      + C t^{-1+\s} \big( \mbD(t) + t^{\s} \big), \label{eq:EstFChighe} \\[2pt]
      t^{A+1-2\s} \big \|\big(\tfrac{\bq_{\uI}}{t} - \d_t\big) 
      \big(\deo{\uI}{i}\big) \big\|_{H^{k_1}(\S_t)}
      & \leq C \ar  t^{-1} \mbH_{(e,\o)} (t)
      + C t^{-1+\s} \big( \mbD(t) + t^{\s} \big).
    \end{align}
  \end{subequations}  
\end{lemma}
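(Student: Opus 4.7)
\emph{Proof plan.} The equations \eqref{seq:EveECLow} express the left-hand sides of the desired estimates as sums of three terms each, so the plan is to multiply the first evolution equation by $-1$ and then bound each term on the right in the weighted $H^{k_{1}}$-norm. The first two terms, namely $(N-1)k_{IM}e^{i}_{M}$ and $\dek{IM}\cdot\che^{i}_{M}$ (and the co-frame analogues), are handled by the higher-order scheme of Subsection~\ref{sec:scheme}. For $(N-1)k_{IM}e^{i}_{M}$ one has $l_{1}=l_{5}=l_{10}=1$, so $m_{1}=2$, $m_{\s}=4$, $m_{D}=1$ and $\max_{i}S(j_{i})=S(5)=A+\s$, yielding a contribution of size $Ct^{-1+\s}(\mbD+t^{\s})$ after accounting for the $t^{A+1-2\s}$ prefactor. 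For $\dek{IM}\cdot\che^{i}_{M}$ one has $l_{4}=l_{8}=1$, so $m_{1}=2$, $m_{\s}=4$, $m_{D}=1$ and $\max_{i}S(j_{i})=A$, again yielding $Ct^{-1+2\s}(\mbD+t^{\s})$, which is absorbed into $Ct^{-1+\s}(\mbD+t^{\s})$.

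The third term, $\dek{IM}\cdot\dee{M}{i}$, is where the main point of the estimate lies: it is quadratic in the deviation quantities, and rather than apply the scheme (which would hide both factors inside $\mbD$), I will use the Moser product estimate from Lemma~\ref{lemma: products in L2} to split it into two pieces, namely
\[
    \|\dek{IM}\cdot\dee{M}{i}\|_{H^{k_{1}}(\S_{t})}
    \leq C\big(\|\dek{IM}\|_{C^{0}(\S_{t})}\|\dee{M}{i}\|_{H^{k_{1}}(\S_{t})}
        +\|\dek{IM}\|_{H^{k_{1}}(\S_{t})}\|\dee{M}{i}\|_{C^{0}(\S_{t})}\big).
\]
For the first piece I invoke the bootstrap bound $\|\dek{}\|_{C^{0}(\S_{t})}\leq t^{-1}\mbL_{(\g,k)}(t)\leq \ar t^{-1}$ together with $\|\dee{}{}\|_{H^{k_{1}}(\S_{t})}\leq t^{-A-1+2\s}\mbH_{(e,\o)}(t)$, which (after the $t^{A+1-2\s}$ prefactor) produces exactly the key term $C\ar t^{-1}\mbH_{(e,\o)}(t)$. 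For the second piece I use $\|\dek{}\|_{H^{k_{1}}(\S_{t})}\leq t^{-A-1}\mbH_{(\g,k)}(t)\leq t^{-A-1}\mbD(t)$ combined with $\|\dee{}{}\|_{C^{0}(\S_{t})}\leq t^{-1+3\s}\mbL_{(e,\o)}(t)\leq\ar t^{-1+3\s}$, giving a contribution bounded by $C\ar t^{-1+\s}\mbD(t)\leq Ct^{-1+\s}(\mbD(t)+t^{\s})$. Summing the three contributions yields \eqref{eq:EstFChighe}.

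The proof of the co-frame estimate is identical in structure: I simply apply the same decomposition to the right-hand side of \eqref{eq:EveFComLow}, with $\che^{i}_{M}$, $\dee{M}{i}$, and $\|\dee{}{}\|$-type norms replaced throughout by $\cho^{M}_{i}$, $\deo{M}{i}$, and $\|\deo{}{}\|$-type norms respectively; the a priori estimates for the co-frame appearing in Lemmata~\ref{lemma:estimatingthebackground} and \ref{lemma:estimatingthedynamicalvariables} are the same (up to multiplicative constants) as those for the frame, so the scheme delivers the identical bounds and the same Moser splitting handles the quadratic term.

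The only subtle point — and this is the reason for singling out the quadratic piece rather than letting the scheme process it — is that one must not absorb $\mbH_{(e,\o)}$ into $\mbD$ at this stage. The energy-estimate argument further on will treat the $C\ar t^{-1}\mbH_{(e,\o)}(t)$ contribution as a potentially dangerous term that is closed on the left-hand side by a Grönwall-type argument (using that $\ar$ is small), whereas the $Ct^{-1+\s}(\mbD+t^{\s})$ part is a forcing term to be integrated directly. No real obstacle beyond bookkeeping arises here; the lemma is essentially a transcription of the evolution equations into the language of the bootstrap norms, and the scheme of Subsection~\ref{sec:scheme} takes care of everything except the extraction of the $\ar\mbH_{(e,\o)}$-factor, which is done by hand.
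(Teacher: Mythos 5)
Your proposal is correct and follows essentially the same route as the paper: the first two terms on the right-hand side of \eqref{seq:EveECLow} are estimated with the higher-order scheme, and the quadratic term $\dek{IM}\cdot\dee{M}{i}$ is split via the Moser estimate of Lemma~\ref{lemma: products in L2} into the piece $\|\dek{}\|_{C^0}\|\dee{}{}\|_{H^{k_1}}$ (yielding $C\ar t^{-1}\mbH_{(e,\o)}$) and the piece $\|\dek{}\|_{H^{k_1}}\|\dee{}{}\|_{C^0}$ (absorbed into $Ct^{-1+\s}\mbD$), exactly as in the paper's argument. Your observation about why the quadratic term cannot simply be fed through the scheme — so that the $\ar\mbH_{(e,\o)}$ factor is retained for the subsequent Gr\"onwall-type absorption in Proposition~\ref{prop:EnergyFC} — is also the paper's reasoning.
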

\begin{proof}
  Again, the proofs are similar, and we here only bound the three terms on the right-hand side of (\ref{eq:EveFCeLow}).
  For the first and second term, we again make use of the scheme of Subsection~\ref{sec:scheme}. In fact, we may simply
  make use of (\ref{eq:LowerOrderFrameFirstBound}) and (\ref{eq:LowerOrderFrameSecondBound}), multiplied on the left by $t^{A+\s}$
  and on the right by $t^{A+\s-\max_i S(j_i)}$. In particular, 
\begin{align*}
t^{A+1-2\s} \| (N-1) k_{IM} e^{i}_M \|_{H^{k_1}(\S_t)}  
    &\leq C t^{-1+\s} (\mbD + t^{\s}),
\intertext{ as $l_1 = 1, l_5 = 1, l_{10} = 1$, so $\max_i S(j_i) = S(1) = A + \s$, and}
t^{A+1-2\s} \| (k_{IM} - \chk_{IM}) \che^i_{M} \|_{H^{k_1}(\S_t)} 
    &\leq C t^{-1+2\s}(\mbD + t^\s),
\end{align*}
as $l_4 = 1, l_8 = 1$, so here $\max_{i} S(j_i) = S(10) = A$. Finally,
\begin{equation*}
  \begin{split}
    & t^{A+1-2\s} \| \dek{IM}\cdot\dee{M}{i} \|_{H^{k_1}(\S_t)} \\[2pt]
    \leq & C t^{A+1-2\s} \big( \| \dek{} \|_{H^{k_1}(\S_t)} 
        \| \dee{}{} \|_{C^{0}(\S_t)}  + \| \dek{} \|_{C^{0}(\S_t)} \| \dee{}{} \|_{H^{k_1}(\S_t)} \big) \\[2pt]
    \leq & C t^{-1+ \s} \mbH_{(\g, k)} \cdot \mbL_{(e,\o)}  
        + C t^{-1} \mbL_{(\g, k)} \cdot \mbH_{(e,\o)} \leq C  t^{-1 + \s} \mbD + C \ar t^{-1} \mbH_{(e,\o)},
  \end{split}
\end{equation*}
since $\mbL_{(\g,k)} \leq \ar$ and $\mbD \leq 1$ due to \eqref{eq:BootstrapInequality}.
This concludes the proof.
\end{proof}

\begin{remark}
Contrasting our estimates with those of \cite{GIJ},
we note that we do not need to appeal to the estimates for the lapse,
only to the bootstrap assumptions.
This is due to the different decay assumptions on the frame components at low order
compared to the bootstrap assumptions appearing in \cite{GIJ}.
For that reason we do not get a term of the form $c_* \mbH_{(\g,k)}$.
\end{remark}

\subsubsection{The energy estimate for the frame and co-frame coefficients}
With the higher-order estimates in hand,
we can proceed with proving the relevant energy estimate.

\begin{prop} [Energy estimate for $e$ and $\o$] \label{prop:EnergyFC}
  Let $\s_p$, $\s_V$, $\s$, $k_0$, $k_1$, $(\S, h_{\refer})$, $(E_{i})_{i=1}^{n}$ and $V$ be as in Theorem~\ref{thm: big bang formation}
  and let $\rho_0 >0$. Assume that there are $t_{0}\leq 1$, $t_\rob<t_0$ and $\ar \in (0, \tfrac{1}{6n}]$ such that
  Assumption~\ref{ass:Bootstrap} is satisfied for this choice of parameters. Then, for any $t \in [t_{\rob}, t_0]$,
  \begin{align}
    \begin{split}
      \label{eq:EnergyFC}
      \mbH_{(e,\o)}(t)^2
      & \leq \mbH_{(e,\o)}(t_0)^2
      + \big( C \ar - 4 \s - 2 A\big) 
      \textstyle{\int}_{t}^{t_0} s^{-1} \mbH_{(e,\o)}(s)^2 \md s \\
      &\quad + C \textstyle{\int}_{t}^{t_0} s^{-1+\s} \mbD(s) (\mbD(s) + s^{\s}) \md s.
    \end{split}
  \end{align}  
\end{prop}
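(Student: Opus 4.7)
The plan is a weighted higher-order energy estimate, modeled on the analogous argument in \cite{GIJ}. Set $\alpha:=A+1-2\s$, so that $\mbH_{(e,\o)}(t)^2 = t^{2\alpha}(\|\dee{}{}\|_{H^{k_1}(\S_t)}^2+\|\deo{}{}\|_{H^{k_1}(\S_t)}^2)$. For each multi-index $\bfI$ with $|\bfI|\leq k_1$ and each $I$, $i$, I will apply $E_\bfI$ to the two evolution equations (\ref{seq:EveECLow}), multiply the first by $2t^{2\alpha}E_\bfI\dee{I}{i}$ and the second by $2t^{2\alpha}E_\bfI\deo{I}{i}$, integrate over $\S$ against $\mu_{h_\refer}$, and sum. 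Combined with the $\d_t$-derivative acting on $t^{2\alpha}$, and using that $E_\bfI$ commutes with $\d_t$ by (\ref{eq: Lie transport}), this produces an identity of the form
\[
\d_t \mbH_{(e,\o)}^2 = \tfrac{2\alpha}{t}\mbH_{(e,\o)}^2 - \tfrac{2}{t}t^{2\alpha}\sum_{I,i,\bfI}\int_\S \bq_{\uI}\bigl((E_\bfI\dee{I}{i})^2 - (E_\bfI\deo{I}{i})^2\bigr)\mu_{h_\refer} + (\text{errors}),
\]
the sign flip between the $\dee{}{}$- and $\deo{}{}$-contributions coming from the opposite sign of $\bq_I$ in (\ref{eq:EveFCeLow}) versus (\ref{eq:EveFComLow}).

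\textbf{Coercivity and error structure.} The pointwise bound $\|\bq_I\|_{C^0(\S)}\leq 1-5\s$, supplied by Lemma~\ref{le: first bound on $q_I$}, controls the middle term in absolute value by $\tfrac{2(1-5\s)}{t}\mbH_{(e,\o)}^2$, so combining it with $\tfrac{2\alpha}{t}\mbH_{(e,\o)}^2$ produces a coercive lower bound with coefficient $2(\alpha-(1-5\s))=2(A+3\s)$:
\[
\d_t \mbH_{(e,\o)}^2 \geq \tfrac{2(A+3\s)}{t}\mbH_{(e,\o)}^2 - (\text{errors}).
\]
The errors split into two families. (a) The contributions of the right hand sides of (\ref{seq:EveECLow}): these are handled by Cauchy-Schwarz in $L^2(\S,\mu_{h_\refer})$ paired with Lemma~\ref{lemma:EstFChigh}, which directly supplies the bound $t^\alpha\|(\bq_{\uI}/t+\d_t)\dee{I}{i}\|_{H^{k_1}}\leq C\ar t^{-1}\mbH_{(e,\o)}+Ct^{-1+\s}(\mbD+t^\s)$ and its $\o$-analogue; Cauchy-Schwarz together with $t^\alpha\|\dee{}{}\|_{H^{k_1}}\leq\mbH_{(e,\o)}$ and $\mbH_{(e,\o)}\leq\mbD$ converts this into a contribution of size $C\ar t^{-1}\mbH_{(e,\o)}^2+Ct^{-1+\s}\mbD(\mbD+t^\s)$. (b) Commutator terms of the form $\tfrac{1}{t}[E_\bfI,\bq_{\uI}]\dee{I}{i}$ and their $\o$-analogues, arising because $\bq_I$ is spatially varying; these I will bound by the Moser/Kato-Ponce-type commutator inequalities of Lemma~\ref{lemma: products in L2}, using $\|\bq_I\|_{H^{k_1+2}(\S_{t_0})}\leq\rho_0$ from (\ref{eq: gamma bar p bar assumption}) combined with the $C^0$-smallness of $\dee{}{}$, $\deo{}{}$ provided by the bootstrap inequality, so that every instance of an $H^{k_1}$-norm is paired with a factor of $\ar$ coming from a lower-order norm, producing an estimate of the same schematic form as in (a).

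\textbf{Conclusion and main obstacle.} Collecting (a) and (b) yields
\[
\d_t \mbH_{(e,\o)}^2 \geq \tfrac{2(A+3\s)-C\ar}{t}\mbH_{(e,\o)}^2 - Ct^{-1+\s}\mbD(\mbD+t^\s) \geq \tfrac{2A+4\s-C\ar}{t}\mbH_{(e,\o)}^2 - Ct^{-1+\s}\mbD(\mbD+t^\s),
\]
where the $2\s$-margin is absorbed into the $-C\ar$ term. Integrating this differential inequality from $t$ to $t_0$ and rearranging yields exactly (\ref{eq:EnergyFC}). I expect the main obstacle to be the commutator estimate in step (b): in \cite{GIJ} the eigenvalues of the Weingarten map of the background are constants, so the analogous commutators are identically zero, whereas here $\bq_I$ is a genuine function with potentially large spatial derivatives, and the Moser estimate must be performed so that the $\ar$-factor really does appear---otherwise a naive bound would yield a term of the form $Ct^{-1}\mbH_{(e,\o)}^2$ without any smallness, which would destroy the coercivity margin $2(A+3\s)$ obtained above.
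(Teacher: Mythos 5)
Your overall strategy is structurally sound and aligns with the paper's: compute $\d_t\mbH_{(e,\o)}^2$, extract a coercive contribution from the $\pm\bq_I/t$ parts of (\ref{seq:EveECLow}), bound the source terms via Lemma~\ref{lemma:EstFChigh}, and integrate. You correctly track the coefficient bookkeeping ($\alpha-(1-5\s)=A+3\s$), and you correctly name the commutator $[E_\bfI,\bq_I]$ as the main obstacle. However, your proposed treatment of that commutator has a genuine gap.

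You claim the commutator can be controlled by Lemma~\ref{lemma: products in L2} together with the $C^0$-smallness of $\dee{}{}$, $\deo{}{}$, ``so that every instance of an $H^{k_1}$-norm is paired with a factor of $\ar$.'' That is not what the Moser estimate delivers. After pairing $\tfrac1t[E_\bfI,\bq_I]\dee{I}{i}$ with $E_\bfI\dee{I}{i}$ in $L^2$ and summing over $|\bfI|\leq k_1$, Lemma~\ref{lemma: products in L2} (or Lemma~\ref{lemma:EstimateCommutators}) produces, besides the benign term $\|\bq_I\|_{H^{k_1}}\|\dee{}{}\|_{C^{0}}\|\dee{}{}\|_{H^{k_1}}$, a top-order term of the schematic form $\|\bq_I\|_{C^{0}}\,\|\dee{}{}\|_{H^{k_1-1}}\,\|\dee{}{}\|_{H^{k_1}}$. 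Neither $\|\bq_I\|_{C^{0}}$ (bounded only by $1-5\s$, not small) nor $\|\dee{}{}\|_{H^{k_1-1}}$ (comparable to $\|\dee{}{}\|_{H^{k_1}}$, not to $\mbL_{(e,\o)}$) carries an $\ar$-factor. If inserted back into the energy identity, this term contributes a further $\frac{C}{t}\|\dee{}{}\|_{H^{k_1}}^2$ with $C$ of order one, which exactly doubles the coefficient you are trying to beat and destroys the coercivity margin $2(A+3\s)$. Interpolating $\|\dee{}{}\|_{H^{k_1-1}}\lesssim\|\dee{}{}\|_{C^0}^{1/k_1}\|\dee{}{}\|_{H^{k_1}}^{1-1/k_1}$ does not rescue the argument directly either, since $\ar^{1/k_1}$ is not small for large $k_1$.

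What is actually needed is the interpolation \emph{combined with Young's inequality}, which trades the interpolation exponent for an arbitrarily small coefficient $\lambda$ on $\|\dee{}{}\|_{H^{k_1}}^2$ at the price of a large (but harmless) power of $\|\bq_I\|_{H^{k_1}}$ multiplying the small quantity $\|\dee{}{}\|_{C^0}^2$. This is precisely the content of the paper's Lemma~\ref{lemma:noGNS}, in particular the key inequality (\ref{eq:noGNSAuxInequality}); the paper then applies (\ref{eq:noGNSInnerProduct}) with $\eta=\s$ directly to the pairing $\langle\tfrac{\bq_I}{t}\dee{I}{i},\dee{I}{i}\rangle_{H^{k_1}}$, obtaining the coefficient $\s+\max_I\|\bq_I\|_{C^0}\leq 1-4\s$ on $\|\dee{}{}\|_{H^{k_1}}^2$, rather than splitting off the commutator as you do. The two organizations are equivalent up to rearrangement; the point is that plain Moser is not enough — you must incorporate the interpolation-plus-Young mechanism (and budget for the resulting $-\s$ loss in the coercivity margin, which is accounted for in the paper but does not appear in your bookkeeping).
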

\begin{proof}
We start by noting that 
\begin{equation*}
  \begin{split}
    & -\d_t \big(\| \dee{}{} \|_{H^{k_1}(\S_t)}^2
    + \| \deo{}{} \|_{H^{k_1}(\S_t)}^2 \big) \\
    = & 2 \textstyle{\sum}_{i,I}  \big[
    \big \langle \big(-\tfrac{\bq_I}{t}-\d_t\big)
    (\dee{I}{i}), \dee{I}{i}\big\rangle_{H^{k_1}(\S_t)} 
    +  \big \langle \tfrac{\bq_I}{t}
    \dee{I}{i}, \dee{I}{i} \big\rangle_{H^{k_1}(\S_t)} \big] \\
    & + 2 \textstyle{\sum}_{i,I} \big[ 
    \big \langle \big(\tfrac{\bq_I}{t}-\d_t\big)(\deo{I}{i}),
    \deo{I}{i} \big \rangle_{H^{k_1}(\S_t)}
   - \big \langle \tfrac{\bq_I}{t}(\deo{I}{i}),\deo{I}{i} \big \rangle_{H^{k_1}(\S_t)} \big].
  \end{split}
\end{equation*}
The first terms in the parentheses can be estimated by appealing to Lemma~\ref{lemma:EstFChigh}.
The second terms can be bounded by using Lemma~\ref{lemma:noGNS} (for the sum over $I$ only) with $\eta = \s$.
In particular, 
\begin{equation*}
  \begin{split}
    & \textstyle{\sum}_{i}
    \big| \textstyle{\sum}_{I} \big  \langle \tfrac{\bq_I}{t}
    (\dee{I}{i}), \dee{I}{i} \big\rangle_{H^{k_1}(\S_t)} \big| \\
    \leq & t^{-1} \big( \s + \textstyle{\max}_{I} \| \bq_I \|_{C^{0}(\S_t)} \big)
    \| \dee{}{} \|_{H^{k_1}(\S_t)}^2 \\
    & + C \langle \s^{-1} \rangle^{k_1 -1} t^{-1} \big\langle 
    \textstyle{\sum}_{I} \| \bq_I \|_{H^{k_1}(\S_t)} \big \rangle^{k_1}
    \|\dee{}{} \|_{C^{k_{0}}(\S_t)} \|\dee{}{} \|_{H^{k_1}(\S_t)} \\
    \leq & t^{-1} \big(1 - 4 \s \big) \|\dee{}{} \|_{H^{k_1}(\S_t)}^2 
    + C t^{-1} \| \dee{}{} \|_{C^{k_{0}}(\S_t)} \|\dee{}{} \|_{H^{k_1}(\S_t)}
  \end{split}
\end{equation*}
due to (\ref{eq: gamma bar p bar assumption}) and Lemma~\ref{le: first bound on $q_I$}. Thus
\begin{equation*}
  \begin{split}
    & - t^{2(A+1-2\s)} \d_t \big(\| \dee{}{} \|_{H^{k_1}(\S_t)}^2 + \| \deo{}{} \|_{H^{k_1}(\S_t)}^2 \big) \\
    \leq & \big(C \ar + 2(1-4\s) \big) t^{-1} \mbH_{(e,\o)}^2
    + C t^{-1+\s} \mbD(\mbD + t^{\s}).
  \end{split}
\end{equation*}
Commuting $t^{2(A+1-2\s)}$ with $\d_t$ and integrating in time from $t$ to $t_0$ yields the result.
\end{proof}

\subsection{The structure coefficients and the second fundamental form}
In this section, we derive estimates for the structure coefficients and the second fundamental form.
The inhomogeneity of the initial data leads to additional complications in comparison to the estimates
shown in \cite{GIJ}. These complications require additional regularity assumptions for their resolution.

\subsubsection{Lower-order estimates for the structure coefficients}
For the structure coefficient $\g_{IJK}$ the relevant differential operator is
$\partial_t + t^{-1}(\bq_I + \bq_J - \bq_K)$.
We start by writing the evolution equation for $\chga$ as 
\begin{align} \label{eq:dtChga}
   \big(-\tfrac{ \bq_{\uI} + \bq_{\uJ} - \bq_{\uK}}{t} - \d_t \big) 
        ( \chga_{\uI \uJ \uK} ) 
    &= \che_I(\chk_{JK}) - \che_J(\chk_{IK}),
\end{align}
see (\ref{eq:formula for chgaIJK}). Combining this observation with \eqref{eq:concoef} yields
\begin{equation}\label{eq:EveSCLow}
  \begin{split} 
    & \big( -\tfrac{\bq_{\uI} + \bq_{\uJ} - \bq_{\uK}}{t} - \d_t \big)(\dega{\uI \uJ \uK}) \\
    = &  (N-1)  \left( k_{IM} \g_{MJK} + k_{JM} \g_{IMK} - k_{KM} \g_{IJM} \right) + \dek{IM}\cdot\dega{MJK} \\
    &+ \dek{JM}\cdot\dega{IMK}- \dek{KM}\cdot\dega{IJM} - \dek{KM}\cdot\chga_{IJM}+ \dek{IM}\cdot \chga_{MJK} \\
    & + \dek{JM}\cdot \chga_{IMK}+ 2 e_{[I} (N) k_{J]K} + 2 N e_{[I}(k_{J]K}) - 2 \che_{[I} (\chk_{J]K}).
  \end{split}
\end{equation}

We continue by establishing pointwise estimates.
\begin{lemma}\label{lemma:EstSCLow}
  Let $\s_p$, $\s_V$, $\s$, $k_0$, $k_1$, $(\S, h_{\refer})$, $(E_{i})_{i=1}^{n}$ and $V$ be as in Theorem~\ref{thm: big bang formation}
  and let $\rho_0 >0$. Assume that there are $t_{0}\leq 1$, $t_\rob<t_0$ and $\ar \in (0, \tfrac{1}{6n}]$ such that
  Assumption~\ref{ass:Bootstrap} is satisfied for this choice of parameters. Then,
  for any $I,J,K$, $|\bfI| \leq k_0$,  $t \in [t_{\rob}, t_0]$ and $x \in \S$,
  \begin{equation}\label{eq:EstSCLow}
    \begin{split} 
      & t^{1-2\s}
      \big| E_{\bfI} \big(\big( -\tfrac{ \bq_{\uI} + \bq_{\uJ} - \bq_{\uK}}{t} - \d_t \big)
      (\dega{\uI \uJ \uK}) \big)  (x,t) \\
      \leq & C \ar t^{-1 + (1-2\s)} \textstyle{\sum}_{M,L,N} \textstyle{\sum}_{|\bfJ| \leq |\bfI|} 
      \left|{E_{\bfJ} (\dega{MLN}}) (x,t) \right|
      + C t^{-1+\s} \left( \mbD(t) + t^{\s} \right).
    \end{split}
  \end{equation}  
\end{lemma}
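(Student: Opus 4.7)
The plan is to bound each of the ten terms on the right-hand side of the evolution equation \eqref{eq:EveSCLow} after applying $E_{\bfI}$ for $|\bfI|\leq k_0$, multiplying by $t^{1-2\s}$, and evaluating at $(x,t)$. This parallels the strategy used for Lemma~\ref{lemma:EstFCLow}, except that here we are working with $\chga$ in the base decay $t^{1-2\s}$, so the budget is slightly more generous. The three terms $\dek{IM}\cdot\dega{MJK}$, $\dek{JM}\cdot\dega{IMK}$, $-\dek{KM}\cdot\dega{IJM}$ are the only ones that cannot be absorbed into the $C\,t^{-1+\s}(\mbD+t^{\s})$ junk, because at worst one of the two factors is $\dega{}$ itself; these will generate the first term on the right-hand side of \eqref{eq:EstSCLow}. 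All remaining terms should be handled by the scheme of Subsection~\ref{sec:scheme} in the $C^{k_0}$-norm.

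For the three dangerous terms, I will argue exactly as in the proof of Lemma~\ref{lemma:EstFCLow}. The bootstrap assumption \eqref{eq:BootstrapInequality} gives $|E_{\bfJ}\dek{IM}|\leq t^{-1}\mbL_{(\g,k)}\leq t^{-1}\ar$ for all $|\bfJ|\leq k_0+1$. Distributing the derivatives of $E_{\bfI}$ across the product and invoking Lemma~\ref{lemma:EstimateCommutators} to handle the non-commutativity of $E_{\bfJ_1},E_{\bfJ_2}$ yields, pointwise at $(x,t)$,
\[
\big|E_{\bfI}(\dek{IM}\cdot\dega{MJK})\big|\leq C\ar\,t^{-1}\textstyle\sum_{|\bfJ|\leq|\bfI|}\big|E_{\bfJ}(\dega{MJK})\big|,
\]
and similarly for the other two. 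After multiplying by $t^{1-2\s}$, the factor $t^{-1}$ combines to the $t^{-1+(1-2\s)}$ stated in the lemma.

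For each remaining term I will apply the scheme, working in the $C^{k_0}$-norm since that is what the pointwise bound at $|\bfI|\leq k_0$ requires. The term $(N-1)(k_{IM}\g_{MJK}+k_{JM}\g_{IMK}-k_{KM}\g_{IJM})$ has counters $(l_1,l_{10},l_7)=(1,1,1)$ and no interpolation is needed (since $\g$ is already controlled in $C^{k_0}$ and $k$ in $C^{k_0+1}$ by Lemma~\ref{lemma:estimatingthedynamicalvariables}), yielding $Ct^{-1+\s}(\mbD+t^{\s})$ after the $t^{1-2\s}$ weight. The three terms $\dek{}\cdot\chga$ give $(l_6,l_8)=(1,1)$, producing the same bound. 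The term $2e_{[I}(N)k_{J]K}$ with counters $(l_3,l_{10})=(1,1)$ is even better, of order $Ct^{-1+2\s}$. The main nuisance is the pair $2Ne_{[I}(k_{J]K})-2\che_{[I}(\chk_{J]K})$; I will split it as
\[
Ne_I(k_{JK})-\che_I(\chk_{JK})=(N-1)e_I(k_{JK})+e_I(\dek{JK})+\dee{I}{}(\chk_{JK})
\]
(and antisymmetrize in $I,J$), whereupon each piece is handled by the scheme with counter sets $(l_1,l_5,l_{10})$, $(l_5,l_8)$, $(l_5,l_9)$ respectively, each giving a $Ct^{-1+\s}(\mbD+t^{\s})$ contribution.

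The main obstacle — such as there is one — is that for the three ``dangerous'' $\dek{}\cdot\dega{}$ terms, one must retain the full sum over $|\bfJ|\leq|\bfI|$ of derivatives of $\dega{}$ rather than coarsely estimating it in a Sobolev norm, because this sum will subsequently be matched against the Grönwall/energy-type structure in the higher-level argument. In particular, the $C^{k_0+1}$ bootstrap bound on $\dek{}$ is what lets me pull $\dek{}$ out of every derivative that lands on it; one has to verify that all such distributions of derivatives are handled by Lemma~\ref{lemma:EstimateCommutators} so that the pointwise estimate holds uniformly in $\bfI$. Once this is in hand, summing the contributions yields \eqref{eq:EstSCLow}.
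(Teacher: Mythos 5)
Your proof is correct and follows essentially the same approach as the paper: handle the three $\dek{}\cdot\dega{}$ terms pointwise using the bootstrap bound on $\dek{}$ and Lemma~\ref{lemma:EstimateCommutators}, and feed the remaining terms through the scheme in $C^{k_0}$, noting that no interpolation is needed since $|\bfI|\leq k_0$. The only cosmetic difference is that the paper estimates $Ne_{[I}(k_{J]K})$ and $\che_{[I}(\chk_{J]K})$ separately rather than decomposing their difference into $(N-1)e(k)+e(\dek{})+\dee{}{}(\chk)$, but the counters and resulting bounds agree.
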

\begin{proof}
  We begin by estimating, in $C^{k_0}$, the first appearance of each of the following five types of terms,
  found on the right-hand side of (\ref{eq:EveSCLow}): $(N-1) \cdot k \cdot \g$, $\dek{}\cdot  \chga$,
  $\ear(N) \cdot k$, $N \ear(k)$ and $\cear(\chk)$.
  Terms of a given type satisfy the same estimate due to the scheme of Subsection~\ref{sec:scheme}.
  On the other hand, the terms of the form $\dek{}\cdot \dega{}$ are handled differently, using a pointwise estimate.

  Before we continue, note that due to Assumption~\ref{ass:Bootstrap} and the norms appearing in (\ref{eq:BootstrapInequality}),
  we can bound up to $k_0+1$ derivatives of $e$, $\che$, $k$, $\chk$ and $N-1$. As $|\bfI| \leq k_0$ in (\ref{eq:EstSCLow}), we do not
  need to appeal to Lemma~\ref{lemma:extraderivatives} here in order to deal with the additional derivatives
  due to the frame vector fields  $e_I$ and $\che_I$.
  In the language of Subsection~\ref{sec:scheme}, in the five estimates that follow it suffices
  if $m_1 = 2, m_\s \geq 3$ and $m_\s + m_D \geq 4$. First,
  \begin{equation} \label{eq:EstSCFirstBound}
    t^{1-2\s} \|(N-1) k_{IM} \g_{MJK})\|_{C^{k_{0}}(\S_t)}
    \leq Ct^{-1+\s}(\mbD(t) + t^{\s}),
  \end{equation}
  as $l_1 = 1$, $l_7 = 1$, $l_{10} = 1$, $m_1 = 2$, $m_{\s} = 3$ and $m_D = 1$. Second,
  \begin{equation} \label{eq:EstSCSecondBound}
    t^{1-2\s} \|\dek{IM}\cdot \chga_{MJK} \|_{C^{k_{0}}(\S_t)}
    \leq Ct^{-1+\s}(\mbD(t) + t^{\s}),
  \end{equation}
  as $l_6 = 1$, $l_8 = 1$, $m_1 =2$, $m_{\s} = 3$ and $m_D = 1$. Third,
  \begin{align*}
    t^{1-2\s} \|e_I(N) k_{JK} \|_{C^{k_{0}}(\S_t)}
    \leq Ct^{-1+2\s},
  \end{align*}
  as $l_3 = 1$, $l_{10} = 1$, $m_1 = 2$, $m_{\s} = 4$ and $m_D = 0$. Next, 
  \begin{equation} \label{eq:EstSCThirdBound}
    t^{1-2\s} \| N e_I(k_{JK})\|_{C^{k_{0}}(\S_t)}
    \leq Ct^{-1+\s}(\mbD(t) + t^{\s}),
  \end{equation}
  as $l_2=1$, $l_5 = 1$, $l_{10} = 1$, $m_1 = 2$, $m_{\s} = 3$ and $m_D = 1$. Finally,
  \begin{equation}\label{eq:EstSCFourthBound}
    t^{1-2\s}\|\che_I(\chk_{JK})\|_{C^{k_{0}}(\S_t)}\leq Ct^{-1+2\s},
  \end{equation}
  as $l_4 = 1$, $l_9 = 1$, $m_1 = 2$, $m_{\s} = 4$ and $m_D = 0$.

  The only remaining type of term to be estimated is of the form $\dek{}\cdot\dega{}$.
  Similarly to the end of the proof of Lemma~\ref{lemma:EstFCLow}, we estimate such
  terms pointwise by noting that, due to (\ref{eq:BootstrapInequality}),
  $|E_{\bfI}(\dek{IM})| \leq t^{-1} \ar$ for any $|\bfI| \leq k_0$. In particular, for $|\bfI|\leq k_0$
  \begin{equation*}
    \begin{split}
      t^{1-2\s} \left|E_{\bfI}(\dek{IM}\cdot\dega{MJK}) \right|(t,x)
      \leq & C \ar t^{-1+(1-2\s)} \textstyle{\sum}_{M,J,K}\sum_{|\bfJ| \leq |\bfI|}
      \left|E_{\bfJ}\left(\dega{MJK}\right) \right|(t,x),
    \end{split}    
  \end{equation*}
  see Lemma~\ref{lemma:EstimateCommutators} in the appendix. This concludes the proof.
\end{proof}

\subsubsection{Higher-order estimates for the structure coefficients}
Next, we establish higher-order estimates for the structure coefficients.
However, we cannot handle terms of the form $N \ear(k)$ using the a-priori estimates
of Lemma~\ref{lemma:estimatingthedynamicalvariables}; they have too many derivatives
falling on $k$. This issue is resolved in tandem with the energy estimates themselves.
\begin{lemma}\label{lemma:EstimateSCHigh}
  Let $\s_p$, $\s_V$, $\s$, $k_0$, $k_1$, $(\S, h_{\refer})$, $(E_{i})_{i=1}^{n}$ and $V$ be as in Theorem~\ref{thm: big bang formation}
  and let $\rho_0 >0$. Assume that there are $t_{0}\leq \tau_{H}$ (see Lemma~\ref{lemma:AuxiliaryLapseEstimate}), $t_\rob<t_0$ and
  $\ar \in (0, \tfrac{1}{6n}]$ such that Assumption~\ref{ass:Bootstrap} is satisfied for this choice of parameters.
  Then, for any $I,J,K$ and any $t \in [t_{\rob}, t_0]$,
  \begin{equation}\label{eq:EstSChigh}
    \begin{split}
      & t^{A+1} \big \| \big( -\tfrac{\bq_{\uI} + \bq_{\uJ} - \bq_{\uK}}{t}
      - \d_t \big) (\dega{\uI \uJ \uK}) 
      - 2 e_{[I}(N) \chk_{J]K} - 2 N e_{[I} k_{J]K} \big\|_{H^{k_1}(\S_t)} \\[2pt]
      \leq & C \ar t^{-1} 
      \big(\mbH_{(\g,k)} (t) + \mbH_{(\phi)}(t) \big)
      + C t^{-1+\s} \left( \mbD(t) + t^{\s} \right).
    \end{split}
  \end{equation}
\end{lemma}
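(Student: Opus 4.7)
The plan is to substitute the evolution equation \eqref{eq:EveSCLow} for $(-\tfrac{\bq_{\uI}+\bq_{\uJ}-\bq_{\uK}}{t}-\d_t)(\dega{\uI\uJ\uK})$ into the left-hand side of \eqref{eq:EstSChigh}. After the subtraction $-2e_{[I}(N)\chk_{J]K}-2Ne_{[I}k_{J]K}$, the two $Ne_{[I}(k_{J]K})$ terms cancel exactly, and the remaining $2e_{[I}(N)k_{J]K}$ collapses with the first subtracted term into $2e_{[I}(N)\dek{J]K}$. Thus the quantity whose $H^{k_1}$-norm we must control is a finite sum of terms of the following five types: (i) $(N-1)\,k_{\cdot\cdot}\,\g_{\cdot\cdot\cdot}$ (three terms, obtained by lowering the factor $N$ of $N-1$); (ii) $\dek{\cdot\cdot}\cdot\dega{\cdot\cdot\cdot}$ (three terms); (iii) $\dek{\cdot\cdot}\cdot\chga_{\cdot\cdot\cdot}$ (three terms); (iv) $e_{[I}(N)\dek{J]K}$; and (v) $-2\che_{[I}(\chk_{J]K})$. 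The key point of the subtraction is that it eliminates the derivatives of $k$ with respect to the dynamical frame, which would otherwise require $k_1+1$ derivatives and could not be controlled by $\mbH_{(\g,k)}$.

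Each of the terms (i)--(v) will be estimated by applying the higher-order scheme of Subsection~\ref{sec:scheme} (Moser estimates combined with the a-priori bounds from Lemmata~\ref{lemma:estimatingthebackground} and \ref{lemma:estimatingthedynamicalvariables}), multiplying by the weight $t^{A+1}$. The terms of type (iii) and the ``small-times-high'' portion in (ii) produce contributions of the form $C\ar\,t^{-1}\mbH_{(\g,k)}$, since one factor in $C^0$ is bounded by $\ar t^{-1}$ via the bootstrap assumption; the ``high-times-small'' portion in (ii) gives the same bound. The term (v) depends only on scaffold variables and yields $Ct^{-1+4\s}$ directly from \eqref{eq: check gamma and k high} and the Moser inequality, which fits inside $Ct^{-1+\s}(\mbD+t^\s)$.

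The main obstacle lies with the term (iv), which produces a factor of $\mbH_{(N)}$ that is not part of the right-hand side of \eqref{eq:EstSChigh}. More precisely, using Moser's inequality together with the definitions of $\mbH_{(N)}$ and $\mbL_{(N)}$,
\begin{align*}
t^{A+1}\|e_{[I}(N)\dek{J]K}\|_{H^{k_1}(\S_t)}
\leq C\ar\,t^{-1}\mbH_{(N)}(t) + C\ar\,t^{-1+4\s}\mbH_{(\g,k)}(t).
\end{align*}
At this point I invoke Lemma~\ref{lemma:HigherOrderLapseEstimate}, which bounds $\mbH_{(N)}(t)$ by $(C\ar+2+\s)\bigl(\mbH_{(\g,k)}(t)+\mbH_{(\phi)}(t)\bigr)+Ct^{2\s}(\mbD(t)+t^\s)$. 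This is precisely how $\mbH_{(\phi)}$ enters the right-hand side of \eqref{eq:EstSChigh}, and since the prefactor is $C\ar$, the product $C\ar(C\ar+2+\s)$ may be absorbed into $C\ar$ for $\ar\in(0,\tfrac{1}{6n}]$.

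Finally, for the terms of type (i), the Moser inequality distributes derivatives among $N-1$, $k$ and $\g$; using Lemma~\ref{lemma:LowerOrderLapseEstimates} for the $C^0$-norm of $N-1$ (which supplies an extra $t^{2\s}(\mbD+t^\s)^2$ that overcomes the loss from $\|\g\|_{H^{k_1}}$), the worst contribution is of order $t^{-1+2\s}\mbH_{(N)}(\mbD+t^\s)$, which by the same application of Lemma~\ref{lemma:HigherOrderLapseEstimate} and the bound $\mbH_{(\g,k)}+\mbH_{(\phi)}\leq\mbD\leq 1$ is absorbed into $Ct^{-1+\s}(\mbD+t^\s)$. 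Collecting all contributions yields \eqref{eq:EstSChigh}.
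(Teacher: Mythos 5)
Your proof follows the same strategy as the paper: subtract $2e_{[I}(N)\chk_{J]K}+2Ne_{[I}k_{J]K}$ from the evolution equation \eqref{eq:EveSCLow}, note the cancellation of the $Ne_{[I}(k_{J]K})$ terms (which would require one derivative too many to be controlled by $\mbH_{(\g,k)}$) and the collapse of $e_{[I}(N)k_{J]K}-e_{[I}(N)\chk_{J]K}$ to $e_{[I}(N)\dek{J]K}$, then estimate the remaining terms by Moser inequalities, using the higher-order lapse estimate, Lemma~\ref{lemma:HigherOrderLapseEstimate}, to convert the $\mbH_{(N)}$-contribution into $\mbH_{(\g,k)}+\mbH_{(\phi)}$. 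This matches the paper's argument step for step.

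One detail is off, though it does not affect the conclusion: you claim that the type-(iii) terms $\dek{\cdot\cdot}\cdot\chga_{\cdot\cdot\cdot}$ also produce a contribution of the form $C\ar\,t^{-1}\mbH_{(\g,k)}$. That is not right, because $\chga$ is a \emph{scaffold} quantity, estimated by Lemma~\ref{lemma:estimatingthebackground} (in both $C^0$ and $H^{k_1}$ by $Ct^{-1+4\s}$), not by the bootstrap norm $\mbH_{(\g,k)}$. The two Moser pieces for such a term are $t^{A+1}\|\dek{}\|_{C^0}\|\chga\|_{H^{k_1}}\leq C\ar\,t^{A-1+4\s}$ and $t^{A+1}\|\dek{}\|_{H^{k_1}}\|\chga\|_{C^0}\leq Ct^{-1+4\s}\mbH_{(\g,k)}$; neither is of the form $C\ar\,t^{-1}\mbH_{(\g,k)}$ (in the second piece there is no $\ar$, and $t^{4\s}$ is not controlled by $\ar$). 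The correct bound for these terms, as in the paper, is $Ct^{-1+3\s}(\mbD+t^\s)$, which is absorbed into the second term of \eqref{eq:EstSChigh}. Since that term is present in the claimed estimate anyway, your final inequality still holds, but the classification of type-(iii) contributions as $C\ar\,t^{-1}\mbH_{(\g,k)}$ should be corrected.
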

\begin{proof}
  The logic of the proof is to subtract the terms $2e_{[I}(N)\chk_{J]K}$ and  $2Ne_{[I}(k_{J]K})$
  from both sides of (\ref{eq:EveSCLow}) and to estimate the resulting right hand side. 
  As we do so, the term $2e_{[I}(N)\dek{J]K}$ appears on the right-hand side.
  However, this term can be bounded using Lemma~\ref{lemma:HigherOrderLapseEstimate}
  and Assumption~\ref{ass:Bootstrap}. For example,
  \begin{equation*}
    \begin{split}
      & t^{A+1} \| e_I(N) \dek{JK} \|_{H^{k_1}(\S_t)} \\
      \leq & C t^{A+1} \big( \|\en \|_{H^{k_1}(\S_t)} 
      \| \dek{} \|_{C^{0}(\S_t)} + \| \en \|_{C^{0}(\S_t)}\| \dek{} \|_{H^{k_1}(\S_t)} \\
      \leq & C \ar t^{-1} \big(\mbH_{(\g,k)} + \mbH_{(\phi)} \big) + Ct^{-1+2\s}(\mbD + t^{\s}).
    \end{split}
  \end{equation*}
  In order to bound terms of the form $(N-1) \cdot k \cdot \g$, $\dek{} \cdot \chga$ and $\cear(\chk)$ in $H^{k_1}$,
  we use the scheme of Subsection~\ref{sec:scheme}, alongside (\ref{eq:EstSCFirstBound})--(\ref{eq:EstSCFourthBound})
  multiplied on the left-hand side by $t^{A+2\s}$ and on the right-hand side by $t^{A+2\s - \max_{i} S(j_i)}$. To begin,
  \begin{align*}
    t^{A+1} \| (N-1) k_{IM} \g_{MJK} \|_{H^{k_1}(\S_t)}
    &\leq C t^{-1+\s} (\mbD + t^{\s})
  \end{align*}
  due to \eqref{eq:EstSCFirstBound} and $\max_{i} S(j_i)= S(7) = A+2\s$. Next,
  \begin{align*}
    t^{A+1} \|  \dek{IM}\cdot \chga_{MJK} \|_{H^{k_1}(\S_t)} &\leq C t^{-1+3\s} (\mbD + t^{\s})
  \end{align*}
  due to (\ref{eq:EstSCSecondBound}) and $\max_{i} S(j_i) = S(8) = A$. Finally,
  \begin{align*}
    t^{A+1} \|\che_I (\chk_{JK}) \|_{H^{k_1}(\S_t)}
    &\leq C t^{A-1+4\s},
  \end{align*}
  due to (\ref{eq:EstSCFourthBound}) and $\max_i S(j_i)= 0$; note that bounding up to $k_1+1$ derivatives of
  $\chk$ in $L^{2}$ is allowed in the a-priori estimates of Lemma~\ref{lemma:estimatingthebackground}.

  Finally, making use of (\ref{eq:BootstrapInequality}) and Lemma~\ref{lemma: products in L2}, 
  \begin{equation*}
    \begin{split}
      & t^{A+1} \| \dek{IM}\cdot \dega{MJK} \|_{H^{k_1}(\S_t)} \\
      \leq & C t^{A+1} \big(\| \dek{} \|_{H^{k_1}(\S_t)} \| \dega{} \|_{C^{0}(\S_t)}
      + \| \dek{} \|_{C^{0}(\S_t)} \| \dega{} \|_{H^{k_1}(\S_t)} \big) 
      \leq C \ar t^{-1} \mbH_{(\g,k)}.
    \end{split}
  \end{equation*}
  Estimates for similar terms are similar. This concludes the proof.
\end{proof}

\subsubsection{Lower-order estimates for the second fundamental form}
The relevant differential operator for the components of $k$ is $\partial_t + \frac{1}{t}$. We thus write the evolution equations as
\begin{equation}\label{eq:EveFF}
  \begin{split}
    \big( - \tfrac{1}{t} - \d_t \big)  (\dek{IJ})
    = & t^{-1}(N-1) k_{IJ} - e_{(I} e_{J)} (N)
    + \g_{K(IJ)} e_K(N) \\
    &   + N \big( e_K (\g_{K(IJ)}) + e_{(I} (\g_{J)KK})
    -  e_I (\phi) e_J (\phi) 
    - \tfrac{2}{n-1} (V \circ \phi)\delta_{IJ}  \\
    & - \g_{KLL} \g_{K(IJ)} - \g_{I(KL)} \g_{J(KL)} 
    + \tfrac{1}{4} \g_{KLI} \g_{KLJ} \big).
  \end{split}
\end{equation}
\begin{lemma} \label{lemma:EstFFLow}
  Let $\s_p$, $\s_V$, $\s$, $k_0$, $k_1$, $(\S, h_{\refer})$, $(E_{i})_{i=1}^{n}$ and $V$ be as in Theorem~\ref{thm: big bang formation}
  and let $\rho_0 >0$. Assume that there are $t_{0}\leq 1$, $t_\rob<t_0$ and $\ar \in (0, \tfrac{1}{6n}]$ such that
  Assumption~\ref{ass:Bootstrap} is satisfied for this choice of parameters. Then, for any $I,J$ and $t \in [t_{\rob}, t_0]$,
  \begin{equation}\label{eq:EstimateFFLow}
    \begin{split}
      t \left\| \big( - \tfrac{1}{t} -\d_t \big) (\dek{IJ}) \right\|_{C^{k_{0}+1} (\S_t)}
      \leq C t^{-1+2\s} \left( \mbD(t) + t^{\s} \right)^2.
    \end{split}
  \end{equation}  
\end{lemma}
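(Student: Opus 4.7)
The plan is to estimate each term on the right-hand side of the evolution equation (4.36) in $C^{k_0+1}$ and show that each, multiplied by the outer factor $t$, is at most $Ct^{-1+2\s}(\mbD+t^\s)^2$. The first step is to recognize, via Lemma~\ref{lemma: spatial curvatures} (equation (2.26)), that the five geometric terms inside the parenthesis multiplied by $N$ in (\ref{eq:EveFF}) reassemble to form $\roRic_h(e_I,e_J)$. Thus (\ref{eq:EveFF}) may be rewritten as
\[
\big(-\tfrac{1}{t}-\d_t\big)(\dek{IJ}) = t^{-1}(N-1)k_{IJ} - e_{(I}e_{J)}(N) + \g_{K(IJ)}e_K(N) + N\bigl(\roRic_h(e_I,e_J) - e_I(\phi)e_J(\phi) - \tfrac{2}{n-1}(V\circ\phi)\delta_{IJ}\bigr).
\]
This reorganization reduces the problem to six term types. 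The three purely ``matter plus curvature'' terms multiplied by $N$ can be handled directly: Lemma~\ref{lemma:SpatialRiemann} (via contraction of \eqref{eq:EstimateSpatialRiemann}) gives $t^2\|\roRic_h(e_I,e_J)\|_{C^{k_0+1}}\leq Ct^{2\s}(\mbD+t^\s)^2$; equation \eqref{eq:EstimateEnergyMomentumSF} yields the same bound for the scalar field term; and \eqref{eq:VcircphiWinfkzest} provides $t^2\|V\circ\phi\|_{C^{k_0+1}}\leq Ct^{5\s}$, which is absorbed using $t^{3\s}\leq (\mbD+t^\s)$. Combined with the uniform $C^{k_0+1}$-bound on $N$ following from Lemma~\ref{lemma:LowerOrderLapseEstimates} and Corollary~\ref{cor:SobolevAlgebra}, multiplication by $t$ gives the claimed bound on these terms.

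For the three remaining lapse-dependent terms, the improved lapse estimate from Lemma~\ref{lemma:LowerOrderLapseEstimates} is essential. The term $t^{-1}(N-1)k_{IJ}$ is the most straightforward: Lemma~\ref{lemma:LowerOrderLapseEstimates} yields $\|N-1\|_{C^{k_0+1}}\leq\mbL_{(N)}\cdot t^\s\leq Ct^{2\s}(\mbD+t^\s)^2$, while the bootstrap bound \eqref{eq: low k bound} gives $\|k\|_{C^{k_0+1}}\leq Ct^{-1}$. Their product, divided by $t$ and multiplied by the outer $t$, produces precisely $Ct^{-1+2\s}(\mbD+t^\s)^2$. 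For the term $\g_{K(IJ)}e_K(N)$, I would use Lemma~\ref{lemma:extraderivatives} to obtain $\|\g\|_{C^{k_0+1}}\leq Ct^{-1+\s}(\mbD+t^\s)$ together with $\|\ear N\|_{C^{k_0+1}}\leq C\ar\, t^{-1+3\s}$ (from \eqref{eq:extraderivativesNchN} and the bootstrap assumption on $\mbL_{(N)}+\mbH_{(N)}$); their product is bounded by $C\ar\, t^{-2+4\s}(\mbD+t^\s)$, which multiplied by $t$ is dominated by $Ct^{-1+2\s}(\mbD+t^\s)^2$ via $t^{2\s}\leq(\mbD+t^\s)$ and $\ar\leq 1$.

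The main obstacle is the term $e_{(I}e_{J)}(N)$, involving two frame derivatives of the lapse. Expanding through the reference frame via
\[
e_I(e_J N) = e_I^i\, E_i(e_J^k)\, E_k(N) + e_I^i\, e_J^k\, E_i E_k(N),
\]
we see that control of $\|e\|_{C^{k_0+2}}$ and $\|N-1\|_{C^{k_0+3}}$ is required, which exceeds the standard a-priori estimates of Lemma~\ref{lemma:estimatingthedynamicalvariables}. The interpolation estimates of Lemma~\ref{lemma:extraderivatives}, specifically \eqref{eq:extraderivativese} and \eqref{eq:extraderivativesNchN}, supply $\|e\|_{C^{k_0+2}}\leq Ct^{-1+2\s}(\mbD+t^\s)$ and $\|N-1\|_{C^{k_0+3}}\leq C(\mbL_{(N)}+\mbH_{(N)})\leq C\ar$ from the bootstrap inequality \eqref{eq:BootstrapInequality}. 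Combining these with $\|e\|_{C^{k_0+1}}\leq Ct^{-1+3\s}(\mbD+t^\s)$ from \eqref{eq:tqeomegaWkzest} and applying the product estimates of Corollary~\ref{cor:SobolevAlgebra} to each of the two expanded terms yields $\|e_{(I}e_{J)}N\|_{C^{k_0+1}}\leq C\ar\, t^{-2+5\s}(\mbD+t^\s)^2$. Multiplying by $t$ gives $Ct^{-1+5\s}(\mbD+t^\s)^2$, which is absorbed into $Ct^{-1+2\s}(\mbD+t^\s)^2$ since $t\leq 1$. Assembling the six contributions completes the proof of \eqref{eq:EstimateFFLow}.
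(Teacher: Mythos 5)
Your proof is correct and reaches the same conclusion as the paper, but it takes a genuinely different route. The paper applies the scheme of Subsection~\ref{sec:scheme} mechanically to each term of (\ref{eq:EveFF}) as written, i.e.\ to the individual structure-coefficient combinations $N e_K(\g_{K(IJ)})$, $N\g_{KLL}\g_{K(IJ)}$, etc., simply tallying the counters and reading off the bound from Tables~\ref{tab:CountingLowerOrderScheme1}--\ref{tab:CountingLowerOrderScheme2}. You instead undo the derivation of (\ref{eq:sff}): you use Lemma~\ref{lemma: spatial curvatures} to re-assemble the five structure-coefficient terms inside the $N(\cdots)$ parenthesis into $\roRic_h(e_I,e_J)$, and then cite Lemma~\ref{lemma:SpatialRiemann}. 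You also treat $e_{(I}e_{J)}(N)$ by expanding explicitly through the reference frame and combining Lemma~\ref{lemma:extraderivatives} with Corollary~\ref{cor:SobolevAlgebra}, whereas the paper reads this term off the scheme directly (counters $l_3=l_5=l_{\roint}=1$, giving $Ct^{-1+6\s}(\mbD+t^\s)$). Both routes land on sufficient intermediate bounds; the paper's is more uniform and shorter, while yours makes the geometric structure more visible — at the cost of having to verify that the reassembly into $\roRic_h$ is exact.

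One small slip in the potential term that you should fix: after deriving $t\norm{N\, V\circ\phi}_{C^{k_0+1}(\S_t)}\leq C t^{-1+5\s}$, you absorb via $t^{3\s}\leq \mbD+t^\s$. That inequality is true, but it only yields $Ct^{-1+2\s}(\mbD+t^\s)$, which is \emph{not} dominated by the required $Ct^{-1+2\s}(\mbD+t^\s)^2$ when $\mbD+t^\s$ is small. The correct absorption is $t^{3\s}\leq t^{2\s}=(t^\s)^2\leq(\mbD+t^\s)^2$, which gives what you need. The paper implicitly uses the same absorption for its $Ct^{-1+5\s}$ bound without spelling it out.
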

\begin{proof}  
  Following the scheme of Subsection~\ref{sec:scheme}, 
  \begin{align*}
    t \left\|e_I e_J (N-1) \right\|_{C^{k_{0}+1} (\S_t)} 
    & \leq C t^{-1 + 6\s} (\mbD(t) + t^\s),
  \end{align*}
  as $l_3 = 1$, $l_5 = 1$, $l_{\roint} = 1$, $m_1 =2$, $m_{\s} = 6$ and $ m_D = 1$.
  To estimate $(N-1) k$, we appeal to Lemmata~\ref{lemma:estimatingthedynamicalvariables} and \ref{lemma:LowerOrderLapseEstimates}
  (the improved lower-order lapse estimates):
  \begin{align*}
    \| (N-1) k_{IJ} \|_{{C^{k_{0}+1} (\S_t)}}
    &\leq \| N-1 \|_{C^{k_{0}+1}(\S_t)}
    \| k \|_{C^{k_{0}+1}(\S_t)} \\
    &\leq C t^{ 2\s}(\mbD + t^{\s})^2 \cdot t^{-1} ( \mbL_{(\g,k)} + C) 
    \leq C t^{-1 + 2\s} (\mbD(t) + t^{\s})^2.
  \end{align*}
  Next, 
  \begin{align*}
    t \left\|N e_I (\g_{JKK}) \right\|_{C^{k_{0}+1}(\S_t)}
    & \leq C t^{-1 + 4\s} (\mbD(t) + t^\s)^2,
  \end{align*}
  as $l_2 =1$, $l_5 = 1$, $l_7 = 1$, $l_{\roint} = 1$, $m_1 = 2$, $m_{\s} = 4$ and $m_D = 2$;
  \begin{align*}
    t \left \| \g_{K(IJ)} e_K(N) \right \|_{C^{k_{0}+1}(\S_t)}
    &\leq C t^{-1 + 4\s} (\mbD(t) + t^{\s}),
  \end{align*}
  as $l_3 = 1$, $l_7 =1$, $l_{\roint} = 2$, $m_1 = 2$, $m_{\s} = 4$ and $m_D = 1$; and 
  \begin{align*}
    t \left\|N \g_{KLL} \g_{KIJ} \right\|_{C^{k_{0}+1} (\S _t)} 
    & \leq C t^{-1 + 2\s} (\mbD(t) + t^\s)^2,
  \end{align*}
  as $l_2 = 1$, $l_7 = 2$, $l_{\roint} = 2$, $m_1 = 2$, $m_{\s} = 2$ and $m_D = 2$.
  All terms of the form $N \cdot \g \cdot \g$ or $N \cdot \ep \cdot \ep$
  have the same upper bound, as they have the same counters: $l_2 = 1$ and $l_7 = 2$.
  Finally,
  \begin{align*}
    t \| N \cdot V \circ \phi \|_{C^{k_{0}+1}(\S_t)}
    \leq C t^{-1 + 5\s}
  \end{align*}
  as $l_2 = 1$, $l_{11} = 1$, $m_1 = 2$, $m_{\s} = 5$ and $m_D = 0$. The lemma follows. 
\end{proof}

\subsubsection{Higher-order estimates for the second fundamental form}
\begin{lemma}\label{lemma:EstimateFFHigh}
  Let $\s_p$, $\s_V$, $\s$, $k_0$, $k_1$, $(\S, h_{\refer})$, $(E_{i})_{i=1}^{n}$ and $V$ be as in Theorem~\ref{thm: big bang formation}
  and let $\rho_0 >0$. Assume that there are $t_{0}\leq \tau_{H}$ (see Lemma~\ref{lemma:AuxiliaryLapseEstimate}), $t_\rob<t_0$ and
  $\ar \in (0, \tfrac{1}{6n}]$ such that Assumption~\ref{ass:Bootstrap} is satisfied for this choice of parameters.
  Then, for any $I,J$ and $t \in [t_{\rob}, t_0]$,
  \begin{equation}
    \begin{split}
      & t^{A+1}  \big\|
      \big( - \tfrac{1}{t} - \d_t \big) (\dek{IJ}) 
      + e_{(I} e_{J)} (N) - t^{-1}(N-1) \chk_{IJ} \\[2pt]
      &\phantom{t^{A+1}  \big\|} - N \big(e_K (\g_{K(IJ)}) + e_{(I} (\g_{J)CC}) \big)
      \big\|_{H^{k_1}(\S_t)} \\[2pt]
      \leq & C \ar t^{-1} \big( \mbH_{(\g,k)} (t) + \mbH_{(\phi)} (t) \big)      
      + C t^{-1+\s} \left( \mbD(t) + t^{\s} \right). \label{eq:EstFFhigh}
    \end{split}
  \end{equation}  
\end{lemma}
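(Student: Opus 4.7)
The plan is to mimic the structure of the proof of Lemma~\ref{lemma:EstimateSCHigh}: subtract the problematic high-derivative terms from both sides of the evolution equation (\ref{eq:EveFF}), and then estimate each surviving term on the right-hand side using either the scheme of Subsection~\ref{sec:scheme} or a careful Moser-type estimate combined with the higher-order lapse bound from Lemma~\ref{lemma:HigherOrderLapseEstimate}. The terms subtracted on the left are exactly the ones whose a-priori estimates of Lemma~\ref{lemma:estimatingthedynamicalvariables} do not allow us to absorb $k_1$ spatial derivatives directly: $e_{(I}e_{J)}(N)$ (hessian of the lapse), and $N(e_K(\gamma_{K(IJ)})+e_{(I}(\gamma_{J)KK}))$ (one extra derivative on the structure coefficients). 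Splitting $t^{-1}(N-1)k_{IJ}=t^{-1}(N-1)\chk_{IJ}+t^{-1}(N-1)\dek{IJ}$ is also convenient, since the $\chk_{IJ}$ part is placed on the left while the $\dek{IJ}$ part is harmless on the right.

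After this rearrangement, the terms to be estimated in $H^{k_1}(\S_t)$, after multiplication by $t^{A+1}$, are: $t^{-1}(N-1)\dek{IJ}$; $\gamma_{K(IJ)}e_K(N)$; $Ne_I(\phi)e_J(\phi)$; $N\gamma\cdot\gamma$ (three such terms with the same structure); and $N\,V\circ\phi$. For the last four groups, I would invoke the higher-order scheme. Writing down the counters as in the proof of Lemma~\ref{lemma:EstimateFFLow}, each term of the form $N\cdot\gamma\cdot\gamma$ has $l_2=1$, $l_7=2$ giving, via $\max_i S(j_i)=S(7)=A+2\s$, a bound $Ct^{-1+\s}(\mbD+t^{\s})$; the term $Ne_I(\phi)e_J(\phi)$ has the same counters and gives the same estimate. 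The potential term gives $Ct^{-1+5\s}$ via $l_2=l_{11}=1$, $\max_i S(j_i)=A$, and $\gamma_{K(IJ)}e_K(N)$ has $l_3=l_7=1$ giving a contribution of order $Ct^{-1+3\s}(\mbD+t^{\s})$.

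The main novelty, and the step I expect to require the most care, is the term $t^{A}(N-1)\dek{IJ}$. Here the a-priori scheme alone is not sharp enough, because $(N-1)$ in the $H^{k_1}$-norm is only controlled by $\mbH_{(N)}$ and not directly by $\mbD$, while $\dek{IJ}$ in $H^{k_1}$ is part of $\mbH_{(\g,k)}$. The correct way to proceed is to use the Moser product inequality of Lemma~\ref{lemma: products in L2} to split
\begin{equation*}
  t^{A}\|(N-1)\dek{IJ}\|_{H^{k_1}(\S_t)}\leq C t^{A}\big(\|N-1\|_{H^{k_1}(\S_t)}\|\dek{}\|_{C^0(\S_t)}+\|N-1\|_{C^0(\S_t)}\|\dek{}\|_{H^{k_1}(\S_t)}\big),
\end{equation*}
and then use Assumption~\ref{ass:Bootstrap} together with the improved lapse estimate (Lemma~\ref{lemma:HigherOrderLapseEstimate}) to bound the first factor. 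The bootstrap inequality gives $\|\dek{}\|_{C^0}\leq \ar\, t^{-1}$ and $\|N-1\|_{C^0}\leq \ar\, t^\s$, so the first summand becomes $C\ar\, t^{-1}\mbH_{(N)}$, which after applying Lemma~\ref{lemma:HigherOrderLapseEstimate} reduces to $C\ar\, t^{-1}(\mbH_{(\g,k)}+\mbH_{(\phi)}) + Ct^{-1+2\s}(\mbD+t^\s)$. The second summand is at once bounded by $C\ar\, t^{-1}\mbH_{(\g,k)}$.

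A small additional subtlety is the term $\gamma_{K(IJ)}e_K(N)$: if one wants the sharper $C\ar\, t^{-1}(\mbH_{(\g,k)}+\mbH_{(\phi)})$ contribution that the statement encodes, one should not lose the $\ar$ factor coming from the bootstrap bound on $\gamma$ in $C^0$ (bounded, modulo scaffold, by $\mbL_{(\g,k)}\leq \ar$), paired with $\|\ear N\|_{H^{k_1}}\leq t^{-A-1}\mbH_{(N)}$ and again the lapse estimate of Lemma~\ref{lemma:HigherOrderLapseEstimate}. Collecting all the bounds above, multiplying everything by $t^{A+1}$ (resp.\ $t^A$), and summing, one obtains exactly (\ref{eq:EstFFhigh}), completing the proof.
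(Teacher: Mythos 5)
Your proposal follows essentially the same route as the paper: subtract $e_{(I}e_{J)}(N)$, $t^{-1}(N-1)\chk_{IJ}$ and $N(e_K(\g_{K(IJ)})+e_{(I}(\g_{J)KK}))$ from both sides of \eqref{eq:EveFF}, handle the resulting $t^{-1}(N-1)\dek{IJ}$ term by a Moser split combined with Lemma~\ref{lemma:HigherOrderLapseEstimate}, and dispatch the remaining terms with the scheme of Subsection~\ref{sec:scheme}. The treatment of the key term $t^{-1}(N-1)\dek{IJ}$ is identical to the paper's.

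One small misdiagnosis: the ``additional subtlety'' you raise for $\g_{K(IJ)}e_K(N)$ is not needed. The $\ar$-weighted $\mbH$ contribution in the lemma's conclusion arises \emph{only} from $t^{-1}(N-1)\dek{IJ}$; the term $\g_{K(IJ)}e_K(N)$ is handled by the scheme directly (counters $l_3=l_7=1$, $m_1=2$, $m_\s=6$, $m_D=1$, $\max_i S(j_i)=A+4\s$), giving $Ct^{-1+2\s}(\mbD+t^\s)$, which is absorbed into the $Ct^{-1+\s}(\mbD+t^\s)$ part of the conclusion with no $\ar$ needed. Your intermediate exponent claims also differ slightly from the scheme (e.g.\ you write $Ct^{-1+\s}$ for the $N\g\g$ terms where the scheme yields $Ct^{-1+2\s}(\mbD+t^\s)^2$, and $Ct^{-1+3\s}$ for $\g e_K(N)$ where it yields $Ct^{-1+2\s}$), but these discrepancies are harmless since all such bounds are dominated by $Ct^{-1+\s}(\mbD+t^\s)$.
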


\begin{proof}
  In analogy with the proof of Lemma~\ref{lemma:EstimateSCHigh}, the idea is to subtract well chosen terms from both sides
  of (\ref{eq:EveFF}). Then $t^{-1}(N-1)\dek{IJ}$ appears, which we bound by making use of Assumption~\ref{ass:Bootstrap},
  Lemma~\ref{lemma:HigherOrderLapseEstimate} and Lemma~\ref{lemma: products in L2}:
  \begin{align*}
    & t^{A+1} \| t^{-1} (N-1)\dek{IJ} \|_{H^{k_1}(\S_t)} \\
    \leq & C t^{A} \big( \|N-1 \|_{H^{k_1}(\S_t)} \| \dek{} \|_{C^{0}(\S_t)} 
        +  \| N-1 \|_{C^{0}(\S_t)} \| \dek{} \|_{H^{k_1}(\S_t)} \\
    \leq & C \ar t^{-1} \big( \mbH_{(\g,k)} + \mbH_{(\phi)} \big)
    + Ct^{-1+\s}(\mbD + t^{\s}). 
\end{align*}
As an example of terms of the form $N\cdot\ep\cdot\ep$ and $N \cdot \g \cdot \g$,
\begin{align*}
t^{A+1} \|N e_I(\phi) e_J(\phi) \|_{H^{k_1}(\S_t)} 
    &\leq C t^{-1+2\s} (\mbD + t^{\s})^2
\end{align*}
as $l_2=1$, $l_7=2$, $m_1 = 2$, $m_{\s} = 4$, $m_{D} =2$ and $\max_{i} S(j_i) = A+2\s$.
Next,
\begin{align*}
t^{A+1} \|\g_{KIJ} e_K(N) \|_{H^{k_1}(\S_t)} 
    & \leq C t^{-1+2\s} (\mbD + t^{\s})
\end{align*}
as $l_3 = 1$, $l_7 = 1$, $m_1 = 2$, $m_{\s} = 6$, $m_D = 1$ and $\max_i S(j_i) = A+4\s$. Finally, 
\begin{align*}
t^{A+1} &\|N \cdot V \circ \phi \|_{H^{k_1}(\S_t)} \leq C t^{-1+5\s},
\end{align*}
as $l_2 = 1$, $l_{11}=1$, $m_1 = 2$, $m_{\s} = 5$, $m_{D} = 0$ and $\max_i S(j_i) = A$.
\end{proof}

\subsubsection{Additional estimates for the derivatives of the metric}
In order to establish energy estimates, it is of use to have an estimate of the following (scalar) quantity:
\begin{equation}\label{eq:Zdefinition}
\begin{split}
\Z(t)
:= & \langle \dega{JKK}, N e_I (\dek{IJ})\rangle_{H^{k_1}(\S_t)} 
- \langle \dega{IJK}, N e_{I} (\chk_{JK}) \rangle_{H^{k_1}(\S_t)} \\
& - \langle e_J (N), e_I (\dek{IJ}) \rangle_{H^{k_1}(\S_t)}
- \langle \dek{IJ}, N (e_I (\chga_{JKK}) 
    + e_K(\chga_{KIJ})) \rangle_{H^{k_1}(\S_t)}.
\end{split}
\end{equation}
As a first step, it is of use to estimate the first and third term by appealing to (\ref{eq:MC}). 

\begin{lemma} \label{lemma:EstimateMomentumConstraints}
  Let $\s_p$, $\s_V$, $\s$, $k_0$, $k_1$, $(\S, h_{\refer})$, $(E_{i})_{i=1}^{n}$ and $V$ be as in Theorem~\ref{thm: big bang formation}
  and let $\rho_0 >0$. Assume that there are $t_{0}\leq \tau_{H}$ (see Lemma~\ref{lemma:AuxiliaryLapseEstimate}), $t_\rob<t_0$ and
  $\ar \in (0, \tfrac{1}{6n}]$ such that Assumption~\ref{ass:Bootstrap} is satisfied for this choice of parameters.
  Then, for any $t \in [t_{\rob}, t_0]$,
  \begin{equation}\label{eq:Z estimate first step}
    \begin{split}
      & t^{2(A+1)}\big|\langle N e_I (k_{IJ}),
      \dega{JKK} \rangle_{H^{k_1}(\S_t)} - 
      \langle e_I (k_{IJ}), e_J(N) \rangle_{H^{k_1}(\S_t)}\big| \\
      \leq & \big(C \ar + 2n(1+\s) \big) t^{-1} \big(\mbH_{(\g,k)}^2 + \mbH_{(\phi)}^2 \big)
      + Ct^{-1 + \s} (\mbD + t^{\s})(\mbD + t^{3\s}).
    \end{split}
  \end{equation}  
\end{lemma}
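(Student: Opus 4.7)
The plan is to exploit the momentum constraint (\ref{eq:MC}),
\[
    e_I(k_{IJ}) = \g_{LII}\, k_{LJ} + \g_{IJL}\, k_{IL} + e_0(\phi)\, e_J(\phi),
\]
to eliminate the spatial derivative $e_I(k_{IJ})$ in both inner products on the left-hand side of (\ref{eq:Z estimate first step}). This substitution is essential because a naive $H^{k_1}$ estimate of $e_I(k_{IJ})$ would require one derivative more of $k$ than is available from the bootstrap norm $\mbH_{(\g,k)}$; the momentum constraint trades this lost derivative for an algebraic combination of the dynamical variables already under control.

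\textbf{Decomposition into scaffold and deviation.} After substitution, I will decompose each dynamical variable as scaffold plus deviation ($\g = \chga + \dega{}$, $k = \chk + \dek{}$, $e_0 \phi = \chdtp + \dep{0}$, $e_J\phi = \chep_J + \dep{J}$) and expand. The resulting sum falls into two qualitative categories: (a) pairings in which the scaffold factors provide the critical $1/t$ weight carried by $\chk_{IJ} = (\bq_{\uI}/t)\de_{IJ}$ or by $\chdtp = (t_0 \bp_1)/t$, and which therefore behave like $t^{-1}\mbH_{(\g,k)}^2$ or $t^{-1}\mbH_{(\phi)}^2$; and (b) all remaining pairings, in which at least one factor contributes a positive power $t^{\s}$ of decay.

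\textbf{Isolating the leading contribution.} For category (a), I will apply Lemma~\ref{lemma:noGNS} with $\eta = \s$ to the products $\bq_L \cdot \dega{LII}$ and $\bP_1 \cdot \dep{J}$, paired against $\dega{JKK}$ via the $H^{k_1}$ inner product. The error terms of Lemma~\ref{lemma:noGNS} are absorbed into the $t^{-1+\s}(\mbD+t^\s)(\mbD+t^{3\s})$ remainder using Assumption~\ref{ass:Bootstrap} and the bounds on $\|\bq_I\|_{H^{k_1}}$, $\|\bP_1\|_{H^{k_1}}$ from (\ref{eq: gamma bar p bar assumption})--(\ref{eq: scalar field bar assumption}). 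The principal part is then controlled using the sharp pointwise bound (\ref{eq:BoundEigenvalues}), i.e.
\[
    \bigl\|\textstyle{\sum}_I \bq_I^2 + \bP_1^2\bigr\|_{C^{0}(\S)}^{1/2}
        \leq 1 + \s/4,
\]
combined with Cauchy-Schwarz and the elementary inequality $2ab \leq a^2+b^2$. Summing over the free indices $I, J, K, L$ yields the factor $2n$, while the $(1+\s)$ comes from the sharp constant in (\ref{eq:BoundEigenvalues}) together with the $\eta=\s$ loss from Lemma~\ref{lemma:noGNS}.

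\textbf{Controlling the remainder and the $e_J(N)$ inner product.} All other pairings—those in category (b), as well as all contributions to the second inner product $\langle e_I(k_{IJ}), e_J(N)\rangle_{H^{k_1}}$, for which the factor $e_J(N)$ carries the extra decay $t^{1-4\s}$ coming from $\mbL_{(N)}$ and $\mbH_{(N)}$ via Lemmata~\ref{lemma:LowerOrderLapseEstimates} and \ref{lemma:HigherOrderLapseEstimate}—are routine: the scheme of Subsection~\ref{sec:scheme}, combined with the Moser estimates of Lemma~\ref{lemma: products in L2}, shows that each such term is bounded by $Ct^{-1+\s}(\mbD + t^\s)(\mbD + t^{3\s})$. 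Adding in the contribution $C\ar t^{-1}(\mbH_{(\g,k)}^2 + \mbH_{(\phi)}^2)$ which arises when one of the deviation factors is bounded in $C^0$ by $\ar$ via the bootstrap inequality, and combining with the leading term from Step 3, yields (\ref{eq:Z estimate first step}).

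\textbf{Main obstacle.} The delicate point is the extraction of the sharp constant $2n(1+\s)$ rather than a larger one; this forces a careful use of Lemma~\ref{lemma:noGNS} in which the $C^0$-smallness of $\bq_I$ and $\bP_1$ is exploited via the approximate Hamiltonian constraint (\ref{eq:BoundEigenvalues}), and only the strictly necessary Sobolev-norm control of these scaffold quantities is allowed to enter the leading coefficient. This is the analogue in our setting of the corresponding careful bookkeeping in \cite{GIJ}, complicated here by the spatial inhomogeneity of $\bq_I$ and $\bP_1$, which prevents a direct pointwise factorization and necessitates the Gagliardo-Nirenberg-type inequality of Lemma~\ref{lemma:noGNS}.
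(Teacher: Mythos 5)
Your overall strategy -- use the momentum constraint to eliminate the extra derivative on $k$, decompose into scaffold plus deviation, isolate the terms with a scaffold $\chk$ or $\chdtp$ factor carrying the $1/t$ weight, and control those via Lemma~\ref{lemma:noGNS} and (\ref{eq:BoundEigenvalues}) -- matches the paper's opening moves. But your Step 4 contains a genuine gap that would sink the argument.

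You assert that all contributions to $\langle e_I(k_{IJ}), e_J(N)\rangle_{H^{k_1}}$ are routine because $e_J(N)$ carries extra decay $t^{1-4\s}$ coming from $\mbL_{(N)}$ and $\mbH_{(N)}$. That is true at low order, but it is false at the $H^{k_1}$ level, which is where the dangerous terms live. Lemma~\ref{lemma:HigherOrderLapseEstimate} gives
\begin{equation*}
\mbH_{(N)}
\leq \big(C\ar + 2 + \s\big)\, t^{A+1}\big(\|\dek{}\|_{H^{k_1}(\S_t)}^2 + \|\dep{0}\|_{H^{k_1}(\S_t)}^2\big)^{1/2}
+ Ct^{2\s}(\mbD + t^\s),
\end{equation*}
so $t^{A+1}\|\ear N\|_{H^{k_1}}$ is \emph{comparable} in size to $\mbH_{(\g,k)}$ and $\mbH_{(\phi)}$, not smaller by a positive power of $t$. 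Consequently the pairing $\langle \M_J, e_J(N)\rangle_{H^{k_1}}$ (where $\M_J$ collects the terms $\chk_{IJ}\dega{IMM} + \chk_{IM}\dega{IJM} + \chdtp\,\dep{J}$ from the momentum constraint) contributes at order $t^{-1}(\mbH_{(\g,k)}^2 + \mbH_{(\phi)}^2)$ with an $O(1)$ constant, not at order $t^{-1+\s}$. It therefore cannot be dumped into the remainder, and if you estimate the two inner products separately you will not land on $2n(1+\s)$.

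This is precisely why the paper's proof introduces $X_J := \dega{JKK} - e_J(N)$ and rewrites the left-hand side as $\langle(N-1)\M_J, \dega{JKK}\rangle + \langle\M_J, X_J\rangle$ plus remainder. The quantity $|X| := (\sum_J\|X_J\|_{H^{k_1}}^2)^{1/2}$ is then bounded using the sharp constant $(2+\s)$ of Lemma~\ref{lemma:HigherOrderLapseEstimate} together with the $\sqrt{(n-1)/2}$ coming from $\dega{JKK}$, and the $\M_J$ factor is controlled via both parts of Lemma~\ref{lemma:noGNS} (one part uses (\ref{eq:noGNSInnerProduct}) with the $C^0$ bound on the individual $\bq_I$, the other uses (\ref{eq:noGNSSum}) with (\ref{eq:BoundEigenvalues})). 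The final constant then emerges from the elementary algebraic inequality $(1+\sqrt{n-1})(\sqrt{n-1}\,a^2 + (2+\s)\,ab) \leq 2n(1+\s)(a^2+b^2)$. This bookkeeping -- in particular the decision not to split $\dega{JKK}$ and $e_J(N)$ -- is the content of the lemma; without it the stated constant is not reachable.
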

\begin{proof}
  Recall (\ref{eq:MC}), 
  which we write as follows:
  \begin{equation}\label{eq:alternate MC}
    \begin{split}
      e_I k_{IJ}
      = & \dek{IJ}\cdot\dega{IMM}+ \dek{IM}\cdot\dega{IJM} + \dep{0}\cdot\dep{J}+ \dek{IJ}\cdot \chga_{IMM}\\
      &  + \dek{IM}\cdot\chga_{IJM} + \dep{0}\cdot\che_J(\chphi)+ \chk_{IJ} \cdot \dega{IMM} + \chk_{IM}\cdot \dega{IJM} \\
      &  + \chdtp\cdot\dep{J} + \chk_{IJ} \chga_{IMM} + \chk_{IM} \chga_{IJM} + \chdtp\cdot \che_J(\chphi).
    \end{split}
  \end{equation}  
  It is convenient to introduce the notation
  \begin{align*}
    \M_J := & \chk_{IJ} \cdot \dega{IMM} + \chk_{IM}\cdot \dega{IJM} + \chdtp\cdot\dep{J},\\
    X_J := & \dega{JKK} - e_J(N).
  \end{align*}
  In particular, $\M_J$ denotes terms in (\ref{eq:alternate MC}) that have to be treated differently from the rest.
  To bound $e_I(k_{IJ}) - \M_J$, note that Lemma~\ref{lemma: products in L2} and Assumption~\ref{ass:Bootstrap} yield
  \begin{equation*}
    \begin{split}
      & t^{A+1} \|\dek{IJ}\cdot\dega{IMM}\|_{H^{k_1}(\S_t)}+t^{A+1}\|\dek{IM}\cdot\dega{IJM} \|_{H^{k_1}(\S_t)}  \\
      \leq & C t^{A+1} \big( \| \dek{} \|_{H^{k_1}(\S_t)} \| \dega{} \|_{C^{0}(\S_t)}
      + \| \dek{} \|_{C^{0}(\S_t)} \| \dega{} \|_{H^{k_1}(\S_t)} \big)\leq C \ar t^{-1} \mbH_{(\g,k)}.
    \end{split}
  \end{equation*}
  Similarly,
  \begin{equation*}
    t^{A+1} \| \dep{0}\cdot\dep{J}\|_{H^{k_1}(\S_t)} \leq C \ar t^{-1} \mbH_{(\phi)}.
  \end{equation*}
  As an example of how to bound terms of the form $\dek{}\cdot\chga$ and $\dep{0}\cdot\che_J(\chphi)$,
  \begin{align*}
    t^{A+1} \| \dek{IJ}\cdot\chga_{IMM} \|_{H^{k_1}(\S_t)}\leq Ct^{-1+3\s}(\mbD + t^{\s}),
  \end{align*}
  as $l_6 = 1$, $l_8 = 1$, $m_{1} = 2$, $m_{\s} =3$, $m_D = 1$ and $\max_{i} S(j_i) = A$. Next, 
  \begin{equation*}
    t^{A+1} \| \chk_{IJ} \chga_{IMM}+\chk_{IM} \chga_{IJM} + \chdtp\cdot \che_J(\chphi)\|_{H^{k_1}(\S_t)}  \leq C t^{A-1+3\s},
  \end{equation*}
  as $l_{6}=1$, $l_9=1$, $m_1 = 2$, $m_{\s} = 3$ and $\max_{i} S(j_i) = 0$. Thus
  \begin{equation*}
    t^{A+1} \| e_I (k_{IJ}) - \M_J \|_{H^{k_1}(\S_t)} 
    \leq C \ar t^{-1} \big( \mbH_{(\g,k)} + \mbH_{(\phi)} \big) + C t^{-1+\s} (\mbD + t^{\s}).
  \end{equation*}
  Combining this estimate with Lemma~\ref{lemma:HigherOrderLapseEstimate} yields
  \begin{equation}\label{eq:eI kIJ minus MJ}
    \begin{split}
      & t^{2(A+1)} \big| \langle e_I (k_{IJ}) - \M_J , e_J(N) \rangle_{H^{k_1}(\S_t} \big| \\
      \leq & C \ar t^{-1} \big( \mbH_{(\g,k)}^2 + \mbH_{(\phi)}^2 \big)
      + C t^{-1+\s} (\mbD + t^{\s})(\mbD + t^{3\s}).
    \end{split}
  \end{equation}  
  On the other hand, by Lemma~\ref{lemma: products in L2},
  \begin{equation*}
    \begin{split}
      & t^{A+1} \| N \big(e_I (k_{IJ}) - \M_J\big) \|_{H^{k_1}(\S_t)} \\
      \leq & t^{A+1} (1+ C \|N-1\|_{C^{0}(\S_t)})\| e_I (k_{IJ}) - \M_J \|_{H^{k_1}(\S_t)} \\
      & + C t^{A+1}\| N-1\|_{H^{k_1}(\S_t)}\| e_I (k_{IJ}) - \M_J \|_{C^{0}(\S_t)} \\
      \leq & C \ar t^{-1} \big( \mbH_{(\g,k)} + \mbH_{(\phi)} \big)+ C t^{-1+\s} (\mbD + t^{\s}).    
    \end{split}
  \end{equation*}  
  Here we make use of the fact that, following the scheme of Subsection~\ref{sec:scheme},
  \begin{equation}
    t \| e_I (k_{IJ}) \|_{C^{0}(\S_t)} \leq C t^{-1+3\s}(\mbD + t^{\s})
  \end{equation}
  as $l_5 = 1$, $l_{10} = 1$, $m_1 = 2$, $m_{\s} = 3$ and $m_D = 1$. Similarly, 
  \begin{equation} \label{eq:BoundednessMJ}
    t \| \M_J \|_{C^{0}(\S_t)} \leq C t^{-1+2\s}(\mbD + t^{\s})
  \end{equation}
  as $l_7 =1$, $l_9 = 1$, $m_1 = 2$, $m_{\s} = 2$ and $m_D = 1$. Thus 
  \begin{equation}\label{eq:N eI kIJ minus MJ}
    \begin{split}
      & t^{2(A+1)} | \langle N( e_I (k_{IJ}) - \M_J ),
      \dega{JKK}\rangle_{H^{k_1}(\S_t} | \\
      \leq & C \ar t^{-1} \big( \mbH_{(\g,k)}^2 + \mbH_{(\phi)}^2 \big)
      + C t^{-1+\s} (\mbD + t^{\s})(\mbD + t^{3\s}).
    \end{split}
  \end{equation}
  Our next goal is to prove that
  \begin{equation}\label{eq:est to be proven MC}
    \begin{split}
      & t^{2(A+1)} | \langle \M_J, X_J \rangle_{H^{k_1}(\S_t)} | \\
      \leq & (C \ar + 2n(1+\s) ) t^{-1}\big( \mbH_{(\g,k)}^2 + \mbH_{(\phi)}^2 \big)
      + C t^{-1+\s} \mbD (\mbD + t^{\s}).
    \end{split}    
  \end{equation}
  In order to prove that the statement follows from this estimate, note that
  \begin{equation}\label{eq:MJ XJ decomp}
    \begin{split}
      & \langle N \M_J,\dega{JKK} \rangle_{H^{k_1}(\S_t)} - \langle \M_J, e_J(N) \rangle_{H^{k_1}(\S_t)}\\
      = & \langle (N-1) \M_J,\dega{JKK} \rangle_{H^{k_1}(\S_t)} + \langle \M_J, X_J \rangle_{H^{k_1}(\S_t)}.
    \end{split}
  \end{equation}
  On the other hand, Lemma~\ref{lemma: products in L2} yields
  \begin{equation}\label{eq:N minus one MJest}
    \begin{split}
      & t^{A+1}\|(N-1)\M_J\|_{H^{k_{1}}(\S_t)}\\
      \leq & Ct^{A+1}(\|N-1\|_{H^{k_{1}}(\S_t)}\|\M_J\|_{C^{0}(\S_t)}+\|N-1\|_{C^{0}(\S_t)}\|\M_J\|_{H^{k_{1}}(\S_t)}).
    \end{split}
  \end{equation}
  Next, 
  \begin{equation}\label{eq:MJ crude Hkopo est}
    t^{A+1} \| \M_J \|_{H^{k_1}(\S_t)} \leq C t^{-1}(\mbD + t^{\s})    
  \end{equation}
  as $l_7 =1$, $l_9 = 1$, $m_1 = 2$, $m_{\s} = 2$, $m_D = 1$ and $\max_i S(j_i) = A + 2\s$.
  Combining \eqref{eq:BoundednessMJ}, 
  (\ref{eq:N minus one MJest}), (\ref{eq:MJ crude Hkopo est}) with the fact that $\|N-1\|_{C^{0}(\S_t)} \leq t^{\s} r$
  yields
  \[
  t^{2(A+1)}|\langle (N-1) \M_J,\dega{JKK} \rangle_{H^{k_1}(\S_t)} |\leq Ct^{-1+\s}\mbD(\mbD+t^{\s}).
  \]
  Combining this estimate with (\ref{eq:eI kIJ minus MJ}), (\ref{eq:N eI kIJ minus MJ}),  (\ref{eq:est to be proven MC})
  and (\ref{eq:MJ XJ decomp}) yields the conclusion of the lemma.

  To prove (\ref{eq:est to be proven MC}), note that $|X|:=\big(\textstyle{\sum}_{J} \| X_J \|_{H^{k_1}(\S_t)}^2\big)^{1/2}$
  satisfies
  \begin{equation}\label{eq:EstimateXJ}
    \begin{split}
      t^{A+1} |X|
      \leq & t^{A+1} \|\ear(N)\|_{H^{k_1}(\S_t)} + t^{A+1} \big( \textstyle{\sum}_{J}
      \| \dega{JKK} \|_{H^{k_1}(\S_t)}^2 \big)^{1/2}\\
      \leq & \big(C \ar + 2 + \s \big)t^{A+1}\big( \| \dek{} \|_{H^{k_1}(\S_t)}^2
      + \| \dep{0} \|_{H^{k_1}(\S_t)}^2 \big)^{1/2} \\
      & + \big(\tfrac{n-1}{2}\big)^{1/2} t^{A+1}\| \dega{} \|_{H^{k_1}(\S_t)}
      + C t^{2\s} (\mbD + t^{\s}),
    \end{split}
  \end{equation}
  where we appeal to Lemma~\ref{lemma:HigherOrderLapseEstimate} and the fact that
  \begin{equation*}
    \textstyle{\sum}_{J} \big\| \textstyle{\sum}_{K} \dega{JKK} \big\|_{H^{k_1}(\S_t)}^2
    \leq \tfrac{n-1}{2} \| \dega{} \|_{H^{k_1}(\S_t)}^2.
  \end{equation*}
  Next, (\ref{eq: gamma bar p bar assumption}), (\ref{eq: scalar field bar assumption}), (\ref{eq:BoundEigenvalues}) and
  (\ref{eq:noGNSSum}), with $\eta = \s$, yield
  \begin{equation*}
    \begin{split}
      & \big( \textstyle{\sum}_{J}
      \| \chk_{IM} \cdot\dega{IJM}
      + \chdtp\cdot \dep{J} \|_{H^{k_1}(\S_t)}^2 \big)^{1/2}\\
      \leq & t^{-1}\big(\s +
      \big\| \textstyle{\sum}_{I} \bq_I^2
      + \bP_1^2 \big\|_{C^{0}(\S_{t})}^{1/2} \big)
      \big( \tfrac{1}{2} \|\dega{} \|_{H^{k_1}(\S_t)}^2 
      + \|\vdep \|_{H^{k_1}(\S_t)}^2 \big)^{1/2} \\
      & + Ct^{-1}\big(\| \dega{} \|_{C^{k_{0}}(\S_t)}
      + \| \vdep\|_{C^{k_{0}}(\S_t)} \big) \\
      \leq &   t^{-1}\big(1 + 2 \s \big) \big( \tfrac{1}{2} 
      \|\dega{} \|_{H^{k_1}(\S_t)}^2
      + \|\vdep \|_{H^{k_1}(\S_t)}^2 \big)^{1/2}  
      + C t^{-2+2\s}\mbD(t). 
    \end{split}
  \end{equation*}
  Similarly, appealing to (\ref{eq: gamma bar p bar assumption}), (\ref{eq:bqboundsb}) and
  (\ref{eq:noGNSInnerProduct}), with $\eta = \s$, yields
  \begin{equation*}
    \begin{split}
      & \big| \langle \chk_{IJ}\cdot\dega{IMM},X_J \rangle_{H^{k_1}(\S_t)} \big| \\
      \leq & t^{-1}\big[ \big(\s +  \textstyle{\max}_J \|\bq_J\|_{C^{0}(\S_{t})} \big)
      \big(\tfrac{n-1}{2}\big)^{1/2}\|\dega{}\|_{H^{k_1}(\S_t)}
      + C\|\dega{}\|_{C^{k_{0}}(\S_t)} \big]\cdot |X| \\
      \leq & t^{-1} \big[ (1 - 4\s )\sqrt{n-1}
      \big( \tfrac{1}{2} \|\dega{} \|_{H^{k_1}(\S_t)}^2+ \|\vdep \|_{H^{k_1}(\S_t)}^2 \big)^{1/2}
      + C t^{-1+2\s}\mbD(t) \big]\cdot |X|.
    \end{split}
  \end{equation*}
  Combining the last two estimates yields
  \begin{equation*}
    \begin{split}
      & t^{2(A+1)}  \big| \langle \M_J, X_J \rangle_{H^{k_1}(\S_t)} \big| \\
      \leq & t^{2A+1} \big[ \big(1+\sqrt{n-1} \big)
      \big( \tfrac{1}{2} \|\dega{} \|_{H^{k_1}(\S_t)}^2
      + \|\vdep \|_{H^{k_1}(\S_t)}^2 \big)^{1/2} 
      + C t^{-1+2\s}\mbD(t) \big]\cdot |X|.
    \end{split}
  \end{equation*}
  Combining this estimate with \eqref{eq:EstimateXJ} yields (\ref{eq:est to be proven MC}); note that
  \begin{equation*}
    \begin{split}
      & (1 + \sqrt{n-1})(\sqrt{n-1} a^2 + (2+\s) ab)\\
      \leq & (1+\sqrt{n-1})(1 + \s + \sqrt{n-1}) (a^2 + b^2)\\
      \leq & (1 + \sqrt{n-1})^2(1+\s)(a^2 + b^2)\leq 2n(1+\s) (a^2 + b^2).
    \end{split}
  \end{equation*}  
  The lemma follows. 
\end{proof}

\begin{lemma} \label{lemma:EstimateForZ}
  Let $\s_p$, $\s_V$, $\s$, $k_0$, $k_1$, $(\S, h_{\refer})$, $(E_{i})_{i=1}^{n}$ and $V$ be as in Theorem~\ref{thm: big bang formation}
  and let $\rho_0 >0$. Assume that there are $t_{0}\leq \tau_{H}$ (see Lemma~\ref{lemma:AuxiliaryLapseEstimate}), $t_\rob<t_0$ and
  $\ar \in (0, \tfrac{1}{6n}]$ such that Assumption~\ref{ass:Bootstrap} is satisfied for this choice of parameters.
  Then, for any $t \in [t_{\rob}, t_0]$,
  \begin{align}
    \begin{split}
      t^{2(A+1)} | \Z(t) |
      \leq (C \ar + 2n(1+\s)) t^{-1} \big(\mbH_{(\g,k)}^2 + \mbH_{(\phi)}^2 \big)
      + C t^{-1 + \s} (\mbD + t^{\s})(\mbD + t^{3\s}).
    \end{split}
  \end{align}
\end{lemma}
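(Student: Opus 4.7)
The plan is to exploit the momentum constraint through the decomposition $e_I(\dek{IJ}) = e_I(k_{IJ}) - e_I(\chk_{IJ})$, substituted into the first and third inner products of \eqref{eq:Zdefinition}. This yields $\Z = \Z_1 - \Z_2$, where
\begin{equation*}
  \Z_1 := \langle \dega{JKK}, Ne_I(k_{IJ})\rangle_{H^{k_1}(\S_t)} - \langle e_J(N), e_I(k_{IJ})\rangle_{H^{k_1}(\S_t)}
\end{equation*}
is precisely the combination already estimated in Lemma~\ref{lemma:EstimateMomentumConstraints}, and
\begin{align*}
  \Z_2 := &~\langle \dega{JKK}, Ne_I(\chk_{IJ})\rangle_{H^{k_1}(\S_t)}
          - \langle e_J(N), e_I(\chk_{IJ})\rangle_{H^{k_1}(\S_t)} \\
          &+ \langle \dega{IJK}, Ne_I(\chk_{JK})\rangle_{H^{k_1}(\S_t)}
          + \langle \dek{IJ}, N(e_I(\chga_{JKK}) + e_K(\chga_{KIJ}))\rangle_{H^{k_1}(\S_t)}
\end{align*}
is a remainder involving only scaffold factors $\chk$, $\chga$ paired against a single deviation factor.

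Applying Lemma~\ref{lemma:EstimateMomentumConstraints} to $\Z_1$ directly gives the main contribution with the correct coefficient $C \ar + 2n(1+\s)$ in front of $t^{-1}(\mbH_{(\g,k)}^2 + \mbH_{(\phi)}^2)$, together with the error term $C t^{-1+\s}(\mbD+t^\s)(\mbD+t^{3\s})$. It therefore remains to show $t^{2(A+1)}|\Z_2| \leq C t^{-1+\s}(\mbD+t^\s)(\mbD+t^{3\s})$.

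For $\Z_2$ I would apply Cauchy--Schwarz in $H^{k_1}$ to each of the four pairings, and then bound the scaffold factor using Lemma~\ref{lemma: products in L2} together with the scheme of Subsection~\ref{sec:scheme} and the a priori estimates from Lemma~\ref{lemma:estimatingthebackground}. For the typical term $\langle \dega{JKK}, N e_I(\chk_{IJ})\rangle_{H^{k_1}}$, the scheme applied to $N e_I(\chk_{IJ})$ has counters $l_2 = l_5 = l_9 = 1$ with $\max_i S(j_i) = A+\s$, producing $t^{A+1}\|N e_I(\chk_{IJ})\|_{H^{k_1}} \leq C t^{-1+2\s}(\mbD+t^\s)$; combined with $t^{A+1}\|\dega{}\|_{H^{k_1}} \leq \sqrt{2}\,\mbH_{(\g,k)} \leq C\mbD$ this gives $t^{2(A+1)}|\langle \dega{JKK}, Ne_I(\chk_{IJ})\rangle| \leq C t^{-1+2\s}\mbD(\mbD+t^\s)$, which is absorbed into the target bound since $t^{2\s}\mbD \leq t^\s(\mbD+t^{3\s})$ for $t \leq 1$. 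The term $\langle \dega{IJK}, N e_I(\chk_{JK})\rangle$ admits an identical treatment. For $\langle e_J(N), e_I(\chk_{IJ})\rangle$, one instead uses $t^{A+1}\|\ear N\|_{H^{k_1}} \leq \mbH_{(N)} \leq C(\mbD+t^{3\s})$ via Lemma~\ref{lemma:HigherOrderLapseEstimate}, paired against the scheme bound for $e_I(\chk_{IJ})$. Finally $\langle \dek{IJ}, N(e_I\chga_{JKK} + e_K\chga_{KIJ})\rangle$ uses $t^{A+1}\|\dek{}\|_{H^{k_1}} \leq \mbH_{(\g,k)} \leq C\mbD$ together with the scheme applied to $Ne_I(\chga)$ (with counters $l_2 = l_5 = l_6 = 1$, which gives $t^{A+1}\|Ne_I(\chga)\|_{H^{k_1}} \leq C t^{-1+6\s}(\mbD+t^\s)$).

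The main obstacle is the bookkeeping: verifying in each case that the combined decay exponent fits inside the target weight $t^{-1+\s}$, and that the cross-terms $t^{2\s}\mbD$ and $t^{4\s}$ produced by expanding $(\mbD+t^\s)^2$ are dominated by $t^\s(\mbD+t^{3\s})$ and $t^\s\cdot t^{3\s}$ respectively (both of which hold for $t \leq 1$). The structural reason the estimate closes is that $\Z_2$ contains exactly one deviation factor per term, controlled by $\mbD$ (or $\mbH_{(N)}$), while the accompanying scaffold factor picks up substantial $t$-decay through Lemma~\ref{lemma:estimatingthebackground}; this leaves enough slack in the $t$-exponent to absorb all cross terms.
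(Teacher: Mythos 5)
Your proposal follows essentially the same route as the paper: you rewrite $e_I(\dek{IJ}) = e_I(k_{IJ}) - e_I(\chk_{IJ})$ in the first and third inner products so that Lemma~\ref{lemma:EstimateMomentumConstraints} handles the $e_I(k_{IJ})$-combination, and the remainder (a single deviation factor paired against a scaffold factor $e(\chk)$, $Ne(\chk)$ or $Ne(\chga)$) is estimated by Cauchy--Schwarz, the scheme, and Lemma~\ref{lemma:HigherOrderLapseEstimate} — which is precisely what the paper does. One small bookkeeping slip: for $Ne_I(\chga)$ with $l_2=l_5=l_6=1$ the scheme gives $\max_i S(j_i)=A+\s$ and hence $t^{A+1}\|Ne_I(\chga)\|_{H^{k_1}}\leq Ct^{-1+5\s}(\mbD+t^\s)$, not $t^{-1+6\s}$; this does not affect the conclusion since $5\s$ of decay already suffices.
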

\begin{proof}
  Given Lemma~\ref{lemma:EstimateMomentumConstraints}, the terms in (\ref{eq:Zdefinition}) that remain to
  be bounded, can be estimated by appealing to Cauchy-Schwarz and $H^{k_{1}}$-estimates for terms of the form
  $Ne(\chk)$, $e(\chk)$ and $Ne(\chga)$. For all these terms, we can take up to $k_1+1$ derivatives in the a-priori
  estimates, due to the assumed regularity on the initial data.

  For terms of the form $Ne(\chk)$ and $e(\chk)$, we have, e.g.
  \begin{align*}
    t^{A+1} \| N e_I (\chk_{JK})\|_{H^{k_1}(\S_t)}
    &\leq C t^{-1+2\s} (\mbD + t^{\s}),
  \end{align*}
  as  $l_{2} = 1$, $l_{5}=1$, $l_9 = 1$, $m_{1}=2$, $m_{\s} = 3$, $m_{D} = 1$ and $\max_{i} S(j_i) =A+\s$. Next, 
  \begin{align*}
    t^{A+1} \| N e_I(\chga_{JKK}))\|_{H^{k_1}(\S_t)}
    \leq C t^{-1+5\s} (\mbD + t^{\s}),
  \end{align*}
  as $l_{2}=1$, $l_{5}=1$, $l_{6}=1$, $m_{1} = 2$, $m_{\s} = 6$, $m_{D} = 1$, and $\max_{i} S(j_i) = A+\s$. Combining these
  estimates with (\ref{eq:Z estimate first step}) and Lemma~\ref{lemma:HigherOrderLapseEstimate}, the lemma follows. 
\end{proof}

\subsection{The scalar field}
In this section, we write down estimates arising from the evolution equations for the scalar field,
\eqref{eq:spatial scalar field derivative} and \eqref{eq:scalarfield}, which we here write as 
\begin{subequations}\label{seq:SFLow}
  \begin{align}     
      \big(-\tfrac{\bq_{\uI}}{t} - \d_t \big) (\dep{\uI})
      = & - N e_I e_0 (\phi) - e_I (N) e_0 (\phi) 
      - \dee{I}{} (\d_t \chphi)
      + e_I (\chdtp) \label{eq:EveSFeILow}\\
      & + (N - 1) k_{IM} e_M(\phi)  
      + \dek{IM}\cdot \che_M (\chphi)+ \dek{IM}\cdot\dep{M},\nonumber \\
      \big(-\tfrac{1}{t}-\d_t \big) ( \dep{0})
      = & t^{-1}(N-1) e_0(\phi) - N e_I e_I (\phi) 
      - e_I (N) e_I(\phi) \label{eq:EveSFe0Low}\\
      & + N \gamma_{JII} e_{J}(\phi) + N V'\circ\phi.\nonumber    
  \end{align}
\end{subequations}

\subsubsection{Lower-order estimates for the spatial gradient of the scalar field}
Next, we establish lower-order estimates for the spatial gradient of the scalar field.

\begin{lemma} \label{lemma:EstSFeILow}
  Let $\s_p$, $\s_V$, $\s$, $k_0$, $k_1$, $(\S, h_{\refer})$, $(E_{i})_{i=1}^{n}$ and $V$ be as in Theorem~\ref{thm: big bang formation}
  and let $\rho_0 >0$. Assume that there are $t_{0}\leq 1$, $t_\rob<t_0$ and $\ar \in (0, \tfrac{1}{6n}]$ such that
  Assumption~\ref{ass:Bootstrap} is satisfied for this choice of parameters. Then, for any $I$, $|\bfI| \leq k_0$, $t \in [t_{\rob}, t_0]$
  and $x \in \S$,
\begin{equation*}
\begin{split}
  & t^{1-2\s}
  \big| E_{\bfI} \big(\big(-\tfrac{ \bq_{\uI}}{t} - \d_t \big)
  (\dep{I}) \big) (t,x) \big|\\
  \leq & C\ar t^{-1 +(1-2\s)} \textstyle{\sum}_{M}  \textstyle{\sum}_{|\bfJ| \leq |\bfI|}
  |E_{\bfJ} (\dep{M}) (t,x) |+ C t^{-1+\s} ( \mbD(t) + t^{\s} ).
\end{split}
\end{equation*}
\end{lemma}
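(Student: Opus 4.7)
The strategy is the one already used for the structure coefficients in Lemma~\ref{lemma:EstSCLow} and for the frame components in Lemma~\ref{lemma:EstFCLow}: estimate each of the seven terms on the right-hand side of \eqref{eq:EveSFeILow} separately. Six of these admit $C^{k_0}$-bounds of the target size $Ct^{-1+\s}(\mbD(t)+t^{\s})$ via the lower-order scheme of Subsection~\ref{sec:scheme}, while the remaining term $\dek{IM}\cdot\dep{M}$ cannot be controlled that way (it is quadratic in deviation quantities with one factor that is only weakly controlled at the level of $k_0$ derivatives) and will instead be treated pointwise, producing the sum $\sum_M\sum_{|\bfJ|\le|\bfI|}|E_\bfJ(\dep{M})|$ on the right-hand side.

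First I would run the scheme on the six easy terms, checking that in each case the counters give $m_1=2$, $m_\s\ge 3$ and $m_\s+m_D\ge 4$, so that after multiplication by $t^{1-2\s}$ one gets $Ct^{-1+\s}(\mbD+t^\s)$ or better. Concretely:
\begin{itemize}
\item $Ne_Ie_0(\phi)$: counters $l_2=l_5=l_{10}=1$, giving $m_1=2$, $m_\s=3$, $m_D=1$; one extra derivative is absorbed because $e_0\phi$ and $e$ may be taken in $C^{k_0+1}$.
\item $e_I(N)\,e_0(\phi)$: counters $l_3=l_{10}=1$, giving $m_1=2$, $m_\s=4$, $m_D=0$, hence $Ct^{-1+2\s}$.
\item $\dee{I}{}(\d_t\chphi)$: counters $l_5=l_9=1$ (with a derivative of $\chdtp$, still allowed since $\chdtp$ is controlled in $C^{k_0+1}$), giving $m_1=2$, $m_\s=3$, $m_D=1$.
\item $e_I(\chdtp)$: counters $l_5=l_9=1$, identical outcome.
\item $(N-1)k_{IM}e_M(\phi)$: counters $l_1=l_7=l_{10}=1$, giving $m_1=2$, $m_\s=3$, $m_D=1$ (Remark~\ref{rmk:LapseDoesNotDeterioriteEstimate} showing the lapse factor is harmless).
\item $\dek{IM}\cdot\che_M(\chphi)$: counters $l_6=l_8=1$, giving $m_1=2$, $m_\s=3$, $m_D=1$.
\end{itemize}
Summing these contributions multiplied by $t^{1-2\s}$ and passing from the $C^{k_0}$-norm to a pointwise bound on $\S_t$ yields the $Ct^{-1+\s}(\mbD(t)+t^\s)$ part of the claimed estimate.

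For the last term $\dek{IM}\cdot\dep{M}$, the point is that both factors carry only first-order information, so a $C^{k_0}$-product estimate would cost two applications of the interpolation Lemma~\ref{lemma:extraderivatives} and produce the wrong dependence. Instead I would exploit the bootstrap inequality \eqref{eq:BootstrapInequality}: for any $|\bfJ|\le k_0+1$ one has $|E_\bfJ(\dek{IM})|\le t^{-1}\mbL_{(\g,k)}\le t^{-1}r$ pointwise. Applying the product rule via Lemma~\ref{lemma:EstimateCommutators} to $E_\bfI(\dek{IM}\cdot\dep{M})$ with $|\bfI|\le k_0$ then yields
\[
  \bigl|E_\bfI(\dek{IM}\cdot\dep{M})(t,x)\bigr|
    \le C\ar\, t^{-1}\textstyle{\sum}_M\sum_{|\bfJ|\le|\bfI|}\bigl|E_\bfJ(\dep{M})(t,x)\bigr|,
\]
which, after multiplication by $t^{1-2\s}$, is exactly the first term on the right-hand side of the claimed estimate. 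Combining the two contributions finishes the proof.

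The main (and only) subtlety is the last term: it must remain in pointwise form, because otherwise one loses the crucial feature that the constant in front of $\sum_M\sum_\bfJ|E_\bfJ(\dep M)|$ is linear in $r$ (hence small), a property that will be essential later when closing the lower-order ODE-type energy estimate for $\mbL_{(\phi)}$.
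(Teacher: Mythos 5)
Your proposal is correct and follows exactly the same route as the paper: estimate the six ``easy'' terms on the right-hand side of \eqref{eq:EveSFeILow} using the scheme (with the same counters and resulting powers of $t$), and treat $\dek{IM}\cdot\dep{M}$ pointwise via the bootstrap bound $|E_{\bfJ}(\dek{IM})|\leq t^{-1}\ar$ together with Lemma~\ref{lemma:EstimateCommutators}, thereby producing the $C\ar$-weighted sum needed to close the lower-order energy estimate. One small inaccuracy in your justification: the scheme applied to $\dek{IM}\cdot\dep{M}$ at $C^{k_0}$-level requires \emph{no} interpolation (there is no $e_I$ inside, and both $l_7$ and $l_8$ handle $k_0$ derivatives), so it would yield $Ct^{-1}(\mbD+t^{\s})^2$ — still too weak in $t$-decay and, as you correctly identify, lacking the linear-in-$\ar$ structure, but not for the reason you gave.
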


\begin{proof}
  We begin by bounding all the terms on the right-hand side of (\ref{eq:EveSFeILow}) but the last one
  using the scheme in Subsection~\ref{sec:scheme}. First,
  \begin{equation*}
    t^{1-2\s} \| N e_I e_0(\phi)\|_{C^{k_{0}}(\S_t)} 
    \leq C t^{-1+\s} (\mbD + t^{\s}),
  \end{equation*}
  as $l_2 =1$, $l_5=1$, $l_{10}=1$, $m_1 = 2$, $m_{\s} = 3$ and $m_D =1$. Second,
  \begin{equation*} 
    t^{1-2\s} \| e_I (N) e_0(\phi)\|_{C^{k_{0}}(\S_t)}    \leq C t^{-1+2\s}.
  \end{equation*}
  as $l_3 =1$, $l_{10} =1$, $m_1 = 2$, $m_{\s} = 4$ and $m_D =0$. Third,
\begin{equation} \label{eq:EstSpGrScFdFirstBound}
t^{1-2\s} \| \dee{I}{} (\d_t \chphi)\|_{C^{k_{0}}(\S_t)} +t^{1-2\s} \| e_{I}(\d_t \chphi)\|_{C^{k_{0}}(\S_t)} 
    \leq C t^{-1+\s} (\mbD + t^{\s}),
\end{equation}
as $l_5 = 1$, $l_9 =1$, $m_1 = 2$, $m_{\s} = 3$ and $m_D =1$. Next, 
\begin{equation} \label{eq:EstSpGrScFdSecondBound}
t^{1-2\s} \| (N-1) k_{IM} e_M(\phi)\|_{C^{k_{0}}(\S_t)}
    \leq C t^{-1+\s} (\mbD + t^{\s}),
\end{equation}
as $l_1 = 1$, $l_7 =1$, $l_{10} =1$, $m_1 = 2$, $m_{\s} = 3$ and $m_D =1$. Finally,
\begin{equation} \label{eq:EstSpGrScFdLastBound}
t^{1-2\s} \| \dek{IM}\cdot \che_M(\chphi)\|_{C^{k_{0}}(\S_t)}
\leq C t^{-1+\s} (\mbD + t^{\s}), 
\end{equation}
as $l_6 = 1$, $l_8 =1$, $m_1 = 2$, $m_{\s} = 3$ and $m_D =1$.

For the remaining term, $|E_{\bfJ}(\dek{IM})| \leq t^{-1} \mbL_{(\g,k)} \leq t^{-1} \ar$
for any $|\bfJ| \leq k_0$. Thus
\begin{align*}
t^{1-2\s}  |E_{\bfI}(\dek{IM}\cdot\dep{M}) (t,x) |
&\leq C\ar t^{-1+(1-2\s)} \textstyle{\sum}_{M}  \textstyle{\sum}_{|\bfJ| \leq |\bfI|}|E_{\bfJ} (\dep{M}) (t,x) |.
\end{align*}
This concludes the proof.
\end{proof}

\subsubsection{Higher-order estimates for the spatial gradient of the scalar field}
We proceed with the higher order estimates for the spatial gradient of the scalar field.
\begin{lemma}\label{lemma:EstimateSFeIHigh}
  Let $\s_p$, $\s_V$, $\s$, $k_0$, $k_1$, $(\S, h_{\refer})$, $(E_{i})_{i=1}^{n}$ and $V$ be as in Theorem~\ref{thm: big bang formation}
  and let $\rho_0 >0$. Assume that there are $t_{0}\leq \tau_{H}$ (see Lemma~\ref{lemma:AuxiliaryLapseEstimate}), $t_\rob<t_0$ and
  $\ar \in (0, \tfrac{1}{6n}]$ such that Assumption~\ref{ass:Bootstrap} is satisfied for this choice of parameters.
  Then, for any $I$ and $t \in [t_{\rob}, t_0]$,
\begin{equation} 
\begin{split} \label{eq:EstSFeIHigh}
    & t^{A+1}  \big\| \big(-\tfrac{\bq_{\uI}}{t} - \d_t \big) (\dep{I})
        + e_{I}(N) \chdtp + N e_I \dtp \big\|_{H^{k_1}(\S_t)} \\[2pt]
    \leq & C \ar  t^{-1} \big(\mbH_{(\phi)} (t) 
        + \mbH_{(\g,k)} (t) \big) 
        + C t^{-1+\s} \left( \mbD(t) + t^{\s} \right).
\end{split}
\end{equation}
\end{lemma}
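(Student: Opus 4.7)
The plan is to subtract the two indicated correction terms from equation \eqref{eq:EveSFeILow} and estimate the resulting right-hand side term by term, following the same strategy used in the proofs of Lemmata~\ref{lemma:EstimateSCHigh} and \ref{lemma:EstimateFFHigh}. Since $\dtp = e_0(\phi)$, the cancellation $N(e_I\dtp - e_I e_0(\phi)) = 0$ is automatic, and $\chdtp - e_0(\phi) = -\dep{0}$, so adding $e_I(N)\chdtp + Ne_I\dtp$ to the left-hand side of \eqref{eq:EveSFeILow} gives the right-hand side
\[
-e_I(N)\dep{0} - \dee{I}{}\cdot\d_t\chphi + e_I(\chdtp) + (N-1)k_{IM}e_M(\phi) + \dek{IM}\cdot\che_M(\chphi) + \dek{IM}\cdot\dep{M}.
\]

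First, I will handle the four ``scheme'' terms $\dee{I}{}\d_t\chphi$, $e_I(\chdtp)$, $(N-1)k_{IM}e_M(\phi)$ and $\dek{IM}\che_M(\chphi)$ using the higher-order scheme of Subsection~\ref{sec:scheme}. For example, $(N-1)k_{IM}e_M(\phi)$ has counters $l_1 = l_7 = l_{10} = 1$, producing $m_1 = 2$, $m_\s = 3$, $m_D = 1$ with $\max_i S(j_i) = A + 2\s$, and hence the bound $Ct^{-1+\s}(\mbD + t^\s)$. The three remaining terms similarly give $Ct^{-1+c\s}(\mbD+t^\s)$ with $c \in \{2,3\}$, all acceptable.

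The two remaining terms $e_I(N)\dep{0}$ and $\dek{IM}\dep{M}$ involve two deviation quantities and will be handled directly via the Moser inequality from Lemma~\ref{lemma: products in L2}, together with the bootstrap bound $\ar$ on the $C^0$-norms of $\en$, $\dep{0}$ and $\dek{}$. For $\dek{IM}\dep{M}$ this produces
\[
t^{A+1}\|\dek{IM}\cdot\dep{M}\|_{H^{k_1}(\S_t)} \leq C\ar t^{-1}\bigl(\mbH_{(\g,k)} + \mbH_{(\phi)}\bigr),
\]
which already has the desired form. For $e_I(N)\dep{0}$, the same Moser argument gives a bound of the form $C\ar t^{-1}\mbH_{(N)} + C\ar t^{-1+4\s}\mbH_{(\phi)}$; here I will invoke Lemma~\ref{lemma:HigherOrderLapseEstimate} to trade $\mbH_{(N)}$ for $\mbH_{(\g,k)} + \mbH_{(\phi)}$ modulo an acceptable remainder $Ct^{2\s}(\mbD+t^\s)$.

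The main obstacle is not really any individual term but keeping a consistent accounting of the powers of $t$ and making sure that the use of Lemma~\ref{lemma:HigherOrderLapseEstimate} — which itself produces terms involving $\|\dek{}\|_{H^{k_1}}$ and $\|\dep{0}\|_{H^{k_1}}$ — does not generate circularity. Since Lemma~\ref{lemma:HigherOrderLapseEstimate} already closes on its own, I may treat $\mbH_{(N)}$ as being controlled by the listed energies, and the final bound follows by summing the six contributions. No other subtlety is expected; the argument is a direct analogue of the proof of Lemma~\ref{lemma:EstimateFFHigh} with the scaffold scalar-field terms replacing the scaffold second-fundamental-form terms.
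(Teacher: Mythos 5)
Your proposal is correct and follows essentially the same route as the paper: after adding $e_I(N)\chdtp + Ne_I\dtp$ to both sides of \eqref{eq:EveSFeILow}, the paper also estimates $e_I(N)\dep{0}$ and $\dek{IM}\cdot\dep{M}$ via Lemma~\ref{lemma: products in L2} (invoking Lemma~\ref{lemma:HigherOrderLapseEstimate} for the former), and bounds the remaining four terms by multiplying the lower-order estimates \eqref{eq:EstSpGrScFdFirstBound}--\eqref{eq:EstSpGrScFdLastBound} by $t^{A+2\s}$ on the left and $t^{A+2\s-\max_i S(j_i)}$ on the right. Your observation that Lemma~\ref{lemma:HigherOrderLapseEstimate} does not create circularity is also correct, since it closes in terms of $\mbH_{(\g,k)}$, $\mbH_{(\phi)}$ and $\mbD$.
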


\begin{proof}
  Adding $e_{I}(N) \chdtp + N e_I \dtp$ to both sides of (\ref{eq:EveSFeILow}), a term of the form $-e_I(N)\dep{0}$ appears.
  However, appealing to (\ref{eq:BootstrapInequality}) and Lemmata~\ref{lemma:HigherOrderLapseEstimate} and \ref{lemma: products in L2},
  \begin{equation*}
    \begin{split}
      t^{A+1} \| e_I(N)\dep{0} \|_{H^{k_1}(\S_t)} 
      \leq & C t^{A+1} \big( \|\en \|_{H^{k_1}(\S_t)} 
      \|\dep{0} \|_{C^{0}(\S_t)}
      +  \| \en \|_{C^{0}(\S_t)}
      \|\dep{0} \|_{H^{k_1}(\S_t)} \big) \\
      \leq & C \ar t^{-1} \big(\mbH_{(\g,k)} + \mbH_{(\phi)} \big) + Ct^{-1+\s}(\mbD + t^{\s}). 
    \end{split}
  \end{equation*}
  Similarly,
  \begin{equation*}
    \begin{split}
       t^{A+1} \| \dek{IM}\cdot\dep{M}\|_{H^{k_1}(\S_t)} 
      \leq & C t^{A+1} \big( \| \dek{} \|_{H^{k_1}(\S_t)}
      \| \vdep \|_{C^{0}(\S_t)} 
      +  \| \dek{} \|_{C^{0}(\S_t)}
      \| \vdep \|_{H^{k_1}(\S_t)} \big) \\
      \leq & C \ar t^{-1} \mbH_{(\phi)} + Ct^{-1+\s}(\mbD + t^{\s}).
    \end{split}
  \end{equation*}
  To obtain $H^{k_{1}}$-estimates analogous to (\ref{eq:EstSpGrScFdFirstBound})--(\ref{eq:EstSpGrScFdLastBound}),
  all we have to do, due to the scheme of Subsection~\ref{sec:scheme}, is to multiply the left hand
  sides by $t^{A+2\s}$ and the right-hand sides by $t^{A+2\s - \max_{i} S(j_i)}$. Since $\max_i S(j_i) \leq A + 2\s$
  for all the terms of interest, the desired estimates follow.
\end{proof}

\subsubsection{Lower-order estimates for the normal derivative of the scalar field}
Next, there are the lower order estimates for the normal derivative.
\begin{lemma} \label{lemma:EstSFe0Low}
  Let $\s_p$, $\s_V$, $\s$, $k_0$, $k_1$, $(\S, h_{\refer})$, $(E_{i})_{i=1}^{n}$ and $V$ be as in Theorem~\ref{thm: big bang formation}
  and let $\rho_0 >0$. Assume that there are $t_{0}\leq 1$, $t_\rob<t_0$ and $\ar \in (0, \tfrac{1}{6n}]$ such that
  Assumption~\ref{ass:Bootstrap} is satisfied for this choice of parameters. Then, for any $t \in [t_{\rob}, t_0]$,
  \begin{align} \label{eq:EstSFe0Low}
    t \left\|  \big( - \tfrac{1}{t} -\d_t \big) (\dep{0}) \right\|_{C^{k_{0}+1} (\S_t)}
    \leq C t^{-1+2\s} \left( \mbD(t) + t^{\s} \right)^2.
  \end{align}  
\end{lemma}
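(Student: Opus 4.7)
The plan is to multiply both sides of the evolution equation~\eqref{eq:EveSFe0Low} by $t$, measure everything in the $C^{k_0+1}(\S_t)$-norm, and then estimate each of the five terms on the right-hand side separately. Four of the five terms will yield directly to the scheme of Subsection~\ref{sec:scheme}, combined with Corollary~\ref{cor:SobolevAlgebra} and the a priori bounds in Lemmata~\ref{lemma:estimatingthebackground}, \ref{lemma:estimatingthedynamicalvariables} and \ref{lemma:extraderivatives}. The remaining term, $t^{-1}(N-1)e_0(\phi)$, will be the main obstacle: the bootstrap bound $\|N-1\|_{C^{k_0+1}}\lesssim t^{\s}$ used in the scheme is too weak, and will have to be replaced by the improved lower-order lapse estimate from Lemma~\ref{lemma:LowerOrderLapseEstimates}.

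For $Ne_Ie_I\phi$, writing the second derivative as $e_I^j E_j(e_I\phi)$ forces the inner $e_I\phi$ to be measured in $C^{k_0+2}$, which is supplied by one application of Lemma~\ref{lemma:extraderivatives}; the counters are $l_2=1$, $l_5=1$, $l_7=1$, $l_{\roint}=1$, producing $(m_1,m_\s,m_D)=(2,4,2)$ and, after multiplication by $t$, a contribution bounded by $Ct^{-1+4\s}(\mbD+t^{\s})^2$. For $e_I(N)e_I\phi$ the counters are $l_3=1$, $l_7=1$, $l_{\roint}=2$, yielding $Ct^{-1+4\s}(\mbD+t^{\s})$, which is absorbed into $Ct^{-1+3\s}(\mbD+t^{\s})^2$ via the elementary inequality $(\mbD+t^{\s})\leq t^{-\s}(\mbD+t^{\s})^2$. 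For $N\gamma_{JII}e_J\phi$, the counters $l_2=1$, $l_7=2$, $l_{\roint}=2$ give exactly the target rate $Ct^{-1+2\s}(\mbD+t^{\s})^2$. Finally, for $NV'\circ\phi$ the counters $l_2=1$, $l_{11}=1$ produce $Ct^{-1+5\s}$, which fits inside $Ct^{-1+2\s}(\mbD+t^{\s})^2$ since $(\mbD+t^{\s})^2\geq t^{2\s}\geq t^{3\s}$ for $t\leq 1$.

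The hard part will be the term $t^{-1}(N-1)e_0\phi$. The scheme, built on $\|N-1\|_{C^{k_0+1}}\lesssim t^{\s}$, combined with the a priori bound $\|e_0\phi\|_{C^{k_0+1}}\lesssim t^{-1}$ from~\eqref{eq: low norm e 0 phi}, would only produce $Ct^{-1+\s}(\mbD+1)$, which is \emph{not} controlled by $Ct^{-1+2\s}(\mbD+t^{\s})^2$ when $\mbD$ is small. Following the same trick used in the proof of Lemma~\ref{lemma:EstFFLow} for the $(N-1)k_{IJ}$ term, I bypass the scheme and invoke directly the improved estimate $\|N-1\|_{C^{k_0+1}(\S_t)}\leq Ct^{2\s}(\mbD+t^{\s})^2$ supplied by Lemma~\ref{lemma:LowerOrderLapseEstimates}. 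Combined with $\|e_0\phi\|_{C^{k_0+1}(\S_t)}\leq Ct^{-1}$ and Corollary~\ref{cor:SobolevAlgebra}, this yields $\|(N-1)e_0\phi\|_{C^{k_0+1}(\S_t)}\leq Ct^{-1+2\s}(\mbD+t^{\s})^2$, and the prefactor $t\cdot t^{-1}=1$ delivers the required rate. Summing the five contributions and noting that $3\s$, $4\s$ and $5\s$ all exceed $2\s$ completes the proof.
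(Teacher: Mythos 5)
Your proof is correct and follows essentially the same route as the paper: the lapse term $t^{-1}(N-1)e_0(\phi)$ is handled by the improved estimate of Lemma~\ref{lemma:LowerOrderLapseEstimates} (indeed the scheme alone would give the insufficient rate $Ct^{-1+\s}$), and the remaining four terms are estimated by the scheme with the identical counters $(l_2,l_5,l_7,l_{\roint})=(1,1,1,1)$, $(l_3,l_7,l_{\roint})=(1,1,2)$, $(l_2,l_7,l_{\roint})=(1,2,2)$ and $(l_2,l_{11})=(1,1)$. The only cosmetic difference is that you absorb the $e_I(N)e_I\phi$ bound into $(\mbD+t^{\s})^2$ before summing, whereas the paper leaves it as $Ct^{-1+4\s}(\mbD+t^{\s})$; both dominate trivially by the target rate.
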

\begin{proof}
  Throughout the proof, note that since we estimate the right-hand side of (\ref{eq:EveSFe0Low}) in $C^{k_{0}+1}$, we sometimes need to appeal to
  Lemma~\ref{lemma:extraderivatives}. Due to Lemma~\ref{lemma:LowerOrderLapseEstimates},
  \begin{align*}
    \| (N-1) e_0(\phi)\|_{C^{k_{0}+1}(\S_t)} 
    & \leq \|N-1\|_{C^{k_{0}+1}(\S_t)}
    \cdot \|e_0(\phi)\|_{C^{k_{0}+1}(\S_t)} \\
    & \leq C t^{2\s} (\mbD + t^{\s})^2 \cdot t^{-1}(\mbL_{(\phi)} + C)
    \leq C t^{-1+2\s} (\mbD + t^{\s})^2.
  \end{align*}
  Next, due to the scheme of Subsection~\ref{sec:scheme},
  \begin{align*}
    t & \| N e_I e_I (\phi)\|_{C^{k_{0}+1}(\S_t)}
    \leq C t^{-1+4\s} (\mbD + t^{\s})^2,
  \end{align*}
  as $l_2 =1$, $l_5 =1$, $l_7= 1$, $l_{\roint} = 1$, $m_{1} = 2$, $m_{\s} = 4$ and $m_D = 2$. Second,
  \begin{equation}\label{eq:eINeIphiezest}
    t \| e_I(N) e_I (\phi)\|_{C^{k_{0}+1}(\S_t)} 
    \leq C t^{-1+4\s} (\mbD + t^{\s}),
  \end{equation}
  as $l_3 =1$, $l_7= 1$, $l_{\roint} = 2$, $m_{1} = 2$, $m_{\s} = 4$ and $m_D = 1$. Third,
  \begin{equation}\label{eq:ngammaeJphiezest}
    t \| N \g_{JII} e_{J}(\phi)\|_{C^{k_{0}+1}(\S_t)} 
    \leq C t^{-1+2\s} (\mbD + t^{\s})^2,
  \end{equation}
  as $l_2 =1$, $l_7=2$, $l_{\roint} = 2$, $m_{1} = 2$, $m_{\s} = 2$ and $m_D = 2$. Finally,
  \begin{equation}\label{eq:NVprimecircphiezest}
    t \| N V'\circ\phi\|_{C^{k_{0}+1}(\S_t)} 
    \leq C t^{-1+5\s},
  \end{equation}
  as $l_2 =1$, $l_{11}=1$, $m_1 = 2$, $m_{\s} = 5$ and $m_D =0$. This concludes the proof.
\end{proof}

\subsubsection{Higher-order estimates for the normal derivative of the scalar field}
Next, we prove the higher-order estimates for the normal derivative of the scalar field.
\begin{lemma}\label{lemma:EstimateSFe0High}
  Let $\s_p$, $\s_V$, $\s$, $k_0$, $k_1$, $(\S, h_{\refer})$, $(E_{i})_{i=1}^{n}$ and $V$ be as in Theorem~\ref{thm: big bang formation}
  and let $\rho_0 >0$. Assume that there are $t_{0}\leq \tau_{H}$ (see Lemma~\ref{lemma:AuxiliaryLapseEstimate}), $t_\rob<t_0$ and
  $\ar \in (0, \tfrac{1}{6n}]$ such that Assumption~\ref{ass:Bootstrap} is satisfied for this choice of parameters.
  Then, for any $t \in [t_{\rob}, t_0]$,    
  \begin{align}
    \begin{split}\label{eq:EstSFe0High}
      & t^{A+1} \big \| \big(-\tfrac{1}{t} -\d_t \big) (\dep{0})
      - t^{-1} (N-1) \chdtp + N e_I e_I(\phi)
      \big\|_{H^{k_1}(\S_t)} \ \\[2pt]
      \leq & C \ar t^{-1} \big(\mbH_{(\phi)} (t)
      + \mbH_{(\g,k)} (t) \big) 
      + C t^{-1+\s} \left( \mbD(t) + t^{\s} \right).
    \end{split}
  \end{align}
\end{lemma}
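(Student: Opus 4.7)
The plan is to proceed in direct analogy with Lemma~\ref{lemma:EstimateSFeIHigh}, using \eqref{eq:EveSFe0Low} as the starting point. First I would add $-t^{-1}(N-1)\chdtp + N e_I e_I(\phi)$ to both sides of \eqref{eq:EveSFe0Low}. On the right-hand side the combination $t^{-1}(N-1)e_0(\phi) - t^{-1}(N-1)\chdtp$ collapses to $t^{-1}(N-1)\dep{0}$, and the only remaining terms are
\[
t^{-1}(N-1)\dep{0},\quad -e_I(N)e_I(\phi),\quad N\g_{JII}e_J(\phi),\quad NV'\circ\phi.
\]
The task is then to bound each of these in $H^{k_1}(\S_t)$ by the right-hand side of \eqref{eq:EstSFe0High}.

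For the first term, which is the main new contribution compared to the $C^{k_0+1}$ version in Lemma~\ref{lemma:EstSFe0Low}, I would use the Moser estimate of Lemma~\ref{lemma: products in L2} to split
\[
t^{A+1}\|t^{-1}(N-1)\dep{0}\|_{H^{k_1}(\S_t)}\leq Ct^A\big(\|N-1\|_{H^{k_1}(\S_t)}\|\dep{0}\|_{C^{0}(\S_t)}+\|N-1\|_{C^{0}(\S_t)}\|\dep{0}\|_{H^{k_1}(\S_t)}\big).
\]
The first summand is controlled by $\mbH_{(N)}\cdot t\|\dep{0}\|_{C^{0}(\S_t)}\cdot t^{-1}$, and then Lemma~\ref{lemma:HigherOrderLapseEstimate} converts $\mbH_{(N)}$ into $C\ar t^{-1}(\mbH_{(\g,k)}+\mbH_{(\phi)})+Ct^{2\s}(\mbD+t^{\s})$, multiplied by the bootstrap bound $\|\dep{0}\|_{C^{0}(\S_t)}\leq t^{-1}\ar$. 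The second summand is handled directly by the bootstrap bound $\|N-1\|_{C^{0}(\S_t)}\leq t^{\s}\ar$ together with $t^A\|\dep{0}\|_{H^{k_1}(\S_t)}\leq t^{-1}\mbH_{(\phi)}$. Both contributions fit within $C\ar t^{-1}(\mbH_{(\g,k)}+\mbH_{(\phi)})+Ct^{-1+\s}(\mbD+t^{\s})$.

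For the remaining three terms I would lift the lower-order estimates \eqref{eq:eINeIphiezest}, \eqref{eq:ngammaeJphiezest} and \eqref{eq:NVprimecircphiezest} to $H^{k_1}$ via the higher-order scheme of Subsection~\ref{sec:scheme}: multiply the left-hand side of each by $t^{A+2\s}$ and the right-hand side by $t^{A+2\s-\max_i S(j_i)}$. For $e_I(N)e_I(\phi)$ the counters are $l_3=l_7=1$ and $\max_i S(j_i)=S(3)=A+4\s$, yielding $Ct^{-1+2\s}(\mbD+t^{\s})$. For $N\g_{JII}e_J(\phi)$ the counters are $l_2=1,\ l_7=2$ with $\max_i S(j_i)=S(7)=A+2\s$, yielding $Ct^{-1+2\s}(\mbD+t^{\s})^2$. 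For $NV'\circ\phi$ the counters are $l_2=l_{11}=1$ with $\max_i S(j_i)=S(11)=A$, yielding $Ct^{-1+5\s}$. All three are absorbed into $Ct^{-1+\s}(\mbD+t^{\s})$.

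The only genuinely delicate step is the treatment of $t^{-1}(N-1)\dep{0}$, since the higher-order norm of $N-1$ cannot be controlled solely through the scheme: it is precisely here that Lemma~\ref{lemma:HigherOrderLapseEstimate} is needed and is responsible for the appearance of $\mbH_{(\g,k)}$ on the right-hand side of \eqref{eq:EstSFe0High}. All other terms follow mechanically from the lower-order estimates of Lemma~\ref{lemma:EstSFe0Low} and the $L^2$-version of the scheme, so no genuinely new obstacle arises beyond what has already been handled in Lemmata~\ref{lemma:HigherOrderLapseEstimate} and \ref{lemma:EstimateSFeIHigh}.
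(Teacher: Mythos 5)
Your proposal is correct and takes essentially the same route as the paper: subtract $t^{-1}(N-1)\chdtp$ and add $Ne_Ie_I(\phi)$ to isolate $t^{-1}(N-1)\dep{0}$, handle that term via the Moser estimate of Lemma~\ref{lemma: products in L2} together with Lemma~\ref{lemma:HigherOrderLapseEstimate}, and run the three remaining terms through the higher-order scheme with precisely the counters the paper uses. Your extra unpacking of the two Moser summands for $t^{-1}(N-1)\dep{0}$ is compatible with (and slightly more explicit than) the paper's compressed one-line invocation of the two lemmata.
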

\begin{proof}
  By Lemmata~\ref{lemma:HigherOrderLapseEstimate} and \ref{lemma: products in L2},
  \begin{align*}
    t^{A+1}\| t^{-1} (N-1) 
    (\dep{0}) \|_{H^{k_1}(\S_t)} 
    \leq & C t^{A} \big( \|N-1 \|_{H^{k_1}(\S_t)} 
    \| \dep{0} \|_{C^{0}(\S_t)}
    +  \| N-1 \|_{C^{0}(\S_t)}
    \| \dep{0} \|_{H^{k_1}(\S_t)}\big) \\
    \leq & C \ar t^{-1} \big( \mbH_{(\g,k)} + \mbH_{(\phi)} \big)
    + Ct^{-1+\s}(\mbD + t^{\s}),
  \end{align*}
  The remaining terms can be estimated by the scheme. First,
  \begin{align*}
    t^{A+1} \|  e_I(N) e_I(\phi) \|_{H^{k_1}(\S_t)} 
    &\leq C t^{-1+2\s}(\mbD + t^{\s}),
  \end{align*}
  as $l_3 =1$, $l_7=1$, $m_1 =2$, $m_{\s} = 6$, $m_D = 1$ and $\max_j S(j) =  A + 4\s$. Second,
  \begin{align*}
    t^{A+1} \| N \g_{JII} e_{J} (\phi) \|_{H^{k_1}(\S_t)} 
    &\leq C t^{-1+2\s} (\mbD + t^{\s})^2,
  \end{align*}
  as $l_2 =1$, $l_7 =2$, $m_1 = 2$, $m_{\s} =4$, $m_D = 2$ and $\max_i S(j_i)= A + 2 \s$. Finally,
  \begin{align*}
    t^{A+1}\| N V'\circ\phi \|_{H^{k_1}(\S_t)}
    &\leq C t^{-1+ 5\s},
  \end{align*}
  as $l_2 = 1$, $l_{11} = 1$, $m_1 = 2$, $m_{\s} = 5$, $m_D = 0$ and $\max_i S(j_i) = A$.
\end{proof}

\subsubsection{Additional estimates for the scalar field}
To derive energy estimates, we also need to bound the following quantity:
\begin{align}
\begin{split}
\Xi(t)
&:= \langle N e_I (\d_t \chphi), \dep{I}\rangle_{H^{k_1}(\S_t)} 
+ \langle \dep{0}, N e_I (\che_I (\chphi)) \rangle_{H^{k_1}(\S_t)}.
\end{split}
\end{align}
\begin{lemma} \label{lemma:EstimateForXi}
  Let $\s_p$, $\s_V$, $\s$, $k_0$, $k_1$, $(\S, h_{\refer})$, $(E_{i})_{i=1}^{n}$ and $V$ be as in Theorem~\ref{thm: big bang formation}
  and let $\rho_0 >0$. Assume that there are $t_{0}\leq 1$, $t_\rob<t_0$ and $\ar \in (0, \tfrac{1}{6n}]$ such that
  Assumption~\ref{ass:Bootstrap} is satisfied for this choice of parameters. Then, for any $t \in [t_{\rob}, t_0]$,
  \begin{align}
    \begin{split}
      t^{2(A+1)} \big| \Xi(t) \big|
      \leq C t^{-1 + \s} \mbD(t) (\mbD(t) + t^{\s}).
    \end{split}
  \end{align}
\end{lemma}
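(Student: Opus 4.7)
The plan is to treat the two terms in $\Xi(t)$ separately, apply the Cauchy--Schwarz inequality for the $H^{k_1}$ inner product in each case, and then bound one factor by a power of $t$ times $(\mbD+t^\s)$ using the scheme from Subsection~\ref{sec:scheme}, while bounding the other factor directly by $\mbD$ using the definition of the deviation quantities.

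For the first term, Cauchy--Schwarz gives
\begin{equation*}
t^{2(A+1)}\bigl|\langle Ne_I(\d_t\chphi),\dep{I}\rangle_{H^{k_1}(\S_t)}\bigr|
\leq t^{A+1}\|Ne_I(\d_t\chphi)\|_{H^{k_1}(\S_t)}\cdot t^{A+1}\|\vdep\|_{H^{k_1}(\S_t)}.
\end{equation*}
By definition of $\mbH_{(\phi)}$, the second factor is bounded by $t^{2\s}\mbH_{(\phi)}\leq t^{2\s}\mbD$. For the first factor, the counters in the scheme are $l_2=l_5=l_9=1$ (giving $m_1=2$, $m_\s=3$, $m_D=1$) with $\max_i S(j_i)=S(5)=A+\s$, so that $t^{A+1}\|Ne_I(\d_t\chphi)\|_{H^{k_1}(\S_t)}\leq Ct^{-1+2\s}(\mbD+t^\s)$. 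Multiplying yields a bound $Ct^{-1+4\s}\mbD(\mbD+t^\s)\leq Ct^{-1+\s}\mbD(\mbD+t^\s)$, since $t\leq 1$.

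For the second term, another application of Cauchy--Schwarz gives
\begin{equation*}
t^{2(A+1)}\bigl|\langle \dep{0},Ne_I(\che_I\chphi)\rangle_{H^{k_1}(\S_t)}\bigr|
\leq t^{A+1}\|\dep{0}\|_{H^{k_1}(\S_t)}\cdot t^{A+1}\|Ne_I(\che_I\chphi)\|_{H^{k_1}(\S_t)},
\end{equation*}
where the first factor is bounded by $\mbH_{(\phi)}\leq\mbD$. For the second factor, the key observation is that $\che_I\chphi=\chep$ must be treated as a single factor (of type $l_6=1$), and that $e_I$ applied to it only uses up to one extra derivative; the a-priori estimate \eqref{eq: check scalar fields high} for $\cear\chphi$ in $H^{k_1+1}$ is exactly what allows this within the scheme. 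With the counters $l_2=l_5=l_6=1$ we obtain $m_1=2$, $m_\s=6$, $m_D=1$ and $\max_i S(j_i)=S(5)=A+\s$, yielding $t^{A+1}\|Ne_I(\che_I\chphi)\|_{H^{k_1}(\S_t)}\leq Ct^{-1+5\s}(\mbD+t^\s)$. The product is bounded by $Ct^{-1+5\s}\mbD(\mbD+t^\s)\leq Ct^{-1+\s}\mbD(\mbD+t^\s)$.

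Summing the two contributions yields the desired estimate. No real obstacle is anticipated here: this is a direct application of Cauchy--Schwarz combined with the bookkeeping scheme already set up. The only subtlety worth emphasizing is verifying that the extra spatial derivative hitting $\chep$ in the second term is absorbed by the higher-regularity scaffold estimate \eqref{eq: check scalar fields high}, which is why the $H^{k_1+1}$ bound (rather than $H^{k_1}$) on $\cear\chphi$ appears in Lemma~\ref{lemma:estimatingthebackground}.
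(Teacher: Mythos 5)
Your approach is essentially the paper's: Cauchy--Schwarz in the $H^{k_1}$ inner product for each term, combined with the scheme to bound the scaffold factors $Ne_I(\chdtp)$ and $Ne_I(\che_I\chphi)$, and the definition of $\mbH_{(\phi)}$ for the deviation factors. The counters and the resulting powers $t^{-1+2\s}$ and $t^{-1+5\s}$ match the paper exactly, and your emphasis on the $H^{k_1+1}$ regularity of the scaffold in Lemma~\ref{lemma:estimatingthebackground} is correct and worth noting.

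One small slip in the first term: you assert that $t^{A+1}\|\vdep\|_{H^{k_1}(\S_t)}\leq t^{2\s}\mbH_{(\phi)}$, but the weight in \eqref{eq: H phi bootstrap} is $t^{A+1}$ (not $t^{A+1-2\s}$, which is the weight in $\mbH_{(e,\omega)}$), so the correct bound is simply $t^{A+1}\|\vdep\|_{H^{k_1}(\S_t)}\leq\mbH_{(\phi)}\leq\mbD$. The extra $t^{2\s}$ you claimed is not available. This is harmless here because the scaffold factor already provides $t^{-1+2\s}(\mbD+t^\s)$, and $t^{-1+2\s}\mbD(\mbD+t^\s)\leq Ct^{-1+\s}\mbD(\mbD+t^\s)$ without any further gain from the deviation factor, so the final estimate is still correct.
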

\begin{proof}
  Due to the scheme of Subsection~\ref{sec:scheme}, 
  \begin{equation}\label{eq:NeIcheIchephiest}
    t^{A+1} \| N e_I (\che_I (\chphi))\|_{H^{k_1}(\S_t)}
    \leq C t^{-1+5\s} (\mbD + t^{\s}),
  \end{equation}
  as $l_2 =1$, $l_5 = 1$, $l_6 =1$, $m_1=2$, $m_{\s} = 6$, $m_D = 1$ and $\max_i S(j_i) = A + \s$;
  we control $\che_{I}(\chphi)$ in $H^{k_{1}+1}$ due to Lemma~\ref{lemma:estimatingthebackground}.
  Next, 
  \begin{equation}\label{eq:NeIchdtpest}
    t^{A+1} \| N e_I (\chdtp)\|_{H^{k_1}(\S_t)}
    \leq C t^{-1+2\s} (\mbD + t^{\s}),
  \end{equation}
  as $l_2 =1$, $l_5 = 1$, $l_9 =1$, $m_1=2$, $m_{\s} = 3$, $m_D = 1$ and $\max_i S(j_i) = A + \s$;
  we control $\chdtp$ in $H^{k_1 +1}$ due to Lemma~\ref{lemma:estimatingthebackground}.
  Combining (\ref{eq:NeIcheIchephiest}) and (\ref{eq:NeIchdtpest}) with (\ref{eq: H phi bootstrap}),
  (\ref{eq:mbDdefinition}) and the Cauchy-Schwarz inequality for $H^{k_{1}}$ yields the desired conclusion. 
\end{proof}

\section{Energy estimates and the proof of the bootstrap improvement}
\label{sec:EnergyEstimates}

\subsection{The lower-order energy estimate}
Applying all the lower-order estimates yields an estimate for
\[
\mbL_{(e,\omega,\g,k,\phi)}:=\mbL_{(e,\omega)}+\mbL_{(\g,k)}+\mbL_{(\phi)}.
\]
\begin{prop} \label{prop:EstimateEnergyLow}
  Let $\s_p$, $\s_V$, $\s$, $k_0$, $k_1$, $(\S, h_{\refer})$, $(E_{i})_{i=1}^{n}$ and $V$ be as in Theorem~\ref{thm: big bang formation}
  and let $\rho_0 >0$. Then there is a standard constant $\ar_{\mbL}\in (0, \tfrac{1}{6n}]$ such that if Assumption~\ref{ass:Bootstrap}
  is satisfied with $t_{0}\leq 1$, $t_\rob<t_0$ and $\ar \in (0, \ar_{\mbL}]$, then, for any $t \in [t_{\rob}, t_0]$,
\begin{equation}
    \mbL_{(e,\omega,\g,k,\phi)} (t)^2 \leq C \mbL_{(e,\omega,\g,k,\phi)} (t_0)^2 + C \textstyle{\int}_t^{t_0}s^{-1+\s} \mbD(s)\left( \mbD(s) + s^\s \right) \md s.
\end{equation}
\end{prop}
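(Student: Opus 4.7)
The plan is to derive a pointwise-in-$x$ differential inequality for an appropriate weighted square of each of the six deviation quantities appearing in $\mbL_{(e,\omega,\g,k,\phi)}$, combine these, integrate from $t$ to $t_0$, take the supremum over $\S$, and close via Gr\"onwall. The quantities $\Psi$, together with the weights $w_\Psi = t^{\beta_\Psi}$ read off from the definition of $\mbL_{(e,\omega,\g,k,\phi)}$, satisfy schematic evolution equations of the form $(-\lambda_\Psi/t - \d_t)\Psi = \mathrm{RHS}_\Psi$; here the exponent $\lambda_\Psi$ is $\bq_{\uI}$, $-\bq_{\uI}$, $\bq_{\uI}+\bq_{\uJ}-\bq_{\uK}$, $1$, $\bq_{\uI}$, or $1$ depending on whether $\Psi$ is $\dee{\uI}{i}$, $\deo{\uI}{i}$, $\dega{IJK}$, $\dek{IJ}$, $\dep{I}$, or $\dep{0}$, respectively (cf.\ (\ref{seq:EveECLow})--(\ref{seq:SFLow})). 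Crucially, the weights are exactly those for which $\beta_\Psi - \lambda_\Psi \geq 0$ pointwise: by Lemma~\ref{le: first bound on $q_I$}, this gap is at least $2\s$ whenever $\lambda_\Psi$ involves eigenvalues, and is identically zero for $\Psi \in \{\dek{IJ}, \dep{0}\}$.

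For each multiindex $\bfI$ of order at most $k_0+1$ (respectively $k_0$ for $\Psi \in \{\dega{IJK}, \dep{I}\}$), I differentiate $w_\Psi^2|E_\bfI\Psi|^2(t,x)$ in $t$ and substitute the evolution equation. The resulting principal contribution $-2t^{-1}(\beta_\Psi - \lambda_\Psi(x))w_\Psi^2|E_\bfI\Psi|^2$ is pointwise non-positive and may be discarded. The remaining terms are estimated via Cauchy--Schwarz together with (i) the pointwise bound on $|E_\bfI\mathrm{RHS}_\Psi|$ furnished by the appropriate lemma among \ref{lemma:EstFCLow}, \ref{lemma:EstSCLow}, \ref{lemma:EstFFLow}, \ref{lemma:EstSFeILow}, \ref{lemma:EstSFe0Low}, and (ii) a pointwise estimate of the commutator $[E_\bfI, \lambda_\Psi/t]\Psi$ that arises when $\lambda_\Psi$ involves the spatially dependent eigenvalues. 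This commutator is bounded via Lemma~\ref{lemma:EstimateCommutators} together with the $C^{k_0+1}$-bound on $\bq_{\uI}$ that follows from (\ref{eq: gamma bar p bar assumption}) by Sobolev embedding, which is justified by the hypothesis on $k_1$. Summing over multiindices and component indices, integrating pointwise in $x$ from $t$ to $t_0$, and taking the supremum over $\S$ produces an integral inequality of the form
\begin{equation*}
\mbL_{(e,\omega,\g,k,\phi)}^2(t) \leq \mbL_{(e,\omega,\g,k,\phi)}^2(t_0) + K\int_t^{t_0}s^{-1}\mbL_{(e,\omega,\g,k,\phi)}^2(s)\,ds + C\int_t^{t_0}s^{-1+\s}\mbD(s)(\mbD(s)+s^\s)\,ds,
\end{equation*}
where $K$ is the combined coefficient coming from the commutator estimates and from the $C\ar$-contributions of the lemmata. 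Gr\"onwall's inequality then closes the estimate, with the multiplicative constant in the statement absorbing the Gr\"onwall factor, provided $\ar_{\mbL}$ is chosen small enough that $K$ is suitably tame.

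The principal analytic obstacle is the careful treatment of the commutator $[E_\bfI, \bq_{\uI}/t]\Psi$: in contrast to the $C\ar$-contributions coming from the lemmata (which are small by the bootstrap), the commutator's coefficient is controlled only by $\|\bq_{\uI}\|_{C^{k_0+1}}$, a standard non-small constant. Ensuring that the resulting integral inequality still closes with a standard multiplicative constant is what drives the precise choice of $\ar_{\mbL}$; in particular, the fact that the commutator involves only derivatives of $\Psi$ of order strictly less than $|\bfI|$ is what makes an iterative treatment of derivative order viable.
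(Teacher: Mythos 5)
Your plan contains a genuine gap at the point you yourself identify as the principal obstacle, and the gap is fatal for the proposed route. You write that the principal contribution $-2t^{-1}(\beta_\Psi - \lambda_\Psi)w_\Psi^2|E_\bfI\Psi|^2$ ``is pointwise non-positive and may be discarded,'' and that the commutator contribution then produces a term $K\int_t^{t_0}s^{-1}\mbL^2(s)\,ds$ in the final integral inequality, to be closed by Gr\"onwall with ``$\ar_{\mbL}$ chosen small enough that $K$ is suitably tame.'' Neither step is viable. First, $K$ is controlled by $\|\bq_I\|_{C^{k_0+1}}$, which is a standard constant bounded in terms of $\rho_0$ but \emph{not} proportional to $r$; choosing $\ar_{\mbL}$ small does nothing to it. Second, and independently, $s^{-1}$ is not integrable at $0$, so even a small $K>0$ produces a Gr\"onwall factor $\exp\bigl(K\int_t^{t_0}s^{-1}ds\bigr)=(t_0/t)^K$, unbounded as $t\downarrow 0$; this cannot be absorbed into the standard constant $C$ in the statement, which must be uniform on $(0,t_0]$. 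The only admissible Gr\"onwall terms are those with integrable weights $s^{-1+\s}$, which is exactly the form of the term built from $\mbD(\mbD+s^\s)$.

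The paper's resolution is to \emph{not} discard the negative principal term, and instead use it to absorb the commutator contribution. This is done by working with a weighted sum over derivative orders, $\Q^2:=\sum_{m=0}^{k_0+1}\delta_m Q_m^2$, where $(\delta_m)$ is a rapidly decreasing sequence built from $B_{k_0}\sim\|\bq_I+\bq_J-\bq_K\|_{C^{k_0+1}}^2$ so that \eqref{eq:DeltaInequality} holds. The commutator $[E_\bfI,\mfp]$ at level $m$ produces only terms of order $\leq m-1$; after Young's inequality with parameter $\s$ and the telescoping from \eqref{eq:DeltaInequality}, the total commutator contribution to $\d_t\Q^2$ is bounded by $2\s\sum_m\delta_m\sum_{|\bfI|=m}|E_\bfI f|^2$, which is strictly less in magnitude than the retained principal term $-2(5-\varkappa)\s\sum_m\delta_m\sum_{|\bfI|=m}|E_\bfI f|^2$ coming from the gap $1-\varkappa\s-\mfp>(5-\varkappa)\s$ (Lemma~\ref{le: first bound on $q_I$}). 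The remaining $s^{-1}$-weighted terms then carry the small coefficient $\mC_1\ar-2\s$ (resp.\ $\mC_1\ar-4\s$), non-positive for $\ar\leq\ar_{\mbL}$, and are discarded; no Gr\"onwall is applied to them. You gesture at the correct structural insight when you note that the commutator lowers derivative order and makes ``an iterative treatment'' viable, but you do not introduce the weights $\delta_m$, and you explicitly throw away the term that must do the absorbing. As written, the argument cannot close.
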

In order to prove the proposition, we define the pointwise quantities
\begin{align*}
    Q_{m}^2
        := &  \textstyle{\sum}_{\abs {\bfI} = m}\left( t^{2(1 - 3\s)} \mfQ_{\bfI}+ t^{2(1-2\s)}\mfP_{\bfI} + t^2\mfK_{\bfI}\right)\\
    Q_{k_{0}+1}^2
        := &  \textstyle{\sum}_{\abs {\bfI} = k_{0}+1}\left( t^{2(1 - 3\s)} \mfQ_{\bfI}+ t^2\mfK_{\bfI}\right),
\end{align*}
where $m \in \{0,\dots,k_0\}$ and
\begin{subequations}\label{seq:mfQPKdef}
  \begin{align}
    \mfQ_{\bfI} := & \textstyle{\sum}_{i,I}(|E_{\bfI}(\dee{I}{i})|^2+|E_{\bfI}(\deo{I}{i})|^{2}),\label{eq:mfQdef}\\
    \mfP_{\bfI} := & \textstyle{\sum}_{I,J,K}|E_{\bfI}(\dega{IJK})|^2+\textstyle{\sum}_{I}|E_{\bfI}(\dep{I})|^2,\label{eq:mfPdef}\\
    \mfK_{\bfI} := & \textstyle{\sum}_{I,J}|E_{\bfI}(\dek{IJ})|^2+|E_{\bfI}(\dep{0})|^2.\label{eq:mfKdef} 
  \end{align}
\end{subequations}
Moreover, we define a decreasing sequence $(\delta_{m})_{m=0}^{k_0+1}$ as follows: $\delta_{k_0+1}=1$ and 
\begin{equation}
\delta_m = (B_{k_0} \sigma^{-2} + 1)^{k_0+1-m}
\end{equation}
for $0 \leq m \leq k_0$, where
\begin{equation}\label{eq:Bkzdef}
B_{k_0} := 4^{k_0+1} 
    \max \big\{ \| \bq_I + \bq_J - \bq_K\|_{C^{k_{0}+1}(\S_{t_0})}^2 ~\big|~
    I,J,K \in \{1,..,n\},\ I\neq J  \big\}.
\end{equation}
Note that $B_{k_0}\leq C\cdot 4^{k_{0}+1}\rho_0^2$ if the diagonal FRS initial data satisfies the FRS expansion-normalized bounds
of regularity $k_1$ for $\rho_0$ at $t_0$, where $C$ only depends on $(\S,h_{\refer})$ and $(E_{i})_{i=1}^{n}$. This is a consequence of
\eqref{eq: gamma bar p bar assumption} and Sobolev embedding. In particular, the $\delta_m$ are standard constants; see Notation~\ref{not:constant}. 
By an inductive argument, the following inequality also holds for $m \leq k_0$:
\begin{equation} \label{eq:DeltaInequality}
\delta_m \geq B_{k_0} \s^{-2} \textstyle{\sum}_{j = m+1}^{k_0+1} \delta_j;
\end{equation}
it holds for $m = k_0$, and if it holds for $m \in \{1,\dots,k_0\}$, it also holds for $m-1$:
\begin{align*}
\delta_{m-1}= \big(B_{k_0} \sigma^{-2} + 1 \big) \delta_m 
    &\geq B_{k_0} \sigma^{-2} \delta_m 
        + B_{k_0} \s^{-2} \textstyle{\sum}_{j = m+1}^{k_0+1} \delta_j 
    = B_{k_0} \s^{-2} \sum_{j = m}^{k_0+1} \delta_j.
\end{align*}
Finally, we define the weighted, energy-like quantity we wish to estimate:
\begin{align*}
\Q^2 := \textstyle{\sum}_{m=0}^{k_0+1} \delta_m Q_{m}^2.
\end{align*}

\begin{lemma} \label{lemma:EstimateForQ}
  Let $\s_p$, $\s_V$, $\s$, $k_0$, $k_1$, $(\S, h_{\refer})$, $(E_{i})_{i=1}^{n}$ and $V$ be as in Theorem~\ref{thm: big bang formation}
  and let $\rho_0 >0$. Assume that there are $t_{0}\leq 1$, $t_\rob<t_0$ and $\ar \in (0, \tfrac{1}{6n}]$ such that
  Assumption~\ref{ass:Bootstrap} is satisfied for this choice of parameters. Then there are standard constants
  $C$ and $\mC_{1}$ such that for any $t \in [t_{\rob}, t_0]$,
\begin{align}
\begin{split} \label{eq:inequalityforQ}
   \Q(t,x)^2
  \leq & \Q(t_0,x)^2 + C \textstyle{\int}_{t}^{t_0} s^{-1+\s} \mbD(s) (\mbD + s^{\s}) \\
&+ (\mC_{1}\ar - 2 \sigma) \textstyle{\sum}_{m=0}^{k_0+1} \delta_m \sum_{|\bfI| = m} 
    \int_{t}^{t_0} s^{-1+ 2(1-3\s)}\mfQ_{\bfI}(s,x) \md s \\
&+ (\mC_{1}\ar - 4 \sigma) \textstyle{\sum}_{m=0}^{k_0} \delta_m \sum_{|\bfI| = m} 
    \int_{t}^{t_0} s^{-1 + 2(1-2\s)}\mfP_{\bfI}(s,x) \md s.
\end{split}
\end{align}
\end{lemma}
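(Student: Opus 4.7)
\textbf{Proof plan for Lemma~\ref{lemma:EstimateForQ}.} The strategy is to derive a \emph{pointwise} differential inequality for $\Q^2$ by differentiating in $s$, using the ODE structure encoded in the operators $(-\tfrac{\bq_I}{t}-\d_t)$, $(\tfrac{\bq_I}{t}-\d_t)$, $(-\tfrac{\bq_I+\bq_J-\bq_K}{t}-\d_t)$ and $(-\tfrac{1}{t}-\d_t)$ for the various deviation quantities, and then integrating in $s$ from $t$ to $t_0$. For each multiindex $\bfI$ with $|\bfI|=m$, I would compute
\begin{equation*}
  -\d_s(s^{2\alpha}|E_\bfI(\delta X)|^2) = s^{2\alpha-1}(2\lambda-2\alpha)|E_\bfI(\delta X)|^2 + 2s^{2\alpha-1}[E_\bfI,\lambda](\delta X)\cdot E_\bfI(\delta X) + 2s^{2\alpha}E_\bfI(R) E_\bfI(\delta X),
\end{equation*}
where $\alpha\in\{1-3\s,1-2\s,1\}$ is the appropriate weight exponent, $\lambda$ is the matching algebraic factor ($\pm\bq_I$, $\bq_I+\bq_J-\bq_K$, or $1$), and $R$ denotes the right-hand side controlled by Lemmata~\ref{lemma:EstFCLow}, \ref{lemma:EstSCLow}, \ref{lemma:EstFFLow}, \ref{lemma:EstSFeILow} and \ref{lemma:EstSFe0Low}. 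Lemma~\ref{le: first bound on $q_I$} yields $2\lambda - 2\alpha \leq -4\s$ for the frame/co-frame block ($\mfQ_\bfI$), $\leq -6\s$ for the structure-coefficient and spatial scalar-field block ($\mfP_\bfI$), and exactly $0$ for the second-fundamental-form and normal scalar-field block ($\mfK_\bfI$), giving the first negative "main" contributions.

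The next step is to handle the commutator $[E_\bfI,\lambda](\delta X)$ via Young's inequality with weight $\eta=2\s$:
\begin{equation*}
  2|[E_\bfI,\lambda](\delta X)\cdot E_\bfI(\delta X)| \leq 2\s |E_\bfI(\delta X)|^2 + (2\s)^{-1}|[E_\bfI,\lambda](\delta X)|^2.
\end{equation*}
Applying standard commutator identities (Lemma~\ref{lemma:EstimateCommutators}) and using the definition (\ref{eq:Bkzdef}) of $B_{k_0}$ (together with (\ref{eq: gamma bar p bar assumption}) and Sobolev embedding to control $\|\bq_I\|_{C^{k_0+1}}$), one bounds $|[E_\bfI,\lambda](\delta X)|^2$ by $C B_{k_0}\sum_{|\bfK|<m}|E_\bfK(\delta X)|^2$. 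After this bite of $2\s$ from the leading coefficient, the diagonal contribution becomes $-2\s$ for $\mfQ_\bfI$ and $-4\s$ for $\mfP_\bfI$; this is precisely the origin of the two coefficients in (\ref{eq:inequalityforQ}). The off-diagonal commutator debt $\s^{-1}B_{k_0}\sum_{|\bfK|<m}|E_\bfK(\delta X)|^2$ is then absorbed by the leading (negative) contribution at lower orders, thanks to the geometric weights $\delta_m$: after multiplying by $\delta_m$ and interchanging the order of summation, one gets a coefficient $\sum_{m>|\bfK|}\delta_m \cdot B_{k_0}\s^{-2}$ acting on the level-$|\bfK|$ terms, which is dominated by $\delta_{|\bfK|}$ thanks to the recursive inequality (\ref{eq:DeltaInequality}).

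It then remains to dispose of the $2s^{2\alpha}E_\bfI(\delta X)E_\bfI(R)$ contributions. Substituting the estimates from Lemmata~\ref{lemma:EstFCLow}--\ref{lemma:EstSFe0Low}, $|E_\bfI(R)|$ splits into an $\ar$-piece of the form $C\ar s^{-1}\sum_{|\bfJ|\leq m}|E_\bfJ(\delta X)|$ and a "junk" piece of the form $Cs^{-1+\s-\alpha}(\mbD+s^\s)$ (for the $\mfQ$- and $\mfP$-blocks) or $Cs^{-1+2\s-\alpha}(\mbD+s^\s)^2$ (for the $\mfK$-block, via Lemma~\ref{lemma:EstFFLow} and Lemma~\ref{lemma:EstSFe0Low}). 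The first piece contributes, after AM-GM, an $\ar$-multiple of the same quadratic form as the main term, which is combined with the existing $-2\s$ or $-4\s$ coefficient to yield $\mC_1\ar-2\s$ and $\mC_1\ar-4\s$ respectively. The second piece is handled by using the bootstrap bound $|E_\bfI(\delta X)|\leq s^{-\alpha}\mbD$ (supplied by Assumption~\ref{ass:Bootstrap} together with the a priori estimates in Lemma~\ref{lemma:estimatingthedynamicalvariables}) to produce the extra factor $\mbD$; exponents collapse so the contribution is $\leq Cs^{-1+\s}\mbD(\mbD+s^\s)$, with the $(\mbD+s^\s)^2$ in the $\mfK$-case reduced to $(\mbD+s^\s)$ by using $\mbD+s^\s\leq\ar+1$ and $s\leq 1$. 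Summing over $\bfI$ and multiplying by the $\delta_m$'s, integrating the resulting pointwise differential inequality from $s=t$ to $s=t_0$, and choosing $\ar_\mbL$ small enough that $\mC_1\ar_\mbL<2\s$ will give (\ref{eq:inequalityforQ}).

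\textbf{Main obstacle.} The technically delicate point is the bookkeeping in the last two paragraphs: one needs to verify that, after choosing $\eta=2\s$ in Young's inequality for the $[E_\bfI,\lambda]$-commutator and for the $\ar$-part of $E_\bfI(R)$, the lower-order debts at each level are truly dominated by the negative diagonal contribution of the levels below, with a constant in (\ref{eq:DeltaInequality}) that does not depend on $k_0$ in a prohibitive way. Equally important is the delicate interplay between the weights in $Q_m^2$ and the exponents appearing in the pointwise right-hand side estimates: the gain of one power of $\mbD$ from the bootstrap--used to convert $(\mbD+s^\s)$ into $\mbD(\mbD+s^\s)$--must be carried out level-by-level and must not interfere with the algebraic cancellations giving $\mC_1\ar-2\s$ and $\mC_1\ar-4\s$.
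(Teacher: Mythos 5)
Your proposal follows the paper's proof essentially line for line: differentiate $s^{2\alpha}|E_\bfI(\delta X)|^2$ using the evolution equations to split into a diagonal ODE term, an $[E_\bfI,\lambda]$-commutator term, and the inhomogeneous remainder; absorb the commutator debt into the lower-order diagonal via the geometric weights $\delta_m$ and the recursive inequality (\ref{eq:DeltaInequality}); plug in the pointwise estimates of Lemmata~\ref{lemma:EstFCLow}, \ref{lemma:EstSCLow}, \ref{lemma:EstFFLow}, \ref{lemma:EstSFeILow}, \ref{lemma:EstSFe0Low} to extract the $\ar$-linear piece and the $\mbD$-quadratic junk; and integrate.

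One small bookkeeping slip: you choose $\eta = 2\s$ in Young's inequality but then quote the off-diagonal debt with weight $\s^{-1}B_{k_0}$ and, a sentence later, $\s^{-2}B_{k_0}$; these should all be $(2\s)^{-1}B_{k_0}$. Traced through with $\eta = 2\s$, the absorption adds $+\s/2$ back to the diagonal and you land at $-3\s/2$ and $-7\s/2$, not the stated $-2\s$ and $-4\s$. The paper uses $\eta = \s$, giving $+\s$ from the diagonal and $+\s$ from the absorbed debt, hence $-4\s + 2\s = -2\s$ and $-6\s + 2\s = -4\s$ exactly. Your version is weaker by a constant factor (still negative, so it would still let Proposition~\ref{prop:EstimateEnergyLow} go through after shrinking $\ar_\mbL$), but it does not reproduce the statement of the lemma as written.
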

\begin{proof}
  We write $\Q(t,x)^2 - \Q(t_0,x)^2 = \int_{t}^{t_0} -\d_s \left[ \Q(s,x)^2 \right] \md s$ and estimate the right hand side.
  To clarify the common structure of the terms to be estimated, it is convenient to consider two abstract functions, say $f$
  (one of $\dee{I}{i}$ etc.) and $\mfp$ (one of $1$, $\bq_I$, $-\bq_I$ or $\bq_I+\bq_J-\bq_K$, $I\neq J$), and a constant
  $\varkappa\in \{0,2,3\}$, and to calculate
  \begin{equation}\label{eq:qu to be est structure coeff abs}
    \begin{split}
       \textstyle{\int}_{t}^{t_0}  -\d_s (s^{1-\varkappa\s}E_{\bfI}f )^2 \md s  
      = &  \textstyle{\int}_{t}^{t_0} 2 s^{2(1-\varkappa\s)}
      \{ E_{\bfI} [ ( -s^{-1}\mfp - \d_s ) f] \cdot E_{\bfI} f+ s^{-1} [E_{\bfI},\mfp] (f)\cdot E_{\bfI}f \\
      & \phantom{\textstyle{\int}_{t}^{t_0} 2 s^{2(1-\varkappa\s)},}
      - s^{-1} (1-\varkappa\s- \mfp ) |E_{\bfI}f|^2\} \md s;
    \end{split}
  \end{equation}  
  note that $[E_\bfI,\d_t] = 0$ and that if $|\bfI| = 0$ there is no commutator term. If $f=\dep{0}$ or $f=\dek{IJ}$, then $\mfp=1$ and $\varkappa=0$,
  so that the last two terms in the integrand vanish. In the remaining cases, if $m\leq k_{0}+1$, then
  \begin{equation*}    
    \textstyle{\sum}_{|\bfI| = m } | [E_{\bfI},\mfp] (f) |^2
    \leq B_{k_0} \textstyle{\sum}_{|\bfJ| \leq m-1} | E_{\bfJ} f |^2
  \end{equation*}
  due to (\ref{eq:Bkzdef}) and (\ref{eq:commutatorraisedtom}) with $m$ and $\ell$ in (\ref{eq:commutatorraisedtom}) replaced by $2$ and
  $m$ respectively; note that if $J=K$, then $\bq_I+\bq_J-\bq_K=\bq_I$. Next, by Young's inequality,
  \begin{equation*}
    \begin{split}
      2 \textstyle{\sum}_{|\bfI| = m } | [E_{\bfI},\mfp] (f) |\cdot | E_{\bfI} f |
      \leq & \textstyle{\sum}_{|\bfI|=m} (  \s^{-1} | [E_{\bfI},\mfp](f) |^2+ \s  | E_{\bfI}f |^2  ) \\
    \leq & \s^{-1} B_{k_0} \textstyle{\sum}_{|\bfJ| \leq m-1} | E_{\bfJ} f |^2
        +  \s   \textstyle{\sum}_{|\bfI| = m}| E_{\bfI}f |^2.
    \end{split}
  \end{equation*}  
  As $B_{k_0} \s^{-1} \textstyle{\sum}_{j = m+1}^{k_0+1} \delta_j \leq \s \delta_m$, see \eqref{eq:DeltaInequality}, it follows that
  for $l\in\{k_{0},k_{0}+1\}$,
  \begin{equation*}
    \begin{split}
       \s^{-1} B_{k_0} \textstyle{\sum}_{m = 1}^{l}  \delta_m
      \textstyle{\sum}_{|\bfJ| \leq m-1} | E_{\bfJ}f|^2 
      = &  \s^{-1} B_{k_0} \textstyle{\sum}_{m = 1}^{l} \textstyle{\sum}_{j=0}^{m-1} \delta_m
      \textstyle{\sum}_{|\bfJ| = j}  
      | E_{\bfJ}f |^2 \\
      = &  \s^{-1} B_{k_0} \textstyle{\sum}_{m=0}^{l-1} \textstyle{\sum}_{j=m+1}^{l} \delta_j
      \textstyle{\sum}_{|\bfJ| = m}  
      |E_{\bfJ}f|^2 \\
      \leq & \s \textstyle{\sum}_{m=0}^{l-1} \delta_m \textstyle{\sum}_{|\bfJ| =m } 
      |E_{\bfJ}f|^2.
    \end{split}
  \end{equation*}
  Combining the above yields, for $l\in\{k_{0},k_{0}+1\}$,
  \begin{equation*}
    \begin{split}
      \textstyle{\sum}_{m = 0}^{l} \delta_m \textstyle{\sum}_{|\bfI| = m} 2| [E_{\bfI},\mfp] (f)
      | | E_{\bfI}f |
      \leq &
      2 \s \textstyle{\sum}_{m = 0}^{l} \delta_m\textstyle{\sum}_{|\bfI| = m}|E_{\bfI} f|^2 .
    \end{split}
  \end{equation*}
  Turning to the last term in the integrand on the right hand side of (\ref{eq:qu to be est structure coeff abs})
  (in case $\mfp\neq 1$), it contributes (ignoring the powers of $s$)
  \begin{equation*}
    \begin{split}
      - \textstyle{\sum}_{m=0}^{l}\delta_{m}\textstyle{\sum}_{|\bfI|=m}2(1-\varkappa\s- \mfp ) |E_{\bfI}f|^2
      \leq -2(5-\varkappa)\s \textstyle{\sum}_{m=0}^{l}\delta_{m}\textstyle{\sum}_{|\bfI|=m}|E_{\bfI}f|^2
    \end{split}
  \end{equation*}
  to the sum; note that, due to Lemma~\ref{le: first bound on $q_I$}, $(1-\varkappa\s)- \mfp>(5-\varkappa)\s$.
  Adding the last two estimates, the total contribution from the last two terms in the integrand on the right
  hand side of (\ref{eq:qu to be est structure coeff abs}) (for $\mfp\neq 1$) can be estimated from above by
  \[
  -2(4-\varkappa)\s \textstyle{\sum}_{m=0}^{l}\delta_{m}\textstyle{\sum}_{|\bfI|=m}\textstyle{\int}_{t}^{t_{0}}s^{-1+2(1-\varkappa\s)}|E_{\bfI}f|^2\md s.
  \]
  These terms give rise to $-2\s$ and $-4\s$ on the right hand side of (\ref{eq:inequalityforQ}).

  In order to estimate the contribution from the first term in the integrand on the right hand side of (\ref{eq:qu to be est structure coeff abs}),
  it is sufficient to appeal to Lemmata~\ref{lemma:EstFCLow}, \ref{lemma:EstSCLow}, \ref{lemma:EstFFLow}, \ref{lemma:EstSFeILow} and
  \ref{lemma:EstSFe0Low}. For example, if $f=\dee{I}{i}$, $\mfp=\bq_I$ and $\varkappa=3$, Lemma~\ref{lemma:EstFCLow} yields
  \begin{align*}
    & \textstyle{\sum}_{I,i}\textstyle{\int}_{t}^{t_0} 2 s^{2(1-3\s)} 
    \big| E_{\bfI} \big( \big( -\tfrac{\bq_{I}}{s} - \d_s \big)
    ( \dee{I}{i}) \big) \big| 
    | E_{\bfI} (\dee{I}{i})|\md s  \\
    \leq & C \ar \textstyle{\sum}_{I,i} \textstyle{\sum}_{|\bfJ| \leq |\bfI|} \textstyle{\int}_{t}^{t_0} s^{-1 + 2(1-3\s)}
    |E_{\bfJ} (\dee{I}{i})|^2 \md s+ C \textstyle{\int}_{t}^{t_0} s^{-1+\s} \mbD ( \mbD + s^{\s} ) \md s.
  \end{align*}
  In particular, it follows that 
  \begin{align*}
    & \textstyle{\sum}_{I,i}\textstyle{\sum}_{m = 0}^{k_0+1}  \delta_m \textstyle{\sum}_{|\bfI| = m} 
    \textstyle{\int}_{t}^{t_0} 2 s^{2(1-3\s)}
    \big| E_{\bfI} \big( \big( -\tfrac{\bq_{I}}{s} - \d_s \big)
    \big( \dee{I}{i} \big) \big) \big|
    \big|  E_{\bfI} (\dee{I}{i}) \big| \md s  \\
    \leq & C \ar \textstyle{\sum}_{I,i}\textstyle{\sum}_{m = 0}^{k_0+1} \delta_m \textstyle{\sum}_{|\bfI| = m}
    \textstyle{\int}_{t}^{t_0} s^{-1 + 2(1-3\s)} 
    |{E_{\bfI} (\dee{I}{i}})|^2 \md s + C \textstyle{\int}_{t}^{t_0} s^{-1+\s} \mbD \left( \mbD + s^{\s} \right) \md s.
  \end{align*}
  The remaining estimates are similar. The lemma follows. 
\end{proof}

\begin{proof}[Proof of Proposition~\ref{prop:EstimateEnergyLow}] 
By the assumptions of the proposition, we may make use
    of the conclusions of Lemma~\ref{lemma:EstimateForQ}.
Let $\ar_{\mbL} \in (0,\frac{1}{6n}]$ be such that $\mC_{1} \ar_{\mbL} \leq \s$,
where $\mC_{1}$ is the constant appearing in the statement of Lemma~\ref{lemma:EstimateForQ}.
For $\ar \in (0,\ar_{\mbL}]$, the last two terms on the right-hand side of (\ref{eq:inequalityforQ})
are then non-positive. Next, note that there is a standard constant $C_Q > 1$ such that 
\begin{equation*}
    C_Q^{-1} \mbL_{(e,\omega,\g,k,\phi)} (t)
        \leq \|\Q(t, \cdot)\|_{C^{0}(\S_t)}
        \leq C_Q \mbL_{(e,\omega,\g,k,\phi)} (t)
\end{equation*}
for any $t \in [t_{\rob}, t_0]$.
In particular, it follows now that  
\begin{equation*}
\Q(t,x)^2 \leq C_Q \mbL_{(e,\o,\g,k,\phi)}(t_0)^2 
+ C \textstyle{\int}_{t}^{t_0} s^{-1+\s} \mbD(s) (\mbD(s) + s^{\s}) \md s.
\end{equation*}
The desired inequality then follows by taking the supremum of both sides over $x \in \S$,
and then appealing to the inequality
$\mbL_{(e,\omega,\g,k,\phi)} (t) \leq C_Q \|\Q(t, \cdot)\|_{C^{0}(\S_t)}$.
\end{proof}

\subsection{The higher-order energy estimate}
We now continue by estimating
\[
\mbH_{(e,\o,\g,k,\phi)}^2:=\mbH_{(e,\o)}^2+\mbH_{(\g,k)}^2+\mbH_{(\phi)}^2.
\]
\begin{prop} \label{prop:EstimateEnergyHigh}
  Let $\s_p$, $\s_V$, $\s$, $k_0$, $k_1$, $(\S, h_{\refer})$, $(E_{i})_{i=1}^{n}$ and $V$ be as in Theorem~\ref{thm: big bang formation}
  and let $\rho_0 >0$. Then there is a standard constant $\ar_{\mbH}\in (0, \tfrac{1}{6n}]$ such that if $t_{0}\leq \tau_{H}$
  (see Lemma~\ref{lemma:AuxiliaryLapseEstimate}), $t_\rob<t_0$, $\ar \in (0, \ar_{\mbH}]$ and
  Assumption~\ref{ass:Bootstrap} is satisfied for this choice of parameters, then, for any $t \in [t_{\rob}, t_0]$,
\begin{align*}
\mbH_{(e,\o,\g,k,\phi)}(t)^2 
    &\leq \mbH_{(e,\o,\g,k,\phi)}(t_0)^2
    + C \textstyle{\int}_{t}^{t_0} s^{-1+\s} (\mbD(s) + s^{\s})(\mbD(s) + s^{3\s}) \md s.
\end{align*}
\end{prop}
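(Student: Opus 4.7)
I would mirror the strategy of Proposition~\ref{prop:EnergyFC}, but now treat $\dega$, $\dek$, $\vdep$ and $\dep{0}$ simultaneously, since the cross-couplings between them cannot be closed individually. Split $\mbH_{(e,\o,\g,k,\phi)}^2 = \mbH_{(e,\o)}^2 + \mbH_{(\g,k)}^2 + \mbH_{(\phi)}^2$. Proposition~\ref{prop:EnergyFC} already yields the desired bound on $\mbH_{(e,\o)}^2$, provided $\ar_{\mbH}$ is chosen small enough that $C\ar_{\mbH} - 4\s - 2A \leq 0$; then the integral $\int_t^{t_0} s^{-1}\mbH_{(e,\o)}^2\, ds$ may be dropped. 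For the remaining two pieces I would differentiate $s^{2(A+1)}\bigl(\tfrac12\|\dega{}\|_{H^{k_1}(\S_s)}^2 + \|\dek{}\|_{H^{k_1}(\S_s)}^2 + \|\vdep\|_{H^{k_1}(\S_s)}^2 + \|\dep{0}\|_{H^{k_1}(\S_s)}^2\bigr)$ in time, producing the weight contribution $-2(A+1)s^{-1}\bigl(\mbH_{(\g,k)}^2 + \mbH_{(\phi)}^2\bigr)$ plus the inner piece $s^{2(A+1)}(-\d_s F^2)$.

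For each deviation variable $F$, with its associated eigenvalue weight $\mfp \in \{\bq_I+\bq_J-\bq_K,\,1,\,\bq_I,\,1\}$, I would write
\begin{equation*}
-\langle\d_s F, F\rangle_{H^{k_1}(\S_s)} = \bigl\langle \bigl(-\tfrac{\mfp}{s} - \d_s\bigr) F,\, F\bigr\rangle_{H^{k_1}(\S_s)} + \tfrac{1}{s}\langle \mfp F, F\rangle_{H^{k_1}(\S_s)}.
\end{equation*}
The principal-operator pieces are bounded via Lemmata~\ref{lemma:EstimateSCHigh}, \ref{lemma:EstimateFFHigh}, \ref{lemma:EstimateSFeIHigh} and \ref{lemma:EstimateSFe0High}; each of these, however, retains explicit subtracted cross terms (involving $e(N)\chk$, $Ne(k)$, $ee(N)$, $Ne(\g)$, $e(N)\chdtp$, $Ne(\dtp)$ and $Ne_Ie_I\phi$). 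After integrating by parts to move the frame derivatives off of $N$, $\chga$ and $\che(\chphi)$ and onto the deviation quantities, these subtracted cross terms rearrange to form precisely the expressions $\Z(s)$ of \eqref{eq:Zdefinition} and $\Xi(s)$ defined just above Lemma~\ref{lemma:EstimateForXi}. Applying Lemmata~\ref{lemma:EstimateForZ} and \ref{lemma:EstimateForXi} then bounds their total contribution by
\begin{equation*}
(C\ar + 2n(1+\s))\,s^{-1}\bigl(\mbH_{(\g,k)}^2 + \mbH_{(\phi)}^2\bigr) + C s^{-1+\s}(\mbD+s^\s)(\mbD+s^{3\s}),
\end{equation*}
and the commutator errors from the integration by parts are of the same lower-order form and absorbed.

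The diagonal contributions $s^{-1}\langle \mfp F, F\rangle$ are estimated by Lemma~\ref{lemma:noGNS} with $\eta = \s$, using the bounds $|\bq_I| < 1 - 5\s$ and $|\bq_I+\bq_J-\bq_K| < 1 - 5\s$ (for $I\neq J$) from Lemma~\ref{le: first bound on $q_I$} together with the antisymmetry $\dega{IIK} = 0$. After multiplying by $s^{2(A+1)}$, they produce at most $\tfrac{2(1-4\s)}{s}$ times the $\gamma$- and $\vdep$-summands and $\tfrac{2}{s}$ times the $\dek$- and $\dep{0}$-summands of $\mbH_{(\g,k)}^2 + \mbH_{(\phi)}^2$, modulo $Cs^{-1+\s}\mbD$ remainders. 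Adding the weight contribution, the diagonal contributions and the $\Z$-contribution, the net coefficient of $s^{-1}(\mbH_{(\g,k)}^2 + \mbH_{(\phi)}^2)$ in the resulting integral is at most
\begin{equation*}
-2(A+1) + 2 + C\ar + 2n(1+\s),
\end{equation*}
which by the explicit choice $A = 2(n+1)(1+2\s)$ equals $-2n - 4 - (6n+8)\s + C\ar$, strictly negative for $\ar_{\mbH}$ sufficiently small. Integrating from $t$ to $t_0$ and discarding the resulting non-positive $\int s^{-1}(\mbH_{(\g,k)}^2 + \mbH_{(\phi)}^2)\,ds$ term yields the claimed bound.

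The principal technical obstacle is the careful integration-by-parts bookkeeping needed to recognize that the subtracted cross terms from Lemmata~\ref{lemma:EstimateSCHigh} and \ref{lemma:EstimateFFHigh} really do reassemble into $\Z$. This is where the momentum constraint enters the higher-order argument: the potentially dangerous term $\langle \dega{JKK}, N e_I(\dek{IJ})\rangle$ inside $\Z$ is controllable only once $e_I(k_{IJ})$ is rewritten via \eqref{eq:alternate MC}, leading (through the no-GNS inequality) to the precise coefficient $2n(1+\s)$; this is exactly the content of Lemma~\ref{lemma:EstimateMomentumConstraints}. The spatial inhomogeneity of the scaffold produces extra derivatives of the $\bq_I$ in intermediate steps which, being controlled in $H^{k_1+2}$ by the FRS expansion-normalized bounds of Definition~\ref{def:DecomposedData}, are absorbed into the lower-order $C s^{-1+\s}(\mbD + s^\s)^2$ error.
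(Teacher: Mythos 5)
Your overall decomposition — splitting off $\mbH_{(e,\o)}$ via Proposition~\ref{prop:EnergyFC}, then writing $-\d_s(\cdots)^2$ as a principal-operator piece plus a diagonal piece plus cross terms, using Lemmata~\ref{lemma:EstimateSCHigh}, \ref{lemma:EstimateFFHigh}, \ref{lemma:EstimateSFeIHigh}, \ref{lemma:EstimateSFe0High}, the no-GNS inequality and the momentum constraint via $\Z$ — matches the paper's strategy. But there is a genuine gap in your handling of the subtracted cross terms, and a factor-of-two error, and the two together happen to hide the fact that your coefficient count is wrong.

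The gap: you claim that after integration by parts all of the subtracted cross terms (including $e(N)\chk$, $e(N)\chdtp$) reassemble into $\Z$ and $\Xi$. They do not. Inspect the definition \eqref{eq:Zdefinition}: $\Z$ pairs things like $\langle\dega{}, Ne(\dek{})\rangle$, $\langle\dega{}, Ne(\chk)\rangle$ and $\langle e(N), e(\dek{})\rangle$; there is no pairing of the form $\langle e(N)\chk, \dega{}\rangle$, and integration by parts cannot produce one since the derivative falls on $N$, not on a deviation quantity. The same applies to $t^{-1}(N-1)\chk$, $e(N)\chdtp$ and $t^{-1}(N-1)\chdtp$, which you do not even list. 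These four cross terms form a third group (the paper's $\mfE_3$), and they have to be estimated by appealing to \eqref{eq:noGNSInnerProduct2} together with the sharpened higher-order lapse estimate Lemma~\ref{lemma:HigherOrderLapseEstimate}, whose key content is that $\mbH_{(N)} \leq (C\ar + 2 + \s)\,t^{A+1}\bigl(\|\dek{}\|_{H^{k_1}}^2 + \|\dep{0}\|_{H^{k_1}}^2\bigr)^{1/2} + \text{(lower order)}$ — the crude bootstrap bound $\mbH_{(N)}\leq r$ would not suffice. Done properly this contributes $(2+\s)^2 \approx 4$ to the coefficient of $s^{-1}(\mbH_{(\g,k)}^2 + \mbH_{(\phi)}^2)$, which your accounting omits entirely.

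Second, because $\mfE_4$ contains $-2\Z(t) - 2\Xi(t)$ (not $-\Z - \Xi$), Lemma~\ref{lemma:EstimateForZ} contributes $4n(1+\s)$, not the $2n(1+\s)$ you wrote. Adding back both missing pieces, $(2+\s)^2$ and the factor $2$, your coefficient $-2(A+1) + 2 + C\ar + 2n(1+\s)$ becomes $-2(A+1) + 2 + (2+\s)^2 + 4n(1+\s) + C\ar$, and with $A = 2(n+1)(1+2\s)$ this is $-4(n+1)\s + \s^2 + C\ar$, which is still negative — but only barely, and only because $A$ was set at precisely this value. Your version makes the slack look much larger than it is, which is exactly the kind of bookkeeping error this step is designed to catch.
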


We prove the above proposition via the following energy estimate:
\begin{lemma} \label{lemma:EstimateEnergyHigh}
  Let $\s_p$, $\s_V$, $\s$, $k_0$, $k_1$, $(\S, h_{\refer})$, $(E_{i})_{i=1}^{n}$ and $V$ be as in Theorem~\ref{thm: big bang formation}.
  Let $\varkappa_{A}:=(n+1)(2+3\s)$ and $\rho_0 >0$. Then there is a standard constant $\mC_{2}$ such that if $t_{0}\leq \tau_{H}$
  (see Lemma~\ref{lemma:AuxiliaryLapseEstimate}), $t_\rob<t_0$, $\ar \in (0, \tfrac{1}{6n}]$ and
  Assumption~\ref{ass:Bootstrap} is satisfied for this choice of parameters, then, for any $t \in [t_{\rob}, t_0]$,
\begin{align}
\begin{split} \label{eq:HighEnergyEstimate}
\mbH_{(e,\o,\g,k,\phi)}(t)^2 
    \leq & \mbH_{(e,\o,\g,k,\phi)}(t_0)^2
    + \big(\mC_{2} \ar + 2(\varkappa_{A}-A)\big) \textstyle{\int}_{t}^{t_0}
        s^{-1} \mbH_{(e,\o,\g,k,\phi)}(t)^2 ds \\
    & + C \textstyle{\int}_{t}^{t_0} s^{-1+\s}  (\mbD(s) + s^{\s})(\mbD(s) + s^{3\s}) \md s.
\end{split}
\end{align}
\end{lemma}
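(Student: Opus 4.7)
The strategy is to differentiate the full higher-order energy $\mbH_{(e,\o,\g,k,\phi)}^2$ in time, to express the result as a controllable coefficient times itself plus junk, and to integrate from $t$ to $t_0$. The frame/co-frame contribution is already handled by Proposition~\ref{prop:EnergyFC}, whose conclusion is exactly of the target form with coefficient $C\ar - 4\s - 2A$. What remains is a matching inequality for
\[
\mbH_{(\g,k)}^2 + \mbH_{(\phi)}^2
= t^{2(A+1)}\big(\tfrac{1}{2}\|\dega{}\|_{H^{k_1}}^2 + \|\dek{}\|_{H^{k_1}}^2 + \|\vdep\|_{H^{k_1}}^2 + \|\dep{0}\|_{H^{k_1}}^2\big),
\]
for which I would compute $-\d_t\|\cdot\|_{H^{k_1}}^2 = 2\langle -\d_t(\cdot),\cdot\rangle_{H^{k_1}}$ on each factor and substitute the ``diagonal + structural + junk'' decompositions provided by Lemmas~\ref{lemma:EstimateSCHigh}, \ref{lemma:EstimateFFHigh}, \ref{lemma:EstimateSFeIHigh} and \ref{lemma:EstimateSFe0High}.

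The diagonal pieces $\tfrac{\bq_I+\bq_J-\bq_K}{t}\dega{IJK}$, $\tfrac{\bq_I}{t}\dep{I}$, $\tfrac{1}{t}\dek{IJ}$ and $\tfrac{1}{t}\dep{0}$ are bounded via Lemma~\ref{lemma:noGNS} exactly as in the proof of Proposition~\ref{prop:EnergyFC}, using $\bq_I + \bq_J - \bq_K < 1 - 5\s$ and $|\bq_I| < 1 - 5\s$ from Lemma~\ref{le: first bound on $q_I$}. Together with the weight-derivative term $-2(A+1)t^{-1}(\mbH_{(\g,k)}^2+\mbH_{(\phi)}^2)$ they contribute at most $(-2A)t^{-1}(\mbH_{(\g,k)}^2+\mbH_{(\phi)}^2)$ modulo the junk bound $C t^{-1+\s}\mbD(\mbD+t^\s)$. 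The junk parts of the evolution equations, paired against the respective deviation quantity by Cauchy--Schwarz in $H^{k_1}$, contribute the standard $C\ar t^{-1}(\mbH_{(\g,k)}^2+\mbH_{(\phi)}^2) + Ct^{-1+\s}(\mbD+t^\s)\mbH$ using the four lemmas cited above.

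The structural coupling terms are the critical point. After integration by parts on $\S_t$ via (\ref{eq:Xfint}), the antisymmetry of $\dega{IJK}$ in $I,J$ and the symmetry of $\dek{IJ}$, the sum of the inner products of the structural parts against $\dega$, $\dek$, $\vdep$, $\dep{0}$ reduces exactly, modulo commutator terms of junk quality, to $\Z(t) + \Xi(t)$ from (\ref{eq:Zdefinition}) together with terms involving only $\ear N$ (which are absorbed using Lemma~\ref{lemma:HigherOrderLapseEstimate}). Lemmas~\ref{lemma:EstimateForZ} and \ref{lemma:EstimateForXi} then yield
\[
t^{2(A+1)}\big(|\Z(t)| + |\Xi(t)|\big)
\leq \big(C\ar + 2n(1+\s)\big) t^{-1}\big(\mbH_{(\g,k)}^2 + \mbH_{(\phi)}^2\big) + C t^{-1+\s}(\mbD+t^\s)(\mbD+t^{3\s}).
\]
Collecting all contributions and combining with Proposition~\ref{prop:EnergyFC} gives $-\d_t\mbH_{(e,\o,\g,k,\phi)}^2 \leq [\mC_2\ar + 2n(1+\s) - 2A]\,t^{-1}\mbH_{(e,\o,\g,k,\phi)}^2 + C t^{-1+\s}(\mbD+t^\s)(\mbD+t^{3\s})$, and integrating from $t$ to $t_0$ produces (\ref{eq:HighEnergyEstimate}) via the elementary bound $2n(1+\s) \leq 2(n+1)(2+3\s) = 2\varkappa_A$.

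The main obstacle is the identification of the structural coupling with $\Z + \Xi$: the terms $2Ne_{[I}k_{J]K}$ (from $\dega$) and $N(e_K\g_{K(IJ)}+e_{(I}\g_{J)KK})$ (from $\dek$) at first sight introduce one more spatial derivative than the top-order estimate can absorb, and this gain is recovered only by using the above symmetries in tandem with the momentum constraint in the precise form exploited by Lemma~\ref{lemma:EstimateMomentumConstraints}. It is exactly this cancellation, combined with the approximate asymptotic Hamiltonian constraint in the form (\ref{eq:BoundEigenvalues}) that is needed to apply Lemma~\ref{lemma:noGNS} with coefficient $1+\s/4$ rather than something larger, that produces the sharp $2n(1+\s)$ on the right-hand side and hence the final effective coefficient $\mC_2 \ar + 2(\varkappa_A - A)$.
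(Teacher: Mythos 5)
Your overall architecture is the same as the paper's: split $-\d_t\mbH_{(e,\o,\g,k,\phi)}^2$ into the frame/co-frame piece (handled by Proposition~\ref{prop:EnergyFC}), a junk piece (Lemmas~\ref{lemma:EstimateSCHigh}, \ref{lemma:EstimateFFHigh}, \ref{lemma:EstimateSFeIHigh}, \ref{lemma:EstimateSFe0High}), a diagonal piece (Lemma~\ref{lemma:noGNS}), a weight-commutator term, and a structural coupling piece that is tamed via integration by parts, the momentum constraint ($\Z$, Lemma~\ref{lemma:EstimateForZ}) and the asymptotic Hamiltonian constraint \eqref{eq:BoundEigenvalues}. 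The diagonal-plus-weight bookkeeping giving $-2A$ is correct.

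However, your accounting of the structural coupling contains a genuine gap that causes the claimed intermediate estimate to be false, even though the final inequality is true. Two contributions are undercounted. First, since the energy is quadratic, $-\d_t\|\cdot\|_{H^{k_1}}^2 = -2\langle\d_t(\cdot),\cdot\rangle_{H^{k_1}}$, so in the decomposition of the structural coupling the momentum-constraint quantities appear as $-2\Z(t)-2\Xi(t)$ (see the factor $-2$ in front of both $\Z$ and $\Xi$ in the paper's expansion of this term). You wrote $\Z+\Xi$ without the factor $2$, which turns the contribution from Lemma~\ref{lemma:EstimateForZ} into $(C\ar+2n(1+\s))$ rather than the correct $(C\ar + 4n(1+\s))$. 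Second, the structural parts that are subtracted in Lemmas~\ref{lemma:EstimateSCHigh}--\ref{lemma:EstimateSFe0High} contain not only the $\Z,\Xi$-type terms (with one spatial derivative on a dynamical variable) but also the lapse-$\chk$ cross terms $e_I(N)\chk_{JK}$, $t^{-1}(N-1)\chk_{IJ}$, $e_I(N)\chdtp$, $t^{-1}(N-1)\chdtp$, paired against $\dega$, $\dek$, $\dep{I}$, $\dep{0}$. These are \emph{not} absorbable into junk or a $C\ar$-term by Lemma~\ref{lemma:HigherOrderLapseEstimate}: that lemma bounds $\mbH_{(N)}$ by $(C\ar + 2 + \s)$ times the deviation norm plus junk, and, after pairing via \eqref{eq:noGNSInnerProduct2} and \eqref{eq:BoundEigenvalues}, this group contributes a coefficient of $(C\ar + (2+\s)^2)$, not $C\ar$. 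The correct total structural coupling coefficient is therefore $4n(1+\s) + (2+\s)^2$, not $2n(1+\s)$.

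Consequently, your stated intermediate inequality with coefficient $\mC_2\ar + 2n(1+\s) - 2A$ is too strong to follow from the lemmas you cite, and your final elementary bound $2n(1+\s)\le 2\varkappa_A$ is not the relevant one. What actually closes the argument is the tighter, but still true, inequality
\begin{equation*}
4n(1+\s) + (2+\s)^2 = 4(n+1) + 4(n+1)\s + \s^2 \le 4(n+1) + 6(n+1)\s = 2\varkappa_A,
\end{equation*}
which holds with margin $2(n+1)\s - \s^2 > 0$. This is precisely why $\varkappa_A = (n+1)(2+3\s)$ is chosen as it is: the $\s$-dependent excess must absorb both the doubled momentum-constraint contribution and the lapse-$\chk$ cross terms. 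If you repair the two accounting errors, the rest of your argument goes through as the paper's does.
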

\begin{proof}
The proof is similar to the proof of Proposition~\ref{prop:EnergyFC}. We begin by writing 
\begin{equation} \label{eq:FTCHigherOrderEnergy}
\mbH_{(e,\o,\g,k,\phi)}(t)^2 - \mbH_{(e,\o,\g,k,\phi)}(t_0)^2
= - \textstyle{\int}^{t}_{t_0} \d_s \left[\mbH_{(e,\o,\g,k,\phi)}(s)^2\right] \md s.
\end{equation}
The estimate for the energy of the components of the frame and co-frame follows immediately from Proposition~\ref{prop:EnergyFC}.
We therefore focus on
\begin{equation*}
  \begin{split}
    & -\d_t\big(\tfrac{1}{2}\| \dega{} \|_{H^{k_1}(\S_t)}^2
    + \| \dek{} \|_{H^{k_1}(\S_t)}^2   + \| \vdep \|_{H^{k_1}(\S_t)}^2
    + \| \dep{0} \|_{H^{k_1}(\S_t)}^2\big) 
    = \textstyle{\sum}_{i=1}^{4}\mfE_{i}.
  \end{split}
\end{equation*}
Here, 
\begin{equation*}
  \begin{split}
    \mfE_{i} := &\textstyle{\sum}_{I,J,K}  \langle E^{i,\g}_{IJK},
    \dega{IJK}  \rangle_{H^{k_1}(\S_t)} 
    +  \textstyle{\sum}_{I,J}  \langle E^{i,k}_{IJ},
    \dek{IJ}  \rangle_{H^{k_1}(\S_t)} \\
    & +\textstyle{\sum}_{I}  \langle E^{i,\phi}_I,
    \dep{I}  \rangle_{H^{k_1}(\S_t)} 
    + \langle E^{i,\phi}_{0},
    \dep{0}  \rangle_{H^{k_1}(\S_t)}, 
  \end{split}
\end{equation*}
where we use the shorthand notation $\bq_{IJK}:=\bq_I + \bq_J - \bq_K$ and
\begin{align*}
  E^{1,\g}_{IJK} := & \big(-\tfrac{\bq_{\uI\uJ\uK}}{t}-\d_t\big)
  (\dega{\uI\uJ\uK}) - 2 e_I (N) \chk_{JK} - 2 N e_I (k_{JK}), \\
  E^{1,k}_{IJ} := & 2\big(-\tfrac{1}{t}-\d_t\big) (\dek{IJ}) -
  2t^{-1} (N-1) \chk_{IJ} + 2e_I e_J (N)\\
  & - 2N  e_K (\g_{KIJ}) - 2N e_I (\g_{JKK}), \\
  E^{1,\phi}_{I} := & 2\big(-\tfrac{\bq_{\uI}}{t}-\d_t\big)(\dep{\uI})+2e_I(N) \chdtp
   + 2N e_I \dtp, \\
  E^{1,\phi}_{0} := & 2\big(-\tfrac{1}{t}-\d_t\big)
  (\dep{0}) - 2t^{-1} (N-1) \chdtp + 2N e_I e_I (\phi).
\end{align*}
Moreover, 
\begin{align*}
  E^{2,\g}_{IJK} :=  \tfrac{\bq_{\uI\uJ\uK}}{t}\dega{\uI\uJ\uK}, \ \
  E^{2,k}_{IJ} :=  \tfrac{2}{t} \dek{IJ}, \ \
  E^{2,\phi}_{I} :=  \tfrac{2\bq_{\uI}}{t}\dep{\uI},\ \ 
  E^{2,\phi}_{0} :=  \tfrac{2}{t} \dep{0}.
\end{align*}
In addition,
\begin{align*}  
  E^{3,\g}_{IJK} := &   2e_I (N) \chk_{JK}, \ \ \ 
   E^{3,k}_{IJ} :=  2t^{-1} (N-1) \chk_{IJ}, \ \ \
  E^{3,\phi}_{I} := - 2e_I(N) \chdtp, \\
  E^{3,\phi}_{0} := &   2t^{-1} (N-1) \chdtp.
\end{align*}
Finally,
\begin{align*}
  E^{4,\g}_{IJK} := & 2N e_I (k_{JK}), \ \
   E^{4,k}_{IJ} :=  -  2e_I e_J (N) + 2N  e_K (\g_{KIJ}) + 2N e_I (\g_{JKK}) , \\
  E^{4,\phi}_{I} := &  - 2N e_I \dtp, \ \
   E^{4,\phi}_{0} := - 2N e_I e_I (\phi).
\end{align*}
To estimate $\mfE_{1}$, we use the Cauchy-Schwarz inequality for the $H^{k_1}$-inner product (and for finite dimensional
sums); multiply both sides of the resulting inequality by $t^{2(A+1)}$; and appeal to 
Lemmata~\ref{lemma:EstimateSCHigh}, \ref{lemma:EstimateFFHigh},  
\ref{lemma:EstimateSFeIHigh} and \ref{lemma:EstimateSFe0High}. 
The conclusion is that 
\begin{equation*}
  \begin{split}
    t^{2(A+1)}\mfE_{1}\leq & t^{-1} C \ar  \big( \mbH_{(\g,k)}^2 + \mbH_{(\phi)}^2 \big) + C  t^{-1+\s} \mbD (\mbD + t^{\s}).
  \end{split}
\end{equation*}
To estimate $\mfE_{2}$, note, first, that the contribution from $E^{2,k}_{IJ}$ and $E^{2,\phi}_{0}$ is
\begin{equation*}
  \begin{split}
    & \tfrac{2}{t} \big( \|\dek{} \|_{H^{k_1}(\S_t)}^2    
    + \| \dep{0} \|_{H^{k_1}(\S_t)}^2 \big).
  \end{split}
\end{equation*}
Next, by (\ref{eq: gamma bar p bar assumption}) and (\ref{eq:noGNSInnerProduct}) with $\eta = \s$, combined with
Remark~\ref{remark:supremum iso Cz norm},
\begin{equation*}
  \begin{split}
    & \textstyle{\sum}_{I,J,K} \big \langle \tfrac{\bq_{IJK}}{t} \dega{IJK},\dega{IJK} \big \rangle_{H^{k_1}(\S_t)}  \\
    \leq & t^{-1} \big( \s + \textstyle{\max}_{I \neq J}
    \sup_{x\in\S} \bq_{IJK}(t,x) \big) \| \dega{} \|_{H^{k_1}(\S_t)}^2 
    + C t^{-1}
    \|\dega{} \|_{C^{k_{0}}(\S_t)} \| \dega{} \|_{H^{k_1}(\S_t)} \\
    \leq & t^{-1} \big(2 - 8 \s \big) \cdot \tfrac{1}{2} \|\dega{} \|_{H^{k_1}(\S_t)}^2 
    + C t^{-1} \| \dega{} \|_{C^{k_{0}}(\S_t)} \| \dega{} \|_{H^{k_1}(\S_t)}.
  \end{split}
\end{equation*}
Similarly, due to (\ref{eq:bqboundsb}),
\begin{equation*}
  \begin{split}
    & 2 \big| \textstyle{\sum}_{I } 
    \left  \langle \tfrac{\bq_I}{t} \dep{I},
    \dep{I}\right\rangle_{H^{k_1}(\S_t)} \big| \\
    \leq & 2t^{-1} \big( \s + \textstyle{\max}_{I} \| \bq_I \|_{C^{0}(\S_t)} \big)
    \| \vdep \|_{H^{k_1}(\S_t)}^2 
    + C t^{-1} 
    \|\vdep \|_{C^{k_{0}}(\S_t)} \| \vdep \|_{H^{k_1}(\S_t)} \\
    \leq & t^{-1} \big(2 - 8 \s \big) \| \vdep \|_{H^{k_1}(\S_t)}^2
    + C  t^{-1} \| \vdep \|_{C^{k_{0}}(\S_t)}
    \| \vdep \|_{H^{k_1}(\S_t)}.
  \end{split}
\end{equation*}
Thus
\begin{equation*}
  \begin{split}
    t^{2(A+1)}\mfE_{2}\leq & 2 t^{-1}\big( \mbH_{(\g,k)}^2 + \mbH_{(\phi)}^2 \big)
    + C t^{-1+\s} \mbD (\mbD + t^{\s}).
  \end{split}
\end{equation*}
Next, note that by the definition of $\chk$, and since $t\chdtp=t_{0}\bp_{1}=\bP_{1}$,
\begin{equation*}
  \begin{split}
    \mfE_{3} = & 2t^{-1} \big[  \textstyle{\sum}_{I,J}  \langle e_I (N) \bq_{J}, 
    \dega{IJJ}  \rangle_{H^{k_1}(\S_t)} 
    +\textstyle{\sum}_{J}  \langle t^{-1} (N-1) \bq_{J},
    \dek{JJ} \rangle_{H^{k_1}(\S_t)} \\
    & \phantom{t^{-1} \big[}- \textstyle{\sum}_{I}  \langle e_I (N) \bP_{1},
    \dep{I}  \rangle_{H^{k_1}(\S_t)} 
    +  \langle t^{-1} (N-1) \bP_{1} ,
    \dep{0} \rangle_{H^{k_1}(\S_t)} \big].
  \end{split}
\end{equation*}
Next we appeal to (\ref{eq:noGNSInnerProduct2}) with $\eta = \s/4$; $m=n+1$; and the following $\varphi_i$, $\psi_j$, $\pi_{ij}$:
\begin{itemize}
\item $\varphi_I = \bq_I$, $\psi_I = e_I(N)$, $\varphi_{n+1} = -\bP_1$ and $\psi_{n+1} = t^{-1} (N-1)$,
\item $\pi_{IJ} = \dega{J\uI\uI}$, $\pi_{In+1} = \dek{\uI\uI}$, $\pi_{n+1I} = \dep{I}$ and $\pi_{n+1n+1} = -\dep{0}$,
\end{itemize}
where $I,J \in \{1,..,n\}$. In particular, keeping (\ref{eq: gamma bar p bar assumption}) and (\ref{eq: scalar field bar assumption})
in mind,
\begin{equation*}
  \begin{split}
    t^{2(A+1)}\mfE_{3} \leq & 2t^{-1} \big( \s/4 + \| \textstyle{\sum}_{I}
    \bq_I^2 + \bP_1^2 \|_{C^{0}(\S_t)}^{1/2} \big)
    \mbH_{(N)} \big(\mbH_{(\g,k)}^2 + \mbH_{(\phi)}^2 \big)^{1/2} \\
    &+ C t^{A-1+\s}\mbL_{(N)} \big(\mbH_{(\g,k)} + \mbH_{(\phi)} \big) \\
    \leq & t^{-1} \big(Cr + (2 + \s)^2 \big)
    \big(\mbH_{(\g,k)}^2 + \mbH_{(\phi)}^2 \big)
    + C t^{-1+\s} \mbD (\mbD + t^{\s}),
  \end{split}
\end{equation*}
where we appealed to (\ref{eq:BoundEigenvalues}) and Lemmata~\ref{lemma:LowerOrderLapseEstimates} and \ref{lemma:HigherOrderLapseEstimate}.

Finally, note that $\mfE_{4}$ can be written
\begin{equation}\label{eq:mfEfour second version}
  \begin{split}
    \mfE_{4}
    = & 2(\left \langle N e_I (\dek{JK}), 
    \dega{IJK} \right \rangle_{H^{k_1}(\S_t)}
    + \left \langle \dek{JK}, 
    N e_I  (\dega{IJK}) \right \rangle_{H^{k_1}(\S_t)}) \\
    &+ 2(\left \langle N e_I (\dek{IJ}), 
    \dega{JKK} \right \rangle_{H^{k_1}(\S_t)}
    +  \left \langle \dek{IJ} , 
    Ne_I(\dega{JKK}) \right \rangle_{H^{k_1}(\S_t)} ) \\
    &- 2  (\left \langle e_I e_J (N), 
    \dek{IJ} \right \rangle_{H^{k_1}(\S_t)}
    +  \left \langle e_J(N), 
    e_I (\dek{IJ}) \right \rangle_{H^{k_1}(\S_t)} )  - 2\Xi(t) \\
    & - 2 ( \left \langle Ne_I (\dep{0}), 
    \dep{I} \right \rangle_{H^{k_1}(\S_t)}
    + \left \langle Ne_I (\dep{I}), 
    \dep{0} \right \rangle_{H^{k_1}(\S_t)}  )- 2\Z(t).
  \end{split}  
\end{equation}
On the other hand, appealing to Lemmata~\ref{lemma:EstimateForZ} and \ref{lemma:EstimateForXi},
\begin{equation*}
  \begin{split}
    2 t^{2(A+1)} ( | \Z |  + | \Xi | )\leq & (C \ar + 4n(1+\s)) t^{-1} \big(\mbH_{(\g,k)}^2 + \mbH_{(\phi)}^2 \big)\\
    & + C t^{-1 + \s}  (\mbD + t^{\s})(\mbD + t^{3\s}).
  \end{split}
\end{equation*}
The remaining terms can be estimated by appealing to the divergence theorem in the form of Lemma~\ref{lemma:DivergenceEstimate}.
As a preparation, note that 
\begin{align*}
    \|\rodiv_{h_{\refer}} (N e_I) \|_{C^{0}(\S_t)}
    \leq & \| e_I(N) \|_{C^{0}(\S)} 
    + \| N \|_{C^{0}(\S)} \| \rodiv_{h_{\refer}} (e_I) \|_{C^{0}(\S)} \\
    \leq & C t^{-1+3\s} (\mbD + t^{\s}),\\
    \| N e_I^i \|_{C^{1}(\S_t)}  \leq & C \|N \|_{C^{1}(\S_t)} \|e \|_{C^{1}(\S_t)}\leq C t^{-1+3\s} (\mbD + t^{\s})
\end{align*}
due to Lemmata~\ref{lemma:estimatingthedynamicalvariables} and \ref{lemma:LowerOrderLapseEstimates}. Moreover,
\begin{align*}
  \| N e_I^i \|_{H^{k_1} (\S_t)}
    \leq & C\big[ \big(1 +  \| N -1 \|_{C^{0}(\S_t)} \big) \| e \|_{H^{k_1}(\S_t)} 
        + \| N-1\|_{H^{k_1}(\S_t)} \|e \|_{C^{0}(\S_t)} \big] \\
    \leq & C t^{-A-1 + 2\s} (\mbD + t^{\s}),
\end{align*}
due to Lemmata~\ref{lemma:estimatingthedynamicalvariables} and \ref{lemma: products in L2}. With these estimates at hand, note that
\begin{equation*}
  \begin{split}
    & t^{2(A+1)}  | \langle N e_I (\dek{JK}),\dega{IJK}  \rangle_{H^{k_1}(\S_t)} 
    +  \langle \dek{JK},N e_I  (\dega{IJK})  \rangle_{H^{k_1}(\S_t)} | \\
    \leq & C t^{-1+\s} \mbD (\mbD + t^{\s}),
  \end{split}
\end{equation*}
where we appealed to Assumption~\ref{ass:Bootstrap} as well as Lemmata~\ref{lemma:estimatingthedynamicalvariables}
and \ref{lemma:DivergenceEstimate}. The second term on the right hand side of (\ref{eq:mfEfour second version})
satisfies the same bound. Next, 
\begin{equation*}
  \begin{split}
    & t^{2(A+1)}  |  \langle e_I e_J (N), 
    \dek{IJ}  \rangle_{H^{k_1}(\S_t)}  +  \langle e_J(N),
    e_I (\dek{IJ})  \rangle_{H^{k_1}(\S_t)} | \\
    \leq & C t^{-1+\s} \mbD (\mbD + t^{\s})
  \end{split}
\end{equation*}
due to similar arguments. 
Finally, a similar argument yields
\begin{equation*}
  \begin{split}
    & t^{2(A+1)} |  \langle Ne_I (\dep{0}), \dep{I}  \rangle_{H^{k_1}(\S_t)}
    +  \langle Ne_I (\dep{I}), \dep{0}  \rangle_{H^{k_1}(\S_t)}  | 
    \leq C t^{-1+\s} \mbD (\mbD + t^{\s}).
  \end{split}
\end{equation*}
To summarize,
\begin{equation*}
  t^{2(A+1)}\mfE_{4}\leq  t^{-1} \big( C \ar + 4n(1+\s) \big)
  ( \mbH_{(\g,k)}^2 + \mbH_{(\phi)}^2 )
  + C  t^{-1+\s} (\mbD + t^{\s}) (\mbD + t^{3\s}).
\end{equation*}
Combining the above estimates yields
\begin{equation*}
  \begin{split}
    & -t^{2(A+1)} \d_t (\tfrac{1}{2}\| \dega{} \|_{H^{k_1}(\S_t)}^2
    + \| \dek{} \|_{H^{k_1}(\S_t)}^2   + \| \vdep \|_{H^{k_1}(\S_t)}^2
    + \| \dep{0} \|_{H^{k_1}(\S_t)}^2) \\
    \leq & t^{-1} \big(\mC_{2} r + 4n(1+\s) + 6 + 4\s + \s^2 \big)
    ( \mbH_{(\g,k)}^2 + \mbH_{(\phi)}^2 )+ C  t^{-1+\s} (\mbD + t^{\s})(\mbD + t^{3\s}),
  \end{split}
\end{equation*}
where $\mC_{2}$ is a standard constant. Commuting $t^{2(A+1)}$ with the operator $\d_t$ and combining the result with
(\ref{eq:EnergyFC}) and \eqref{eq:FTCHigherOrderEnergy}, the lemma follows. 
\end{proof}

\begin{proof}[Proof of Proposition~\ref{prop:EstimateEnergyHigh}]
By the assumptions of the proposition, we may make use
    of the conclusions of Lemma~\ref{lemma:EstimateEnergyHigh}.
Let $\ar_{\mbH} \in (0,\frac{1}{6n}]$ be such that $\mC_{2} \ar_{\mbH} \leq \s$,
where $\mC_{2}$ is the constant appearing in the statement of Lemma~\ref{lemma:EstimateEnergyHigh}.
Then we find that if $\ar \in (0,\ar_{\mbH}]$,  the second term in \eqref{eq:HighEnergyEstimate}
is non-positive; recall that $A = 2(n+1)(1+2\s)$.
\end{proof}

\subsection{Proof of the bootstrap improvement}
\label{sec:proofs}
Finally, we prove Theorem~\ref{thm:BootstrapImprovement}.

\begin{proof}[Proof of Theorem~\ref{thm:BootstrapImprovement}]
Let $\tau_{\rob} := \tau_H$ where $\tau_H$ is as in the statement of Lemma~\ref{lemma:AuxiliaryLapseEstimate}.
Observe that $\tau_{\rob}$ is a standard constant. Moreover, the following statement now follows directly from  
Propositions~\ref{prop:EstimateEnergyLow} and \ref{prop:EstimateEnergyHigh} above:
if $t_0 \leq \tau_{\rob}$ and if Assumption~\ref{ass:Bootstrap} is satisfied
for $\ar_\rob:= \min \{ \ar_{\mbL}, \ar_{\mbH}, \frac{1}{6n}\}$ on $[t_{\rob}, t_0]$,
where we note that $\ar_\rob$ is a standard constant, then the following inequality holds:
\begin{equation*}
  \mbD(t)^2 \leq C \mbD(t_0)^2+ C \textstyle{\int}_{t}^{t_0} s^{-1+3\s} \md s + C \textstyle{\int}_{t}^{t_0} s^{-1+\s} \mbD(s)^2 \md s.
\end{equation*}
By Grönwall's lemma and the fact that $t_{0}\leq 1$, it follows that
\begin{align*}
  \mbD(t)^2  &\leq [C \mbD(t_0)^2 + C ( t_0^{3\s} - t^{3\s} ) ] 
        \exp[C ( t_0^{\s} - t^{\s}) ] \leq C [\mbD(t_0)^2 + t_0^{3\s} ].
\end{align*}
Since $\mbD(t_0)\leq t_0^\s$ by assumption, it follows that $\mbD(t) \leq C t_0^{\s}$. Combining this estimate
with (\ref{eq:LapseControlLow}), (\ref{eq:EstLapseHigh}) and $\ar \leq 1$ yields
\begin{equation} \label{eq:ImprovedBootstrap}
\mbD(t) + \mbL_{(N)}(t) + \mbH_{(N)}(t) \leq \mathcal{C} t_0^{\s}
\end{equation}
for any $t \in [t_{\rob}, t_0]$, where $\mathcal{C}$ is a standard constant.
\end{proof}


\section{Asymptotics and curvature blowup}
\label{sec:asymptotics}
With the global existence result at hand, we can continue 
with formulating conclusions about the resulting spacetimes.
In Theorem~\ref{thm:Asymptotics} below, we obtain asymptotic information
for the components of the expansion-normalized Weingarten map
with respect to the Fermi-Walker propagated frame,
and similarly for the expansion-normalized time-derivative of the scalar field.
This suffices to show that the Kretschmann scalar
as well as the spacetime Ricci tensor contracted with itself both blow up as $t^{-4}$.
However, one major caveat is that we do not obtain any information regarding
the eigenspaces of the expansion-normalized Weingarten map. In what follows, note that
on a constant-$t$ slice, the functions $\Phi_0$, $\Phi_1$ introduced in Definition~\ref{def:Phizodef}
are given by $\Phi_1 = t e_0(\phi)$ and $\Phi_0 = \phi - \ln(t) \Phi_1$. Moreover, the components of
the expansion-normalized Weingarten map with respect to the Fermi-Walker propagated frame are given
by $\mK_{I}^{~J}~:=~t k_{I}^{~J}~=~t k_{IJ}$.

\begin{thm} \label{thm:Asymptotics}
  Let $\s_p$, $\s_V$, $\s$, $k_0$, $k_1$, $(\S, h_\refer)$, $(E_{i})_{i=1}^{n}$ and $V$ be as in Theorem~\ref{thm: big bang formation}
  and let $\rho_0 > 0$. Then there exists a standard constant $\tau_2 \leq \tau_1$, where  $\tau_1$ is as in
  Theorem~\ref{thm:GlobalExistenceFermiWalker}, such that the following holds: If $t_0 < \tau_2$; if $\be_I^i$, $\bq_I$, $\bp_0$,
  $\bp_1\in C^{\infty}(\S,\rn{})$ form diagonal FRS initial data satisfying the non-degenerate FRS expansion-normalized
  bounds of regularity
  $k_1$ for $\rho_0$ at $t_0$ as well as (\ref{eq:hamconstraint}) and (\ref{eq:MC}); if $\hat{e}_{I}^{i}$, $\hat{k}_{IJ}$, $\hat{\phi}_0$,
  $\hat{\phi}_1\in C^{\infty}(\S,\rn{})$ are initial data to (\ref{eq: transport frame})--(\ref{eq:LapseEquation}) satisfying
  $\hat{k}_{II}=1/t_0$ and $\mbD(t_0)\leq t_0^\s$, then the corresponding solution to (\ref{eq: transport frame})--(\ref{eq:LapseEquation}),
  as given in Theorem~\ref{thm:GlobalExistenceFermiWalker}, has the following properties:

\noindent \textbf{Asymptotic data:}
There exists functions $\mrK_{I}^{~J} \in C^{k_{0}+1}(\S)$ 
which at every $x \in \S$ form the components of
a symmetric matrix with distinct eigenvalues $\mrp_I \in C^{k_{0}+1}(\S)$,
as well as functions $\mrPhi_0$, $\mrPhi_1 \in C^{k_{0}+1}(\S)$,
which satisfy the estimates
\begin{subequations} \label{eq:Asymptotics} 
\begin{align} 
  \textstyle{\sum}_{I,J} \| \mK_{I}^{~J}(t,\cdot)- \mrK_{I}^{~J} \|_{C^{k_{0}+1}(\S)}
  &\leq C t_0^{\s} t^{\s},\label{eq:FFConvergenceRate}\\
  \| \Phi_0(t,\cdot) - \mrPhi_0 \|_{C^{k_{0}+1}(\S)}
  + \| \Phi_1(t,\cdot) - \mrPhi_1 \|_{C^{k_{0}+1}(\S)}
  &\leq C t_0^{\s} t^{\s},\label{eq:Scalar field convergence rate}
\end{align}
\end{subequations}
for any $t \in (0,t_0]$.
Moreover, the eigenvalues $\mrp_I(x)$ of the matrix $(\mrK_{I}^{~J}(x))_{I,J}$ satisfy
the generalized Kasner conditions $\sum_{I} \mrp_I= 1$,
$\sum_{I} \mrp_I^2 + \mrPhi_1^2 = 1$ and the condition $\mrp_I + \mrp_J - \mrp_K <1$ ($I\neq J$) on $\S$, as well as the estimates
\begin{align}
\| p_I(t,\cdot) - \mrp_I \|_{C^{k_0 +1}(\S_t)} \leq C t_0^{\s} t^{\s}
\end{align}
for any $I$ and $t \in (0,t_0]$. Here $p_I(t,\cdot)$ denote the continuous curves of eigenvalues of the matrices $\mK_{I}^{~J}(t,\cdot)$,
such that $p_I(t_0,\cdot) = \bq_I$.

\noindent \textbf{Curvature blow-up:}
The Kretschmann scalar and the Ricci curvature contracted with itself,
respectively given by 
    $\mathfrak{K}_g := \roRiem_{g,\mu \nu \xi \rho} \roRiem_g^{\mu \nu \xi \rho}$,
    $\mathfrak{R}_g := \roRic_{g, \mu \nu} \roRic_{g}^{\mu \nu}$
satisfy the following estimates:
\begin{subequations}
\begin{align}
\left \| t^4 \mathfrak{K}_g(t,\cdot) - 
    4 \left[ \textstyle{\sum}_{I} \mrp_I^2 (1 - \mrp_I)^2 
        + \textstyle{\sum}_{I < J} \mrp_I^2 \mrp_J^2 \right]
    \right \|_{C^{k_{0}+1}(\S)}
\leq C t_0^{2\s} t^{2\s}, \label{eq:Asymptotics Kretschmann}\\
t^4  \| \mathfrak{R}_g(t,\cdot) - \mrPhi_1^2
     \|_{C^{k_{0}+1}(\S)}
\leq C t_0^{2\s} t^{2\s}.\label{eq:Asymptotics Ricci squared}
\end{align}
\end{subequations}
Finally, all causal geodesics are past incomplete, and $\mathfrak{K}_g$ blows up along all past inextendible
causal curves. 
\end{thm}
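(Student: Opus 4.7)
The overall plan is to upgrade the decay $\mbD(t) \leq \mathcal{C} t_0^{\s}$ from Theorem~\ref{thm:GlobalExistenceFermiWalker} into convergence of $\mK$, $\Phi_1$ and $\Phi_0$ in $C^{k_0+1}$ by expressing each difference from its limit as the integral of an integrable time derivative, and then to deduce the generalized Kasner conditions, the eigenvalue asymptotics, and finally the curvature blow-up.

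I begin by writing $\mK_{I}^{~J}(t,\cdot) = t\,k_{IJ}(t,\cdot) = \bq_{I}\delta_{IJ} + t\dek{IJ}$, using that $t\chk_{IJ} = \bq_{\uI}\delta_{IJ}$. A direct computation with (\ref{eq:EveFF}) gives $\d_t(t\dek{IJ}) = -t\bigl[(-t^{-1}-\d_t)\dek{IJ}\bigr]$, and Lemma~\ref{lemma:EstFFLow} combined with $\mbD \leq C t_0^{\s}$ yield
\begin{equation*}
\|\d_t(t\dek{IJ})\|_{C^{k_0+1}(\S_t)} \leq C t^{-1+2\s} t_0^{2\s},
\end{equation*}
which is integrable down to $t=0$. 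Hence $t\dek{IJ}$ is Cauchy in $C^{k_0+1}(\S)$ with a limit $\mfT_{IJ}$, so $\mrK_{I}^{~J} := \bq_I \delta_{IJ} + \mfT_{IJ}$ yields (\ref{eq:FFConvergenceRate}). The identical argument applied to $\Phi_1 = t_0\bp_1 + t\dep{0}$, using Lemma~\ref{lemma:EstSFe0Low}, produces $\mrPhi_1$ with the same rate. For $\Phi_0 = \phi - \Phi_1 \ln(t)$ I would exploit the identity $\d_t\Phi_0 = (N-1)\Phi_1/t - \ln(t)\d_t\Phi_1$; the improved lapse bound of Lemma~\ref{lemma:LowerOrderLapseEstimates}, the bound on $\d_t\Phi_1 = \d_t(t\dep{0})$ just obtained, and the boundedness of $|\ln t|\, t^{\s}$ on $(0,1]$ together give $\|\d_t \Phi_0\|_{C^{k_0+1}} \leq C t^{-1+\s} t_0^{2\s}$, whose integration produces $\mrPhi_0$ and (\ref{eq:Scalar field convergence rate}).

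The Kasner conditions follow by taking $t\to 0$ in $\tr \mK = 1$ and in the identity furnished by Lemma~\ref{lemma:LowerOrderHamiltonianConstraint}; the inequality $\mrp_I + \mrp_J - \mrp_K < 1$ comes by continuity from $\bq_I + \bq_J - \bq_K < 1 - \s_p$, see (\ref{eq: bar pi diagonal conditions}), and the $C^0$-closeness of $\mrK$ to $\mathrm{diag}(\bq_I)$. The main obstacle will be showing that the eigenvalues $\mrp_I$ are individually in $C^{k_0+1}(\S)$ and converge at the rate $Ct_0^{\s}t^{\s}$: pointwise existence and persistence of a uniform spectral gap follow from the non-degeneracy $|\bq_I - \bq_J|>\rho_0^{-1}$ together with the $C^0$-closeness of $\mK(t,\cdot)$ to $\mathrm{diag}(\bq_I)$ on $(0,t_0]$, but upgrading this to $C^{k_0+1}$-regularity with a sharp rate requires a parameter-dependent implicit-function theorem for eigenvalues of symmetric matrices with separated spectrum, which is exactly what the regularity-of-eigenvalues results collected in the second appendix (referenced in the outline) are designed to provide.

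For the Ricci-squared asymptotic, expanding (\ref{eq:roRicpot}) gives
\begin{equation*}
\mathfrak{R}_g = |\md\phi|_g^4 + \tfrac{4}{n-1}(V\circ\phi)|\md\phi|_g^2 + \tfrac{4(n+1)}{(n-1)^2}(V\circ\phi)^2,
\end{equation*}
with $|\md\phi|_g^2 = -(e_0\phi)^2 + e_I(\phi)e_I(\phi)$ in the FW frame. By (\ref{eq:EstimateEnergyMomentumSF}) the spatial term is bounded by $C t^{-2+2\s}t_0^{2\s}$ in $C^{k_0+1}$, while $t e_0(\phi) = \Phi_1 \to \mrPhi_1$ with rate $C t_0^{2\s}t^{2\s}$; factoring $t^4|\md\phi|_g^4 - \mrPhi_1^4 = (t^2|\md\phi|_g^2 - \mrPhi_1^2)(t^2|\md\phi|_g^2 + \mrPhi_1^2)$ and controlling the potential terms via (\ref{eq:VcircphiWinfkzest}) yields (\ref{eq:Asymptotics Ricci squared}). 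For $\mathfrak{K}_g$, the Gau\ss-Codazzi decomposition expresses $\roRiem_g$ in the FW frame in terms of $\roRiem_h$, $k$ and its first derivatives; substituting $\mK(t,\cdot) = \mrK + O(t_0^{\s}t^{\s})$ in $C^{k_0+1}$ into the resulting expression for $t^4\mathfrak{K}_g$ produces the claimed algebraic combination of the $\mrp_I$, and the remaining contributions (spatial Riemann tensor, time derivatives of $k$, potential) are suppressed by $t^{2\s}$ relative to leading order by Lemmata~\ref{lemma:SpatialRiemann} and \ref{lemma:estimatingthedynamicalvariables}. Finally, past causal geodesic incompleteness follows from $\theta = 1/t \to \infty$ by a Hawking-type argument combined with the $C^0$-closeness of the lapse to $1$, and blow-up of $\mathfrak{R}_g$ and $\mathfrak{K}_g$ along every past inextendible causal curve follows from the asymptotic formulas, since the Kretschmann coefficient $\sum_I \mrp_I^2(1-\mrp_I)^2 + \sum_{I<J}\mrp_I^2\mrp_J^2$ is pointwise strictly positive whenever the $n \geq 3$ exponents $\mrp_I$ are distinct and sum to $1$.
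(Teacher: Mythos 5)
Your treatment of the asymptotic data and of the curvature asymptotics follows essentially the same route as the paper: integrate the lower-order estimates of Lemmata~\ref{lemma:EstFFLow} and \ref{lemma:EstSFe0Low} against the a posteriori bound $\mbD(t) + t^{\s} \leq C t_0^{\s}$ to build Cauchy sequences in $C^{k_0+1}(\S)$, use the identity $\d_t\phi = t^{-1}N\Phi_1$ (you package it as $\d_t\Phi_0 = (N-1)\Phi_1/t - \ln(t)\,\d_t\Phi_1$, the paper writes the same information in integral form, the rates come out identical), appeal to the eigenvalue-regularity appendix lemma for the $C^{k_0+1}$ rates of the $\mrp_I$, and read off the Kasner conditions and curvature limits from Lemmata~\ref{lemma:LowerOrderHamiltonianConstraint} and \ref{lemma:SpatialRiemann} together with the Gau\ss\ equations. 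All of that is fine; you correctly identify Lemma~\ref{lemma:nondegeneracy} as the mechanism for the eigenvalue regularity even though you do not verify its hypotheses in detail.

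There is, however, a genuine gap in the final step. You conclude past incompleteness of \emph{all} causal geodesics by invoking ``$\theta \to \infty$ plus a Hawking-type argument combined with $C^0$-closeness of the lapse to $1$.'' Hawking's theorem only produces the existence of \emph{some} past-incomplete timelike geodesic (with length $\leq 1/\theta$); it does not deliver the universal statement, and the universal statement does not follow from $\theta \to \infty$ and $N \approx 1$ alone. The paper instead proves incompleteness of every causal geodesic $\gamma$ by a Riccati/Raychaudhuri argument for the scalar $h(s) = f^0(s)\,\theta(\gamma(s))$, where $f^0 = -\langle e_0|_\gamma,\gamma'\rangle$: using the evolution of $f^0$ coming from $g(\nabla_{e_\mu}e_0,e_\nu)$, one derives
\[
h' = -\tfrac{1}{N\circ\gamma}h^2 - \theta\circ\gamma\,\textstyle{\sum}_{J,K}f^J f^K k_{JK}\circ\gamma - \theta\circ\gamma\,\textstyle{\sum}_J f^0 f^J [e_J(\ln N)]\circ\gamma,
\]
and then the decisive input is not only the lapse bound but also the lower bound $\min_I p_I > -(1-4\s)$ (which comes from \eqref{eq:bqboundsb} and the convergence of the eigenvalues) and the smallness of $\vec{e}(\ln N)$ relative to $\theta$, which together force $h' \leq -2\s h^2$ near the singularity and hence finite-parameter blow-up of $h>0$ to the past. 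Without the eigenvalue lower bound your argument would not close: a large negative $p_I$ could make the $k_{JK}$-term dominate with the wrong sign. So you need to make this Riccati computation explicit (or cite the corresponding lemma) rather than appealing to a generic singularity theorem.
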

\begin{proof}
The proof is similar to the one for the analogous statements in \cite{GIJ}.
The main difference is that in this case the smallness parameter is $t_0$,
and we require the eigenvalues to remain simple all the way up to 0.

To begin with, let $\rho_0 > 0$ and assume that we have diagonal FRS initial data as in (\ref{eq:beiIetcindata}) and
initial data as in (\ref{eq: hat initial data}). Note that we here insist that the diagonal FRS initial data
satisfy the \textit{non-degenerate} FRS expansion-normalized bounds of regularity $k_1$ for $\rho_0$
at $t_0$, as well as (\ref{eq:hamconstraint}) and (\ref{eq:MC}). In the course of
the proof, we gradually impose stronger restrictions on the standard constant $\tau_2$.
But to begin, let $\tau_2 \leq \tau_1$ where $\tau_1$ is as in Theorem~\ref{thm:GlobalExistenceFermiWalker}.
This means that we have a past global solution satisfying \eqref{eq:AposterioriBoundsGlobalExistence}.
By letting $\tau_2>0$ be small enough that $t_0 < \tau_2$ implies that $\mathcal{C} t_0^{\s} \leq 1/(6n)$,
we may assume that Assumption~\ref{ass:Bootstrap} is satisfied for some $r \leq 1/(6n)$ on $(0, t_0]$.
In particular, we may make use of all the estimates of Section~\ref{sec:MainEstimates}.

We begin by demonstrating the statements regarding the asymptotic data. If $[t_1,t_2] \subset (0,t_0]$,
integrating \eqref{eq:EstimateFFLow} from $t_1$ to $t_2$ yields
\begin{align*}
 & \| t_2 \dek{IJ} (t_2,\cdot) 
    - t_1 \dek{IJ} (t_1,\cdot) \|_{C^{k_{0}+1}(\S)}\\
    = &  \| \textstyle{\int}_{t_1}^{t_2} 
    \d_s (s \dek{IJ})(s,\cdot) \md s \|_{C^{k_{0}+1}(\S)}\\
    \leq & \textstyle{\int}_{t_1}^{t_2}
    \| \d_s (s \dek{IJ})(s,\cdot) \md s \|_{C^{k_{0}+1}(\S)}
    \leq C \textstyle{\int}_{t_1}^{t_2} s^{-1 + \s} (\mbD(s) + s^{\s}) \md s.
\end{align*}
Since $\mbD(s) + s^{\s} \leq C t_0^{\s}$ due to \eqref{eq:AposterioriBoundsGlobalExistence},
\begin{equation}
\| t_2 (\dek{IJ}) (t_2,\cdot) 
    - t_1 (\dek{IJ}) (t_1,\cdot) \|_{C^{k_{0}+1}(\S)}
\leq C t_0^{\s} (t_2^{\s} - t_1^{\s}),
\end{equation}
for any $[t_1, t_2] \subset (0,t_0]$.
In particular, as $t_1 \chk_{IJ} (t_1, \cdot) = t_2 \chk_{IJ}(t_2, \cdot) = \bq_\uI \delta_{\uI J}$
for any $t_1, t_2 \in (0,t_0]$, 
it follows that $t k_{IJ} = \mK_{I}^{~J}$ converges in $C^{k_{0}+1}(\S)$
as $t \downarrow 0$, since it forms a Cauchy sequence in a complete space.
We denote the limit by $\mrK_{I}^{~J}$. Moreover, for any $[t_1, t_2] \subset (0,t_0]$,
\begin{equation} \label{eq:HolderEstimateFF}
\| \mK_{I}^{~J}(t_2, \cdot) - \mK_{I}^{~J}(t_1,\cdot) \|_{C^{k_{0}+1}(\S)} 
    \leq C t_0^{\s} (t_2^{\s} - t_1^{\s}).
\end{equation}
In fact, letting $\mK_{I}^{~J}(0,\cdot) := \mrK_{I}^{~J}(\cdot)$, it follows that $\mK_{I}^{~J} \in C^{0,\s}([0,t_0], C^{k_{0}+1}(\S))$,
that (\ref{eq:FFConvergenceRate}) holds, and that
\begin{align}
\|\bq_I \delta_{I}^{~J} - \mrK_{I}^{~J}  \|_{C^{k_{0}+1}(\S_t)} &\leq C t_0^{2\s}.
\end{align}
The argument to obtain the function $\mrPhi_1$ is similar. Let $[t_1, t_2] \subset (0, t_0]$. Integrating
\eqref{eq:EstSFe0Low} from $t_1$ to $t_2$ yields, omitting spatial arguments,
\begin{align}
\begin{split} \label{eq:HolderEstimateSF}
\| t_2 \dtp (t_2) - t_1 \dtp(t_1) \|_{C^{k_{0}+1}(\S)}
    &= \| \Phi_1(t_2) - \Phi_1(t_1) \|_{C^{k_{0}+1}(\S)}\leq C t_0^{\s} (t_2^{2\s} - t_1^{2\s}).
\end{split}
\end{align}
As above, there thus exists a function $\mrPhi_1 \in C^{k_0 +1}(\S)$ such that 
\begin{equation} \label{eq:ScalarFieldConvergenceRate}
\| \Phi_1(t) - \mrPhi_1 (\cdot) \|_{C^{k_{0}+1}(\S)}
\leq C t_0^{\s} t^{2\s}.
\end{equation}
On the other hand,  for any $[t_1, t_2] \subset (0,t_0]$,
\begin{align*}
  \begin{split}
    &\Phi_0(t_2) - \Phi_0(t_1)\\
    = & \textstyle{\int}_{t_1}^{t_2} \d_s \phi(s) \md s
        -\ln(t_2) \Phi_1(t_2) + \ln(t_1) \Phi_1(t_1) \\
    = & \textstyle{\int}_{t_1}^{t_2} s^{-1} \big(N \Phi_1 (s) - \mrPhi_1 \big) \md s
        - \ln(t_2) \big( \Phi_1(t_2) - \mrPhi_1 \big) 
        + \ln(t_1) \big( \Phi_1(t_1) - \mrPhi_1 \big),
\end{split}
\end{align*}
since $\d_{t}\phi=t^{-1}N\Phi_{1}$. Hence,
\begin{align}
  \begin{split}
    & \| \Phi_0(t_2) - \Phi_0(t_1) \|_{C^{k_{0}+1}(\S)} \\
    \leq & \textstyle{\int}_{t_1}^{t_2} s^{-1} 
        \big( \|(N-1) \Phi_1(s) \|_{C^{k_{0}+1}(\S)} 
        +  \| \Phi_1(s) - \mrPhi_1 \|_{C^{k_{0}+1}(\S)} \big) \md s \\
    & + \langle \ln(t_2) \rangle 
        \| \Phi_1(t_2) - \mrPhi_1 \|_{C^{k_{0}+1}(\S)} 
    + \langle \ln(t_1) \rangle 
        \| \Phi_1(t_1) - \mrPhi_1\|_{C^{k_{0}+1}(\S)}  \\
    \leq & C t_0^{\s} \textstyle{\int}_{t_1}^{t_2} s^{-1+\s}  \md s 
        + C t_0^{\s} {t_2}^{\s} + C t_0^{\s} t_1^{\s}\leq C t_0^{\s} \big( t_2^{\s} + t_1^{\s} \big),
\end{split}
\end{align}
due to \eqref{eq:ScalarFieldConvergenceRate}, Lemma~\ref{lemma:LowerOrderLapseEstimates} and the bound
$\langle \ln(t) \rangle t^{\s} \leq C$. Again, $\Phi_0(t,\cdot)$ converges to a limit in $C^{k_{0}+1}(\S)$
and (\ref{eq:Scalar field convergence rate}) holds, recalling \eqref{eq:ScalarFieldConvergenceRate}. 

Now let us consider the statements regarding the eigenvalues of $\mK_{I}^{~J}$,
which are also needed in order to prove the statements concerning curvature blowup.
To begin, we claim that there is a standard constant $\tau_{2}\leq\tau_{1}$ such that
if $t_0 \leq \tau_2$ and if $\q_1(t,x),\dots,\q_n(t,x)$ denote the eigenvalues
of the matrices with components $K_{I}^{~J} (t,x)$, then the eigenvalues are simple
and there is a standard constant $d_1$ such that the distances between distinct
eigenvalues are bounded from below by $d_{1}$ on $[0, t_0] \times \S$.
Moreover, they may be ordered from largest to smallest and so that
\begin{equation} \label{eq:HolderEstimateEigenvalues}
\|\q_I(t_2) - \q_I(t_1) \|_{C^{k_{0}+1}(\S)} \leq Ct_0^{\s} (t_2^{\s} - t_1^{\s}),
\end{equation}
for any $[t_1,t_2] \subset [0,t_0]$. In particular, letting $t_2 = t_0$, $t_1 = t$, this becomes
\begin{equation} \label{eq:HolderEstimateEigenvaluesInitial}
\|\hat{p}_I - \q_I(t) \|_{C^{k_{0}+1}(\S)} \leq Ct_0^{\s} (t_0^{\s} - t^{\s}),
\end{equation}
where $\hat{p}_I$ are the eigenvalues of the expansion-normalized Weingarten map of the initial data (\ref{eq: hat initial data}).
Note that the $\hat{p}_I$ can be assumed to satisfy $|\hat{p}_I-\hat{p}_J|>1/(2\rho_0)$ for $I\neq J$, assuming $\tau_2$ to be a small
enough standard constant. It follows that $\q_I \in C^{0,\s}([0,t_0],C^{k_{0}+1}(\S))$ and, due to (\ref{eq: bar pi diagonal conditions}) and
$\mbD(t_0)\leq t_0^{\s}$, that 
\begin{equation}
p_I + p_J - p_K \leq 1 - \s_p + C t_0^{\s}
\end{equation}
on $[0,t_{0}]\times\S$ for any $I \neq J$. In particular, if $\tau_2$ is a small enough standard constant, then $t_0 \leq \tau_2$ implies
$\mrp_I + \mrp_J - \mrp_K < 1$ on $\S$ for any $I \neq J$.

In order to prove the claim, we apply Lemma~\ref{lemma:nondegeneracy}. More specifically, let $\ell=k_{0}+1$, 
$C_{\ell}=(C_1+1)\rho_{0}$ (where $C_1$ is the constant associated with Sobolev embedding from $H^{k_1+2}$ to $C^{k_0+1}$), $K_{\ell}=C$
(where $C$ is the constant appearing in \eqref{eq:EstimateFFLow}), $\zeta_0=2\rho_0$,
$\alpha=\s$ and $L=2$. Then, due to the fact that $|\hat{p}_I-\hat{p}_J|>1/(2\rho_0)$ for $I\neq J$;
(\ref{eq: gamma bar p bar assumption}), (\ref{eq:HolderEstimateFF}) and $\mbD(t_0)\leq t_0^\s$; \eqref{eq:EstimateFFLow}; and
Lemma~\ref{lemma:LowerOrderHamiltonianConstraint}, if $\tau_{2}$ is a small enough standard constant and $t_{0}\leq\tau_{2}$,
the conditions of Lemma~\ref{lemma:nondegeneracy} are satisfied with $M_{IJ} = \mK_I^{~J}$ and $T_+=t_{0}$. The claim follows. 

Next, since $\mK_{I}^{~J} - \mK_{J}^{~I} = 0$ and $ \mK_{I}^{~I} = 1$ on $(0,t_0]\times\S$,
$\mrK$ is symmetric and has trace $1$. Thus the eigenvalues of $\mrK_{I}^{~J}$, which are
real and sum to one, must be the limits of the continuous curves $\q_I(t)$. In particular,
they are thus distinct. Next, 
\begin{equation}
\| 1 - \mK_{I}^{~J} \mK_{J}^{~I} - \Phi_1^2 \|_{C^{k_{0}+1}(\S)}
    \leq C t_0^{2\s} t^{2\s},
\end{equation}
due to (\ref{eq:almost asymptotic Hcon}). Taking the limit yields $\sum_{I} \mrp_I^2 + \mrPhi_1^2 = 1$. 

Finally, concerning the Kretschmann-scalar, note that
\begin{align}
\begin{split}
\mathfrak{K} = & \roRiem_g(e_I, e_J, e_K,e_L) \roRiem_g(e_I, e_J, e_K,e_L) \\
    &+ 4 \roRiem_g(e_0, e_I, e_0,e_J) \roRiem_g(e_0, e_I, e_0,e_J) \\
    &- 4 \roRiem_g(e_0, e_I, e_J,e_K) \roRiem_g(e_0, e_I, e_J,e_K).
\end{split}
\end{align}
Using the Gau\ss\ equations,
\begin{equation*}
\roRiem_g(e_I, e_J, e_K,e_L) = \roRiem_h(e_I, e_J, e_K,e_L) + k_{IJ} k_{KL} - k_{IK} k_{JL}.
\end{equation*}
Hence, using the symmetries of the Riemann curvature tensor,
\begin{align*}
\begin{split}
 & \|  t^4\roRiem_g(e_I, e_J, e_K,e_L) \roRiem_g(e_I, e_J, e_K,e_L) 
    - 2 \tr(\mK^2)^2 + 2 \tr(\mK^4) \|_{C^{k_{0}+1}(\S_t)} \leq C t^{2\s}(\mbD + t^{\s})^2,    
\end{split}
\end{align*}
see (\ref{eq:EstimateSpatialRiemann}). Next,  
\begin{equation*}
\roRiem_g(e_0, e_I, e_0,e_J) = \roRic_h(e_I, e_J) - e_I (\phi) e_J(\phi) 
    - \tfrac{2}{n-1} (V \circ \phi) \delta_{IJ}
    + \tr(k) k_{IJ} - k_{IM} k_{JM}
\end{equation*}
We can estimate the first three terms by (\ref{eq:EstimateSpatialRiemann}),
(\ref{eq:EstimateEnergyMomentumSF}) and the scheme: 
\begin{align*}
\begin{split}
  &\| t^4 \roRiem_g(e_0, e_I, e_0,e_J) \roRiem_g(e_0, e_I, e_0,e_J) \\
  &\phantom{\|} - \tr(\mK)^2 \tr(\mK^2) +2 \tr(\mK) \tr(\mK^3) 
        - \tr(\mK^4) \|_{C^{k_{0}+1}(\S_t)}\leq C t^{2\s}(\mbD + t^{\s})^2.
\end{split}
\end{align*} 
Finally, by (\ref{eq:EstimateNormalRiemann}),
\begin{align}
\begin{split}
\| t^4 \roRiem_g(e_0, e_I, e_J,e_K) \roRiem_g(e_0, e_I, e_J,e_K) \|_{C^{k_{0}+1}(\S_t)} 
    \leq C t^{2\s}(\mbD + t^{\s})^2.
\end{split}
\end{align} 
We thus gather form the above estimates that 
\begin{align*}
\begin{split}
&  \| t^4 \mathfrak{K}_g - 2 \tr(\mK^4) + 8 \tr(\mK) \tr(\mK^3) 
    - 2 \tr(\mK^2)^2 - 4 \tr(\mK)^2 \tr(\mK^2)\|_{C^{k_{0}+1}(\S_t)} \\
    \leq & C t^{2\s}(\mbD + t^{\s})^2.
\end{split}
\end{align*}
On the other hand,
\begin{align*}
  & 2\tr(\mK^4) - 8 \tr(\mK) \tr(\mK^3) 
    + 2 \tr(\mK^2)^2 + 4 \tr(\mK)^2 \tr(\mK^2)  \\
= & 2 \textstyle{\sum}_{I} \q_I^4 - 8 \textstyle{\sum}_{I} \q_I^3 + 2 \big( \textstyle{\sum}_{I} \q_I^2 \big)^2
    + 4 \textstyle{\sum}_{I} \q_I^2 
=  4 \textstyle{\sum}_{I} \big( \q_I^2(1- \q_I)^2 + \textstyle{\sum}_{J>I} \q_I^2 \q_J^2 \big).
\end{align*}
The estimate (\ref{eq:Asymptotics Kretschmann}) follows. Finally, due to (\ref{eq:roRicpot})
\begin{equation*}
\mathfrak{R}_g := \roRic_{g,\mu \nu} \roRic_{g}^{\mu \nu} 
= |\md \phi|_g^4 - \tfrac{4}{n-1} |\md \phi|_g^2 \big(V \circ \phi\big)
    + \tfrac{4(n+1)}{n-1)^2} \big(V \circ \phi \big)^2.
\end{equation*}
By the scheme and (\ref{eq:EstimateEnergyMomentumSF}), (\ref{eq:Asymptotics Ricci squared}) follows.

In order to prove that all causal geodesics are past incomplete, let $\g:\mJ\rightarrow M$ be a causal geodesic and define $f^{\mu}(s)$
by the relation
\[
  \g'(s)=\textstyle{\sum}_{\mu=0}^{n}f^{\mu}(s)e_{\mu}|_{\g(s)}.
\]
If $\theta$ denotes the mean curvatures of the leaves of the foliation, i.e. $\theta=t^{-1}$, then
\begin{equation}\label{eq:theta along gamma}
  (\theta\circ\g)'(s)=\g'(s)\theta=-\tfrac{1}{N\circ\g(s)}\theta^{2}\circ\g(s) f^{0}(s).
\end{equation}
On the other hand, since $f^{0}=-\langle e_{0}|_{\g},\g'\rangle$ and $\g''=0$, 
\begin{equation}\label{eq:fzero along gamma}
  \begin{split}
    (\tfrac{d}{ds}f^{0})(s) = & -\textstyle{\sum}_{\mu,\nu}f^{\mu}(s)f^{\nu}(s) \langle \nabla_{e_{\mu}}e_{0},e_{\nu}\rangle\circ\g(s)\\
    = & -\textstyle{\sum}_{J,K}f^{J}(s)f^{K}(s)k_{JK}\circ\g(s)-\textstyle{\sum}_{J}f^{0}(s)f^{J}(s)[e_{J}(\ln N)]\circ\g(s),
  \end{split}
\end{equation}
see (\ref{eq:nablaezeIez}). Let $h:=f^{0}\cdot \theta\circ\g$. Then (\ref{eq:theta along gamma}) and (\ref{eq:fzero along gamma}) yield
\begin{equation}\label{eq:h prime causal geodesic incompleteness}
  h'=-\tfrac{1}{N\circ\g}h^{2}
  -\theta\circ\g\textstyle{\sum}_{J,K}f^{J}f^{K}k_{JK}\circ\g-\theta\circ\g\textstyle{\sum}_{J}f^{0}f^{J}[e_{J}(\ln N)]\circ\g.
\end{equation}
Let $\fbar:=(f^{1},\dots,f^{n})$. Then, due to the causality of the curve, $|\fbar|\leq f^{0}$. Moreover,
\[
  \textstyle{\sum}_{J,K}f^{J}f^{K}(tk_{JK})\circ\g\geq \min\{p_{I}\circ\g\}\cdot |\fbar|^{2}\geq -(1-4\s)|\fbar|^{2}
  \geq -(1-4\s)(f^{0})^{2},
\]
assuming the standard constant $\tau_{2}$ to be small enough and that we only consider the subinterval of $\mJ$, say $\mJ_-$, such that
$t\circ\g\leq t_{0}\leq\tau_{2}$; the second inequality follows from (\ref{eq:bqboundsb}), (\ref{eq:HolderEstimateEigenvaluesInitial})
and the assumption concerning $\tau_{2}$. Similarly, assuming the standard constant $\tau_{2}$ to be small enough, we can assume
$|1-1/N\circ\g|\leq \s$ on $\mJ_-$. This means that the sum of the first two terms on the right hand side of
(\ref{eq:h prime causal geodesic incompleteness}) can be bounded from above by $-3\s h^{2}$. Turning to the third term on the right hand side,
note that, assuming the standard constant $\tau_{2}$ to be small enough,
\[
\big|\textstyle{\sum}_{J}f^{J}[e_{J}(\ln N)]\circ\g\big|\leq |\fbar|\cdot |\vec{e}(\ln N)|\leq \s\theta\circ\g \cdot f^{0}
\]
on $\mJ_-$. Summing up, we conclude that $h'\leq -2\s h^{2}$ on $\mJ_-$. Since $h>0$, due to the causality of the curve, we conclude that $h$ blows up in
finite parameter time to the past. This means that $\g$ is past incomplete. Since
all the $\S_t$ are Cauchy hypersurfaces, the curvature invariants blow up along all past inextendible causal curves. This concludes the proof.
\end{proof}

\section{The proof of the main theorem} \label{sec: finishing the proof}
\begin{proof}[Proof of Theorem~\ref{thm: big bang formation}]
Let $\s_p$, $\s_V$, $\s$, $k_0$, $k_1$, $V$, $(\S,h_{\refer})$ and $(E_i)_{i=1}^n$ be as in the statement of the theorem.
Fix $\zeta_0 > 0$. Our task is to demonstrate the existence of a $\zeta_1 >0$, with dependence as in
Remark~\ref{remark:Constant dependence Main Theorem}, such that if $\mfI$ are $\s_p$-admissible CMC initial data;
$|\bq_I-\bq_J|>\zeta_{0}^{-1}$ for $I\neq J$; the associated expansion-normalized initial data $(\S, \bmH, \bmK, \bP_0, \bP_1)$
satisfy \eqref{eq: Sobolev bound assumption}; and $\bth > \zeta_1$, then the conclusions of the theorem hold. 
Due to Proposition~\ref{prop: initial data estimate}, there is a $\rho_0>0$, depending only on $\zeta_0$, $\s$, $k_0$, $k_1$ and $(E_i)_{i=1}^n$,
and unique (up to a sign in case of the elements of the frame) $\be_I^i$, $\bq_I\in C^{\infty}(\S,\rn{})$
such that $\be_I^i$, $\bq_I$, $\bp_0$, $\bp_1$ form smooth diagonal FRS initial data satisfying the non-degenerate FRS
expansion-normalized bounds of regularity $k_1$ for $\rho_0$ at $t_0 = \bth^{-1}$. 

Theorem~\ref{thm:GlobalExistenceFermiWalker} then yields a standard constant $\tau_{1}<1$ such that if $t_0 < \tau_1$, there exists a
unique smooth solution to (\ref{eq: transport frame})--(\ref{eq:LapseEquation}) on $(0,t_+)$ corresponding to initial data
$\hat{e}_I^i=\be_I^i$, $\hat{\phi}_i=\bp_i$, $i=1,2$, and $\hat{k}_{IJ}=\tfrac{\bq_\uI}{t_0}\delta_{\uI J}$, with $t_0\in (0,t_{+})$, inducing
the correct initial data on $\S_{t_{0}}$. However, by Proposition~\ref{prop:FieldEquationsFermiWalker},
this solution corresponds to a solution of the Einstein-non-linear scalar field equations
with a potential $V$ existing on the interval $(0,t_+)$
such that the hypersurfaces $\S_t$ are CMC Cauchy hypersurfaces of mean curvature $t^{-1}$
and the metric is given by $g = -N^2 dt \otimes t + \omega^I \otimes \omega^I$.

Moreover, as a consequence of Theorem~\ref{thm:Asymptotics}, there exists a standard constant $0<\tau_2 \leq \tau_1$
such that if $t_0 < \tau_2$ then the corresponding solution satisfies the conclusions of Theorem~\ref{thm:Asymptotics}.
This yields most of the conclusions of Theorem~\ref{thm: big bang formation} if we choose $\zeta_{1}=\tau_{2}^{-1}$.
To prove the existence of the diffeomorphism
$\Psi$, note that the solution obtained in Proposition~\ref{prop:FieldEquationsFermiWalker} is a globally hyperbolic
development of the initial data. Due to \cite[Corollary~23.44, p.~418]{stab}, there is a maximal globally hyperbolic development
$(M,g,\phi)$ of the initial data. By the abstract properties of the maximal globally hyperbolic development, there is thus a map
$\Psi:(0,t_{+})\times\S\rightarrow M$, which is a smooth isometry (meaning that it preserves both the metric and the scalar field)
onto its image. Moreover, $\Psi(t_{0},p)=\iota(p)$. Assume now that $\Psi((0,t_{0}]\times\S)$ is not all of $J^{-}(\iota(\S))$.
Then there is a point $p\in M- \Psi((0,t_{0}]\times\S)$, to the past of $\iota(\S)$. This leads to a contradiction by an argument similar
  to the proof of \cite[Lemma~18.18, p.~204]{RinCauchy}. For a similar reason, $(M,g)$ is past $C^{2}$ inextendible. The theorem
  follows. 
\end{proof}

\begin{proof}[Proof of Theorem~\ref{thm:degenerate case}]
  Given the conclusions of Lemma~\ref{lemma: from qms to frs data}, the proof of Theorem~\ref{thm:degenerate case} is very similar to the proof
  of Theorem~\ref{thm: big bang formation}; we combine
  Theorems~\ref{thm:GlobalExistenceFermiWalker} and \ref{thm:Asymptotics}. However, there are two main differences. First, we cannot
  assume the eigenvalues to be distinct. Thus the corresponding arguments in Theorem~\ref{thm:Asymptotics} have to be modified. We leave the
  details of this modification to the reader. Second, in Theorem~\ref{thm:degenerate case}, we allow any starting time such that
  (\ref{eq:condition for Cauchy stability}) is satisfied to the past of the starting time. In order to prove that this is sufficient, it
  is enough to appeal to Cauchy stability; see Lemma~\ref{lemma:CauchystabEinstein}. However, there is one technical issue associated with
  taking this step: the norms involved in the Cauchy stability statement, see (\ref{seq:Cauchycriterion}), involve the time derivative
  of the second fundamental form and the second time derivative of the scalar field. In practice, it is therefore necessary to use
  the equations to take the step from the norms we actually control to the norms appearing in Lemma~\ref{lemma:CauchystabEinstein}.
  Moreover, this step involves estimating the difference of the corresponding lapse functions. The necessary steps are similar to ones
  already taken in these notes, and are left to the reader. 
\end{proof}

\appendix

\section{Sobolev inequalities} \label{sec:SobolevInequalities}

Let $(\S,h_{\refer})$ and $(E_{i})_{i=1}^{n}$ be as described in Subsection~\ref{ssection: The reference frame}.
Let $s \in \nn{}$ and $\psi \in C^\infty(\S)$. The $C^{s}$- and $H^{s}$-norms we use in this paper are defined
as follows:
\begin{subequations}\label{seq:Sobolev norms}
  \begin{align}
    \|\psi\|_{C^{s}(\S)}
    := & \textstyle{\sum}_{|\bfI|\leq s} \norm{E_\bfI \psi}_{C^{0}(\S)}, \\
    \|\psi\|_{H^s(\S)}
    := & \big(\textstyle{\int}_{\S}\textstyle{\sum}_{|\bfI|\leq s}|E_{\bfI}\psi|^{2}\mu_{h_{\refer}}\big)^{1/2}. \label{eq:psiHkdef}
  \end{align}
\end{subequations}
Here $\mu_{h_{\refer}}$ is the volume form associated with the reference metric $h_{\refer}$ and the bold index $\bfI$
refers to a sequence of indices $\bfI \in \{1, \hdots, n\}^l$, for some $l \in \nn{}_0$ and we define
$E_{\bfI} := E_{i_1}E_{i_2}\cdots E_{i_\ell}$ in case $\bfI = (i_1, \hdots, i_l)$. We also use the notation $\abs \bfI = l$.
\begin{remark}
  There is one exception to the above convention, namely in the case of geometric initial data, expansion-normalized initial data etc; cf.,
  e.g., (\ref{eq: Sobolev bound assumption}), (\ref{eq:mKchhconvratesfN}), (\ref{eq:smallnessdataonsing}) etc. In these cases, we use the
  geometrically defined Sobolev and $C^{k}$-norms associated with $(\S,h_{\refer})$. 
\end{remark}
The $E_{i}$ do not, in general, commute. In several settings this creates important differences with Sobolev norms defined using commuting
derivative operators. We therefore here discuss some of the properties of the norms (\ref{seq:Sobolev norms}). 

\begin{lemma} \label{lemma:EstimateCommutators}
  With $(\S,h_{\refer})$ and $(E_{i})_{i=1}^{n}$ as in Subsection~\ref{ssection: The reference frame}, let
  $\ell$, $m\in\nn{}$ and $\varphi$, $\psi \in C^{\infty}(\S)$. Then
  \begin{subequations}
  \begin{align}
    \textstyle{\sum}_{|\bfI| = \ell} | E_{\bfI} (\varphi \psi)   |^{m}
    &\leq 2^{m \ell} \textstyle{\sum}_{|\bfJ| \leq \ell } \left| E_{\bfJ}(\varphi)   \right|^m
    \textstyle{\sum}_{|\bfJ| \leq \ell } \left| E_{\bfJ}(\psi)   \right|^m ,\label{eq:productraisedtom}\\
    \textstyle{\sum}_{|\bfI| = \ell} |[E_\bfI, \varphi] (\psi)   |^m
    &\leq 2^{m \ell} \textstyle{\sum}_{|\bfJ| \leq \ell} \left| E_{\bfJ}(\varphi)   \right|^m
    \textstyle{\sum}_{|\bfJ| \leq \ell - 1} \left| E_{\bfJ}(\psi)   \right|^m. \label{eq:commutatorraisedtom}
  \end{align}
  \end{subequations}
\end{lemma}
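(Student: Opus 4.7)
The proof follows from a careful expansion of $E_\bfI(\varphi\psi)$ via the generalized Leibniz rule for non-commuting vector fields, combined with Jensen's inequality and a combinatorial counting argument.

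The first step is to establish the generalized Leibniz identity
\begin{equation*}
E_\bfI(\varphi\psi) \;=\; \sum_{S\subseteq\{1,\ldots,\ell\}}(E_{\bfI_S}\varphi)(E_{\bfI_{S^c}}\psi),
\end{equation*}
where, for $\bfI=(i_1,\ldots,i_\ell)$, the multiindex $\bfI_S$ denotes the subsequence of $\bfI$ indexed by $S$ in the inherited order. This follows by induction on $\ell$: the base case is the ordinary product rule, and the inductive step is obtained by applying $E_{i_\ell}$ to the identity at level $\ell-1$ and re-indexing the resulting $2^{\ell-1}+2^{\ell-1}$ terms as a sum over subsets of $\{1,\ldots,\ell\}$. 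In particular, the expansion contains exactly $2^\ell$ terms.

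For the first inequality, I would then apply Jensen's inequality (or the power-mean inequality) to the convex function $x\mapsto x^m$ on $[0,\infty)$, applied to the $2^\ell$ terms above, yielding
\begin{equation*}
|E_\bfI(\varphi\psi)|^m \;\leq\; 2^{\ell(m-1)}\sum_{S\subseteq\{1,\ldots,\ell\}}|E_{\bfI_S}\varphi|^m|E_{\bfI_{S^c}}\psi|^m.
\end{equation*}
Summing over all $\bfI$ with $|\bfI|=\ell$, I would count how many pairs $(\bfI,S)$ give rise to a fixed pair $(\bfJ,\bfK)$ with $|\bfJ|=k$ and $|\bfK|=\ell-k$: this number is exactly the binomial coefficient $\binom{\ell}{k}$, corresponding to the interleavings of $\bfJ$ and $\bfK$ into a multiindex of length $\ell$. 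Bounding $\binom{\ell}{k}\leq 2^\ell$ uniformly and noting that the product of the sums $\sum_{|\bfJ|\leq \ell}|E_\bfJ\varphi|^m$ and $\sum_{|\bfK|\leq \ell}|E_\bfK\psi|^m$ contains each term of the summed-over pairs at least once, I obtain the claimed constant $2^{\ell(m-1)}\cdot 2^\ell = 2^{m\ell}$, establishing \eqref{eq:productraisedtom}.

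For the commutator estimate \eqref{eq:commutatorraisedtom}, the key observation is that
\begin{equation*}
[E_\bfI,\varphi]\psi \;=\; E_\bfI(\varphi\psi) - \varphi\, E_\bfI(\psi),
\end{equation*}
and the subtracted term $\varphi E_\bfI(\psi)$ is precisely the $S=\emptyset$ contribution in the Leibniz expansion above. Consequently, the expansion for $[E_\bfI,\varphi]\psi$ is a sum over $S\neq\emptyset$, and for every such $S$ we have $|S^c|\leq \ell-1$, so the derivatives falling on $\psi$ number at most $\ell-1$. Repeating the Jensen-plus-counting argument (now with $2^\ell-1$ terms, still controlled by $2^{\ell(m-1)}$) produces the same constant $2^{m\ell}$, but with the $\psi$-sum restricted to $|\bfJ|\leq\ell-1$. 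The main obstacle is purely bookkeeping: verifying the Leibniz expansion for non-commuting derivatives and tracking the multinomial counting cleanly, which is the only place where the non-commutativity of the frame $(E_i)_{i=1}^n$ enters; everything else is a standard Jensen estimate.
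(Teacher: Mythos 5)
Your proof is correct, but it takes a genuinely different route from the paper's. The paper proves both inequalities by induction on $\ell$: it performs a single Leibniz step (peeling off one derivative), applies the elementary bound $(A+B)^m\leq 2^{m-1}(A^m+B^m)$, and invokes the inductive hypothesis --- with \eqref{eq:productraisedtom} established first and then used inside the inductive step for \eqref{eq:commutatorraisedtom} via the identity $[E_{(\bfJ,i)},\varphi]\psi = E_{\bfJ}(E_i(\varphi)\psi) + [E_{\bfJ},\varphi](E_i\psi)$. You instead fully expand $E_{\bfI}(\varphi\psi)$ into the $2^\ell$ terms indexed by subsets $S\subseteq\{1,\ldots,\ell\}$ (the generalized Leibniz identity, which indeed holds for non-commuting frames as long as the subsequences $\bfI_S$, $\bfI_{S^c}$ preserve the inherited order), apply Jensen to the $2^\ell$ summands, and then count the $\binom{\ell}{k}\leq 2^\ell$ interleavings that map a fixed pair $(\bfJ,\bfK)$ back to a pair $(\bfI,S)$. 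What your approach buys is transparency: the combinatorial origin of the constant $2^{m\ell} = 2^{\ell(m-1)}\cdot 2^\ell$ is laid bare (Jensen contributes $2^{\ell(m-1)}$, the interleaving count contributes $2^\ell$), whereas in the paper's induction the constant simply accumulates step by step. What the paper's approach buys is economy: it never needs the full Leibniz expansion or the interleaving count, and the commutator case \eqref{eq:commutatorraisedtom} falls out of the same inductive template. Your treatment of the commutator --- identifying $\varphi E_\bfI(\psi)$ as exactly the $S=\emptyset$ term, so that the remaining sum has $|S^c|\leq\ell-1$ derivatives on $\psi$ --- is a clean way to see why the $\psi$-sum is restricted to $|\bfJ|\leq\ell-1$. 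One small labeling nit: with the paper's convention $E_\bfI = E_{i_1}\cdots E_{i_\ell}$, the inductive step for the Leibniz identity should apply $E_{i_1}$ on the \emph{outside} to the identity for $(i_2,\ldots,i_\ell)$, not $E_{i_\ell}$; this does not affect the substance of your argument.
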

\begin{remark}
  The result is the same if $|\cdot|$ is replaced by $\|\cdot\|_{C^{0}(\S)}$.
\end{remark}
\begin{proof}
  We prove the first estimate by induction. In case $\ell = 0$, the statement is obvious. Now assume that the statement holds
  for a given $\ell = k\in\nn{}$. Then, as $(A+B)^m \leq 2^{m-1} (A^m + B^m)$, 
\begin{align*}
\textstyle{\sum}_{|\bfI| = k+1} | E_{\bfI} (\varphi \psi)   |^{m} 
= & \textstyle{\sum}_{|\bfJ| = k} \textstyle{\sum}_{i}  | E_{\bfJ} (E_{i}(\varphi) \psi
    + \varphi E_{i}(\psi) )  |^{m} \\
\leq & 2^{m-1} \textstyle{\sum}_{|\bfJ| = k} \textstyle{\sum}_{i}
    (| E_{\bfJ} (E_{i}(\varphi) \psi)  |^m 
    +  | E_{\bfJ}((\varphi) E_{i}(\psi))  |^{m}) \\
\leq & 2^{m-1} 2^{m k} \big[ \textstyle{\sum}_{|\bfJ| \leq  k} \textstyle{\sum}_{i}
    | E_{\bfJ} E_{i}(\varphi)  |^m 
     \textstyle{\sum}_{|\bfJ| \leq k} | E_{\bfJ} (\psi)  |^m \\
& \phantom{2^{m-1} 2^{m k} \big[}+   \textstyle{\sum}_{|\bfJ| \leq  k}  | E_{\bfJ} (\varphi)  |^m 
     \textstyle{\sum}_{|\bfJ| \leq  k}  \textstyle{\sum}_{i}  | E_{\bfJ} E_{i} (\psi)  |^m \big] \\
\leq & 2^{m(k+1)}  \textstyle{\sum}_{|\bfJ| \leq k+1 } \left| E_{\bfJ}(\varphi)   \right|^m 
    \textstyle{\sum}_{|\bfJ| \leq k+1 } \left| E_{\bfJ}(\psi)   \right|^m.
\end{align*}
Assuming that the second statement holds for a given $\ell = k$, 
\begin{align*}
\textstyle{\sum}_{|\bfI| = k+1} | [E_{\bfI}, \varphi] (\psi)   |^{m} 
= & \textstyle{\sum}_{|\bfJ| = k} \textstyle{\sum}_{i}  | E_{\bfJ}(E_{i}(\varphi)\psi)+[E_{\bfJ},\varphi]E_{i}\psi|^{m} \\
\leq & 2^{m-1} \textstyle{\sum}_{|\bfJ| = k} \textstyle{\sum}_{i}\big(| E_{\bfJ} (E_{i}(\varphi) \psi)  |^m 
    +  | [E_{\bfJ},\varphi] (E_{i}(\psi))  |^{m}\big) \\
\leq & 2^{m-1} 2^{m k} \big[ \textstyle{\sum}_{|\bfJ| \leq  k} \textstyle{\sum}_{i}
    | E_{\bfJ} E_{i}(\varphi)  |^m 
     \textstyle{\sum}_{|\bfJ| \leq k} | E_{\bfJ} (\psi)  |^m \\
& \phantom{2^{m-1} 2^{m k} \big[}+   \textstyle{\sum}_{|\bfJ| \leq  k}  | E_{\bfJ} (\varphi)  |^m 
     \textstyle{\sum}_{|\bfJ| \leq  k-1}  \textstyle{\sum}_{i}  | E_{\bfJ} E_{i} (\psi)  |^m \big] \\
\leq & 2^{m(k+1)}  \textstyle{\sum}_{|\bfJ| \leq k+1 } \left| E_{\bfJ}(\varphi)   \right|^m 
    \textstyle{\sum}_{|\bfJ| \leq k } \left| E_{\bfJ}(\psi)  \right|^m.
\end{align*}
This concludes the proof.
\end{proof}
This leads to the following corollary.
\begin{cor} \label{cor:SobolevAlgebra}
  With $(\S,h_{\refer})$ and $(E_{i})_{i=1}^{n}$ as in Subsection~\ref{ssection: The reference frame}, let
  $\ell\in\nn{}$ and $\varphi, \psi \in C^{\infty}(\S)$. Then
\begin{equation} \label{eq:SobolevAlgebra}
\| \varphi \psi \|_{C^{\ell}(\S)} \leq 
    \big(2^{\ell+1} - 1\big)
    \| \varphi \|_{C^{\ell}(\S)}
    \|  \psi \|_{C^{\ell}(\S)}.
\end{equation}
\end{cor}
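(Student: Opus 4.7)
The plan is to derive this directly from the inequality (\ref{eq:productraisedtom}) of Lemma~\ref{lemma:EstimateCommutators} with $m=1$, which is the key estimate controlling products of frame derivatives. The geometric series factor $2^{\ell+1}-1$ in the conclusion is exactly the sum $\sum_{k=0}^{\ell}2^k$, which hints that the proof should proceed by estimating $\sum_{|\bfI|=k}\|E_\bfI(\varphi\psi)\|_{C^0(\S)}$ for each $k$ separately and then summing over $k\in\{0,\dots,\ell\}$.

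First, I would unfold the definition of the $C^\ell$-norm from (\ref{seq:Sobolev norms}) and write
\begin{equation*}
\|\varphi\psi\|_{C^\ell(\S)} = \textstyle{\sum}_{k=0}^\ell\textstyle{\sum}_{|\bfI|=k}\|E_\bfI(\varphi\psi)\|_{C^0(\S)}.
\end{equation*}
Then, for each fixed $k\leq\ell$, I would apply (\ref{eq:productraisedtom}) with $m=1$ pointwise on $\S$, which yields
\begin{equation*}
\textstyle{\sum}_{|\bfI|=k}|E_\bfI(\varphi\psi)|\leq 2^k\textstyle{\sum}_{|\bfJ|\leq k}|E_\bfJ\varphi|\,\textstyle{\sum}_{|\bfJ|\leq k}|E_\bfJ\psi|.
\end{equation*}

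Taking the $C^0(\S)$-norm of both sides and pulling it inside the finite sums on the right (using $\|fg\|_{C^0}\leq\|f\|_{C^0}\|g\|_{C^0}$), I obtain
\begin{equation*}
\textstyle{\sum}_{|\bfI|=k}\|E_\bfI(\varphi\psi)\|_{C^0(\S)}\leq 2^k\|\varphi\|_{C^k(\S)}\|\psi\|_{C^k(\S)}\leq 2^k\|\varphi\|_{C^\ell(\S)}\|\psi\|_{C^\ell(\S)},
\end{equation*}
where the last inequality uses the monotonicity $\|\cdot\|_{C^k(\S)}\leq\|\cdot\|_{C^\ell(\S)}$ for $k\leq\ell$, which is immediate from the definition.

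Finally, summing over $k=0,\dots,\ell$ and using $\sum_{k=0}^\ell 2^k=2^{\ell+1}-1$ gives the claimed estimate (\ref{eq:SobolevAlgebra}). There is no real obstacle here: Lemma~\ref{lemma:EstimateCommutators} does all the heavy lifting, and the corollary is essentially a bookkeeping step converting the layer-by-layer pointwise bound into a norm estimate. The only minor point to be careful about is that the factor $2^k$ rather than $2^\ell$ appears in each layer, which is precisely what allows the geometric sum to produce $2^{\ell+1}-1$ instead of the cruder $(\ell+1)2^\ell$.
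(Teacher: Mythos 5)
Your overall plan — decompose the $C^{\ell}$-norm into layers indexed by the derivative order $k$, apply the key estimate from Lemma~\ref{lemma:EstimateCommutators} with $m=1$ at each layer, and sum the geometric series $\sum_{k=0}^{\ell}2^{k}=2^{\ell+1}-1$ — is exactly the intended argument and does produce the stated constant. But there is a genuine flaw in the step where you pass from the pointwise inequality to the inequality between sums of $C^{0}$-norms. Taking the $C^{0}$-norm of both sides of the pointwise estimate
\[
\textstyle{\sum}_{|\bfI|=k}|E_{\bfI}(\varphi\psi)(x)|\leq 2^{k}\big(\textstyle{\sum}_{|\bfJ|\leq k}|E_{\bfJ}\varphi(x)|\big)\big(\textstyle{\sum}_{|\bfJ|\leq k}|E_{\bfJ}\psi(x)|\big)
\]
yields a bound on $\big\|\textstyle{\sum}_{|\bfI|=k}|E_{\bfI}(\varphi\psi)|\big\|_{C^{0}(\S)}$, i.e.\ on the supremum of the \emph{sum}. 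What you need, however, is a bound on the \emph{sum of the suprema}, namely $\textstyle{\sum}_{|\bfI|=k}\|E_{\bfI}(\varphi\psi)\|_{C^{0}(\S)}$, and the inequality between these two goes the wrong way: the sum of suprema is $\geq$ the supremum of the sum. So ``taking $C^{0}$-norms of both sides'' does not give the display you wrote.

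The fix is simple and is supplied by the paper itself: the remark immediately following Lemma~\ref{lemma:EstimateCommutators} states that \eqref{eq:productraisedtom} holds with $|\cdot|$ replaced throughout by $\|\cdot\|_{C^{0}(\S)}$; this is a separate (though easily provable, by the same induction with $\|fg\|_{C^{0}}\leq\|f\|_{C^{0}}\|g\|_{C^{0}}$ in the base case) statement, not a consequence of taking suprema in the pointwise bound. Invoking that remark with $m=1$ directly gives
\[
\textstyle{\sum}_{|\bfI|=k}\|E_{\bfI}(\varphi\psi)\|_{C^{0}(\S)}\leq 2^{k}\textstyle{\sum}_{|\bfJ|\leq k}\|E_{\bfJ}\varphi\|_{C^{0}(\S)}\textstyle{\sum}_{|\bfJ|\leq k}\|E_{\bfJ}\psi\|_{C^{0}(\S)}=2^{k}\|\varphi\|_{C^{k}(\S)}\|\psi\|_{C^{k}(\S)},
\]
after which the rest of your argument (monotonicity of $\|\cdot\|_{C^{k}}$ in $k$ and the geometric sum) goes through unchanged. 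Note also that trying to salvage the pointwise route by bounding each $\|E_{\bfI}(\varphi\psi)\|_{C^{0}}$ individually would introduce a spurious combinatorial factor (the number of multi-indices of length $k$), which would destroy the clean constant $2^{\ell+1}-1$.
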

Next, we formulate a fundamental interpolation estimate. 
\begin{lemma} \label{le: interpolation}
  With $(\S,h_{\refer})$ and $(E_{i})_{i=1}^{n}$ as in Subsection~\ref{ssection: The reference frame}, let
  $k$, $l\in\nn{}$ be such that $0\leq l \leq k$ and $k\geq 1$. Then there is a constant $C$, depending
  only on $k$, $l$, $(\S,h_{\refer})$ and $(E_i)_{i=1}^n$, such that for all $\psi \in C^{\infty}(\S)$,  
\begin{equation}\label{eq:psiinterpol}
    \|\psi\|_{H^{l}(\S)}\leq C\|\psi\|_{L^{2}(\S)}^{1-l/k}\|\psi\|_{H^{k}(\S)}^{l/k}.
\end{equation}
\end{lemma}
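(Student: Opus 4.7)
The plan is to reduce the inequality to the single-step interpolation
\begin{equation}\label{eq:single-step-plan}
\|\psi\|_{H^{l}(\S)}^{2} \leq C\|\psi\|_{H^{l-1}(\S)}\|\psi\|_{H^{l+1}(\S)}\quad\text{for}\quad 1 \leq l \leq k-1,
\end{equation}
and then to iterate by a discrete-convexity argument. The boundary cases $l\in\{0,k\}$ are immediate from the definitions, and we may assume $\psi\not\equiv 0$.

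To prove (\ref{eq:single-step-plan}), I would apply the divergence theorem to the vector field $u(E_{i}v)E_{i}$ to obtain the integration by parts identity
\begin{equation*}
\textstyle{\int}_{\S}(E_{i}u)(E_{i}v)\,\mu_{h_{\refer}} = -\textstyle{\int}_{\S}u\,E_{i}(E_{i}v)\,\mu_{h_{\refer}} - \textstyle{\int}_{\S}u\,(E_{i}v)\,\rodiv_{h_{\refer}}(E_{i})\,\mu_{h_{\refer}}
\end{equation*}
for any $u,v\in C^{\infty}(\S)$, where $\rodiv_{h_{\refer}}(E_{i})\in C^{\infty}(\S)$ depends only on $(\S,h_{\refer})$ and $(E_{i})_{i=1}^{n}$. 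Applied with $u=v=E_{\bfJ}\psi$ for a multi-index $\bfJ$ with $|\bfJ|=l-1$, Cauchy-Schwarz gives
\begin{equation*}
\|E_{i}E_{\bfJ}\psi\|_{L^{2}(\S)}^{2} \leq \|E_{\bfJ}\psi\|_{L^{2}(\S)}\|E_{i}E_{i}E_{\bfJ}\psi\|_{L^{2}(\S)} + C\|E_{\bfJ}\psi\|_{L^{2}(\S)}\|E_{i}E_{\bfJ}\psi\|_{L^{2}(\S)}.
\end{equation*}
Summing over $i\in\{1,\dots,n\}$ and $|\bfJ|=l-1$, applying Cauchy-Schwarz to the finite sums, and absorbing the correction term via Young's inequality yields
\begin{equation*}
\textstyle{\sum}_{|\bfI|=l}\|E_{\bfI}\psi\|_{L^{2}(\S)}^{2} \leq C\|\psi\|_{H^{l-1}(\S)}\|\psi\|_{H^{l+1}(\S)}.
\end{equation*}
Combined with the trivial estimate $\|\psi\|_{H^{l-1}(\S)}^{2}\leq\|\psi\|_{H^{l-1}(\S)}\|\psi\|_{H^{l+1}(\S)}$, this gives (\ref{eq:single-step-plan}).

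For the iteration step, take logarithms in (\ref{eq:single-step-plan}) to get $2\log\|\psi\|_{H^{l}(\S)}\leq\log\|\psi\|_{H^{l-1}(\S)}+\log\|\psi\|_{H^{l+1}(\S)}+\log C$ for $1\leq l\leq k-1$. Setting $b_{l} := \log\|\psi\|_{H^{l}(\S)}+\tfrac{l(k-l)}{2}\log C$, a short computation using the identity $\binom{l+1}{2}-2\binom{l}{2}+\binom{l-1}{2}=1$ shows that $b_{l+1}-2b_{l}+b_{l-1}\geq 0$, so $b_{l}$ is discretely convex in $l$ and hence bounded above by its linear interpolant: $b_{l}\leq\tfrac{k-l}{k}b_{0}+\tfrac{l}{k}b_{k}$. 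Re-exponentiating gives (\ref{eq:psiinterpol}) with effective constant $C^{l(k-l)/2}$, which depends only on $k$, $l$, $(\S,h_{\refer})$ and $(E_{i})_{i=1}^{n}$ as required. The main subtlety lies in the integration by parts step, where the non-commutativity of the frame and the non-vanishing of $\rodiv_{h_{\refer}}(E_{i})$ generate rest terms; since these are strictly of lower order and carry only fixed geometric weights, they can be controlled either by Young's inequality or by the $H^{l-1}$-factor, preserving the structure of the single-step bound.
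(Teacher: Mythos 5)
Your proposal takes a genuinely different route from the paper. The paper proves only the single base case
\[
\|\psi\|_{H^{1}(\S)}\leq C\|\psi\|_{L^{2}(\S)}^{1/2}\|\psi\|_{H^{2}(\S)}^{1/2}
\]
and then runs an induction on $k$ that never goes through the general single-step interpolation: it inserts the inductive hypothesis several times, obtains a bound of the form $\|\psi\|_{H^{l}}\leq C\|\psi\|_{L^{2}}^{\alpha}\|\psi\|_{H^{l}}^{\beta}\|\psi\|_{H^{k+1}}^{\gamma}$ with $\beta<1$, and divides through by $\|\psi\|_{H^{l}}^{\beta}$. Your version instead proves the single-step inequality $\|\psi\|_{H^{l}}^{2}\leq C\|\psi\|_{H^{l-1}}\|\psi\|_{H^{l+1}}$ for every $1\leq l\leq k-1$ directly (via the same divergence-theorem integration by parts the paper uses for $l=1$, now at higher order — this is straightforward, since commutator and divergence corrections are strictly lower order) and then iterates by discrete log-convexity. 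Your route is the more conceptual one; the paper's induction, while shorter on notation, obscures the underlying log-convexity. One small caveat in your single-step argument: after Young's inequality the $\|\psi\|_{H^{l}}^{2}$ term cannot be absorbed directly into $\sum_{|\bfI|=l}\|E_{\bfI}\psi\|_{L^{2}}^{2}$; one must first add $\|\psi\|_{H^{l-1}}^{2}$ to both sides (and then estimate it by $\|\psi\|_{H^{l-1}}\|\psi\|_{H^{l+1}}$). This is harmless but should be made explicit.

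There is, however, a sign error in the convexity step. Your correction term $g(l)=\tfrac{l(k-l)}{2}\log C$ has discrete second difference $g(l+1)-2g(l)+g(l-1)=-\log C$ (not $+\log C$), while the logarithm of the single-step inequality gives $a_{l+1}-2a_{l}+a_{l-1}\geq-\log C$ with $a_{l}=\log\|\psi\|_{H^{l}}$. With $b_{l}=a_{l}+g(l)$ as you defined it, one only gets $b_{l+1}-2b_{l}+b_{l-1}\geq-2\log C$, which fails to be nonnegative when $C>1$. The fix is to set $b_{l}:=a_{l}-\tfrac{l(k-l)}{2}\log C$ (equivalently, add $\binom{l}{2}\log C$, which is the form consistent with the binomial identity you cite and whose second difference is $+1$). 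With this corrected sign, $b_{l}$ is discretely convex, $b_{0}=a_{0}$ and $b_{k}=a_{k}$ (since $g$ vanishes at the endpoints), and the linear interpolation bound delivers exactly the final constant $C^{l(k-l)/2}$ you state — so your end result is right, but the definition of $b_{l}$ as written does not support the claimed inequality $b_{l+1}-2b_{l}+b_{l-1}\geq 0$.
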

\begin{remark}
Since $\S$ is closed, we can replace $L^2(\S)$ with $C^{0}(\S)$.
\end{remark}
\begin{proof}
Note, to begin with, that if $\psi\in C^{\infty}(\S)$, then
\begin{equation}\label{eq:EipsiEipsieq}
    E_{i}[\psi E_{i}(\psi)]=[E_{i}(\psi)]^{2}+\psi E_{i}^{2}(\psi).
\end{equation}
Next, let $f\in C^{\infty}(\S)$ and $X$ be a smooth vector field. Then
\begin{equation}\label{eq:Xfint}
\textstyle{\int}_{\S}X(f)\mu_{h_{\refer}}
    =\textstyle{\int}_{\S}\ml_{X}(f\mu_{h_{\refer}})-\textstyle{\int}_{\S}f\ml_{X}\mu_{h_{\refer}}
    =-\textstyle{\int}_{\S}f(\rodiv_{h_{\refer}}X)\mu_{h_{\refer}},
\end{equation}
where we used Cartan's magic formula, Stokes' theorem, 
the fact that $\partial\S = \varnothing$ and the fact that
$\ml_{X}\mu_{h}=(\rodiv_{h}X)\mu_{h}$ in the last step.
Integrating (\ref{eq:EipsiEipsieq}), the left-hand side becomes
\[
\textstyle{\int}_{\S}E_{i}[\psi E_{i}(\psi)]\mu_{h_{\refer}}
=-\textstyle{\int}_{\S}\psi E_{i}(\psi)(\rodiv_{h_{\refer}}E_{i})\mu_{h_{\refer}}.
\]
Note that this expression can be estimated, in absolute value, by
\[
C\textstyle{\int}_{\S}|\psi|\cdot|E_{i}(\psi)|\mu_{h_{\refer}}
\leq\frac{1}{2}\|E_{i}(\psi)\|_{L^{2}(\S)}^{2}+\frac{1}{2}C^{2}\|\psi\|_{L^{2}(\S)}^{2}
\]
for some constant $C$ depending only on $(\S,h_{\refer})$ and $(E_i)_{i=1}^n$.
Combining the above observations yields
\[
\|E_{i}(\psi)\|_{L^{2}(\S)}^{2}\leq \textstyle{\int}_{\S}|\psi|
\cdot|E_{i}^{2}(\psi)|\mu_{h_{\refer}}
+\frac{1}{2}\|E_{i}(\psi)\|_{L^{2}(\S)}^{2}
+\frac{1}{2}C^{2}\|\psi\|_{L^{2}(\S)}^{2}.
\]
Taking the second term on the right-hand side to the left-hand side
and summing over $i$ yields a constant $C$, depending only on $(\S,h_{\refer})$ and $(E_i)_{i=1}^n$,
such that 
  \[
  \|\psi\|_{H^{1}(\S)}
  \leq C\|\psi\|_{L^{2}(\S)}^{1/2}\|\psi\|_{H^{2}(\S)}^{1/2}
  \]
  for all $\psi\in C^{\infty}(\S)$; the relevant estimate for $\psi$ in $L^{2}(\S)$ is immediate.

Assume now that (\ref{eq:psiinterpol}) holds for some $k\geq 2$
and all $0\leq l\leq k$; we know this to be true for $k=2$.
We now wish to prove that the statement holds with $k$ replaced by $k+1$.
If $l=0$ or $l=k+1$, the statement is trivial, so we assume $1\leq l\leq k$.
Note that, for these integers, (\ref{eq:psiinterpol}) holds. On the other hand,
\[
    \|\psi\|_{H^{k}(\S)}\leq \|\psi\|_{L^{2}(\S)} 
    + \textstyle{\sum}_{i}\|E_{i}\psi\|_{H^{k-1}(\S)}.
\]
For this reason it is sufficient to estimate two expressions.
First, we need to estimate
\[
  \|\psi\|_{L^{2}(\S)}^{1-l/k}\|E_{i}\psi\|_{H^{k-1}(\S)}^{l/k}.
\]
In order to estimate the second factor, note that, due to the induction hypothesis,
\[
  \|E_{i}\psi\|_{H^{k-1}(\S)}\leq C\|E_{i}\psi\|_{L^{2}(\S)}^{1-(k-1)/k}\|E_{i}\psi\|_{H^{k}(\S)}^{(k-1)/k}.
\]
On the other hand, since $l\geq 1$,
\[
  \|E_{i}\psi\|_{L^{2}(\S)}\leq C\|\psi\|_{L^{2}(\S)}^{1-1/l}\|\psi\|_{H^{l}(\S)}^{1/l}.
\]
Combining the last two estimates yields
\begin{equation}\label{eq:interpolwt}
    \|\psi\|_{L^{2}(\S)}^{1-l/k}\|E_{i}\psi\|_{H^{k-1}(\S)}^{l/k}\leq C\|\psi\|_{L^{2}(\S)}^{(k+1-l)(k-1)/k^{2}}\|\psi\|_{H^{l}(\S)}^{1/k^{2}}\|\psi\|_{H^{k+1}(\S)}^{l(k-1)/k^{2}}.
\end{equation}
The second expression we need to estimate is
\[
  \|\psi\|_{L^{2}(\S)}^{1-l/k}\|\psi\|_{L^{2}(\S)}^{l/k}=\|\psi\|_{L^{2}(\S)}.
\]
Clearly, the right-hand side is bounded by the right-hand side of (\ref{eq:interpolwt}).
To conclude,
  \[
  \|\psi\|_{H^{l}(\S)}\leq C\|\psi\|_{L^{2}(\S)}^{(k+1-l)(k-1)/k^{2}}\|\psi\|_{H^{l}(\S)}^{1/k^{2}}\|\psi\|_{H^{k+1}(\S)}^{l(k-1)/k^{2}}.
  \]
If $\|\psi\|_{H^{l}(\S)}=0$, there is nothing to prove.
Otherwise, we divide by $\|\psi\|_{H^{l}(\S)}^{1/k^{2}}$.
This yields  (\ref{eq:psiinterpol}) with $k$ replaced by $k+1$.
This finishes the inductive step and proves the lemma.
\end{proof}

The main tool for deriving the higher-order estimates, in particular the ones employed in the scheme of Subsection~\ref{sec:scheme},
is the following.

\begin{lemma} \label{lemma: products in L2}
  With $(\S,h_{\refer})$ and $(E_{i})_{i=1}^{n}$ as in Subsection~\ref{ssection: The reference frame}, let
  $l_{i}\in\nn{}$, $i=1,\dots,j$, and $l=l_{1}+\dots+l_{j}$. If $|\bfI_{i}|=l_{i}$ and $\psi_{i}\in C^{\infty}(\S)$, $i=1,\dots,j$, then
\begin{equation}\label{eq:moser}
    \|E_{\bfI_{1}}\psi_{1}\cdots E_{\bfI_{j}}\psi_{j}\|_{L^2(\S)}
    \leq C\textstyle{\sum}_{i=1}^{j}\|\psi_{i}\|_{H^{l}(\S)}
        \prod_{m\neq i}\|\psi_{m}\|_{C^{0}(\S)},
  \end{equation}
where $C$ only depends on $l$, $(\S,h_{\refer})$ and $(E_i)_{i=1}^n$. In particular,
\begin{equation}\label{eq:moser2}
    \|\psi_1 \cdots \psi_j \|_{H^l(\S)}
    \leq C \textstyle{\sum}_{i=1}^{j}\|\psi_{i}\|_{H^{l}(\S)}
        \prod_{m\neq i}\|\psi_{m}\|_{C^{0}(\S)}.
\end{equation}
\end{lemma}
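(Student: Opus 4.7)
The proof plan is to establish (\ref{eq:moser}) by a classical Gagliardo--Nirenberg plus H\"older argument adapted to the frame-defined Sobolev norms, and then deduce (\ref{eq:moser2}) by a non-commutative Leibniz expansion.

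First I would dispose of the trivial cases. If some $l_i=0$, I simply pull the factor $E_{\bfI_i}\psi_i=\psi_i$ out in $C^{0}(\S)$ and reduce to a shorter product; likewise if $l=0$ the estimate is trivial. So assume $l_i\geq 1$ for every $i$ and $l=\sum l_i\geq 1$. The key auxiliary estimate I would prove is the following Gagliardo--Nirenberg inequality, valid for $\psi\in C^{\infty}(\S)$ and $|\bfI|=l_i$ with $1\leq l_i\leq l$:
\begin{equation}\label{eq:GN-plan}
\|E_{\bfI}\psi\|_{L^{2l/l_i}(\S)}\leq C\|\psi\|_{C^{0}(\S)}^{1-l_i/l}\|\psi\|_{H^{l}(\S)}^{l_i/l},
\end{equation}
where $C$ depends only on $l$, $(\S,h_{\refer})$ and $(E_i)_{i=1}^{n}$. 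To obtain \eqref{eq:GN-plan}, I would choose an integer $s$ with $s>n/2$ large enough that $H^{s}(\S)\hookrightarrow L^{\infty}(\S)$ and $H^{s-n/p}(\S)\hookrightarrow L^{p}(\S)$ for $p=2l/l_i$ (Sobolev embedding for the frame-defined norms follows from the standard intrinsic one, since $\|\cdot\|_{H^s}$ is equivalent to the intrinsic $H^s$-norm on the closed manifold $(\S,h_{\refer})$). Combining this Sobolev embedding with the fact that $\|E_{\bfI}\psi\|_{H^{m}}\leq C\|\psi\|_{H^{m+l_i}}$ (which follows by iteratively expanding commutators $[E_{\bfJ},E_{\bfI}]$ in the spirit of Lemma~\ref{lemma:EstimateCommutators}) and then applying the interpolation inequality of Lemma~\ref{le: interpolation}, together with $\|\psi\|_{L^2(\S)}\leq C\|\psi\|_{C^0(\S)}$ since $\S$ is closed, yields \eqref{eq:GN-plan}.

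With \eqref{eq:GN-plan} in hand, I would conclude the proof of \eqref{eq:moser} in two steps. Apply H\"older's inequality with exponents $p_i:=2l/l_i$ (which satisfy $\sum_i 1/p_i=\sum_i l_i/(2l)=1/2$) to get
\begin{equation*}
\|E_{\bfI_1}\psi_1\cdots E_{\bfI_j}\psi_j\|_{L^2(\S)}\leq\textstyle{\prod}_{i=1}^{j}\|E_{\bfI_i}\psi_i\|_{L^{p_i}(\S)}\leq C\textstyle{\prod}_{i=1}^{j}\|\psi_i\|_{C^{0}(\S)}^{1-l_i/l}\|\psi_i\|_{H^{l}(\S)}^{l_i/l},
\end{equation*}
where the last step uses \eqref{eq:GN-plan}. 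Then the weighted AM--GM inequality with weights $l_i/l$ (summing to $1$) applied to the nonnegative numbers $a_i:=\|\psi_i\|_{H^{l}(\S)}\prod_{m\neq i}\|\psi_m\|_{C^{0}(\S)}$ gives $\prod_i a_i^{l_i/l}\leq\sum_i (l_i/l) a_i\leq\sum_i a_i$, which is exactly the right-hand side of \eqref{eq:moser}.

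Finally, \eqref{eq:moser2} follows from \eqref{eq:moser} by a non-commutative Leibniz expansion: for any multiindex $\bfJ$ with $|\bfJ|\leq l$, iteratively applying $E_i(\psi_a\psi_b)=E_i(\psi_a)\psi_b+\psi_a E_i(\psi_b)$ allows one to expand $E_{\bfJ}(\psi_1\cdots\psi_j)$ as a finite linear combination (with integer coefficients bounded in terms of $l$ only) of products $E_{\bfI_1}\psi_1\cdots E_{\bfI_j}\psi_j$ with $|\bfI_1|+\cdots+|\bfI_j|=|\bfJ|\leq l$; applying \eqref{eq:moser} to each such product (with $l$ replaced by some $l'\leq l$ and bounding $\|\psi_i\|_{H^{l'}}\leq\|\psi_i\|_{H^l}$) and summing over $|\bfJ|\leq l$ yields \eqref{eq:moser2}. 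The main obstacle I anticipate is a careful verification of \eqref{eq:GN-plan} for the frame-defined norms --- especially keeping the commutator error terms under control so that the interpolation inequality of Lemma~\ref{le: interpolation} can be applied cleanly --- but this is routine once one notes that all commutators $[E_i,E_j]$ have smooth, bounded coefficients in the frame.
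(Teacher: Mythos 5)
Your high-level strategy---the Gagliardo--Nirenberg inequality \eqref{eq:GN-plan}, followed by H\"older and weighted AM--GM, and then a Leibniz expansion for \eqref{eq:moser2}---is the classical proof of a Moser-type product estimate and would indeed give the lemma once \eqref{eq:GN-plan} is established. The paper itself does not prove the lemma this way; it simply cites \cite[Corollary~B.8]{RinWave}, so your from-scratch argument is a genuinely different (and more self-contained) route. The H\"older and AM--GM bookkeeping is correct, and the Leibniz expansion in the non-commuting frame $(E_i)$ is fine: iterating $E_i(\varphi\psi)=(E_i\varphi)\psi+\varphi(E_i\psi)$ really does write $E_{\bfJ}(\psi_1\cdots\psi_j)$ as a sum of at most $j^{|\bfJ|}$ products $E_{\bfI_1}\psi_1\cdots E_{\bfI_j}\psi_j$ with $\sum|\bfI_i|=|\bfJ|$.

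However, your proposed derivation of the key ingredient \eqref{eq:GN-plan} has a real gap. You want to get $\|E_{\bfI}\psi\|_{L^{2l/l_i}}\leq C\|\psi\|_{C^0}^{1-l_i/l}\|\psi\|_{H^{l}}^{l_i/l}$ by composing a Sobolev embedding $H^{a}\hookrightarrow L^{p}$ (with $p=2l/l_i$, so $a=n(1/2-1/p)=n(l-l_i)/(2l)$), the bound $\|E_{\bfI}\psi\|_{H^{a}}\leq C\|\psi\|_{H^{a+l_i}}$, and the $L^2$-interpolation of Lemma~\ref{le: interpolation}. But that chain yields
\[
\|E_{\bfI}\psi\|_{L^{p}}\leq C\|\psi\|_{H^{a+l_i}}\leq C\|\psi\|_{L^2}^{1-(a+l_i)/l}\,\|\psi\|_{H^{l}}^{(a+l_i)/l},
\]
whose exponent on $\|\psi\|_{H^l}$ is $(a+l_i)/l$, not $l_i/l$; the two agree only when $a=0$, i.e.\ $l_i=l$. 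The extra cost $a>0$ paid for the Sobolev embedding cannot be absorbed: the $L^2$-based interpolation of Lemma~\ref{le: interpolation} does not allow the $L^\infty$ norm of $\psi$ to pay for it. As a result the exponents $\theta_i=(a_i+l_i)/l$ sum to strictly more than $1$ (when $j\geq 2$), so the subsequent H\"older + weighted AM--GM step no longer produces the claimed bound. (Your Sobolev-embedding statement ``$H^{s-n/p}\hookrightarrow L^p$'' is also off; the correct threshold is $n(1/2-1/p)$, independent of the $L^\infty$ threshold $s>n/2$.) The Gagliardo--Nirenberg inequality \eqref{eq:GN-plan} is true, but it must be proved directly (e.g.\ by the standard inductive integration-by-parts argument for intermediate derivatives, or a Littlewood--Paley argument)---or cited, as the paper does---rather than derived from Sobolev embedding plus Lemma~\ref{le: interpolation}.
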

\begin{proof}
The first statement is a special case of \cite[Corollary~B.8, p.~238]{RinWave}.
The second statement is an immediate consequence of the first. 
\end{proof}

Sometimes we need improvements of Lemma~\ref{lemma: products in L2}. 
\begin{lemma}  \label{lemma:noGNS}
  With $(\S,h_{\refer})$ and $(E_{i})_{i=1}^{n}$ as in Subsection~\ref{ssection: The reference frame}, let
  $1\leq m \in \nn{}$ and $\varphi_i$, $\psi_i$, $\chi_{i}$, $\pi_{ij}\in C^{\infty}(\S)$, $i,j \in \{1,..,m\}$.
  Moreover, let $\kappa_0$ be the smallest integer strictly greater than $n/2$, and $\kappa_{0}\leq \ell\in\nn{}$.
  Then, for any $ \eta > 0$,
\begin{align} \label{eq:noGNSSum}
\begin{split}
  \big\| \textstyle{\sum}_{i=1}^m \varphi_i \psi_i \big\|_{H^{\ell} (\S)}
  \leq & \big(\eta + \big\|\textstyle{\sum}_{i=1}^m \varphi_i^2 \big\|_{C^{0}(\S)}^{1/2} \big)\|\psi \|_{H^{\ell}(\S)}\\
  & + C \langle \eta^{-1} \rangle^{\ell-1} \left\langle \| \varphi \|_{H^{\ell}(\S)} \right \rangle^\ell
  \| \psi \|_{C^{\kappa_0}(\S)},
\end{split}
\end{align}
using conventions similar to (\ref{seq:norms of indexed quantities}). Moreover, for any $\eta >0$
\begin{align} \label{eq:noGNSInnerProduct}
\begin{split}
 \big| \textstyle{\sum}_{i=1}^m \langle \varphi_i \psi_i, \chi_i \rangle_{H^{\ell}(\S)} \big|
    \leq & \big(\eta + \textstyle{\max_i} \| \varphi_i \|_{C^{0}(\S)}  \big)
      \| \psi \|_{H^{\ell}(\S)}\| \chi \|_{H^{\ell}(\S)} \\
    & + C \langle \eta^{-1} \rangle^{\ell-1}  \left\langle \| \varphi \|_{H^{\ell}(\S)} \right \rangle^\ell
        \| \psi \|_{C^{\kappa_0}(\S)}\| \chi \|_{H^{\ell}(\S)}.
\end{split}
\end{align}
Finally, for any $\eta >0$,
\begin{align} \label{eq:noGNSInnerProduct2}
  \begin{split}
    \big| \textstyle{\sum}_{i,j=1}^{m} 
    \langle \varphi_i \psi_j, \pi_{ij} \rangle_{H^{\ell}(\S)} \big|
    \leq & \big(\eta + \big\| \textstyle{\sum}_{i=1}^{m} 
        \varphi_i^2 \big\|_{C^{0}(\S)}^{1/2} \big)
      \| \psi \|_{H^{\ell}(\S)}\| \pi \|_{H^{\ell}(\S)}\\
    & + C\langle \eta^{-1} \rangle^{\ell-1}  \left\langle\| \varphi \|_{H^{\ell}(\S)} \right \rangle^\ell
         \| \psi \|_{C^{\kappa_0}(\S)}\| \pi \|_{H^{\ell}(\S)}.
\end{split}
\end{align}
\end{lemma}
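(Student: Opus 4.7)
All three bounds rest on the same decomposition: for each multiindex $\bfI$ with $|\bfI|\leq\ell$,
$$E_\bfI(\varphi_i\psi_i) = \varphi_i\, E_\bfI\psi_i + R^i_\bfI,$$
where $R^i_\bfI$ collects every term in the (generalized) Leibniz expansion in which at least one derivative falls on $\varphi_i$; hence every remainder term is a pointwise product $E_\bfJ\varphi_i\cdot E_\bfK\psi_i$ with $|\bfJ|\geq 1$ and $|\bfK|\leq|\bfI|-1\leq\ell-1$. The main term is what will yield the sharp leading coefficient, through pointwise Cauchy--Schwarz in the $i$-summation; the remainder will be absorbed through Gagliardo--Nirenberg interpolation and Young's inequality. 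The bound $\ell\geq\kappa_0$ enters only through the Sobolev embedding $H^\ell(\S)\hookrightarrow C^0(\S)$.

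For (\ref{eq:noGNSSum}): by the Hilbert-space triangle inequality $\bigl(\sum_\bfI\|\cdot\|_{L^2}^2\bigr)^{1/2}$, the quantity $\|\sum_i\varphi_i\psi_i\|_{H^\ell}$ splits into a main-term and a remainder-term contribution. The pointwise estimate $\bigl|\sum_i\varphi_i E_\bfI\psi_i\bigr|^2\leq\bigl(\sum_i\varphi_i^2\bigr)\bigl(\sum_i|E_\bfI\psi_i|^2\bigr)$, integrated over $\S$ and summed over $\bfI$, yields the sharp bound $\|\sum_i\varphi_i^2\|_{C^0}^{1/2}\|\psi\|_{H^\ell}$ for the main-term contribution. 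For each remainder term, the standard Gagliardo--Nirenberg interpolation on $(\S,h_\refer)$,
$$\|E_\bfJ\varphi_i\|_{L^{2\ell/|\bfJ|}}\leq C\|\varphi_i\|_{H^\ell}^{|\bfJ|/\ell}\|\varphi_i\|_{C^0}^{1-|\bfJ|/\ell},\qquad\|E_\bfK\psi_i\|_{L^{2\ell/|\bfK|}}\leq C\|\psi_i\|_{H^\ell}^{|\bfK|/\ell}\|\psi_i\|_{C^0}^{1-|\bfK|/\ell},$$
combined with H\"older's inequality, bounds the term in $L^2$. Since $|\bfK|/\ell<1$, Young's inequality with conjugate exponents $\ell/|\bfK|$ and $\ell/|\bfJ|$ isolates $\eta\|\psi_i\|_{H^\ell}$ at the cost of a factor $\eta^{-|\bfK|/|\bfJ|}\|\varphi_i\|_{H^\ell}\|\varphi_i\|_{C^0}^{|\bfK|/|\bfJ|}\|\psi_i\|_{C^0}$. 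Applying Sobolev embedding $\|\varphi_i\|_{C^0}\leq C\|\varphi_i\|_{H^\ell}$ and observing that the worst case $|\bfJ|=1$ gives $|\bfK|/|\bfJ|=\ell-1$ and $\ell/|\bfJ|=\ell$, together with $\|\psi_i\|_{C^0}\leq\|\psi\|_{C^{\kappa_0}}$ and summation over the finitely many indices, produces the claim. The edge case $|\bfK|=0$ (where no Young step is needed and the contribution is directly $\|\varphi_i\|_{H^\ell}\|\psi_i\|_{C^0}$) and the cases $|\bfI|<\ell$ (using $\|\cdot\|_{H^{|\bfI|}}\leq\|\cdot\|_{H^\ell}$) are handled identically.

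The inequality (\ref{eq:noGNSInnerProduct}) follows from the $m=1$ case of (\ref{eq:noGNSSum}) applied to each pair $(\varphi_i,\psi_i)$: pair with $\chi_i$ via the $H^\ell$ Cauchy--Schwarz, then sum over $i$ using the Cauchy--Schwarz inequality $\sum_i\|\psi_i\|_{H^\ell}\|\chi_i\|_{H^\ell}\leq\|\psi\|_{H^\ell}\|\chi\|_{H^\ell}$, and use $\|\varphi_i\|_{C^0}\leq\max_j\|\varphi_j\|_{C^0}$ in the leading coefficient. For (\ref{eq:noGNSInnerProduct2}) the strategy of (\ref{eq:noGNSSum}) must be re-executed directly on the expression
$$\textstyle{\sum}_{i,j}\langle\varphi_i\psi_j,\pi_{ij}\rangle_{H^\ell}=\textstyle{\sum}_{|\bfI|\leq\ell}\textstyle{\int}_\S\textstyle{\sum}_{i,j}E_\bfI(\varphi_i\psi_j)\,E_\bfI(\pi_{ij})\,\mu_{h_{\refer}}.$$
The main term $\sum_{i,j}\varphi_i E_\bfI\psi_j\,E_\bfI\pi_{ij}$ is estimated by successive pointwise Cauchy--Schwarz, first in $i$ to produce the factor $(\sum_i\varphi_i^2)^{1/2}(\sum_i|E_\bfI\pi_{ij}|^2)^{1/2}$, then in $j$ to produce $\|\sum_i\varphi_i^2\|_{C^0}^{1/2}\|E_\bfI\psi\|_{L^2}\|E_\bfI\pi\|_{L^2}$; integrating over $\S$ and summing in $\bfI$ gives the desired leading contribution. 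The remainder $R^{ij}_\bfI$ is handled by the same Moser--Gagliardo--Nirenberg--Young argument as in (\ref{eq:noGNSSum}), now with one additional index to sum over.

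The essential technical obstacle, and the reason Lemma~\ref{lemma: products in L2} does not suffice, is the sharpness of the leading coefficient: I must obtain $\eta+\|\sum_i\varphi_i^2\|_{C^0}^{1/2}$ (respectively $\eta+\max_i\|\varphi_i\|_{C^0}$) rather than the cruder $C\max_i\|\varphi_i\|_{C^0}$ produced by a direct Moser product estimate. The main-plus-remainder decomposition, in which the pointwise Cauchy--Schwarz bundles the $\varphi_i$'s in the $\ell^2$-sense rather than in $\ell^\infty$, is precisely what achieves this sharpness; the Young step on the remainder is what converts a strictly sub-unit power of $\|\psi\|_{H^\ell}$ into a small linear contribution $\eta\|\psi\|_{H^\ell}$ absorbable into the left-hand side. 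This sharpness is what enables the key uses of the lemma in Section~\ref{sec:MainEstimates}, where the structural norm $\|\sum_I\bq_I^2+\bP_1^2\|_{C^0}^{1/2}$ is forced to be close to $1$ by the approximate Hamiltonian constraint, rather than merely bounded by a generic constant.
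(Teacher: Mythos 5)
Your proof is correct in outline and identifies the crucial ingredient exactly: the sharp leading coefficient comes from the decomposition $E_\bfI(\varphi_i\psi_i) = \varphi_i E_\bfI\psi_i + [E_\bfI,\varphi_i](\psi_i)$, followed by pointwise Cauchy--Schwarz in the $i$-summation on the main term. This is also the heart of the paper's argument, including the double Cauchy--Schwarz in $i$ and then $j$ for (\ref{eq:noGNSInnerProduct2}).

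Where you diverge is in how the commutator/remainder term is absorbed. You invoke the full $L^p$--$L^q$--$L^\infty$ Gagliardo--Nirenberg interpolation on both factors $E_\bfJ\varphi_i$ and $E_\bfK\psi_i$, combine them by H\"older, and then use Young's inequality; this gives a tight bookkeeping of exponents and in fact produces $\|\psi_i\|_{C^0}$ rather than $\|\psi\|_{C^{\kappa_0}}$ in the error term. The paper instead proves and uses only the much more elementary $L^2$-based Sobolev interpolation of Lemma~\ref{le: interpolation} (proved by an integration-by-parts argument tailored to the frame-based norms), splitting the remainder into the two cases $|\bfI_2|\leq\kappa_0$ (handled by putting $\psi$ in $C^{\kappa_0}$ and $\varphi$ in $H^\ell$) and $|\bfI_2|\geq\kappa_0+1$ (handled by Sobolev embedding on $\varphi$, $L^2$-interpolation on $\psi$, and Young). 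The net effect is an auxiliary estimate (\ref{eq:noGNSAuxInequality}) with three terms, which the main argument then invokes with $\lambda=\eta^2/m$. Your route is tighter but relies on Gagliardo--Nirenberg holding on $(\S,h_\refer)$ with the frame-based Sobolev norms, which is a standard fact on closed manifolds but is not proved in the paper and would require a patch-and-partition argument together with the norm equivalence; the paper's version deliberately avoids this by working only with the $L^2$-interpolation and Sobolev embedding it has already established in the appendix. Both approaches are valid; the paper's buys self-containedness, yours buys a marginally sharper error term and tidier exponent bookkeeping.

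One small bookkeeping point worth flagging: in your H\"older step you implicitly set $|\bfJ|+|\bfK|=\ell$, which only holds for the top-order multiindices $|\bfI|=\ell$. For $|\bfI|<\ell$ you need to run the interpolation with $|\bfI|$ in place of $\ell$ throughout (so $L^{2|\bfI|/|\bfJ|}$ etc.), and then dominate the resulting norms by $H^\ell$ norms and powers of $\eta^{-1}$ and $\|\varphi\|_{H^\ell}$ using the $\langle\cdot\rangle$ brackets in the statement, as you gesture at in your last parenthetical. This is minor, but should be said explicitly, since the conjugate exponents in the Young step change with $|\bfI|$.
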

\begin{remark}
  The constant $C$ only depends on $\ell$, $m$, $(\S,h_{\refer})$ and $(E_i)_{i=1}^n$.
\end{remark}
\begin{remark}\label{remark:supremum iso Cz norm}
  In case $\psi_i=\chi_i$ in (\ref{eq:noGNSInnerProduct}), the left hand side, with the absolute value sign removed,
  can be estimated by the right hand side with $\| \varphi_i \|_{C^{0}(\S)}$ replaced by
  $\sup_{x\in\S}\varphi_i(x)$.
\end{remark}
\begin{proof}
  We first show that for any $\varphi$, $\psi\in C^{\infty}(\S)$ and $\lambda > 0$,
\begin{align} \label{eq:noGNSAuxInequality}
\begin{split}
\textstyle{\sum}_{|\bfI| \leq l}  
    \| [E_{\bfI},\varphi] (\psi) \|_{L^2(\S)}^2 
    &\leq \lambda \| \psi \|_{H^{\ell}(\S)}^2
        + C \lambda^{-\ell+1}  \| \varphi \|_{H^{\ell}(\S)}^{2\ell} 
            \| \psi \|_{C^{0}(\S)}^2  \\
    &\quad + C \| \varphi \|_{H^{\ell}(\S)}^2 \| \psi \|_{C^{\kappa_0}(\S)}^2,
\end{split}
\end{align}
where $C$ depends only on $\ell, (\S,h_\refer)$ and the frame. To prove this statement, note that
the left hand side is bounded by a sum of terms of the form
\begin{equation}\label{eq:term to be estimated aux est}
  C\|E_{\bfI_{1}}(\varphi)E_{\bfI_{2}} (\psi) \|_{L^2(\S)}^2
\end{equation}
with $|\bfI_{1}|\geq 1$ and $|\bfI_{1}|+|\bfI_{2}|\leq \ell$, the $C$ and the number of terms depends only on $n$ and $\ell$. 
If $|\bfI_2| \leq \kappa_0$,
\[
\|E_{\bfI_{1}}(\varphi)E_{\bfI_{2}} (\psi) \|_{L^2(\S)}^2\leq C\|\psi\|_{C^{\kappa_{0}}(\S)}^{2}\|\varphi\|_{H^{\ell}(\S)}^{2}.
\]
These terms are included in the second term on the right hand side of (\ref{eq:noGNSAuxInequality}). 
If $|\bfI_{2}|\geq \kappa_{0}+1$,
\begin{equation*}
  \begin{split}
    \|E_{\bfI_{1}}(\varphi)E_{\bfI_{2}} (\psi) \|_{L^2(\S)}^2 \leq & C\|\psi\|_{H^{\ell-1}(\S)}^{2}\|\varphi\|_{C^{\ell-\kappa_{0}-1}(\S)}^{2}
    \leq  C \| \varphi \|_{H^{\ell}(\S)}^2 \|\psi \|_{C^{0}(\S)}^{2/\ell} \|\psi \|_{H^{\ell}(\S)}^{2(1-1/\ell)}
  \end{split}
\end{equation*}
where $C$ depends only on $\ell, (\S,h_\refer)$ and the frame, and we appealed to Sobolev embedding, (\ref{eq:psiinterpol})
and the fact that $C^{0}(\S) \subset L^{2}(\S)$ due to compactness of $\S$. Finally, appealing to Young's inequality
(with $p = \ell, q = \ell/(\ell-1)$) yields, for any $\lambda > 0$,
\begin{align*}
    C \| \varphi \|_{H^{\ell}(\S)}^2  \|\psi \|_{C^{0}(\S)}^{2/\ell} 
    \|\psi\|_{H^{\ell}(\S)}^{2(1-1/\ell)} 
    = & ( C \lambda^{-1+1/\ell} \| \varphi \|_{H^{\ell}(\S)}^2 
        \|\psi \|_{C^{0}(\S)}^{2/\ell} )
    (\lambda^{(1-1/\ell)} \| \psi \|_{H^{\ell}(\S)}^{2(1-1/\ell)})\\
    \leq & C \lambda^{-\ell+1} \| \varphi \|^{2\ell}_{H^{\ell}(\S)}
            \|\psi\|_{C^{0}(\S)}^2
        + \lambda \|\psi\|_{H^{\ell}(\S)}^2.
\end{align*}
Thus (\ref{eq:noGNSAuxInequality}) holds. To prove (\ref{eq:noGNSSum}), note that 
\begin{align}
\begin{split}
\label{eq:noGNSExtractingNoDerivatives}
\textstyle{\sum}_{|\bfI|\leq \ell} \big \|\textstyle{\sum}_{i} \varphi_i E_{\bfI} \psi_i \big\|_{L^{2}(\S)}^2
    &= \textstyle{\sum}_{|\bfI|\leq \ell} \textstyle{\int}_{\S} \big( \textstyle{\sum}_{i}
        \varphi_i E_{\bfI} \psi_i \big)^2 d \mu_{h_{\refer}} \\
    &\leq \textstyle{\sum}_{|\bfI|\leq \ell} \textstyle{\int}_{\S} \big( \textstyle{\sum}_{i} \varphi_i^2 \big)
        \big( \textstyle{\sum}_{i} |E_{\bfI} \psi_i|^2 \big) d \mu_{h_{\refer}} \\
    &\leq \big\|\textstyle{\sum}_{i} \varphi_i^2 \big\|_{C^{0}(\S)} 
    \|\psi \|_{H^{\ell}(\S)}^2.
\end{split}
\end{align}
Next, 
\begin{align*}
   \left\| \textstyle{\sum}_{i} \varphi_i \psi_i \right\|_{H^{\ell} (\S)}
    \leq & \big[ \textstyle{\sum}_{|\bfI| \leq \ell} 
        \big( \left\|\textstyle{\sum}_{i} \varphi_i E_{\bfI} \psi_i \right\|_{L^2(\S)}   
        + \left\| \textstyle{\sum}_{i} [E_{\bfI},\varphi_i] (\psi_i) \right\|_{L^2(\S)} 
        \big)^2 \big]^{1/2} \\
    \leq & \big(\textstyle{\sum}_{|\bfI| \leq \ell} 
        \left \|\textstyle{\sum}_{i} \varphi_i E_{\bfI} \psi_i \right\|_{L^2(\S)}^2 \big)^{1/2}
    + \sqrt{m} \big( \sum_{|\bfI| \leq \ell} \textstyle{\sum}_{i}
        \left\|  [E_{\bfI},\varphi_i] (\psi_i) \right\|_{L^2(\S)}^2 \big)^{1/2}
\end{align*}
by several applications of the triangle inequality for either the $\ell^2$-norm or the $L^2(\S)$-norm,
as well as the Cauchy-Schwartz inequality. The first term on the far right-hand side can be estimated
by appealing to (\ref{eq:noGNSExtractingNoDerivatives}), and the second term by appealing to
(\ref{eq:noGNSAuxInequality}) with $\lambda = \eta^2/m$, for each $i$ separately. Combining these
observations with the concavity of the square root yields (\ref{eq:noGNSSum}). 

Next, to prove (\ref{eq:noGNSInnerProduct}), note that 
\begin{align*}
   \big| \textstyle{\sum}_{i} \langle \varphi_i \psi_i, \chi_i \rangle_{H^{\ell}(\S)} \big|
    = & \big| \textstyle{\sum}_{i} \sum_{|\bfI| \leq \ell} 
        \textstyle{\int}_{\S}\{ \varphi_i E_{\bfI}(\psi_i) E_{\bfI}(\chi_i) 
        +  [E_{\bfI},\varphi_i] (\psi_i) \cdot E_{\bfI}(\chi_i)\} \mu_{h_{\refer}} \big| \\
    \leq & \big[ \big( \textstyle{\sum}_{i} 
        \|\varphi_i\|_{C^{0}(\S)}^2 \|\psi_i\|_{H^\ell(\S)}^2 \big)^{1/2}+ \big( \textstyle{\sum}_{i}  \sum_{|\bfI|\leq \ell} 
        \|[E_{\bfI},\varphi_i](\psi_i) \|_{L^{2}(\S)}^2 \big)^{1/2} \big]     \| \chi \|_{H^{\ell}(\S)}
\end{align*}
Extracting $\textstyle{\max}_{i} \| \varphi_i \|_{C^{0}(\S)}$ from the first term in the parenthesis and
appealing to  (\ref{eq:noGNSAuxInequality}) with $\lambda = \eta^2$ yields (\ref{eq:noGNSInnerProduct}).
The justification of Remark~\ref{remark:supremum iso Cz norm} is similar. 

Finally, to prove (\ref{eq:noGNSInnerProduct2}), estimate
\begin{align*}
 \big| \textstyle{\sum}_{i,j} 
    \langle \varphi_i \psi_j, \pi_{ij} \rangle_{H^{\ell}(\S)} \big|
    = & \big| \textstyle{\sum}_{i,j} \sum_{|\bfI| \leq \ell} 
        \textstyle{\int}_{\S}\{ \varphi_i E_{\bfI}(\psi_j) E_{\bfI}(\pi_{ij}) 
        +  [E_{\bfI},\varphi_i] (\psi_j)\cdot E_{\bfI}(\pi_{ij})\} \mu_{h_{\refer}} \big| \\
\leq & \big[  \big\| \textstyle{\sum}_{i} \varphi_i^2 \big\|_{C^{0}(\S)}^{1/2} \|\psi\|_{H^{\ell}(\S)}
    + \big(\textstyle{\sum}_{i,j}  \sum_{|\bfI|\leq \ell} 
    \|[E_{\bfI},\varphi_i](\psi_j) \|_{L^{2}(\S)}^2 \big)^{1/2} \big]\| \pi \|_{H^{\ell}(\S)}.
\end{align*}
Again, appealing to  (\ref{eq:noGNSAuxInequality}) with $\lambda = \eta^2/m$ yields
(\ref{eq:noGNSInnerProduct2}). 
\end{proof}

Finally, we require the following estimate for certain energy estimates.

\begin{lemma} \label{lemma:DivergenceEstimate}
  With $(\S,h_{\refer})$ and $(E_{i})_{i=1}^{n}$ as in Subsection~\ref{ssection: The reference frame}, let $\ell \in \mathbb{N}$,
  $\varphi$, $\psi \in C^{\infty}(\S)$ and $X = X^i E_i \in \mathfrak{X}(\S)$. Then there is a constant $C$, depending only on
  $\ell$, $(\S,h_{\refer})$ and $(E_i)_{i=1}^n$, such that 
\begin{align*}
  \begin{split}
    | \langle X(\varphi), \psi \rangle_{H^{\ell}(\S)} 
    +  \langle \varphi, X (\psi) \rangle_{H^{\ell}(\S)}  | 
    \leq & \|\varphi\|_{H^{\ell}(\S)} \| \psi\|_{H^{\ell}(\S)}  
    \cdot  \big( \|\rodiv_{h_{\refer}} (X)\|_{C^{0}(\S)}
    + C \textstyle{\sum}_{i} \|X^i\|_{C^{1}(\S)} \big) \\
    & + C ( \|\varphi\|_{H^{\ell}(\S)} \| \psi\|_{C^{1}(\S)} 
    + \|\varphi\|_{C^{1}(\S)} \| \psi\|_{H^{\ell}(\S)} ) 
    \textstyle{\sum}_{i} \|X^i\|_{H^{\ell}(\S)}.
  \end{split}
\end{align*}
\end{lemma}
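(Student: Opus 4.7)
The plan is to reduce the estimate to the $L^{2}$ integration-by-parts identity (\ref{eq:Xfint}) applied slice-by-slice to each frame derivative $E_\bfI$, and then to bound the commutators that arise using the product/commutator estimates from Lemma~\ref{lemma:EstimateCommutators} and the Moser estimate of Lemma~\ref{lemma: products in L2}. More precisely, for each multi-index $\bfI$ with $|\bfI|\le\ell$, write $E_\bfI(X\varphi)=X(E_\bfI\varphi)+[E_\bfI,X]\varphi$ and similarly for $\psi$, so that
\begin{equation*}
E_\bfI(X\varphi)\,E_\bfI\psi+E_\bfI\varphi\,E_\bfI(X\psi)
=X\bigl(E_\bfI\varphi\,E_\bfI\psi\bigr)+[E_\bfI,X]\varphi\cdot E_\bfI\psi+E_\bfI\varphi\cdot[E_\bfI,X]\psi.
\end{equation*}
Integrating over $\S$ and applying (\ref{eq:Xfint}) to the first term on the right-hand side produces the clean divergence contribution $-\int E_\bfI\varphi\,E_\bfI\psi\,\rodiv_{h_\refer}(X)\,\mu_{h_\refer}$, which Cauchy--Schwarz bounds by $\|\rodiv_{h_\refer}(X)\|_{C^0}\|\varphi\|_{H^\ell}\|\psi\|_{H^\ell}$, giving the first piece of the right-hand side of the lemma.

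The remaining work is to estimate the commutator integrals $\int [E_\bfI,X]\varphi\cdot E_\bfI\psi\,\mu_{h_\refer}$ (and the analogous one with $\varphi$ and $\psi$ swapped). Writing $X=X^iE_i$ and expanding $[E_\bfI,X^iE_i]\varphi$ using the Leibniz-type rule, this operator is a finite sum of terms of the form $c_{\bfJ,\bfK,i}\,(E_\bfJ X^i)(E_\bfK\varphi)$ with $|\bfJ|\ge 1$, $|\bfK|\ge 1$ and $|\bfJ|+|\bfK|\le|\bfI|+1\le\ell+1$. I would then split into two cases according to the size of $|\bfJ|$. When $|\bfJ|=1$, one has $|E_\bfJ X^i|\le \|X^i\|_{C^1}$ pointwise and $|\bfK|\le\ell$, so Cauchy--Schwarz gives a contribution bounded by $C\sum_i\|X^i\|_{C^1}\|\varphi\|_{H^\ell}\|\psi\|_{H^\ell}$, matching the second factor of the first line of the claimed inequality. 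When $|\bfJ|\ge 2$ we automatically have $|\bfK|\le\ell-1$, and by writing $E_\bfJ=E_{\bfJ'}E_j$ with $|\bfJ'|=|\bfJ|-1\ge 1$ we may apply Lemma~\ref{lemma: products in L2} to the product $E_{\bfJ'}(E_jX^i)\cdot E_\bfK\varphi$, whose total number of derivatives is at most $\ell-1$; this yields the bound $C(\|X^i\|_{H^\ell}\|\varphi\|_{C^1}+\|X^i\|_{C^1}\|\varphi\|_{H^\ell})$, and after multiplication by $\|\psi\|_{H^\ell}$ and summation this produces the second line of the claimed inequality.

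The symmetric term $\int E_\bfI\varphi\cdot[E_\bfI,X]\psi\,\mu_{h_\refer}$ is treated identically with $\varphi$ and $\psi$ interchanged, explaining the symmetric structure of the final bound. Summing over all multi-indices $|\bfI|\le\ell$ (there are only finitely many, with cardinality depending on $\ell$ and $n$) absorbs the combinatorial constants into $C$, and collecting the pieces yields the asserted inequality.

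The main obstacle is the borderline case $|\bfJ|+|\bfK|=\ell+1$ in the commutator expansion: a direct application of the Moser inequality (\ref{eq:moser}) there would introduce an $H^{\ell+1}$ norm of either $X^i$ or $\varphi$, which is not available. The trick that resolves this, and which is the only non-routine step of the argument, is to peel off a single derivative from the factor on which at least two derivatives act (possible precisely because $|\bfJ|\ge 2$ or $|\bfK|\ge 2$ in the offending range), thereby reducing the total number of derivatives to $\ell-1$ before invoking Moser. All the other cases are routine interpolation-free applications of Cauchy--Schwarz and pointwise $C^1$ bounds.
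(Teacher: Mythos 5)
Your proposal follows the paper's structure closely: the same expansion of the $H^{\ell}$ inner product into $L^2$ pieces via $E_{\bfI}(X\varphi)=X(E_{\bfI}\varphi)+[E_{\bfI},X]\varphi$, the same application of \eqref{eq:Xfint} to extract the divergence term, and the same reduction of the remainder to a commutator estimate. However, there is a genuine gap in your handling of the borderline case $|\bfJ|+|\bfK|=|\bfI|+1=\ell+1$.

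You propose, when $|\bfJ|\geq 2$, to peel a single derivative off $E_{\bfJ}X^{i}$ and assert this reduces the total derivative count of the resulting product to $\ell-1$. It does not: peeling one derivative reduces $|\bfJ|+|\bfK|$ by one, so when $|\bfJ|+|\bfK|=\ell+1$ the reduced count is $\ell$, not $\ell-1$. Applying Lemma~\ref{lemma: products in L2} to $E_{\bfJ'}(E_{j}X^{i})\cdot E_{\bfK}\varphi$ at total order $\ell$ then produces the term $\|E_{j}X^{i}\|_{H^{\ell}(\S)}\|\varphi\|_{C^{0}(\S)}$, and $\|E_{j}X^{i}\|_{H^{\ell}(\S)}$ involves $\ell+1$ derivatives of $X^{i}$, which is not controlled by $\|X^{i}\|_{H^{\ell}(\S)}$. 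So the claimed bound does not follow. (Your closing remark that the trick works because ``$|\bfJ|\ge 2$ or $|\bfK|\ge 2$'' also fails for $\ell=1$, where the borderline case forces $|\bfJ|=|\bfK|=1$.)

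The observation that saves the argument, and the reason the paper's proof isolates it, is that when $|\bfJ|+|\bfK|=|\bfI|+1$, \emph{both} $|\bfJ|\geq 1$ and $|\bfK|\geq 1$. One therefore peels a derivative off \emph{each} factor: writing $E_{\bfJ}X^{i}=E_{\bfJ'}(E_{j_{1}}X^{i})$ and $E_{\bfK}\varphi=E_{\bfK'}(E_{j_{2}}\varphi)$ with $|\bfJ'|+|\bfK'|=\ell-1$, Lemma~\ref{lemma: products in L2} yields
\[
\|E_{j_{1}}X^{i}\|_{H^{\ell-1}(\S)}\|E_{j_{2}}\varphi\|_{C^{0}(\S)}+\|E_{j_{1}}X^{i}\|_{C^{0}(\S)}\|E_{j_{2}}\varphi\|_{H^{\ell-1}(\S)}
\leq \|X^{i}\|_{H^{\ell}(\S)}\|\varphi\|_{C^{1}(\S)}+\|X^{i}\|_{C^{1}(\S)}\|\varphi\|_{H^{\ell}(\S)},
\]
which matches the second line of the claimed estimate. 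Your case split on $|\bfJ|$ alone never invokes the lower bound on $|\bfK|$ in the borderline range, so it cannot reach this.
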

\begin{proof}
  Note that 
  \begin{align*}
    &\langle  X(\varphi), \psi \rangle_{H^{\ell}(\S)} 
    +  \langle \varphi, X (\psi) \rangle_{H^{\ell}(\S)} \\
= &  \textstyle{\sum}_{|\bfI| \leq \ell} \textstyle{\int}_{\S} (X (E_{\bfI}(\varphi) E_{\bfI} (\psi))
    + [E_{\bfI},X](\varphi) E_{\bfI}(\psi)
    + [E_{\bfI},X](\psi) E_{\bfI}(\varphi) ) \mu_{h_\refer} .
\end{align*}
  However, appealing to (\ref{eq:Xfint}),
\begin{align*}
\Big |\textstyle{\sum}_{|\bfI| \leq \ell} \textstyle{\int}_{\S} X \big(E_{\bfI}(\varphi) E_{\bfI} (\psi)\big) 
    \mu_{h_{\refer}} \Big| 
&\leq \| \rodiv_{h_{\refer}}(X) \|_{C^{0}(\S)}
    \| \varphi \|_{H^{\ell(\S)}} \| \psi \|_{H^{\ell(\S)}}.
\end{align*}
On the other hand, $[E_{\bfI},X](\varphi)$ is a linear combination of terms $E_{\bfI_{1}}(X^{i})E_{\bfI_{2}}\varphi$,
where the coefficients are functions associated with the commutators of the elements of the frame; $|\bfI_1|+|\bfI_2|\leq |\bfI|+1$;
and if $|\bfI_1|+|\bfI_2|=|\bfI|+1$, then $|\bfI_i|\geq 1$, $i=1,2$. Due to this observation, a similar observation
with $\varphi$ and $\psi$ interchanged, and Lemma~\ref{lemma: products in L2}, the lemma follows. 
\end{proof}

\section{Regularity of eigenvalues}
Here we prove results, used in Section~\ref{sec:asymptotics}, regarding the regularity of eigenvalues of symmetric matrices,
depending on bounds on, and regularity of, the components of the matrix; see Kato's book \cite{kato} for a standard reference
on the topic.

Note first that if the components of a family of symmetric matrices is $C^{1}$-dependent on $p\in \S$, where $\S$ is a $C^{1}$-manifold,
then there exist (Lipschitz) continuous, ordered parametrizations of the eigenvalues,
say $\lambda_1 \leq \cdots \leq \lambda_n$, $\lambda_i \in C^{0,1}(\S)$,
even if the eigenvalues do not remain simple as $p\in \S$ varies.

Let $M\in C^{\infty}(\S,\Symn)$, where $\S$ is a smooth manifold and $\Symn$ denotes the symmetric $n\times n$ matrices.
Then it is sufficient to consider the characteristic polynomial $f(x,\lambda) = \det(M(x) - \lambda \Id)$ to be a function from
$\S\times\rn{}$ to $\rn{}$, smooth in $x$ and analytic in $\lambda$. The zeroes of $f$ at $x_0\in\S$ are of
course the eigenvalues of $M(x_0)$, say $(\lambda_I(x_0))_{I=1}^n$. If the eigenvalues at $x_0$ are simple, then, for any $I$,
\begin{equation}
(\d_{\lambda} f)(x_0,\lambda_I(x_0)) 
    = - \textstyle{\prod}_{J \neq I} (\lambda_J(x_0) - \lambda_I(x_0)) \neq 0.
\end{equation}
There are thus smooth functions $\lambda_{I}:U\rightarrow\mathbb{R}$, $I=1,\dots,n$, where $U\ni x_{0}$ is open, 
such that $f(x, \lambda_I(x)) = 0$ for all $I$ and $x\in U$. Moreover, if $X\in\mathfrak{X}(U)$,
\begin{equation}\label{eq:der of eigenvalues}
X|_{x_{0}}\lambda_I = - \frac{ X|_{x_0} \det(M - \lambda_I(x_0))}
    {\textstyle{\prod}_{J \neq I} (\lambda_J(x_0) - \lambda_I(x_0))}.
\end{equation}

\begin{lemma} \label{lemma:nondegeneracy}
  With $(\S,h_{\refer})$ and $(E_{i})_{i=1}^{n}$ as in Subsection~\ref{ssection: The reference frame}, let $\ell\in\nn{}$,
  and $C_{\ell}$, $K_{\ell}$, $\zeta_0$, $\alpha$ and $L$ be given strictly positive constants. Then there is a
  $\tau_+\in (0,1)$, depending only on $C_{\ell}$, $K_{\ell}$, $\zeta_0$, $\alpha$ and $n$
  such that the following holds. If $T_+\in (0,\tau_+)$, and
  \[
  M\in C^{0}(N,\Symn)\cap C^{\infty}(N_{\roint},\Symn),
  \]
  where $N:=\mI\times\S$, $N_{\roint}:=\mI_{\roint}\times \S$, $\mI:=[0,T_+]$ and $\mI_\roint:=(0,T_+]$, satisfies
  \begin{subequations}
    \begin{align}
      \textstyle{\max}_{I,J}\sup_{t\in\mI_{\roint}} \| M_{IJ}(t,\cdot) \|_{C^{\ell}(\S)} & \leq C_{\ell}, \\
      \textstyle{\max}_{I,J}\sup_{t\in\mI_{\roint}}[t^{1-\alpha}\| \partial_t M_{IJ}(t,\cdot) \|_{C^{\ell}(\S)}] &\leq K_{\ell} T_+^{\alpha},\\
      \textstyle{\max}_{I}\sup_{t\in\mI} \| \lambda_I(t,\cdot) \|_{C^{0}(\S)} &\leq L,\\
      \textstyle{\min}_{I\neq J} \inf_{x \in \S} | \lambda_I(T_+,x) - \lambda_J(T_+,x) |  & \geq \zeta_0^{-1},\label{eq:lambda difference lower bd}
    \end{align}
  \end{subequations}
  where $(\lambda_I)_{I=1}^n$ is the continuous, ordered parametrization of the eigenvalues of $M$, then there is a $\zeta>0$, depending
  only on $\zeta_0$, $n$ and $L$, such that
  \begin{equation}\label{eq:lower bound difference of eigenvalues}
    \textstyle{\min}_{I \neq J} \inf_{p \in N} |\lambda_{I}(p) - \lambda_J(p)|\geq\zeta^{-1}.
  \end{equation}
  Moreover, there is a constant $\Lambda_\ell$,
  depending only on $\ell$, $C_{\ell}$, $K_{\ell}$, $L$, $\zeta_{0}$, $(\Sigma, h_{\refer})$ and $(E_{i})_{i=1}^{n}$, such that,
  for any $I$ and $[s,t] \subset \mI$,
  \begin{equation}
    \| \lambda_I(t,\cdot) - \lambda_I(s,\cdot) \|_{C^{\ell}(\S)} \leq \Lambda_{\ell}T_+^{\alpha} (t^{\alpha} - s^{\alpha}).
  \end{equation}  
\end{lemma}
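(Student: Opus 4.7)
The plan is to prove the two conclusions in sequence: first the uniform gap (\ref{eq:lower bound difference of eigenvalues}), and then the H\"{o}lder-in-time estimate on the $C^\ell(\S)$-norm, using the first step to make the implicit function theorem setup applicable to the eigenvalues.

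The first step is to propagate the gap from $t = T_+$ to all of $\mI$. From the weighted bound $\|\partial_t M(t,\cdot)\|_{C^0(\S)} \leq K_\ell T_+^\alpha t^{\alpha - 1}$ and the fundamental theorem of calculus, one has
\[
\textstyle{\max}_{I,J}\|M_{IJ}(t,\cdot) - M_{IJ}(T_+,\cdot)\|_{C^0(\S)} \leq \tfrac{K_\ell}{\alpha}T_+^{2\alpha}
\]
for every $t \in (0, T_+]$, and continuity in $t$ extends this to $t = 0$. By Weyl's inequality for symmetric matrices, the ordered continuous parametrization $(\lambda_I)_I$ then satisfies $|\lambda_I(t,x) - \lambda_I(T_+,x)| \leq C_n K_\ell T_+^{2\alpha}/\alpha$ pointwise on $N$. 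Combined with (\ref{eq:lambda difference lower bd}), this gives $|\lambda_I - \lambda_J| \geq \zeta_0^{-1} - 2 C_n K_\ell T_+^{2\alpha}/\alpha$ on $N$ for $I \neq J$, so choosing $\tau_+$ a small enough function of $C_\ell,K_\ell,\zeta_0,\alpha,n$ yields (\ref{eq:lower bound difference of eigenvalues}) with $\zeta$ depending only on $\zeta_0$, $n$ (and trivially $L$).

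Next, on $N_\roint$ the eigenvalues are simple, so the implicit function theorem applied to $f(t,x,\lambda) := \det(M(t,x) - \lambda\,\Id)$ shows $\lambda_I \in C^\infty(N_\roint)$; moreover, for each $(t,x)$, $\partial_\lambda f(t,x,\lambda_I(t,x)) = -\prod_{J\neq I}(\lambda_J - \lambda_I)$, which by (\ref{eq:lower bound difference of eigenvalues}) is bounded away from $0$ by $\zeta^{-(n-1)}$. Applying $E_\bfI$ inductively to the identity $f(t,x,\lambda_I(t,x)) = 0$, and dividing by $\partial_\lambda f$, one obtains a Fa\`a di Bruno formula expressing $E_\bfI \lambda_I$ as a polynomial in the components of $E_\bfJ M$ for $|\bfJ|\leq|\bfI|$, in lower-order $E_\bfK\lambda_I$, and in $1/\prod_{J\neq I}(\lambda_J - \lambda_I)$. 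A straightforward induction on $|\bfI|\leq \ell$, using the a priori bounds $\|M\|_{C^\ell} \leq C_\ell$, $\|\lambda_I\|_{C^0} \leq L$ and the gap lower bound, yields a constant $P_\ell$ depending only on $\ell,C_\ell,L,\zeta_0,(\S,h_\refer),(E_i)$ such that $\|\lambda_I(t,\cdot)\|_{C^\ell(\S)} \leq P_\ell$ uniformly on $\mI_\roint$.

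For the time regularity, differentiate $f(t,x,\lambda_I(t,x)) = 0$ in $t$ to get
\[
\partial_t \lambda_I(t,x) = -\frac{(\partial_t f)(t,x,\lambda_I(t,x))}{(\partial_\lambda f)(t,x,\lambda_I(t,x))},
\]
where $\partial_t f$ is linear in $\partial_t M$ with coefficients polynomial in the entries of $M$ and $\lambda_I$. Applying $E_\bfI$ for $|\bfI|\leq\ell$ and using the uniform $C^\ell$ bounds on $M$ and $\lambda_I$ together with the lower bound on $|\partial_\lambda f|$, one obtains a constant $\Lambda'$ with the same dependencies as $P_\ell$ such that
\[
\|\partial_t \lambda_I(t,\cdot)\|_{C^\ell(\S)} \leq \Lambda' \|\partial_t M(t,\cdot)\|_{C^\ell(\S)} \leq \Lambda' K_\ell T_+^\alpha t^{\alpha - 1}
\]
for every $t \in \mI_\roint$. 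Integrating in time from $s$ to $t$ on any $[s,t] \subset \mI_\roint$ produces
\[
\|\lambda_I(t,\cdot) - \lambda_I(s,\cdot)\|_{C^\ell(\S)} \leq \tfrac{\Lambda' K_\ell}{\alpha}\,T_+^\alpha (t^\alpha - s^\alpha),
\]
and the case $s = 0$ follows by the resulting uniform Cauchy property in $C^\ell(\S)$ combined with continuity of $\lambda_I$ at $t = 0$. Setting $\Lambda_\ell := \Lambda' K_\ell/\alpha$ completes the proof. The main technical obstacle is the inductive bookkeeping in the Fa\`a di Bruno step: one must verify that the polynomial growth in $1/\text{gap}$, $\|M\|_{C^\ell}$, and $L$ absorbs cleanly into a single constant $P_\ell$, and that the dependencies stay consistent with those claimed for $\Lambda_\ell$; everything else is standard perturbation theory and integration.
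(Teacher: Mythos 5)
Your proof is correct. Steps two and three (differentiating the characteristic polynomial in space to get a uniform $C^\ell(\S)$ bound on $\lambda_I$, then differentiating in time to get $\|\partial_t\lambda_I\|_{C^\ell(\S)}\lesssim K_\ell T_+^\alpha t^{\alpha-1}$ and integrating) coincide with the paper's argument. Where you diverge is in the first step, the propagation of the eigenvalue gap to all of $N$. You integrate $\partial_t M$ to control $\|M(t,\cdot)-M(T_+,\cdot)\|_{C^0}$ and then invoke Weyl's perturbation inequality to transfer that bound to the sorted eigenvalues, whereas the paper instead works with the discriminant $D_M=\prod_{I<J}(\lambda_I-\lambda_J)^2$, exploits that it is a homogeneous polynomial in the entries of $M$ (hence $C^1$ in $t$ on $N_\roint$ with $|\partial_t D_M|\lesssim t^{\alpha-1}$), bounds $D_M$ from below by integration, and then converts the lower bound on the product of squared differences into a lower bound on each individual difference using the crude upper bound $|\lambda_I-\lambda_J|\leq 2L$. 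Both routes reach the same conclusion. Your Weyl-based argument is slightly more direct and, as you note, produces a gap constant $\zeta$ independent of $L$, which is a marginally stronger output; the discriminant route avoids importing a matrix perturbation theorem and keeps the argument entirely at the level of polynomial invariants of $M$, at the cost of introducing the $L$-dependence. Either is an acceptable proof of the lemma as stated.
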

\begin{proof}
For a given $(s,x) \in N$, consider the discriminant $D_M (s,x)$ of $M(s,x)$. It is, up to a sign, the product of the difference of all the eigenvalues
and can be written as a homogeneous polynomial of degree $n(n-1)$ in the matrix components $M_{IJ}(s,x)$. On the other hand, due to
(\ref{eq:lambda difference lower bd}),  
\begin{equation}
D_M(T_+,x) = \textstyle{\prod}_{I < J} [\lambda_I(T_+,x) - \lambda_J(T_+,x)]^2 
    \geq \zeta_0^{-n(n-1)}
\end{equation}
for any $x \in \S$. As $D_M$ is smooth on $N_{\roint}$, it follows that, for any $x \in \S$,
\begin{equation*}
D_M(T_-,x) = D_M(T_+,x) - \textstyle{\int}_{T_-}^{T_+} \d_s D_M(r,x) \md r.
\end{equation*}
As $D_M$ is a homogeneous polynomial in the $M_{IJ}$, $|\d_s D_M(s,x)|\leq K T_+^{\alpha} s^{-1+\alpha}$, where $K$ depends only on $C_{\ell}$, $K_{\ell}$,
$\alpha$ and the dimension $n$. Thus
\begin{align*}
D_M(T_-,x) &\geq D_M(T_+,x) - \textstyle{\int}_{T_-}^{T_+} |\d_r D_M(r,x)| \md r  
    \geq \zeta_0^{-n(n-1)} - \tfrac{K}{\alpha} T_+^{2\alpha}.
\end{align*}
Let $\tau_+ = (\alpha \zeta_0^{-n(n-1)}/(2K))^{\frac{1}{2\alpha}}$. Then, if $T_+\leq\tau_+$,  $D_M \geq \zeta_0^{-n(n-1)}/2$ on $N$.
On the other hand, by the bound on the eigenvalues we may assume that $|\lambda_I-\lambda_J|\leq 2L$ on $N$ for all $I$, $J$.
This means that there is a $\zeta>0$, depending only on $\zeta_0$, $n$ and $L$ such that (\ref{eq:lower bound difference of eigenvalues})
holds. Since the $\lambda_I$ are distinct, they are smooth on $N_{\roint}$. Moreover, as above, 
\begin{equation}
(\d_t\lambda_I)(s,x) = - \frac{ \d_t|_{t=s} \det(M(\cdot,x) - \lambda_I(s,x))}
    {\prod_{J \neq I} [\lambda_J(s,x) - \lambda_I(s,x)]}.
\end{equation}
The numerator on the right-hand side is a linear combination of monomials 
in the $M_{KJ}(s,x)$ and the eigenvalue $\lambda_I(s,x)$,
each of which are multiplied by one term of the form $\d_t M_{KJ}(s,x)$.
In particular, the right-hand side is thus bounded in $C^{\ell}$
by a term of the form $KT_+^{\alpha} s^{-1+\alpha}$, where $K$ depends only on $\ell$, $C_{\ell}$, $K_{\ell}$, 
$L$, $\zeta_{0}$, $(\Sigma, h_{\refer})$ and $(E_{i})_{i=1}^{n}$. In order to arrive at this conclusion,
we use (\ref{eq:lower bound difference of eigenvalues}) and the fact that the $\lambda_{I}(t,\cdot)$ are
bounded in $C^{\ell}$, uniformly in $t\in\mI_{\roint}$; the latter statement follows by iteratively applying
derivatives to (\ref{eq:der of eigenvalues}). It follows that 
\begin{equation*}
\| \lambda_I(t,\cdot) - \lambda_I (s,\cdot) \|_{C^{\ell}(\S)}
    \leq \textstyle{\int}_{s}^{t} \| \d_r \lambda_I(r,\cdot) \|_{C^{\ell}(\S)} \md r
    \leq \tfrac{K}{\alpha}T_+^{\alpha}(t^{\alpha} - s^{\alpha})
\end{equation*}
for any $[s,t] \subset \mI$, and any $I$. This concludes the proof.
\end{proof}

\end{document}